\DeclareMathOperator{\TT}{Tr}
\newcommand{\Spin}{\mathrm{Spin}}
\newtheorem{proposition}{Proposition}
\newtheorem{lemma}{Lemma}
\newtheorem{remark}{Remark}
\newtheorem{corollary}{Corollary}
\DeclareMathOperator{\sym}{sym}
\DeclareMathOperator{\spec}{spec}
\newcommand{\OC}{\mathbb{O}}
\newcommand{\C}{\mathbb{C}}
\def\cL{{\mathscr L}}
\def\cR{{\mathscr R}}
\newcommand{\eq}[1]{\eqref{#1}}
\def\protected@file@percent{}
\begin{document}

\begingroup
\makeatletter
\let\merged@addcontentsline\addcontentsline
\let\addcontentsline\@gobblethree
\makeatother
\renewcommand{\theHsection}{short.\thesection}
\renewcommand{\theHsubsection}{short.\thesubsection}
\renewcommand{\theHsubsubsection}{short.\thesubsubsection}
\renewcommand{\theHequation}{short.\thesection.\arabic{equation}}
\renewcommand{\theHfigure}{short.\arabic{figure}}
\renewcommand{\theHtable}{short.\arabic{table}}

\begin{center}
{\bf \fontsize{13}{15}\selectfont FERMIONIC MASS RATIOS FROM THE EXCEPTIONAL\\[2pt]
JORDAN ALGEBRA: A CONCISE DERIVATION}

\bigskip
{\bf Tejinder P. Singh}

\textit{Tata Institute of Fundamental Research,}\\
\textit{Homi Bhabha Road, Mumbai 400005, India}\\
{\tt Email: tpsingh@tifr.res.in}\\
ORCID ID: https://orcid.org/0000-0002-1862-1505

\bigskip
\textbf{Keywords:} Fermion mass ratios, Octonions, Exceptional Jordan algebra, Exceptional Lie group $E_6$, Standard Model, Trinification, $\mathrm{Sym}^{3}(\mathbf{3})$ ladder
\bigskip

{\bf ABSTRACT}

\begin{minipage}{0.94\linewidth}
Octonionic triality naturally accounts for three identical fermion generations in a
pre--electroweak phase with an $E_6^{L}\times E_6^{R}$ symmetry. When electroweak breaking
occurs concurrently with triality breaking, a residual global $SU(3)_F$ acts on the family
label and the charge basis becomes distinct from the square--root--mass basis. We give a
concise, representation--theoretic derivation of the observed charged--fermion hierarchies
from the complexified exceptional Jordan algebra $J_3(\mathbb{O}_{\mathbb{C}})$. The
charged--sector Jordan spectra take the universal symmetric form
$\{s-\delta,\,s,\,s+\delta\}$ with $\delta^2=3/8$ fixed algebraically, while the physical
ratios are determined by a unique nearest--neighbour chain on the
$\mathrm{Sym}^3(\mathbf{3})$ weight diagram rather than by raw eigenvalue ratios. A Dynkin
$\mathbb{Z}_2$ automorphism (the ``Dynkin swap'') maps the down--quark ladder to the
charged--lepton ladder, yielding relations such as
$\sqrt{m_\tau/m_\mu}=\sqrt{m_s/m_d}$ and fixing the additional factor in
$\sqrt{m_\mu/m_e}$. Together with the theoretically derived trace split
$\mathrm{Tr}X_\ell:\mathrm{Tr}X_u:\mathrm{Tr}X_d=1:2:3$ (implying
$\sqrt{m_e}:\sqrt{m_u}:\sqrt{m_d}=1:2:3$), 
 the resulting closed forms reproduce the charged--fermion mass--ratio \emph{hierarchy} at the
electroweak scale and provide a \emph{parameter-free, semi-quantitative} match (typically at the
few--to--tens of percent level), while isolating a sharp cross-sector relation such as
$\sqrt{m_\tau/m_\mu}=\sqrt{m_s/m_d}$.
\end{minipage}
\end{center}
\bigskip

\section{Introduction}

Why are there precisely three generations of fermions, and why do they have the strange mass ratios observed in experiments? For instance, why is the muon some 206 times heavier than the electron, while the top quark is about 340,000 times heavier than the electron? These mysterious mass ratios are in sharp contrast to the ratios of electric charges, the latter being simply $(0, 1/3, 2/3, 1)$ for the neutrino, down quark, up quark, and the electron respectively. Furthermore,  electric charge is constant across the three generations of a family, with the only distinguishing feature being mass (lepton universality). Why should this be so? The answer to this puzzle of the mass hierarchy  perhaps lies in the algebra of the number system known as the octonions, as first demonstrated by us in 2022 \cite{short-Singh2022IRAlphaMassRatios,short-Singh2022WhyStrangeMassRatios,short-BhattEtAl2022MajoranaEJA}. Octonions are the fourth and last normed division algebra, the other three being the real numbers, complex numbers, and the quaternions. Octonionic multiplication is non-commutative and non-associative. The five exceptional Lie groups ($G_2, F_4, E_6, E_7, E_8$) associated with the octonions, and their subgroups, possess remarkable properties strongly suggestive of their relevance to elementary particles. In particular, $E_6$, the only exceptional group that has complex representations, appears to hold the key to understanding the mass hierarchy, as shown in the present article.

A converging line of ideas ties family replication to octonionic geometry, the exceptional
Jordan algebra, and triality. Foundational work on division algebras, Jordan algebras, and
exceptional groups highlighted their structural relevance to particle physics and unification
\cite{short-Ramond1976,short-Dixon2013,short-GunaydinGursey1973Octonions}. 
Ramond's classic
perspective on exceptional groups and triality in model building underscored why $E_6$-based
settings are natural arenas to explore generation structure and Yukawa textures
\cite{short-Ramond1976}.

Important expositions on octonions, triality, and the projective
geometry behind the embeddings $G_2\subset\mathrm{Spin}(8)\subset F_4\subset E_6$ provide the
mathematical setting we employ \cite{short-Baez2002Octonions,short-DrayManogue1999EJEP,short-DrayManogue2010CayleyE6,short-DuboisVioletteTodorov2019,short-TodorovDuboisViolette2018}. Within this programme of algebraic unification, Furey showed
how Standard Model quantum numbers can be organised using Clifford/octonionic methods and how an
$SU(3)$ action preserving electric charge arises intrinsically
\cite{short-Furey2015Charge,short-Furey2016Thesis,short-Furey2018Ladder,short-Furey2018ThreeGen,short-Furey_2025,short-furey2025superalgebrawithinrepresentationslightest}. Parallel lines of
research by Gresnigt developed octonionic/Clifford and braid-theoretic constructions that
reproduce key charge assignments and illuminate the role of $SU(3)$ structures in fermion
organisation and family replication \cite{short-Gresnigt2018Braids,short-Gourlay:2024iuq,short-GresnigtGourlay2024Cl8S3JPCS,short-GresnigtGourlayVarma2023TowardSedenions,short-GourlayGresnigt2024Cl8S3} . More recently, Quinta has advanced
 Clifford-guided embeddings of Standard Model multiplets and triality-symmetric
textures in contexts closely related to $E_6$ and its subgroups \cite{short-Quinta2025}.
Lisi presents an explicit, pedagogical tour of the links among division and split composition algebras, triality, Clifford/spinor structures, the exceptional magic square Lie algebras, and “Exceptional Unification” applications to particle physics \cite{short-Lisi2025DivisionAlgebras}.
There have been further related recent works on applications of $E_6, E_7$ and $E_8$. Dray, Manogue, and Wilson have advanced an octonionic approach to exceptional symmetries: in \emph{Octions} they present an \(E_8\)-based, octonion-driven organization of Standard Model matter, showing how division-algebra structure can encode particle quantum numbers \cite{short-DrayManogueWilson2022Octions}. They then construct \(E_8\) explicitly from octonions and relate it to the Tits-Freudenthal magic square, clarifying the algebraic scaffolding behind exceptional groups \cite{short-DrayManogueWilson2023OctonionicE8MagicSquare}. Subsequently, they give a concrete division-algebra (matrix) realization embedding \(E_6\) inside \(E_8\), making Albert-algebra operations manifest in \(\mathfrak{e}_8\) \cite{short-DrayManogueWilson2024E6fromE8}. In a companion work, they also realize \(E_7\) (and its minimal/Freudenthal representation) within \(\mathfrak{e}_8\) using the same division-algebra toolkit \cite{short-DrayManogueWilson2024E7fromE8}. See also the works of Boyle \cite{short-Boyle2020}, Stoica \cite{short-Stoica}, Pavsic \cite{short-Pavsic1,short-Pavsic2}, Trayling and Baylis \cite{short-Trayling}, Lasenby \cite{short-Lasenby} and Chester et al. \cite{short-Chester1}. For a geometric realization of the Spin(8) triality $S_3$
 as flows on 
$\mathbb O$ and triality-invariant metric structure, see Antón-Sancho (2025) \cite{short-AntonSancho2025Triality}.

Our own interest in the relevance of octonions to mass quantization was triggered by the well-known 2015 work of Furey, titled `Charge quantization from a number operator' \cite{short-Furey2015Charge}. We quote from the illuminating abstract of her article: ``We explain how an unexpected algebraic structure, the division algebras, can be seen to underlie a generation of quarks and leptons. From this new vantage point, electrons and quarks are simply excitations from the neutrino, which formally plays the role of a vacuum state. Using the ladder operators which exist within the system, we build a number operator in the usual way. It turns out that this number operator, divided by 3, mirrors the behaviour of electric charge. As a result, we see that electric charge is quantized because number operators can only take on integer values." The division algebra in question is the octonions - octonionic chains acting on a fiducial octonion are associative, and generate the Clifford algebra $Cl(6)$. Spinors, defined as minimal ideals of a Clifford algebra, are made using $Cl(6)$, and possess just the right properties for a single generation of quarks and leptons, under the unbroken symmetry $SU(3)_{color} \times U(1)_{em}$ of the standard model. 

This elegant result does leave a few questions unanswered. Which of the three generations of quarks and leptons is being addressed, and of course, why three generations? The result also led us to ask: why can the same scheme not be used to address mass quantization? Can we use Furey's methodology to explain the mass hierarchy? The answer turns out to be in the affirmative, as first shown by us in \cite{short-Singh2022IRAlphaMassRatios}. The mass ratios proposed in that work are the same as those derived here - the present work significantly improves on that earlier proposal in ways that we elaborate on below.

The key to generalising Furey's approach so as to address mass quantisation is to first define a mass operator. Doing so already takes us beyond the standard model, because the latter does not have the concept of a quantum-theoretic mass operator. Let us note that Furey's charge operator $Q\equiv N/3$ ($N$ being the number operator made from the $Cl(6)$ ladder operators) is the generator of $U(1)_{em}$ and $U(1)_{em}$ is itself the surviving gauge symmetry after breaking of the electroweak sector $SU(2)_L\times U(1)_Y \rightarrow U(1)_{em}$. To search for an analogous  mass operator, we need to come up with a new $U(1)$ gauge symmetry, to be dubbed $U(1)_{dem}$, one that survives after a breaking of $SU(2)_R \times U(1)_{Y'} \rightarrow U(1)_{dem}$. Immediately we are faced with a difficulty, namely that mass is a positive quantity, whereas the quantum of charge associated with a vector interaction such as $U(1)_{dem}$ must have both signs, positive as well as negative. Fortunately, nature comes to our rescue here. Particle physics experiments as well as lattice-QCD calculations strongly hint at the possibility (allowed by experimental error bars) that the mass ratios of the electron, up quark, and down quark are $1 : 4 : 9$ respectively. This remarkable fact becomes even more remarkable when we realise that the square roots of their masses, being in the ratio $1 : 2 : 3$, are in a flip of their electric charge ratios, this latter being $3: 2 : 1$. We believe this flip symmetry is not a coincidence, but points to some deep underlying principle. Motivated by this observation, we take $\pm \sqrt{m}$ as the quantum of charge for the newly proposed $U(1)_{dem}$ gauge symmetry. Like electric charge, square root of mass comes with both signs: $\sqrt{m}$ is for matter particles, and $-\sqrt{m}$ is for anti-matter particles. Of course, mass $m$ being $(\pm \sqrt{m})^2$, is always positive.  The $U(1)_{dem}$ is a gauge symmetry which survives the spontaneous symmetry breaking of $SU(2)_R \times U(1)_{Y'}\rightarrow U(1)_{dem}$. Just as only left-handed fermions take part in the weak interaction, only right-handed fermions take part in the $SU(2)_R$ symmetry, which we show elsewhere \cite{short-WesleySinghIsidro2026}  to be the precursor of general relativity. Thus, $SU(2)_R \times U(1)_{Y'}$, dubbed the gravi-dem sector, is the right-handed counterpart of the electroweak sector $SU(2)_L \times U(1)_{Y}$. Also, in the present article we actually prove theoretically that the electron, up quark and down quark do have the square root mass ratio $1 : 2 : 3$ - this in fact becomes a prediction, and the proof is essentially the same as Furey's proof for quantization of electric charge. We also show that whereas electric charge ratios are generation-independent, the mass spectrum develops the observed hierarchy, with only the first generation retaining the simple mass ratio $ 1 : 4 : 9$.

Now that a quantum operator for (square-root) mass has been introduced, we seek an explanation for the existence of three fermion generations. Our first guess, which turned out to be right, was to consider the exceptional Jordan algebra: this is the Jordan algebra $J_3({\mathbb O})$  of $(3\times 3)$ Hermitian matrices with off-diagonal entries being octonions, and diagonal entries being real numbers. The automorphism group of $J_3({\mathbb O})$ is the second exceptional Lie group $F_4$, and the fact that these matrices are three dimensional encouraged us to describe three fermion generations using this algebra. Since fermion states are made from complex octonions, we generalised $J_3({\mathbb O})$ to its complexified version $J_3({\mathbb O}_C)$, and now the off-diagonals are complex octonions while the diagonals are complex numbers. The automorphism group continues to be $F_4$ while the structure group (i.e. one that preserves the determinant) is $E_6$. This is how $E_6$ arrives on the scene. By a physical choice, the diagonal entries are restricted to be real, and we justified that these diagonal entries are equal to the electric charge $q$ of a family, that taking the values $(0, 1/3, 2/3, 1)$. A matrix of the complexified exceptional Jordan algebra describes spinor states for three generations of a family, with the three diagonal entries being all equal and taking the value $q$. Given the first generation states constructed in the manner of Furey, states for second and third generations were constructed in \cite{short-Singh2022IRAlphaMassRatios} by applying $2\pi/3$ and $4\pi/3$ rotations to states of the first generation. In the present article it becomes clear that these rotations correspond to the action of an $SU(3)_{flavor}$ symmetry. In this manner this algebra describes three generations of flavor eigenstates, under the unbroken symmetry $SU(3)_{c}\times U(1)_{em}$ of the standard model. These states are constructed as excitations of the SM left-handed active neutrino, and hence they describe left-handed fermions. Furthermore, the neutrino is assumed to be Majorana, because only then do we obtain correct mass ratios \cite{short-BhattEtAl2022MajoranaEJA}.

We must next construct (right-handed) eigenstates of square-root-mass. To achieve these for the first generation, we assume the existence of the right-handed sterile neutrino, also assumed Majorana. Excitations of this neutrino are constructed, again a la Furey. Here, we appeal to the wonderful properties of the complex split-bioctonions \cite{short-VaibhavSingh2023LRBiquaternions}. These generate the Clifford algebra $Cl(7)$, which can be interpreted as a direct sum of two copies of $Cl(6)$, with the two copies having relatively opposite parity. This parity property makes the complex split bioctonions ideal for describing chiral fermions, as we showed in \cite{short-VaibhavSingh2023LRBiquaternions}. So the first half of the bioctonion is used to describe one generation of flavor eigenstates, as above. The parity reversed part of the bioctonion is used to construct one generation of square-root-mass fermionic eigenstates, under the symmetry $SU(3)_{c'} \times U(1)_{dem}$. Here, $SU(3)_{c'}$ is a new beyond-standard-model symmetry and it is assumed global and explicitly broken. The $U(1)_{dem}$ has already been introduced above, and its quantum of charge is $\pm \sqrt{m}$. Because it is one-third of a number operator made from ladder operators, a la Furey, it again takes the values $(0, 1/3, 2/3, 1)$. But now, and this is key, the state with eigenvalue $1/3$ is identified with the (right-handed) electron, not the down quark. Analogously, the state with eigenvalue 1 is identified with the down quark, not the electron. This $1\leftrightarrow 1/3$ interchange as we go from left to right is central to the derivation of mass ratios. The up quark stays at $2/3$ for electric charge as well as for square root mass. See Sec. 2.6 below for the justification as to why we call flavor eigenstates left-handed and mass eigenstates right-handed. We emphasize that the right-handed quarks do take part in $SU(3)_{color}$ QCD and also in $U(1)_{em}$, and the right-handed leptons take part in $U(1)_{em}$. This aspect is controlled by the Lagrangian and its accompanying dynamics. Also, via the Higgs mechanism, the SM Higgs gives mass to the left-handed fermions, and a newly predicted second Higgs in our model gives electric charge to the right-handed fermions. 

To construct second and third generation of right-handed fermions we introduce a new $SU(3)_{flavor}$ global symmetry, which we label $SU(3)_{F,R}$. This is distinct from the $SU(3)_{flavor}$ global symmetry described above which acts on left-handed fermions and which we henceforth label as $SU(3)_{F,L}$. The $SU(3)_{F,R}$ acts on the right-handed first generation states so as to give rise to the second and third right-handed generations. Also, $SU(3)_{F,R}$ commutes with $U(1)_{dem}$ and consequently all the three generations in a family have the same value of square-root mass: $\sqrt{m}$ for right-handed particles and $-\sqrt{m}$ for their left-handed anti-particles. How then does the mass hierarchy arise? It arises because the vacuum selected by the flavor eigenstates is different from the one selected by square-root mass eigenstates. We will elaborate further on this point below. Using the states for these three right-handed generations we form a second copy of $J_3(\mathbb O)_C$, to be henceforth labeled $J_3(\mathbb O)_{C,R}$. The three diagonal entries here are $\sqrt{m}$ each, for a given family.

So we now have two copies of the complexified exceptional Jordan algebra for each of the four families, one copy for the flavor eigenstates, and one for square-root mass eigenstates. What is of great interest to us now are the eigenvalues of the characteristic equations for these matrices. Following the beautiful analysis of Dray and Manogue on the Jordan eigenvalue problem \cite{short-DrayManogue1999}, we find that the three eigenvalues are real and distinct. The eigenvalues of the flavor $J_3(\mathbb O)_{C,L}$ for a family having of electric charge $q$ are $(q-\delta, q, q + \delta)$ where $\delta$ is theoretically determined by the matrix entries and found to be equal to $\sqrt{3/8}$. The corresponding eigenmatrices define the spectral idempotents of a Jordan frame $P, Q, R$. In this same Jordan frame the eigenvalues for the square-root mass $J_3(\mathbb O)_{C,R}$ are $(s+\delta, s, s-\delta)$ where $s\equiv \sqrt{m}$ has the value $(0, 1/3, 2/3, 1)$ for the neutrino, electron, up quark, down quark, respectively. The three generations in a family of given electric charge  now have three distinct 
eigenvalues in the mass sector, and it is this which gives rise to the observed mass hierarchy.

This much above was more or less understood in our aforesaid derivation of mass ratios. However, our understanding of the derivation has improved very significantly since then, and it is this improved understanding which is put forth in the present article. Firstly, let us take account of the various groups under consideration. To begin with, we have two copies of $E_6$, which we henceforth label $E_6^L$ and $E_6^R$, for the two copies of $J_3(\mathbb O_C)$ of  which they are structure groups. Then we have $U(1)_{em}$ and $U(1)_{dem}$ which are remnant symmetries from the breaking of $SU(2)_L\times U(1)_{Y}\rightarrow U(1)_{em}$ and $SU(2)_R\times U(1)_{Y'}\rightarrow U(1)_{dem}$ respectively. We also have color QCD gauge symmetry $SU(3)_c$ and the global explicitly broken symmetry $SU(3)_{c'}$. Lastly we have the global flavor symmetries $SU(3)_{F,L}$ and $SU(3)_{F,R}$. In summary, we have on the scene two copies of $E_6$, four copies of $SU(3)$, and two copies of $SU(2) \times U(1)$. Quite remarkably, these symmetries all tie together in the following proposed trinification of each of the two copies of $E_6$:
\begin{align}
E_6^{L}
&\longrightarrow \mathrm{SU}(3)_c \times \mathrm{SU}(3)_{F,L} \times \mathrm{SU}(3)_L,
&
\mathrm{SU}(3)_L
&\longrightarrow \mathrm{SU}(2)_L \times \mathrm{U}(1)_{\gamma_1} \longrightarrow U(1)_{Y}  \longrightarrow \mathrm{U}(1)_{\rm em},
\label{short:eq:trinification_left}
\\[4pt]
E_6^{R}
&\longrightarrow \mathrm{SU}(3)_{c'} \times \mathrm{SU}(3)_{F,R} \times \mathrm{SU}(3)_R,
&
\mathrm{SU}(3)_R
&\longrightarrow \mathrm{SU}(2)_R \times \mathrm{U}(1)_{\gamma_2}\longrightarrow U(1)_{Y'}  \longrightarrow \mathrm{U}(1)_{\rm dem}.
\label{short:eq:trinification_right}
\end{align}
(This $U(1)_{\gamma_1}$ is the Cartan leftover from $ \mathrm{SU}(3)_L
\longrightarrow \mathrm{SU}(2)_L \times \mathrm{U}(1)_{\gamma_1}$; the physical SM hypercharge $U(1)_Y$ is assumed to be a diagonal left-right combination; see Appendix B).
This trinification is distinct from conventional GUTs trinification - in our work by this nomenclature we simply mean the branching of $E_6$ into three copies of $SU(3)$. More importantly, we do not have a GUT; instead we have unification, because gravitation also emerges from this trinification. We propose that above the electroweak scale, unification of forces is described by an $E_6^L \times E_6^R$ left-right symmetry, and an octonionic triality symmetry is also present. Electroweak symmetry breaking is also breaking of left-right symmetry and breaking of triality symmetry, as we now explain.

Triality is a unique property of $\mathrm{Spin(8)}$ which possesses three inequivalent 8D representations (one vector, two spinor) which are related by an $S_3$ outer automorphism. Furthermore, we have the embedding $\mathrm{Spin(8)} \subset F_4 \subset E_6$. Because of these relations, and because of how $E_6$ relates to the complexified exceptional Jordan algebra, several researchers have argued that this (octonionic) triality is responsible for there being three fermion generations. The three off-diagonal entries in $J_3(\mathbb O)_C$ are related to each other by this triality property.

We note however, that these three triality-related generations must be identical in every aspect: they cannot have a mass hierarchy! That is why we introduce a triality invariant phase before the symmetry breaking. And we also have left-right symmetry in the unified phase. At the symmetry breaking, the first $E_6$, being $E_6^L$, chooses a vacuum by selecting and freezing one of the seven octonionic imaginary directions. As a result, the triality symmetry is restricted to an $SU(3)_{flavor}$ symmetry, this being $SU(3)_{F,L}$. The selection of a specific vacuum aligned with a fixed imaginary direction enables $U(1)_{em}$ and the electric charge $q$ to be defined (flavor eigenstates). As for the second $E_6$, this being $E_6^R$, its branching and the accompanying symmetry breaking selects a vacuum different from the vacuum on the left, via a different octonionic imaginary direction. This is left-right symmetry breaking, and this is how $U(1)_{dem}$ (square-root mass eigenstates) and left-right symmetry breaking arise. The mass hierarchy arises as a consequence of triality breaking and left-right symmetry breaking: flavor eigenstates are different from square-root-mass eigenstates.

In our earlier work \cite{short-Kaushik} we showed that our proposed trinification recovers the standard model hypercharges correctly. Alongside, three types of right-handed sterile neutrinos are also predicted, as well as a second beyond-standard-model Higgs. Both the Higgs are composite, in our model. The $E_6^L \times E_6^R$ unification, which describes matter fields and gauge fields, is embedded in a larger $E_8 \times E_8$ setting, so that spacetime as well as internal symmetry space are also part of the unification. Appendices A and B explain how three fermion generations and the correct SM hypercharges arise in our approach.


\paragraph{Relation to the $E_8 \otimes E_8$ programme and the role of $\mathrm{SU(3)}_{\rm geom}$.}

The $E_8 \otimes E_8$ construction of Ref.~\cite{short-Kaushik} aims to recover the Standard Model together with a
right-handed ``pre-gravitation'' counterpart, starting from the left--right symmetric gauge structure
$\mathrm{SU(3)}_C \otimes SU(3)_{\rm grav} \otimes SU(2)_L \otimes SU(2)_R \otimes U(1)_{\gamma_1}\otimes U(1)_{\gamma_2}$.
A key group-theoretic step is the maximal branching on each side
\begin{equation}
E_8 \supset SU(3)_{\rm geom}\times E_6,
\qquad
248 = (8,1)\oplus(1,78)\oplus(3,27)\oplus(\bar 3,\overline{27}),
\label{short:eq:E8_to_SU3geom_E6_intro}
\end{equation}
where (in that framework) $SU(3)_{\rm geom}$ is interpreted as acting on octonionic coordinates and hence as a
geometric/spacetime rotation symmetry.
It is important for what follows that the triplet label ``$3$'' in \eqref{short:eq:E8_to_SU3geom_E6_intro} refers to
$SU(3)_{\rm geom}$ and \emph{not} to the family/generation index.
In Ref.~\cite{short-Kaushik} the three-family structure is instead organised after the subsequent trinification of $E_6$,
where one of the three $SU(3)$ factors inside $E_6$ is interpreted as a generation/flavor symmetry
(called $SU(3)_{\rm gen}$ there, and denoted $SU(3)_F$ in the present paper).

A recent refinement of the geometric role of the two extra $SU(3)$ factors is given in Ref.~\cite{short-Singhgeom2025},
which treats $SU(3)_{\rm geom}^{L,R}$ as \emph{structural} symmetries generating a concrete $(3,3)$ split-bioctonionic
base, two embedded Lorentzian 4D spacetime leaves, and (per side) a canonical internal 4D fibre naturally identified
with $TCP^2$.

\paragraph{Why $E_6^L\times E_6^R$ is sufficient for three generations in the present work.}
In contrast to the above $E_8$-level uplift, the mass-ratio mechanism developed here is formulated already at the
$E_6^L\times E_6^R$ stage: in our approach, the existence of three generations is not imposed by postulating three
independent copies of an $E_6$ matter multiplet.
Rather, the three-family label is intrinsic to the exceptional-Jordan/triality framework underlying $E_6$ (three
canonical ``slots''/eigenvalues together with the triality permutations that act on them prior to symmetry breaking).
After triality breaking, the stabiliser of the chosen octonionic complex line reduces $Aut(\mathbb{O})=G_2$ to an
emergent flavor symmetry $SU(3)_F\subset G_2$, which governs inter-family structure.
This is conceptually different from standard GUT model building (e.g.\ $SU(5)$, $SO(10)$, $E_6$), where one typically
embeds a \emph{single} family into a fixed irrep and then postulates three copies (optionally adding a separate family
symmetry): here the $SU(3)_F$ organising the three generations is a residue of triality breaking, while
$SU(3)_{\rm geom}$ (when invoked via an $E_8$ uplift) plays a distinct geometric/scaffolding role.

Besides bringing on the scene the significance of triality breaking, in the present article we also explain why and how mass ratios are arrived at using degree three monomials made from the Jordan eigenvalues. These monomials belong to the $\mathrm{Sym}^3(\bf 3)$ ladder, which we describe in Sec. 4 below. Various fermions are identified with one or the other degree three monomials, and accordingly placed on the $\mathrm{Sym}^{3}$ triangle via a minimality principle as explained in Section 5 . A so-called Dynkin swap implements the $1\leftrightarrow 1/3$ interchange between the electron and the down quark (see also Appendix C). This makes the leptonic mass ratios dependent on Jordan eigenvalues from the down quark sector, and leads to a unique cross-sector falsifiable prediction: $\sqrt{m_\tau/m_\mu}= \sqrt{m_s / m_d}$ which should be put to rigorous test in the laboratory. Table 1 at the end of Sec. 5 summarizes our derived formulae for the eight mass ratios of quarks and charged leptons. Appendices A and B explain how the correct SM hypercharges and three fermion generations are obtained in this model. Appendix C carefully explains how leptonic mass ratios are derived by applying the Dynkin swap to the down ladder. Appendix D explains the phenomenological aspects of Table~1.
The following derivation roadmap is a summary of the derivation of mass ratios presented in this paper.

\begin{center}
\fbox{\begin{minipage}{0.94\linewidth}
\textbf{Derivation roadmap (what is proved and where).}
\begin{enumerate}
\item \textbf{One generation and the two $U(1)$'s:} use the $\mathbb C\!\otimes\!\mathbb O\cong\mathrm{Cl}(6,\mathbb C)$ minimal-ideal construction to obtain one $8$-state multiplet and the charge operator $Q_{\rm em}=N/3$ (Sec.~2.4.1 and Sec.~\ref{short:sec:trinification_sm}).  The same counting construction on the RH side defines the bookkeeping quantum $S=N/3\sim\sqrt m$.
\item\textbf{Three generations (LH/RH):} three isomorphic $\mathrm{Cl}(6,\mathbb{C})$ minimal-ideal fibres\\
$S^{(g)}=\mathrm{Cl}(6,\mathbb{C})\,\omega_{+}^{(g)}$ ($g=1,2,3$) live on the three Peirce--1 slots
$V_{12},V_{23},V_{31}\subset J_3(\mathbb{O}_{\mathbb{C}})$;\\
after triality breaking, $SU(3)_{F,L}$ (resp.\ $SU(3)_{F,R}$) acts on the family label $g$ while commuting with
$U(1)_{\rm em}$ (resp.\ $U(1)_{\rm dem}$), yielding three LH charge families and three RH square-root-mass families (Secs.~2.5, 3.2 and 3.6).
\item \textbf{Jordan spectra:} represent each charged sector by a Jordan element $X\in J_3(\mathbb O_{\mathbb C})$ on the coassociative slice, solve the cubic characteristic equation, and obtain the universal eigenvalue triple $\{s-\delta,\,s,\,s+\delta\}$ with $\delta^2=3/8$ in the canonical normalisation (Secs.~2.5 and 3.4).
\item \textbf{Why $\mathrm{Sym}^3(\mathbf 3)$:} physical ratios are extracted from degree--3 monomials in the ordered eigenvalues $(a,b,c)=(s-\delta,s,s+\delta)$, i.e. the $10$ weights of $\mathrm{Sym}^3(\mathbf 3)$ (Sec.~4).
\item \textbf{Minimality principle and ladders:} assign the three generations in each charged sector to a unique nearest-neighbour chain on the weight triangle; adjacent ratios reduce to universal ``edge contrasts'' such as $c/a$ and $c/b$ (Sec.~5.1--5.4).
\item \textbf{Dynkin swap:} the $A_2$ diagram involution swaps the down and lepton ladders, fixing the lepton ratios in terms of the down ladder and yielding $\sqrt{m_\tau/m_\mu}=\sqrt{m_s/m_d}$ (Secs.~4.4 and 5.6 and Appendix C).
\end{enumerate}
\end{minipage}}
\end{center}

The following are the new findings compared to our earlier derivation of mass ratios in the 2022 papers (i) the role of octonionic triality and of triality breaking is highlighted; (ii) the role of $SU(3)_{flavor}$ is made explicit, with regard to constructing three generations after triality breaking; (iii) the $Sym^{3}({\bf 3})$ ladder is introduced, for making mass ratios from degree three monomials constructed from the Jordan eigenvalues; (iv) the Dynkin swap is introduced to relate charged lepton mass ratios to down family mass ratios.

Our present article is a considerably shortened version of our much longer recent paper on mass ratios \cite{short-Singh2025a} and hereafter referred to as LP. In LP the derivation of mass ratios is presented in greater detail, where we also derive the Koide formula, and make preliminary progress towards understanding the CKM matrix and PMNS matrix. LP also presents progress towards the  Lagrangian dynamics and UV completion of the model presented here. It must be  emphasized that the mass ratios arise exclusively from the Jordan eigenvalues $(q-\delta, q, q+\delta)$ and $(s+\delta, s, s-\delta)$ and are largely insensitive to the dynamics. These derived ratios are valid at the electroweak (EW) scale, and accordingly compared to the observed ratios at the same scale after subjecting the observed ratios to conventional RG running to the EW scale. The theoretically derived values match with experiment at the few-to-tens of percent level (leading order). Appendix~D clarifies the apples-to-apples comparison and highlights the sharp cross-sector test $\sqrt{m_\tau/m_\mu}=\sqrt{m_s/m_d}$.
The present article is self-contained with respect to mass ratios; all other aspects are delegated to LP.

The mass ratios we derive in the present article are valid at the electroweak scale. Why so? This is because these ratios are set immediately after the breaking of triality symmetry and breaking of left-right symmetry, both of which are assumed concurrent with electroweak symmetry breaking. Prior to this breaking, the universe is assumed to undergo a deSitter like expansion all the way from the Planck scale to the electroweak scale. Effectively thus, the scale of quantum gravity is reset from the Planck scale to the electroweak scale \cite{short-Singh2024TraceDynamicsE8xE8}. The GUT scale is not relevant to our model. The Lagrangian and its accompanying dynamics, and the UV completion, as well as comparison of model with dynamics are discussed in detail in our accompanying long paper LP \cite{short-Singh2025a}. It is important to highlight that these ratios at the EW scale are a consequence of representation theory, and do not depend on the details of the dynamics, except for the assumed pattern of symmetry breaking. Because of the close match of theoretical derivation with experiment, our assumptions in turn become falsifiable predictions. Moreover, we make the unique cross-sector prediction $\sqrt{m_s/m_d}=\sqrt{m_\tau/m_\mu}$ at the EW scale.




\section{$E_6^L \times E_6^R$ unification, triality, and the origin of three fermion generations}
We propose that prior to the electroweak (EW) symmetry breaking, there is unification of the standard model of particle physics with gravitation, and this is described by an $E_6^L \times E_6^R$ gauge symmetry. The EW symmetry breaking is concurrent with left-right symmetry breaking, and with the breaking of octonionic triality. As we explain below, unbroken triality is responsible for the origin of three {\it identical} fermion generations in the unified phase. Because $E_6$ is the structure group (i.e, determinant preserving) of the complexified  exceptional Jordan algebra $J_3( O_C)$ (whose automorphism group is $F_4$), the three identical generations are described by two  copies of $J_3( O_C)$. The breaking of left-right symmetry selects two different vacua for $E_6^L$ (flavor eigenstates) and $E_6^R$ (square-root-mass eigenstates). Breaking of triality implies that the three generations are no longer identical, and the emergent $SU(3)_{flavor}$ symmetry enables three distinct generations, which continue to be described by $J_3(O_C)$. However, the first $J_3(O_C)$ (three generations of flavor eigenstates) is now distinct from the second $J_3(O_C)$ (three generations of square-root-mass eigenstates) and this distinction  is what is responsible for the observed mass hierarchy.


\subsection{Spin$(8)$ triality and its octonionic realisation}
The Lie group $\mathrm{Spin(8)}$ (type $D_4$) is exceptional in that its Dynkin diagram has an
$S_3$ symmetry. Equivalently, $\mathrm {Spin}(8)$ has three inequivalent real eight-dimensional
irreducible representations --- the vector $8_v$ and two chiral spinors $8_s,8_c$ --- which
are permuted by an outer automorphism group $\mathrm {Out(}\mathrm {Spin}(8))\cong S_3$ (``triality'').
In an octonionic realisation one (non-canonically) identifies each of $8_v,8_s,8_c$ with
the octonions $\mathbb{O}$, and the triality-invariant trilinear pairing
$8_v\otimes 8_s\otimes 8_c\to\mathbb{R}$ is given by the scalar part of an octonionic triple product:
\begin{equation}
\tau(u,\psi,\chi)\;:=\;\Re\!\bigl((u\psi)\chi\bigr)
\;=\;\Re\!\bigl(u(\psi\chi)\bigr),\qquad u,\psi,\chi\in\mathbb{O},
\end{equation}
where the equality holds because the octonion associator has vanishing scalar part, so the
\emph{real} (scalar) part of a triple product is unambiguous.

In our framework this triality sits inside the exceptional chain
\begin{equation}
\mathrm {Spin}(8)\ \subset\ F_4\ \subset\ E_6,
\end{equation}
with $F_4=\mathrm {Aut}(J_3(\mathbb{O}))$ acting on the exceptional Jordan algebra
$J_3(\mathbb{O}_{\mathbb{C}})$.
Concretely, in a $3\times 3$ Hermitian Jordan matrix
\begin{equation}
X=\begin{pmatrix}
\alpha & x & \bar y\\
\bar x & \beta & z\\
y & \bar z & \gamma
\end{pmatrix},
\qquad \alpha,\beta,\gamma\in\mathbb{C},\ \ x,y,z\in\mathbb{O}_{\mathbb{C}},
\label{short:hermat}
\end{equation}
a canonical $\mathfrak{so}(8)\subset \mathfrak{f}_4=\mathrm{Der}(J_3(\mathbb{O}))$ acts on the three
off--diagonal octonionic slots $(x,y,z)$ exactly as $(8_v,8_s,8_c)$ (up to the $S_3$
permutation). This is the sense in which \emph{triality is built into the Jordan matrix}. As has also been argued earlier by several researchers, we take this octonionic triality as the reason for the existence of three {\it identical}  fermion generations. Later in this section we will constrict spinorial states for these three generations using three isomorphic copies of $Cl(6)$ (a la Furey).

\subsection{Breaking triality to a flavor $SU(3)$ and the emergence of three distinct  families}
As we will elaborate in the next section, fixing a unit imaginary octonion (equivalently, choosing a complex structure on $\mathbb{O}$)
reduces the octonion automorphism group $G_2=\mathrm{Aut}(\mathbb{O})$ to the stabiliser
$SU(3)\subset G_2$. This selects a complex $3$--plane $\mathbb{C}^3\subset\mathbb{O}$ on which
this $SU(3)$ acts in the fundamental. In this ``triality--oriented'' phase the outer $S_3$
triality is no longer a symmetry; its residual imprint is precisely the global flavor
$SU(3)_F$ ladder structure used in our mass--ratio derivation.

\subsection{$J_3(\mathbb{O}_{\mathbb{C}})$ as the $\mathbf{27}$ of $E_6$ and why this does \emph{not} mean ``one generation''}
The complexified Albert algebra $J_3(\mathbb{O}_{\mathbb{C}})$ is $27$--dimensional.
The exceptional group $E_6$ acts on it as the reduced structure group preserving the cubic norm
\begin{equation}
N(X)\;\equiv\;\det_J X .
\end{equation}
The complete polarisation of $N$ yields a unique totally symmetric $E_6$--invariant trilinear map
\begin{equation}
t:\ \mathrm{Sym}^3(27)\ \longrightarrow\ \mathbb{C},
\end{equation}
which is the canonical cubic tensor used both for $E_6$--invariant Yukawa-type couplings and for
the intrinsic cubic eigenvalue problem of the Jordan algebra.

Crucially, every $X\in J_3(\mathbb{O}_{\mathbb{C}})$ obeys a cubic minimal polynomial and hence
has \emph{three} Jordan eigenvalues, with a corresponding Jordan frame of rank--$1$ idempotents.
In our interpretation, after triality breaking,  these three canonical eigenvalues/eigen-idempotents provide the three-family
label in a fixed electric-charge sector. Prior to the breaking, the three eigenvalues are permuted amongst the three identical generations because of the $S_3$ symmetry, and electric charge is not defined.

This is logically distinct from the usual $E_6$ GUT statement that ``one SM generation fits into a
$\mathbf{27}$'' (in the sense of gauge multiplets). Here the $\mathbf{27}$ is used as an internal
\emph{mass operator/order parameter} whose intrinsic three-eigenvalue structure generates three
families, rather than as a matter multiplet that must be copied three times by assumption.



\subsection{Three identical generations before triality breaking:
three isomorphic $\mathrm{Cl}(6,\mathbb{C})$ fibres}
\label{short:sec:three_gens_prebreaking}

In Sec.~2.2 we stated that the three triality slots in $J_3(\mathbb{O}_{\mathbb{C}})$ support
three identical fermion generations in the unbroken phase, and that we can construct their
spinorial states using three isomorphic copies of $\mathrm{Cl}(6)$.
We now make this explicit. We work with two copies of the complex Albert algebra $J\equiv J_3(\mathbb {O}_{\mathbb C}) \simeq{\bf 27}$ and the representation $J_L\oplus J_R$ for $E_6^L\times E_6^R$. Each element of the algebra is given by
\begin{equation}
X^{(0)}_{L,R}=p_0\,\mathbf 1_3+\,
\begin{pmatrix}
0 & x_v & \overline{x}_c\\
\overline{x}_v & 0 & x_s\\
x_c & \overline{x}_s & 0
\end{pmatrix},\quad
\|x_v\|^2=\|x_s\|^2=\|x_c\|^2,\quad \Re\!\big((x_vx_s)x_c\big)=0
\label{short:protonormal}
\end{equation}
and the states in the off-diagonal are $G_2$ covariant, with $S_3$ permuting $(x_v, x_s, x_c)$. The motivation for assuming  the coassociativity condition $\Re\!\big((x_vx_s)x_c\big)=0$ is given in Section 2.5.1 below.

\subsubsection{One generation from $\mathrm{Cl}(6,\mathbb{C})$ and complex octonions}
Let $\mathbb{O}_{\mathbb{C}}:=\mathbb{O}\otimes_{\mathbb{R}}\mathbb{C}$ and let
$\{1,e_1,\dots,e_7\}$ be an octonionic basis with $e_a^2=-1$.
Choose a \emph{temporary} unit imaginary $e_7\in\mathrm{Im}\,\mathbb{O}$ (this choice is
\emph{not} physical prior to triality breaking), and let $i$ denote the commuting complex unit.
Define the idempotents
\begin{equation}
\omega_\pm := \frac{1}{2}\left(1 \pm i\,e_7\right),\qquad \omega_\pm^2=\omega_\pm,\qquad
\omega_+\omega_-=0 .
\label{short:eq:idempotents}
\end{equation}

Following the standard complex-octonionic/Clifford construction, define three creation/annihilation
pairs (under left multiplication) by
\begin{align}
a_1^\dagger &= \frac12\left(e_5 + i e_4\right), & a_1 &= \frac12\left(e_5 - i e_4\right), \nonumber\\
a_2^\dagger &= \frac12\left(e_3 + i e_1\right), & a_2 &= \frac12\left(e_3 - i e_1\right), \nonumber\\
a_3^\dagger &= \frac12\left(e_6 + i e_2\right), & a_3 &= \frac12\left(e_6 - i e_2\right).
\label{short:eq:ladder_ops_gen1}
\end{align}
They satisfy the canonical anticommutation relations
\begin{equation}
\{a_i,a_j^\dagger\}=\delta_{ij},\qquad
\{a_i,a_j\}=0=\{a_i^\dagger,a_j^\dagger\},\qquad
a_i\,\omega_+=0,
\label{short:eq:car}
\end{equation}
hence generate a complex Clifford algebra
$\mathrm{Alg}\langle a_i,a_i^\dagger\rangle \cong \mathrm{Cl}(6,\mathbb{C})$.
The corresponding minimal left ideal (one fibre) is
\begin{equation}
S := \mathrm{Cl}(6,\mathbb{C})\,\omega_+ .
\label{short:eq:min_left_ideal}
\end{equation}

A convenient orthonormal 8-state basis of $S$ is then
\begin{align}
\ket{\nu} &:= \omega_+, \nonumber\\
\ket{\bar d_1} &:= a_1^\dagger\omega_+,\qquad
\ket{\bar d_2} := a_2^\dagger\omega_+,\qquad
\ket{\bar d_3} := a_3^\dagger\omega_+,\nonumber\\
\ket{u_1} &:= a_2^\dagger a_3^\dagger\omega_+,\qquad
\ket{u_2} := a_3^\dagger a_1^\dagger\omega_+,\qquad
\ket{u_3} := a_1^\dagger a_2^\dagger\omega_+,\nonumber\\
\ket{e^+} &:= a_1^\dagger a_2^\dagger a_3^\dagger\omega_+ .
\label{short:eq:8_state_basis}
\end{align}
(States in the conjugate ideal $\mathrm{Cl}(6,\mathbb{C})\,\omega_-$ can be interpreted as the
corresponding antiparticles.)

\subsection{Triality permutes three isomorphic fibres (generation label), not colour creation operators}
\label{short:sec:triality_on_creation_triple}

Before triality breaking, the outer triality symmetry acts by permuting the three isomorphic
$\mathrm{Cl}(6)$ fibres. At the level of the creation triple it is convenient to define the
3-cycle \(\pi\in C_3\subset S_3\) by
\begin{equation}
\pi:\ (a_1^\dagger,a_2^\dagger,a_3^\dagger)\ \longmapsto\ (a_2^\dagger,a_3^\dagger,a_1^\dagger),
\qquad
\pi:\ (a_1,a_2,a_3)\ \longmapsto\ (a_2,a_3,a_1).
\label{short:eq:triality_cycle}
\end{equation}
Define the generation-$g$ ladder operators (with $g=0,1,2$) as
\begin{equation}
a_i^{(g)\dagger} := \pi^g(a_i^\dagger),\qquad a_i^{(g)} := \pi^g(a_i),
\qquad\text{and use the \emph{same} }\ \omega_+ .
\label{short:eq:gen_g_ops}
\end{equation}

\paragraph{Fibre (generation) label versus colour label.}
It is essential to distinguish two different ``triples'' which appear in the Clifford-ideal description:

\begin{enumerate}
\item[(a)] \emph{Colour within one fibre.}
Within a single minimal left ideal \(S=Cl(6,\mathbb C)\,\omega_+\), the index \(i=1,2,3\) on the ladder
operators \(a_i^\dagger\) labels the \(SU(3)_c\) colour triplet. In particular,
\(\,|\bar d_i\rangle := a_i^\dagger \omega_+\) (for \(i=1,2,3\)) are the three coloured \(\bar d\)-states of
\emph{one} generation.

\item[(b)] \emph{Three generations as three isomorphic fibres.}
The three generations are \emph{not} obtained by relabeling the colour index \(i\) inside one fibre.
Rather, they correspond to three isomorphic minimal-ideal fibres \(S^{(g)}\) (\(g=1,2,3\)) supported on the
three Peirce--1 slots \(V_{12},V_{23},V_{31}\subset J_3(\mathbb O_{\mathbb C})\):
\[
J_3(\mathbb O_{\mathbb C})=\bigoplus_{i=1}^3 \mathbb C\,p_i \;\oplus\;V_{12}\oplus V_{23}\oplus V_{31},
\qquad V_{ij}\simeq \mathbb O_{\mathbb C}.
\]
Each \(V_{ij}\) supports an isomorphic \(Cl(6,\mathbb C)\) fibre, and we identify these three fibres with
\(g=1,2,3\). The full three-generation state space is therefore the direct sum
\(S_{\mathrm{total}}=S^{(1)}\oplus S^{(2)}\oplus S^{(3)}\).
\end{enumerate}

\paragraph{Triality acts on the generation label \(g\), not on the colour label \(i\).}
We therefore write the ladder operators and idempotents with an explicit generation label:
\[
S^{(g)}:=Cl(6,\mathbb C)\,\omega_+^{(g)},\qquad
\{a_i^{(g)},a_j^{(g)\dagger}\}=\delta_{ij},\qquad i=1,2,3,\;\; g=1,2,3.
\]
The corresponding one-generation basis in fibre \(g\) is
\begin{align}
|\nu^{(g)}\rangle &:= \omega_+^{(g)},\nonumber\\
|\bar d^{(g)}_i\rangle &:= a_i^{(g)\dagger}\,\omega_+^{(g)}\qquad (i=1,2,3),\nonumber\\
|u^{(g)}_1\rangle &:= a_2^{(g)\dagger}a_3^{(g)\dagger}\,\omega_+^{(g)},\quad
|u^{(g)}_2\rangle := a_3^{(g)\dagger}a_1^{(g)\dagger}\,\omega_+^{(g)},\quad
|u^{(g)}_3\rangle := a_1^{(g)\dagger}a_2^{(g)\dagger}\,\omega_+^{(g)},\nonumber\\
|e^{(g)+}\rangle &:= a_1^{(g)\dagger}a_2^{(g)\dagger}a_3^{(g)\dagger}\,\omega_+^{(g)}.\label{short:eq:gen-g-basis}
\end{align}
The colour group \(SU(3)_c\) acts \emph{within each fibre} (mixing the index \(i\) at fixed \(g\)), while
outer triality permutes the fibres (mixing the label \(g\) at fixed \(i\)).
Equivalently, the triality 3-cycle may be written as a map
\[
\pi:\ S^{(g)}\to S^{(g+1)}\quad (\mathrm{mod}\ 3),
\qquad
\pi\!\left(a_i^{(g)\dagger}\right)=a_i^{(g+1)\dagger},\qquad
\pi\!\left(\omega_+^{(g)}\right)=\omega_+^{(g+1)}.
\]
Any earlier shorthand that displayed \(\pi\) as a cyclic relabeling of the symbols
\((a_1^\dagger,a_2^\dagger,a_3^\dagger)\) should be read only as a convenient way to write an explicit
isomorphism between these three distinct fibres; it is \emph{not} a statement that a colour rotation inside
one fibre generates new generations.

Subject to this understanding and clarification,  the eight basis states of generation $g$ are obtained from \eqref{short:eq:8_state_basis} by the
replacement $a_i^\dagger\mapsto a_i^{(g)\dagger}$ (and similarly for any composite monomial) [with the superscript $g$ on $a^g$ understood below, and suppressed for brevity].
In terms of the underlying octonionic units in \eqref{short:eq:ladder_ops_gen1}, the same 3-cycle can be
encoded by a $G_2$ rotation $g$ acting on the ordered sextuple
\begin{equation}
g:\ (e_5,e_4,e_3,e_1,e_6,e_2)\ \longmapsto\ (e_3,e_1,e_6,e_2,e_5,e_4),
\qquad
g^2:\ (e_5,e_4,e_3,e_1,e_6,e_2)\ \longmapsto\ (e_6,e_2,e_5,e_4,e_3,e_1),
\label{short:eq:g2_cycle_on_units}
\end{equation}
which reproduces the gen-2 and gen-3 creation/annihilation triples (and hence their 8-state bases).

\paragraph{Remark (why the three generations are \emph{identical} before breaking).}
Prior to triality breaking, the vacuum does not select a preferred internal axis; the choice of
$e_7$ in \eqref{short:eq:idempotents} is therefore only a writing aid. In particular, the usual number
operator built from the $a_i^\dagger a_i$ is not an invariant in the unbroken phase, and electric
charge is not yet defined intrinsically. The three fibres related by \(\pi\) (or by \eqref{short:eq:g2_cycle_on_units})
are therefore symmetry-equivalent: they represent three \emph{identical} generations in the
triality-symmetric phase.

\subsubsection{How the three fibres sit inside $J_3(\mathbb{O}_{\mathbb{C}})$}
Choose a Jordan frame $\{p_1,p_2,p_3\}$ of rank-1 idempotents. The corresponding Peirce decomposition
has three off-diagonal Peirce-1 spaces,
\begin{equation}
J_3(\mathbb{O}_{\mathbb{C}}) \;=\; \bigoplus_{i=1}^3 \mathbb{C}\,p_i \;\oplus\; V_{12}\;\oplus\;V_{23}\;\oplus\;V_{31},
\qquad V_{ij}\ \cong\ \mathbb{O}_{\mathbb{C}},
\label{short:eq:peirce_decomp}
\end{equation}
corresponding to the three octonionic slots of the Hermitian matrix \eqref{short:hermat}
(of Sec.~2.1). Each $V_{ij}$ supports an isomorphic $\mathrm{Cl}(6,\mathbb{C})$ minimal-ideal fibre
constructed above; physically, one may identify these three fibres with the three fermion generations.
The outer triality $S_3\simeq\mathrm{Out}(\mathrm{Spin}(8))$ acts by permuting
$\{V_{12},V_{23},V_{31}\}$, so in the unbroken phase the three generations are degenerate by symmetry.


\subsection{Pre-breaking Jordan eigenvalue problem, triality, and the $\mathrm{Sym}^3(\mathbf{3})$ ladder}
\label{short:sec:prebreaking_eigenvalue_problem}

In the triality-symmetric phase one can already pose the intrinsic cubic (Jordan) eigenvalue
problem for an element \(X\in J_3(\mathbb{O}_{\mathbb{C}})\).
The spectral theorem for \(J_3(\mathbb{O}_{\mathbb{C}})\) states that for any such \(X\) there exists
a \emph{Jordan frame} \(\{p_1,p_2,p_3\}\) of mutually orthogonal primitive idempotents,
\begin{equation}
p_i^2=p_i,\qquad p_i\circ p_j=0\ (i\neq j),\qquad p_1+p_2+p_3=\mathbf{1}_3,
\label{short:eq:jordan_frame_def}
\end{equation}
such that
\begin{equation}
X=\lambda_1 p_1+\lambda_2 p_2+\lambda_3 p_3,
\label{short:eq:jordan_spectral_decomp}
\end{equation}
where \((\lambda_1,\lambda_2,\lambda_3)\) are the three (Jordan) eigenvalues.

\subsubsection{A universal pre-breaking normal form and its spectrum}
Starting from the universal pre-breaking normal form \eqref{short:protonormal}
(equal Peirce--1 norms and the coassociativity condition \(\Re((x_vx_s)x_c)=0\)),
the second condition forces the cubic norm (Jordan determinant) of the centred matrix to vanish,
so its spectrum contains a zero eigenvalue.
With the \(1/4\) state normalisation the centred characteristic polynomial reduces to
\begin{equation}
\lambda\Bigl(\lambda^2-\delta_0^{\,2}\Bigr)=0,
\qquad
\delta_0^{\,2}=\frac34,
\label{short:eq:proto_characteristic}
\end{equation}
so that the eigenvalue triple can be written as the \emph{universal Dirac template}
\begin{equation}
(a_0,b_0,c_0)=\bigl(p_0-\delta_0,\ p_0,\ p_0+\delta_0\bigr).
\label{short:eq:proto_spectrum}
\end{equation}

\paragraph{Coassociativity condition (coassociative slice) and motivation:}
Write the Jordan element in centred/off--diagonal form
\begin{equation}
X \;=\; q\,\mathbf{1}_3 + Y, \qquad
Y \;=\;
\begin{pmatrix}
0 & x_{12} & x_{13}\\
\bar x_{12} & 0 & x_{23}\\
\bar x_{13} & \bar x_{23} & 0
\end{pmatrix},
\qquad x_{ij}\in\mathbb{O}_{\mathbb{C}} .
\label{short:eq:coassoc_Y_form}
\end{equation}
(Equivalently one may denote the three Peirce--1 slots as a triality triple
\((x_v,x_s,x_c)\), up to an \(S_3\) permutation, with \(x_v=x_{12}\), \(x_s=x_{23}\), \(x_c=x_{13}\).)

For such \(X\), the cubic norm (Jordan determinant) splits into a ``quadratic + triple product'' form:
\begin{equation}
N(X)\equiv \det\nolimits_J X
\;=\;
q^3 \;-\; q\,\Sigma(X) \;+\; T(X),
\qquad
\Sigma(X):=\sum_{i<j}\|x_{ij}\|^2,
\qquad
T(X):=2\,\Re\!\bigl((x_{12}x_{23})x_{13}\bigr).
\label{short:eq:cubic_split_Sigma_T}
\end{equation}
The term \(T(X)\) is the \emph{only} place where the three off--diagonal octonions meet
simultaneously; it is a genuine cubic invariant attached to the Jordan element.

To control this term in a \(G_2\)-covariant way, recall that on \(\mathrm{Im}\,\mathbb{O}\) there is a
canonical \(G_2\)-invariant 3-form
\begin{equation}
\varphi(u,v,w)\;:=\;\Re\!\bigl(u(vw)\bigr), \qquad u,v,w\in \mathrm{Im}\,\mathbb{O}.
\label{short:eq:g2_3form}
\end{equation}
A 4-plane \(W\subset \mathrm{Im}\,\mathbb{O}\) is called \emph{coassociative} if \(\varphi|_W=0\).
We impose the \emph{coassociativity condition} by restricting to Jordan elements for which the three
Peirce--1 entries lie in some common coassociative 4-plane:
\begin{equation}
x_{12},x_{23},x_{13}\in W\subset \mathrm{Im}\,\mathbb{O},
\qquad \varphi|_W=0
\quad\Longrightarrow\quad
\Re\!\bigl((x_{12}x_{23})x_{13}\bigr)=0
\quad\Longrightarrow\quad
T(X)=0.
\label{short:eq:coassoc_implies_T0}
\end{equation}
Hence on the coassociative slice the cubic norm reduces to
\begin{equation}
N(X)=q^3-q\,\Sigma(X),
\label{short:eq:coassoc_reduced_cubic}
\end{equation}
so the spectrum depends only on the centre \(q\) and a single quadratic length \(\Sigma\).
With \(\mathrm{Tr} Y=0\), one obtains the universal three-eigenvalue shape
\(\{q-\delta,\;q,\;q+\delta\}\) with
\begin{equation}
\delta^2 \;=\; \frac12\,\mathrm{Tr}(Y^2) \;=\; \Sigma(X),
\label{short:eq:delta_from_quadratic}
\end{equation}
which is the input used in the pre-breaking normal form.

\smallskip
\noindent\emph{Motivation.}
The coassociativity condition removes the would-be additional cubic invariant \(T(X)\), leaving a
minimal, triality-symmetric situation in which the eigenvalue \emph{shape} is universal and controlled
by a single quadratic scale \(\delta\).
It is also intrinsic: \(T(X)\) and \(\Sigma(X)\) are invariants of the Jordan element \(X\) itself and do
not depend on auxiliary choices (e.g. a fiducial octonion used in a Clifford-chain realisation of
states).

For a more detailed discussion on the coassociative condition, the reader is referred to Appendix H.1 of LP \cite{short-Singh2025a}.

\subsubsection{Why eigenvalue ratios are not physical before triality breaking}
In the unbroken phase the outer triality \(S_3\) permutes the three Peirce slots
\(\{V_{12},V_{23},V_{31}\}\), hence permutes the idempotents \(p_i\) and allows the spectrum
\((a_0,b_0,c_0)\) to be attached to the Jordan frame in any order.
Therefore a ratio such as \(c_0/b_0\) has no invariant meaning until symmetry breaking selects an
ordering (and, physically, an electric-charge functional that ranks the three eigenvalues).

\subsubsection{Triality-invariant ratios from the $\mathrm{Sym}^3(\mathbf{3})$ ladder}

For a (complex) representation $V$ of a group $G$, the \emph{symmetric cube}
$\mathrm{Sym}^3(V)$ is the subrepresentation of $V^{\otimes 3}$ consisting of tensors
invariant under permutations of the three factors,
\begin{equation}
\mathrm{Sym}^3(V)\;:=\;\bigl(V^{\otimes 3}\bigr)_{S_3}\,.
\end{equation}
In our framework $\mathrm{Sym}^3$ appears in two conceptually distinct ways:
(i) $\mathrm{Sym}^3(\mathbf{27})$ carries the unique totally symmetric $E_6$--invariant cubic
$t:\mathrm{Sym}^3(\mathbf{27})\to\mathbb{C}$ (the polarisation of the Jordan determinant),
whereas (ii) after triality breaking the residual flavor symmetry acts on a \emph{fundamental}
$\mathbf{3}$, and the \emph{ladder that generates three generations} lives in
$\mathrm{Sym}^3(\mathbf{3})$.

Concretely, once triality breaking provides a physical ordering of the three Jordan eigenvalues,
we denote them by $(a,b,c)$ (light $\to$ heavy) and realise the $\mathrm{Sym}^3(\mathbf{3})$
weight basis by the degree--three monomials
\begin{equation}
|p,q,r\rangle\ \propto\ \mathrm{Sym}\!\bigl(a^p b^q c^r\bigr),
\qquad p+q+r=3,
\end{equation}
i.e. the weight triangle with vertices $a^3,b^3,c^3$ and the distinguished fully mixed weight $abc$.
This is the first symmetric power for which such a totally symmetric ``central'' weight exists,
so it gives a natural (triality-respecting) place to fix the overall normalisation of the ladder.

The weights of the \(\mathrm{Sym}^3(\mathbf{3})\) representation can be represented by the ten
degree-three monomials
\begin{equation}
\Bigl\{a_0^3,\ a_0^2 b_0,\ a_0 b_0^2,\ b_0^3,\ a_0^2 c_0,\ a_0 b_0 c_0,\ b_0^2 c_0,\ a_0 c_0^2,\ b_0 c_0^2,\ c_0^3\Bigr\}.
\label{short:eq:sym3_monomials}
\end{equation}
Triality acts as \(S_3\) permutations of the triple \((a_0,b_0,c_0)\), hence permutes this set of
weights; prior to breaking no weight has a fixed ``family'' label.
Nevertheless, after a single overall normalisation is fixed at the central weight \(a_0 b_0 c_0\),
the \emph{adjacent edge ratios} on the weight triangle are kinematic invariants of the pre-breaking
point (defined modulo the \(S_3\) action):
\begin{equation}
G_E^{(0)}:=\frac{c_0}{a_0}=\frac{p_0+\delta_0}{p_0-\delta_0},
\qquad
G_B^{(0)}:=\frac{b_0}{a_0}=\frac{p_0}{p_0-\delta_0},
\qquad
G_C^{(0)}:=\frac{c_0}{b_0}=\frac{p_0+\delta_0}{p_0}.
\label{short:eq:prebreaking_edge_ratios}
\end{equation}
Only after triality breaking does a residual \(SU(3)_F\subset G_2\) become physical; it fixes a
triality orientation, selects a Jordan frame, and thereby turns \eqref{short:eq:prebreaking_edge_ratios}
into \emph{physically meaningful} adjacent ratios along the \(\mathrm{Sym}^3(\mathbf{3})\) ladder.


\subsubsection{The pre-breaking neutrino as an idempotent and why it is Dirac}
In the triality-symmetric phase the vacuum does \emph{not} select a preferred left/right internal
frame.
Chirality exists only kinematically: the spacetime fermion is a massless Dirac spinor
\(\psi=(\psi_L,\psi_R)\) with \(\psi_{L,R}=P_{L,R}\psi\), \(P_{L,R}=\frac12(1\mp\gamma^5)\).
Accordingly, the natural neutrino projector in the unbroken phase is the \emph{Dirac} idempotent
\begin{equation}
V_D:=\frac12(V_L+V_R)=\frac12\bigl(1+i e_7\bigr)=\omega_+,
\label{short:eq:dirac_neutrino_idempotent}
\end{equation}
i.e. precisely the idempotent already used to define the vacuum of the minimal left ideal in
\eqref{short:eq:idempotents}--\eqref{short:eq:8_state_basis}.
Because there is no vacuum-selected left/right direction in the symmetric phase, an intrinsic
Majorana splitting is not available; Majorana projectors (purely imaginary idempotents) arise only
after triality breaking fixes distinct internal imaginary axes in the left and right sectors.

\subsection{Why do we call flavor eigenstates left-handed, and square-root mass eigenstates right-handed?}
\label{short:sec:LHflavor_RHmass}

In this paper the adjectives ``left-handed'' (LH) and ``right-handed'' (RH) are used in the
standard chiral sense with respect to $\mathrm{Spin}(1,3)$.
For any Dirac spinor $\psi$ we set
\begin{equation}
  \psi_{L,R} := P_{L,R}\psi,
  \qquad
  P_{L,R} := \frac12(1\mp\gamma_5).
\end{equation}
The point is then to specify how our \emph{internal} representation space is paired with this
spacetime chirality.

\paragraph{(1) LH = gauge/flavor (charge) basis.}
Electroweak interactions are implemented with a chiral projector so that $\mathrm{SU}(2)_L$
acts only on the LH component. A schematic way to write this is
\begin{equation}
  D_\mu \;=\; \nabla_\mu
  \;-\; i g\, W_\mu^a \frac{\tau^a}{2}\,P_L
  \;-\; i g'\, B_\mu\, Y,
\end{equation}
so the charged-current gauge basis is intrinsically a \emph{left-handed} basis.
Accordingly, we place the LH internal degrees of freedom in the $\mathrm{SU}(3)_F$ irrep
\begin{equation}
  R_L \;\cong\; \mathrm{Sym}^3(\mathbf{3}),
\end{equation}
and we take its natural basis to be the basis in which the unbroken gauge quantum numbers
(electric charge, colour, etc.) are diagonal. These are what we mean by the \emph{flavor (charge) eigenstates}.

\paragraph{(2) RH = mass basis, determined by Jordan data.}
The Jordan element $X\in J_3(\mathbb{O}_{\mathbb{C}})$ determines an internal mass operator
$\mathcal{M}(X)$, which enters the Weyl equations as
\begin{equation}
  i\slashed D\,\psi_L \;-\; \mathcal{M}(X)\,\psi_R \;=\; 0,
  \qquad
  i\slashed D\,\psi_R \;-\; \mathcal{M}(X)^\dagger\,\psi_L \;=\; 0,
\end{equation}
equivalently via the mass term $\bar\psi_L\mathcal{M}(X)\psi_R+\mathrm{h.c.}$.
We therefore place the RH internal slot in a Jordan module $R_R$ so that $\mathcal{M}(X)$ acts
only on internal indices and can be diagonalised there:
\begin{equation}
  R \;=\; R_L \oplus R_R,
  \qquad
  R_R \;\cong\; \text{(a Jordan module of }J_3(\mathbb{O}_{\mathbb{C}})\text{)}.
\end{equation}

\paragraph{(3) Why ``square-root mass''?}
In our framework the relevant Jordan/U$(1)_{\mathrm{dem}}$ eigenvalue is interpreted as a
\emph{square-root mass} (denote it by $s$):
\begin{equation}
  S_{\mathrm{dem}}\,|\psi\rangle = s\,|\psi\rangle,
  \qquad
  m = \kappa\, s^2,
\end{equation}
so the eigenvectors associated with $s$ are naturally called \emph{square-root mass eigenstates}.
This is why we say the mass basis “lives” in the RH slot: it is the RH states that enter via
$\bar\psi_L\mathcal{M}(X)\psi_R$ and are diagonalised by the Jordan spectrum.

\paragraph{(4) Where mixing comes from (CKM/PMNS).}
The CKM/PMNS matrices arise because the gauge basis on $R_L$ (charge/flavor eigenstates) is
\emph{not aligned} with the Jordan eigenbasis on $R_R$ (square-root mass eigenstates). Equivalently,
$\mathcal{M}(X)$ is generically misaligned with the $\mathrm{Sym}^3(\mathbf 3)$ charge basis, and
this misalignment is precisely what produces mixing angles and the CP phase.

\paragraph{(5) Vacuum/triality language.}
Before triality breaking there is no preferred identification of ``charge'' versus ``mass''
directions inside $J_3(\mathbb{O}_{\mathbb{C}})$; selecting an electroweak vacuum corresponds to
choosing this identification, after which LH charge eigenstates and RH square-root mass eigenstates
become segregated.
%


\subsection{After triality breaking: emergent LH/RH Majorana modes from the pre-breaking Dirac neutrino}
\label{short:sec:majorana_from_dirac}

We assume the neutrino to be Majorana after triality breaking, because only then do we arrive at the observed fermionic mass ratios. Assuming a Dirac neutrino after symmetry breaking gives incorrect values for the mass ratios - in this sense the Majorana nature of the neutrino becomes a falsifiable prediction of our theory, to be tested in ongoing/forthcoming experiments.

\paragraph{Pre-breaking: one Dirac neutrino.}
In the triality-symmetric phase there is no vacuum-selected distinction between left- and right-handed
\emph{internal} frames. The neutrino is therefore naturally Dirac, and in our Clifford/octetonic
realisation its internal projector was taken to be the Dirac idempotent
\begin{equation}
V_D=\omega_+=\frac12\bigl(1+i e_7\bigr),
\label{short:eq:VD_recall}
\end{equation}
as already used in Sec.~2.5.

\paragraph{Triality breaking: two inequivalent internal imaginaries.}
After triality/left--right breaking the vacuum \emph{does} select distinct internal imaginary directions
in the left and right sectors. We denote these by \(e_{7}\) (left) and \(e_{8}\) (right). {Here
\(e_8\) is simply a convenient label for the right-sector imaginary axis chosen by the vacuum; it need
not coincide with the left-sector axis \(e_7\).}
Accordingly, the Dirac projector \eqref{short:eq:VD_recall} can be viewed as splitting into two sectorial
choices,
\begin{equation}
V_{D,L}:=\frac12(1+i e_7),\qquad
V_{D,R}:=\frac12(1+i e_8),
\label{short:eq:VDL_VDR}
\end{equation}
and it is useful to isolate their purely imaginary (``Majorana'') parts by subtracting the scalar piece:
\begin{equation}
V_{M,L}:=V_{D,L}-\frac12=\frac{i e_7}{2},
\qquad
V_{M,R}:=V_{D,R}-\frac12=\frac{i e_8}{2}.
\label{short:eq:VML_VMR}
\end{equation}
These two inequivalent imaginary directions encode the two self-conjugate (Majorana) neutrino modes
that become meaningful only after triality breaking.

\paragraph{Field-theoretic definition: two Majorana fields from one Dirac field.}
Independently of the internal realisation, a single Dirac fermion is equivalent to two Majorana
fermions. Let \(\nu_D\) denote the (pre-breaking) Dirac neutrino field and define charge conjugation by
\begin{equation}
\nu_D^{c}:=C\,\overline{\nu_D}^{\,T}.
\label{short:eq:nu_charge_conj}
\end{equation}
Then the two Majorana combinations are
\begin{equation}
\nu_{M,L}:=\frac{1}{\sqrt{2}}\bigl(\nu_D+\nu_D^{c}\bigr),
\qquad
\nu_{M,R}:=-\frac{i}{\sqrt{2}}\bigl(\nu_D-\nu_D^{c}\bigr),
\label{short:eq:majorana_from_dirac}
\end{equation}
which satisfy \(\nu_{M,L}^{c}=\nu_{M,L}\) and \(\nu_{M,R}^{c}=\nu_{M,R}\).
Conversely, the Dirac field is reconstructed as the complex combination
\begin{equation}
\nu_D=\frac{1}{\sqrt{2}}\bigl(\nu_{M,L}+i\,\nu_{M,R}\bigr).
\label{short:eq:dirac_from_majorana}
\end{equation}
In our identification, \(\nu_{M,L}\) is the active mode (tied to the left-sector choice \(e_7\)) while
\(\nu_{M,R}\) is the sterile mode (tied to the right-sector choice \(e_8\)); before triality breaking
these two directions are not distinguished and therefore recombine into the single Dirac neutrino
\(\nu_D\) with intact lepton number, whereas after breaking the Majorana description becomes intrinsic.

In the next section, flavor eigenstates will be constructed as excitations of the left-handed Majorana neutrino, and square-root-mass eigenstates will be constructed as excitations of the right-handed Majorana neutrino. Once again, the Furey style construction will be employed, based on octonions and the Clifford algebra $Cl(6)$.

\subsection{Dynkin swap between $E_6^{L}$ and $E_6^{R}$}
\label{short:subsec:dynkin_swap_E6}

A key discrete ingredient in our $E_6^{L}\times E_6^{R}$ framework is the existence of a
non-trivial $\mathbb{Z}_2$ \emph{outer} automorphism of $E_6$, induced by the symmetry
of the $E_6$ Dynkin diagram. This involution permutes inequivalent embeddings of the
same trinified subgroup inside $E_6$ and, in trinification language, interrelates the
three $\mathrm{SU}(3)$ factors. In our context, it is convenient to think of the right-hand
embedding data as being related to the left-hand embedding data by this diagram involution.
This is the group-theoretic origin of the ``Dynkin swap'' that will be implemented
explicitly on the $\mathrm{Sym}^3(\mathbf{3})$ ladder in Sec.~IV (after triality breaking).

\medskip
\noindent
\textbf{Restriction to the residual flavor $A_2\simeq \mathfrak{su}(3)_F$.}
After triality breaking, the relevant part of the discrete $E_6$ involution that survives
in the mass-ratio construction is its restriction to the residual flavor algebra
$\mathfrak{su}(3)_F \subset E_6$. At the level of $A_2$ this becomes the familiar Dynkin
diagram automorphism exchanging the two simple roots. In a Chevalley basis
$\{H_i,E_{\pm \alpha_i}\}$ ($i=1,2$) one may write
\begin{equation}
\label{short:eq:A2_dynkin_swap_chevalley}
\varphi(\alpha_1)=\alpha_2,\qquad
\varphi(\alpha_2)=\alpha_1,
\qquad
\varphi(H_1)=H_2,\qquad
\varphi(H_2)=H_1,
\qquad
\varphi(E_{\pm \alpha_1})=E_{\pm \alpha_2},\quad
\varphi(E_{\pm \alpha_2})=E_{\pm \alpha_1},
\end{equation}
with $\varphi^2=\mathrm{id}$.

Equivalently, if a weight state $|\psi\rangle$ is characterized by its Cartan eigenvalues
$(\mu_1,\mu_2)$ via
\begin{equation}
H_1|\psi\rangle=\mu_1|\psi\rangle,\qquad
H_2|\psi\rangle=\mu_2|\psi\rangle,
\end{equation}
then the Dynkin swap exchanges the two labels:
\begin{equation}
(\mu_1,\mu_2)\ \stackrel{\varphi}{\longmapsto}\ (\mu_2,\mu_1).
\end{equation}

\medskip
\noindent
\textbf{Realisation on the $\mathrm{Sym}^3(\mathbf{3})$ weight triangle.}
In our $\mathrm{Sym}^3(\mathbf{3})$ description we represent weights by monomials
$a^p b^q c^r$ with $p+q+r=3$. The Dynkin swap is realised as a reflection $S$ that fixes
the $a$-corner and exchanges the two endpoints:
\begin{equation}
\label{short:eq:S_on_monomials}
S:\ a^p b^q c^r \longmapsto a^p c^q b^r
\qquad
(b\leftrightarrow c,\ \ a\ \text{fixed}).
\end{equation}
Using the ladder-edge notation of Sec.~IV (edge operators $E,B,C$ on the weight triangle),
the same reflection acts by conjugation on the three edge directions as
\begin{equation}
\label{short:eq:S_conjugation_edges}
\widetilde{E}:=S E S^{-1}=B,
\qquad
\widetilde{B}:=S B S^{-1}=E,
\qquad
\widetilde{C}:=S C S^{-1}=C^{-1}.
\end{equation}
Thus the swap exchanges the two ``outer'' edge ladders while reversing the middle edge.
This will be the precise operation used later (Sec.~IV) to relate the down-quark ladder to
the charged-lepton ladder after triality breaking: in the lepton triangle the electron corner
remains fixed while the $\mu$--$\tau$ endpoint is exchanged by the reflection.

\subsection{Trinification and emergence of the Standard Model (SM)}
\label{short:sec:trinification_sm}

At the electroweak/triality-breaking transition, each $E_6$ factor branches (``trinifies'')
into three $\mathrm{SU}(3)$ factors, but with different physical roles on the left (visible)
and right (mass/gravi-dem) sides:
\begin{align}
E_6^{L}
&\longrightarrow \mathrm{SU}(3)_c \times \mathrm{SU}(3)_{F,L} \times \mathrm{SU}(3)_L,
&
\mathrm{SU}(3)_L
&\longrightarrow \mathrm{SU}(2)_L \times \mathrm{U}(1)_{\gamma_1}\longrightarrow U(1)_Y \longrightarrow \mathrm{U}(1)_{\rm em},
\label{short:eq:trinification_left2}
\\[4pt]
E_6^{R}
&\longrightarrow \mathrm{SU}(3)_{c'} \times \mathrm{SU}(3)_{F,R} \times \mathrm{SU}(3)_R,
&
\mathrm{SU}(3)_R
&\longrightarrow \mathrm{SU}(2)_R \times \mathrm{U}(1)_{\gamma_2} \longrightarrow U(1)_{Y_{\rm dem}}   \longrightarrow \mathrm{U}(1)_{\rm dem}.
\label{short:eq:trinification_right2}
\end{align}

\noindent The physical interpretation (after breaking) is:
\begin{itemize}
\item \textbf{LH (visible) sector:} QCD and electroweak physics emerge, and one can now define
electric charge from the $\mathrm{U}(1)_{\rm em}$ generator.

\item \textbf{Flavor sector:} the global flavor symmetries $\mathrm{SU}(3)_{F,L}$ and $\mathrm{SU}(3)_{F,R}$
organize the three fermion generations and (after triality breaking) the hierarchy mechanism
encoded by the $\mathrm{Sym}^3(\mathbf{3})$ ladder.

\item \textbf{RH (mass / gravi-dem) sector:} the right chain supplies a distinguished Abelian
$\mathrm{U}(1)_{\rm dem}$ whose charge is identified with the square-root mass quantum number $\pm\sqrt{m}$. The $SU(2_R$ broken  gauge symmetry is the precursor of general relativity - it is not of direct relevance in the present analysis, and will be discussed in detail elsewhere. The $SU(3)_{c'}$ is a global explicitly broken symmetry, whose relevant aspects are discussed in Appendix K of LP.

\item \textbf{Centre splitting:} in the triality-symmetric (pre-breaking) phase there is only a
single invariant ``proto-centre''; after breaking this becomes two independently meaningful
centres, one associated with electric charge and one with square-root mass.
\end{itemize}

\subsubsection{Number operators and the two $\mathrm{U}(1)$ factors after breaking}
\label{short:sec:number_operators_two_u1}

After triality breaking, the vacuum selects a \emph{physical} internal complex axis in each copy
(of the left and right internal algebras). This makes the Clifford ladders physical and allows
one to define number operators in each sector.  Denote the left and right ladder operators by
$\{a_{i,L},a^\dagger_{i,L}\}$ and $\{a_{i,R},a^\dagger_{i,R}\}$ ($i=1,2,3$), and define the corresponding
number operators
\begin{equation}
N_L := \sum_{i=1}^{3} a^\dagger_{i,L}a_{i,L},
\qquad
N_R := \sum_{i=1}^{3} a^\dagger_{i,R}a_{i,R}.
\label{short:eq:number_ops_LR}
\end{equation}
The two $\mathrm{U}(1)$ generators are then taken to be
\begin{equation}
Q_{\rm em} := \frac{1}{3}N_L \qquad (\text{gauged as }\mathrm{U}(1)_{\rm em}),
\qquad
S_{\rm dem} := \frac{1}{3}N_R \qquad (\text{gauged as }\mathrm{U}(1)_{\rm dem}),
\label{short:eq:two_u1_generators}
\end{equation}
so that $Q_{\rm em}$ gives the standard quantized electric charges, while $S_{\rm dem}$ is the
(right-sector) counting operator whose eigenvalue we identify with the square-root mass label.

\paragraph{Motivation (why this only works after symmetry breaking).}
Before triality breaking, a ``number operator'' can only be written \emph{after} picking an
auxiliary internal complex axis, and that choice is not physical because triality permutations
move it. Hence the pre-breaking number operator is not a triality-invariant observable.
After breaking, the vacuum fixes the relevant internal axes (separately on the left and right),
and the \emph{same underlying counting construction} becomes physical in two symmetry-selected
frames: one frame yields $\mathrm{U}(1)_{\rm em}$ and the other yields $\mathrm{U}(1)_{\rm dem}$.


\section{Triality breaking, $\mathrm{SU}(3)_F$ symmetry, and LH/RH states for three generations}
\label{short:sec:triality_breaking}

Triality breaking is implemented by a vacuum choice which selects a preferred complex axis in the
internal algebra. Triality breaking is also accompanied by left-right symmetry breaking, concurrent with electroweak symmetry breaking. This has two immediate consequences:
(i) a residual $\mathrm{SU}(3)\subset G_2$ survives as a \emph{flavor} symmetry $\mathrm{SU}(3)_F$ (the stabiliser of the chosen axis), one for each of LH and RH sector ($SU(3)_{F/R}$,
and (ii) number operators become meaningful invariants, hence allowing the emergence of the two
post--breaking Abelian charges: the LH electromagnetic charge $Q$ and the RH ``dem'' charge $S$
(which we interpret as a square--root mass quantum).

\subsection{LH sector after breaking: $\mathrm{U}(1)_{\rm em}$ and one generation}
\label{short:subsec:LH_after_breaking}

On the LH side, trinification proceeds as
\begin{equation}
E_{6L}\;\to\;\mathrm{SU}(3)_c\times \mathrm{SU}(3)_{F,L}\times \mathrm{SU}(3)_L
\;\xrightarrow{\;\mathrm{SU}(3)_L\;}\;
\mathrm{SU}(2)_L\times \mathrm{U}(1)_{\gamma_1}
\;\xrightarrow{\;\mathrm{SU}(2)_L\;}\;
\mathrm{U}(1)_{\rm em}\,.
\label{short:eq:E6L_to_U1em}
\end{equation}

This chain
is included only to indicate the standard subgroup nesting $SU(2)_L\subset SU(3)_L$ and the existence of an abelian Cartan direction after $SU(3)_L\to SU(2)_L\times U(1$), and not to claim that the full Standard Model hypercharge assignment follows from $SU(3)_L$ alone. Appendix B explains how the correct SM hypercharges are recovered in our approach. 
After obtaining  $U(1)_{em}$, we proceed as follows, to construct minimal left ideals for one generation of SM fermions.

Fix the LH vacuum axis to be $e_7$.
Recall the (pre--breaking) Dirac neutrino state
\begin{equation}
\ket{\nu_D}\;=\;\frac{1}{2}\,(1+i e_7)\;=:\;\omega_+\,,
\qquad
\omega_-:=\frac{1}{2}\,(1-i e_7)\,.
\label{short:eq:Dirac_neutrino_LH}
\end{equation}
After breaking, we take the neutrino to be Majorana and use the corresponding state
\begin{equation}
\ket{\nu_M}\;=\;\frac{\omega_+-\omega_-}{2}\;=\;\frac{i e_7}{2}\;=:\;\omega_{M+}\,.
\label{short:eq:Majorana_neutrino_LH}
\end{equation}

Let $\mathrm{Cl}(6,\mathbb{C})$ be generated by fermionic ladder operators
$\{a_i,a_i^\dagger\}_{i=1}^3$ with canonical anti--commutation relations.
Define the number operator $N:=\sum_{i=1}^3 a_i^\dagger a_i$ and the electromagnetic generator
\begin{equation}
Q\;\equiv\;\frac{1}{3}\,N\,,
\qquad
\mathrm{U}(1)_{\rm em}\ \text{generated by}\ Q\,.
\label{short:eq:Q_def}
\end{equation}
One LH generation is the $8$--dimensional left--module
\begin{equation}
\mathcal{S}_L\;:=\;\mathrm{Cl}(6,\mathbb{C})\,\omega_{M+}\,,
\label{short:eq:SL_def}
\end{equation}
with basis states
\begin{align}
\ket{\nu_M}&=\omega_{M+}\,,
&
\ket{\bar d_i}&=a_i^\dagger\,\omega_{M+}\qquad(i=1,2,3),
\label{short:eq:LH_basis_neu_dbar}
\\[4pt]
\ket{u_1}&=a_2^\dagger a_3^\dagger\,\omega_{M+}\,,
&
\ket{u_2}&=a_3^\dagger a_1^\dagger\,\omega_{M+}\,,
&
\ket{u_3}&=a_1^\dagger a_2^\dagger\,\omega_{M+}\,,
&
\ket{e^+}&=a_1^\dagger a_2^\dagger a_3^\dagger\,\omega_{M+}\,.
\label{short:eq:LH_basis_u_eplus}
\end{align}
These are $Q$--eigenstates with the familiar quantised values
\begin{equation}
Q\ket{\nu_M}=0,\quad
Q\ket{\bar d_i}=\frac{1}{3}\ket{\bar d_i},\quad
Q\ket{u_i}=\frac{2}{3}\ket{u_i},\quad
Q\ket{e^+}=1\cdot\ket{e^+}\,.
\label{short:eq:Q_eigenvalues_LH}
\end{equation}
Under the residual $\mathrm{SU}(3)\subset G_2$ selected by the vacuum axis, the ideal decomposes as
\begin{equation}
\mathcal{S}_L\Big|_{\mathrm{SU}(3)\subset G_2}\;\cong\;\mathbf{1}\oplus\bar{\mathbf{3}}\oplus\mathbf{3}\oplus\mathbf{1}\,.
\label{short:eq:SU3_decomp_LH}
\end{equation}
Anti--particles live in the conjugate ideal $\bar{\mathcal{S}}_L=\mathrm{Cl}(6,\mathbb{C})\,\omega_{M-}$ and are obtained by the standard swap
\begin{equation}
a_i\;\leftrightarrow\;a_i^\dagger,\qquad \omega_{M+}\;\leftrightarrow\;\omega_{M-}\,.
\label{short:eq:conjugate_ideal_swap_LH}
\end{equation}

\subsection{Flavor $\mathrm{SU}(3)_{F,L}$ and the three LH generations}
\label{short:subsec:LH_three_gens}

Keeping one octonionic direction fixed (say $e_1$, which defines the preserved complex structure),
the vacuum-selected $\mathrm{SU}(3)\subset G_2$ acts by cycling the remaining sextet.
A convenient representative is the cyclic map
\begin{equation}
e_7 \mapsto e_5 \mapsto e_2 \mapsto e_3 \mapsto e_4 \mapsto e_6 \mapsto e_7,
\qquad e_1\ \text{fixed}.
\label{short:eq:LH_cycle}
\end{equation}
Equivalently, for the complex combinations (orthogonal to $1,e_1$)
\begin{equation}
v_1=e_4+i e_5,\qquad v_2=e_6+i e_2,\qquad v_3=e_7+i e_3,
\label{short:eq:v123_LH}
\end{equation}
the action is the $3$--cycle $v_1\!\mapsto v_2\!\mapsto v_3\!\mapsto v_1$ (up to physically irrelevant overall phases),
implemented by the $\mathrm{SU}(3)_F$ matrix
\begin{equation}
U\;=\;\begin{pmatrix}
0&1&0\\
0&0&1\\
1&0&0
\end{pmatrix}
\;=\;
\exp\!\Big(-i\frac{2\pi}{3}\frac{\lambda_8}{\sqrt{3}}\Big)\,
\exp\!\Big(\frac{\pi}{2}\lambda_7\Big)\,
\exp\!\Big(\frac{\pi}{2}\lambda_2\Big)\,,
\label{short:eq:SU3F_cycle_matrix}
\end{equation}
where $\lambda_a$ are (a chosen basis of) Gell--Mann matrices.

Starting from the first--generation LH charge eigenstates
\begin{equation}
\nu_{L,1}=\frac{i e_7}{2},\qquad
\bar d_{L,1}=\frac{e_5+i e_4}{4},\qquad
u_{L,1}=\frac{e_4+i e_5}{4},\qquad
e^+_{L,1}=\frac{i+e_7}{4},
\label{short:eq:LH_gen1_states_octonionic}
\end{equation}
the cyclic action \eqref{short:eq:LH_cycle} generates the second and third generations:
\begin{align}
\nu_{L,2}&=\frac{i e_5}{2}, & \nu_{L,3}&=\frac{i e_2}{2},\label{short:eq:nu_L_23}\\
\bar d_{L,2}&=\frac{e_2+i e_6}{4}, & \bar d_{L,3}&=\frac{e_3+i e_7}{4},\label{short:eq:dbar_L_23}\\
u_{L,2}&=\frac{e_6+i e_2}{4}, & u_{L,3}&=\frac{e_7+i e_3}{4},\label{short:eq:u_L_23}\\
e^+_{L,2}&=-\frac{i+e_5}{4}, & e^+_{L,3}&=-\frac{i+e_2}{4}.\label{short:eq:eplus_L_23}
\end{align}
Because $\mathrm{SU}(3)_{F,L}$ commutes with $\mathrm{U}(1)_{\rm em}$, this produces three generations of
\emph{charge} eigenstates for every LH family.

\subsection{LH Jordan eigenvalues in the exceptional Jordan algebra $J_3(\mathbb{O}_\mathbb{C})$}
\label{short:subsec:LH_Jordan_eigs}

For each LH family we form the Hermitian Jordan matrix in the (complex) Albert algebra
$J_3(\mathbb{O}_\mathbb{C})$:
\begin{equation}
X_L(q;x,y,z)\;=\;
\begin{pmatrix}
q & x & \bar z\\
\bar x & q & y\\
z & \bar y & q
\end{pmatrix},
\qquad
q\in\mathbb{R},\quad x,y,z\in\mathbb{O}_\mathbb{C}.
\label{short:eq:XL_def}
\end{equation}
The standard Jordan invariants (trace $T$, quadratic $S$, cubic $D$) read
\begin{align}
T &= 3q, \label{short:eq:Jordan_invariants_T}\\
S &= 3q^2-\big(\|x\|^2+\|y\|^2+\|z\|^2\big), \label{short:eq:Jordan_invariants_S}\\
D &= q^3-q\big(\|x\|^2+\|y\|^2+\|z\|^2\big)+2\,\Re\!\big((xy)z\big). \label{short:eq:Jordan_invariants_D}
\end{align}
Imposing the (post--breaking) constraint
\begin{equation}
\|x\|^2+\|y\|^2+\|z\|^2=\frac{3}{8},\qquad \Re\!\big((xy)z\big)=0,
\label{short:eq:Jordan_constraints}
\end{equation}
the characteristic polynomial
\begin{equation}
\chi(\lambda)=\lambda^3-T\lambda^2+S\lambda-D
\label{short:eq:charpoly_Jordan}
\end{equation}
has the symmetric eigenvalue pattern
\begin{equation}
\lambda\in\{q-\delta,\ q,\ q+\delta\},\qquad \delta^2=\frac{3}{8}.
\label{short:eq:LH_Jordan_eigs}
\end{equation}
Accordingly, one may write the spectral decomposition
\begin{equation}
X_L=\lambda_1 p_1+\lambda_2 p_2+\lambda_3 p_3,
\label{short:eq:XL_spectral_decomp}
\end{equation}
where $\{p_i\}$ is a Jordan frame (primitive idempotents).

\noindent{\bf Mass ratios are independent of normalization}. Even if one changes the overall normalisation of the Jordan element, the mass ratios
are unaffected. Concretely, consider an element $X$ on the coassociative slice with
Jordan eigenvalues $(s-\delta,s,s+\delta)$. If we rescale
\[
X \;\longrightarrow\; X' = c\,X
\]
with $c>0$, then the three invariants scale homogeneously,
\[
T' = c\,T,\qquad S' = c^2 S,\qquad D' = c^3 D,
\]
and the eigenvalues scale as
\[
\lambda'_i = c\,\lambda_i = (c s - c\delta,\; c s,\; c s + c\delta).
\]
Thus each eigenvalue is multiplied by the same factor $c$, and all eigenvalue ratios
$\lambda'_i/\lambda'_j$ are unchanged. Since our square-root mass ratios are functions only
of these eigenvalue ratios, they are independent of the overall normalisation of the cubic
norm or of the quadratic form on $J_3(\mathbb{O}_\mathbb{C})$. The specific value
$\delta^2 = 3/8$ derived here corresponds to the canonical $E_6$-invariant
normalisation of the coassociative generator; a different normalisation would simply
rescale all eigenvalues by a common factor and leave the predicted mass ratios unchanged.

\subsection{RH sector after breaking: $\mathrm{U}(1)_{\rm dem}$ and one generation}
\label{short:subsec:RH_after_breaking}

On the RH side, trinification proceeds as
\begin{equation}
E_{6R}\;\to\;\mathrm{SU}(3)_{c'}\times \mathrm{SU}(3)_{F,R}\times \mathrm{SU}(3)_R
\;\xrightarrow{\;\mathrm{SU}(3)_R\;}\;
\mathrm{SU}(2)_R\times \mathrm{U}(1)_{Y'}
\;\xrightarrow{\;\mathrm{SU}(2)_R\;}\;
\mathrm{U}(1)_{\rm dem}\,.
\label{short:eq:E6R_to_U1dem}
\end{equation}
A key structural input is that among complex Clifford algebras, $\mathrm{Cl}(3)$ and $\mathrm{Cl}(7)$
admit two irreducible pinor representations, schematically
\begin{equation}
\mathrm{Cl}(3)\;\simeq\;(\mathbb{C}\otimes\mathbb{H})\oplus(\mathbb{C}\otimes\omega\mathbb{H})
\;\simeq\;\mathrm{Cl}(2)_L\oplus \mathrm{Cl}(2)_R,
\qquad
\mathrm{Cl}(7)\;\simeq\;(\mathbb{C}\otimes\mathbb{O})\oplus(\mathbb{C}\otimes\omega\mathbb{O})
\;\simeq\;\mathrm{Cl}(6)_L\oplus \mathrm{Cl}(6)_R,
\label{short:eq:Cl3_Cl7_two_pinors}
\end{equation}
so that if one pinor is identified as LH then the other is naturally identified as RH
(see \cite{short-VaibhavSingh2023LRBiquaternions} for details in the present framework).
We therefore construct RH (square--root mass) eigenstates from $\mathrm{Cl}(6)_R$.

Define the RH Abelian generator
\begin{equation}
S\;\equiv\;\frac{1}{3}\,N,
\qquad
\mathrm{U}(1)_{\rm dem}\ \text{generated by}\ S,
\label{short:eq:S_def}
\end{equation}
whose eigenvalues are taken to define the square--root mass quantum $\sqrt{m}$:
\begin{equation}
S\ \text{eigenvalues}\ \in\ \Big\{0,\frac{1}{3},\frac{2}{3},1\Big\}\ \equiv\ \sqrt{m}\ \text{(in units of the fundamental quantum)}.
\label{short:eq:S_eigenvalues_set}
\end{equation}
This choice is motivated by the empirical near--pattern that $\sqrt{m}$ values for $(d,u,e)$ are in
the ratio $(3:2:1)$, a flip of their electric charge ratios $(1:2:3)$.

\subsection{The LH/RH ``flip'' and the RH 8--state basis}
\label{short:subsec:flip_and_RH_basis}

Fix the RH vacuum axis to be $e_8$.
The pre--breaking Dirac neutrino state is
\begin{equation}
\ket{\nu_D}\;=\;\frac{1}{2}\,(1+i e_8),
\label{short:eq:Dirac_neutrino_RH}
\end{equation}
and in the broken phase we again take the Majorana combination
\begin{equation}
\ket{\nu_M}\;=\;\frac{\omega_+-\omega_-}{2}\;=\;\frac{i e_8}{2}\;=:\;\omega_{M+}\,.
\label{short:eq:Majorana_neutrino_RH}
\end{equation}
One RH generation is the 8--state module
\begin{equation}
\mathcal{S}_R\;:=\;\mathrm{Cl}(6,\mathbb{C})\,\omega_{M+}\,,
\label{short:eq:SR_def}
\end{equation}
with basis
\begin{align}
\ket{\nu_M}&=\omega_{M+}\,,
&
\ket{e_i^-}&=a_i^\dagger\,\omega_{M+}\qquad(i=1,2,3),
\label{short:eq:RH_basis_neu_e}
\\[4pt]
\ket{u_1}&=a_2^\dagger a_3^\dagger\,\omega_{M+}\,,
&
\ket{u_2}&=a_3^\dagger a_1^\dagger\,\omega_{M+}\,,
&
\ket{u_3}&=a_1^\dagger a_2^\dagger\,\omega_{M+}\,,
&
\ket{d}&=a_1^\dagger a_2^\dagger a_3^\dagger\,\omega_{M+}\,.
\label{short:eq:RH_basis_u_d}
\end{align}
These are $S$--eigenstates:
\begin{equation}
S\ket{\nu_M}=0,\quad
S\ket{e_i^-}=\frac{1}{3}\ket{e_i^-},\quad
S\ket{u_i}=\frac{2}{3}\ket{u_i},\quad
S\ket{d}=1\cdot\ket{d}.
\label{short:eq:S_eigenvalues_RH}
\end{equation}
Comparing \eqref{short:eq:Q_eigenvalues_LH} with \eqref{short:eq:S_eigenvalues_RH} makes explicit the post--breaking
\emph{flip} between the LH and RH sectors: the electron and down families exchange their roles when
passing from the charge basis ($Q$) to the square--root mass basis ($S$).
Anti--particles again live in the conjugate ideal $\bar{\mathcal{S}}_R=\mathrm{Cl}(6,\mathbb{C})\,\omega_{M-}$, with the same swap
$a_i\leftrightarrow a_i^\dagger$ and $\omega_{M+}\leftrightarrow\omega_{M-}$, and with $S$--eigenvalues negated.

\subsection{Flavor $\mathrm{SU}(3)_{F,R}$ and the three RH generations}
\label{short:subsec:RH_three_gens}

The RH flavor group $\mathrm{SU}(3)_{F,R}$ is obtained analogously, now cycling the sextet with the RH axis $e_8$:
\begin{equation}
e_8 \mapsto e_5 \mapsto e_2 \mapsto e_3 \mapsto e_4 \mapsto e_6 \mapsto e_8,
\qquad e_1\ \text{fixed}.
\label{short:eq:RH_cycle}
\end{equation}
For the complex combinations
\begin{equation}
v_1=e_4+i e_5,\qquad v_2=e_6+i e_2,\qquad v_3=e_8+i e_3,
\label{short:eq:v123_RH}
\end{equation}
this is again implemented (up to overall phases) by the same $3$--cycle matrix $U$ in \eqref{short:eq:SU3F_cycle_matrix}.
Because $\mathrm{SU}(3)_{F,R}$ commutes with $\mathrm{U}(1)_{\rm dem}$, this produces three generations of
\emph{$\sqrt{m}$} eigenstates for every RH family, all sharing the same quantised set \eqref{short:eq:S_eigenvalues_set}.

\subsection{RH Jordan eigenvalues and their relation to the LH sector}
\label{short:subsec:RH_Jordan_eigs}

For each RH family we form the Hermitian Jordan matrix
\begin{equation}
X_R(s;x,y,z)\;=\;
\begin{pmatrix}
s & x & \bar z\\
\bar x & s & y\\
z & \bar y & s
\end{pmatrix},
\qquad
s:=\sqrt{m}\in\mathbb{R},\quad x,y,z\in\mathbb{O}_\mathbb{C}.
\label{short:eq:XR_def}
\end{equation}
With the same constraints \eqref{short:eq:Jordan_constraints}, the RH characteristic polynomial again yields
\begin{equation}
\lambda\in\{s+\delta,\ s,\ s-\delta\},\qquad \delta^2=\frac{3}{8},
\label{short:eq:RH_Jordan_eigs}
\end{equation}
and hence a spectral decomposition
\begin{equation}
X_R=\lambda_1 p_1+\lambda_2 p_2+\lambda_3 p_3.
\label{short:eq:XR_spectral_decomp}
\end{equation}

A compact way to encode the LH/RH flip at the Jordan level is the affine relation
\begin{equation}
X_R\;=\;-\,X_L\;+\;(s+q)\,\mathbb{I}_3,
\label{short:eq:XR_XL_affine_relation}
\end{equation}
where $q$ is the LH family ``centre'' (electric charge) appearing in \eqref{short:eq:XL_def}, $s=\sqrt{m}$ is the RH centre in \eqref{short:eq:XR_def},
and $\mathbb{I}_3$ is the $3\times 3$ identity.
This relation will be useful later when formulating the post--breaking automorphisms (Dynkin swaps) relating
the LH and RH sectors.


\section{The $\mathrm{Sym}^3(\mathbf{3})$ ladder and the Dynkin swap}
\label{short:sec:sym3_ladder}

A key structural point is that (after choosing a Jordan frame) the \emph{flavor} basis is not, in
general, the \emph{mass} basis, and the relevant invariants are cubic.  Accordingly, mass ratios in a
given charged sector are extracted from \emph{degree--3 monomials} built from the ordered Jordan
eigenvalues
\begin{equation}
(a,b,c)\equiv (s-\delta,\,s,\,s+\delta),\qquad a<b<c,
\end{equation}
rather than from the naive pairwise eigenvalue ratios $c/b$ and $b/a$.

\subsection{The symmetric cubic irrep $\mathrm{Sym}^3(\mathbf{3})$ and its weight triangle}
\label{short:subsec:sym3_triangle}

The ten degree--3 monomials
\begin{equation}
a^p b^q c^r,\qquad p+q+r=3,
\end{equation}
span the 10--dimensional symmetric cube
\begin{equation}
\mathrm{Sym}^3(\mathbf{3}) \subset \mathbf{3}\otimes\mathbf{3}\otimes\mathbf{3}
\;=\;\mathbf{10}\oplus\mathbf{8}\oplus\mathbf{8}\oplus\mathbf{1}.
\end{equation}
We visualize $\mathrm{Sym}^3(\mathbf{3})$ via the standard triangular weight diagram in
Fig.~\ref{short:fig:sym3_triangle}.  Each node is labeled by the corresponding monomial
$a^p b^q c^r$ with $p+q+r=3$, i.e.
\begin{equation}
\{a^3,\;a^2b,\;ab^2,\;b^3,\;a^2c,\;abc,\;b^2c,\;ac^2,\;bc^2,\;c^3\}.
\end{equation}

\begin{figure}[t]
  \centering
  \includegraphics[width=0.70\linewidth]{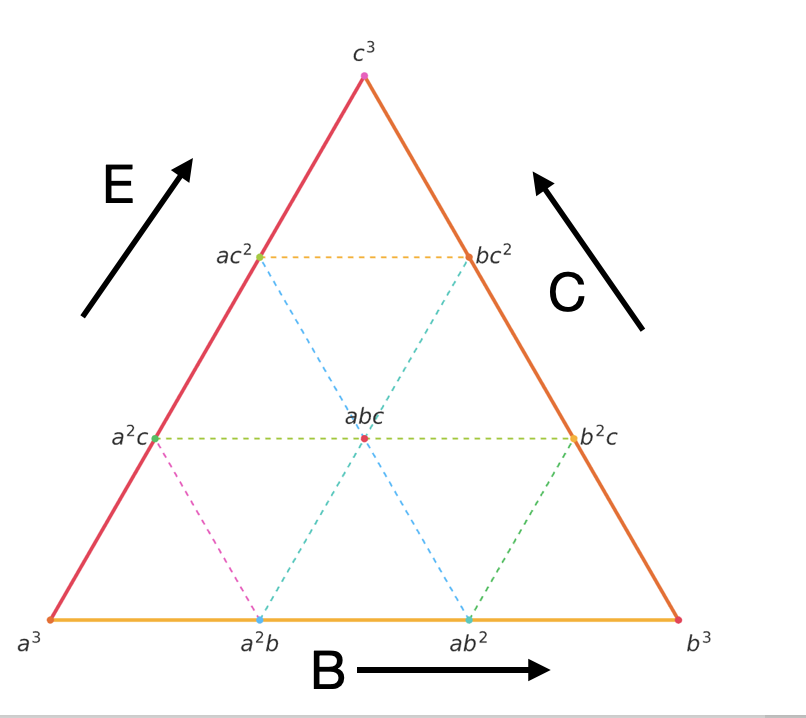}
  \caption{The $\mathrm{SU}(3)_F$ weight triangle for $\mathrm{Sym}^3(\mathbf{3})$.
  Nodes are the monomials $a^p b^q c^r$ with $p+q+r=3$.  The directed ``edge moves''
  $B,C,E$ step between adjacent weights and will be used to build the family ladders.}
  \label{short:fig:sym3_triangle}
\end{figure}

It is convenient to encode monomials by integer triples $(p,q,r)$ with $p+q+r=3$.
The three directed edge moves on the triangle are:
\begin{align}
E:\ (p,q,r) &\longmapsto (p-1,q,r+1) \qquad (\text{endpoint move } a\to c), \label{short:eq:E_move}\\
B:\ (p,q,r) &\longmapsto (p-1,q+1,r) \qquad (\text{left-edge move } a\to b), \label{short:eq:B_move}\\
C:\ (p,q,r) &\longmapsto (p,q-1,r+1) \qquad (\text{centre move } b\to c). \label{short:eq:C_move}
\end{align}

\subsection{Schwinger bosons, Cartan weights, and the meaning of the labels}
\label{short:subsec:schwinger}

To make the $\mathrm{SU}(3)$ action transparent, introduce three bosonic modes
$(\hat a,\hat b,\hat c)$ with
$[\hat a,\hat a^\dagger]=[\hat b,\hat b^\dagger]=[\hat c,\hat c^\dagger]=1$.
A normalized basis for $\mathrm{Sym}^3(\mathbf{3})$ at total occupation number $N=3$ is
\begin{equation}
|p,q,r\rangle
:=\frac{(\hat a^\dagger)^p(\hat b^\dagger)^q(\hat c^\dagger)^r}{\sqrt{p!\,q!\,r!}}\,|0\rangle,
\qquad p+q+r=3.
\end{equation}
We freely identify the abstract weight label $a^p b^q c^r$ with the ket $|p,q,r\rangle$ (the hats
distinguish the oscillators from the numerical Jordan eigenvalues $(a,b,c)$).

Let $\hat n_a=\hat a^\dagger\hat a$ etc.  The two Cartan generators may be taken as
\begin{equation}
T_3^{(F)}:=\frac{\lambda_3}{2}=\frac{1}{2}(\hat n_a-\hat n_b),
\qquad
Y^{(F)}:=\frac{\lambda_8}{3}=\frac{1}{3}(\hat n_a+\hat n_b-2\hat n_c),
\end{equation}
so that each weight satisfies
\begin{equation}
(T_3^{(F)},Y^{(F)})\,|p,q,r\rangle
=\left(\frac{p-q}{2},\,\frac{p+q-2r}{3}\right)|p,q,r\rangle.
\end{equation}

\subsection{Root directions, $\mathrm{SU}(2)$ subalgebras, and ``edge universality''}
\label{short:subsec:edge_universality}

The six non-Cartan generators of $\mathrm{SU}(3)$ split into three root pairs, giving three natural
$\mathrm{SU}(2)$ subalgebras associated with the three root directions (the two simple roots and their sum).
In Schwinger form one convenient choice is
\begin{align}
\text{$B$-pair:}\quad &J^{(B)}_+ := \hat a^\dagger \hat b,\quad J^{(B)}_- := \hat b^\dagger \hat a,
\quad J^{(B)}_0 := \tfrac12(\hat n_a-\hat n_b),\\
\text{$C$-pair:}\quad &J^{(C)}_+ := \hat b^\dagger \hat c,\quad J^{(C)}_- := \hat c^\dagger \hat b,
\quad J^{(C)}_0 := \tfrac12(\hat n_b-\hat n_c),\\
\text{$E$-pair:}\quad &J^{(E)}_+ := \hat a^\dagger \hat c,\quad J^{(E)}_- := \hat c^\dagger \hat a,
\quad J^{(E)}_0 := \tfrac12(\hat n_a-\hat n_c).
\end{align}
Up to the conventional choice of which operator is called ``$+$'' versus ``$-$'', the \emph{directed} moves
\eqref{short:eq:E_move}--\eqref{short:eq:C_move} are implemented by the operators that \emph{lower the $a$-exponent} and
raise the $b$- or $c$-exponent.

For normalized kets, the centre-adjacent ladder matrix elements are universal across the three edge
directions:
\begin{align}
J^{(E)}_-\,|a^2 b\rangle &= \sqrt{2}\,|abc\rangle, & J^{(E)}_-\,|abc\rangle &= \sqrt{2}\,|b c^2\rangle,\\
J^{(B)}_-\,|a^2 c\rangle &= \sqrt{2}\,|abc\rangle, & J^{(B)}_-\,|abc\rangle &= \sqrt{2}\,|b^2 c\rangle,\\
J^{(C)}_-\,|a b^2\rangle &= \sqrt{2}\,|abc\rangle, & J^{(C)}_-\,|abc\rangle &= \sqrt{2}\,|a c^2\rangle.
\end{align}
This ``edge universality'' is the representation-theoretic reason that, once an overall normalization is
fixed, adjacent-step ratios at the middle rung depend only on \emph{which monomial} labels the weight, not
on which of the three edge directions is used.

\subsection{The Dynkin swap and the $1\leftrightarrow \tfrac{1}{3}$ interchange}
\label{short:subsec:dynkin_swap}

As we saw in Sec. 2.9, $E_6$ has a non-trivial $Z_2$ outer automorphism. This automorphism (the `Dynkin swap’) is assumed to relate the right-hand embedding data in $E_{6R}$ to the left-hand embedding data in $E_{6L}$. This has two consequences post triality breaking, as we explain below and also in Appendix C.  Firstly, it acts on the $U(1)$: the electron family  and the down quark family exchange the $\{1/3,1\}$  Clifford grading slots when passing from the LH electromagnetic $U(1)_{\rm em}$ grading to the RH ``dem'' $U(1)_{\rm dem}$ grading (square-root-mass grading). Secondly, it acts on the residual family $SU(3)_F$ ladder and maps the down-strange edge to the muon-tau edge in the 
$Sym^3(\rm ladder)$.

Type $A_2$ admits a nontrivial Dynkin-diagram automorphism $\mathbb{Z}_2$ that exchanges the two simple
roots.  In terms of fundamental weights $(\omega_1,\omega_2)$ one can choose the three weights of the
fundamental $\mathbf{3}$ as
\begin{equation}
\mu_a=\omega_1,\qquad \mu_b=\omega_2-\omega_1,\qquad \mu_c=-\omega_2,
\end{equation}
and the Dynkin swap acts by exchanging $\mu_b\leftrightarrow \mu_c$.

On the $\mathrm{Sym}^3(\mathbf{3})$ triangle this becomes the reflection that swaps the $b$- and $c$-directions:
\begin{equation}
\mathsf{S}:\ a^p b^q c^r \longmapsto a^p b^r c^q,
\end{equation}
so that
\begin{equation}
\mathsf{S}:\quad E\leftrightarrow B,\qquad C\mapsto C^{-1},
\qquad
a^2 b \mapsto a^2 c,\ \ abc\mapsto abc,\ \ c^3\mapsto b^3.
\end{equation}

\paragraph{Physical meaning after triality breaking.}
This Dynkin $\mathbb{Z}_2$ provides a group-theoretic map between the \emph{left-handed} (electric-charge
organized) and \emph{right-handed} (square-root-mass organized) assignments: it maps the LH charged-lepton
family to the RH down-quark family, and the LH down-quark family to the RH charged-lepton family.
Consequently it implements the observed interchange of the basic quanta
\begin{equation}
1 \;\longleftrightarrow\; \frac13,
\end{equation}
i.e. the electron’s unit electric charge and the down-quark’s $1/3$ electric charge are exchanged with
the corresponding right-handed square-root-mass quanta under the swap.
This is the structural reason why, once the down-family ladder is fixed, the charged-lepton ladder is
obtained by applying $\mathsf{S}$ (and will be used later when we relate charged-lepton ratios to the
down-quark ratios).


The key point is that, after triality breaking, the \emph{right-handed} flavor group is not
identified with the \emph{same} $\mathrm{SU}(3)_F$ embedding as the left-handed one; rather, it is
identified with a \emph{Dynkin-swapped} copy.  Concretely, let $\mathsf S$ denote the involution
(the $A_2$ reflection) acting on the $\mathrm{Sym}^3(\mathbf 3)$ weight triangle by
\begin{equation}
\mathsf S:\quad a^p b^q c^r \longmapsto a^p b^r c^q,\qquad p+q+r=3,
\end{equation}
i.e. it exchanges the two endpoints $b \leftrightarrow c$ while keeping $a$ fixed.  The induced
action on the three edge $\mathrm{SU}(2)$ subalgebras (equivalently, on the three ladder directions)
is by conjugation,
\begin{equation}
\widetilde E := \mathsf S\, E\, \mathsf S^{-1}=B,\qquad
\widetilde B := \mathsf S\, B\, \mathsf S^{-1}=E,\qquad
\widetilde C := \mathsf S\, C\, \mathsf S^{-1}=C^{-1},
\label{short:eq:dynkinSwapEdges_physmeaning}
\end{equation}
so that a down-sector ladder is mirrored into the charged-lepton ladder.  In particular, the
minimal three-corner down chain
\begin{equation}
a^2 b \xrightarrow{\,E\,} abc \xrightarrow{\,C\,} ac^2 \xrightarrow{\,E\,} c^3
\end{equation}
is carried by $\mathsf S$ into
\begin{equation}
a^2 c \xrightarrow{\,\widetilde E=B\,} abc \xrightarrow{\,\widetilde C=C^{-1}\,} ab^2 \xrightarrow{\,\widetilde E=B\,} b^3.
\end{equation}

This $\mathsf S$-identification is precisely what implements the observed \emph{flip} between the
left-handed electric-charge grading and the right-handed ``square-root mass'' grading.  On the LH
side one has (schematically, in units of $1/3$)
\begin{equation}
Q_{\rm em}:\qquad (\nu,\ \bar d,\ u,\ e^+) \sim \Bigl(0,\ \tfrac13,\ \tfrac23,\ 1\Bigr),
\end{equation}
whereas on the RH side the $\mathrm{U}(1)_{\rm dem}$ quantum number is identified with $\sqrt m$
and the same four slots carry instead
\begin{equation}
Q_{\rm dem}\equiv \sqrt m:\qquad (\nu,\ e,\ u,\ d) \sim \Bigl(0,\ \tfrac13,\ \tfrac23,\ 1\Bigr).
\end{equation}
Thus the neutrino and up-quark slots remain fixed, while the electron and down-quark slots are
exchanged between the two chiralities:
\begin{equation}
\mathsf S:\qquad e \longleftrightarrow d
\qquad\Rightarrow\qquad
1 \longleftrightarrow \tfrac13.
\end{equation}
Equivalently, the statement “$Q_{\rm em}(d)=\tfrac13\,Q_{\rm em}(e)$” on the LH side is mirrored by
“$Q_{\rm dem}(d)=3\,Q_{\rm dem}(e)$” on the RH side (i.e. $\sqrt m_d$ is three times $\sqrt m_e$ in
the same normalization).  This is the precise sense in which the Dynkin swap maps the LH
$\mathrm{SU}(3)_{F,L}$ to a Dynkin-swapped $\mathrm{SU}(3)_{F,R}$, and thereby realizes the
$1 \leftrightarrow \tfrac13$ interchange in the post-triality-broken phase.



\section{The minimality principle and the unique chain leading to mass ratios}
\label{short:sec:minimality}

\subsection{Square-root masses as ratios of $\mathrm{Sym}^3(\mathbf{3})$ monomials}
\label{short:subsec:sym3_monomials}

After triality breaking we choose a Jordan frame and order the (family-dependent) Jordan eigenvalues
\begin{equation}
(a_F,b_F,c_F)\;:=\;(s_F-\delta,\ s_F,\ s_F+\delta),\qquad a_F<b_F<c_F,
\label{short:eq:abc_from_sdelta}
\end{equation}
where $F\in\{d,u,\ell\}$ labels the down, up and charged-lepton sectors and $s_F$ is fixed by the
trace split, while the charged-sector spread is
\begin{equation}
\delta^2=\frac{3}{8}.
\label{short:eq:delta_value}
\end{equation}
The key point is that, in the post-breaking phase, the relevant representation-theoretic data are
captured by the weight diagram of $\mathrm{Sym}^3(\mathbf{3})$ of $SU(3)_F$, whose ten weights may be
labelled by degree-3 monomials
\begin{equation}
a^p b^q c^r,\qquad p+q+r=3,
\end{equation}
or equivalently by normalized Schwinger-boson kets
\begin{equation}
\ket{a^p b^q c^r}
:=\frac{(a^\dagger)^p(b^\dagger)^q(c^\dagger)^r}{\sqrt{p!\,q!\,r!}}\ket{0}.
\label{short:eq:sym3_kets}
\end{equation}
Once a charged fermion is assigned to a specific weight (i.e. to a specific monomial) on the
$\mathrm{Sym}^3(\mathbf{3})$ triangle, its \emph{square-root mass} is proportional (after one global
normalization at the middle rung) to the corresponding monomial evaluated on the sector eigenvalues
\eqref{short:eq:abc_from_sdelta}. Hence, \emph{square-root mass ratios are computed as ratios of degree-3
monomials}. The ratios are not merely ratios of Jordan eigenvalues because the
$SU(3)$-ladder carries fixed representation-theoretic Clebsch multiplicities (e.g. $2{:}1{:}1$ in the
minimal chain) which mix with the monomial norms; after one global normalization these universal
factors cancel in adjacent ratios, leaving only edge contrasts (``edge universality'').

\subsection{Minimality principle and the unique three-generation chains}
\label{short:subsec:minimality_principle}

Let us revisit the $\mathrm{Sym}^3(\mathbf{3})$ triangle, shown once again, in Figure 2.
\begin{figure}[t]
  \centering
  \includegraphics[width=0.62\linewidth]{short_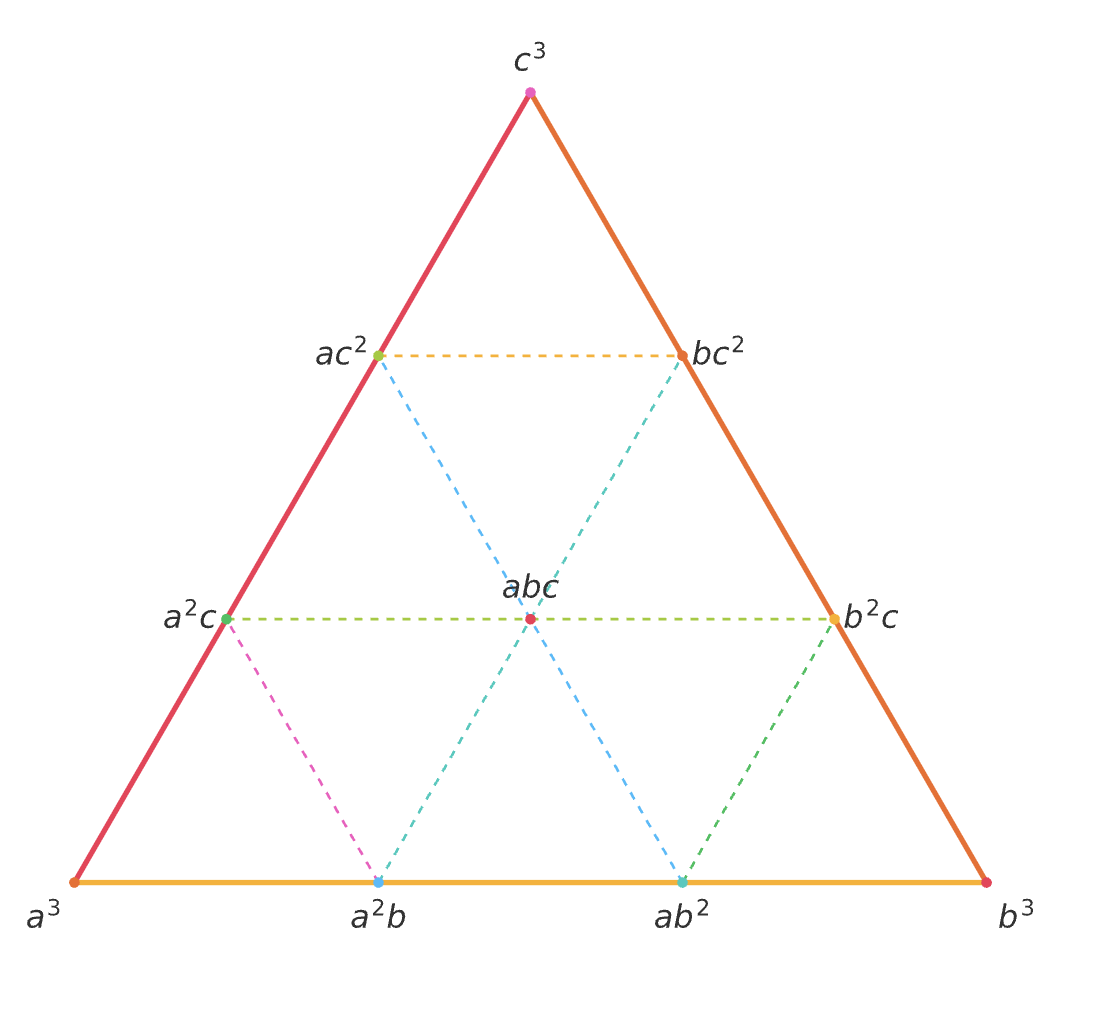}
  \caption{The $\mathrm{Sym}^3(\mathbf{3})$ weight triangle (degree-3 monomials). The edges are the
  three $SU(2)$ subalgebras of $SU(3)$, labelled $E:(a\leftrightarrow c)$, $B:(a\leftrightarrow b)$,
  $C:(b\leftrightarrow c)$.}
  \label{short:fig:symbasic_sec5}
\end{figure}
We now ask: \emph{where on the $\mathrm{Sym}^3(\mathbf{3})$ triangle do the observed fermions sit?}
The answer is fixed (up to the already-discussed charge/triality identifications) by a minimality
principle that selects a unique admissible chain of weights for each charged family:
\begin{itemize}
\item[(i)] the second generation sits at the middle rung $\,abc$ (strange, charm, muon);
\item[(ii)] monotone ascent in $c$ along the chain;
\item[(iii)] top-hierarchy saturation (the heavy endpoint is as heavy as allowed without violating
  the generation-chain constraints);
\item[(iv)] the $a$-exponent decreases by at most one per step (no ``$a$-shock'');
\item[(v)] the lightest state should be $a$-heavy and $c$-light;
\item[(vi)] a dynamical tie-breaker selects the top endpoint among remaining candidates [see Appendix H.9 and I of LP \cite{short-Singh2025a}].
\end{itemize}
These conditions single out the following three-generation ladders:
\begin{align}
&\textbf{Down family:}\quad a^2 b \xrightarrow{\,E\,} abc \xrightarrow{\,C\,} ac^2 \xrightarrow{\,E\,} c^3,
\label{short:eq:down_chain}\\[3pt]
&\textbf{Up family:}\quad a^2 b \xrightarrow{\,E\,} abc \xrightarrow{\,B\,} b^2 c,
\label{short:eq:up_chain}\\[3pt]
&\textbf{Lepton family (Dynkin swap $b\leftrightarrow c$):}\quad
a^2 c \xrightarrow{\,\widetilde E=B\,} abc \xrightarrow{\,\widetilde C=C^{-1}\,} ab^2 \xrightarrow{\,\widetilde E=B\,} b^3 .
\label{short:eq:lepton_chain}
\end{align}
The last line is the \emph{Dynkin swap} (the $A_2$ diagram automorphism), which acts by
\begin{equation}
S:\quad b\leftrightarrow c,\qquad \widetilde E:=SES^{-1}=B,\qquad \widetilde B:=SBS^{-1}=E,\qquad
\widetilde C:=SCS^{-1}=C^{-1},
\label{short:eq:dynkin_swap_edges}
\end{equation}
and reflects the $\mathrm{Sym}^3(\mathbf{3})$ triangle, mapping $a^2 b\mapsto a^2 c$, $abc\mapsto abc$,
$c^3\mapsto b^3$.

In Figure 3, the various quarks and charged leptons are shown at their respective locations in the weight triangle.

\begin{figure*}[t]
  \centering
  \includegraphics[width=\textwidth]{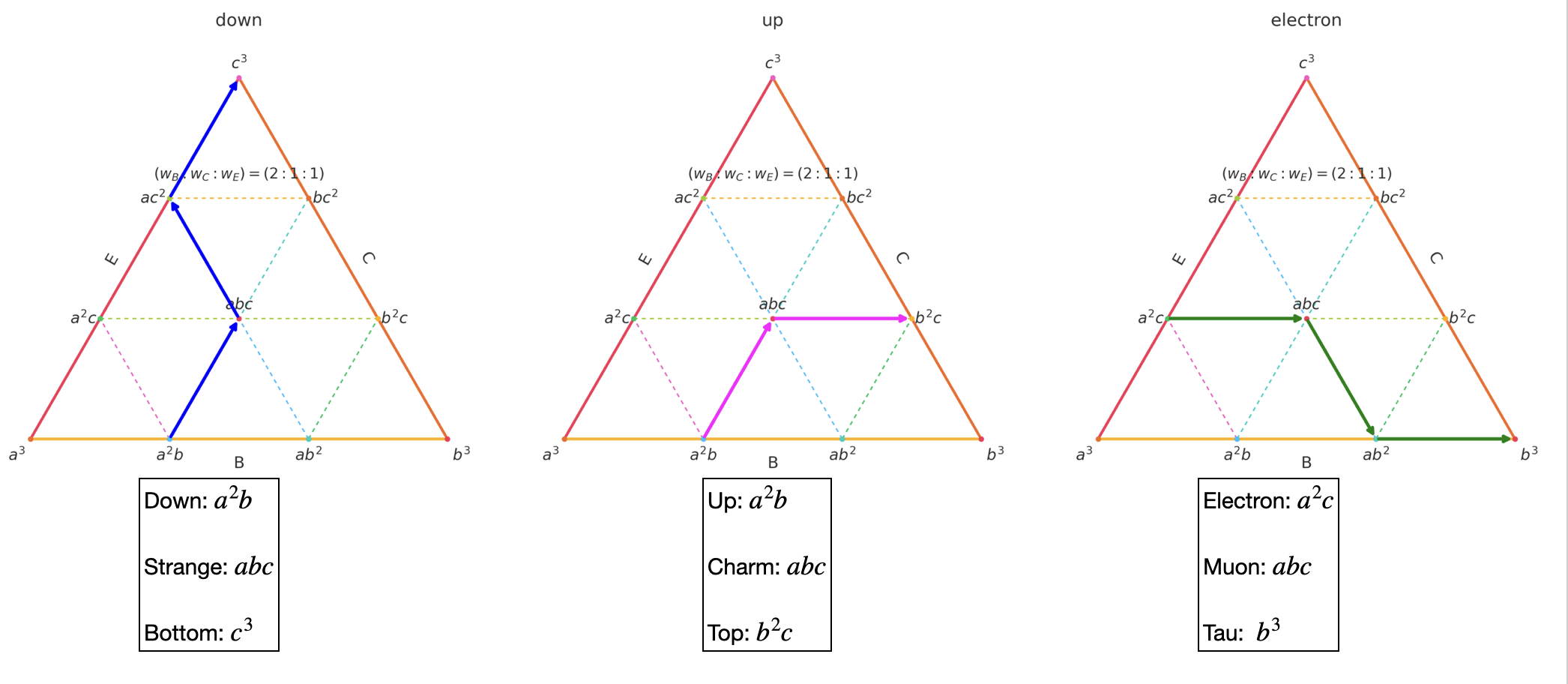}
  \caption{The three fundamental ladders on $\mathrm{Sym}^3(\mathbf{3})$ (down, up, electron).
  The particle assignments are: $d:a^2b$, $s:abc$, $b:c^3$; $u:a^2b$, $c:abc$, $t:b^2c$;
  $e:a^2c$, $\mu:abc$, $\tau:b^3$.}
  \label{short:fig:three_triangles_sec5}
\end{figure*}

\subsection{Edge universality and adjacent-step contrasts}
\label{short:subsec:edge_universality2}

In $\mathrm{Sym}^3(\mathbf{3})$ the three $SU(2)$ edge subalgebras act with fixed representation
multiplicities (e.g. the characteristic $2{:}1{:}1$ along the minimal chain). After one global
normalization at $\ket{abc}$, these universal factors cancel in \emph{adjacent} square-root mass
ratios, leaving only the corresponding \emph{edge contrasts}:
\begin{equation}
E:\ \sqrt{\frac{m_2}{m_1}}=\frac{c_F}{a_F},\qquad
B:\ \sqrt{\frac{m_2}{m_1}}=\frac{b_F}{a_F},\qquad
C:\ \sqrt{\frac{m_2}{m_1}}=\frac{c_F}{b_F},
\qquad (F=d,u,\ell).
\label{short:eq:edge_universality_sec5}
\end{equation}
This ``edge universality'' is the reason the final ratios depend only on $(s_F,\delta)$ and on the
Dynkin-swap mapping between families.

\subsection{Down-quark ratios}
\label{short:subsec:down_ratios}

For the down sector we take $s_d=1$ and hence $(a_d,b_d,c_d)=(1-\delta,1,1+\delta)$.
Identifying $(d,s,b)$ with $(a^2b,abc,c^3)$ as in \eqref{short:eq:down_chain} gives
\begin{align}
\sqrt{\frac{m_s}{m_d}}
&=\frac{|abc|}{|a^2b|}
=\left|\frac{c_d}{a_d}\right|
=\frac{1+\delta}{1-\delta},
\label{short:eq:ms_md}\\[3pt]
\sqrt{\frac{m_b}{m_s}}
&=\frac{|c^3|}{|abc|}
=\left|\frac{c_d^2}{a_d b_d}\right|
=\frac{1+\delta}{1-\delta}\,(1+\delta).
\label{short:eq:mb_ms}
\end{align}
The second factor $(1+\delta)$ in \eqref{short:eq:mb_ms} is precisely the extra C-edge step of the minimal
chain, and is the simplest illustration of why the observed ratios are \emph{not} obtained by taking
naive eigenvalue ratios alone.

\subsection{Up-quark ratios}
\label{short:subsec:up_ratios}

For the up sector we take $s_u=2/3$ and hence $(a_u,b_u,c_u)=(\frac23-\delta,\frac23,\frac23+\delta)$.
With the assignment $(u,c,t)=(a^2b,abc,b^2c)$ from \eqref{short:eq:up_chain} we obtain
\begin{align}
\sqrt{\frac{m_c}{m_u}}
&=\left|\frac{c_u}{a_u}\right|
=\frac{\frac23+\delta}{\frac23-\delta},
\label{short:eq:mc_mu}\\[3pt]
\sqrt{\frac{m_t}{m_c}}
&=\left|\frac{b_u}{a_u}\right|
=\frac{\frac23}{\frac23-\delta}.
\label{short:eq:mt_mc}
\end{align}

\subsection{Charged leptons: why the Dynkin swap fixes $\tau/\mu$ and why {$\mu/e$} picks an extra factor}
\label{short:subsec:lepton_ratios}

For charged leptons we take $s_\ell=1/3$ and hence
$(a_\ell,b_\ell,c_\ell)=(\frac13-\delta,\frac13,\frac13+\delta)$.
The \emph{positions} $(e,\mu,\tau)=(a^2c,abc,b^3)$ are the Dynkin-reflected images of the down-family
positions \eqref{short:eq:down_chain}, as displayed in \eqref{short:eq:lepton_chain} and
\eqref{short:eq:dynkin_swap_edges}.

\paragraph{Why $\sqrt{m_\tau/m_\mu}$ equals the down E-step.}
Under the Dynkin swap $S$ the down-family $E$-leg is mapped to the lepton-family $\widetilde E=B$-leg.
In particular, the \emph{muon-to-tau} rung is the image of the \emph{down-to-strange} rung, so the
edge contrast that controls $\tau/\mu$ is \emph{carried over} from the down $E$-contrast:
\begin{equation}
\sqrt{\frac{m_\tau}{m_\mu}}
=\left(\frac{c_d}{a_d}\right)
=\frac{1+\delta}{1-\delta}.
\label{short:eq:mtau_mmu}
\end{equation}
This is the precise sense in which the Dynkin swap transfers the ``$1\leftrightarrow 1/3$'' information:
the lepton ladder is not an independent choice; it is the reflected down ladder, and the last lepton
step inherits the down $E$-contrast.

\paragraph{Why $\sqrt{m_\mu/m_e}$ carries an extra ``tilt'' factor.}
The first lepton step is on the \emph{reflected} leg, where $C$ is inverted ($\widetilde C=C^{-1}$).
As a result, the $\mu\leftrightarrow e$ segment picks up an additional local endpoint factor (a
``tilt'' on the reflected leg) determined by the lepton trace choice $s_\ell=1/3$:
\begin{equation}
G
:=\left|\frac{c_\ell}{a_\ell}\right|
=\frac{s_\ell+\delta}{\delta-s_\ell}
=\frac{\delta+\frac13}{\delta-\frac13}.
\label{short:eq:G_factor}
\end{equation}
Consequently,
\begin{equation}
\sqrt{\frac{m_\mu}{m_e}}
=\sqrt{\frac{m_\tau}{m_\mu}}\;G
=\frac{1+\delta}{1-\delta}\cdot
\frac{\delta+\frac13}{\delta-\frac13}.
\label{short:eq:mmu_me}
\end{equation}
Equations \eqref{short:eq:mtau_mmu}--\eqref{short:eq:mmu_me} explain (i) why the charged-lepton family inherits the
same characteristic ratio as the down family (for $\tau/\mu$), and (ii) why the $\mu/e$ ratio is the
one that acquires the extra multiplicative factor $G$.

The derivation of the leptonic mass-ratios is indeed subtle.  Appendix C at the end of the paper explains this derivation in considerable detail, laying out the various steps in logical succession.

\subsection{Summary and numerical values}
\label{short:subsec:summary_table}

For completeness we also recall the first-generation normalization
$\sqrt{m_e}:\sqrt{m_u}:\sqrt{m_d}=1:2:3$ (fixed at the one-generation level), after which all remaining
adjacent ratios follow from \eqref{short:eq:delta_value} and the unique chains
\eqref{short:eq:down_chain}--\eqref{short:eq:lepton_chain}. The resulting parameter-free predictions are collected
in Table~\ref{short:tab:osmu_massratios} and compared against experimental values at the electroweak scale \cite{short-PDG2024}. Our longer paper LP delves deeper into the theory vs. experiment 
comparison. There is a good agreement to within a few-to-tens of  percent, and further investigation is needed to understand the difference. 
The mass-ratios pattern/hierarchy is reproduced; however, precision agreement within quoted error bars is {\it not} claimed. 
A unique prediction that our derivation makes is the cross-sector relation $\sqrt{m_\tau/m_\mu} = \sqrt{m_s/m_d}$. If improved measurements of quark masses confirm this prediction then that will be a strong support for our model. A compact phenomenology discussion and an apples-to-apples diagnostic ${\cal R}(\mu)$
for the Dynkin-swap relation are given in Appendix~D.

\begin{table*}[t]
  \centering
  \includegraphics[width=\textwidth]{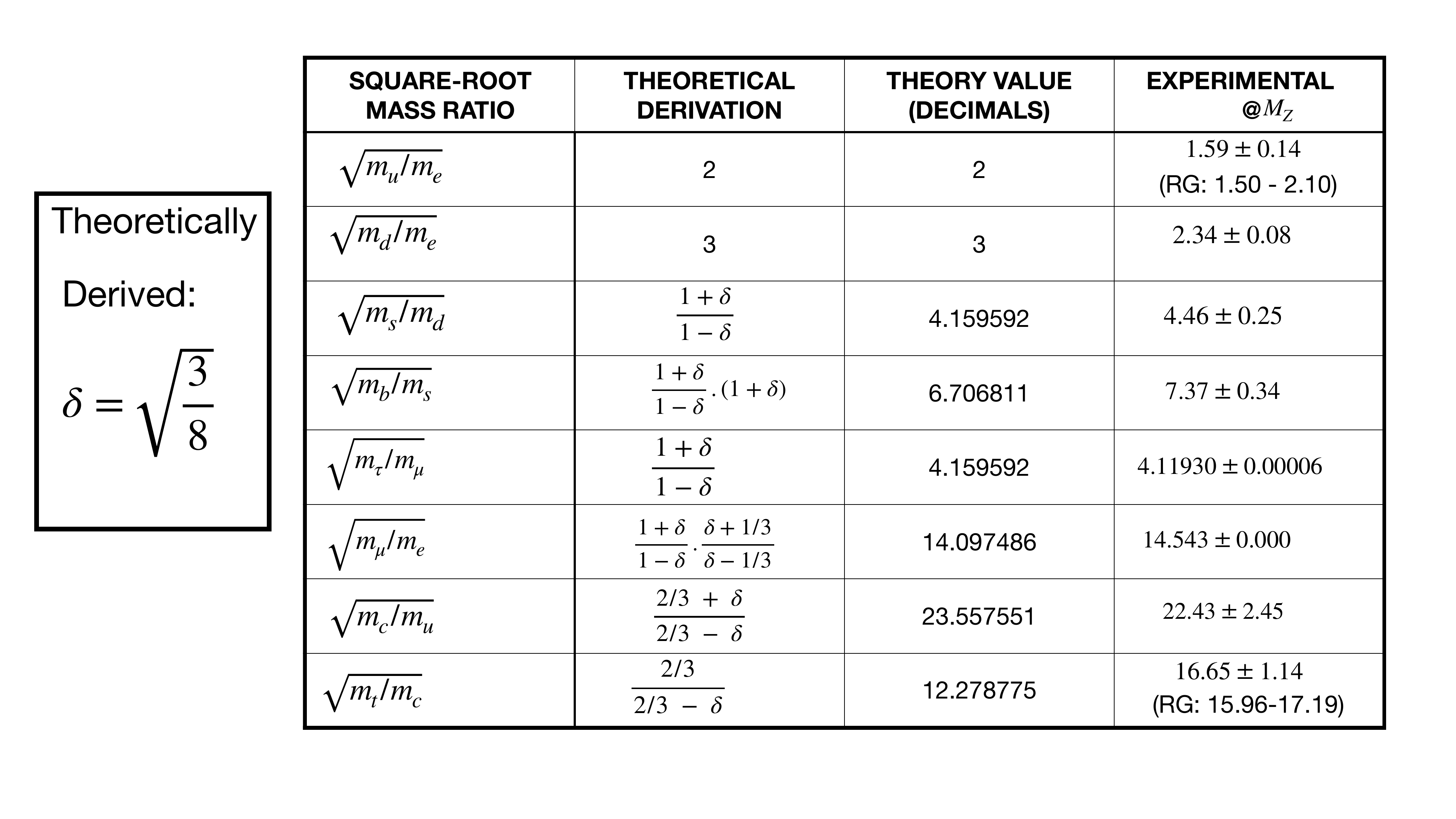}
  \caption{Square-root mass ratios obtained from the minimal $\mathrm{Sym}^3(\mathbf{3})$ ladders with
  $\delta^2=3/8$, and comparison with experimental values \cite{short-PDG2024}.}
  \label{short:tab:osmu_massratios}
\end{table*}

Table~1 should be read as a \emph{leading-order} comparison: the theory values are parameter-free
outputs of the representation-theoretic ladder construction, while the quoted uncertainties in the last
column are those of the extracted/running masses at $M_Z$ and do not include the additional theoretical
systematics associated with matching from the present ``proto--$\sqrt{m}$'' quantities to standard
$\overline{\mathrm{MS}}$ masses.
A fully systematic multi-threshold EFT analysis from the proto--$\sqrt{m}$ quantities to standard $\overline{\mathrm{MS}}$ running masses (including mixing effects) is left to future work. Appendix~D, however, provides the concrete $\overline{\mathrm{MS}}$ common-scale protocol used for the Table~\ref{short:tab:osmu_massratios} comparisons.


\section{Conclusions}
\label{short:sec:conclusions}

We have presented a concise, representation--theoretic derivation of the observed charged--fermion
mass hierarchies, in which the relevant ``spectral data'' are not Yukawa couplings inserted by hand
but arise from the intrinsic eigenvalue structure of the exceptional Jordan algebra together with
the post--breaking flavour ladder acting on $\mathrm {Sym}^{3}(\mathbf{3})$.

Our construction has three logically distinct ingredients:
(i) a pre--breaking phase with an $E_6^{L}\times E_6^{R}$ framework in which the three generations
are identical and permuted by triality;
(ii) triality breaking, which simultaneously (a) selects an $SU(3)_F$ flavour symmetry acting on the
three--state family label, and (b) separates the charge basis from the square--root mass basis;
and (iii) a minimality principle on the $\mathrm {Sym}^{3}(\mathbf{3})$ weight diagram which selects a unique
nearest--neighbour chain (``ladder'') in each charged sector and thereby fixes the square--root mass
ratios.

A central outcome is that the charged--sector Jordan spectra take a universal symmetric form
$\{s-\delta,\,s,\,s+\delta\}$, with the universal spread fixed by the algebraic constraints to
\begin{equation}
\delta^2=\frac{3}{8}\,,
\end{equation}
and with the physically relevant ratios given not by the raw eigenvalue ratios alone, but by the
ratios of the corresponding $\mathrm {Sym}^{3}(\mathbf{3})$ ladder monomials. The resulting closed--form
expressions (summarised in Table I) reproduce the observed hierarchical
patterns across the down, up, and charged--lepton sectors.

The charged--lepton ladder is related to the down--quark ladder by the post--breaking \emph{Dynkin
swap}: the $\mathbb{Z}_2$ outer automorphism exchanging the two $E_6$ Dynkin diagrams induces an
interchange between the relevant $U(1)$ gradings on the left- and right-handed sides. In
particular, this implements the characteristic $1\leftrightarrow \tfrac{1}{3}$ interchange that
maps the down--type chain to the $(\mu,\tau)$ chain and explains why the $\tau/\mu$ step matches the
$s/d$ step, while the $\mu/e$ step picks up an additional factor associated with the swapped grading.
Equivalently, the Dynkin swap is the mechanism which transports the down--sector minimal chain to
the charged--lepton chain while preserving the minimality criterion.

Finally, while the neutrino is naturally defined as a Dirac mode prior to triality breaking (when
the left/right sectors are paired in a triality--symmetric way), triality breaking permits the
emergence of left-- and right--handed Majorana neutrino modes as self--conjugate combinations of the
corresponding chiral components. A detailed phenomenological analysis of the neutrino sector, as
well as the incorporation of running/matching to standard $\overline{\rm MS}$ masses and the
derivation of mixing, are natural next steps.

\paragraph{Outlook.}
The present framework makes a set of sharp, algebraically controlled statements about mass
\emph{ratios}. The main open directions are: (a) a systematic renormalisation--group matching from
the proto--$\sqrt{m}$ quantities to low--energy running masses; (b) the incorporation of flavour
mixing (CKM/PMNS) as misalignment between the LH $\mathrm {Sym}^{3}(\mathbf{3})$ charge basis and the RH
Jordan mass basis; and (c) a dynamical understanding of the triality--breaking order parameter
within the same exceptional--algebraic setting.

\bigskip

For a much more detailed presentation of the analysis given in the present article, the reader is referred to LP \cite{short-Singh2025a}.

\bigskip

\noindent{\bf Acknowledgements}: The author thanks OpenAI's ChatGPT-5 Pro, and DeepSeek-V3.1  for their role as a valuable tool in exploring the mathematical landscape and for serving as a useful sounding board during the development of these concepts.

I gratefully acknowledge collaboration and useful conversations with Torsten Asselmeyer-Maluga, Vivan Bhatt, Felix Finster, Mohammad Furquan, Niels Gresnigt, Jose Isidro, Priyank Kaushik,  Rajrupa Mandal, Antonino Marciano, Claudio Paganini, Aditya Ankur Patel, Bishnu Gupta Teli, Vatsalya Vaibhav and Samuel Wesley.

 I would like to thank the reviewers for their critical suggestions which led to a significant improvement in the manuscript.

\smallskip

\noindent{\bf Conflict of Interest}: The author declares that he does not have any conflict of interest.

\smallskip

\noindent{\bf Funding Statement}: No funding was used for this work.

\smallskip

\indent {\bf Data Availability Statement}: All the data used in this work are available in the manuscript itself.

\appendix

\appendix
\section{Three generations in the {$E_8\otimes E_8$} branching
and the meaning of ``one generation per {$27$} of {$E_6$}''}
\label{short:app:E8E8_threegens}

This appendix addresses a common source of confusion about ``where the three generations live''
in the representation theory of our \(E_8\otimes E_8\) uplift (Ref.~[40]).
The key point is that there are \emph{two distinct} \(SU(3)\) factors that can give rise to triplet labels,
and they must not be conflated:
\begin{itemize}
\item an \(SU(3)_{\rm spacetime/geom}\) appearing in the maximal subgroup \(E_8\supset SU(3)\times E_6\),
interpreted in Ref.~[40] as acting on octonionic coordinates (geometric/spacetime);
\item an \(SU(3)_{\rm gen}\) arising \emph{inside} \(E_6\) via the maximal subgroup
\(E_6\supset SU(3)\times SU(3)\times SU(3)\), one factor of which is interpreted in Ref.~[40] as a
generation symmetry.
\end{itemize}
The statement ``Ref.~[40] contains three generations'' refers to the \emph{second} \(SU(3)\), i.e.\ to
\(SU(3)_{\rm gen}\), not to the \(SU(3)_{\rm spacetime/geom}\) in \(E_8\supset SU(3)\times E_6\).

\subsection{The first triplet: {$SU(3)_{\rm spacetime/geom}$} in {$E_8\supset SU(3)\times E_6$}}
\label{short:app:geom_triplet}

A maximal subgroup of \(E_8\) is \(SU(3)\times E_6\), and the adjoint/fundamental \(248\) branches as
\begin{equation}
E_8 \supset SU(3)_{\rm spacetime/geom}\times E_6,
\qquad
248=(8,1)\oplus(1,78)\oplus(3,27)\oplus(\bar 3,\overline{27}) ,
\label{short:eq:E8_to_SU3_E6}
\end{equation}
as used in Ref.~[40] (their Eq.~(2)).
In that framework the \(SU(3)\) factor is interpreted as a symmetry of octonionic coordinate rotations,
and is denoted \(SU(3)_{\rm spacetime}\) there. Consequently, the triplet label ``\(3\)'' multiplying \(27\)
in \((3,27)\) is \emph{geometric/spacetime} and is \emph{not} the family index.

This distinction is important because ``\((3,27)\)'' might superficially look like ``three copies of \(27\)''.
However, the intended physical meaning in Ref.~[40] is that the \(27\) of \(E_6\) is tensored with a
geometric triplet, not that there are already three fermion families at this stage.

\subsection{The second triplet: {$SU(3)_{\rm gen}$} inside {$E_6$}}
\label{short:app:gen_triplet}

Independently of any Jordan-algebra/triality discussion, the following is a purely representation-theoretic
fact: \(E_6\) has the maximal subgroup
\begin{equation}
E_6 \supset SU(3)\times SU(3)\times SU(3).
\label{short:eq:E6_trinification}
\end{equation}
Ref.~[40] makes the \emph{model choice} to interpret these three factors, on the left chiral side, as
\begin{equation}
E_{6L} \supset SU(3)_{{\rm gen}L}\times SU(3)_c\times SU(3)_L ,
\label{short:eq:E6L_interpretation}
\end{equation}
and then to break the last factor as \(SU(3)_L\to SU(2)_L\times U(1)_{\gamma_1}\).
Likewise on the right chiral side one takes
\begin{equation}
E_{6R} \supset SU(3)_{{\rm gen}R}\times SU(3)_{\rm grav}\times SU(3)_R ,
\qquad
SU(3)_R\to SU(2)_R\times U(1)_{\gamma_2},
\label{short:eq:E6R_interpretation}
\end{equation}
in the notation of Ref.~[40].

With these identifications, the decomposition of \(E_6\) representations contains states transforming as
\(\mathbf{3}\) or \(\bar{\mathbf{3}}\) under \(SU(3)_{\rm gen}\). This is exactly what ``three generations''
means in the rep theory of Ref.~[40]:
\begin{quote}
an \(SU(3)_{\rm gen}\) triplet \(\mathbf{3}\) (or anti-triplet \(\bar{\mathbf{3}}\)) contains three fields with identical
\(SU(3)_c\times SU(2)\times U(1)\) quantum numbers, distinguished only by a generation index
\(a=1,2,3\).
\end{quote}

\subsection{How ``three generations'' appear explicitly in the branching formulas of Ref.~[40]}
\label{short:app:explicit_threegens}

Ref.~[40] writes the left-chiral branching of the \(27\) of \(E_{6L}\) under
\(SU(3)_{{\rm gen}L}\times SU(3)_c\times SU(2)_L\times U(1)_{\gamma_1}\) (their Eq.~(9)).
Suppressing details not needed for counting generations (in particular the \(U(1)\) normalisation
convention used in Ref.~[40]), the important structural feature is the appearance of multiplets of the form
\begin{equation}
( \mathbf{3},\ \cdots)\ \ \text{and}\ \ (\bar{\mathbf{3}},\ \cdots)
\qquad \text{under } SU(3)_{{\rm gen}L}\times(\text{SM gauge factors}) .
\label{short:eq:gen_triplet_structure}
\end{equation}
For example, Ref.~[41] identifies the left-chiral lepton doublets as a term of the schematic form
\begin{equation}
L_L \sim (\bar{\mathbf{3}},\,\mathbf{1},\,\mathbf{2}) \, ,
\label{short:eq:LL_triplet}
\end{equation}
which should be read as
\begin{equation}
L_L \equiv (L_L^1,\ L_L^2,\ L_L^3), \qquad L_L^a \sim (\mathbf{1},\mathbf{2}) \text{ under } SU(3)_c\times SU(2)_L,
\label{short:eq:LL_components}
\end{equation}
i.e.\ \emph{three} left-handed lepton doublets with identical SM gauge quantum numbers.
Likewise, Ref.~[41] identifies the quark doublets as transforming as an \(SU(3)_{\rm gen}\) anti-triplet,
schematically
\begin{equation}
Q_L \sim (\bar{\mathbf{3}},\,\bar{\mathbf{3}},\,\mathbf{2}) ,
\label{short:eq:QL_triplet}
\end{equation}
so that, writing the generation index \(a=1,2,3\) and the color index \(\alpha=1,2,3\),
\begin{equation}
Q_L \equiv (Q_{L}^{a\alpha}) ,
\qquad
a=1,2,3,\ \ \alpha=1,2,3,
\label{short:eq:QL_components}
\end{equation}
which indeed contains three SM quark doublets \(Q_L^a\), one per generation, each in a color triplet.

The same logic applies to any term in the decomposition which carries a \(\mathbf{3}\) or \(\bar{\mathbf{3}}\)
of \(SU(3)_{\rm gen}\): it automatically represents three copies of the corresponding SM multiplet.
In this precise sense, the subgroup chain adopted in Ref.~[40] contains three generations \emph{already at the level of a single}
\(E_6\) irrep, because the irrep decomposes into \(SU(3)_{\rm gen}\) triplets.

\subsection{Why this does not contradict the standard \(E_6\) GUT slogan ``one family per \(27\)''}
\label{short:app:contrast_standard_E6}

It is also common in the GUT literature to say:
\begin{quote}
``A \(27\) of \(E_6\) contains (approximately) one SM generation (plus extra states).''
\end{quote}
This statement is correct \emph{in the conventional GUT context} where \(E_6\) is broken through a chain
such as \(E_6\to SO(10)\times U(1)\) (or through the conventional trinification identification
\(E_6\to SU(3)_c\times SU(3)_L\times SU(3)_R\)), i.e.\ where there is \emph{no} \(SU(3)_{\rm gen}\) factor
inside the gauge group that is being interpreted as a generation symmetry.
In that standard setting, one uses the \(27\) primarily as a \emph{gauge multiplet} for a single family,
and three families are obtained by postulating three copies \(3\times 27\) (optionally supplemented by a
separate, additional family symmetry introduced by hand).

By contrast, in Ref.~[40] one of the three \(SU(3)\) factors in the maximal subgroup
\(E_6\supset SU(3)\times SU(3)\times SU(3)\) is instead interpreted as \(SU(3)_{\rm gen}\).
With that interpretation, a single \(27\) is not ``one family'' in the standard GUT sense; rather, it is
a multiplet that \emph{already carries} a nontrivial \(SU(3)_{\rm gen}\) index, and therefore decomposes into
generation triplets. Put sharply:
\begin{equation}
\text{standard GUT usage: }\quad 27 \approx (\text{one family}) \quad\Rightarrow\quad 3\times 27\ \text{for three families},
\label{short:eq:standard_slogan}
\end{equation}
whereas in the Ref.~[41] subgroup identification:
\begin{equation}
\text{Ref.~[40] usage: }\quad 27 \supset (\mathbf{3}\ \text{or}\ \bar{\mathbf{3}} \text{ of } SU(3)_{\rm gen})
\quad\Rightarrow\quad \text{three families packaged as one } SU(3)_{\rm gen}\text{ multiplet}.
\label{short:eq:our_slogan}
\end{equation}
Hence, there is no contradiction: the two statements refer to two different subgroup identifications
and two different meanings of ``contained in.''

\subsection{Interface with the present mass-ratio framework}
\label{short:app:interface_massratios}

The present mass-ratio paper is formulated primarily at the \(E_{6L}\times E_{6R}\) stage and uses a residual
global flavor symmetry \(SU(3)_F\) acting on the family label after triality breaking.
In the \(E_8\otimes E_8\) uplift of Ref.~[40], the role of this family symmetry is played by the
\(SU(3)_{\rm gen}\) factor inside \(E_6\) (left and right).
The \(SU(3)\) factor in \(E_8\supset SU(3)\times E_6\) is instead geometric (\(SU(3)_{\rm spacetime/geom}\))
and should not be identified with the family symmetry.
This is why the triplet in \((3,27)\) at the \(E_8\) level is not the same object as the generation triplet
responsible for family replication.

In summary, the rep theory of Ref.~[40] contains three generations because the chosen subgroup chain
exhibits an explicit \(SU(3)_{\rm gen}\) factor under which the decomposed \(E_6\) multiplets transform as
\(\mathbf{3}\) or \(\bar{\mathbf{3}}\), thereby packaging three SM-identical copies as a single family multiplet.



\section{On $U(1)_{\gamma_1}$ in the $E_8\otimes E_8$ uplift and the status of hypercharge}
\label{short:app:gamma1_hypercharge}

This appendix clarifies two points that are logically distinct:
\begin{enumerate}
\item[(i)] what the abelian factor denoted $U(1)_{\gamma_1}$ is in the $E_8\otimes E_8$ uplift of Ref.~\cite{short-Kaushik}, and how its integer charge conventions arise from the canonical Cartan of $SU(3)_L$ after $SU(3)_L\to SU(2)_L\times U(1)$;
\item[(ii)] why the Standard Model hypercharge $U(1)_Y$ cannot, in general, be obtained from $SU(3)_L$ alone, and how one obtains a representation--independent hypercharge generator as a fixed left--right Cartan combination (a diagonal unbroken $U(1)$) once both left and right electroweak $SU(3)$ factors are present.
\end{enumerate}

\subsection{Two distinct abelian factors: $U(1)_{\gamma_1}$ versus $U(1)_{\rm em}$}
\label{short:app:gamma1_vs_em}

Equation~\eqref{short:eq:E6L_to_U1em} in the main text displays the standard subgroup nesting
$SU(2)_L\subset SU(3)_L$ and the fact that an abelian Cartan direction remains after the breaking
$SU(3)_L\to SU(2)_L\times U(1)$.
This statement by itself does \emph{not} fix the Standard Model hypercharge generator $Y$.
The electromagnetic generator $Q_{\rm em}$, which is the abelian generator relevant for identifying
fixed electric-charge sectors in the mass-ratio derivation, only appears after electroweak breaking
$SU(2)_L\times U(1)_Y\to U(1)_{\rm em}$.

In the $E_8\otimes E_8$ uplift of Ref.~\cite{short-Kaushik} the notation $U(1)_{\gamma_1}$ is used for the
abelian factor obtained directly from the intermediate breaking of the left electroweak $SU(3)_L$:
\[
SU(3)_L\to SU(2)_L\times U(1)_{\gamma_1}.
\]
Thus, $U(1)_{\gamma_1}$ should be viewed as the \emph{canonical Cartan} remaining after the
$SU(3)_L$ breaking, before any further dynamical identifications (left--right linking, diagonal $U(1)$
selection, etc.) are imposed.

\subsection{$U(1)_{\gamma_1}$ as the canonical $T^8$ direction of $SU(3)_L$}
\label{short:app:gamma1_as_T8}

Fix $SU(3)_L$ generators in the standard Gell--Mann basis, with
\begin{equation}
T^8 \;=\; \frac{1}{2\sqrt{3}}\,
\mathrm{diag}(1,1,-2)\,,
\qquad
\text{so that}\quad
2\sqrt{3}\,T^8=\mathrm{diag}(1,1,-2)\,.
\label{short:eq:T8_def}
\end{equation}
Up to an overall sign convention, $U(1)_{\gamma_1}$ is precisely the $T^8_L$ Cartan direction, i.e.\
the abelian generated by a fixed multiple of $T^8_L$.

Ref.~\cite{short-Kaushik} uses an integer normalisation for this abelian charge, recorded in the branching of
$SU(3)$ representations under $SU(2)\times U(1)_{\gamma_1}$:
\begin{align}
\bar{\mathbf{3}} &\;\to\; \mathbf{2}(+1)\ \oplus\ \mathbf{1}(-2), \label{short:eq:3bar_3_decomp_gamma1}\\
\mathbf{3} &\;\to\; \mathbf{2}(-1)\ \oplus\ \mathbf{1}(+2), \label{short:eq:3_decomp_gamma1}\\
\mathbf{8} &\;\to\; \mathbf{1}(0)\ \oplus\ \mathbf{2}(-3)\ \oplus\ \mathbf{2}(+3)\ \oplus\ \mathbf{3}(0).
\label{short:eq:8_decomp_gamma1}
\end{align}
These are simply the eigenvalues of the chosen $U(1)$ generator on the corresponding weight
subspaces (with overall normalisation fixed so that the doublet in \eqref{short:eq:3bar_3_decomp_gamma1} has
charge $+1$).

In this sense, $U(1)_{\gamma_1}$ is unambiguous: it is the canonical abelian factor in the breaking
$SU(3)_L\to SU(2)_L\times U(1)$, written in an integer convention.


\paragraph{Explicit Lie-algebra generator behind the integer convention (and its sign).}
Equation~(153) fixes the canonical Cartan direction $T^8_L$ of $SU(3)_L$ in the usual Gell--Mann
normalisation,
\begin{equation}
T^8_L \;=\; \frac{1}{2\sqrt{3}}\,
\mathrm{diag}(1,1,-2).
\end{equation}
The integer charges used in the decompositions (154)--(156) correspond to choosing the
\emph{integer-normalised} generator
\begin{equation}
\gamma_1 \;:=\; -\,2\sqrt{3}\,T^8_L,
\qquad\Rightarrow\qquad
\gamma_1\big|_{3}=\mathrm{diag}(-1,-1,+2),
\qquad
\gamma_1\big|_{\bar 3}=\mathrm{diag}(+1,+1,-2).
\label{short:eq:gamma1_integer_generator}
\end{equation}
With this sign choice, the $SU(3)_L\rightarrow SU(2)_L\times U(1)_{\gamma_1}$ branchings quoted in
(154)--(156) follow directly as the $\gamma_1$ eigenvalues on the $SU(2)_L$ doublet versus singlet
weight subspaces:
\[
3 \to 2(-1)\oplus 1(+2),
\qquad
\bar 3 \to 2(+1)\oplus 1(-2),
\]
etc.  In this sense $U(1)_{\gamma_1}$ is unambiguous: it is (up to sign and overall normalisation)
precisely the $T^8_L$ Cartan direction left over after breaking $SU(3)_L$ to $SU(2)_L\times U(1)$.

\paragraph{What the universal factor $1/2$ is doing.}
In Ref.~[40] the electric charge is written in the convention
\begin{equation}
Q \;=\; Y \;+\; \frac{T^3_L}{2},
\end{equation}
i.e.\ with the $SU(2)_L$ isospin eigenvalues taken as $T^3_L=\pm 1$ on a doublet.
In that integer-normalised convention, a doublet carrying $\gamma_1=\pm 1$ naturally corresponds
to a hypercharge contribution of size $\pm \tfrac{1}{2}$.
This explains the \emph{universal} ``$/2$'' appearing in the shorthand $Y=\gamma_1/(2N)$: it is simply
the conversion between the integer $U(1)_{\gamma_1}$ eigenvalues used for bookkeeping in the branching
tables and the usual fractional normalisation of hypercharge.

\paragraph{What the factor $N$ is (and what it is not).}
By contrast, the additional factor $N\in\{1,3\}$ in the shorthand of Ref.~[41] is \emph{not} a legitimate
rescaling of a gauge generator (since a gauge generator cannot be renormalised representation-by-
representation). As explained in Secs.~B.3--B.5, the consistent reading is that $N$ is only an
\emph{eigenvalue mnemonic}: once additional Cartan directions are available (from the right sector) and a
single fixed diagonal generator $Y$ has been selected, evaluating that \emph{fixed} $Y$ on quark versus
lepton multiplets can yield numerical eigenvalues that happen to be writable in the compact form
$\gamma_1/(2N)$, with $N$ determined by the multiplet's charges under the \emph{other} Cartan directions
entering $Y$.

We emphasise that the representation-independent statement is the fixed-generator ansatz
$Y=\alpha T^8_L+\beta T^8_R+\gamma T^3_R$ in Eq.~(164), with (165) illustrating how the mnemonic
$\gamma_1/(2N)$ arises at the level of eigenvalues once $Y$ is fixed.


\subsection{Why $SU(3)_L$ alone does not fix Standard Model hypercharge}
\label{short:app:why_SU3L_not_enough}

A single breaking $SU(3)_L\to SU(2)_L\times U(1)$ produces exactly one abelian generator
(the $T^8_L$ direction above).
However, the Standard Model hypercharge assignments cannot be obtained from that single generator
alone: a gauge generator must be a \emph{fixed} Lie-algebra element, and one cannot rescale it
representation-by-representation to match quark and lepton hypercharges.

In the $E_8\otimes E_8$ paper, this manifests as the appearance of the helpful but potentially misleading
shorthand
\begin{equation}
U(1)_Y \;=\; \frac{1}{2N}\,U(1)_{\gamma_1},\qquad
N=3\ \text{for color triplets},\ \ N=1\ \text{for color singlets},
\label{short:eq:shorthand_2N}
\end{equation}
which correctly reproduces the \emph{numerical eigenvalues} reported for specific multiplets, but cannot
be taken as the \emph{definition} of a gauge generator (since it depends on the representation through
$N$). 

Our intended resolution is that hypercharge must be an unbroken diagonal $U(1)$ obtained only after
bringing in additional Cartan directions (from the right sector) and imposing a left--right linking/gluing
condition. This is stated explicitly below as a set of minimal assumptions.

\subsection{Hypercharge from left--right Cartans: assumptions and scope}
\label{short:sec:hypercharge-LR-Cartan}

A gauge generator must be a fixed Lie-algebra element, independent of which multiplet it acts upon.
Accordingly, the shorthand \eqref{short:eq:shorthand_2N} is \emph{not} to be interpreted as defining the
hypercharge generator, but only as a mnemonic for its eigenvalues on specific multiplets.

Our intended construction is that, in the pre--electroweak phase, the theory contains \emph{both}
a left and a right electroweak $SU(3)$ (descending from the $E_{6L}\times E_{6R}$ stage), and the
physical hypercharge is an \emph{unbroken diagonal} $U(1)$ obtained as a fixed linear combination of
Cartan generators drawn from the left and right sectors. This is consistent with the larger
$E_8\times E_8$ picture of Ref.~\cite{short-Kaushik}, which contains both $U(1)_{\gamma_1}$ and $U(1)_{\gamma_2}$.

\paragraph{Minimal assumptions.}
For clarity (and to avoid over-claiming), we state explicitly the assumptions under which the correct
hypercharges are obtained in the present organisation with
$SU(3)_{F,L}\times SU(3)_c\times SU(3)_L$ (left) and
$SU(3)_{F,R}\times SU(3)'_c\times SU(3)_R$ (right):

\begin{enumerate}
\item[(H1)] \textbf{Gauge group and subgroups.}
In the pre--EW phase the gauge symmetry contains (at least) the factors
\begin{equation}
G_{\mathrm{pre}} \supset
\bigl[SU(3)_c \times SU(3)_L\bigr]\times \bigl[SU(3)'_c \times SU(3)_R\bigr],
\end{equation}
arising from the $E_{6L}\times E_{6R}$ stage.
The family groups $SU(3)_{F,L}$ and $SU(3)_{F,R}$ act as \emph{global} horizontal symmetries on
generation labels (they are not used as gauge factors in the hypercharge embedding).

\item[(H2)] \textbf{Left--right linking (diagonal $U(1)$).}
There exists a dynamical mechanism (e.g.\ an
interface/link field or a boundary/gluing condition (Note: For a concrete realization of an
interface/overlap region with gluing constraints in a gravi--weak BF framework (in which a parent
phase yields two emergent sectors related by such a gluing), see Ref.~\cite{short-WesleySinghIsidro2026}.))
which identifies a diagonal $U(1)$ combination of the left and right Cartan generators.
\begin{equation}
U(1)_Y \subset U(1)_{L8}\times U(1)_{R8}\times SU(2)_R,
\end{equation}
where $U(1)_{L8}$ and $U(1)_{R8}$ are the canonical $T^8$ Cartan directions of $SU(3)_L$ and $SU(3)_R$
after
\begin{equation}
SU(3)_L\to SU(2)_L\times U(1)_{L8},\qquad
SU(3)_R\to SU(2)_R\times U(1)_{R8}.
\end{equation}
Equivalently: the low-energy massless hypercharge gauge boson is a fixed linear combination of the
abelian gauge bosons descending from the left and right sectors, while orthogonal combination(s)
acquire a mass at the left--right/interface breaking scale.

\item[(H3)] \textbf{Matter ``sees'' the diagonal $U(1)$.}
The light fermions couple to this diagonal $U(1)_Y$ in a representation-independent manner.
Concretely, in the effective theory below the linking scale each SM multiplet carries definite charges
under the \emph{same} generator $Y$; one does \emph{not} rescale $Y$ differently for quark vs.\ lepton
multiplets.

\item[(H4)] \textbf{Fixed Cartan generator.}
The hypercharge generator is taken to be a single Cartan element
\begin{equation}
Y \;=\; \alpha\,T^8_L \; +\; \beta\,T^8_R \; +\; \gamma\,T^3_R,
\qquad (\alpha,\beta,\gamma)\ \text{fixed once and for all},
\label{short:eq:Y-fixed-linear-combination}
\end{equation}
with the coefficients chosen so that the induced $SU(2)_L\times U(1)_Y$ quantum numbers coincide with
Standard Model hypercharges.
Any earlier ``$1/(2N)$'' notation is to be understood only as a mnemonic for the resulting \emph{eigenvalues}
of this fixed generator on specific multiplets, not as the definition of the generator.

\item[(H5)] \textbf{Anomaly consistency / completeness.}
The fermion spectrum at the linking scale is assumed to be anomaly-consistent under the above $U(1)_Y$
(e.g. because the light spectrum descends from anomaly-free unified multiplets, with any additional
states required for anomaly cancellation being heavy/decoupled).
\end{enumerate}

\paragraph{Why the mass-ratio derivation is unaffected.}
The charged-fermion mass-ratio derivation in this paper depends on the Jordan spectral data and the
$SU(3)_F$ ladder structure \emph{within fixed electric-charge sectors}. Once the charge sectors
(up-type, down-type, charged-lepton) are identified, the ladder computation and Dynkin-swap relations
do not use the detailed model-building assumptions (H2)--(H5) about the UV embedding of $U(1)_Y$.
Therefore, clarifying the correct status of hypercharge does not change any of the mass-ratio results.

\subsection{Interpreting the ``$1/(2N)$'' notation as an eigenvalue mnemonic}
\label{short:app:2N_mnemonic}

In Ref.~\cite{short-Kaushik} the charges appearing in the branching tables are first computed as integer
$U(1)_{\gamma_1}$ eigenvalues using the convention of
Eqs.~\eqref{short:eq:3bar_3_decomp_gamma1}--\eqref{short:eq:8_decomp_gamma1}.
The paper then remarks:
\begin{quote}
``We note that for getting correct hypercharges we define $U(1)_Y=U(1)_{\gamma_1}/2N$ where
$N=3$ for an $SU(3)_c$ triplet, and $N=1$ for an $SU(3)_c$ singlet.''
\end{quote}
(see Ref.~\cite{short-Kaushik} immediately after their Eq.~(12)).

As stated above, this cannot be a literal definition of a gauge generator because it would make the
normalisation representation-dependent.
The consistent reading is instead:
\begin{itemize}
\item $U(1)_{\gamma_1}$ is the canonical abelian factor from $SU(3)_L\to SU(2)_L\times U(1)$, written in an
integer charge convention;
\item the physical hypercharge $Y$ is a \emph{single fixed generator} of the form
\eqref{short:eq:Y-fixed-linear-combination}, available only once both left and right Cartan directions are
present and a diagonal (linked) $U(1)$ is selected;
\item when one evaluates this fixed generator on the particular multiplets identified with SM quarks and
leptons, the resulting eigenvalues can be written compactly in the form $\gamma_1/(2N)$, with the
numerical value of $N$ determined by the multiplet's fixed charges under the other Cartan directions
entering \eqref{short:eq:Y-fixed-linear-combination}.
\end{itemize}

Concretely, the Standard Model assigns $Y(Q_L)=+1/6$ for the quark doublet and $Y(L_L)=-1/2$ for the
lepton doublet. In the integer convention of Ref.~\cite{short-Kaushik}, these are obtained from
$\gamma_1(Q_L)=+1$ and $\gamma_1(L_L)=-1$ by the numerical relations
\begin{equation}
Y(Q_L)=\frac{+1}{2\cdot 3},\qquad Y(L_L)=\frac{-1}{2\cdot 1}.
\end{equation}
The factor of $3$ reflects the familiar quark--lepton offset in hypercharge (equivalently, the fact that
$(B\! -\! L)(Q)=1/3$ while $(B\! -\! L)(L)=-1$ in left--right embeddings). The point of the present appendix is
that this factor must arise from a \emph{fixed} Cartan combination (H2)--(H4), rather than from a
representation-dependent rescaling of a single $SU(3)_L$ Cartan generator.

\section{Pedagogical derivation of the charged--lepton ladder and the ratios {$m_\tau/m_\mu$} and $m_\mu/m_e$}


\label{short:app:lepton-ratios}

This appendix gives a self-contained derivation of the charged--lepton ratios in four logically
separated steps. The only ``bridge'' between the Clifford-ideal grading and the $SU(3)$ root swap
is a \emph{single discrete outer automorphism} (the $E_6$ Dynkin-diagram $\mathbb Z_2$) whose
restrictions act simultaneously on (a) the post-breaking Abelian grading and (b) the residual family
$A_2\simeq su(3)_F$ ladder. The logic is:

\begin{enumerate}
\item[(i)] The electron and the down quark exchange the $\{1/3,1\}$ \emph{grading slots} when passing
from the LH electromagnetic $U(1)_{\rm em}$ grading to the RH ``dem'' $U(1)_{\rm dem}$ grading
(square-root-mass grading).

\item[(ii)] The theory contains a nontrivial $E_6$ Dynkin-diagram involution $\Phi$ (a $\mathbb Z_2$
outer automorphism). We take the RH embedding data to be related to the LH embedding data by $\Phi$.
Then \emph{the same} $\Phi$ has two unavoidable consequences:
\begin{itemize}
\item on the Abelian subalgebra it reproduces the slot interchange in (i);
\item on the residual family algebra $A_2\simeq su(3)_F$ it restricts to the unique $A_2$ Dynkin swap,
i.e.\ the exchange of the two simple roots. On the $Sym^3(3)$ weight triangle (Fig.~\ref{short:fig:sym3_triangle})
this becomes the reflection $b\leftrightarrow c$, and hence an interchange of the $SU(3)_F$ root directions.
\end{itemize}

\item[(iii)] The Dynkin swap reflects the $Sym^3(3)$ triangle and carries the \emph{down-family ladder}
(Fig.~\ref{short:fig:three_triangles_sec5}, left) to the \emph{charged--lepton ladder} (Fig.~\ref{short:fig:three_triangles_sec5}, right).
By edge universality, this forces
\(
\sqrt{m_\tau/m_\mu}=\sqrt{m_s/m_d}.
\)

\item[(iv)] The same reflection reverses the middle edge direction ($C\mapsto C^{-1}$), which introduces
an additional endpoint ``tilt'' factor in \(\sqrt{m_\mu/m_e}\).
\end{enumerate}

\subsection{Bookkeeping: which {$SU(3)$} and what ``roots'' mean here}
\label{short:app:which-su3}

There are multiple $SU(3)$ factors in the post-breaking trinification. In particular:
\begin{itemize}
\item $SU(3)_c$ (and on the RH side $SU(3)_{c'}$) acts \emph{within a single generation} on the color
triplet of quark states inside the 8-state Clifford minimal ideal.
\item $SU(3)_F$ (more precisely $SU(3)_{F,L}$ and $SU(3)_{F,R}$) acts \emph{on the three-generation label}
that emerges after triality breaking. It is this $SU(3)_F$ whose representation $Sym^3(3)$ is drawn as the
weight triangle in Fig.~\ref{short:fig:sym3_triangle}.
\end{itemize}
The ``root directions'' (and the Dynkin swap) discussed below refer \emph{only} to the family algebra
\(
A_2\simeq su(3)_F,
\)
not to QCD color. The Clifford minimal ideals provide the post-breaking Abelian gradings and the
one-generation charge assignments; the $Sym^3(3)$ ladder is a separate, purely representation-theoretic
object attached to $SU(3)_F$ and the ordered Jordan eigenvalue triple $(a,b,c)$.

\subsection{ The post--breaking {$1/3\leftrightarrow 1$} flip as a statement about Abelian grading}
\label{short:app:flip}

After triality/left--right breaking, the LH and RH Clifford ideals admit distinct physical number operators
$N_L$ and $N_R$ (built from their respective ladder operators). The two post-breaking Abelian generators are
\begin{equation}
Q_{\rm em}:=\frac{1}{3}N_L \qquad(\text{generator of }U(1)_{\rm em}),
\qquad
S_{\rm dem}:=\frac{1}{3}N_R \qquad(\text{generator of }U(1)_{\rm dem}).
\end{equation}
We interpret $S_{\rm dem}$ as the \emph{square-root-mass charge} (so $m\propto S_{\rm dem}^2$).

In the LH (charge) ideal the standard eigenvalue pattern is
\begin{equation}
Q_{\rm em}(\nu)=0,\qquad Q_{\rm em}(d)=\frac13,\qquad Q_{\rm em}(u)=\frac23,\qquad Q_{\rm em}(e)=1,
\label{short:eq:LH-charge-pattern}
\end{equation}
where the $1/3$ slot corresponds to the one-particle excitations and the $1$ slot corresponds to the
three-particle excitation.

In the RH (square-root-mass) ideal one has the \emph{same} occupation numbers but the physical
identification of the $N_R=1$ and $N_R=3$ states is exchanged:
\begin{equation}
S_{\rm dem}(\nu)=0,\qquad S_{\rm dem}(e)=\frac13,\qquad S_{\rm dem}(u)=\frac23,\qquad S_{\rm dem}(d)=1.
\label{short:eq:RH-dem-pattern}
\end{equation}
Comparing \eqref{short:eq:LH-charge-pattern} and \eqref{short:eq:RH-dem-pattern} gives the precise statement:
\begin{equation}
e \longleftrightarrow d
\qquad\Rightarrow\qquad
\left(\frac13 \leftrightarrow 1\right),
\label{short:eq:slot-flip}
\end{equation}
while the $\nu$ and $u$ slots remain fixed at $0$ and $2/3$ respectively. This is the post-breaking
``flip'' between the LH electric-charge grading and the RH square-root-mass grading.

\subsection{ Why the grading flip is \emph{accompanied} by a family-root swap: one {$E_6$} involution, two restrictions}
\label{short:app:dynkin}

The key point is that the $1/3\leftrightarrow 1$ flip does \emph{not} by itself ``come from'' roots of an
$SU(3)$. Rather, both phenomena are two faces of a \emph{single} discrete group-theoretic ingredient already
present in the $E_6$ framework:

\medskip
\noindent\textbf{Discrete input.}
The $E_6$ Dynkin diagram has a nontrivial $\mathbb Z_2$ symmetry, hence $E_6$ admits a nontrivial outer
automorphism. Denote this diagram involution by $\Phi$.
In the left--right framework we take the RH embedding data to be related to the LH embedding data by
this \emph{same} involution $\Phi$:
\begin{equation}
(\text{RH embedding}) \;=\; \Phi\bigl((\text{LH embedding})\bigr).
\label{short:eq:Phi-assumption}
\end{equation}
Then \eqref{short:eq:Phi-assumption} has two unavoidable consequences by restriction:

\paragraph{(a) Restriction to the Abelian grading.}
Restricting $\Phi$ to the post-breaking Abelian $U(1)$ generators maps the LH grading to the RH grading
in precisely the way required by the observed Clifford-slot interchange \eqref{short:eq:slot-flip}. In words:
\emph{the same discrete map that relates the two $E_6$ embeddings also relates the two post-breaking
Abelian gradings}, and its effect is exactly the $e\leftrightarrow d$ exchange of the $\{1/3,1\}$ slots.

\paragraph{(b) Restriction to the residual family algebra $A_2\simeq su(3)_F$.}
After triality breaking, the mass-ratio ladder uses only the residual family algebra
\(
A_2 \simeq su(3)_F \subset E_6.
\)
Restricting the same $\Phi$ to this $A_2$ yields the unique nontrivial Dynkin-diagram automorphism of $A_2$:
\begin{equation}
\mathrm{Out}(A_2)\cong\mathbb Z_2,
\end{equation}
which exchanges the two simple roots. In a Chevalley basis $\{H_i,E_{\pm\alpha_i}\}$ ($i=1,2$) this can be written as
\begin{equation}
\Phi(\alpha_1)=\alpha_2,\quad \Phi(\alpha_2)=\alpha_1,
\qquad
\Phi(H_1)=H_2,\quad \Phi(H_2)=H_1,
\qquad
\Phi(E_{\pm\alpha_1})=E_{\pm\alpha_2},\quad \Phi(E_{\pm\alpha_2})=E_{\pm\alpha_1}.
\label{short:eq:A2-dynkin-swap}
\end{equation}
Thus, \emph{once} \eqref{short:eq:Phi-assumption} is made, the interchange of $SU(3)_F$ root directions is no longer
optional: it is the forced $A_2$ restriction of the same discrete $\Phi$ that already accounts for the grading flip.

\paragraph{Realisation on the {$Sym^3(3)$} weight triangle.}
We label weights of $Sym^3(3)$ by degree-three monomials $a^p b^q c^r$ with $p+q+r=3$ (Fig.~\ref{short:fig:sym3_triangle}).
The $A_2$ Dynkin swap \eqref{short:eq:A2-dynkin-swap} is realised as the reflection
\begin{equation}
S:\ a^p b^q c^r \ \mapsto\ a^p b^r c^q
\qquad (b\leftrightarrow c,\ a\ \text{fixed}).
\label{short:eq:S-action}
\end{equation}
If $E,B,C$ denote the three directed edge moves on the triangle (equivalently, the three $SU(2)$ root-direction
subalgebras inside $SU(3)_F$), then conjugation by $S$ exchanges the two ``outer'' directions and reverses the middle one:
\begin{equation}
S E S^{-1}=B,\qquad
S B S^{-1}=E,\qquad
S C S^{-1}=C^{-1}.
\label{short:eq:edge-conj}
\end{equation}
This is the precise sense in which the Clifford grading flip is \emph{accompanied} by an interchange of
family-root directions: both are restrictions of the same $E_6$ involution~$\Phi$.

\subsection{Edge universality in {$Sym^3(3)$} and adjacent contrasts}
\label{short:app:edge-universality2}

Let $(a_F,b_F,c_F)$ be the ordered Jordan eigenvalues in a given charged sector $F$:
\begin{equation}
(a_F,b_F,c_F)=(s_F-\delta,\ s_F,\ s_F+\delta),
\qquad a_F<b_F<c_F.
\end{equation}
Weights in $Sym^3(3)$ are labelled by monomials $a_F^p b_F^q c_F^r$ with $p+q+r=3$.
After fixing one global normalisation at the central weight $abc$, adjacent square-root-mass ratios along the
three edge directions depend only on the corresponding \emph{edge contrasts}:
\begin{equation}
E:\ \sqrt{\frac{m_2}{m_1}}=\left|\frac{c_F}{a_F}\right|,
\qquad
B:\ \sqrt{\frac{m_2}{m_1}}=\left|\frac{b_F}{a_F}\right|,
\qquad
C:\ \sqrt{\frac{m_2}{m_1}}=\left|\frac{c_F}{b_F}\right|.
\label{short:eq:edge-contrasts}
\end{equation}
This ``edge universality'' is representation-theoretic: Clebsch factors are the same along each edge and
cancel in these adjacent ratios once the $abc$ normalisation is fixed.

\subsection{ Dynkin-reflected ladders force $\sqrt{m_\tau/m_\mu}=\sqrt{m_s/m_d}$}
\label{short:app:tau-mu}

For the down sector we take $s_d=1$, hence
\begin{equation}
(a_d,b_d,c_d)=(1-\delta,\ 1,\ 1+\delta).
\end{equation}
The minimal down-family ladder shown in Fig.~\ref{short:fig:three_triangles_sec5} (left triangle) is
\begin{equation}
d:\ a^2 b\ \xrightarrow{E}\ s:\ abc\ \xrightarrow{C}\ a c^2\ \xrightarrow{E}\ b:\ c^3 .
\label{short:eq:down-ladder}
\end{equation}
Therefore the first nontrivial down ratio is the $E$-contrast:
\begin{equation}
\sqrt{\frac{m_s}{m_d}}=\left|\frac{c_d}{a_d}\right|
=\frac{1+\delta}{1-\delta}.
\label{short:eq:s-over-d}
\end{equation}

Now apply the Dynkin reflection $S$ of \eqref{short:eq:S-action}--\eqref{short:eq:edge-conj}. It reflects the $Sym^3(3)$
triangle and carries the down ladder \eqref{short:eq:down-ladder} to the charged-lepton ladder
(Fig.~\ref{short:fig:three_triangles_sec5}, right triangle):
\begin{equation}
e:\ a^2 c\ \xrightarrow{S E S^{-1}=B}\ \mu:\ abc\ \xrightarrow{S C S^{-1}=C^{-1}}\ a b^2\
\xrightarrow{S E S^{-1}=B}\ \tau:\ b^3 .
\label{short:eq:lepton-ladder}
\end{equation}
Because the lepton ladder is \emph{not} an independent choice (it is the $\Phi$--induced, $A_2$--restricted
Dynkin image of the down ladder), the outer-root contrast controlling the heavy lepton splitting is the
same contrast that controlled $s/d$:
\begin{equation}
\boxed{\ \sqrt{\frac{m_\tau}{m_\mu}}=\sqrt{\frac{m_s}{m_d}}=\left|\frac{c_d}{a_d}\right|
=\frac{1+\delta}{1-\delta}\ }.
\label{short:eq:tau-over-mu}
\end{equation}

\subsection{ Why {$\sqrt{m_\mu/m_e}$} acquires an extra endpoint factor}
\label{short:app:mu-e}

While $S$ exchanges the two outer root directions ($E\leftrightarrow B$), it reverses the middle direction
($C\mapsto C^{-1}$), cf.\ \eqref{short:eq:edge-conj}. This reversal is exactly why the $e\leftrightarrow \mu$ step
is the one that carries an extra local normalisation factor.

For charged leptons we take $s_\ell=\tfrac13$ and hence
\begin{equation}
(a_\ell,b_\ell,c_\ell)=\Bigl(\frac13-\delta,\ \frac13,\ \frac13+\delta\Bigr).
\end{equation}
Define the positive endpoint ratio
\begin{equation}
G:=\left|\frac{c_\ell}{a_\ell}\right|
=\left|\frac{s_\ell+\delta}{s_\ell-\delta}\right|
=\frac{\delta+\tfrac13}{\delta-\tfrac13},
\label{short:eq:G-def}
\end{equation}
(where $\delta>\tfrac13$). This $G$ is the ``tilt'' factor induced by the inverted middle edge when the
$abc$ normalisation is transported toward the electron corner on the Dynkin-reflected ladder.

Combining \eqref{short:eq:tau-over-mu} with \eqref{short:eq:G-def} yields
\begin{equation}
\boxed{\ \sqrt{\frac{m_\mu}{m_e}}
=\sqrt{\frac{m_\tau}{m_\mu}}\;G
=\frac{1+\delta}{1-\delta}\cdot\frac{\delta+\tfrac13}{\delta-\tfrac13}\ }.
\label{short:eq:mu-over-e}
\end{equation}

\noindent
Equations \eqref{short:eq:tau-over-mu} and \eqref{short:eq:mu-over-e} are thus forced by:
(i) the post-breaking Abelian grading flip in the Clifford ideals,
(ii) the assumption that LH and RH embeddings are related by the unique $E_6$ diagram involution $\Phi$,
and (iii) the representation-theoretic edge-universality of the $SU(3)_F$ ladder on $Sym^3(3)$.


%
%

\section{Phenomenology: apples-to-apples tests of the ladder ratios}
\label{short:app:phenomenology}

This appendix explains (1) what the closed-form ratios in Table~1 are claiming,
(2) what constitutes an \emph{apples-to-apples} comparison to data at an electroweak (EW) scale,
and (3) what the cleanest falsifiable prediction is at present.

\subsection{What is being compared}
\label{short:app:phenom:what}

The outputs of the ladder construction are \emph{dimensionless} ratios of the form
\begin{equation}
\rho_{ij}(\mu)\;:=\;\sqrt{\frac{m_i(\mu)}{m_j(\mu)}}\,,
\label{short:eq:rho-def}
\end{equation}
evaluated at a \emph{common} renormalization scale \(\mu\) in a \emph{common} scheme
(conventionally \(\overline{\mathrm{MS}}\) for quarks and charged leptons).
This “common-\(\mu\)” requirement matters: both quark and lepton masses run, and ratios such as
\(\sqrt{m_s/m_d}\) or \(\sqrt{m_t/m_c}\) are numerically scale dependent.

Accordingly, the “Experimental @ \(M_Z\)” column in Table~1 should be read as:
\begin{quote}
\emph{the ratios \(\rho_{ij}(\mu)\) formed from running masses at \(\mu=M_Z\) (with quoted uncertainties
coming from the extraction of those running masses).}
\end{quote}
The “RG range” entries in Table~1 are included only to indicate typical variation of selected ratios
under conventional RG evolution across a reasonable EW window; they are \emph{not} additional fit
parameters.

\subsection{Closed-form targets from the ladder construction}
\label{short:app:phenom:targets}

In the canonical normalization used throughout this paper one has
\begin{equation}
\delta^2=\frac{3}{8}\,,\qquad \delta=\sqrt{\frac{3}{8}}\approx 0.6124\,.
\label{short:eq:delta-def}
\end{equation}
It is convenient to define the three basic “contrast factors”
\begin{equation}
X(\delta):=\frac{1+\delta}{1-\delta}\,,
\qquad
Y(\delta):=\frac{\frac{2}{3}+\delta}{\frac{2}{3}-\delta}\,,
\qquad
G(\delta):=\frac{\delta+\frac{1}{3}}{\delta-\frac{1}{3}}\,.
\label{short:eq:XYG-def}
\end{equation}
Then the eight closed-form predictions summarised in Table~1 are:
\begin{align}
\sqrt{\frac{m_u}{m_e}}&=2,
&
\sqrt{\frac{m_d}{m_e}}&=3,
\label{short:eq:firstgen-23}
\\[4pt]
\sqrt{\frac{m_s}{m_d}}&=X(\delta),
&
\sqrt{\frac{m_b}{m_s}}&=X(\delta)\,(1+\delta),
\label{short:eq:down-sector}
\\[4pt]
\sqrt{\frac{m_\tau}{m_\mu}}&=X(\delta),
&
\sqrt{\frac{m_\mu}{m_e}}&=X(\delta)\,G(\delta),
\label{short:eq:lepton-sector}
\\[4pt]
\sqrt{\frac{m_c}{m_u}}&=Y(\delta),
&
\sqrt{\frac{m_t}{m_c}}&=\frac{\frac{2}{3}}{\frac{2}{3}-\delta}.
\label{short:eq:up-sector}
\end{align}
Numerically,
\begin{equation}
X(\delta)=\frac{1+\delta}{1-\delta}=4.15959\ldots
\label{short:eq:X-numeric}
\end{equation}
is the single value that appears (via the Dynkin-swap/minimal-ladder mechanism) in both the down and
charged-lepton sectors.

\subsection{The cleanest cross-sector test: the Dynkin-swap equality}
\label{short:app:phenom:dynkin-swap-test}

The most rigid and model-distinctive statement is not an isolated numerical coincidence but the
\emph{cross-sector equality}
\begin{equation}
\boxed{\;
\sqrt{\frac{m_\tau(\mu)}{m_\mu(\mu)}}\;=\;
\sqrt{\frac{m_s(\mu)}{m_d(\mu)}}\;=\;X(\delta)\,,
\qquad
\delta^2=\frac{3}{8}\;.
}
\label{short:eq:dynkin-swap-equality}
\end{equation}
This is the phenomenological footprint of the Dynkin-swap mapping between the down and lepton ladders
(derived representation-theoretically in Appendix~C).

A convenient “apples-to-apples” diagnostic is the ratio-of-ratios
\begin{equation}
\mathcal{R}(\mu)
\;:=\;
\frac{\sqrt{m_\tau(\mu)/m_\mu(\mu)}}{\sqrt{m_s(\mu)/m_d(\mu)}}\,.
\label{short:eq:R-def}
\end{equation}
In the present framework,
\begin{equation}
\mathcal{R}_{\mathrm{th}}(\mu)=1
\quad\text{(independent of \(\mu\), at the level of the ladder prediction).}
\label{short:eq:Rth}
\end{equation}

\paragraph{Numerical status at \(\mu=M_Z\) (Table~1).} The quark running masses at $M_Z$ are obtained by evolving lattice-QCD determinations of Ref.~\cite{short-Aoki2022} (FLAG 2021 averages) to the $Z$-boson mass using the running equations in Ref.~\cite{short-PDG2024}; lepton masses are evolved from PDG pole values using standard QED running codes.
Finite electroweak corrections (beyond QED photon loops) are included in the pole-to-$\overline{\mathrm{MS}}$ conversion and contribute $\sim 0.5$--$1\%$ to the $\tau/\mu$ ratio; their effect on the numerical comparisons here is negligible relative to quark-mass uncertainties.

Using the \emph{Table~1} EW-scale inputs,
\begin{equation}
\sqrt{\frac{m_\tau(M_Z)}{m_\mu(M_Z)}}\Big|_{\rm exp}=4.11930\pm 0.00006,
\qquad
\sqrt{\frac{m_s(M_Z)}{m_d(M_Z)}}\Big|_{\rm exp}=4.46\pm 0.25,
\label{short:eq:inputs-MZ}
\end{equation}
so that
\begin{equation}
\mathcal{R}(M_Z)\Big|_{\rm exp}
=
\frac{4.11930}{4.46}
=
0.9239
\pm 0.0518,
\label{short:eq:R-MZ}
\end{equation}
where the uncertainty is dominated by the light-quark ratio.
Equivalently, the relative mismatch of the two sides of \eqref{short:eq:dynkin-swap-equality} at \(M_Z\) is
\begin{equation}
\Delta_{\rm rel}(M_Z)
:=
\left|
\frac{\sqrt{m_\tau/m_\mu}-\sqrt{m_s/m_d}}{\sqrt{m_s/m_d}}
\right|_{\,\mu=M_Z}
\approx 7.6\% \quad\text{(with a current \(\sim\!5\%\) uncertainty from \(\sqrt{m_s/m_d}\)).}
\label{short:eq:DeltaRel-MZ}
\end{equation}

\paragraph{Scale robustness (optional check).}
A potential concern is whether \(\mathcal{R}(\mu)\) can be brought closer to \(1\) by shifting the
comparison scale within a reasonable EW window. In the longer companion analysis (LP), the same
ratio-of-ratios is evaluated after RG-evolving the running masses to a second reference point
\(\mu=m_c\), yielding a very similar value (numerically \(\mathcal{R}(m_c)\approx 0.922\) using the same
input set).
Thus, within conventional RG evolution between \(m_c\) and \(M_Z\), the deviation of \(\mathcal{R}\) from \(1\)
is \emph{not} primarily a scale-choice artifact; it is a genuine quantitative target for future refinement.

\paragraph{Interpretation.}
Because the leptonic ratio \(\sqrt{m_\tau/m_\mu}\) is extremely well determined, the present test of
\eqref{short:eq:dynkin-swap-equality} is limited almost entirely by the uncertainty in the EW-scale
light-quark ratio \(\sqrt{m_s/m_d}\) (and by any additional theoretical systematics in mapping the present
“proto--\(\sqrt{m}\)” outputs to standard \(\overline{\rm MS}\) running masses, as emphasised in Sec.~5.7).
For this reason, \eqref{short:eq:dynkin-swap-equality} should be viewed as a \emph{sharp, falsifiable target}:
improved lattice/phenomenological determinations of \(m_s/m_d\) at a common EW scale can strengthen
(or exclude) the equality decisively.

\subsection{How to read the remaining entries in Table~1}
\label{short:app:phenom:table1-reading}

Beyond the Dynkin-swap equality, the remaining six ratios in Table~1 serve a different purpose:
they test whether the minimal-ladder construction reproduces the \emph{hierarchy pattern} across
sectors in a \emph{parameter-free} way.

Two practical points are important when interpreting the “within-error-bars” question:

\begin{enumerate}
\item \textbf{Leptonic error bars at EW scales are tiny.}
Any percent-level deviation will automatically lie far outside quoted experimental uncertainties for
charged leptons. This is why the paper (Sec.~5.7 and the  abstract) does \emph{not} claim agreement
within the leptonic \(1\sigma\) errors, but rather a semi-quantitative match at the level expected prior
to a dedicated matching calculation.

\item \textbf{Some ratios are intrinsically more scheme/threshold sensitive.}
Ratios involving very heavy quarks (especially \(t\)) depend on the convention used for converting pole
masses to \(\overline{\rm MS}\) masses and on threshold choices. The “RG range” displayed for
\(\sqrt{m_t/m_c}\) in Table~1 is intended to highlight that this ratio has a substantial scale dependence
even within an EW window.
\end{enumerate}

In short: Table~1 should be read as a leading-order, scheme-aware snapshot of a parameter-free pattern,
with \eqref{short:eq:dynkin-swap-equality} providing the cleanest cross-sector falsifiable statement.

\subsection{Dominant uncertainties and a practical falsifiability criterion}
\label{short:app:phenom:uncertainty}

The limiting uncertainty for testing \eqref{short:eq:dynkin-swap-equality} is currently the EW-scale extraction
of \(m_s/m_d\), rather than the charged-lepton inputs.
A practical falsifiability criterion is therefore to monitor \(\Delta_{\rm rel}(\mu)\) defined by
\eqref{short:eq:DeltaRel-MZ} (or equivalently \(\mathcal{R}(\mu)\) in \eqref{short:eq:R-def}) as lattice and
electroweak-matching systematics improve.

A conservative near-term benchmark is:
\begin{quote}
\emph{If future determinations at a common EW scale achieve a few-percent uncertainty on
\(\sqrt{m_s/m_d}\) while \(\Delta_{\rm rel}\) remains at the \(\sim 8\%\) level, the equality
\eqref{short:eq:dynkin-swap-equality} becomes disfavored. Conversely, if \(\Delta_{\rm rel}\) moves into the
\(\lesssim 5\%\) band with improved inputs, that would be strong support for the Dynkin-swap/minimal-ladder
mechanism.}
\end{quote}

\subsection{Relation to the longer companion paper}
\label{short:app:phenom:LP}

The present short paper isolates the representation-theoretic ladder mechanism and its immediate mass-ratio
outputs. A broader phenomenological discussion (including additional EW-scale checks and extended model
context) is provided in the longer companion manuscript (LP). The purpose of the present appendix is only
to make the Table~1 comparison protocol explicit and to highlight the most rigid test
\eqref{short:eq:dynkin-swap-equality}.


\makeatletter
\let\merged@origlabel\label
\def\label#1{%
  \def\merged@tmpa{#1}\def\merged@tmpb{LastBibItem}%
  \ifx\merged@tmpa\merged@tmpb
    \merged@origlabel{short:LastBibItem}%
  \else
    \merged@origlabel{#1}%
  \fi}
\makeatother

\endgroup

\clearpage

\setcounter{page}{1}
\setcounter{section}{0}
\setcounter{subsection}{0}
\setcounter{subsubsection}{0}
\setcounter{equation}{0}
\setcounter{figure}{0}
\setcounter{table}{0}
\setcounter{footnote}{0}
\pagenumbering{arabic}
\renewcommand{\theHsection}{long.\thesection}
\renewcommand{\theHsubsection}{long.\thesubsection}
\renewcommand{\theHsubsubsection}{long.\thesubsubsection}
\renewcommand{\theHequation}{long.\thesection.\arabic{equation}}
\renewcommand{\theHfigure}{long.\arabic{figure}}
\renewcommand{\theHtable}{long.\arabic{table}}

\begin{center}
{ \large \bf Fermion mass ratios from  the exceptional Jordan algebra}
\smallskip


{\large{\bf Tejinder P.  Singh$^{}$ }}


{\it Tata Institute of Fundamental Research,}\\
{\it Homi Bhabha Road, Mumbai 400005, India}\\


 {\tt Email:  tpsingh@tifr.res.in}

\end{center}

\centerline{\bf ABSTRACT}
\noindent The origin of the three fermion generations and their highly hierarchical mass spectra remains one of the most profound puzzles in particle physics. We show that the complexified exceptional Jordan algebra \(J_{3}(\mathbb{O}_{\mathbb{C}})\), the natural mathematical framework for the exceptional Lie group \(E_{6}\), provides a unified explanation for both. The three generations arise from the three off-diagonal Peirce slots of \(J_{3}(\mathbb{O}_{\mathbb{C}})\), each carrying an isomorphic $Cl(6,\mathbb C)$ minimal-ideal fiber and permuted cyclically by triality $S_3\subset\mathrm{Out}(\mathrm{Spin}(8))$; pre-breaking, the three families are identical by symmetry.  After triality breaking the residual $SU(3)_F$ flavor symmetry organises the three generations of each family as a $\mathrm{Sym}^{3}(\mathbf{3})$ multiplet, the minimal $S_3$-symmetric degree-3 arena consistent with the cubic structure of the Jordan determinant and the unique $E_6$-invariant Yukawa.

The mass-ratio formula follows from a one-line diagonal-action theorem: when $\langle X\rangle$ is Jordan-diagonalised to $\mathrm{diag}(a,b,c)$, the induced action $X^{\odot 3}$ on the $\mathrm{Sym}^{3}(\mathbf{3})$ monomial basis is diagonal with eigenvalues $a^pb^qc^r$, so a fermion identified with the weight state $|p,q,r\rangle$ has $\sqrt m\propto a^pb^qc^r$.  For an adjacent weight move, the square-root mass ratio depends only on the edge type ($c/a$, $b/a$, or $c/b$).  We refer to this adjacent-weight statement as \emph{edge universality}; it is monomial arithmetic of the diagonal mass operator, not a Clebsch--Gordan cancellation.  Physical generation assignments may skip an unobserved intermediate rung; in that case the physical ratio is the product of the elementary monomial ratios along the chosen path.  On the coassociative slice of $J_3(\mathbb O_\mathbb C)$ the Jordan cubic gives a symmetric spectrum $(q-\delta,q,q+\delta)$, with $\delta^2$ equal to the off-diagonal norm sum; with the Majorana-vacuum normalization used here this gives $\delta^{2}=3/8$.  The three observed generations of each family are assigned to three weights of the $\mathrm{Sym}^3$ triangle by physical postulates (adjacency, light-end and heavy-end placement, largest-contrast first leg); these postulates fix the down-family path $a^2b\to abc\to ac^2\to c^3$ and the up-family path $a^2b\to abc\to b^2c$, with the charged-lepton chain following from the down chain by the Dynkin $\mathbb Z_2$ swap of $E_6$ (predicting the cross-sector relation $\sqrt{m_{\tau}/m_{\mu}} = \sqrt{m_{s}/m_{d}}$).  The first-generation mass scale is set by a trace split $\mathrm{Tr}\,X_{\ell}:\mathrm{Tr}\,X_{u}:\mathrm{Tr}\,X_{d}=1:2:3$, giving $\sqrt{m_{e}}:\sqrt{m_{u}}:\sqrt{m_{d}}=1:2:3$.

The same framework extends to quark mixing. The CKM matrix is modeled by overlaps of ladder states, with a geometrically fixed Cabibbo phase $\phi_{12}=\pi/2$. With two phenomenological knobs---a small complex phase tilt ($\varepsilon \approx -26.1^{\circ}$) and a cross-family normalization ($\kappa_{23} \approx 0.55$)---the framework reproduces the leading observed magnitudes of $|V_{us}|$, $|V_{cb}|$, and $|V_{ub}|$, and exhibits a \emph{one-parameter structural correlation} (not a parameter-free prediction) between $|V_{us}|$ and the CKM CP phase: the same $\varepsilon$ that corrects $|V_{us}|$ also fixes $|\delta_{CP}^{\rm quark}|=\pi/2+\varepsilon\simeq 64^\circ$ within the minimal adjacent-edge ladder.  The structure of the ladder further implies that neutrinos are Majorana within this framework.  With the framework's minimal structural inputs --- a real-symmetric neutrino (Weinberg) texture and a single complex charged-lepton rung that reduces to a purely \emph{diagonal} phase $U_\ell=\mathrm{diag}(1,i,1)$ --- the leptonic mixing matrix $U_{\rm PMNS}=U_\ell^\dagger U_\nu$ carries that phase as a one-sided diagonal factor on a real orthogonal $U_\nu$.  Such a factor is removable by charged-lepton rephasing and contributes nothing to the rephasing-invariant Jarlskog: $J_\ell=0$ identically.  Thus the leading-order minimal construction is Dirac/Jarlskog CP-conserving ($\delta_{\rm CP}^{\ell}\in\{0,\pi\}$).  A nonzero leptonic Dirac phase would require additional non-removable complex structure beyond this minimal transport class; possible Majorana CP parities/phases are a separate issue not fixed by this Dirac-Jarlskog statement (Sec.~\ref{sec:neutrino-sector-PMNS}, subsection~C).  This conclusion is established as a transport-level theorem with an exactly characterized boundary: every lepton flavor-transport amplitude is exactly real under any $G_2$ automorphism and any rotor --- global or channel-wise --- that does not mix the identity line $\mathbb C\cdot1$ with the lepton flavor plane $\mathrm{span}(e_7,e_5,e_2)$, a class that contains the entire Cabibbo-rung family; identity--flavor mixing across that plane is the unique possible source of a leptonic phase, so $J_\ell=0$ is forced unless the Higgs-bridge operator $B_H$ itself supplies such mixing.  Leptonic Dirac CP conservation, $\delta_{\rm CP}^{\ell}\in\{0,\pi\}$, is thereby a sharp conditional prediction, falsifiable by DUNE and Hyper-Kamiokande.  The residual flavor symmetry is realized concretely: an explicit order-three octonion automorphism $\Gamma\in\mathrm{Stab}_{G_2}(e_1)\cong SU(3)_F$ generates the neutrino and positron generation triples as exact orbits --- the lepton generation map is literal real flavor transport, inside the non-mixing class above --- while a conserved-invariant no-go shows that the colored (quark) generation triples cannot be $SU(3)_F$-images of one another within the single-octonion encoding, locating the quark family label at the level of the three Peirce fibers (Sec.~\ref{sec:flavor-su3}).  The companion Letter's exact rung law $\varphi_{12}=-2\chi$ closes the kinematic half of the CKM-phase question (open issue~(iv) of Sec.~XVII): the geometric rotor is amplitude-equivalent to the quadrature-balanced coupling $\chi=-\pi/4$, and only the dynamical value of this single orientation angle remains open.  The CP data comparison uses PDG~2024 ($|\delta_{CP}^{\rm quark}|_{\rm fwk}\simeq63.9^\circ$ vs $65.7^\circ\pm1.5^\circ$, a $1.2\sigma$ agreement), and the loophole criterion is stated in its exact two-term form.  In the quark sector, by contrast, the phase enters as a genuine relative phase between two interfering ladder amplitudes and the Jarlskog is nonzero (Sec.~\ref{sec:ckm}).  Charged-sector mass ratios are compared to experimental data at declared renormalization scales and should be read as matching targets for a future model-to-$\overline{\rm MS}$ computation, not as claims of agreement at the level of experimental uncertainties.  The structural inputs of the framework are exhibited explicitly --- the $\mathrm{Sym}^3(\mathbf 3)$ arena modulo a minimality assumption, the down-family chain assignment, the up-family chain assignment, the trace split, the discrete L-R involution via the $E_6$ outer automorphism, and the two CKM phenomenological knobs $(\varepsilon,\kappa_{23})$ --- and the remaining results follow by conditional derivation from these inputs.

\noindent 
\noindent 

\clearpage
\tableofcontents 
\clearpage
\section{Introduction}

Why there are exactly three fermion generations, and why their masses follow the striking
hierarchies seen in nature, are among the oldest open problems in particle physics. Gauge
symmetries of the Standard Model (SM) fix electric charges generation-independently, yet
the Yukawa sector appears to contain many a priori free parameters. A structural explanation
for (i) the replication into three families and (ii) the specific pattern of \emph{square-root} mass
ratios has remained elusive.

A converging line of ideas ties family replication to octonionic geometry, the exceptional
Jordan algebra, and triality. Foundational work on division algebras, Jordan algebras, and
exceptional groups highlighted their structural relevance to particle physics and unification
\cite{Ramond1976, Dixon2013,GunaydinGursey1973Octonions}. 
Ramond's classic
perspective on exceptional groups and triality in model building underscored why $E_6$-based
settings are natural arenas to explore generation structure and Yukawa textures
\cite{Ramond1976}.

Important expositions on octonions, triality, and the projective
geometry behind the embeddings $G_2\subset\mathrm{Spin}(8)\subset F_4\subset E_6$ provide the
mathematical setting we employ \cite{Baez2002Octonions,DrayManogue1999EJEP, DrayManogue2010CayleyE6,DuboisVioletteTodorov2019,TodorovDuboisViolette2018, TodorovDuboisViolette2018}. Within this programme of algebraic unification, Furey showed
how Standard Model quantum numbers can be organised using Clifford/octonionic methods and how an
$SU(3)$ action preserving electric charge arises intrinsically
\cite{Furey2015Charge,Furey2016Thesis,Furey2018Ladder,Furey2018ThreeGen, Furey_2025, furey2025superalgebrawithinrepresentationslightest}. Parallel lines of
research by Gresnigt developed octonionic/Clifford and braid-theoretic constructions that
reproduce key charge assignments and illuminate the role of $SU(3)$ structures in fermion
organisation and family replication \cite{Gresnigt2018Braids, Gourlay:2024iuq, GresnigtGourlay2024Cl8S3JPCS, GresnigtGourlayVarma2023TowardSedenions,GresnigtGourlay2024Cl8S3JPCS, GresnigtGourlay2024Cl8S3JPCS, GresnigtGourlayVarma2023TowardSedenions, GourlayGresnigt2024Cl8S3} . More recently, Quinta has advanced
 Clifford-guided embeddings of Standard Model multiplets and triality-symmetric
textures in contexts closely related to $E_6$ and its subgroups \cite{quinta2025}.
Lisi presents an explicit, pedagogical tour of the links among division and split composition algebras, triality, Clifford/spinor structures, the exceptional magic square Lie algebras, and “Exceptional Unification” applications to particle physics \cite{Lisi2025DivisionAlgebras}.
There have been further related recent works on applications of $E_6, E_7$ and $E_8$. Dray, Manogue, and Wilson have advanced an octonionic approach to exceptional symmetries: in \emph{Octions} they present an \(E_8\)-based, octonion-driven organization of Standard Model matter, showing how division-algebra structure can encode particle quantum numbers \cite{DrayManogueWilson2022Octions}. They then construct \(E_8\) explicitly from octonions and relate it to the Tits-Freudenthal magic square, clarifying the algebraic scaffolding behind exceptional groups \cite{DrayManogueWilson2023OctonionicE8MagicSquare}. Subsequently, they give a concrete division-algebra (matrix) realization embedding \(E_6\) inside \(E_8\), making Albert-algebra operations manifest in \(\mathfrak{e}_8\) \cite{DrayManogueWilson2024E6fromE8}. In a companion work, they also realize \(E_7\) (and its minimal/Freudenthal representation) within \(\mathfrak{e}_8\) using the same division-algebra toolkit \cite{DrayManogueWilson2024E7fromE8}. See also the works of Boyle \cite{Boyle2020, Boyle2020EJA}, Stoica \cite{Stoica}, Pavsic \cite{Pavsic1, Pavsic2}, Trayling and Baylis \cite{Trayling}, Lasenby \cite{Lasenby} and Chester et al. \cite{Chester1}. For a geometric realization of the Spin(8) triality $S_3$
 as flows on 
$\mathbb O$ and triality-invariant metric structure, see Antón-Sancho (2025) \cite{AntonSancho2025Triality}.

These works collectively motivate our use of the exceptional Jordan algebra
$J_3(\mathbb{O}_{\mathbb{C}})$ and octonionic triality geometry as the organising principle behind
both family replication and the specific pattern of square-root mass ratios.

\paragraph{Previous mass-ratio work (\cite{Singh2022IRAlphaMassRatios, BhattEtAl2022MajoranaEJA, Singh2022WhyStrangeMassRatios})}
A concrete link between charged-fermion spectra and the exceptional Jordan algebra was
proposed: in each charged sector the right-handed
flavor matrix $X\in J_3(\mathbb{O}_{\mathbb{C}})$ has universal Jordan eigenvalues
\(\{\,s-\delta,\,s,\,s+\delta\,\}\) with \(\delta^2=3/8\), and the family centre \(s\) fixed by the trace choice
(leptons \(s=\tfrac{1}{3}\), up \(s=\tfrac{2}{3}\), down \(s=1\)). Here $s$ is square root of mass, $s\equiv \sqrt{m}$.
From these inputs, closed-form \emph{square-root} mass ratios across families were written down
and shown to match data to high accuracy. What was not yet explained is \emph{why} the observed
ratios are not simply raw eigenvalue ratios, how to generate all three generations uniformly
from a single representation with minimal assumptions, and how CKM phases are fixed.

\paragraph{This work: one ladder for all ratios, and geometric CKM phases.}
We show that a single, representation-theoretic mechanism in
\(\mathrm{Sym}^3(\mathbf{3})\) of \(SU(3)\) generates all \emph{charged-fermion} square-root mass ratios
from one three-corner chain per family, together with the Dynkin $\mathbb{Z}_2$ swap that maps
down-sector edges to lepton edges.  The $\mathrm{Sym}^3(\mathbf 3)$ arena is identified by pre-breaking
triality covariance and the cubic structure of $J_3(\mathbb O_\mathbb C)$, modulo one minimality
assumption (Sym$^3$ over higher Sym$^{3k}$).  Once a family is identified with three weights of the
$\mathrm{Sym}^3$ triangle, the mass-ratio formula follows from a one-line diagonal-action theorem:
the induced action of $\langle X\rangle=\mathrm{diag}(a,b,c)$ on the monomial basis $|p,q,r\rangle$ is
diagonal with eigenvalues $a^p b^q c^r$, so adjacent generations related by an edge move
have $\sqrt m$-ratios that are pure edge contrasts ($c/a$, $b/a$, $c/b$).  We call this
\emph{edge universality}; it is monomial arithmetic on the diagonal mass operator, not a cancellation
of SU(3) Clebsch--Gordan coefficients (which do not enter, because the mass operator is diagonal in
the chosen basis).  The construction uses the algebraically derived spread \(\delta=\sqrt{3/8}\)
from $J_3(\mathbb O_\mathbb C)$ and reproduces the Singh-style closed forms, clarifies the special role
of the Dynkin swap, and explains the lightest-generation relation
\(\sqrt{m_e}:\sqrt{m_u}:\sqrt{m_d}=1:2:3\) as a direct consequence of the
trace split \(\mathrm{Tr}\,X_\ell:\mathrm{Tr}\,X_u:\mathrm{Tr}\,X_d=1:2:3\).

Fixing the Fano orientation and a common complex line \(\mathbb{C}e_1\), and writing the
\emph{explicit} right-handed endpoints and left-handed corners, renders CKM phases as
rephasing-invariant inner-product phases determined by overlaps under left-handed
intertwiners (rotors). In this geometry the Cabibbo phase is \(\phi_{12}=\pi/2\). A single, natural
up-leg \(e_1\)-tilt by \(\varepsilon=-26.123^\circ\) brings \(|V_{us}|\) to its observed value without modifying
any magnitudes; the \(2\text{-}3\) mixing needs an order-one cross-family normalisation
\(\kappa_{23}\simeq0.55\) to fit \(|V_{cb}|\). With those minimal, observable knobs
\((\varepsilon,\kappa_{23})\) fixed, the small elements follow at leading order, e.g.
\(|V_{ub}|\simeq \sqrt{m_u/m_t}\) and \(|V_{td}|/|V_{ts}|\) is predicted with no further freedom.
(Throughout we compare at a common scheme/scale using PDG inputs \cite{PDG2024}.)

\paragraph{Triality breaking and Koide.}
Before triality breaking the centred lepton triplet yields the exact Koide value \(K=2/3\).
After electroweak/triality breaking, the charged-sector spread remains \(\delta^2=3/8\) and a
single endpoint tilt on the first lepton rung shifts it mildly to \(K_{\rm th}\simeq 0.66916\),
close to experiment. The neutrino sector is treated minimally via a left-handed (Weinberg)
texture; the diagonal charged-lepton phase is removable and yields a vanishing leptonic Jarlskog ($J_\ell=0$, CP-conserving at this order), together with simple
analytic expressions for \(\theta_{12},\theta_{13},\theta_{23}\).

\paragraph{What is universal.}
Two universal choices-\(\delta=\sqrt{3/8}\) and the trace split
\(\mathrm{Tr}\,X_\ell:\mathrm{Tr}\,X_u:\mathrm{Tr}\,X_d=1:2:3\)-control charged-sector hierarchies and the
first-generation \(1:2:3\) relation, while CKM phases/magnitudes require only the two minimal
knobs \((\varepsilon,\kappa_{23})\). Lepton ratios are free of QCD running but still require a declared matching convention; quark ratios are scheme/scale dependent and should be compared only after evolving all inputs to a common renormalisation scale \cite{PDG2024}.

\paragraph{Organisation of the paper.}
Section~II sets up the $E_6^{\rm L}\!\times\!E_6^{\rm R}$ framework at the electroweak scale, explains trinification on each side, and fixes our use of global flavor $SU(3)$’s (as subgroups of $G_2\subset{\rm Spin}(8)\subset F_4\subset E_6$) versus the gauged factors. 
Section III explains why we work with $J_3(O_C)$ for describing three fermion generations, and the associated dynamics. 
Section~IV collects preliminaries: octonion conventions and the $\,{\rm Cl}(6)$ ladder; the $U(1)_{\rm em}$ charge operator and colour $SU(3)$ action; and explicit first-generation kets built on a \emph{Majorana} neutrino vacuum.

Section~V explains how the $J_3(\mathbb{O}_{\mathbb{C}})$ matrix enters the Dirac/Weyl equations as an internal mass operator and how its eigenvalues feed Yukawa structures. Section~VI constructs higher-generation states via the flavor $SU(3)\subset G_2$ cycle, proves its unitarity, and displays the generators of the $1\!\to\!2$ and $1\!\to\!3$ maps. Section~VII derives the universal Jordan spectrum $(q-\delta,\;q,\;q+\delta)$ family by family and writes the spectral idempotents; Section~VIII relates right-handed and left-handed Jordan frames by an orientation flip plus centre shift, keeping a common Jordan frame. Section~IX reviews ${\rm Spin}(8)$ triality on the Peirce triple, identifies the proto-centre and spacing, and explains how fixing the minimal three-corner chain determines $\delta=\sqrt{3/8}$.

Section~X derives the charged-fermion $\sqrt{m}$ ratios from $\mathrm{Sym}^3(\mathbf{3})$ in three stages: a pre-breaking origin argument (triality $S_3$ + cubic homogeneity selects $\mathrm{Sym}^3$ as the minimal arena, modulo a stated minimality assumption), the one-line diagonal-action theorem ($X^{\odot 3}|p,q,r\rangle = a^p b^q c^r |p,q,r\rangle$ in the monomial basis, immediately producing edge universality), and the generation-assignment postulates that fix the three weights of the down and up chains (with leptons following by Dynkin swap).  The (M1)--(M5) minimality criteria are restated as the algebraic form of these postulates with a uniqueness proposition. 
Section XI derives these ratios in a unified manner for quarks and charged leptons.
Section~XII contrasts the pre-triality (Dirac-template) phase-yielding exact Koide-with the post-breaking deformations that produce the observed small offset. 
Section XIII discusses the $E_7$ quartic invariant in the context of our present work.
Section~XIV clarifies why mass ratios are \emph{not} raw eigenvalue ratios, shows sufficiency of the group-theory inputs, and explains the first-generation pattern $1\!:\!2\!:\!3$ and the $OP^2$ picture. Section~XV assembles the closed forms for all charged-sector ratios  and compares to PDG-2024. Section~XVI gives a Dirac neutrino no-go within this setup, promoting Majorana neutrinos as a prediction with concrete experimental tests. Section~XVII derives CKM ``root-sum rules,'' obtains the geometric Cabibbo phase $\phi_{12}=\pi/2$, fits $|V_{us}|$ and $|V_{cb}|$ with the two phenomenological knobs $(\varepsilon,\kappa_{23})$, gives leading-order structural estimates for $|V_{ub}|$ and $|V_{td}|/|V_{ts}|$, and exhibits a \emph{one-parameter structural correlation} between $|V_{us}|$ and the CKM CP phase: the same $\varepsilon$-tilted complex amplitude that fits $|V_{us}|$ also fixes $|\delta_{CP}^{\rm quark}|=\pi/2+\varepsilon\simeq 64^\circ$ within the minimal adjacent-edge ladder, in which $\phi_{13}=0$ is a consequence of the absence of a primitive (1,3) Peirce rung together with $3\times 3$ unitary rephasing freedom. Section~XVIII develops the minimal (Weinberg) neutrino texture and analytic PMNS angles.  Section XIX examines how the proposal `EW symmetry breaking and triality breaking are concurrent' could be tested. Section~XX formulates parameter-free, scale-matched sum rules and an apples-to-apples test protocol. and Section~XXI concludes. Appendices collect the Dynkin~$\mathbb{Z}_2$ swap, the unified ${\rm Sym}^3$ ladder details, a phenomenology cross-check, additional scale-concurrency checks,  a comparison with the work of Quinta \cite{quinta2025}, and an overview roadmap of the research presented in the present article. The important
Appendix H describes a completely group-theoretic basis for the ladders in the $\mathrm{Sym}^3$ weight diagram, and also a dynamical justification for the choice of down- and up-ladders. Appendix I describes the quantum stability of the selected vacuum, and provides the related RG analysis. Appendix J explains the three generation structure prior to triality breaking, using the same $Cl(6)$ Clifford algebra structures that are employed after symmetry breaking. Appendix K explains that in our model,  RH quarks also take part in $SU(3)_{color}$ QCD and that $SU(3)_{c'}$ is a global, explicitly broken symmetry.

\newpage
\paragraph*{Inputs $\rightarrow$ Predictions (at a glance).}

\begin{itemize}
  \item \textbf{Structural inputs.}  (i) The $\mathrm{Sym}^3(\mathbf{3})$ family arena, derived from pre-breaking triality covariance and cubic homogeneity modulo a minimality assumption (Sym$^3$ over higher Sym$^{3k}$).  (ii) The coassociative-slice Jordan cubic gives the symmetric spread $q-\delta,q,q+\delta$, with $\delta^2$ equal to the off-diagonal norm sum; the Majorana-vacuum normalization used here gives $\delta^{2}=3/8$.  (iii) The sector trace split $1\!:\!2\!:\!3$.
  \item \textbf{Generation-assignment inputs.}  (iv) The down-family chain assignment (lightest at $a^2b$, heaviest at $c^3$, three states on the largest-contrast leg).  (v) The up-family chain assignment (same lightest, heaviest at $b^2c$).  The charged-lepton chain follows from (iv) by the Dynkin $\mathbb Z_2$ swap and is not an independent input.
  \item \textbf{CKM phenomenological inputs.}  (vi) A small complex phase tilt $\varepsilon\approx -26.1^\circ$.  (vii) A cross-family normalisation $\kappa_{23}\approx 0.55$.  The Cabibbo phase $\phi_{12}=\pi/2$ is geometric and not adjusted.
  \item \textbf{Outputs.}  Six charged-sector $\sqrt m$ ratios in total (two per charged sector, with compound physical steps written as products of adjacent edge moves where needed); the cross-sector relation $\sqrt{m_\tau/m_\mu}=\sqrt{m_s/m_d}$; the first-generation pattern $\sqrt{m_e}:\sqrt{m_u}:\sqrt{m_d}=1:2:3$; a definite Koide offset ($K\to 2/3$ pre-breaking, $K_{\rm th}\simeq 0.66916$ post-breaking); the CKM moduli pattern $(|V_{us}|,|V_{cb}|,|V_{ub}|,|V_{td}|/|V_{ts}|)$; the CKM CP phase $|\delta_{CP}^{\rm quark}|=\pi/2+\varepsilon\simeq 64^\circ$ from the same $\varepsilon$-tilted (1,2)-block amplitude that fits $|V_{us}|$; the Majorana nature of neutrinos within this framework; a \emph{vanishing} leptonic Jarlskog invariant $J_\ell=0$ at leading order, i.e.\ leptonic Dirac/Jarlskog CP conservation ($\delta^{\ell}_{\rm CP}\in\{0,\pi\}$) (Sec.~\ref{sec:neutrino-sector-PMNS}, subsection~C); and the exact-orbit realization of the colorless generation triples by an explicit order-three flavor automorphism $\Gamma$, together with a conserved-invariant no-go for the colored triples (Sec.~\ref{sec:flavor-su3}).
\end{itemize}
Inputs (i)--(iii) are structural and algebraically anchored; (iv) and (v) are physical generation-assignment postulates; (vi) and (vii) are CKM phenomenological knobs.  All other entries in the prediction list are derived from these.  The key steps in our analysis are highlighted in the flowchart in Fig.~1 below.

We place our octonionic-flavour prediction in the context of longstanding attempts to derive fermion mass ratios from symmetry or dynamical principles. Historically, relations such as the Koide formula for charged leptons and $b-\tau$ unification in simple Grand Unified Theories have shown that compact algebraic or group-theoretic structures can yield remarkably accurate relations, but these results typically require additional model-specific input or run into stability and naturalness questions. Other frameworks (Froggatt-Nielsen charge assignments, texture ansätze, radiative-generation models, modular/discrete flavour symmetries, and string/extra-dimensional constructions) successfully reproduce hierarchies and some inter-family relations, yet they usually leave order-one coefficients undetermined or depend sensitively on ultraviolet choices. By contrast, the present work derives the symmetric Jordan spectrum from the determinant structure of $J_3({\mathbb O}_C)$ and, with the specified Majorana-vacuum normalization, obtains the structural spread $\delta^2=3/8$; it also identifies $\mathrm{Sym}^3(\mathbf 3)$ as the minimal $S_3$-symmetric degree-3 arena from pre-breaking triality covariance.  The mass-ratio formulae then follow from a one-line diagonal-action theorem on the $\mathrm{Sym}^3$ monomial basis, with the down- and up-family chains fixed by physical generation-assignment postulates and the lepton chain derived from the down chain by the Dynkin $\mathbb Z_2$ swap. We further demonstrate RG stability of the predicted square-root mass ratios across the phenomenologically relevant scale range. Thus the octonionic proposal aims to move beyond numerology toward a symmetry-protected, minimally parameterised derivation with clear, falsifiable predictions. Fig.~\ref{fig:landscapefigure} summarizes various theoretical approaches to deriving mass ratios and places the present work in the broader context.

\begin{figure}[H]
\centering
\fbox{
\begin{minipage}{0.7\textwidth}
\centering
\begin{tikzpicture}[
    node distance = 0.9cm and 0.5cm,
    box/.style = {draw, rounded corners, fill=blue!8, thick, 
                  minimum width=10cm, text width=11cm, align=center,
                  font=\small},
    title/.style = {font=\bfseries},
    arrow/.style = {->, thick, >=stealth}
    ]

\node (start) [box] {\textbf{Unification}\\ $E_6^L \times E_6^R$ \\ Product unification structure};
    
\node (trini) [box, below=of start] {\textbf{Trinification}\\ 
    $E_6^L \rightarrow SU(3)_C \times SU(3)_L \times SU(3)_F$ \\
    $E_6^R \rightarrow SU(3)_{C'} \times SU(3)_R \times SU(3)_{F'}$ \\
    Standard-Model-like groups + flavor symmetries};

\node (jordan) [box, below=of trini] {\textbf{Jordan Algebra $J_3(\mathbb{O}_{\mathbb{C}})$}\\ 
    \begin{minipage}{0.45\textwidth}
    \centering
    \textbf{Charge Frame} (LH Sector) \\
    Eigenvalues: $q-\delta, q, q+\delta$ \\
    Organizes $\mathrm{Sym}^3(\mathbf{3})$ ladder
    \end{minipage}
    \hfill
    \begin{minipage}{0.45\textwidth}
    \centering
    \textbf{$\sqrt{\mathrm{Mass}}$ Frame} (RH Sector) \\
    Eigenvalues: $s-\delta, s, s+\delta$ \\
    Cubic + norm give $\delta^2 = 3/8$
    \end{minipage}};

\node (minimality) [box, below=of jordan] {\textbf{$\mathrm{Sym}^3(\mathbf 3)$ Ladder and Generation Assignment}\\ 
    $\bullet$ Diagonal action of $X^{\odot 3}$ on monomial basis $|p,q,r\rangle$ \\
    $\bullet$ Edge universality: ratios are $c/a$, $b/a$, $c/b$ \\
    $\bullet$ Down and up family chains fixed by assignment postulates \\
    $\bullet$ Dynkin Swap $S$: $d \leftrightarrow \ell$ \\
    (quark-lepton connection)};

\node (result) [box, below=of minimality] {\textbf{Predictions}\\ 
    $\bullet$ Closed-form $\sqrt{m}$ ratios \\
    $\bullet$ $\sqrt{m_\tau/m_\mu} = \sqrt{m_s/m_d}$ (testable relation) \\
    $\bullet$ CKM mixing parameters \\
    $\bullet$ Majorana neutrinos};

\draw [arrow] (start) -- (trini);
\draw [arrow] (trini) -- (jordan);
\draw [arrow] (jordan) -- (minimality);
\draw [arrow] (minimality) -- (result);

\end{tikzpicture}
\end{minipage}
}
\caption[Key steps in the derivation of fermion mass ratios from the exceptional Jordan algebra framework]{\tiny Key steps in the derivation of fermion mass ratios from the exceptional Jordan algebra framework. The construction begins with a product unification structure, undergoes trinification to Standard Model-like groups, utilizes the eigenvalue structure of $J_3(\mathbb{O}_{\mathbb{C}})$ to establish both charge and mass bases, and finally applies the diagonal action of $X^{\odot 3}$ on the $\mathrm{Sym}^3(\mathbf{3})$ monomial basis together with physical generation-assignment postulates to derive specific mass ratio predictions. The Dynkin swap automorphism provides crucial connection between down-quark and lepton sectors.}
\label{fig:mass_ratio_derivation_flowchart}
\end{figure}

\clearpage
\begin{figure}[p]
    \centering
    \includegraphics[width=1.00\textwidth, height=0.78\textheight, keepaspectratio]{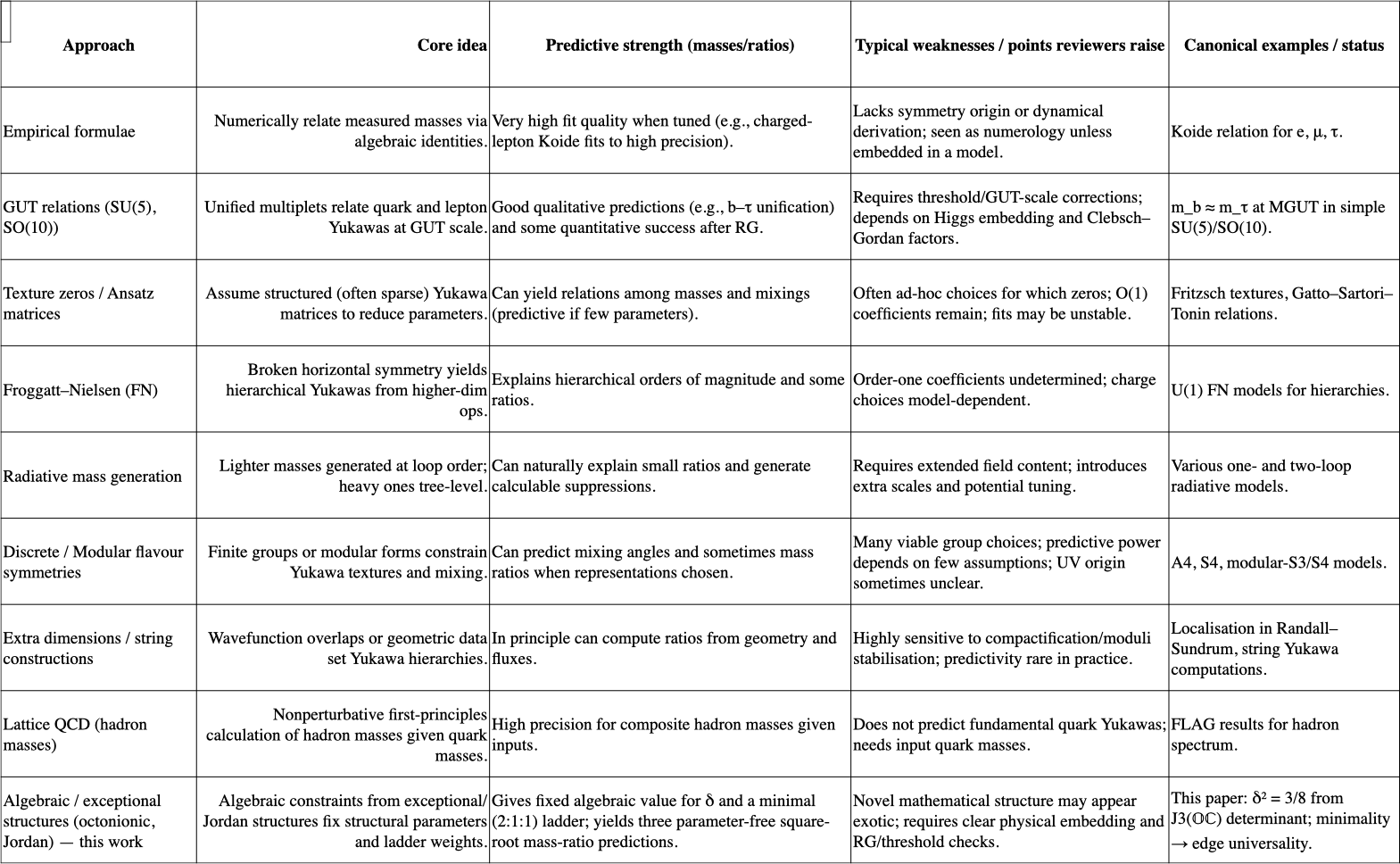}
    \caption{Comparison of different approaches to mass ratios.}
    \label{fig:landscapefigure}
\end{figure}
\clearpage

\section{\texorpdfstring{E$_6^L\times$E$_6^R$}{E6LxE6R} framework: trinification, flavor $SU(3)$, and triality}
\label{sec:E6LxE6R-tri-flavor}

By trinification of $E_6$ we mean its branching into three $SU(3)$s. However, the physical interpretation of the $SU(3)$s will be somewhat different from that in standard GUTs trinification. This is made clear below.

\noindent\textbf{1) UV picture (by construction, at the electroweak scale):}\\
We start from a \emph{product} unification \cite{Kaushik}
\[
E_6^{L}\;\times\;E_6^{R},
\]
with both factors breaking (“trinifying”) at the EW scale, but playing \textbf{different roles}.

\medskip
\noindent\textbf{2) Left factor (visible sector):}
\[
\begin{aligned}
E_6^{L}&\;\longrightarrow\;SU(3)_c\times SU(3)_{F,L}\times SU(3)_L,\\
SU(3)_L&\;\longrightarrow\;SU(2)_L\times U(1)_{\gamma_1},
\qquad
SU(2)_L\times U(1)_Y\;\longrightarrow\;U(1)_{\rm em}.
\end{aligned}
\]
\noindent\emph{Important:} the $U(1)_{\gamma_1}$ in this display is only the Cartan factor left over from
$SU(3)_L\to SU(2)_L\times U(1)$; it is not, by itself, the Standard-Model hypercharge
$U(1)_Y$.  As explained in Sec.~\ref{subsec:hypercharge-both-sectors}, $U(1)_Y$ requires a
fixed left--right Cartan combination involving both electroweak sectors.
\begin{itemize}
  \item Here the three $SU(3)$’s are interpreted as \textbf{color}, \textbf{left flavor} (global, not gauged; see below), and \textbf{left electroweak} (gauged).
  \item Only $SU(3)_c$ and $SU(3)_L$ are \textbf{gauged} on the left.  The $SU(3)_L$ breaking supplies $SU(2)_L$ and the Cartan $U(1)_{\gamma_1}$; the physical $U(1)_Y$ is selected only after the left--right diagonal linking described below.
  \item $SU(3)_{F,L}$ is \textbf{global}; it supplies the \emph{ladder/geometry} that generates the three families and the mass-ratio algebra, but it is not a gauge force.
\end{itemize}

\noindent\textbf{3) Right factor (RH/mass sector):}
\[
\begin{aligned}
E_6^{R}&\;\longrightarrow\;SU(3)_{c'}\times SU(3)_{F,R}\times SU(3)_R,\\
SU(3)_R&\;\longrightarrow\;SU(2)_R\times U(1)_{\gamma_2}
\;\to\;U(1)_{Y_{\rm dem}}\;\to\;U(1)_{\rm dem}.
\end{aligned}
\]
\begin{itemize}
  \item Here the three $SU(3)$’s are \textbf{dark color} $SU(3)_{c'}$ (not gauged), \textbf{right flavor} $SU(3)_{F,R}$ (global, not gauged), and a gauged \textbf{“gravi-dem”} $SU(3)_R$.
  \item The \textbf{mass operator} lives in this chain: 
  the final $U(1)_{\rm dem}\subset SU(3)_R$ is the \emph{gauged} Abelian whose \textbf{charge we identify with the square-root mass} (we don't need to write eigenvalues here). In short: $U(1)_{\rm dem}$ is the “dark electromagnetism” whose quantum number we identify with $\pm\sqrt m$.
  \item Only $SU(3)_R$ (and its descendants $SU(2)_R, U(1)_{Y_{\rm dem}}, U(1)_{\rm dem}$) are \textbf{gauged} on the right.
  \item $SU(3)_{F,R}$ is \textbf{global}; it mirrors the left flavor geometry in the RH sector and keeps the mass-ratio derivation representation-theoretic rather than dynamical. $SU(3)_{c'}$ is assumed global and explicitly broken - see Appendix K for a detailed explanation.
\end{itemize}

\subsection{Why hypercharge requires both left and right Cartans}
\label{subsec:hypercharge-both-sectors}

The notation in the preceding branching chains separates two different abelian factors.  The first,
$U(1)_{\gamma_1}$, is the canonical Cartan factor produced by the purely left-handed subgroup breaking
\[
SU(3)_L\longrightarrow SU(2)_L\times U(1)_{\gamma_1}.
\]
It should not be identified with the physical Standard-Model hypercharge.  In the standard Gell--Mann
normalisation,
\[
T^8_L=\frac{1}{2\sqrt3}\,\mathrm{diag}(1,1,-2),
\qquad
\gamma_1:=-2\sqrt3\,T^8_L,
\]
so that, in the integer-eigenvalue convention used in the branching tables,
\[
\mathbf 3\longrightarrow \mathbf 2(-1)\oplus\mathbf 1(+2),
\qquad
\bar{\mathbf 3}\longrightarrow \mathbf 2(+1)\oplus\mathbf 1(-2).
\]
This fixes the left Cartan direction unambiguously, up to an overall sign convention.

A single $SU(3)_L$ Cartan is nevertheless insufficient for Standard-Model hypercharge.  The reason is
basic: a gauge generator must be one fixed Lie-algebra element, independent of the multiplet on which it
acts.  Therefore the useful shorthand sometimes written as
\[
Y(\psi)=\frac{\gamma_1(\psi)}{2N(\psi)},
\qquad
N=3\ \hbox{for }SU(3)_c\hbox{ triplets},\quad N=1\ \hbox{for color singlets},
\]
cannot be the literal definition of $U(1)_Y$: it rescales the would-be generator representation by
representation.  It is only an eigenvalue mnemonic.  For example, it compactly records
$Y(Q_L)=+1/6$ and $Y(L_L)=-1/2$, but the factor distinguishing quark and lepton multiplets cannot
come from a rescaling of the lone $T^8_L$ direction.

The representation-independent construction requires the right electroweak sector as well.  In the
pre-electroweak phase the relevant Cartan scaffolding is
\[
G_{\rm pre}\supset SU(3)_c\times SU(3)_L\times SU(3)_{c'}\times SU(3)_R,
\]
with the flavor groups $SU(3)_{F,L}$ and $SU(3)_{F,R}$ acting globally on family labels, not entering the
hypercharge embedding.  After
\[
SU(3)_L\to SU(2)_L\times U(1)_{L8},
\qquad
SU(3)_R\to SU(2)_R\times U(1)_{R8},
\]
the physical hypercharge is taken to be an unbroken diagonal $U(1)$ selected by the left--right linking
sector.  Equivalently, one uses a single fixed Cartan element
\[
Y=\alpha T^8_L+\beta T^8_R+\gamma T^3_R,
\qquad (\alpha,\beta,\gamma)\ \hbox{fixed once and for all},
\]
with the coefficients chosen so that the induced $SU(2)_L\times U(1)_Y$ quantum numbers are the
observed Standard-Model ones.  The low-energy hypercharge gauge boson is then the corresponding
massless diagonal combination of the abelian gauge bosons descending from the left and right sectors;
the orthogonal abelian combinations are assumed to become massive at the left--right/interface-breaking
scale.  This is the sense in which $U(1)_Y$ cannot be constructed from just one sector: both
$E_6^L$ and $E_6^R$, through their $SU(3)_L$ and $SU(3)_R$ Cartans, are required.

This clarification does not change the mass-ratio derivation.  The latter uses the Jordan spectral data
within fixed electric-charge sectors and the global $SU(3)_F$ ladder.  Once the charge sectors are
identified, the Sym$^3(\mathbf 3)$ ladder computation and the Dynkin-swap relations are insensitive to
which left--right linking dynamics selects the fixed generator above.  A more detailed
version of this Cartan-level statement is recorded later in the $E_8\times\omega E_8$ uplift discussion.

\noindent\textbf{4) What is gauged vs. what is global (one line):}
\[
\underbrace{SU(3)_c,\ SU(3)_L,\ SU(3)_R\ \ (\text{and }U(1)_{\rm em},U(1)_{\rm dem})}_{\textbf{gauged}}
\quad\oplus\quad
\underbrace{SU(3)_{F,L}\times SU(3)_{F,R}\times SU(3)_{c'}}_{\textbf{global (flavor) and dark color}}.
\]

\paragraph*{Decoupling of a putative $SU(3)_{c'}$.}
In this paper we work in a minimal, representation-theoretic limit where any would-be
right-sector ``dark color'' $SU(3)_{c'}$ is not gauged at low energies (or is Higgsed away
well above the scales used here). All Standard-Model quarks transform as triplets under the
unique gauged QCD group $SU(3)_c$; the global flavor $SU(3)$’s (one per chirality) only
organize the ladder geometry and the Jordan mass operator. In this limit our charged-fermion
$\sqrt{m}$ ratios depend solely on the $J_3(\mathbb{O}_{\mathbb{C}})$ invariants and the fixed
$Sym^3(\mathbf 3)$ Clebsches; they are independent of any dark gauge dynamics. Phenomenologically,
we compare at a common renormalization scale, so ordinary gauge running is handled uniformly
(``apples-to-apples'' protocol), and no $SU(3)_{c'}$ effects enter the fits.

\noindent\textbf{5) Where the flavor $SU(3)$ lives in the exceptional chain:}\\
Each global flavor $SU(3)_{F,\cdot}$ is realised as
\[
SU(3)_{F}\ \subset\ G_2\ \subset\ \mathrm{Spin}(8)\ \subset\ F_4\ \subset\ E_6,
\]
with $G_2=\mathrm{Aut}(\mathbb O)$ the octonion automorphism group. Concretely: fixing one imaginary octonionic unit (say $e_1$) picks out an $SU(3)\subset G_2$, which supplies a concrete octonionic realization of the \textbf{flavor} $SU(3)$ acting on the generation triplet of our $\mathrm{Sym}^3(\mathbf 3)$ ladder.  (An explicit order-three element $\Gamma$ of this $SU(3)$ is exhibited in Sec.~\ref{sec:flavor-su3}: the colorless generation triples are exact $\Gamma$-orbits, while a conserved-invariant no-go in the same section shows the colored triples cannot be so generated and locates the quark family label at the fiber level.  The six-step generation-relabeling map used to \emph{list} representative states is bookkeeping on the coefficient space and is not itself an element of this $SU(3)$.)

\medskip
\noindent\textbf{6) How three generations emerge (triality $\to$ flavor):}\\
Inside $F_4\subset E_6$, the $\mathrm{Spin}(8)$ subgroup has the well-known \emph{triality} outer automorphism $S_3$ that permutes the vector and the two spinor $8$’s. In the unbroken stage, that $S_3$ symmetry relates three \emph{inequivalent} slots. When we \textbf{trinify} each $E_6$ (as above), the large symmetry breaks; the surviving piece of the triality geometry is effectively \textbf{restricted} to the \emph{global} flavor $SU(3)$ (on each side). In practice, this reduces to a \textbf{cyclic} action that organizes states along the $\mathrm{Sym}^3(\mathbf 3)$ ladder and yields \textbf{exactly three generations}. This is why, in our framework, the replication into three families is not an extra assumption: it is the residual imprint of octonionic triality, funneled through the $SU(3)_{F}\subset G_2\subset\mathrm{Spin}(8)$ chain after trinification.

\medskip
\noindent\textbf{7) Why we keep flavor $SU(3)$’s global:}\\
Gauging flavor would introduce additional gauge dynamics and free couplings that would obscure-indeed, largely spoil-the clean, representation-theoretic derivation of the charged-fermion $\sqrt m$ ratios from the single spread $\delta$ and the fixed $\mathrm{Sym}^3$ Clebsches. By keeping $SU(3)_{F,L}\times SU(3)_{F,R}$ global, the \textbf{ratios} remain consequences of geometry, while the \textbf{mass operator} itself is carried by the gauged $U(1)_{\rm dem}\subset SU(3)_R$ in the RH chain.

\medskip

\noindent\textbf{Clarifying the various ${\rm SU}(3)$ factors.}
Since several ${\rm SU}(3)$ subgroups appear in the construction, it is useful to summarise
their roles:

\begin{itemize}
  \item $SU(3)_c$ is the \emph{gauged} QCD color group of the Standard Model. It acts on both
        $q_L$ and $q_R$ in the fundamental representation and enters the Lagrangian in the usual
        vectorlike way.
  \item $SU(3)_{F,L}$ and $SU(3)_{F,R}$ are \emph{global} flavor groups acting in generation space.
        They are realised as $SU(3)\subset G_2\subset F_4\subset E_6^{L,R}$ and are never gauged.
  \item $SU(3)_{c'}$, which appears in the right-handed Jordan sector, is a further \emph{global}
        ${\rm SU}(3)$ acting on the Jordan frame $(a,b,c)$; it is explicitly broken by the
        non-degenerate eigenvalues and is \emph{not} a gauge symmetry. It is distinct from
        $SU(3)_c$.
\end{itemize}

Thus, the only gauged color group in this work is the standard $SU(3)_c$ of QCD, while all
additional ${\rm SU}(3)$ factors are global (and in some cases explicitly broken) symmetries
of the internal flavor/Jordan sector.

\section{Why $J_3(\mathbb{O}_\mathbb{C})$ for Three Generations, and How It Leads to Dynamics}
\label{sec:motivation-dynamics}

\subsection{Physics Motivation: Cubic Structure, Three Eigenvalues, and SM Embeddings}

The complexified exceptional Jordan algebra $J_3(\mathbb{O}_\mathbb{C})$ (the Albert algebra)
is the $27$-dimensional minimal representation on which the exceptional group $E_6$ acts
as the reduced structure group that preserves the \emph{cubic norm} (determinant) $N(X)$.
Equivalently, the complete polarization of $N$ yields a unique, totally symmetric,
$E_6$-invariant trilinear form
\begin{equation}
  t : \mathrm{Sym}^3(\mathbf{27}) \longrightarrow \mathbf{1},
\end{equation}
i.e.\ a singlet in $\mathrm{Sym}^3\mathbf{27}$. (See e.g.\ \cite{SpringerVeldkamp2000,Baez2002Octonions,Slansky1981,GunaydinKoepsellNicolai2001}).
This built-in cubic structure provides a natural home for the \emph{cubic} mass equations we
use throughout this work.

Elements $X \in J_3(\mathbb{O}_\mathbb{C})$ obey a cubic minimal polynomial and admit a
Jordan spectral decomposition with \emph{three} eigenvalues. The rank-1 idempotents of
$J_3(\mathbb{O}_\mathbb{C})$ parametrize the complex Cayley plane, giving a canonical
“three-direction” structure (a Jordan frame) rather than an ad hoc triplication of families
\cite{Baez2002Octonions,SpringerVeldkamp2000}.

On the group-theory side, $F_4=\mathrm{Aut}\,J_3(\mathbb{O})$ and $E_6$ act naturally on $J_3$,
with ample room for Standard Model embeddings. In particular, the intersection of certain
maximal subgroups of $F_4$ reproduces the SM gauge group, while within $E_6$ one has
familiar breaking chains such as $E_6\to SO(10)\times U(1)$ with
$\mathbf{27}\to\mathbf{16}\oplus\mathbf{10}\oplus\mathbf{1}$ and the trinification
$SU(3)^3\subset E_6$ \cite{TodorovDrenska2018,Slansky1981}. These facts motivate using
$J_3(\mathbb{O}_\mathbb{C})$ as the minimal, non-ad hoc arena where a canonical cubic invariant,
a three-eigenvalue structure, and SM-compatible embeddings coexist. Strictly speaking, however, whereas $F_4$ (and $E_6$) admit embeddings of $SU(3)_C\times SU(2)_L\times U(1)_Y$ and act naturally on $J_3(\mathbb O)$; we use this only as motivation, not as a full $F_4$ GUT.

\subsection{From Algebra to Dynamics: An $E_6$-Covariant Lagrangian}

We now write a dynamical ansatz in which all nonassociative aspects are packaged into
$E_6$-invariant multilinear maps on $J_3(\mathbb{O}_\mathbb{C})$, thereby avoiding any
ill-defined octonionic products in the action.

\paragraph{Fields.}
Let $X(x)\in J_3(\mathbb{O}_\mathbb{C})\cong\mathbf{27}$ be a scalar order parameter,
$\Psi(x)$ denote chiral fermions in a chosen $E_6$ representation (minimally the $\mathbf{27}$),
and $A_\mu(x)$ a gauge field for a group $G$ with $G\subseteq E_6$ (e.g.\ $E_6$, $F_4$, or
a phenomenological subgroup such as $SU(3)^3$).

\paragraph{Invariant tensors.}
Let $\langle\cdot,\cdot\rangle$ be the $E_6$ - invariant bilinear form on $\mathbf{27}$
(the trace form on $J_3$), $N(X)$ the cubic norm (determinant), and
$t(\cdot,\cdot,\cdot)$ its totally symmetric polarization (the unique invariant map
$\mathrm{Sym}^3\mathbf{27}\to\mathbf{1}$).
Define the covariant derivative $D_\mu=\partial_\mu+gA_\mu$.

\paragraph{Lagrangian.}
A minimal $E_6$-covariant boson-fermion Lagrangian is
\begin{align}
  \mathcal{L}
  &= -\frac{1}{4}\,\mathrm{Tr}\,F_{\mu\nu}F^{\mu\nu}
     + \frac{1}{2}\,\langle D_\mu X, D^\mu X\rangle
     + i\,\bar\Psi \gamma^\mu D_\mu \Psi \nonumber\\
  &\quad - \Big( y\,t(\Psi,\Psi,X) + \kappa\,N(X) + \text{h.c.} \Big)
     - \mu^2\,\langle X,X\rangle - \lambda\,\langle X,X\rangle^2.
  \label{eq:Lagrangian}
\end{align}
Here $t(\Psi,\Psi,X)$ is the unique $E_6$-invariant cubic Yukawa, and $N(X)$ is the
super-renormalizable $E_6$-invariant cubic scalar term. (The use of the cubic norm
and its polarization as structure tensors is standard in the exceptional supergravity
literature, where the $C$-tensor encodes couplings; see
\cite{GunaydinSierraTownsend1984,GunaydinKoepsellNicolai2001}).

Of course a single ${\bf 27}$ of $E_{6}$ contains only one chiral Standard--Model
generation (plus exotics). In the present framework we therefore work with three copies
of this representation. Concretely, we take
\begin{equation}
  \Psi_a(x) \in {\bf 27}\,, \qquad a=1,2,3\,,
\end{equation}
where the index $a$ labels the three fermion families and transforms as a triplet of a
global flavor symmetry $\mathrm{SU}(3)_F$. In our construction this flavor group is
realised as a subgroup of the octonionic automorphism chain
\begin{equation}
  \mathrm{SU}(3)_F \subset G_2 \subset F_4 \subset E_6\,,
\end{equation}
so that three generations correspond to three copies of the ${\bf 27}$ organised as a
$\mathbf{3}$ of $\mathrm{SU}(3)_F$.

The exceptional Jordan algebra
\begin{equation}
  J_3(\mathbb{O}_\mathbb{C}) \cong {\bf 27}
\end{equation}
is not used here as the single matter multiplet for all three generations. Instead, an
element $X \in J_3(\mathbb{O}_\mathbb{C})$ plays the role of an internal mass operator
acting in the three-dimensional family space spanned by the $\Psi_a$. Its three Jordan
eigenvalues are identified as being proportional to the three square-root masses in a given sector, while the
family index $a$ is carried by the fields $\Psi_a$ themselves. In this sense three
generations arise from three copies of ${\bf 27}$, and $J_3(\mathbb{O}_\mathbb{C})$ is used
to encode their mass structure rather than to house three generations inside a single
${\bf 27}$. For the related representation theory which recovers the SM the reader is referred to our earlier work \cite{Kaushik}.

\paragraph{Mass generation and cubic eigenvalue problem.}
If $X$ acquires a vacuum expectation value $\langle X\rangle$, the Yukawa term reduces to
an $E_6$-covariant mass operator
\begin{equation}
  \mathcal{L}_{\rm mass} \;=\;
  -\Big( y\,t(\Psi,\Psi,\langle X\rangle) + \text{h.c.} \Big).
\end{equation}
By the Jordan spectral theorem, $\langle X\rangle$ can be $F_4$-diagonalized to a Jordan
frame, and the induced mass map has exactly the \emph{three} eigenvalues (per sector)
given by the Jordan eigenvalues of $\langle X\rangle$.
These are controlled by the $E_6$-invariant data of $\langle X\rangle$:
\[
  \lambda^3 - T\,\lambda^2 + S\,\lambda - D = 0,
\quad
  T=\mathrm{Tr}\,\langle X\rangle,\;
  S=\text{quadratic invariant},\;
  D=N(\langle X\rangle).
\]
This explains why our mass spectra are governed by a cubic relation and links directly
to the trigonometric parametrizations we use.

\subsection{Symmetric Cube and Dynkin Swap}
\label{subsec:sym3-dynkin-swap}

The singlet in $\mathrm{Sym}^3\mathbf{27}$ is precisely what furnishes the unique
$E_6$-invariant Yukawa in \eqref{eq:Lagrangian}. Hence our ``cubic mass equation''
is not an added assumption but a group-theoretic necessity of working in $J_3(\mathbb{O}_\mathbb{C})$.

We turn now to a second discrete structural element of the framework: the $E_6$ outer automorphism and the ``Dynkin swap'' it induces on the family ladder.  We present the argument in derivation order — mathematical fact, framework postulate, post-breaking consequence, action on the weight triangle, and downstream empirical signature — so that the framework's one piece of discrete-symmetry input is clearly identified and everything else is derivation.

\subsubsection*{(1) Mathematical fact: the $E_6$ outer automorphism}

Among the exceptional Lie algebras, $E_6$ is the unique one with a non-trivial outer automorphism group: $\mathrm{Out}(E_6)\cong\mathbb{Z}_2$, generated by the folding symmetry of the $E_6$ Dynkin diagram (the two arms of the Y-shape are equivalent and may be exchanged).  On the fundamental matter representation this involution acts as complex conjugation,
\begin{equation}
\sigma_{E_6}: \mathbf{27}\;\longleftrightarrow\;\overline{\mathbf{27}},
\qquad \sigma_{E_6}^2=\mathrm{id},
\label{eq:E6-outer-aut}
\end{equation}
exchanging the two conjugate representations.  In trinification language $E_6\supset SU(3)_a\times SU(3)_b\times SU(3)_c$, the involution permutes two of the three $SU(3)$ factors while fixing the third.  This is a representation-theoretic fact about $E_6$, independent of any model building.

\subsubsection*{(2) Framework postulate: $\sigma_{E_6}$ identifies the L and R sectors}

In the present framework with gauge group $E_6^L\times E_6^R$, the two factors are a priori independent.  We postulate a discrete $\mathbb{Z}_2$ symmetry of the combined theory which exchanges the L and R sectors and acts as $\sigma_{E_6}$ on each factor:
\begin{equation}
\Sigma_{LR}:\;(E_6^L,\,E_6^R)\;\longleftrightarrow\;(E_6^R,\,E_6^L),
\qquad \Sigma_{LR}\big|_{\text{each $E_6$}}=\sigma_{E_6}.
\label{eq:LR-postulate}
\end{equation}
This is the framework's one piece of discrete-symmetry input regarding the L/R structure; everything in (3)--(6) below is derivation.  Physically, $\Sigma_{LR}$ identifies the right-hand embedding data with the left-hand embedding data via the $E_6$ outer automorphism: the LH sector carries flavor (charge-eigenstate) data in a $\mathbf{27}$, the RH sector carries the conjugate ($\sqrt m$-eigenstate) data in a $\overline{\mathbf{27}}$, and (\ref{eq:LR-postulate}) is the statement that they are conjugate copies linked by $\sigma_{E_6}$.

\paragraph{Alternative reading: single-$E_6$ vector-like matter.}
A logically alternative structure --- with the same low-energy consequences for the Dynkin swap --- is to take the gauge group to be a single $E_6$ with vector-like matter content $\mathbf{27}\oplus\overline{\mathbf{27}}$, the L and R halves being the two summands.  In that picture, $\sigma_{E_6}$ acts automatically as conjugation between the two summands and the L/R involution (\ref{eq:LR-postulate}) is a built-in feature of the matter content rather than a postulate.  However, this single-$E_6$ structure is incompatible with the $E_8\times\omega E_8$ uplift used in Section~\ref{subsec:magic-star-motivation}, in which $E_6^L$ and $E_6^R$ descend from separate $E_8$ factors as genuinely independent gauge groups.  We therefore retain the product-group framework $E_6^L\times E_6^R$ with $\Sigma_{LR}$ as a discrete-symmetry postulate, noting the alternative for completeness.

\subsubsection*{(3) Post-breaking restriction: the $A_2$ Dynkin swap of $SU(3)_F$}

After triality breaking, the residual flavor symmetry on each side is $SU(3)_F\subset E_6$.  The restriction of $\sigma_{E_6}$ to this residual $\mathfrak{su}(3)_F\cong A_2$ subalgebra is the non-trivial element of $\mathrm{Out}(A_2)\cong\mathbb{Z}_2$, namely the $A_2$ diagram automorphism exchanging the two simple roots:
\begin{equation}
\varphi:\;\alpha_1\leftrightarrow\alpha_2,
\qquad
\varphi(H_1)=H_2,\quad\varphi(H_2)=H_1,
\qquad
\varphi(E_{\pm\alpha_1})=E_{\pm\alpha_2},\quad\varphi(E_{\pm\alpha_2})=E_{\pm\alpha_1},
\qquad \varphi^2=\mathrm{id}.
\label{eq:A2-Dynkin-swap}
\end{equation}
We refer to (\ref{eq:A2-Dynkin-swap}) as the \emph{Dynkin swap} and denote its action on $\mathrm{Sym}^3(\mathbf 3)$ weights by $S$.  Crucially, $S$ is not an independent postulate: it is the post-breaking residue of $\sigma_{E_6}$ in the residual flavor algebra, derived from (\ref{eq:LR-postulate}).

\subsubsection*{(4) Action on the $\mathrm{Sym}^3(\mathbf 3)$ weight triangle}

On a weight $|p,q,r\rangle$ of $\mathrm{Sym}^3(\mathbf 3)$ identified with the monomial $a^p b^q c^r$, the Dynkin swap $S$ exchanges the two endpoint letters while fixing the third:
\begin{equation}
S:\ a^p b^q c^r\ \longmapsto\ a^p b^r c^q
\qquad (b\leftrightarrow c,\ a\text{ fixed}).
\label{eq:S-on-monomials}
\end{equation}
On the three edge ladder operators (with $E:a\to c$, $B:a\to b$, $C:b\to c$ as in Sec.~\ref{sec:Sym3-derivation}), $S$ acts by conjugation as
\begin{equation}
S\,E\,S^{-1}=B,\qquad S\,B\,S^{-1}=E,\qquad S\,C\,S^{-1}=C^{-1}.
\label{eq:S-on-edges}
\end{equation}
Geometrically, $S$ is the reflection of the weight triangle across the axis through the $a$-vertex.  See Appendix~C for the full $A_2$-level derivation and Appendix~H.7--H.8 for the corresponding $SU(3)$ algebra.

\subsubsection*{(5) Down-to-lepton chain mapping (derived)}

Applying $S$ to the minimal down chain $a^2b\xrightarrow{E} abc\xrightarrow{C} ac^2\xrightarrow{E} c^3$ yields
\begin{equation}
S:\quad a^2b\xrightarrow{\,\tilde E=B\,} abc \xrightarrow{\,\tilde C=C^{-1}\,} ab^2 \xrightarrow{\,\tilde E=B\,} b^3,
\label{eq:swap-chain-mapping}
\end{equation}
which is precisely the charged-lepton chain identified in Sec.~\ref{sec:Sym3-derivation}.  The lepton family is therefore \emph{not} an independent generation assignment; it is the $S$-image of the down family.  By edge universality and the conjugation rule (\ref{eq:S-on-edges}), the endpoint contrast $c/a$ on the down $E$-leg becomes the corresponding contrast on the lepton $B$-leg (numerically $c_\ell/a_\ell$ in the lepton labelling), producing the cross-sector relation
\begin{equation}
\sqrt{\,m_\tau/m_\mu\,}\ =\ \sqrt{\,m_s/m_d\,}.
\label{eq:cross-sector-relation}
\end{equation}

\subsubsection*{(6) Empirical signature: the $1\leftrightarrow 1/3$ flip is a consequence}

The much-discussed empirical $1\leftrightarrow 1/3$ flip between the LH and RH gradings --- $Q_{em}(d)=1/3$ vs $\sqrt{m_d}\sim 3\sqrt{m_e}$ at common scale --- is the downstream phenomenological signature of the structural Dynkin swap, not an independent motivation.  Under $S$, the LH electric-charge slots $(Q_{em}=0,\,1/3,\,2/3,\,1)$ assigned to $(\nu,\bar d,u,e^+)$ are mapped to the RH $\sqrt m$-eigenvalue slots $(Q_{dem}=0,\,1/3,\,2/3,\,1)$ assigned to $(\nu,e,u,d)$.  The electron and down slots are exchanged ($e\leftrightarrow d$), implementing $1\leftrightarrow 1/3$.  The full discussion is given in Sec.~\ref{subsec:dynkin-consequences} as a derived consequence rather than as motivation for the swap.

\subsubsection*{Summary of inputs}

The framework's discrete-symmetry input for the L/R structure is the single postulate (\ref{eq:LR-postulate}): $\Sigma_{LR}$ identifies L-R via the $E_6$ outer automorphism.  Everything else --- the existence of $\sigma_{E_6}$, its restriction to the $A_2$ Dynkin swap on the residual flavor algebra, the action on the $\mathrm{Sym}^3(\mathbf 3)$ weight triangle, the down-to-lepton chain mapping, the $\sqrt{m_\tau/m_\mu}=\sqrt{m_s/m_d}$ prediction, and the $1\leftrightarrow 1/3$ empirical signature --- is derivation under (\ref{eq:LR-postulate}).  Citations for the underlying mathematics: \cite{Slansky1981,AdamsExceptional,Bourbaki:2005,FultonHarris:1991}.

\subsection{Caveats and Options}

\begin{itemize}
  \item \textbf{Associativity:} No explicit octonion products appear in the action;
        only $E_6$-invariant multilinear maps ($\langle\cdot,\cdot\rangle$, $t$, $N$)
        and gauge-covariant derivatives are used.
  \item \textbf{Gauge choice:} One may gauge $E_6$ and break through $SO(10)\times U(1)$
        or $SU(3)^3$, or gauge $F_4$ and descend to the SM via its maximal subgroups'
        intersection \cite{TodorovDrenska2018,Slansky1981}.
  \item \textbf{Phenomenology:} The field content in \eqref{eq:Lagrangian} is a
        minimal template. Additional scalars or discrete symmetries can be added
        to tailor hierarchies and mixings without altering the core Sym$^3$/Jordan mechanism.
\end{itemize}

\subsection*{Addendum: Trinification Tailoring of the Dynamics \texorpdfstring{$(E_6 \to SU(3)_C \times SU(3)_L \times SU(3)_R)$}{(E6 -> SU(3)^3)}} 
\label{subsec:trinification-tailoring}

\paragraph{Decomposition and index conventions.}
Under $E_6 \to SU(3)_C \times SU(3)_L \times SU(3)_R$, the $\mathbf{27}$ decomposes as
\begin{equation}
  \mathbf{27} \;\to\; (3,\bar{3},1) \;\oplus\; (\bar{3},1,3) \;\oplus\; (1,3,\bar{3})\,,
  \label{eq:27-decomp}
\end{equation}
see \cite{Slansky1981}. We denote the three pieces by
\[
  Q^{a}{}_{\alpha} \in (3,\bar{3},1), \qquad
  Q^{c}{}_{a}{}^{r} \in (\bar{3},1,3), \qquad
  L^{\alpha}{}_{r} \in (1,3,\bar{3})\,,
\]
where $a=1,2,3$ is a color index ($3_C$), $\alpha=1,2,3$ is an $SU(3)_L$ index, and $r=1,2,3$ an $SU(3)_R$ index. We raise/lower indices using the Kronecker delta and the invariant tensors $\epsilon_{abc}$, $\epsilon^{abc}$, etc. For brevity we write a $\mathbf{27}$-valued scalar as $X=(Q,Q^c,L)$ and a $\mathbf{27}$-valued fermion as $\Psi=(Q,Q^c,L)$.

\paragraph{Explicit form of the cubic norm on the $\mathbf{27}$.}
In the $SU(3)^3$ frame the $E_6$-invariant cubic norm $N(X)$ takes the well-known form
\begin{equation}
  N(Q,Q^c,L) \;=\; \det(Q) \;+\; \det(Q^c) \;+\; \det(L) \;-\; \Tr\!\big(Q\,Q^c\,L\big)\,,
  \label{eq:N-decomp}
\end{equation}
up to an overall normalization. (This is the standard $SL(3)^3$ presentation of the $E_6$ cubic; cf.\ the Jordan-algebra literature \cite{Baez2002Octonions,SpringerVeldkamp2000,Slansky1981}).
Here
\begin{align}
  \det(Q) &:= \frac{1}{3!}\,\epsilon_{abc}\,\epsilon^{\alpha\beta\gamma}\,
               Q^{a}{}_{\alpha}\,Q^{b}{}_{\beta}\,Q^{c}{}_{\gamma}, \\
  \det(Q^c) &:= \frac{1}{3!}\,\epsilon^{abc}\,\epsilon_{rst}\,
               Q^{c}{}_{a}{}^{r}\,Q^{c}{}_{b}{}^{s}\,Q^{c}{}_{c}{}^{t}, \\
  \det(L) &:= \frac{1}{3!}\,\epsilon_{\alpha\beta\gamma}\,\epsilon^{rst}\,
               L^{\alpha}{}_{r}\,L^{\beta}{}_{s}\,L^{\gamma}{}_{t}, \\
  \Tr(Q\,Q^c\,L) &:= Q^{a}{}_{\alpha}\, Q^{c}{}_{a}{}^{r}\, L^{\alpha}{}_{r}\,,
\end{align}
with the trace understood as matrix multiplication followed by trace on the contracted indices.

\paragraph{The Yukawa invariant in the $SU(3)^3$ basis.}
The unique $E_6$-invariant symmetric trilinear form $t(\cdot,\cdot,\cdot)$ on $\mathbf{27}$
restricts to the following $SU(3)^3$ singlet when evaluated on two fermions and one scalar,
\begin{align}
  t\big(\Psi,\Psi,X\big) \;=\;&
     \underbrace{Q^{a}{}_{\alpha}\, Q^{c}{}_{a}{}^{r}\, (L_X)^{\alpha}{}_{r}}_{\text{``trace'' type}}
   \;+\; \underbrace{Q^{c}{}_{a}{}^{r}\, L^{\alpha}{}_{r}\, (Q_X)^{a}{}_{\alpha}}_{\text{``trace'' type}}
   \;+\; \underbrace{L^{\alpha}{}_{r}\, Q^{a}{}_{\alpha}\, (Q^c_X)_{a}{}^{r}}_{\text{``trace'' type}}
   \nonumber\\
   &\quad
   -\,\frac{1}{2}\Big[
     \epsilon_{abc}\,\epsilon^{\alpha\beta\gamma}\,Q^{a}{}_{\alpha}\,Q^{b}{}_{\beta}\,(Q_X)^{c}{}_{\gamma}
     \;+\;
     \epsilon^{abc}\,\epsilon_{rst}\,Q^{c}{}_{a}{}^{r}\,Q^{c}{}_{b}{}^{s}\,(Q^c_X)_{c}{}^{t}
     \nonumber\\[-1mm]
   &\hspace{3.6cm}
     +\;
     \epsilon_{\alpha\beta\gamma}\,\epsilon^{rst}\,L^{\alpha}{}_{r}\,L^{\beta}{}_{s}\,(L_X)^{\gamma}{}_{t}
   \Big]\,,
  \label{eq:t-decomp}
\end{align}
where $X=(Q_X,Q^c_X,L_X)$ is the scalar $\mathbf{27}$ and we have displayed a convenient normalization in which the relative coefficients between ``trace'' and ``determinant'' pieces reflect the polarization of \eqref{eq:N-decomp}.%
Any overall rescaling of $t$ can be absorbed into the Yukawa coupling $y$ in \eqref{eq:Lagrangian}. Different sign conventions for \eqref{eq:N-decomp} induce correlated changes in \eqref{eq:t-decomp}).

\paragraph{Minimal Higgs choice and mass matrices.}
A simple and phenomenologically useful specialization is to take a single scalar $\mathbf{27}_H$
with only its $(1,3,\bar{3})$ piece turned on,
\begin{equation}
  X_H \;=\; (0,\,0,\,H_L)\,, 
  \qquad H_L \in (1,3,\bar{3})\,.
\end{equation}
Then the Yukawa in \eqref{eq:t-decomp} reduces to the $SU(3)^3$-invariant contraction
\begin{equation}
  t\big(\Psi,\Psi,X_H\big) \;=\; Q^{a}{}_{\alpha}\, Q^{c}{}_{a}{}^{r}\, (H_L)^{\alpha}{}_{r}
  \;-\;\frac{1}{2}\,\epsilon_{\alpha\beta\gamma}\,\epsilon^{rst}\,L^{\alpha}{}_{r}\,L^{\beta}{}_{s}\,(H_L)^{\gamma}{}_{t}\,,
  \label{eq:yy-min}
\end{equation}
so that the Yukawa term in \eqref{eq:Lagrangian} contains
\begin{equation}
  \mathcal{L}_Y \;\supset\; -\,y_{fg}\,\Big[
    Q^{a(f)}{}_{\alpha}\, Q^{c(g)}{}_{a}{}^{r}\, (H_L)^{\alpha}{}_{r}
    \;-\;\frac{1}{2}\,\epsilon_{\alpha\beta\gamma}\,\epsilon^{rst}\,L^{\alpha(f)}{}_{r}\,L^{\beta(g)}{}_{s}\,(H_L)^{\gamma}{}_{t}
  \Big] + \text{h.c.}\,,
  \label{eq:YY-families}
\end{equation}
where $f,g$ are family indices (flavor couplings $y_{fg}$ are not fixed by $E_6$).

Choose a diagonal vacuum alignment in the Jordan frame,
\begin{equation}
  \langle H_L\rangle \;=\; \mathrm{diag}(v_1,\,v_2,\,v_3)\,,
  \qquad v_i \in \mathbb{C}\,.
  \label{eq:HL-vev}
\end{equation}
The $QQ^c H_L$ piece in \eqref{eq:YY-families} then yields Dirac masses with matrix
\begin{equation}
  \big(M_Q\big)_{fg} \;=\; y_{fg}\,\langle H_L\rangle
  \;\; \sim\;\; y_{fg}\,\mathrm{diag}(v_1,v_2,v_3)\,,
\end{equation}
so the three physical mass ratios (per sector) are determined precisely by the three Jordan eigenvalues
$\{v_1,v_2,v_3\}$. Their symmetric polynomials are
\begin{equation}
  T \;=\; v_1+v_2+v_3,\qquad
  S \;=\; v_1 v_2 + v_2 v_3 + v_3 v_1,\qquad
  D \;=\; v_1 v_2 v_3 \;=\; \det\langle H_L\rangle\,,
  \label{eq:tsd-trin}
\end{equation}
reproducing the cubic eigenvalue equation and directly matching the trigonometric
parameters $(\delta,\chi,E)$ used elsewhere in the paper.

\paragraph{Cubic scalar term and symmetry breaking.}
With $X = (Q_X,Q^c_X,L_X)$ the scalar cubic in \eqref{eq:Lagrangian} reads
\begin{equation}
  \kappa\,N(X) \;=\; \kappa\Big( \det(Q_X) + \det(Q^c_X) + \det(L_X) - \Tr(Q_X Q^c_X L_X) \Big) + \text{h.c.}
\end{equation}
In the minimal alignment with $X=(0,0,H_L)$ this reduces to
$\kappa\,\det(H_L)+\text{h.c.}$, which can stabilize hierarchical vacuum alignments
like \eqref{eq:HL-vev} and trigger breaking $SU(3)_L \times SU(3)_R \to SU(2)_L \times U(1)_Y \times \cdots$.
The Jordan invariants in \eqref{eq:tsd-trin} then control the mass spectrum via the Yukawa term.

\paragraph{Discrete permutations and ``Dynkin swap''.}
Within the $SU(3)^3$ frame, the cubic form \eqref{eq:N-decomp} is invariant under the cyclic
permutation $(Q,Q^c,L)\mapsto(Q^c,L,Q)$, providing a $\mathbb{Z}_3$ symmetry that
interchanges the roles of the three $SU(3)$ factors in the invariant. In parallel, the
outer automorphism (``Dynkin swap'') of $E_6$ (order $2$) permutes embeddings at the
group level \cite{Slansky1981}. Together, these discrete maps explain why the same
$E_6$-invariant cubic structure can be reutilized across sectors (e.g.\ generation vs.\ charge),
while leaving the Sym$^3$ origin of the mass cubic intact.

\paragraph{Remarks on normalization and CG coefficients.}
Equations \eqref{eq:N-decomp} and \eqref{eq:t-decomp} fix the relative Clebsch-Gordan
coefficients between ``trace'' and ``determinant'' pieces by polarization. Any overall
normalization is absorbed into the couplings $\kappa$ and $y$ in \eqref{eq:Lagrangian}.
When only a single component of $X$ is active (e.g.\ $H_L$), the corresponding subset
of terms contributes automatically; additional discrete symmetries can be imposed to
forbid unwanted $LLH_L$ couplings in \eqref{eq:yy-min} if desired.

\subsection*{Flavor frame with an external $SU(3)_F$}

For the purposes of the mass-ratio analysis we only require that the unique $E_6$-invariant cubic tensors on the $27$ can be used in a setting where a global flavor symmetry $SU(3)_F$ is present. In this subsection we therefore keep the standard trinification subgroup and introduce $SU(3)_F$ as an \emph{external} factor, acting on three copies of the $27$.

\paragraph{Group and index conventions.}
Under
\[
E_6 \;\longrightarrow\; SU(3)_C \times SU(3)_L \times SU(3)_R
\]
the fundamental representation decomposes as
\begin{equation}
27 \;\longrightarrow\; (3,\bar 3,1)\;\oplus\;(\bar 3,1,3)\;\oplus\;(1,3,\bar 3),
\label{eq:27-trinif}
\end{equation}
see e.g.\ Slansky \cite{Slansky:1981}. We denote the three pieces by
\begin{align}
Q_a{}^{\alpha} &\in (3_C,\bar 3_L,1_R), &
Q^{c\,a}{}_r &\in (\bar 3_C,1_L,3_R), &
L_{\alpha}{}^r &\in (1_C,3_L,\bar 3_R),
\end{align}
with $a=1,2,3$ a color index, $\alpha=1,2,3$ an $SU(3)_L$ index and $r=1,2,3$ an $SU(3)_R$ index. 

We now introduce a separate global flavor group $SU(3)_F$ acting on an additional index $f=1,2,3$, and consider a triplet of $27$’s
\begin{equation}
\Psi_f = (Q_f,\,Q^c_f,\,L_f)\in 27\otimes \mathbf{3}_F,\qquad
X_f = (Q_{X,f},\,Q^c_{X,f},\,L_{X,f})\in 27\otimes \mathbf{3}_F.
\end{equation}
The flavor group acts only on the index $f$, so that, for example, the left-handed quarks $Q_{a\alpha,f}$ transform as
\[
Q_{a\alpha,f} \;\in\; (3_C,\bar 3_L)\otimes \mathbf{3}_F,
\]
realizing three generations as a flavor triplet, while preserving the standard trinification charges.

\paragraph{Cubic norm and trilinear invariant.}
On a single $27$ the $E_6$-invariant cubic norm can be written in the usual $SL(3)^3$ form,
\begin{equation}
N(Q,Q^c,L)
= \det(Q)+\det(Q^c)+\det(L)-\mathrm{Tr}(Q\,Q^c\,L),
\label{eq:N-single}
\end{equation}
with
\begin{align}
\det(Q) &= \frac{1}{3!}\,\epsilon^{abc}\epsilon_{\alpha\beta\gamma}\,
Q_a{}^{\alpha}Q_b{}^{\beta}Q_c{}^{\gamma},\\
\det(Q^c) &= \frac{1}{3!}\,\epsilon_{abc}\epsilon^{r s t}\,
Q^{c\,a}{}_r\,Q^{c\,b}{}_s\,Q^{c\,c}{}_t,\\
\det(L) &= \frac{1}{3!}\,\epsilon^{\alpha\beta\gamma}\epsilon_{r s t}\,
L_{\alpha}{}^r\,L_{\beta}{}^s\,L_{\gamma}{}^t,\\
\mathrm{Tr}(Q\,Q^c\,L) &= Q_a{}^{\alpha} Q^{c\,a}{}_r\,L_{\alpha}{}^r.
\end{align}
The unique symmetric trilinear form $t(\cdot,\cdot,\cdot)$ on $27$ is obtained from $N$ by polarization and has the same $SL(3)^3$ tensor structure.

In the presence of $SU(3)_F$ we simply take a sum over the flavor index:
\begin{equation}
N_F(X) \;=\; \sum_{f=1}^3 N(X_f),\qquad
t_F(\Psi,\Psi,X) \;=\;\sum_{f=1}^3 t(\Psi_f,\Psi_f,X).
\label{eq:N-flavor}
\end{equation}
Because the flavor index $f$ is fully contracted, $N_F$ and $t_F$ are invariant under both $E_6$ and the global $SU(3)_F$.

\paragraph{Yukawa coupling in the flavor frame.}
The Yukawa term in the Lagrangian then takes the schematic form
\begin{equation}
\mathcal{L}_Y \;=\; y \sum_{f=1}^3 t(\Psi_f,\Psi_f,X) + \mathrm{h.c.},
\label{eq:Yukawa-flavor}
\end{equation}
which is manifestly $E_6\times SU(3)_F$ invariant. For the mass-ratio analysis performed in this paper we only need the flavor-singlet combination \eqref{eq:Yukawa-flavor}; the detailed embedding of the Standard Model spectrum into $27\otimes \mathbf{3}_F$ and its uplift to $E_8\otimes E_8$ is provided in our earlier work and does not affect the derivation of the Jordan-eigenvalue mass relations.

\paragraph{Minimal Higgs choice and Yukawa in the flavor frame.}
Choose a single scalar $\mathbf{27}_H$ with only its $(1,3,\bar{3}_F)$ component active,
\[
  X_H \,=\, (0,0,H_L),\qquad H_L \in (1,3_L,\bar{3}_F).
\]
Then the Yukawa term from the invariant $t(\Psi,\Psi,X)$ reduces to
\begin{equation}
  \mathcal{L}_Y \supset -\,y\,
    \Big[ Q^{a}{}_{\alpha}\, Q^{c}{}_{a}{}^{f}\, (H_L)^{\alpha}{}_{f}
      \;-\; \tfrac{1}{2}\,\epsilon_{\alpha\beta\gamma}\epsilon^{fgh}\,
           L^{\alpha}{}_{f}\,L^{\beta}{}_{g}\,(H_L)^{\gamma}{}_{h}\Big]
  \;+\; \text{h.c.}
  \label{eq:YY-flavor}
\end{equation}
(here $y$ absorbs any overall normalization of $t$). A discrete symmetry may be
imposed to remove the $LLH_L$ piece if desired.

\paragraph{VEV alignment and masses.}
Take a flavor-diagonal Jordan-frame alignment
\[
  \langle H_L\rangle \;=\; \mathrm{diag}(v_1,\,v_2,\,v_3)\quad\text{in}\;\; (3_L\otimes\bar{3}_F),
\]
so that the $QQ^c H_L$ contraction in \eqref{eq:YY-flavor} yields Dirac masses
\[
  \big(M_Q\big) \;=\; y\,\langle H_L\rangle
  \;\sim\; y\,\mathrm{diag}(v_1, v_2, v_3),
\]
i.e.\ three masses equal to the three Jordan eigenvalues $\{v_1,v_2,v_3\}$.
Their symmetric polynomials
\(
  T=v_1{+}v_2{+}v_3,\;
  S=v_1v_2{+}v_2v_3{+}v_3v_1,\;
  D=v_1v_2v_3=\det\langle H_L\rangle
\)
match the cubic eigenvalue equation and our $(\delta,\chi,E)$ parametrization.

\paragraph{Scalar cubic and breaking.}
With $X=(0,0,H_L)$, the scalar cubic in the Lagrangian becomes
\(
  \kappa\,N(X)=\kappa\,\det(H_L)+\text{h.c.},
\)
favoring hierarchical alignments and supporting $SU(3)_L\times SU(3)_F$ breaking patterns.

\paragraph{Two consistent choices for the flavor symmetry.}
\begin{enumerate}
  \item \textbf{Gauged $SU(3)_F$ (as the third $SU(3)$ inside $E_6$).}
        All formulas above are fully gauge-invariant. The Yukawa reduces to a
        \emph{single} coupling $y$; hierarchies and mixing come from the pattern of
        $\langle H_L\rangle$ (and, if needed, additional $\mathbf{27}_H$'s with
        misaligned vevs).
  \item \textbf{Global $SU(3)_F$ (or explicitly broken).}
        If one prefers a general family matrix, one introduces spurions transforming under
        $SU(3)_F$ (e.g.\ $Y \sim \mathbf{8}\oplus\mathbf{1}$ or $\bar{\mathbf{6}}$) and
        replace $y \to y\,Y$ in \eqref{eq:YY-flavor}. Taking $\langle Y\rangle$ fixed
        reproduces arbitrary $y_{fg}$ while keeping a symmetry-based organization.
\end{enumerate}


\subsection{How $J_3(\mathbb{O}_\mathbb{C})$ Enters the Lagrangian and the Dirac Equation}
\label{subsec:X-as-order-parameter-and-mass}

\paragraph{Field vs.\ order parameter.}
Throughout, $X(x)\in J_3(\mathbb{O}_\mathbb{C})\cong\mathbf{27}$ denotes a \emph{dynamical scalar field}
valued in the complex Albert algebra. Calling $X$ an \emph{order parameter} means simply that
its vacuum expectation value (vev) $\langle X\rangle$ selects a symmetry-breaking vacuum and
sets mass scales. Concretely,
\[
  X(x) \;=\; \langle X\rangle \;+\; \delta X(x)\,, \qquad
  \langle X\rangle \in J_3(\mathbb{O}_\mathbb{C})\,,
\]
with fluctuations $\delta X$ describing physical scalars around the vacuum.

\paragraph{From the cubic invariant to a mass operator.}
Let $\eta_{IJ}$ be the $E_6$-invariant bilinear form on $\mathbf{27}$ and
$t_{IJK}$ the totally symmetric invariant defining $t(\cdot,\cdot,\cdot)$.
Fixing $X$ and raising one index with $\eta^{IJ}$ turns the invariant into a linear map
\emph{on the fermion representation}:
\begin{equation}
  \big(\mathbb{T}_X\big)_I{}^{\;J} \;:=\; t_{IKL}\,\eta^{LJ}\,X^K \,.
  \label{eq:Tmap}
\end{equation}
The Yukawa term in \eqref{eq:Lagrangian} can then be written schematically (suppressing Lorentz
indices) as
\[
  \mathcal{L}_Y \;=\; -\,y\,\eta_{IJ}\,\Psi^I\,\big(\mathbb{T}_X\big)^J{}_{\;K}\,\Psi^K + \text{h.c.}\,,
\]
so that upon symmetry breaking $X\to\langle X\rangle$ one obtains the \emph{mass operator}
\begin{equation}
  \mathbb{M} \;\equiv\; y\,\mathbb{T}_{\langle X\rangle}\,.
  \label{eq:mass-operator}
\end{equation}

\begin{table}[t]
\centering
\begin{tabular}{ll}
\hline
\textbf{Algebraic object} & \textbf{Physical role} \\
\hline
$X\in J_3(O_C)\cong \mathbf{27}$ & Scalar order parameter; its vev $\langle X\rangle$ sets masses \\
$t(\Psi,\Psi,X)$ & Unique $E_6$-invariant cubic Yukawa (symmetric) \\
$N(X)$ & $E_6$-invariant cubic on $J_3(O_C)$ (scalar potential, alignment) \\
$\mathbb{T}_{\langle X\rangle}$ & Linear map induced by $t$; \emph{mass operator} \\
$\mathbb{M}=y\,\mathbb{T}_{\langle X\rangle}$ & Dirac mass matrix in the fermion EOM \\
Jordan frame $(v_1,v_2,v_3)$ & Three physical masses $\propto (v_1,v_2,v_3)$ in a sector \\
Sym$^3(3)$ weights & LH charge basis (rungs/edges), source of CKM structure \\
Dynkin $Z_2$ swap & Outer automorphism relating down $\leftrightarrow$ lepton ladders \\
\hline
\end{tabular}
\caption{Minimal dictionary between the $J_3(O_C)$ / Sym$^3(3)$ objects and SM semantics.}
\end{table}

\paragraph{Dirac equation from the Lagrangian.}
Varying the action with respect to $\bar\Psi$ yields the fermion equation of motion
\begin{equation}
  i\,\gamma^\mu D_\mu \Psi \;-\; y\,\mathbb{T}_X\,\Psi \;=\; 0\,.
\end{equation}
Expanding about the vacuum ($X=\langle X\rangle$) gives the Dirac equation used later:
\begin{equation}
  i\,\gamma^\mu D_\mu \Psi \;-\; \mathbb{M}\,\Psi \;=\; 0\,,
  \qquad \mathbb{M} = y\,\mathbb{T}_{\langle X\rangle}\,.
  \label{eq:dirac-eq-mass}
\end{equation}
Thus, the $\Psi$ here is precisely the \emph{same} fermion field as in the Lagrangian, and
the “mass matrix” is the linear operator induced by the $E_6$-invariant cubic with $X$
evaluated at its vev.

\paragraph{Jordan frame and three eigenvalues.}
Using the $F_4$ action (automorphisms of $J_3$), $\langle X\rangle$ can be diagonalized to a
Jordan frame with three eigenvalues $(v_1,v_2,v_3)$. In that basis,
$\mathbb{T}_{\langle X\rangle}$ is simultaneously diagonal, and the three physical masses in a
sector are proportional to $(v_1,v_2,v_3)$.
Their symmetric polynomials
\begin{equation}
  T = v_1{+}v_2{+}v_3,\qquad
  S = v_1v_2{+}v_2v_3{+}v_3v_1,\qquad
  D = v_1v_2v_3 \,,
  \label{eq:tsd-jordan}
\end{equation}
reproduce the cubic eigenvalue equation and map directly to the trigonometric parameters
$(\delta,\chi,E)$ used elsewhere in the paper.

\paragraph{Two-component (chiral) notation.}
If one prefers Weyl spinors, split $\Psi=(\Psi_L,\Psi_R)$ in the chosen gauge basis. The
Yukawa term becomes $-y\,t(\Psi_L,\Psi_R,X)+\text{h.c.}$ and, at the vacuum,
\begin{equation}
  i\,\sigma^\mu D_\mu \Psi_L \;-\; y\,\mathbb{T}_{\langle X\rangle} \Psi_R \;=\; 0\,,
  \qquad
  i\,\bar\sigma^\mu D_\mu \Psi_R \;-\; y\,\mathbb{T}_{\langle X\rangle}^{\dagger} \Psi_L \;=\; 0\,,
\end{equation}
so that $\mathbb{M}=y\,\mathbb{T}_{\langle X\rangle}$ again plays the role of the Dirac mass
matrix linking $(\Psi_L,\Psi_R)$.

\paragraph{Interpretation checklist}
\begin{itemize}
  \item \textbf{Notation discipline:} We use $X$ for the field, $\langle X\rangle$ for its vev,
        and $\mathbb{M}=y\,\mathbb{T}_{\langle X\rangle}$ for the resulting mass operator.
  \item \textbf{Same $\Psi$:} The $\Psi$ in the Dirac equation is the same $\Psi$ in the Lagrangian
        (possibly after projection to the relevant chiral components).
  \item \textbf{No octonion products in $\mathcal{L}$:} Only $E_6$-invariant tensors
        ($\eta_{IJ}$, $t_{IJK}$) and the Jordan structure of $J_3(\mathbb{O}_\mathbb{C})$
        enter the action; nonassociativity does not appear at the level of the Lagrangian.
  \item \textbf{Eigenvalue physics:} In a Jordan frame the three masses track the three
        eigenvalues of $\langle X\rangle$, hence the link to $(T,S,D)$ and $(\delta,\chi,E)$.
\end{itemize}

\subsection{UV anchors: how the low-energy construction can sit inside a consistent UV theory}
\label{subsec:UV-anchors}

\paragraph{What we mean by UV completion.}
By ``UV completion'' we mean a consistent high-energy theory (fields, symmetries, and renormalizable interactions) from which our low-energy, $E_6$-covariant description with the invariant cubic $t(\Psi,\Psi,X)$ and norm $N(X)$ descends. Concretely: (i) the gauge group and matter content above the EW scale; (ii) a renormalizable Lagrangian whose vacuum alignment produces our $X\in J_3(\OC)$ order parameter; (iii) a breaking path to the SM; (iv) anomaly freedom; and (v) a discrete $Z_2$ that realizes the Dynkin swap used in the mass-ratio construction.

\paragraph{Anchor A: Renormalizable $E_6$ model (trinified or flavor variant).}
Take an $E_6$ gauge theory with three chiral families $\Psi_i\in \mathbf{27}$, scalar multiplets
$27_H$ (and, if desired, $27'_H$ and $78_H$/$650_H$ for staged breaking), and the
$E_6$-invariant interactions
\[
\mathcal{L}\supset -\,y_{ij}\,t(\Psi_i,\Psi_j,27_H)\;-\;\kappa\,N(27_H)\;-\;\mu^2\langle 27_H,27_H\rangle-\lambda\langle 27_H,27_H\rangle^2+\cdots .
\]
A vacuum alignment in the Jordan frame,
$\langle 27_H\rangle = \mathrm{diag}(v_1,v_2,v_3)\in (1,3,\bar 3)$,
breaks $E_6\to SU(3)_C\times SU(3)_L\times SU(3)_{R/F}$ and induces the mass operator
$\mathbb{M}=y\,\mathbb{T}_{\langle X\rangle}$ with $X\equiv 27_H\in J_3(O_C)$.
The outer $Z_2$ (Dynkin flip) can be imposed as a discrete symmetry exchanging the two simple nodes of the $A_2$ inside $E_6$ (our ``swap''). This realizes, at the renormalizable level, the same $t$ and $N$ we use below and the same swap acting on the trinified embedding. The phenomena we predict (universal spread $\delta^2=3/8$, Sym$^3(3)$ ladder with fixed Clebsches) depend only on the $E_6$-invariant tensors and the Jordan structure, and are therefore insensitive to detailed choices of the heavy sector.

\paragraph{Anchor B: Exceptional embedding (heterotic $E_8\times E_8$ / magic-star motivated).}
In a string-motivated setting, the chain $E_8\supset SU(3)_F\times E_6 \supset SU(3)_F\times SU(3)^3$ is natural in the magic-star/Jordan-pair projection; $27$-plets arise geometrically and the cubic $27^3$ Yukawa descends from the holomorphic invariant of $J_3(\OC)$. The Dynkin $Z_2$ appears as a Weyl/monodromy action that survives breaking as a discrete remnant, acting exactly as our swap between charge and flavor slots. The low-energy Lagrangian is again of the $E_6$-covariant form above, with $X\in J_3(\OC)$ and Yukawas governed by the unique symmetric invariant.

\paragraph{Robustness.}
Both anchors deliver the same \emph{low-energy} structure we actually use: the unique cubic $t(\Psi,\Psi,X)$ and the cubic norm $N(X)$ on $J_3(\OC)$, the three Jordan eigenvalues in a Jordan frame, and the outer $Z_2$ acting on the trinified embedding. Our charged-sector predictions use only these ingredients together with the universal spread $\delta^2=3/8$ and the fixed Sym$^3(3)$ Clebsches. Hence the UV realization serves as a consistency anchor; it does not import tunable parameters into the mass-ratio relations.

Appendix I discusses the quantum stability and RG analysis of the selected vacuum, starting from the Lagrangian proposed above.

\section{Preliminaries}
\label{sec:prelim}

\subsection{From octonions to the complex Clifford algebra \texorpdfstring{$\mathrm{Cl}(6)$}{Cl(6)}}

We work with the standard octonion basis $(1,e_1,\ldots,e_7)$, with the seven imaginary units obeying the Fano-plane multiplication rules (see Fig. \ref{fig:fano} and Table \ref{tab:fano-mults}), and we use the usual complex unit $i$ commuting with the $e_k$.

\begin{figure}[t]
  \centering
    \includegraphics[width=0.98\linewidth]{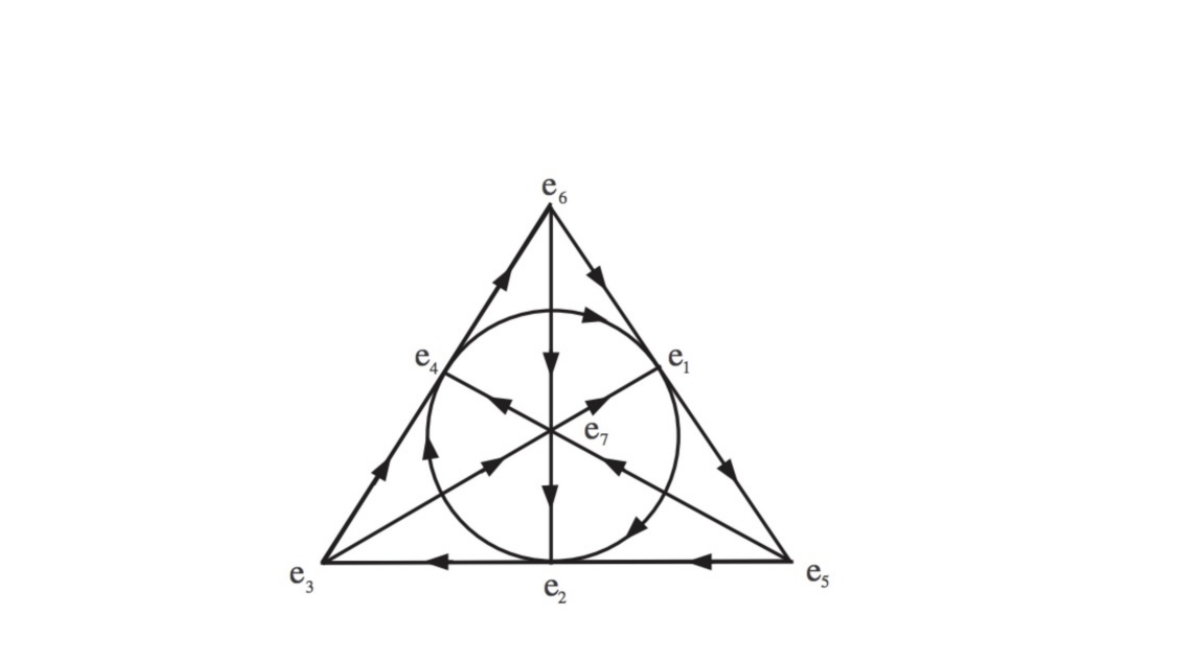}
  \caption{The Fano plane}
  \label{fig:fano}
\end{figure}

\begin{table}[t]
  \centering
  \renewcommand{\arraystretch}{1.2}
  \begingroup
  \setlength{\tabcolsep}{16pt} 
  \begin{tabular}{lll}
    \hline
    $e_1 e_2 = e_4$ & $e_2 e_4 = e_1$ & $e_4 e_1 = e_2$ \\
    $e_2 e_3 = e_5$ & $e_3 e_5 = e_2$ & $e_5 e_2 = e_3$ \\
    $e_3 e_4 = e_6$ & $e_4 e_6 = e_3$ & $e_6 e_3 = e_4$ \\
    $e_6 e_1 = e_5$ & $e_1 e_5 = e_6$ & $e_5 e_6 = e_1$ \\
    $e_3 e_7 = e_1$ & $e_7 e_1 = e_3$ & $e_1 e_3 = e_7$ \\
    $e_5 e_7 = e_4$ & $e_7 e_4 = e_5$ & $e_4 e_5 = e_7$ \\
    $e_6 e_7 = e_2$ & $e_7 e_2 = e_6$ & $e_2 e_6 = e_7$ \\
    \hline
  \end{tabular}
  \endgroup
  \caption{Octonion products along the seven directed lines. Reversing the order flips the sign, e.g. $e_2 e_1 = -\,e_4$.}
  \label{tab:fano-mults}
\end{table}

Instead of the full (nonassociative) octonion algebra, we organize calculations with the \emph{complex Clifford algebra} $\mathrm{Cl}(6)$ built from the octonionic chains $\mathbb{C}\!\otimes\!\mathbb{O}$ (left-to-right maps). This associative algebra is isomorphic to $C\!\left[8\right]$ (the $8\times 8$ complex matrices).

Octonionic chains act on a fiducial octonion on the right. We choose, without loss of generality, the fiducial octonion to be 1, throughout our analysis. In Appendix A we explain in some detail why this natural choice is mathematically justified, even when calculating phases, such as in our CKM parameter analysis.

Choose a maximal totally isotropic subspace (MTIS) of the generating vector space of $\mathrm{Cl}(6)$ spanned by three nilpotents
\begin{equation}
\alpha_1=\frac{-e_5+i\,e_4}{2},\qquad
\alpha_2=\frac{-e_3+i\,e_1}{2},\qquad
\alpha_3=\frac{-e_6+i\,e_2}{2},
\label{eq:alphas}
\end{equation}
with adjoints $\alpha_i^\dagger$ defined by complex/octonionic conjugation.
They satisfy the canonical anticommutation relations
\begin{equation}
\{\alpha_i,\alpha_j\}=0,\qquad
\{\alpha_i^\dagger,\alpha_j^\dagger\}=0,\qquad
\{\alpha_i,\alpha_j^\dagger\}=\delta_{ij}.
\label{eq:CAR}
\end{equation}

A primitive idempotent is obtained from the MTIS ladder,
\begin{equation}
\Pi\;:=\; \omega\,\omega^\dagger\;=\;
\alpha_1\alpha_2\alpha_3\;\alpha_3^\dagger\alpha_2^\dagger\alpha_1^\dagger,
\label{eq:idempotent}
\end{equation}
and the left action of $\mathrm{Cl}(6)$ on $\Pi$ generates one family of SM fermion states (and antifermions). In particular, with
\begin{equation}
\bar V_\nu:=\omega\,\omega^\dagger=\frac{1+i\,e_7}{2},
\end{equation}
the MTIS ladder produces the familiar first-generation multiplet in compact octonionic form:
\begin{align}
\alpha_1^\dagger\bar V_\nu=\frac{e_5+i e_4}{2},\quad
\alpha_2^\dagger\bar V_\nu=\frac{e_3+i e_1}{2},\quad
\alpha_3^\dagger\bar V_\nu=\frac{e_6+i e_2}{2} &\qquad \text{(anti-down triplet)},\label{eq:antiDownTriplet}\\[2pt]
\alpha_3^\dagger\alpha_2^\dagger\bar V_\nu=\frac{e_4+i e_5}{2},\quad
\alpha_1^\dagger\alpha_3^\dagger\bar V_\nu=\frac{e_1+i e_3}{2},\quad
\alpha_2^\dagger\alpha_1^\dagger\bar V_\nu=\frac{e_2+i e_6}{2}
&\qquad \text{(up triplet)},\label{eq:upTriplet}\\[2pt]
\alpha_3^\dagger\alpha_2^\dagger\alpha_1^\dagger\bar V_\nu=-\frac{i+e_7}{2}
&\qquad \text{(positron)}.
\end{align}
(Overall factors here simply reflect the normalization convention for $V_\nu$ chosen above; the content of the multiplets is what matters.)

\subsection{\texorpdfstring{$U(1)$}{U(1)} charge and the color \texorpdfstring{$SU(3)$}{SU(3)} action}

In this basis the electromagnetic $U(1)$ (really $U(1)_{\rm em}$) is generated by the number operator $N$ built from the MTIS pairings; and the electric charge operator $Q=N/3$:
\begin{equation}
Q\;=\;\frac{\alpha_1^\dagger\alpha_1+\alpha_2^\dagger\alpha_2+\alpha_3^\dagger\alpha_3}{3},
\label{eq:charge}
\end{equation}
so that triplets carry $Q=+2/3$, anti-triplets $Q=+1/3$, and leptons $Q=0,1$ as required.

Color $SU(3)_c$ acts via the eight standard Gell-Mann-like generators written in terms of the MTIS ladders:
\begin{align}
&\Lambda_1=-\alpha_2^\dagger\alpha_1-\alpha_1^\dagger\alpha_2, \qquad
\Lambda_2=i\alpha_2^\dagger\alpha_1-i\alpha_1^\dagger\alpha_2, \qquad
\Lambda_3=\alpha_2^\dagger\alpha_2-\alpha_1^\dagger\alpha_1, \nonumber\\
&\Lambda_4=-\alpha_1^\dagger\alpha_3-\alpha_3^\dagger\alpha_1, \qquad
\Lambda_5=-i\alpha_1^\dagger\alpha_3+i\alpha_3^\dagger\alpha_1,\qquad
\Lambda_6=\alpha_3^\dagger\alpha_2-\alpha_2^\dagger\alpha_3, \nonumber\\
&\Lambda_7=i\alpha_3^\dagger\alpha_2-i\alpha_2^\dagger\alpha_3, \qquad
\Lambda_8=\frac{-\alpha_1^\dagger\alpha_1+\alpha_2^\dagger\alpha_2-2\alpha_3^\dagger\alpha_3}{\sqrt{3}}.
\label{eq:su3}
\end{align}
These close on $\mathfrak{su}(3)$ and implement the color triplet/anti-triplet action on the quark states in \eqref{eq:antiDownTriplet}-\eqref{eq:upTriplet}.

\medskip

\noindent\textbf{Remark (on chirality and the gauge action).}
In this subsection we have worked explicitly in a left-handed minimal ideal of ${\rm Cl}(6)$,
following Furey, in order to construct the spectrum of electric charges and the action of the
color generators $\Lambda_i$ on a single chiral multiplet. This choice of a left-handed Clifford
basis is purely a matter of internal bookkeeping; it does \emph{not} imply that the gauge groups
$SU(3)_c$ and $U(1)_{\rm em}$ act only on left-handed fields.

In the full theory, $SU(3)_c$ is the usual QCD gauge group acting vectorlike on quarks:
both $q_L$ and $q_R$ transform in the fundamental of $SU(3)_c$ and couple to the gluons
with the same coupling $g_s$, so that the QCD Lagrangian takes its standard form
\[
\mathcal{L}_{\rm QCD}
 = -\,g_s\,\bar q_L\gamma^\mu T^a G^a_\mu q_L
   -\,g_s\,\bar q_R\gamma^\mu T^a G^a_\mu q_R\,.
\]
Similarly, the number-operator functional $Q = N/3$ defined above is promoted to the
generator of a vectorlike $U(1)_{\rm em}$ acting on Dirac fermions $\psi=(\ell_L,\ell_R,q_L,q_R)$,
\[
\mathcal{L}_{\rm QED}
 = e\,\bar\psi\,\gamma^\mu Q\,A_\mu\,\psi\,,
\]
so that left- and right-handed components carry identical electric charges, as in the
Standard Model. The explicit octonionic construction is used only to fix the charge
assignments and the internal flavor structure; all SM gauge representations are taken to
be those of the usual chiral SM. (See also Appendix~K for a condensed summary of
$SU(3)_c$ and $U(1)_{\rm em}$ in this framework.)

\subsection{Dirac vs.\ Majorana neutrino vacua and first-generation states}

The neutrino vacuum can be taken either as a Dirac state (average of LH/RH Weyl vacua) or as a Majorana state (self-conjugate). We will work with the Majorana neutrino because only then we get the correct mass ratios which match with experiments. As a result, we predict the neutrino to be Majorana.
With our conventions,
\begin{equation}
V_D=\frac{1+i\,e_7}{2},
\qquad
V_\nu^{M}=\frac{i\,e_7}{2},
\label{eq:dirac-majorana-vacua}
\end{equation}
where $V_D$ is the Dirac vacuum and $V_\nu^{M}$ is the Majorana one.

\paragraph{Explicit first-generation kets (Majorana choice).}
Taking the \emph{Majorana} vacuum \eqref{eq:dirac-majorana-vacua} as the algebraic ground state and acting with the MTIS ladders yields the first generation in compact octonionic form:
\begin{align}
&V_\nu^{M}=\frac{i\,e_7}{2} 
&&\text{(Majorana neutrino)}, \label{eq:Majnu}\\[4pt]
&\alpha_1^\dagger V_\nu^{M}=\frac{e_5+i\,e_4}{4},\quad
 \alpha_2^\dagger V_\nu^{M}=\frac{e_3+i\,e_1}{4},\quad
 \alpha_3^\dagger V_\nu^{M}=\frac{e_6+i\,e_2}{4}
&&\text{(anti-down quark triplet)},\label{eq:Maj-antiDown}\\[4pt]
&\alpha_3^\dagger\alpha_2^\dagger V_\nu^{M}=\frac{e_4+i\,e_5}{4},\quad
 \alpha_1^\dagger\alpha_3^\dagger V_\nu^{M}=\frac{e_1+i\,e_3}{4},\quad
 \alpha_2^\dagger\alpha_1^\dagger V_\nu^{M}=\frac{e_2+i\,e_6}{4}
&&\text{(up quark triplet)},\label{eq:Maj-up}\\[4pt]
&\alpha_3^\dagger\alpha_2^\dagger\alpha_1^\dagger V_\nu^{M}
=-\frac{\,i+e_7}{4}
&&\text{(positron)}.\label{eq:Maj-pos}
\end{align}
Relative to the Dirac case the overall factors are halved, consistent with the choice $V_\nu^{M}=(V_D-\tilde V_D^{\,*})/2$. These kets form the starting point for generating higher generations (via the $G_2$/$SU(3)$ flavor action described in Sec. VI) and for building the Jordan $J_3(\mathbb{O}_\C)$ matrices used in our 
mass-ratio analysis. In the next section, we briefly explain the role of $J_3(O_C)$ in our scheme of things.

\section{Jordan \texorpdfstring{$3\times3$}{3x3} Matrix vs. the Dirac Equation}

\paragraph{Statement.} The $3\times3$ Jordan matrix $X\in J_3(\mathbb O_{\mathbb C})$ is an \emph{internal} (flavor/right-handed) object. It is not a Lorentz spinor and does not satisfy the Dirac equation by itself. Instead, it feeds into the Dirac/Weyl equations as the \emph{mass/Yukawa operator} acting on the internal indices.

\subsection*{Spinor--internal factorization of fermion fields}
A physical fermion field is a section of a tensor product bundle
\begin{equation}
\Psi(x)\in S_{(1,3)}\otimes\mathcal R\qquad \dim S_{(1,3)}=4,
\end{equation}
where $S_{(1,3)}$ carries the Lorentz spinor (Dirac) structure and $\mathcal R$ is the internal flavor space. In our construction,
\begin{equation}
\mathcal R\cong \mathrm{Sym}^3(\mathbf 3)\ \text{(LH charge basis)}\quad\text{and}\quad\mathcal R\cong \text{a Jordan module of }J_3(\mathbb O_{!\mathbb C})\ \text{(RH mass basis)}.
\end{equation}
The two bases are related by unitary rotations ($U_L$, $U_R$) that become the CKM/PMNS matrices in the charged sectors.

\subsection*{Dirac and Weyl equations with an internal mass operator}
Let $\gamma^\mu$ generate the Clifford algebra $\mathrm{Cl}(1,3)$ and $D_\mu$ be the gauge--covariant derivative on the spinor bundle and on $\mathcal R$. The Dirac equation for a multiplet reads
\begin{equation}
\Bigl(i\gamma^\mu D_\mu\otimes\mathbf 1_{\mathcal R}-
\mathbf 1\otimes M(X)\Bigr)\Psi(x)=0.\label{eq:dirac}
\end{equation}
In two--component (Weyl) form,
\begin{equation}
i\slashed{D}\psi_L-
M(X)\psi_R=0\qquad
i\slashed{D}\psi_R-
M(X)^\dagger\psi_L=;0,\label{eq:weyl}
\end{equation}
with $\psi_{L/R}$ transforming in the LH/RH copies of $\mathcal R$ and $M(X)$ acting \emph{only} on internal indices. Thus $X$ is not a solution of the Dirac equation; it \emph{determines} the internal mass operator $M(X)$ that appears in Eqs.~\eqref{eq:dirac}--\eqref{eq:weyl}. Put more precisely, it is the mass operator $\mathbb {M}\equiv y\mathbb{T}_{(X)}$ induced by the ${\bf 27}$-cubic with $X$ evaluated at its vev. In a Jordan frame, $\mathbb{M}$ is diagonal with entries $yv_i$.

\subsection*{From Jordan eigenvalues to the Dirac mass matrix}
Let the right--handed sector be encoded by a rank--one idempotent with Jordan eigenvalues $(a,b,c)$ (trace fixed per family), which our ladder analysis interprets as \emph{square roots of masses up to an overall scale}. In a basis aligned with the LH/RH columns,
\begin{equation}
M(X)=U_L\mathrm{diag}(m_1,m_2,m_3)U_R^\dagger,\qquad
m_i\propto|\lambda_i|^2,\ \ \lambda_i\in{a,b,c},\label{eq:massmatrix}
\end{equation}
so that the ratios fixed by $X$ are precisely the $\sqrt{m}$ ratios we derived from the $\mathrm{Sym}^3(\mathbf 3)$ ladder. Equivalently, one may write the chiral factorization.
\begin{equation}
M(X)\propto\bigl(\sqrt m\bigr)_L\bigl(\sqrt m\bigr)_R^{\dagger},\qquad
(\sqrt m)R\sim \mathrm{diag}(a,b,c),
\end{equation}
which becomes diagonal with entries $m_i$ once $U{L,R}$ bring the LH/RH ladders to their respective corner bases.

\subsection*{Coupling to gauge fields and covariance}
Gauge interactions live in $D_\mu=\partial_\mu-i,A_\mu^A T^A$, acting as $\gamma^\mu D_\mu$ on the spinor factor and with representation matrices $T^A$ on $\mathcal R$. The internal operator $M(X)$ commutes with Lorentz, but not necessarily with the gauge action on $\mathcal R$; its misalignment with the LH charge basis $\mathrm{Sym}^3(\mathbf 3)$ is what produces CKM/PMNS mixing.

\subsection*{Remarks on neutrinos}
For Majorana neutrinos, the RH piece may be absent or integrated out. Then $M(X)$ in \eqref{eq:weyl} is replaced by an effective Weinberg operator $\kappa(X)$ of mass dimension five, still built from the same Jordan data but with a symmetric contraction on LH indices. None of this changes the key point: $X$ feeds the \emph{mass term}, not the kinetic Dirac operator.

\subsection*{Chirality assignment (why ``LH'' and ``RH'')}
We work with a chiral decomposition $\psi_{L,R}=P_{L,R}\psi$,
$P_{L,R}=\tfrac12(1\mp\gamma^5)$, and implement the electroweak action as
\begin{equation}
D_\mu=\partial_\mu 
+ig\,W^a_\mu\,\tfrac{\tau^a}{2}\otimes P_L
+ig' B_\mu\,Y\otimes \mathbf 1,
\label{eq:cov-deriv-chiral}
\end{equation}
so that $SU(2)_L$ couples \emph{only} to $\psi_L$.
Our internal space factorizes as
$\mathcal R_L\simeq \mathrm{Sym}^3(\mathbf 3)$ (LH charge basis) and
$\mathcal R_R\simeq$ a Jordan module of $J_3(\mathbb{O}_\C)$ (RH mass basis).
The $3\times 3$ Jordan element $X\in J_3(\mathbb{O}_\C)$ defines the internal mass operator $M(X)$,
and the Dirac/Weyl equations read
\begin{equation}
i\slashed{D}\,\psi_L-M(X)\,\psi_R=0,\qquad
i\slashed{D}\,\psi_R-M(X)^\dagger\,\psi_L=0,
\label{eq:Weyl-chiral}
\end{equation}
with $M(X)$ acting only on internal indices.
Hence the states we call ``LH'' are precisely those placed in the $\mathrm{Sym}^3(\mathbf 3)$ slot that
transform nontrivially under $SU(2)_L$, whereas the states we call ``RH'' are $SU(2)_L$ singlets and
enter via the mass term $\bar\psi_L M(X)\psi_R+\text{h.c.}$.
This makes the chirality assignment representation-theoretic (via $SU(2)_L$) rather than a
separate assumption. 

\subsubsection{Why do we call flavour eigenstates left-handed, and square-root mass eigenstates right-handed?}
\label{sec:LHflavour_RHmass}

In this paper the adjectives ``left-handed'' (LH) and ``right-handed'' (RH) are used in the
standard chiral sense with respect to $\mathrm{Spin}(1,3)$.
For any Dirac spinor $\psi$ we set
\begin{equation}
  \psi_{L,R} := P_{L,R}\psi,
  \qquad
  P_{L,R} := \frac12(1\mp\gamma_5).
\end{equation}
The point is then to specify how our \emph{internal} representation space is paired with this
spacetime chirality.

\paragraph{ LH = gauge/flavour (charge) basis.}
Electroweak interactions are implemented with a chiral projector so that $\mathrm{SU}(2)_L$
acts only on the LH component. A schematic way to write this is
\begin{equation}
  D_\mu \;=\; \nabla_\mu
  \;-\; i g\, W_\mu^a \frac{\tau^a}{2}\,P_L
  \;-\; i g'\, B_\mu\, Y,
\end{equation}
so the charged-current gauge basis is intrinsically a \emph{left-handed} basis.
Accordingly, we place the LH internal degrees of freedom in the $\mathrm{SU}(3)_F$ irrep
\begin{equation}
  R_L \;\cong\; \mathrm{Sym}^3(\mathbf{3}),
\end{equation}
and we take its natural basis to be the basis in which the unbroken gauge quantum numbers
(electric charge, colour, etc.) are diagonal. These are what we mean by the \emph{flavour (charge) eigenstates}.

\paragraph{ RH = mass basis, determined by Jordan data.}
The Jordan element $X\in J_3(\mathbb{O}_{\mathbb{C}})$ determines an internal mass operator
$\mathcal{M}(X)$, which enters the Weyl equations as
\begin{equation}
  i\slashed D\,\psi_L \;-\; \mathcal{M}(X)\,\psi_R \;=\; 0,
  \qquad
  i\slashed D\,\psi_R \;-\; \mathcal{M}(X)^\dagger\,\psi_L \;=\; 0,
\end{equation}
equivalently via the mass term $\bar\psi_L\mathcal{M}(X)\psi_R+\mathrm{h.c.}$.
We therefore place the RH internal slot in a Jordan module $R_R$ so that $\mathcal{M}(X)$ acts
only on internal indices and can be diagonalised there:
\begin{equation}
  R \;=\; R_L \oplus R_R,
  \qquad
  R_R \;\cong\; \text{(a Jordan module of }J_3(\mathbb{O}_{\mathbb{C}})\text{)}.
\end{equation}

\paragraph{ Why ``square-root mass''?}
In our framework the relevant Jordan/U$(1)_{\mathrm{dem}}$ eigenvalue is interpreted as a
\emph{square-root mass} (denote it by $s$):
\begin{equation}
  S_{\mathrm{dem}}\,|\psi\rangle = s\,|\psi\rangle,
  \qquad
  m = \kappa\, s^2,
\end{equation}
so the eigenvectors associated with $s$ are naturally called \emph{square-root mass eigenstates}.
This is why we say the mass basis “lives” in the RH slot: it is the RH states that enter via
$\bar\psi_L\mathcal{M}(X)\psi_R$ and are diagonalised by the Jordan spectrum.

\paragraph{ Where mixing comes from (CKM/PMNS).}
The CKM/PMNS matrices arise because the gauge basis on $R_L$ (charge/flavour eigenstates) is
\emph{not aligned} with the Jordan eigenbasis on $R_R$ (square-root mass eigenstates). Equivalently,
$\mathcal{M}(X)$ is generically misaligned with the $\mathrm{Sym}^3(\mathbf 3)$ charge basis, and
this misalignment is precisely what produces mixing angles and the CP phase.

\paragraph{ Vacuum/triality language (optional sentence).}
Before triality breaking there is no preferred identification of ``charge'' versus ``mass''
directions inside $J_3(\mathbb{O}_{\mathbb{C}})$; selecting an electroweak vacuum corresponds to
choosing this identification, after which LH charge eigenstates and RH square-root mass eigenstates
become segregated.
%
 In a dynamical completion, the SM Higgs supplies the usual Yukawa coupling between $\psi_L$ and $\psi_R$,
 while the additional scalar sector discussed in this work can be interpreted as endowing the RH sector
 with its $\mathrm{U}(1)_{\mathrm{em}}$ charge.

\paragraph{Summary.} The Jordan matrix $X$ does not itself obey the Dirac equation; rather, it fixes the internal mass operator $M(X)$ that appears in the Dirac/Weyl equations. In our framework, the eigenvalues of $X$ encode \emph{square roots of masses}, and - after aligning LH/RH ladders with $U_{L,R}$ - one obtains the observed $\sqrt m$ ratios and mixings via Eq.~\eqref{eq:massmatrix}, while the Lorentz/Clifford dynamics remains standard.

\section{Second- and third-generation LH states: construction, flavor transport, and validity}
\subsection{Methodology}
We want to find the explicit formula for the transformation $T$ that rotates an octonionic state $v$ by a finite amount $\theta$. We are given that this transformation is generated by $G=[{e1}, {e2}]$, which rotates $v$ in the $e_1-e_2$ plane. Then, 
\begin{equation}
G(v) = e_1 (e_2 v) - e_2 (e_1 v) ; \qquad T(v) = \exp(\theta G)(v)
\end{equation}
and 
\begin{equation}
\exp(\theta G) = 1 + \theta G + \frac{\theta^2 G^2}{2!} + \frac{\theta^3 G^3}{3!} + \frac{\theta^4 G^4}{4!} + ...
\end{equation}
where $I$ is the identity operator $I(v) = v$. Let us see how $G^2=G\circ G$ acts on the basis vectors $e_1$ and $e_2$ which define the plane of rotation. We have 
\begin{equation}
G(e_1) = e_1(e_2e_1) - e_2(e_1 e_1) = 2e_2; \quad G(e_2) = e_1 (e_2 e_2) - e_2(e_1 e_2) = -2e_1
\end{equation}
and hence
\begin{equation}
G^2(e_1) = G(G(e_1)) = 2G(e_2)= - 4 e_1, \qquad G^2(e_2) = G(G(e_2)) = -2G(e_1) = -4e_2
\end{equation}
It follows that for a vector $v=ae_1 + be_2$ in the $e_1-e_2$  rotation plane, $G^2(v) = -4v$ and therefore $G^2 = -4I$. Substituting in the above Taylor expansion for $T=\exp(\theta G)$ we get
\begin{equation}
\exp(\theta G) = 1 +\theta G + \frac{\theta^2 (-4I)}{2!} + \frac{\theta^3 (-4G)}{3!} + \frac{\theta^4 (-4I)^2}{4!} = I \cos(2\theta) + \frac{G(v)}{2}\sin(2\theta)
\end{equation}
Defining a geometric angle $\phi=2\theta$ and defining $J(v)=G(v)/2$ we obtain
\begin{equation}
T(v) = v\cos\phi  + J(v)\sin\phi 
\end{equation}
Let us apply this transformation to our basis vector $e_1$:
\begin{equation}
J(e_1) = \frac{G(e_1)}{2} = e_2; \qquad T(e_1) = e_1 \cos\phi + e_2\sin\phi
\end{equation}
This is Euler's rotation formula. The choice $\phi=\pi/2$ rotates the direction $e_1$ to $e_2$. It follows that  an octonionic  direction $e_i$ can be rotated to another direction $e_j$ by using the generator $[e_j, e_i]$ and by choosing $\phi=\pi/2$. Using such transformations we will map octonionic directions to each other as follows, keeping in view the notation used above in labelling the Fano plane
\[
e_{7}\;\mapsto\;e_{5}\;\mapsto\;e_{2}\;\mapsto\;e_{3}\;\mapsto\;e_{4}\;\mapsto\;e_{6}\mapsto\;e_{7},
\]
The direction $e_1$ will be kept fixed and only the other six directions are mapped amongst each other according to the above rule.
\subsection{The flavor \texorpdfstring{$\mathbf{SU}(3)$}{SU(3)}: an explicit order-three generator, exact lepton orbits, and a colored no-go}
\label{sec:flavor-su3}

The six-step cyclic map
\[
e_{7}\;\mapsto\;e_{5}\;\mapsto\;e_{2}\;\mapsto\;e_{3}\;\mapsto\;e_{4}\;\mapsto\;e_{6}\;\mapsto\;e_{7},
\]
with \(e_{1}\) held fixed, is used in Sec.~\ref{sec:gen-states} to \emph{list}
the second- and third-generation representative states.  In this subsection
we establish precisely what that map is and is not; we exhibit the genuine
flavor transport --- an explicit order-three automorphism
\(\Gamma\in\mathrm{Stab}_{G_2}(e_1)\) --- and we prove exactly how far the
orbit picture extends: the colorless generation triples are exact
\(\Gamma\)-orbits, while the colored triples provably cannot be
\(SU(3)_F\)-images of one another.  All statements below have been verified
by direct computation.  We proceed in six steps.

\subsubsection*{1. Fixing an axis picks out \(SU(3)\subset G_{2}\)}

The exceptional Lie group \(G_{2}=Aut(\mathbb O)\) is the full automorphism
group of the complexified octonions \(\mathbb O_{\mathbb C}\).  Its subgroup
that leaves a chosen imaginary direction \(e_{1}\) invariant is isomorphic
to the unitary group \(SU(3)\):
\[
Stab_{G_{2}}(e_{1})\;\cong\;SU(3).
\]
Hence any automorphism fixing \(e_{1}\) defines an element of this
flavor \(SU(3)\), and conversely.

\subsubsection*{2. Two triplets: the external-\(i\) \(\mathbb{C}^3\) and the holomorphic triplet}

With \(i\equiv e_{1}\) we form three complex octonionic combinations
orthogonal to \(1\) and \(e_{1}\):
\[
v_{1}=e_{4}+i\,e_{5},\quad
v_{2}=e_{6}+i\,e_{2},\quad
v_{3}=e_{7}+i\,e_{3}.
\]
Under the Fano-plane multiplication rules adopted in this paper (Sec.~IV\,A)
one finds, by direct computation,
\[
e_{1}\,v_{1}=i\,v_{2},\qquad e_{1}\,v_{2}=i\,v_{1},\qquad e_{1}\,v_{3}=i\,v_{3}\,;
\]
the relation \(e_{1}v_{k}=i\,v_{k}\) holds only for \(k=3\), so
\((v_1,v_2,v_3)\) is \emph{not} a complex triplet for the complex structure
\(L_{e_1}\).  The genuine \(L_{e_1}\)-holomorphic triplet for these
conventions is
\begin{equation}
(z_1,z_2,z_3)=\bigl(e_4+i\,e_2,\;\;e_6+i\,e_5,\;\;e_7+i\,e_3\bigr),
\qquad e_1\,z_k=i\,z_k\ \ (k=1,2,3).
\label{eq:hol-triplet}
\end{equation}
The \(v_k\) remain mutually orthogonal and span a \(\mathbb C^{3}\) with
respect to the \emph{external} unit \(i\); it is on this external-\(i\)
\(\mathbb C^{3}\) that the listing map below acts.

\subsubsection*{3. The six-step listing map is bookkeeping, not a flavor rotation}

Under the six-step map the triplet transforms as
\[
v_{1}\mapsto v_{2},\qquad
v_{2}\mapsto v_{3},\qquad
v_{3}\mapsto i\,\bar v_{1},\qquad \bar v_1 \equiv e_4 - i\,e_5 ,
\]
so the third step exits the triplet.  Viewed as a linear transformation
\(T\) of the seven imaginary directions, the map is orthogonal of order six
but is \emph{not} an automorphism of the octonion product: it preserves none
of the seven Fano lines (for example
\(\{e_1,e_2,e_4\}\mapsto\{e_1,e_3,e_6\}\), which is not a line), and
\(\det T\big|_{\mathrm{Im}\,\mathbb O}=-1\), so
\(T\notin SO(7)\supset G_2\) (all \(49\) imaginary-unit products checked).
\(T\) is therefore a \emph{definitional relabeling} of generation
representatives within the fixed coefficient space --- the role it plays
everywhere downstream --- and not a flavor rotation inside
\(\mathrm{Stab}_{G_2}(e_1)\).

\subsubsection*{4. An explicit order-three flavor generator}

The well-posed question is whether the generation transport can instead be
realized by a genuine element of \(\mathrm{Stab}_{G_2}(e_1)\).  It can,
exactly, in the colorless sectors.  Define the real-linear map
\(\Gamma:\mathbb O\to\mathbb O\) by \(\Gamma(1)=1\), \(\Gamma(e_1)=e_1\), and
\begin{equation}
\Gamma:\quad e_7\mapsto e_5,\quad e_5\mapsto e_2,\quad e_2\mapsto e_7;\qquad
e_3\mapsto -e_6,\quad e_6\mapsto e_4,\quad e_4\mapsto -e_3.
\label{eq:Gamma-def}
\end{equation}

\paragraph*{Theorem (octonionic flavor generator).}
\(\Gamma\) is an octonion automorphism; it fixes \(e_1\) and has order
three.  Hence \(\Gamma\in\mathrm{Stab}_{G_2}(e_1)\cong SU(3)_F\), with
\(\Gamma^3=1\).

\emph{Proof.}  It suffices to check the seven directed Fano lines.  For
example, \(\Gamma(e_1e_2)=\Gamma(e_4)=-e_3\) while
\(\Gamma(e_1)\Gamma(e_2)=e_1e_7=-e_3\); and
\(\Gamma(e_2e_3)=\Gamma(e_5)=e_2\) while
\(\Gamma(e_2)\Gamma(e_3)=e_7(-e_6)=e_2\).  The remaining lines are checked
identically; all \(64\) ordered basis products have been verified by direct
computation.  Equation~\eqref{eq:Gamma-def} gives \(\Gamma(e_1)=e_1\) and
\(\Gamma^3=1\) by inspection. \(\square\)

On the holomorphic triplet \eqref{eq:hol-triplet} one finds
\(\Gamma(z_1)=i\,z_3\), \(\Gamma(z_2)=z_1\), \(\Gamma(z_3)=-i\,z_2\); in the
basis \((z_1,z_2,z_3)\),
\begin{equation}
U_\Gamma=\begin{pmatrix} 0&1&0\\ 0&0&-i\\ i&0&0 \end{pmatrix},\qquad
U_\Gamma^\dagger U_\Gamma=\mathbf{1} ,\qquad \det U_\Gamma=1,\qquad
U_\Gamma^{\,3}=\mathbf{1},
\label{eq:UGamma}
\end{equation}
the explicit \(SU(3)\) element realizing the flavor three-cycle.

\subsubsection*{5. Exact lepton orbits and the non-mixing transport class}

Acting with \(\Gamma\) on the first-generation colorless representatives of
Sec.~\ref{sec:gen-states} gives, exactly,
\[
\Gamma\,\nu_{L,1}=\nu_{L,2},\qquad \Gamma\,\nu_{L,2}=\nu_{L,3},\qquad
\Gamma\,\nu_{L,3}=\nu_{L,1};
\qquad
\Gamma\,e^{+}_{L,g}=\pm\,e^{+}_{L,g+1}\ \ (\mathrm{mod}\ 3),
\]
with the signs matching the conventions recorded in
Sec.~\ref{sec:gen-states} (overall real signs are immaterial to every
overlap phase used in this paper).  The neutrino and positron generation
triples are therefore \emph{exact orbits} of the single flavor element
\(\Gamma\): for the colorless families, generation replication is literal
\(SU(3)_F\) transport.  Moreover \(\Gamma\) fixes the identity line
\(\mathbb C\cdot 1\) and permutes the lepton flavor plane
\(\Pi_\ell=\mathrm{span}(e_7,e_5,e_2)\) into itself, so \(\Gamma\) lies
inside the non-mixing transport class of the leptonic reality theorem
(Sec.~\ref{sec:neutrino-sector-PMNS}, subsection~C, paragraph~g): every
lepton flavor-transport amplitude generated by \(\Gamma\) is exactly real,
and the conclusion \(J_\ell=0\) applies to the generation map itself.

\subsubsection*{6. Conserved invariants and the colored no-go}

The colored triples listed in Sec.~\ref{sec:gen-states} cannot be realized
this way --- not by \(\Gamma\), and not by any other element of the flavor
group.  Under \(\mathrm{Stab}_{G_2}(e_1)\) the complexified octonions
decompose as
\begin{equation}
\mathbb C\otimes\mathbb O\;=\;\mathbb C\,1\;\oplus\;\mathbb C\,e_1\;\oplus\;
\mathbf 3\;\oplus\;\bar{\mathbf 3},
\label{eq:su3-decomp}
\end{equation}
where \(\mathbf 3\oplus\bar{\mathbf 3}\) are the \(\pm i\) eigenspaces of the
complex structure \(J=L_{e_1}\) on the sextet orthogonal to \(\{1,e_1\}\).
Every element of the flavor group acts \(\mathbb C\)-linearly (with respect
to the external unit), unitarily, fixes \(1\) and \(e_1\), and commutes with
\(J\); hence it preserves the four component norms
\(\bigl(|x_{1}|^2,\;|x_{e_1}|^2,\;\|P_{\mathbf 3}x\|^2,\;
\|P_{\bar{\mathbf 3}}x\|^2\bigr)\) of any state \(x\) (verified additionally
on \(200\) random stabilizer exponentials).  A generation triple can consist
of mutual \(SU(3)_F\) images only if these invariants are
generation-independent.  For the listed representatives, expressed as
fractions of each state's squared norm:
\begin{center}
\renewcommand{\arraystretch}{1.2}
\begin{tabular}{l c c c}
\hline
 & gen 1 & gen 2 & gen 3 \\ \hline
\(\nu_L\) & \((0,0,\tfrac12,\tfrac12)\) & \((0,0,\tfrac12,\tfrac12)\) & \((0,0,\tfrac12,\tfrac12)\) \\
\(e^+_L\) & \((\tfrac12,0,\tfrac14,\tfrac14)\) & \((\tfrac12,0,\tfrac14,\tfrac14)\) & \((\tfrac12,0,\tfrac14,\tfrac14)\) \\
\(\bar d_L\) & \((0,0,\tfrac12,\tfrac12)\) & \((0,0,\tfrac12,\tfrac12)\) & \((0,0,0,1)\) \\
\(u_L\) & \((0,0,\tfrac12,\tfrac12)\) & \((0,0,\tfrac12,\tfrac12)\) & \((0,0,1,0)\) \\ \hline
\end{tabular}
\end{center}

\paragraph*{Theorem (colored no-go).}
There exist no elements \(U,V\in\mathrm{Stab}_{G_2}(e_1)\) with
\(\bar d_{L,2}\propto U\,\bar d_{L,1}\) and
\(\bar d_{L,3}\propto V\,\bar d_{L,1}\), and likewise for the up triple; in
particular, neither colored triple is the orbit of any order-three flavor
element.

\emph{Proof.}  Phases do not change the invariant norms.  The
third-generation states are pure (\(\bar d_{L,3}\) pure \(\bar{\mathbf 3}\),
\(u_{L,3}\) pure \(\mathbf 3\)) while the first- and second-generation
states mix \(\mathbf 3\) and \(\bar{\mathbf 3}\) equally; no flavor element
connects states with unequal invariants. \(\square\)

A forced-product computation shows the same obstruction dynamically: any
automorphism implementing the colorless cycle \(e_7\mapsto e_5\),
\(e_5\mapsto e_2\) must, by the Fano line \(e_5e_7=e_4\), send
\(e_4\mapsto e_2e_5=-e_3\) --- precisely the action of \(\Gamma\) --- and can
never send \(e_4\mapsto e_6\) as the colored listing requires.  Three
remarks delimit the result.

\emph{(i) The obstruction is not the coassociative slice.}  The listed quark
triples lie on the slice (\(\Sigma=3/8\), \(\Re((xy)z)=0\);
Sec.~\ref{sec:jordan-LH}), and that is where the Jordan spectra and all mass
ratios are computed; nothing in the spectral construction is touched.
Conversely, the anti-down representative does admit an invariant-equivalent
exact \(\Gamma\)-orbit variant,
\(\{(e_5+ie_4)/4,\;(e_2-ie_3)/4,\;(e_7+ie_6)/4\}\), which remains on the
slice with identical Jordan data
\((T,S,D)=(1,-\tfrac{1}{24},-\tfrac{19}{216})\); the \(\Gamma\)-orbit
through the up representative, by contrast, leaves the slice,
\(\Re((xy)z)=-\tfrac{1}{64}\), so no on-slice \(\Gamma\)-orbit passes
through it.

\emph{(ii) Fiber-level reading.}  The single-octonion display packs the
three generation fibers (\(3\times 8\) states) into one
\(\mathbb C\otimes\mathbb O\) of eight complex dimensions.  The colorless
sector survives the packing because \(\mathbb C\,1\oplus\Pi_\ell\) is
\(\Gamma\)-closed; the colored sector cannot, and its family label is
organized by \(SU(3)_F\) at the level of the three Peirce fibers of
\(J_3(\mathbb O_{\mathbb C})\) (Sec.~\ref{sec:clifford-fiber}) rather than
as a single-octonion orbit.  Equivalently: flavor-orbit generation
replication is exact in the colorless sectors and impossible in the colored
sectors within the single-octonion encoding.

\emph{(iii) Why no exact flavor orbit could carry the hierarchy anyway
(Schur).}  An exact unbroken flavor \(SU(3)\) commuting with the mass
operator would force degeneracy inside an irreducible generation triplet;
non-degenerate generations require a flavor-breaking order parameter.  In
this framework the breaking is carried by the right-handed Jordan-frame
square-root operator: \(SU(3)_{F,L}\) organizes the charge basis,
\(SU(3)_{F,R}\) the \(\sqrt{m}\) weight arena, and the hierarchy is the
invariant content of the misalignment between the two selected vacua.  The
exact-orbit statement of step~5 is a statement about the flavor (charge)
basis, where degeneracy of the label is precisely what symmetry requires.

Throughout Sec.~VI, the $SU(3)\subset G_2$ is the left--handed flavor group
$SU(3)_{F,L}\subset G_2\subset F_4\subset E_6^L$; the color $SU(3)_C$ remains the
gauged subgroup in the usual trinification chain of the same $E_6^L$.
\subsection{Electric charge operator and its invariance under $SU(3)$ rotations}
\label{sec:charge-invariance}

In Furey’s Clifford-algebra construction the electromagnetic $U(1)_{\rm em}$
is generated by the number operator $N$.
Its eigenvalues $n=0,1,2,3$ yield the physical charge
\(
  Q_{\rm em}=N/3
\)
with quantized units of $\tfrac13$.
The flavor $SU(3)$ of this paper is the subgroup of $G_{2}=Aut(\mathbb{O})$
that \emph{fixes} the octonionic unit $e_{1}$ (Sec.~\ref{sec:flavor-su3}).
A second, distinct $SU(3)$ action --- the within-fiber one --- is generated
by the ladder bilinears
\(
  E_{ij}=a_i^\dagger a_j,\;
  H_k=a_k^\dagger a_k - a_{k+1}^\dagger a_{k+1}
\), etc.
A direct commutator check shows
\[
  [\,N,\;E_{ij}\,]=0,\quad [\,N,\;H_k\,]=0,
\]
and thus
\(
  [\,Q_{\rm em},\,T\,]=0
\)
for every generator of this ladder-realized $\mathfrak{su}(3)$.  Hence the
within-fiber rotations commute with $Q_{\rm em}$ and cannot change the
charge eigenvalue of any fermion state.  Direct computation identifies this
ladder $SU(3)$ precisely: exponentials of its anti-Hermitian combinations
(e.g.\ $E_{12}-E_{21}$ and $i(E_{12}+E_{21})$) are \emph{real} octonion
automorphisms that fix the vacuum axis $e_{7}$ and move $e_{1}$ --- it is
the Furey-type $SU(3)=\mathrm{Stab}_{G_2}(e_7)$ acting inside the
generation-1 fiber.  It is therefore \emph{not} the flavor group
$\mathrm{Stab}_{G_2}(e_1)$ of Sec.~\ref{sec:flavor-su3}: the generation
transport $\Gamma$ fixes $e_1$ but moves the vacuum axis
($\Gamma: e_7\mapsto e_5$), maps the generation-1 Dirac vacuum
$(1+ie_7)/2$ to the generation-2 vacuum $(1+ie_5)/2$, and consequently does
not commute with the \emph{fixed} first-generation $N$.  Charge under
$\Gamma$ is instead \emph{fiber-covariant}:
$\Gamma N^{(g)}\Gamma^{-1}=N^{(g+1)}$, so a charge-$q$ eigenstate of fiber
$g$ is carried to a charge-$q$ eigenstate of fiber $g{+}1$.

\vspace{4pt}
\noindent
This identification matches Furey's original setup, where the
charge-preserving $SU(3)$ is precisely the subgroup fixing $e_{7}$.  Both
actions preserve the charge labels, each in its own sense: the within-fiber
$\mathrm{Stab}_{G_2}(e_7)$ exactly, the fiber-transporting
$\mathrm{Stab}_{G_2}(e_1)$ fiber-covariantly.

\medskip\noindent\textbf{Scope of these statements.}  The commutator
argument applies to genuine elements of
\(\mathfrak{su}(3)_{\rm flavor}=\mathrm{Lie}\,\mathrm{Stab}_{G_2}(e_1)\); the
six-step relabeling map \(T\) of Sec.~\ref{sec:flavor-su3} is \emph{not}
such an element, and charge preservation under \(T\) does not follow from it.
Direct computation confirms this: under the \emph{fixed} first-generation
number operator, the \(T\)-images of the first-generation states are either
exact eigenstates with species-interchanged charges (e.g.\
\(T(\bar d_{L,1})=(e_2+i e_6)/4\) is the third colour component of the
first-generation \emph{up} state, with \(Q_{\rm em}=2/3\)) or not
\(N\)-eigenstates at all (the \(\nu\) and \(e^{+}\) images).  The correct
statement is fiber-local: electric charge for generation \(k\) is defined by
the generation-\(k\) copy of the construction in its own Peirce slot, under
which the listed \(\bar d,u,e^{+}\) representatives carry the standard
charges by construction.  The neutrino direction requires separate comment:
\(\nu_{L,1}=i e_7/2\) equals the idempotent difference
\(\bar V_\nu - V_\nu\), i.e.\ a superposition of the charge-0 idempotent
\(\bar V_\nu=(1+ie_7)/2\) and \(i\) times the charge-1 top-of-ladder state;
it is not an \(N\)-eigenstate in any single fiber.  It is used in this paper
as the Majorana-sector flavor direction whose \((1,e_7)\) complex-line
structure enters the PMNS analysis of
Sec.~\ref{sec:neutrino-sector-PMNS}; the per-fiber charge-zero slot proper is
\(\bar V_\nu\).  A fully fiber-resolved treatment of the neutrino
representatives is left to future work.
\subsection{Clifford and octonionic realizations of the SU(3) actions}
\label{sec:clifford-octonion-su3}

The flavor \(SU(3)\subset G_{2}\) may be generated in two equivalent ways:

\begin{enumerate}
\item \textbf{Clifford basis:} Define
  \[
    a_i=\frac{e_{2i}-i\,e_{2i+1}}{2},\quad
    a_i^\dagger=\frac{e_{2i}+i\,e_{2i+1}}{2},
    \quad i=1,2,3,
  \]
  and set
  \[
    E_{ij}=a_i^\dagger\,a_j,
    \quad
    H_k=a_k^\dagger a_k - a_{k+1}^\dagger a_{k+1}.
  \]
  One checks directly that \(\{E_{ij},H_k\}\) close on the \(\mathfrak{su}(3)\)
  commutation relations and commute with the number operator \(N=\sum_i a_i^\dagger a_i\).

\item \textbf{Octonionic commutator basis:} Equivalently, pick the six imaginary
  units orthogonal to the fixed axis \(e_1\), grouped into three “complex” pairs
  \((e_4,e_5)\), \((e_6,e_2)\), \((e_3,e_7)\).  Then form the real combinations
  \[
    T_{1}=[e_{4},e_{5}],\quad
    T_{2}=[e_{6},e_{2}],\quad
    T_{3}=[e_{3},e_{7}],
  \]
  plus the corresponding Cartan generators from pairwise commutators.  These
  likewise satisfy the same \(\mathfrak{su}(3)\) algebra.

\end{enumerate}

In fact, each Clifford generator \(E_{ij}=a_i^\dagger a_j\) can be expanded
in terms of \([e_a,e_b]\) and vice versa, because
\[
  a_i^\dagger a_j
    =\tfrac14(e_{2i}+i\,e_{2i+1})(e_{2j}-i\,e_{2j+1})
    \;\propto\;[e_{2i},e_{2j}] + i\,[e_{2i},e_{2j+1}]+\cdots.
\]
Hence the two constructions are closely related at the level of
bilinears.  They are, however, \emph{not} the identical subgroup of $G_2$:
direct computation shows that exponentials of the anti-Hermitian ladder
combinations (e.g.\ $E_{12}-E_{21}$ and $i(E_{12}+E_{21})$) are real
octonion automorphisms that fix the vacuum axis $e_{7}$ and move $e_{1}$
--- the within-fiber $\mathrm{Stab}_{G_2}(e_7)$ --- whereas the flavor
group of this paper is $\mathrm{Stab}_{G_2}(e_1)$, whose generation
transport $\Gamma$ (Sec.~\ref{sec:flavor-su3}) fixes $e_1$, moves $e_7$,
and does not commute with the fixed number operator.  The two $SU(3)$'s
play complementary roles: the first acts inside a fiber at fixed
generation; the second transports between fibers at fixed charge label
(fiber-covariantly), exactly in the colorless sectors and at the
Peirce-fiber level in the colored sectors (Sec.~\ref{sec:flavor-su3}).  
\subsection{Higher-generation flavor states via the \texorpdfstring{$SU(3)$}{SU(3)} cycle}
\label{sec:gen-states}

Starting from the first-generation left-handed flavor representatives (\(\bar d_{L,1},u_{L,1},e^{+}_{L,1}\) are charge eigenstates of the generation-1 fiber; for the status of \(\nu_{L,1}\) see Sec.~\ref{sec:charge-invariance})
\[
\nu_{L,1} \;=\;\frac{i\,e_{7}}{2},\quad
\bar d_{L,1} \;=\;\frac{e_{5}+i\,e_{4}}{4},\quad
u_{L,1} \;=\;\frac{e_{4}+i\,e_{5}}{4},\quad
e^+_{L,1} \;=\;\frac{i+e_{7}}{4},
\]
we act with the generation-relabeling map \(T\) of Sec.~\ref{sec:flavor-su3},
\[
\bigl(e_{7}\to e_{5}\to e_{2}\to e_{3}\to e_{4}\to e_{6}\to e_{7},\;
e_{1}\ {\rm fixed}\bigr)
\]
to generate the second and third generations.  Explicitly:

\[
\begin{aligned}
\nu_{L,2} &= \tfrac{i\,e_{5}}{2}, & \nu_{L,3} &= \tfrac{i\,e_{2}}{2},\\[4pt]
\bar d_{L,2} &= \tfrac{e_{2}+i\,e_{6}}{4}, & 
\bar d_{L,3} &= \tfrac{e_{3}+i\,e_{7}}{4},\\[4pt]
u_{L,2} &= \tfrac{e_{6}+i\,e_{2}}{4}, & 
u_{L,3} &= \tfrac{e_{7}+i\,e_{3}}{4},\\[4pt]
e^+_{L,2} &= -\tfrac{i+e_{5}}{4}, & 
e^+_{L,3} &= -\tfrac{i+e_{2}}{4}.
\end{aligned}
\]

\noindent
Thus each family state is cycled through the sextet
\(\{e_{7},e_{5},e_{2},e_{3},e_{4},e_{6}\}\), with \(e_{1}\) held fixed.
\textbf{Status.}  As established in Sec.~\ref{sec:flavor-su3}, this
cycling is a definitional relabeling \(T\) of generation representatives,
not an \(SU(3)\subset G_2\) flavor rotation; charges are fiber-local in the
sense of Sec.~\ref{sec:charge-invariance}.  For the colorless families the
listed triples coincide, ray by ray, with exact orbits of the order-three
flavor automorphism \(\Gamma\) of Sec.~\ref{sec:flavor-su3}
(\(\Gamma\,\nu_{L,g}=\nu_{L,g+1}\) exactly, and
\(\Gamma\,e^{+}_{L,g}=\pm\,e^{+}_{L,g+1}\)); for the colored families no
such realization exists for any element of \(SU(3)_F\) (the no-go of
Sec.~\ref{sec:flavor-su3}), and the family label is organized at the
Peirce-fiber level.  Two conventions are recorded for completeness: (i) the
second- and third-generation positron representatives carry an extra
overall sign relative to the raw \(T\)-images
(\(T(e^+_{L,1})=+(i+e_5)/4\) whereas the list sets
\(e^+_{L,2}=-(i+e_5)/4\)); overall real signs are immaterial to every
overlap phase used in this paper.  (ii) \(T\) has order six; generations
correspond to \(T^{0},T^{1},T^{2}\), while \(T^{3}\) interchanges the up-
and anti-down-type pair slots.

\subsection{Generators of the 1\texorpdfstring{$\to$}{→}2 and 1\texorpdfstring{$\to$}{→}3 flavor maps}

Each map “\(e_{a} \to e_{b}\)” is generated by the commutator
\([e_{a},e_{b}]\).  For a state built from a complex pair
\(\bigl(e_{a}+i\,e_{c}\bigr)\), the combined rotation in the two planes
\((e_{a},e_{b})\) and \((e_{c},e_{d})\) is generated by
\([e_{a},e_{b}]+[e_{c},e_{d}]\).

\begin{figure}[t]
  \centering
    \includegraphics[width=1.00\linewidth]{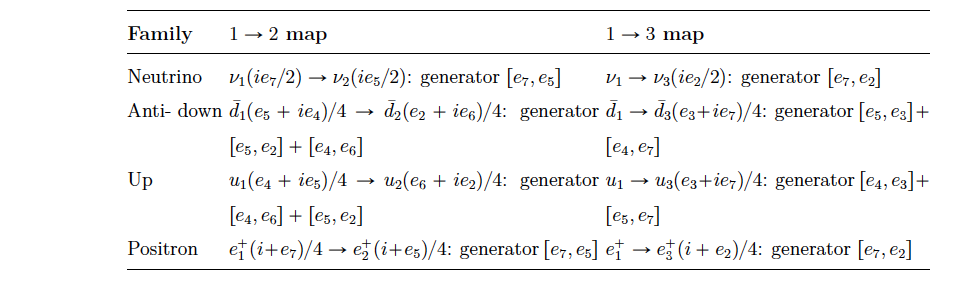}
  \caption*{}
  \label{fig:particles}
\end{figure}

In contrast to Furey, who uses the $\mathbb C\otimes\mathbb O$ ladder algebra
to encode a single color--triplet generation with gauge
$SU(3)_c\times U(1)_{\mathrm{em}}$, here we supplement the same octonionic basis
by a global flavor group $SU(3)_F\subset G_2$ that acts in family space
while preserving the Standard Model charge labels fiber-covariantly
(Sec.~\ref{sec:charge-invariance}). Furey's $(3+3+1+1)$ multiplet is thus
promoted to three copies organized by $SU(3)_F$ --- exactly, via the
order-three transport $\Gamma$, in the colorless sectors, and at the
Peirce-fiber level in the colored sectors (Sec.~\ref{sec:flavor-su3}) ---
giving the observed three generations.

We do not claim that $\mathbb C\otimes\mathbb O$ alone produces
$(3+3+3+3)$ states; rather,
\begin{center}
$\mathbb C\otimes\mathbb O$ + Clifford/MTIS ladders $\Rightarrow$ one Furey generation,\\[2pt]
$SU(3)_F\subset G_2$ acting on that multiplet $\Rightarrow$ three families.
\end{center}

\paragraph*{Distinct roles of color and flavor $SU(3)$.}

We note that $SU(3)_c\neq SU(3)_F$ physically. In our framework:
\begin{itemize}
\item $SU(3)_c$ is the gauged subgroup appearing in the usual trinification chain
inside $E_6^L$; its generators act on the color index of the quark states.
\item $SU(3)_F$ is a global subgroup of $G_2$, acting on the octonionic
directions and commuting with both $Q_{\mathrm{em}}$ and the color algebra.
\end{itemize}
The two groups are isomorphic as Lie algebras, but they act on different tensor
factors (color vs.\ family), and only $SU(3)_c$ is gauged. This is fully
consistent with the $E_8\times E_8$ embedding discussed in our earlier work \cite{Kaushik}
where a generational $SU(3)_{\mathrm{gen}}$ and a color $SU(3)_c$ appear
simultaneously in the branching of $E_6\subset E_8$.

\paragraph{How the listed $1\to2$ move is implemented}
Each per-species step of the listing map is a rotation by $\pi/2$ in two
coordinate $2$-planes.  For the up representative, the abstract $SO(7)$
rotation acting as $e_4\to e_6$ and $e_5\to e_2$ (identity on
$e_1,e_3,e_7$) sends
\(\tfrac{e_{4}+i\,e_{5}}{4}\to\tfrac{e_{6}+i\,e_{2}}{4}\)
with \(e_{1}\) (and the fiber charge label) untouched; this is the
bookkeeping content of the relabeling map \(T\).  Such two-plane rotations
are \emph{not} octonion automorphisms (consistently with
Sec.~\ref{sec:flavor-su3}), and their naive octonionic realizations fail:
the commutator \([e_{4},e_{6}]+[e_{5},e_{2}]\) equals \(4e_{3}\) as an
octonion, and the operator \(x\mapsto[4e_{3},x]\), while reproducing
\(e_4\to e_6\) and \(i\,e_5\to i\,e_2\), moves \(e_1\)
(\([4e_{3},e_{1}]=-8e_{7}\)) and is not a derivation of \(\mathbb O\).  The
genuine \(SU(3)_F\) transport is the automorphism \(\Gamma\) of
Sec.~\ref{sec:flavor-su3}, which implements the colorless \(1\to2\) moves
exactly and provably cannot implement the colored ones.

\subsection{What validates the second- and third-generation states, and what $SU(3)_F$ is for}

The no-go of the flavor-$SU(3)$ subsection invites the question a careful reader
will ask immediately: if the listing map is not an element of $SU(3)_F$, what
entitles us to call the listed objects \emph{valid} second- and
third-generation states?  This subsection answers that question explicitly
and then delimits exactly what role the flavor $SU(3)$ does play.  All
numbered claims below have been verified by direct computation.
Throughout, $u_1=(e_4+ie_5)/4$, $u_2=(e_6+ie_2)/4$, $u_3=(e_7+ie_3)/4$ denote
the displayed up representatives.

\paragraph*{(a) Validity criteria.}
Generation number is not the charge of any symmetry --- in this framework or
in the Standard Model, where nothing maps the electron to the muon; the muon
is a second-generation lepton because it carries the same gauge quantum
numbers \emph{in its own right}.  A candidate higher-generation state must
therefore satisfy: (i)~the correct gauge quantum numbers under the relevant
grading operators; (ii)~the correct normalization; (iii)~linear independence
from the lower generations; and (iv)~mutual consistency of the three
representatives as a multiplet (here: the LH--RH frame relations and the
coassociative-slice data feeding the spectrum).  Symmetry parentage --- being
the image of generation~1 under a flavor transformation --- is a bonus a
construction may or may not possess; it is not part of the definition.  The
failure of the listing map to lie in $SU(3)_F$ therefore does not impugn the
listed states; what it removes is the right to \emph{derive} their properties
from a transformation, so each required property must be verified directly.
For the displayed representatives this verification is complete: norms $1/8$,
pairwise orthogonality, $\delta^2=3/8$, slice membership, and the LH--RH
frame relations all hold.

\paragraph*{(b) Symmetry transport exists --- with charges certified by
conjugation [verified].}
Equip each lepton axis $a\in\{7,5,2\}$ with its own $\mathrm{Cl}(6)$
apparatus: oriented Fano pairs $(b_k,c_k)$ with $e_{b_k}e_{c_k}=e_a$,
creation operators $a^{(a)\dagger}_k=\tfrac12L_{\,e_{c_k}+i e_{b_k}}$, and
number operator $N^{(a)}=\sum_k a^{(a)\dagger}_k a^{(a)}_k$; the canonical
anticommutation relations hold in each fiber.  Then $\Gamma$ of
its defining equation conjugates the \emph{entire} generation-$g$
apparatus onto generation-$(g{+}1)$'s,
\begin{equation}
\Gamma\,N^{(7)}\,\Gamma^{-1}=N^{(5)},\qquad
\Gamma\,N^{(5)}\,\Gamma^{-1}=N^{(2)}\qquad\text{(exactly)},
\end{equation}
the fiber covariance of the charge-invariance subsection in canonical form.
Consequently
\begin{equation}
\Gamma\,u_1=\tfrac{i}{4}\,(e_2+ie_3),\qquad
\Gamma^2 u_1=\tfrac14\,(e_6+ie_7)
\end{equation}
are valid second- and third-generation up states with charge certified by
conjugation: $N^{(5)}(\Gamma u_1)=2\,(\Gamma u_1)$ and
$N^{(2)}(\Gamma^2u_1)=2\,(\Gamma^2u_1)$, i.e.\ $Q=2/3$ under each fiber's own
charge operator.  Symmetry-justified quark generations therefore \emph{do}
exist in this encoding.

\paragraph*{(c) The dichotomy: transport or slice, never both.}
The $\Gamma$-orbit triple lies off the coassociative slice,
$[(xy)z]_0=-1/64$ (the flavor-$SU(3)$ subsection), and would tilt the symmetric
spectrum away from the closed-form ratios; conversely, the displayed
on-slice triple is provably not an orbit (the colored no-go).  Within this
encoding one may have symmetry-transported quark generations or
coassociative ones, never both.  The displayed representatives are thus a
\emph{spectral gauge choice}, disciplined by the invariants of~(a) and
anchored algebraically by the rung relation $\alpha_2u_1=-u_2$ --- the
occupation flip that, by the no-go, is the \emph{only} transport available in
the quark sector, and whose rotor class is precisely the one that carries
the CKM phase (Sec.~\ref{sec:ckm} and Ref.~\cite{TeliSinghLetter2026}).  The quark--lepton asymmetry is therefore
structural: lepton generation transport is an automorphism (real class,
$J_\ell=0$), while quark transport is necessarily ladder-generated (the
class with $\chi$).

\paragraph*{(d) Why the bookkeeping reading is mandatory}
Read as fiber-1 objects, the displayed representatives do not even carry
up-quark quantum numbers: $u_2$ is an $N^{(7)}{=}1$ eigenstate (fiber-1
charge $1/3$), and $u_3$ is graded by neither $N^{(7)}$ nor $N^{(2)}$.
Their up-quark identity is carried by the slot index of the Jordan element
--- the Peirce fiber (the triality section) --- and the listing map of
the state-listing subsection is the coefficient dictionary of that three-fiber
bookkeeping: a frame convention, making no transformation claim.  A single
unified framework in which the per-fiber charge grading and the cross-fiber
rung transport act on the same objects remains the central open architecture
problem of this construction; we record it as such.

\paragraph*{(e) The role of $SU(3)_F$, stated exactly.}
The flavor $SU(3)=\mathrm{Stab}_{G_2}(e_1)$ plays three verified roles.
(i)~\emph{Frame fixing}: it singles out $e_1$, hence the Cabibbo plane
$\mathrm{span}(e_1,e_3)$ and the disjointness statement that localizes the
single quark phase away from the lepton flavor plane.
(ii)~\emph{Invariant theory}: its decomposition
$\mathbb C\otimes\mathbb O=\mathbb C1\oplus\mathbb Ce_1\oplus\mathbf 3\oplus
\bar{\mathbf 3}$ supplies the conserved quantities that prove both the exact
lepton-orbit theorem and the colored no-go --- it is the diagnostic group of
the flavor sector.  (iii)~\emph{Lepton transport}: its order-three element
$\Gamma$ realizes the colorless triples as exact orbits inside the real
(non-mixing) class, forcing $J_\ell=0$.  Two roles it does \emph{not} play,
and which this framework does not claim: it does not manufacture the
generation states (forbidden outright for quarks by the no-go; even for
leptons, $\Gamma$-transport is a verified \emph{property} of independently
constructed states, not their definition); and it does not by itself explain
why there are \emph{three} generations --- three-ness enters through the
$3\times3$ Peirce structure of $J_3(\mathbb O_{\mathbb C})$, the largest
octonionic Jordan algebra (no $J_n(\mathbb O)$ exists for $n\ge4$), together
with triality's three inequivalent $\mathbf 8$'s
(the motivation and triality sections).  In particular,
$SU(3)_F$ is not a horizontal family symmetry in the model-building sense:
the generations do not sit in a gauged or spontaneously broken $\mathbf 3$ of
$SU(3)_F$ whose breaking generates the Yukawas.  It is the kinematic symmetry
of the flavor frame --- it constrains which transports exist and which phases
they may carry.  It is, so to speak, the referee of the flavor sector, not
its factory.

\section{Jordan eigenvalues from the exceptional Jordan algebra for three LH fermion generations}\label{sec:jordan-LH}


\paragraph{Setup.}
For each family we form a Hermitian Jordan matrix in $J_3(\mathbb{O}_{\mathbb{C}})$ of the form
\begin{equation*}
X(q;x,y,z)\;=\;
\begin{pmatrix}
q & x & \bar z\\
\bar x & q & y\\
z & \bar y & q
\end{pmatrix},\qquad q\in\mathbb{R},
\end{equation*}
with $x,y,z\in\mathbb{C}\otimes\mathbb{O}$ taken from the flavor triplets fixed in the previous section.
The Jordan invariants (trace, quadratic, cubic) are
\begin{equation*}
T=3q,\qquad
S=3q^2-\bigl(\|x\|^2+\|y\|^2+\|z\|^2\bigr),\qquad
D=q^3-q\bigl(\|x\|^2+\|y\|^2+\|z\|^2\bigr)+2\,\Re\!\bigl((xy)z\bigr),
\end{equation*}
and the characteristic polynomial is $\chi(\lambda)=\lambda^3-T\lambda^2+S\lambda-D$.

{\emph{(1)}}
We note that the complexified exceptional Jordan algebra $J_3(\mathbb C\otimes\mathbb O)$
has $27$ complex degrees of freedom: $24$ off-diagonal and $3$ independent diagonals.
What
we actually use is the standard decomposition of any
$X\in J_3(\mathbb C\otimes\mathbb O)$ into its trace part and traceless part,
\[
X \;=\; \frac{T(X)}{3}\,\mathbf 1_3 \;+\; X_0,\qquad T(X):=\mathrm{tr}\,X,\qquad \mathrm{tr}\,X_0=0.
\]
For the particular class of internal mass operators we study, we \emph{choose} to write
\[
X \;=\; q\,\mathbf 1_3 + Y,\qquad \mathrm{tr}\,Y=0,
\]
and refer to $q:=T/3$ as the ``centre'' of the Jordan spectrum in that sector. Thus the
statement ``all three diagonals are set to $q$'' is not intended as a definition of the general
element of $J_3(\mathbb C\otimes\mathbb O)$, but as a \emph{special ansatz} for the
physical order parameter $X$ in each charged sector, after aligning along a chosen Jordan
frame.

From now on we restrict attention to Hermitian elements of the form
$X = q\,\mathbf 1_3 + Y$ with $\mathrm{tr}\,Y=0$. For such an $X$, the trace is $T=3q$.
This leaves the full $27$ d.o.f.\ of the algebra intact; we are simply choosing a
one-parameter trace direction (the identity) plus a traceless part, and then specialising
$Y$ further for the mass-ratio problem.

\medskip
\emph{(2) Why $q$ is taken real in $J_3(\mathbb C\otimes\mathbb O)$.}

Mathematically, $J_3(\mathbb C\otimes\mathbb O)$ is a complex Jordan algebra. In the
physical application we only use \emph{Hermitian} elements (with respect to the combined
complex-octonionic conjugation) as internal mass operators. For such Hermitian elements
the diagonal entries and hence the Jordan eigenvalues are real. Thus, although the
representation space is complexified for $E_6$, the slice on which we actually evaluate
spectra satisfies
\[
q\in\mathbb R,\qquad \lambda_i\in\mathbb R.
\]
We  clarify in the text that ``complexified'' refers to the $E_6(\mathbb C)$ representation
$J_3(\mathbb C\otimes\mathbb O)$, while the physical mass matrices are taken in a real,
Hermitian form so that the trace is real.

\medskip
\emph{(3) What the single scalar $q$ does (and does not do) for three generations.}

We fully agree that a single real number $T=3q$ cannot encode three independent
up-quark masses. In our construction it is not meant to. The rôle of $q$ is simply to fix,
for each charged sector $\mathsf{sector}\in\{\ell,u,d,\nu\}$, the \emph{family centre}
\[
s_{\mathsf{sector}} \;=\; \sqrt{m_{\text{centre}}} \;=\; q_{\mathsf{sector}}\,.
\]
The \emph{splitting} between generations is encoded entirely in the traceless octonionic
part $Y$ through the $F_4$-invariant data $(S,D)$. On the coassociative slice the three
Jordan eigenvalues take the form
\[
\lambda_{\mathsf{sector}} \;=\; q_{\mathsf{sector}}-\delta,\quad q_{\mathsf{sector}},\quad
q_{\mathsf{sector}}+\delta,
\]
with $\delta^2$ equal to the off-diagonal norm sum; in the octonionic/Majorana
normalisation used here this gives $\delta^2 = 3/8$. The three generations as such are realised in the traceless off-diagonal
sector (and in the action of the global flavor $SU(3)_F\subset G_2$), not in the trace.
The trace $T=3q$ simply labels the \emph{sector} (up, down, lepton, neutrino) and sets
the centre of the triplet in that sector.

\medskip
\emph{(4) Relation to Baez's classification and the choice of charge normalisation.}

Baez \cite{Baez2002} emphasises that, up to $F_4$-automorphisms and overall rescalings of the identity,
one can choose orbit representatives in the exceptional Jordan algebra whose traces take
the discrete values
\[
T \in \{0,1,2,3\}.
\]
In our charged sectors we fix the overall normalisation of the identity so that these four
trace values, divided by $3$,
\[
q=\frac{T}{3} \in \left\{0,\frac{1}{3},\frac{2}{3},1\right\},
\]
match exactly the four electric charges that appear in Furey’s
$\mathbb C\otimes\mathbb O$ construction (ignoring signs): $Q = 0,\tfrac13,\tfrac23,1$.
Thus the equality of the three diagonal entries is \emph{not} arbitrary: it is the standard
decomposition into a multiple of the identity plus a traceless part, and the \emph{normalisation}
of the identity is fixed by matching Baez's four $F_4$ trace-classes to the four electric
charges of the Standard Model.

\paragraph{Universal $q\pm\delta$ form (charged sectors).}
In our conventions, the upper entries are normalized so that
\begin{equation*}
\|x\|^2+\|y\|^2+\|z\|^2=\frac{3}{8},
\qquad
\Re\!\bigl((xy)z\bigr)=0
\end{equation*}
for the up type, down-type and charged-lepton (positron) families. Then
\begin{equation*}
T=3q,\qquad S=3q^2-\frac{3}{8},\qquad D=q^3-\frac{3q}{8},
\end{equation*}
Upon substituting the expressions for $T,S,D$ into the characteristic cubic equation we get
\begin{equation}
\lambda^3 -3q\lambda^2 + (3q^2-3/8)\lambda - (q^3-3q/8)=0
\end{equation}
and a direct factorisation immediately shows that the three Jordan eigenvalues are
\begin{equation*}
\boxed{\ \lambda\in\{\,q-\delta,\;q,\;q+\delta\,\}\ },\qquad \delta^2=\frac{3}{8}.
\end{equation*}
For neutrinos, $\|x\|^2+\|y\|^2+\|z\|^2=\tfrac{3}{4}$ so the same argument gives $\delta_\nu^2=\tfrac{3}{4}$.

\paragraph{Normalisation independence of the mass ratios}: We note that even if one changes the overall normalisation of the Jordan element, the mass ratios
are unaffected. Concretely, consider an element $X$ on the coassociative slice with
Jordan eigenvalues $(s-\delta,s,s+\delta)$. If we rescale
\[
X \;\longrightarrow\; X' = c\,X
\]
with $c>0$, then the three invariants scale homogeneously,
\[
T' = c\,T,\qquad S' = c^2 S,\qquad D' = c^3 D,
\]
and the eigenvalues scale as
\[
\lambda'_i = c\,\lambda_i = (c s - c\delta,\; c s,\; c s + c\delta).
\]
Thus each eigenvalue is multiplied by the same factor $c$, and all eigenvalue ratios
$\lambda'_i/\lambda'_j$ are unchanged. Since our square-root mass ratios are functions only
of these eigenvalue ratios, they are independent of the overall normalisation of the cubic
norm or of the quadratic form on $J_3(\mathbb{O}_\mathbb{C})$; the construction predicts no
absolute mass scale, only ratios. This invariance is under the \emph{overall} scale $c$ alone, and
it does \emph{not} render the value $\delta^2=3/8$ immaterial: the $\sqrt{\text{mass}}$ ratios depend on
the dimensionless combination $\delta/s$, and the family centre $s$ is fixed \emph{separately} by the
charge/trace assignment (not by the off-diagonal norm), so changing the off-diagonal norm at
fixed $s$ \emph{does} change the ratios. The value $\delta^2=3/8$ is itself computed --- not assigned ---
from the Majorana vacuum: see Eqs.~\eqref{eq:Maj-antiDown}--\eqref{eq:Maj-pos} and the
discussion in Sec.~\ref{subsec:octon-norm}.

\subsection{Family-by-family construction}

\paragraph{ Neutrinos (Majorana).}
Using $(x,y,z)=\left(\tfrac{i e_7}{2},\ \tfrac{i e_2}{2},\ \tfrac{i e_5}{2}\right)$ along the generation directions $\{e_7,e_5,e_2\}$ (not a Fano line of Table~\ref{tab:fano-mults}; the relevant property is that their triple product has vanishing scalar part) and $q=0$ yields
\begin{equation*}
T=0,\qquad S=-\frac{3}{4},\qquad D=0,\qquad
\lambda\in\bigl\{-\delta_\nu,\ 0,\ +\delta_\nu\bigr\},\ \ \delta_\nu^2=\frac{3}{4}.
\end{equation*}

\paragraph{ Down family (anti-down frame).}
With
\begin{equation*}
(x,y,z)=\Bigl(\frac{e_5+i e_4}{4},\ \frac{e_2+i e_6}{4},\ \frac{e_3+i e_7}{4}\Bigr),\qquad q=\frac{1}{3},
\end{equation*}
one finds
\begin{equation*}
T=1,\qquad S=-\frac{1}{24},\qquad D=-\frac{19}{216},\qquad
\lambda\in\Bigl\{\,\frac{1}{3}-\delta,\ \frac{1}{3},\ \frac{1}{3}+\delta\,\Bigr\}.
\end{equation*}

\paragraph{ Up family.}
Taking right-to-left multiplication so that $(xy)z$ is \emph{(top)(charm)(up)}, the entries are the three up-generation states of Sec.~\ref{sec:gen-states}:
\begin{equation*}
(x,y,z)=\Bigl(\frac{e_7+i e_3}{4},\ \frac{e_6+i e_2}{4},\ \frac{e_4+i e_5}{4}\Bigr),\qquad q=\frac{2}{3}.
\end{equation*}
Then $\Re((xy)z)=0$, and
\begin{equation*}
T=2,\qquad S=\frac{23}{24},\qquad D=\frac{5}{108},\qquad
\lambda\in\Bigl\{\,\frac{2}{3}-\delta,\ \frac{2}{3},\ \frac{2}{3}+\delta\,\Bigr\}.
\end{equation*}

\paragraph{ Positron family.}
With
\begin{equation*}
(x,y,z)=\Bigl(-\frac{i+e_7}{4},\ -\frac{i+e_2}{4},\ -\frac{i+e_5}{4}\Bigr),\qquad q=1,
\end{equation*}
and $\Re((xy)z)=0$, we obtain
\begin{equation*}
T=3,\qquad S=\frac{21}{8},\qquad D=\frac{5}{8},\qquad
\lambda\in\{\,1-\delta,\ 1,\ 1+\delta\,\}.
\end{equation*}

\paragraph{Summary table (symbolic).}
\begin{center}
\renewcommand{\arraystretch}{1.15}
\begin{tabular}{l c c c c}
\hline
Family & $q$ & $T$ & $S$ & $D$ \\ \hline
Neutrino & $0$ & $0$ & $-{3}/{4}$ & $0$ \\
anti-down  & $1/3$ & $1$ & $-{1}/{24}$ & $-{19}/{216}$ \\
Up & ${2}/3$ & $2$ & ${23}/{24}$ & ${5}/{108}$ \\
Positron & $1$ & $3$ & ${21}/{8}$ & ${5}/{8}$ \\ \hline
\end{tabular}
\end{center}

\noindent In all charged sectors, the eigenvalues come in the symmetric form $\{q-\delta,\ q,\ q+\delta\}$ with $\delta^2=\tfrac{3}{8}$. For neutrinos, $\delta$ is replaced by $\delta_\nu$ with $\delta_\nu^2=\tfrac{3}{4}$.

\subsection{Jordan frame: spectral idempotents \texorpdfstring{$P,Q,R$}{P,Q,R}}

We now carry out the decomposition of the fermion's Jordan matrix into its three basis components, using the standard Peirce decomposition. For a background on Jordan algebras and Peirce decomposition see \cite{SpringerVeldkamp2000, DrayManogue1999, DrayManogue1999EJEP, McCrimmon:2004, FarautKoranyi:1994, Jacobson:1968, ManogueDray:2015}.
 
For $X\in J_3(\mathbb{O}_{\mathbb{C}})$ with distinct Jordan eigenvalues $\lambda_1,\lambda_2,\lambda_3$ and identity $e$, the spectral idempotents are obtained from the quadratic adjoint (``sharp'')
\begin{equation}
A^{\#}\;:=\;A\circ A\;-\;(\operatorname{tr}A)\,A\;+\;S(A)\,e,
\end{equation}
via the universal formula
\begin{equation}
\boxed{\qquad P_i \;=\; \frac{(X-\lambda_i e)^{\#}}{(\lambda_i-\lambda_j)(\lambda_i-\lambda_k)}\,,\qquad \{i,j,k\}=\{1,2,3\}\,,\qquad}
\end{equation}
which satisfies $P_i\circ P_i=P_i$, $P_i\circ P_j=0$ ($i\neq j$), $P_1+P_2+P_3=e$, and $X=\sum_i \lambda_i P_i$.
This is the standard construction used for the exceptional (Albert) Jordan algebra; see Dray--Manogue \cite{DrayManogue1999} and Appendix of Singh et al.  \cite{BhattEtAl2022MajoranaEJA}

\paragraph{Explicit neutrino frame.}
For the neutrino matrix of Sec.~2 (upper entries $x=\tfrac{i e_7}{2}$, $\bar z=\tfrac{i e_5}{2}$, $y=\tfrac{i e_2}{2}$; center $q=0$), the eigenvalues are $\{-\delta_\nu,0,+\delta_\nu\}$ with $\delta_\nu=\sqrt{3}/2$.
The invariants $(T,S,D)=(0,-\tfrac34,0)$ of the family table are those of the \emph{real-octonion representative} $X_r$ of this matrix (entries $e_a/2$ with octonionic-Hermitian signs), and it is on $X_r\in J_3(\mathbb O)$ that the sharp formula is evaluated.  Since the generation directions $\{e_7,e_5,e_2\}$ do not close under multiplication --- their pairwise products lie along the Fano partners $e_3$, $e_4$, $e_6$ --- the spectral idempotents necessarily carry components along those partner directions.
Applying the sharp formula yields three idempotents $P,Q,R$ (ordered with eigenvalues $-\delta_\nu,0,+\delta_\nu$):
\begin{equation}
P=\frac13\!\begin{pmatrix}
1 & -\tfrac12 e_3-\tfrac{\sqrt{3}}{2}e_7 & -\tfrac{\sqrt{3}}{2}e_5+\tfrac12 e_6\\[2pt]
\tfrac12 e_3+\tfrac{\sqrt{3}}{2}e_7 & 1 & -\tfrac{\sqrt{3}}{2}e_2+\tfrac12 e_4\\[2pt]
\tfrac{\sqrt{3}}{2}e_5-\tfrac12 e_6 & \tfrac{\sqrt{3}}{2}e_2-\tfrac12 e_4 & 1
\end{pmatrix},\quad
Q=\frac13\!\begin{pmatrix}
1 & e_3 & -e_6\\[2pt]
-e_3 & 1 & -e_4\\[2pt]
e_6 & e_4 & 1
\end{pmatrix},
\end{equation}
\begin{equation}
R=\frac13\!\begin{pmatrix}
1 & -\tfrac12 e_3+\tfrac{\sqrt{3}}{2}e_7 & \tfrac{\sqrt{3}}{2}e_5+\tfrac12 e_6\\[2pt]
\tfrac12 e_3-\tfrac{\sqrt{3}}{2}e_7 & 1 & \tfrac{\sqrt{3}}{2}e_2+\tfrac12 e_4\\[2pt]
-\tfrac{\sqrt{3}}{2}e_5-\tfrac12 e_6 & -\tfrac{\sqrt{3}}{2}e_2-\tfrac12 e_4 & 1
\end{pmatrix},
\end{equation}
which obey $P\circ P=P$, $Q\circ Q=Q$, $R\circ R=R$, mutual orthogonality, and $P+Q+R=e$.

Charged-family idempotents are constructed by the same sharp formula; their explicit component expressions are lengthy and are not needed here.
 We note that the spectral theorem and the formula above are independent of phases used in Sec.~2 and preserve the flavor $\mathrm{SU}(3)$ action and charge $Q=N/3$.

\paragraph{Lemma (Spectral existence, rank-3).}
Let $X\in J_3(\mathbb{O}_{\mathbb{C}})$ have three distinct Jordan eigenvalues $\lambda_1,\lambda_2,\lambda_3$ (the roots of its characteristic cubic).
Then the three spectral idempotents $P_1,P_2,P_3$ exist and are given by the Dray--Manogue polynomial formula \cite{DrayManogue1999}
\begin{equation}
P_i \;=\; \frac{(X-\lambda_i\,e)^{\#}}{(\lambda_i-\lambda_j)(\lambda_i-\lambda_k)}\,,\qquad \{i,j,k\}=\{1,2,3\},
\end{equation}
where $A^{\#}=A\circ A-(\mathrm{tr}\,A)\,A+S(A)\,e$ with $S(A)=\tfrac12\big((\mathrm{tr}\,A)^2-\mathrm{tr}(A\circ A)\big)$.
They satisfy $P_i\circ P_i=P_i$, $P_i\circ P_j=0$ ($i\neq j$), $P_1{+}P_2{+}P_3=e$, and $X=\sum_i \lambda_i P_i$.
\emph{Proof sketch.} Use $A\circ A^{\#}=\det(A)\,e$ and the Peirce spectral theorem for the (complexified) Albert algebra; see Dray--Manogue and standard Jordan algebra texts (e.g.\ McCrimmon; Springer--Veldkamp).

In our earlier work \cite{BhattEtAl2022MajoranaEJA} we projected from complex octonions to real octonions using an ansatz. The present formulation makes the status of that step precise rather than dispensable: the sesquilinear norms used in the invariants $(T,S,D)$ evaluate the Jordan spectral problem on exactly the real-octonion representative of each family matrix (on the unprojected complexified element the bilinear quadratic invariant of $J_3(\mathbb O)\otimes\mathbb C$ is $+\tfrac34$ for the neutrino matrix and the Jordan spectrum is imaginary, $\{0,\pm i\sqrt3/2\}$), so the real projection of the earlier work is retained --- now as a controlled convention at the level of invariants rather than an ad hoc ansatz. The eigenvalues and all mass ratios are unchanged. Furthermore, in the earlier work we constructed second and third generations using (charge preserving, $SU(3)_{flavor}$ motivated) $120^\circ$ rotations in 2-planes, and obtained the same Jordan eigenvalues as here. The present precise formulation puts our earlier work on a firmer footing, and confirms our earlier derivation of mass ratios.

\section{Right-handed sector and its relation to the 
left-handed sector}

The right hand sector has the gauge symmetry 
\[
E_6^{R}\;\longrightarrow\;SU(3)_{c'}\;\times\;SU(3)_{F,R}\;\times\;SU(3)_R
\;\;\xrightarrow{\;SU(3)_R\;}\;\;SU(2)_R\times U(1)_{Y_{\rm dem}}\;\to\;U(1)_{\rm dem}.
\]
which mirrors the SM LH sector. We make crucial use of the experimental fact that within experimental uncertainties, the square root mass ratios of electron,  up and down are 
$1:2:3$, which, remarkably, is a flip of their electric charge ratios $3:2:1$. This could not be an accident, so we make the fundamental theoretical proposal that the quantum of charge for $U(1)_{dem}$ is $\pm\sqrt{m}$, and like electric charge, $\sqrt{m}$ is constant across the three generations of a family. The plus sign is for matter, and minus sign is for  antimatter. A Furey style construction of $Cl(6)$ from the octonionic chain algebra for RH fermions then shows that $\sqrt{m}$ is quantised, and spinorial states made from $Cl(6)$ describe one generation  of RH quarks and leptons under the unbroken symmetry $SU(3)_{c'} \times U(1)_{dem}$. {\it This one generation can be any of the three generations}. RH neutrinos and the RH down quark family are singlets of $SU(3)_{c'}$, and respectively have $\sqrt{m}=0,1$. The RH electron family and RH up quark family are triplets of $SU(3)_{c'}$ and have respective $\sqrt{m}$ values $1/3, 2/3$.
The observed $\sqrt{m}$ ratios $~(1:2:3)$ for the electron, up quark and down quark are hence reproduced as a structural consequence of the framework's postulates --- the proposal that the $U(1)_{dem}$ quantum is $\sqrt m$ (motivated by the empirical $1{:}2{:}3$/$3{:}2{:}1$ flip pattern), combined with the Furey-style $Cl(6)$ construction and the $SU(3)_{c'}\times U(1)_{dem}$ representation content. These mirror the unbroken symmetries $SU(3)_c \times  U(1)_{em}$ of the standard model.

In this RH sector the neutrino family, electron family, up quark family and down quark family respectively have $\sqrt{m}$ values $(0,1/3,2/3,1)$. The down family and the electron family have interchanged their electric-charge and square-root mass values when going from LH to RH sector. The neutrinos and up quark family stay where they were: their respective charge and square-root mass values are equal. The down family and neutrinos are singlets of $SU(3)_{c'}$, whereas the electron family and the up family are triplets of $SU(3)_{c'}$. Appendix H explains in detail how the ${\mathbb Z}_2$ outer automorphism (the Dynkin swap) of $SU(3)_{\rm flavor}$ achieves this interchange of the down quark family and the electron family in a rigorous group theoretic sense.

The observed strange mass ratios for the second and third generations arise because of the above-mentioned electric charge - square root mass flip, and because the LH electric charge eigenstates are not same as the RH square-root mass eigenstates. The non-trivial mass ratios arise as weights when we express the $\sqrt{m}$ eigenstates as superposition of electric charge eigenstates - the weights (our Jordan eigenvalues) in the superposition of electric charge eigenstates determine the observed mass ratios. Such an expression is essential because we do not infer mass by measuring the field $U_{dem}$ it produces, but by measuring the electromagnetic fields produced by the associated electric charge of the particle. The SM Higgs gives mass to the LH fermions, and a BSM Higgs in our theory \cite{Kaushik} gives electric charge to the RH fermions - the details of this Higgs mechanism are not relevant here. What is essential is that the fields $SU(3)_{c'}$ and $U(1)_{dem}$  exist in nature. The success of our mass ratio analysis strengthens the possibility that such two new interactions exist \cite{FinsterIsidroPaganiniSingh2024DarkEMMOND} in nature and should be sought for in experiments.

We start from the first-generation right-handed square-root-mass-eigenstates given in \cite{VaibhavSingh2023LRBiquaternions} and reproduced below for the Majorana neutrino assumption
\[
\nu_{R,1} \;=\;\frac{i\,e_{8}}{2},\quad
e^-_{R,1} \;=\;\omega\left(\frac{e_{5}+i\,e_{4}}{4}\right),\quad
u_{R,1} \;=\;\frac{e_{4}+i\,e_{5}}{4},\quad
{ d}_{(R,1)} \;=\;\omega\left(\frac{i+e_{8}}{4}\right),
\]
We act with the flavor $SU(3)_R\subset G_{2}$ cyclic permutation
\[
\bigl(e_{8}\to e_{5}\to e_{2}\to e_{3}\to e_{4}\to e_{6}\to e_{8},\;
e_{1}\ {\rm fixed}\bigr)
\]
to generate the second and third generations.  Explicitly:

\[
\begin{aligned}
\nu_{R,2} &= \tfrac{i\,e_{5}}{2}, & \nu_{R,3} &= \tfrac{i\,e_{2}}{2},\\[4pt]
e^- _{R,2} &= \omega\left(\tfrac{e_{6}+i\,e_{2}}{4}\right), & 
e^- _{R,3} &= \omega\left(\tfrac{e_{8}+i\,e_{3}}{4}\right),\\[4pt]
u_{R,2} &= \tfrac{e_{2}+i\,e_{6}}{4}, & 
u_{R,3} &= \tfrac{e_{8}+i\,e_{3}}{4},\\[4pt]
{ d}_{R,2} &= \omega\left(\tfrac{i+e_{5}}{4}\right), & 
{ d}_{R,3} &= \omega\left(\tfrac{i+e_{2}}{4}\right).
\end{aligned}
\]
Here, $e_8$ plays exactly the same role as $e_7$ does for LH states. For definition of $e_8$ see \cite{VaibhavSingh2023LRBiquaternions}. $\omega$ is the 
split imaginary number, here constructed from the octonions \cite{VaibhavSingh2023LRBiquaternions} and assumed to have the physical norm $N(\omega)=1$.

\paragraph{Jordan eigenvalues for RH states}
The RH off-diagonal entries as specified above are obtained either by re-using the LH triplets (neutrino, up) or by multiplying the LH triplets by a split-imaginary $\omega$ with assumed physical norm $|N(\omega)|=1$ (positron, down). This preserves
\begin{equation*}
\|x\|^2+\|y\|^2+\|z\|^2=\tfrac{3}{8},\qquad \mathrm{Re}((xy)z)=0,
\end{equation*}
so the $\pm\delta$ splitting and the center shift behave exactly as in \eqref{eq:RH-flip} below.

\paragraph{Setup.}
For each family we form a Hermitian Jordan matrix in $J_3(\mathbb{O}_{\mathbb{C}})$ of the form
\begin{equation*}
X(q;x,y,z)\;=\;
\begin{pmatrix}
s & x & \bar z\\
\bar x & s & y\\
z & \bar y & s
\end{pmatrix},\qquad q\in\mathbb{R},
\end{equation*}
with $x,y,z\in\mathbb{C}\otimes\mathbb{O}$ taken from the flavor triplets fixed in the previous section. Also, $s\equiv \sqrt{m}$.
Write the Jordan invariants (trace, quadratic, cubic) as
\begin{equation*}
T=3s,\qquad
S=3s^2-\bigl(\|x\|^2+\|y\|^2+\|z\|^2\bigr),\qquad
D=s^3-s\bigl(\|x\|^2+\|y\|^2+\|z\|^2\bigr)+2\,\Re\!\bigl((xy)z\bigr).
\end{equation*}
The characteristic polynomial is $\chi(\lambda)=\lambda^3-T\lambda^2+S\lambda-D$.

\paragraph{Universal $s\pm\delta$ form (up/down/leptons).}
In the flavor conventions used here, the upper entries are normalized so that
\begin{equation*}
\|x\|^2+\|y\|^2+\|z\|^2=\frac{3}{8},
\qquad
\Re\!\bigl((xy)z\bigr)=0
\end{equation*}
for the up/down-type and charged-lepton (positron) families.
Then
\begin{equation*}
T=3s,\qquad S=3s^2-\frac{3}{8},\qquad D=s^3-\frac{3s}{8},
\end{equation*}
and the three Jordan eigenvalues are
\begin{equation*}
\boxed{\ \lambda\in\{\,s-\delta,\;s,\;s+\delta\,\}\ },\qquad \delta^2=\frac{3}{8}.
\end{equation*}

\subsection{Family-by-family construction}

\paragraph{ Neutrinos (Majorana).}
Using the upper entries $(x,y,z)=\left(\tfrac{i e_7}{2},\ \tfrac{i e_2}{2},\ \tfrac{i e_5}{2}\right)$ (the directions $\{e_2,e_5,e_7\}$ do not form a Fano line; the relevant property is that the triple product has vanishing scalar part) and $q=s=0$ gives
\begin{equation*}
\|x\|^2+\|y\|^2+\|z\|^2=\frac{3}{4},\qquad \Re\!\bigl((xy)z\bigr)=0.
\end{equation*}
Hence
\begin{equation*}
T=0,\qquad S=-\frac{3}{4},\qquad D=0,
\end{equation*}
and the spectrum is
\begin{equation*}
\lambda\in\bigl\{-\delta_\nu,\ 0,\ +\delta_\nu\bigr\},\qquad \delta_\nu^2=\frac{3}{4}\ \ \Bigl(\delta_\nu=\tfrac{\sqrt{3}}{2}\Bigr).
\end{equation*}

\paragraph{ Down family }
With
\begin{equation*}
(x,y,z)=\Bigl(\frac{e_5+i e_4}{4},\ \frac{e_3+i e_7}{4},\ \frac{e_2+i e_6}{4}\Bigr),\qquad s=1,
\end{equation*}
one has
\begin{equation*}
\|x\|^2+\|y\|^2+\|z\|^2=\frac{3}{8},\qquad \Re\!\bigl((xy)z\bigr)=0,
\end{equation*}
so
\begin{equation*}
T=3,\qquad S=\frac{21}{8},\qquad D=\frac{5}{8},\qquad
\lambda\in\{\,1-\delta,\ 1,\ 1+\delta\,\}.
\end{equation*}

\paragraph{ Up family.}
With
\begin{equation*}
(x,y,z)=\Bigl(\frac{e_4+i e_5}{4},\ \frac{e_7+i e_3}{4},\ \frac{e_6+i e_2}{4}\Bigr),\qquad s=\frac{2}{3},
\end{equation*}
We have $\Re((xy)z)=0$; then
\begin{equation*}
T=2,\qquad S=\frac{23}{24},\qquad D=\frac{5}{108},\qquad
\lambda\in\Bigl\{\,\frac{2}{3}-\delta,\ \frac{2}{3},\ \frac{2}{3}+\delta\,\Bigr\}.
\end{equation*}

\paragraph{ Electron family.}
With
\begin{equation*}
(x,y,z)=\Bigl(-\frac{i+e_7}{4},\ -\frac{i+e_2}{4},\ -\frac{i+e_5}{4}\Bigr),\qquad s=\frac{1}{3},
\end{equation*}
Noting that $\Re((xy)z)=0$, we obtain
\begin{equation*}
T=1,\qquad S=-\frac{1}{24},\qquad D=-\frac{19}{216},\qquad
\lambda\in\Bigl\{\,\frac{1}{3}-\delta,\ \frac{1}{3},\ \frac{1}{3}+\delta\,\Bigr\}.
\end{equation*}

\paragraph{Summary table (symbolic).}
\begin{center}
\renewcommand{\arraystretch}{1.15}
\begin{tabular}{l c c c c}
\hline
Family & $s$ & $T$ & $S$ & $D$ \\ \hline
Neutrino & $0$ & $0$ & $-\tfrac{3}{4}$ & $0$ \\
Down & $1$ & $3$ & $\tfrac{21}{8}$ & $\tfrac{5}{8}$ \\
Up & $\tfrac{2}{3}$ & $2$ & $\tfrac{23}{24}$ & $\tfrac{5}{108}$ \\
Positron & $\tfrac{1}{3}$ & $1$ & $-\tfrac{1}{24}$ & $-\tfrac{19}{216}$ \\ \hline
\end{tabular}
\end{center}

\noindent In all charged sectors, the eigenvalues come in the symmetric form $\{s-\delta,\ s,\ s+\delta\}$ with $\delta^2=\tfrac{3}{8}$. For neutrinos, $\delta$ is replaced by $\delta_\nu$ with $\delta_\nu^2=\tfrac{3}{4}$.

\subsection{Relating LH and RH Jordan matrices of a 
family}
For each family, let $A\in J_3(\mathbb{O}_{\mathbb{C}})$ denote the left-handed (LH) Jordan matrix with Jordan frame $\{P,Q,R\}$,
\begin{equation}
A \;=\; (q-\delta)\,P \;+\; q\,Q \;+\; (q+\delta)\,R
\;=\; q\,e \;+\; \delta\,(R-P),
\qquad e:=P+Q+R,\quad \delta^2=\tfrac{3}{8}.
\end{equation}
Here $q$ is the electric-charge center of the family. For the right-handed (RH) sector we obtained the eigenvalues to be
\begin{equation}
\{\,s-\delta,\ s,\ s+\delta\,\},\qquad s:=\sqrt{m}\in\{0,\tfrac13,\tfrac23,1\}\ \ \text{for}\ (\nu,e^+,u,d).
\end{equation}
(For neutrinos, replace $\delta$ by $\delta_\nu=\sqrt{3}/2$; the arguments below are unchanged.)

\paragraph{Orientation flip and central shift.}
We retain the \emph{same Jordan frame} $\{P,Q,R\}$ for the RH sector, but we flip the $\pm\delta$ assignment between $P$ and $R$ (``orientation flip''), while shifting the center from $q$ to $s=\sqrt m$. Then
\begin{equation}
B \;=\; s\,e \;-\; \delta\,(R-P) \;=\; -\,A \;+\; (s+q)\,e.
\label{eq:RH-flip}
\end{equation}
This is the unique linear polynomial in $A$ and $e$ that yields the RH spectrum $\{s+\delta,s,s-\delta\}$ with the same $P,Q,R$ labeling.

\paragraph{Why the same Jordan frame is justified.}
Because $B$ is a polynomial in $A$ and the identity $e$ (cf.\ \eqref{eq:RH-flip}), the Jordan functional calculus implies that $A$ and $B$ share the \emph{same} spectral idempotents: for each $P_i\in\{P,Q,R\}$,
\begin{equation*}
A\circ P_i=\lambda_i P_i \ \Longrightarrow\ B\circ P_i = f(\lambda_i)\,P_i
\quad \text{for} \quad B=f(A)+\beta e.
\end{equation*}
Equivalently, $B$ lies in the two-dimensional subspace $\mathrm{span}\{e,\,R-P\}$ determined by the LH frame, so no new idempotents arise. This is the minimal (frame-preserving) choice; allowing a separate RH flavor rotation would conjugate the frame (a different model which we do not adopt here).

\subsection{Family-by-family formulas}
Let $(q,s)$ denote the LH center (charge) and the RH center ($\sqrt m$). Then from \eqref{eq:RH-flip}
\begin{equation}
\boxed{\,B \;=\; -A \;+\; (s+q)\,e\,}.
\end{equation}
Explicitly:
\begin{center}
\renewcommand{\arraystretch}{1.15}
\begin{tabular}{l c c c l}
\hline
Family & $q$ (LH center) & $s$ (RH center) & Orientation & $B$ in terms of $A$ \\ \hline
Neutrino & $0$ & $0$ & flip & $B=-A$ \\
Positron & $1$ & $\tfrac{1}{3}$ & flip & $B=-A+\tfrac{4}{3}\,e$ \\
Up quark & $\tfrac{2}{3}$ & $\tfrac{2}{3}$ & flip & $B=-A+\tfrac{4}{3}\,e$ \\
Down quark & $\tfrac{1}{3}$ & $1$ & flip & $B=-A+\tfrac{4}{3}\,e$ \\ \hline
\end{tabular}
\end{center}
In each case, the RH eigenvalues are $\{s+\delta,\ s,\ s-\delta\}$ (replace $\delta\to\delta_\nu$ for neutrinos).

\section{Octonionic triality, chiral splitting, and the eigenvalue story}
\label{sec:triality}

In our theory, prior to the breaking of $E_{6L} \times E_{6R}$ (which is assumed concurrent with electroweak symmetry breaking) octonionic triality is exact. In the unbroken state one does not have segregated LH electric 
charge eigenstates and RH square-root mass eigenstates. Chiral fermions, electric charge eigenstates and $\sqrt{m}$ eigenstates emerge after triality breaking. Triality symmetry is broken down to $SU(3)_{flavor}$ and that is what gives rise to three distinct fermion generations.

\subsection{Spin(8) triality and the Peirce triple}

{\it Before triality breaking}: Every Hermitian element \(A\in J_{3}(\mathbb{O}_{\!\mathbb C})\) can be diagonalised with three orthogonal idempotents
\[
  P,\;Q,\;R, \qquad P+Q+R=\mathbf 1 ,
\]
and real eigen-values \(\lambda_{1,2,3}\):
\(
  A=\lambda_1P+\lambda_2Q+\lambda_3R.
\)
The frame \((P,Q,R)\) is acted on by \(Spin(8)\subset F_{4}\); its outer automorphism
\(\text{Out}\bigl[Spin(8)\bigr]\cong\mathbb S_{3}\) (triality) permutes
\[
 (P,\lambda_1)\;\leftrightarrow\;(Q,\lambda_2)\;\leftrightarrow\;(R,\lambda_3).
\]
Thus, \emph{before} any chiral choice all three slots are equivalent and only the unordered multiset \(\{\lambda_1,\lambda_2,\lambda_3\}\) is physical.

\subsection[Proto-centre Lambda and proto-spacing delta]{Proto-centre \(\Lambda\) and proto-spacing \(\delta\)}

Write the log-eigenvalues as
\[
   \eta_1=\Lambda-\delta,\quad
   \eta_2=\Lambda,\quad
   \eta_3=\Lambda+\delta,
\]
so that
\[
   \Lambda=\frac13\ln(\lambda_1\lambda_2\lambda_3),\qquad
   \delta=\frac12\bigl[\ln\lambda_3-\ln\lambda_1\bigr].
\]
\(\Lambda\) is a single scalar (“proto-charge / proto-mass”), while \(\delta\) measures how far the spectrum is from perfect degeneracy; algebraically \(\delta\) is an unfixed modulus at this stage.

\subsection{Physical splitting: selecting charge and mass centres}

\begin{enumerate}
\item \textbf{Left-handed (gauge) choice}: fix one idempotent (say \(Q\)) as the electric-charge axis; its eigen-value is labelled
      \(q=e^{h}\) with \(h=\ln q\).
\item \textbf{Orientation flip}: define the right-handed partner
      \(
          B_{\text{RH}}=-A_{\text{LH}}+(q+s)\,\mathbf 1
      \),
      whose centre becomes \(s=e^{j}\) with \(j=\ln s\).
      Hence \(h+j=\Lambda=\ln(qs)\).
\end{enumerate}

The outer \(\mathbb S_{3}\) is thereby broken to the residual
\(
  \mathbb Z_{3}\subset SU(3)_{\text{flavor}}\subset G_{2},
\)
which still \emph{cycles} the three slots but can no longer interchange “centre’’ and “edge’’ labels.

\subsection{Exceptional Jordan algebra fixes \(\delta=\sqrt{3/8}\)}

After the chiral frame is fixed, we determine the  eigenvalues from the characteristic equation of the Jordan matrix which describes three generations of LH/RH fermions of a family.  This theoretically determines the numerical value
\[
   \boxed{\;\delta=\sqrt{3/8}\;}.
\]
Thus,
\[
\text{LH spectrum}: \{q-\delta,\;q,\;q+\delta\},\qquad
\text{RH spectrum}: \{s-\delta,\;s,\;s+\delta\}.
\]
If \(\delta\!=\!0\) the eigen-values would be degenerate (maximal internal symmetry); Nature chooses \(\delta=\sqrt{3/8}\).

\subsection{Eigenvalue overview}

\begin{center}
\renewcommand{\arraystretch}{1.15}
\begin{tabular}{lccc}
\toprule
Phase & slot 1 & slot 2 & slot 3 \\ \midrule
Spin(8) symmetric & $e^{\Lambda-\delta}$ & $e^{\Lambda}$ & $e^{\Lambda+\delta}$ \\
Post-split LH & $q-\delta$ & $q$ & $q+\delta$ \\
Post-split RH & $s-\delta$ & $s$ & $s+\delta$ \\ \bottomrule
\end{tabular}
\end{center}

Here \(q=e^{h}\), \(s=e^{j}\) with \(h+j=\Lambda\), and \(\delta=\sqrt{3/8}\) only \emph{after} chiral fermions 
emerge.

\section { {Mass ratios from \texorpdfstring{$\mathrm{Sym}^3(\mathbf 3)$}{Sym3(3)}: pre-breaking origin, diagonal-action theorem, and the three-step chain}}
\label{sec:Sym3-derivation}

This section establishes the mass-ratio formula in three steps.  First, we record why the $\mathrm{Sym}^3(\mathbf 3)$ representation of the residual flavor $SU(3)$ is the natural arena for the three generations of a charged-fermion family — this is a consequence of pre-breaking triality together with the cubic structure of $J_3(\mathbb O_\mathbb C)$, modulo one minimality assumption.  Second, we show that the induced action of $\langle X\rangle$ on $\mathrm{Sym}^3(\mathbf 3)$ is diagonal in the monomial basis, which immediately produces the edge-universal adjacent ratios.  Third, we describe how the three observed generations of each family are assigned to specific weights of the $\mathrm{Sym}^3$ triangle, identify the physical postulates that fix this assignment, and recover the closed-form chain (the ``minimality principle'') as a uniqueness statement within those postulates.

\subsection*{Pre-breaking origin of $\mathrm{Sym}^3(\mathbf 3)$}
\label{subsec:Sym3-origin}

The starting point is the rank-3 structure of $J_3(\mathbb O_\mathbb C)$.  Its Peirce decomposition with respect to a Jordan frame $\{p_1,p_2,p_3\}$ reads
\begin{equation}
J_3(\mathbb O_\mathbb C) \;=\; \bigoplus_{i=1}^{3} \mathbb C\,p_i \;\oplus\; V_{12}\oplus V_{23}\oplus V_{31},
\qquad V_{ij}\cong\mathbb O_\mathbb C,
\end{equation}
and each off-diagonal Peirce-1 slot $V_{ij}$ carries an isomorphic $Cl(6,\mathbb C)$ minimal-ideal fiber producing one SM family (see Appendix~K).  The three generations are therefore not added by hand: they are the rank of the Albert algebra, with each generation realised in one Peirce slot.  The outer triality $S_3\subset\mathrm{Out}(\mathrm{Spin}(8))$ permutes the three slots $\{V_{12},V_{23},V_{31}\}$, hence permutes the three isomorphic $Cl(6)$ fibers, hence the three generations are identical by symmetry in the unbroken phase.

Acting on the Jordan spectral decomposition $X=\sum_i\lambda_i p_i$, the same $S_3$ permutes the idempotents $p_i$ and therefore permutes the eigenvalue triple $(\lambda_1,\lambda_2,\lambda_3)\equiv (a_0,b_0,c_0)$.  An individual ratio such as $c_0/a_0$ is thus \emph{not} a triality-invariant observable in the unbroken phase: it can be sent to any of $\{c_0/a_0,\,c_0/b_0,\,b_0/a_0,\,b_0/c_0,\,a_0/c_0,\,a_0/b_0\}$ by a triality permutation.  Pre-breaking observables must respect $S_3$.

To form $S_3$-invariant mass-related observables, we need a representation on which $S_3$ acts \emph{by permutation of a closed weight set with at least one invariant point}.  Three constraints fix this representation almost uniquely:
\begin{enumerate}
\item[(i)] \emph{Triality covariance.}  The arena must carry an $S_3$ action with a closed weight basis in the eigenvalue triple $(a_0,b_0,c_0)$.
\item[(ii)] \emph{Cubic homogeneity.}  The Jordan determinant $N(X)$ is degree 3 in eigenvalues, and the unique $E_6$-invariant Yukawa $t(\Psi,\Psi,X)$ is the polarisation of $N$ on $\mathrm{Sym}^3(\mathbf{27})$.  Mass-related observables descending from this structure are naturally homogeneous of degree 3 in the eigenvalues.
\item[(iii)] \emph{An $S_3$-fixed normalisation point.}  After breaking, an overall normalisation of the ladder must be fixed at some weight; for the normalisation to be triality-respecting, this weight must be $S_3$-invariant.
\end{enumerate}
The minimal representation satisfying (i)--(iii) simultaneously is $\mathrm{Sym}^3(\mathbf 3)$ of the flavor $SU(3)$: it has ten degree-3 monomial weights closed under $S_3$ permutation, and the central monomial $a_0 b_0 c_0$ is the unique $S_3$-fixed point.  We adopt $\mathrm{Sym}^3(\mathbf 3)$ as the family-level arena.

\paragraph{Minimality qualification.}
$\mathrm{Sym}^{3k}(\mathbf 3)$ for $k\ge 2$ also satisfies (i)--(iii) (e.g.\ $\mathrm{Sym}^6$ has central fixed weight $a_0^2 b_0^2 c_0^2$).  The framework selects $k=1$ by minimality: the lowest symmetric power consistent with triality, cubic homogeneity, and a fixed-point normalisation.  This is the one non-derived choice in steps (i)--(iii); we record it explicitly rather than absorb it into a general ``naturalness'' claim.  A complete derivation would require an argument that the cubic Yukawa $t(\Psi,\Psi,X)$ produces $\mathrm{Sym}^3(\mathbf 3)$ family content rather than $\mathrm{Sym}^{3k}$ for $k>1$; we do not have this argument and we flag it as open.

\paragraph{What pre-breaking ratios mean.}
On the $\mathrm{Sym}^3(\mathbf 3)$ weight triangle the adjacent-edge ratios about the central rung
\begin{equation}
G_E^{(0)}:=\frac{c_0}{a_0},\qquad G_B^{(0)}:=\frac{b_0}{a_0},\qquad G_C^{(0)}:=\frac{c_0}{b_0}
\label{eq:prebreaking-edge-ratios}
\end{equation}
are not individually $S_3$-invariant, but the \emph{set} $\{G_E^{(0)}, G_B^{(0)}, G_C^{(0)}\}$ is closed under $S_3$ and is therefore a kinematic invariant of the pre-breaking point modulo $S_3$.  After triality breaking selects a Jordan frame and an electric-charge functional that orders the eigenvalues $a<b<c$, the individual ratios in \eqref{eq:prebreaking-edge-ratios} become physical, and the chain assignment described below picks out which ratio is which physical $\sqrt m$ ratio.

\subsection*{Diagonal-action theorem}
\label{subsec:diagonal-action}

Let $\langle X\rangle\in J_3(\mathbb O_\mathbb C)$ be Jordan-diagonalised in the RH frame to $\mathrm{diag}(a,b,c)$, $a<b<c$, with eigenvalue spread $\delta^2=3/8$.  The induced action on $\mathrm{Sym}^3(\mathbf 3)$ is the symmetric cube $X^{\odot 3}$.  In the monomial basis $\{|p,q,r\rangle\}_{p+q+r=3}$,
\begin{equation}
\boxed{\;X^{\odot 3}\,|p,q,r\rangle \;=\; a^p\,b^q\,c^r\,|p,q,r\rangle.\;}
\label{eq:diagonal-action}
\end{equation}
\emph{Proof.}  $\mathrm{Sym}^3$ of a diagonal operator is diagonal with eigenvalues equal to the symmetric-cube products of its diagonal entries.  $\square$

A fermion identified with the $\mathrm{Sym}^3$ state $|p,q,r\rangle$ therefore has $\sqrt m\propto a^p b^q c^r$.  Adjacent generations related by an edge move
\[
E:(p,q,r)\to(p-1,q,r+1),\quad
B:(p,q,r)\to(p-1,q+1,r),\quad
C:(p,q,r)\to(p,q-1,r+1)
\]
have $\sqrt m$ ratios
\begin{equation}
E:\ \frac{\sqrt{m'}}{\sqrt m}=\frac{c}{a},\qquad
B:\ \frac{b}{a},\qquad
C:\ \frac{c}{b},
\label{eq:edge-ratios}
\end{equation}
by monomial arithmetic.  We refer to (\ref{eq:edge-ratios}) as \emph{edge universality}: each adjacent $\sqrt m$ ratio depends only on the type of edge move, not on which rung of the triangle the move is taken at, and not on the spectator eigenvalue.

If a physical generation assignment skips an unobserved intermediate weight, the physical ratio is not called an elementary adjacent-edge ratio; it is the product of the elementary ratios along the chosen path.  This distinction is used below for the down-sector $s\to b$ step, which runs through the intermediate rung $ac^2$.

\paragraph{No SU(3) Clebsch--Gordan coefficients enter.}
The diagonal-action result (\ref{eq:diagonal-action}) follows from $\langle X\rangle$ being diagonal in the Jordan frame; the mass operator is $X^{\odot 3}$, not an $SU(3)$ raising or lowering operator.  Schwinger-boson ladder matrix elements such as $\sqrt 2$ at the central rung or $\sqrt 3$ at the outer rungs (computed in Appendix~H.5) are matrix elements of $E_{ij}=x_i^\dagger x_j$ between adjacent weight states, not of the mass operator, and they play no role in (\ref{eq:edge-ratios}).  In particular there is no ``CGC cancellation'' that needs to happen: the CGCs are simply absent from the mass formula because $X^{\odot 3}$ is diagonal.  Appendix~H.5 should be read as an independent confirmation of the weight assignment of $\mathrm{Sym}^3(\mathbf 3)$, not as the source of edge universality.

\subsection*{Generation-assignment postulates}
\label{subsec:generation-assignment}

The diagonal-action theorem says: if a fermion is identified with the weight state $|p,q,r\rangle$, then $\sqrt m\propto a^p b^q c^r$.  It does not by itself say which weight corresponds to which observed generation.  Three of the ten weights of $\mathrm{Sym}^3(\mathbf 3)$ are populated by the three generations of each family; the remaining seven do not correspond to observed particles.  The assignment is fixed by the following physical postulates.

\begin{enumerate}
\item[\textbf{(A1)}] \textbf{Adjacency.}  The three generations of a family occupy three states reachable by adjacent edge moves on the $\mathrm{Sym}^3$ triangle.  (Physically: the ladder is generated by an SU(3) action with adjacent transitions.)
\item[\textbf{(A2)}] \textbf{Light-end placement.}  The lightest generation occupies the weight that is most $a$-dominant and least $c$-dominant.  (Physically: $a<b<c$ orders the eigenvalues, so the smallest monomial value sits at the most $a$-heavy weight.)
\item[\textbf{(A3)}] \textbf{Heavy-end placement.}  The heaviest generation occupies the weight that is most $c$-dominant and least $a$-dominant.  (Same physical reasoning, opposite end.)
\item[\textbf{(A4)}] \textbf{Largest-contrast first leg.}  The first step of the chain uses the $E$-edge (the leg $a\to c$, with the largest edge contrast), reflecting the fact that the largest empirical mass jump in each family is the first $\sqrt m$ ratio.
\end{enumerate}

These four postulates, applied to the down family, select the chain $a^2b\xrightarrow{E} abc\xrightarrow{C} ac^2\xrightarrow{E} c^3$ uniquely.  The lightest weight reachable under (A2) is $a^2b$ (most $a$-heavy short of $a^3$, which is ruled out by (A1) as it requires two consecutive $E$-steps to reach $abc$); the heaviest under (A3) is $c^3$.  Applied to the up family, the same postulates select $a^2b\xrightarrow{E} abc\xrightarrow{B} b^2c$, where the second step exits via $B$ instead of $C$ because the up family's heaviest endpoint is $b^2c$ rather than $c^3$.  The lepton chain $a^2c\xrightarrow{B} abc\xrightarrow{C^{-1}} ab^2\xrightarrow{B} b^3$ follows from the down chain by the Dynkin $\mathbb Z_2$ swap $b\leftrightarrow c$ and is therefore \emph{not} an independent assignment.

\paragraph{Inputs honestly counted.}
The framework's physics input is: (P1) the $\mathrm{Sym}^3(\mathbf 3)$ arena, derived from triality + cubic homogeneity modulo a minimality assumption (Sym$^3$ over Sym$^{3k}$); (P2) the down-family assignment via (A1)--(A4); and (P3) the up-family assignment via (A1)--(A4) with a different second-edge choice ($B$ instead of $C$).  The lepton-family assignment is derived from (P2) by the Dynkin swap and is not an independent input.  Within (P1)--(P3), the chains are unique; the closed-form mass ratios derived in Section~\ref{sec:Sym3-unified} follow by substitution.

\subsection*{Minimality principle as algebraic uniqueness statement}
\label{subsec:minimality-principle}

The four assignment postulates (A1)--(A4) can be restated as five algebraic criteria on weight triples $w_1\to w_2\to w_3$ in $\mathcal W_3:=\{(m_a,m_b,m_c)\in\mathbb Z_{\ge 0}:m_a+m_b+m_c=3\}$.  We include this restatement because it makes the uniqueness of the down chain explicit, but we emphasise that it is a restatement of the physical postulates above, not an independent set of algebraic constraints.

\begin{enumerate}
\item[(M1)] \textbf{Cubic homogeneity:} $w_i\in\mathcal W_3$.  (This is the framework's $\mathrm{Sym}^3$ assignment, equivalent to (P1).)
\item[(M2)] \textbf{Top hierarchy saturation:} $w_3=(0,0,3)$.  (Equivalent to (A3).)
\item[(M3)] \textbf{Clean first ratio:} $\,s_2/s_1=c/a$, so $w_2-w_1=(-1,0,+1)$.  (Equivalent to (A4) and edge universality.)
\item[(M4)] \textbf{No abrupt over-suppression of $a$:} the $a$-exponent decreases by at most one unit per step.  (Equivalent to (A1).)
\item[(M5)] \textbf{First-generation $a$-dominance:} $w_1$ minimises $m_c^{(1)}$ and maximises $m_a^{(1)}$.  (Equivalent to (A2).)
\end{enumerate}

\medskip
\noindent\textbf{Proposition.} Under (M1)--(M5), the three weights are uniquely
\[
w_1=(2,1,0),\qquad w_2=(1,1,1),\qquad w_3=(0,0,3),
\]
equivalently $s_1\propto a^2 b$, $s_2\propto abc$, $s_3\propto c^3$, and consequently
\[
\frac{s_2}{s_1}=\frac{c}{a},\qquad
\frac{s_3}{s_2}=\frac{c^2}{a b}.
\]

\emph{Proof.}  Write $w_1=(x,y,z)$ with $x,y,z\in\mathbb Z_{\ge 0}$ and $x+y+z=3$.  (M3) fixes $w_2=w_1+(-1,0,+1)=(x-1,y,z+1)$, hence $x\ge 1$.  We must reach $w_3=(0,0,3)$ in one more step, so
\[
\Delta_2:=w_3-w_2=(1-x,\,-y,\,2-z).
\]
By (M4) we forbid a decrease of $a$ by two units in a single step, so exclude $1-x=-2$, i.e.\ $x=3$.  (M5) minimises the first-generation $c$-content, setting $z=0$.  With $x+y=3$ and $x\ge 1$, the only possibilities for $w_1$ are $(3,0,0)$, $(2,1,0)$, $(1,2,0)$.  The choice $(3,0,0)$ is ruled out by (M4): then $w_2=(2,0,1)$ and $\Delta_2=(-2,0,+2)$ would decrease the $a$-exponent by 2 in one step.  Among the remaining two, (M5) selects $w_1=(2,1,0)$.  (M3) gives $w_2=(1,1,1)$, and (M2) fixes $w_3=(0,0,3)$.  $\square$

\medskip
\noindent\textbf{Honest status of the uniqueness statement.}  The Proposition states algebraic uniqueness within (M1)--(M5).  Because (M1)--(M5) are the postulates (P1) + (A1)--(A4) restated in algebraic form, this is uniqueness given the physical postulates, not algebraic forcing from the framework alone.  The fundamental physical input is (P2): the choice that the down family occupies a chain starting at $a^2b$ and ending at $c^3$.  The Proposition makes this choice algebraically rigid: there is no other chain consistent with (A1)--(A4).  But the assignment ``this chain is the down family'' is a physical identification, not an algebraic consequence.  Similarly for (P3) — the up family.

\begin{figure}[t]
  \centering
  \includegraphics[width=0.82\linewidth]{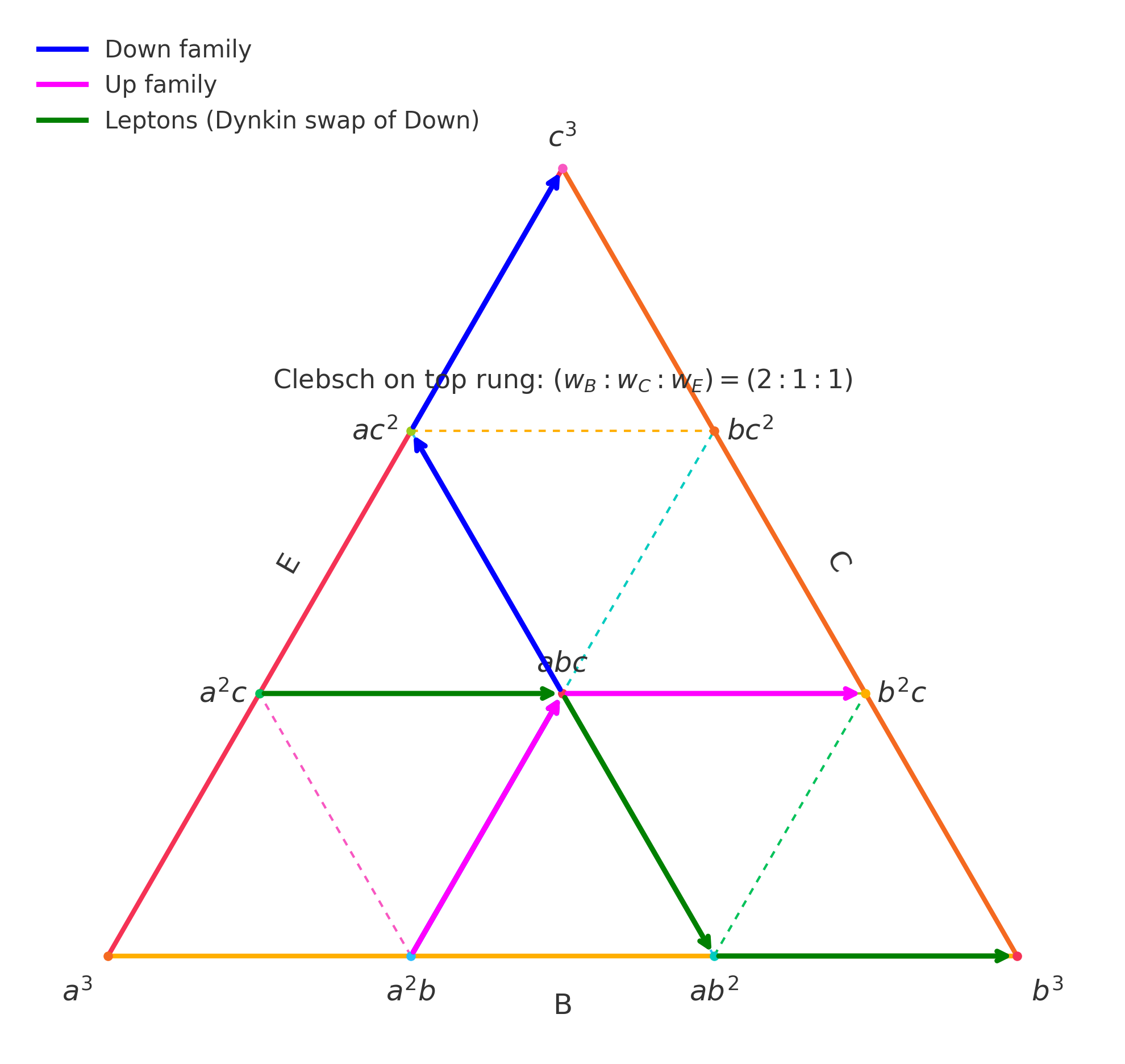}
  \caption[Symmetric-cube ladder with family chains]{$\mathrm{Sym}^3(\mathbf{3})$ ladder arranged as an equilateral triangle.
  Vertices are one-letter states $a^3,b^3,c^3$; edges carry two-letter states; the center is $abc$.
  Edge directions are fixed as $B$ (along $a^3\to b^3$), $C$ (along $b^3\to c^3$), $E$ (along $a^3\to c^3$). We use $E:a\to c$, $B:a\to b$, $C:b\to c$. 
  Colored arrows show the three 3-step chains used in the text: blue (down family),
  magenta (up family), green (leptons, Dynkin swap of down).}
  \label{fig:sym3_triangle_chains}
\end{figure}

\medskip
\noindent\textbf{Roadmap for the three chains (edge moves $E,B,C$ as in Fig.~\ref{fig:sym3_triangle_chains}):}
\begin{flalign*}
 &\text{Down (blue):}\quad  a^{2}b \xrightarrow{\,E\,} abc \xrightarrow{\,C\,} ac^{2} \xrightarrow{\,E\,} c^{3},\\[2pt]
& \text{Up (magenta):}\quad    a^{2}b \xrightarrow{\,E\,} abc \xrightarrow{\,B\,} b^{2}c 
\quad\big(\text{heavy state at } b^{2}c;\ \text{optional: } b^{2}c \xrightarrow{\,C\,} bc^{2} \xrightarrow{\,C\,} c^{3}\big),\\[2pt]
& \text{Leptons (green) \& (Dynkin swap $S$ of Down):}\quad 
 a^{2}c \xrightarrow{\,\tilde E=B\,} abc \xrightarrow{\,\tilde C=C^{-1}\,} ab^{2} \xrightarrow{\,\tilde E=B\,} b^{3}.
\end{flalign*}

\begin{figure}[h]
        \centering
        \includegraphics[width=\columnwidth]{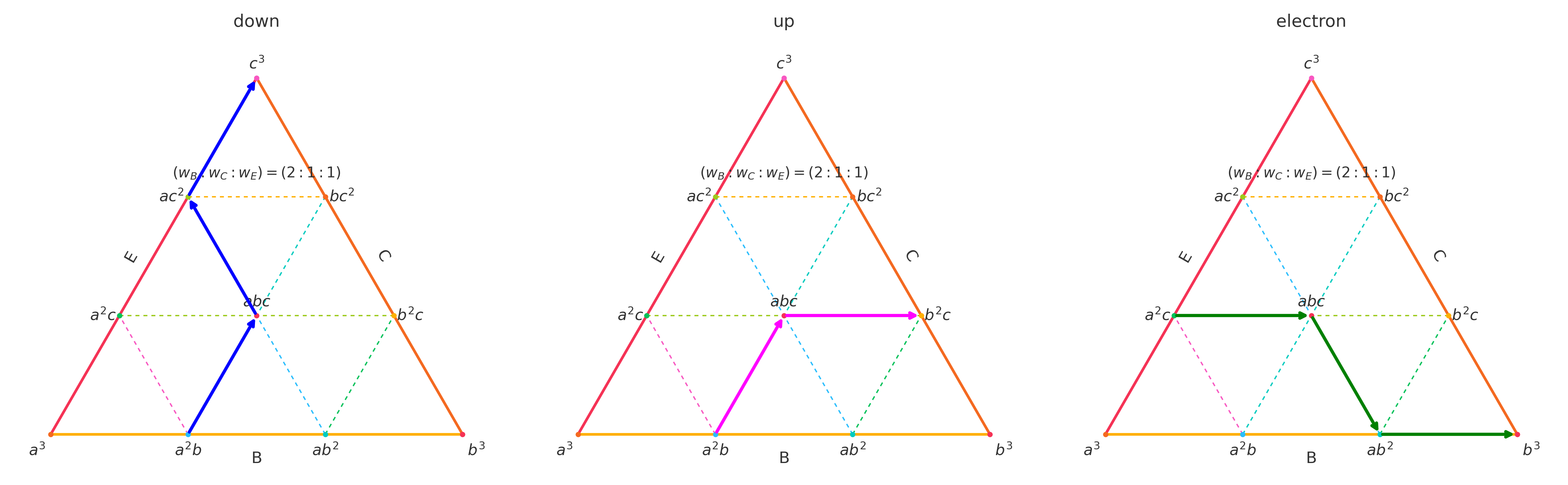}
        \caption[Down, up, and charged-lepton ladders]{Down ladder, up ladder and charged lepton ladder shown distinctly on three equilateral Sym$^3({\bf 3})$ triangles. The pathways are identical to those shown above in the Roadmap for the three chains. The Dynkin swap which reflects down ladder to charged lepton ladder is evident: $b\leftrightarrow c$, $a$ remains fixed, $E\to B$, $B\to E$, $C\to C^{-1}$.}
        \label{fig:3triangles}
\end{figure}

\subsection*{Why alternatives fail or change predictions}
Two nearby chains illustrate the role of the criteria:
\begin{itemize}
\item Starting at \((3,0,0)\to(2,0,1)\to(0,0,3)\) yields 
\(\frac{s_3}{s_2}=c^2/a^2\), violating (M4) via a sudden \(a^{-2}\) hit.
\item Taking \((2,0,1)\to(1,0,2)\to(0,0,3)\) (so \(z>0\) initially) produces 
\(\frac{s_2}{s_1}=c/a\) but also \(\frac{s_3}{s_2}=c/a\) (equal steps), 
conflicting with (M5)'s ``minimal \(c\)'' for the lightest state and failing to realize the stronger second step encoded by \(q_{23}=\ln(c^2/a\,b)\).
\end{itemize}
Thus the chain \((a^2b,\ abc,\ c^3)\) is the only degree-3, two-step, 
monotone path that (i) lands at \(c^3\), (ii) keeps the first step free of \(b\), 
(iii) avoids an \(a^{-2}\) shock, and (iv) maximizes \(a\)-dominance of the lightest state.

\subsection*[ Why start at $a^b$? (Selection principle)]{ Why start at $a^{2}b$? (Selection principle)}
We tie the labels $(a,b,c)$ to the Jordan-eigenvalue ordering $a<b<c$ in the down sector and require that the middle generation correspond to the fully symmetric weight $abc$. The weights that are \emph{adjacent} to $abc$ (reachable in one edge move) are exactly
\[
  a^{2}b \xrightarrow{\,E\,} abc, \qquad
  a^{2}c \xrightarrow{\,B\,} abc, \qquad
  ab^{2} \xrightarrow{\,C\,} abc .
\]
By edge-universality, the first adjacent step equals the corresponding \emph{edge contrast}. Thus:
\begin{center}
\renewcommand{\arraystretch}{1.15}
\begin{tabular}{|c|c|c|c|}
\hline
start & edge to $abc$ & adjacent ratio & numerical (down) \\
\hline
$a^{2}b$ & $E$ & $\displaystyle c_d/a_d$ & $\approx 4.16$ \\
$a^{2}c$ & $B$ & $\displaystyle b_d/a_d$ & $\approx 2.58$ \\
$ab^{2}$ & $C$ & $\displaystyle c_d/b_d$ & $\approx 1.61$ \\
\hline
\end{tabular}
\end{center}
Only $a^{2}b$ with the $E$ edge produces the large first step $c_d/a_d=(1+\delta)/(1-\delta)$ required by data. The other starts either undershoot (too small a contrast) or misalign the lightest state with a heavier endpoint.

\paragraph{Minimality and mirrorability.}
Starting at $a^{2}b$ yields a \emph{minimal} three-corner chain to a pure endpoint,
\[
  a^{2}b \xrightarrow{\,E\,} abc \xrightarrow{\,C\,} ac^{2} \xrightarrow{\,E\,} c^{3},
\]
so the next elementary move out of $abc$ is the $C$-edge contrast $c_d/b_d=1+\delta$.  The physical $s\to b$ identification then skips the intermediate unobserved rung $ac^2$ and is therefore a compound $C$-then-$E$ step, with ratio $(c_d/b_d)(c_d/a_d)=c_d^2/(a_db_d)$. This path mirrors cleanly under the Dynkin swap $S$ to build the lepton ladder. Alternatives like $a^{3}$ merely insert a redundant rung before reaching $a^{2}b$ and do not change physics once the global column normalization is fixed.

\paragraph{Clarification on the ``$(2{:}1{:}1)$'' terminology.}
Three distinct objects appear with the label $(2{:}1{:}1)$ in this paper and elsewhere in the literature on $\mathrm{Sym}^3(\mathbf 3)$ ladders; they should not be confused.
\begin{itemize}
\item \emph{(C-i) Schwinger-boson matrix element $\sqrt 2$ at the central rung:} the ladder matrix element of $E_{ij}=x_i^\dagger x_j$ between adjacent weight states through $|abc\rangle$ is $\sqrt 2$ on each of the three edges, reflecting the occupancy $(n_a,n_b,n_c)=(1,1,1)$ at the central rung.  This is a genuine SU(3) representation-theoretic statement, derived in Appendix~H.5.  It does \emph{not} enter the mass-ratio formula (\ref{eq:edge-ratios}) because the mass operator $X^{\odot 3}$ is diagonal in the monomial basis, not a raising/lowering operator.
\item \emph{(C-ii) Combinatorial letter count along the down chain:} in the chain $a^2b\to abc\to ac^2\to c^3$, the three steps consume one $a$, then one $b$, then one $a$, giving the pattern $(2{:}1{:}1)$ in $(a, b, c)$-counts.  This is a bookkeeping observation about which letter is operated on at each step; it is a property of the chain assignment, not an SU(3) Clebsch--Gordan coefficient.
\item \emph{(C-iii) Seed-vector coefficients in the alternative formulation of Appendix~B:} a 3-dim seed vector $\chi=2|E\rangle+|B\rangle+|C\rangle$ has component weights $(\alpha:\beta:\gamma)=(2:1:1)$.  Taking the symmetric cube of this seed and projecting onto edge sub-rungs reproduces the main-text mass-ratio results, with the seed weights chosen \emph{post hoc} to match.  This is a re-derivation in a different basis, not an independent justification of the $(2{:}1{:}1)$ pattern; see Appendix~B for details.
\end{itemize}
These three objects all carry the label ``$(2{:}1{:}1)$'' for different reasons and at different conceptual levels.  Only (C-ii) corresponds to a derivation in the present paper; (C-i) is an SU(3) consistency check; (C-iii) is an alternative re-derivation.

There is important guiding literature available on the concepts here referred to as edge universality and Dynkin swap. For symmetric powers/weights see \cite{FultonHarris:1991, Cvitanovic:2008}. For $SU(3)$ Clebsches and weight polygons see \cite{Slansky:1981, Wybourne:1974}. For the $A_2$ diagram automorphism that exchanges the two simple roots (our Dynkin swap) see \cite{Bourbaki:2005, FultonHarris:1991}. Appendix B and G explain in pedagogical detail the procedure carried out above. Appendix H details the group-theoretic underpinning for the weight triangle and the ladder structure.

\section{Unified Derivation of Fermion \texorpdfstring{$\sqrt{\text{mass}}$}{sqrt mass} Ratios from \texorpdfstring{$\mathrm{Sym}^3(\mathbf 3)$}{Sym3(3)}}
\label{sec:Sym3-unified}

\subsection*{Setup: weights, edges, and the Dynkin swap}
\paragraph{Weights.} Write monomials $a^p b^q c^r$ with $p+q+r=3$ as integer triples $(p,q,r)$ in the symmetric cubic irrep $\mathrm{Sym}^3(\mathbf 3)$ of $\mathrm{SU}(3)$. We fix the physical ordering $a<b<c$ in each sector (light $\to$ heavy).

\paragraph{Edge moves.} On weight triples we use the standard edge maps
\begin{align}
E:&\ (p,q,r)\mapsto(p-1, q,r+1) && \text{(endpoint, $a\to c$)}\label{eq:E}\\[2pt]
B:&\ (p,q,r)\mapsto(p-1,q+1,r) && \text{(left edge, $a\to b$)}\label{eq:B}\\[2pt]
C:&\ (p,q,r)\mapsto(p,q-1,r+1) && \text{(centre, $b\to c$)}.\label{eq:C}
\end{align}

\paragraph{Edge--universality (adjacent-step lemma).} For \emph{adjacent} steps between neighbouring weights, the product of the ladder matrix element and the norm ratio is a sector-independent constant that can be absorbed once. Hence an adjacent $\sqrt{\text{mass}}$ ratio depends only on the edge type:
\begin{equation}
E:\;\sqrt{\frac{m'}{m}}=\frac{c_F}{a_F},\qquad
B:\;\sqrt{\frac{m'}{m}}=\frac{b_F}{a_F},\qquad
C:\;\sqrt{\frac{m'}{m}}=\frac{c_F}{b_F},\qquad(F\in{d,u,\ell}).\label{eq:edge-univ}
\end{equation}

(In $\mathrm{Sym}^3(\mathbf 3)$ the elements of the adjacent edge matrix along the chain $a^2b\!\to\!abc\!\to\!ac^2\!\to\!c^3$ carry fixed representation multiplicities $2,1,1$; with the standard multinomial normalization these integers are universal (sector independent) and can be absorbed by a single overall normalization, so each adjacent ratio $\sqrt{m}$ depends only on the corresponding edge contrast (“edge- universality”): $E:\ \sqrt{m'/m}=c/a,\quad B:\ \sqrt{m'/m}=b/a,\quad C:\ \sqrt{m'/m}=c/b$.)

\paragraph{Clarification} Throughout, the triple 
(2:1:1) denotes the raw multiplicities of eligible letters in the three successive moves along the minimal chain 
$E,C,E$ (two $a$'s, one $b$, one $a$); it is not a count of distinct edge labels $(E,B,C)$ used. This choice yields the fixed 
first-rung vector $2a^2 + b^2 + c^2$ and underlies 
edge-universality.

\paragraph{Dynkin $\mathbb Z_2$ swap ($S$).} The $A_2$ diagram automorphism reflects the weight triangle, exchanging the two endpoints: $b\leftrightarrow c$ (with $a$ fixed). On edges, $S$ acts by conjugation,
\begin{equation}
\tilde E:=S E S^{-1}=B,\qquad \tilde B= SBS^{-1}=E,\qquad \tilde C=C^{-1},\label{eq:swap}
\end{equation}
so a down-ladder $E$-leg becomes the lepton $\mu\to\tau$ leg, while the $C$-leg reverses (becoming the $\mu\leftarrow e$ direction). See Appendices C and H for a more detailed explanation of the Dynkin swap.

The Dynkin swap is not introduced ad hoc.  It is the post-triality-breaking residue of the $E_6$ outer automorphism postulated in Sec.~\ref{subsec:sym3-dynkin-swap} as the discrete $\mathbb Z_2$ symmetry $\Sigma_{LR}$ identifying the L and R sectors of $E_6^L\times E_6^R$.  Restriction of this $E_6$ involution to the residual flavor $A_2\cong\mathfrak{su}(3)_F$ produces the $A_2$ Dynkin automorphism (\ref{eq:A2-Dynkin-swap}), which acts on the $\mathrm{Sym}^3(\mathbf 3)$ weight triangle as the reflection (\ref{eq:S-on-monomials}).  The empirical electric-charge versus square-root-mass flip $1\leftrightarrow 1/3$ between the down and lepton families is the downstream phenomenological signature of this structural swap, derived in Sec.~\ref{subsec:dynkin-consequences}, not an independent motivation.

\paragraph{Right-handed eigenvalues.} With $\delta=\sqrt{3/8}$ and trace choices $\mathrm{Tr}X_d=3$, $\mathrm{Tr}X_u=2$, $\mathrm{Tr}X_\ell=1$, the sector eigenvalues are
\begin{equation}
(a_d,b_d,c_d)=(1-\delta,1,1+\delta),\quad
(a_u,b_u,c_u)=\Bigl(\tfrac23-\delta,\tfrac23,\tfrac23+\delta\Bigr),\quad
(a_\ell,b_\ell,c_\ell)=\Bigl(\tfrac13-\delta,\tfrac13,\tfrac13+\delta\Bigr).\label{eq:eigs}
\end{equation}

\subsection*{Down family $(d\to s\to b)$ from the $a^2b$ start}
Starting corner $a^2b=(2,1,0)$ is the unique weight that is both \emph{maximally aligned with $a$} and \emph{one $E$-hop} from the symmetric middle $abc=(1,1,1)$:
\begin{equation*}
a^2b\xrightarrow{\ E\ } abc\xrightarrow{\ C\ } ac^2\xrightarrow{\ E\ } c^3.
\end{equation*}
The first physical ratio is the adjacent $E$ step.  The second physical ratio is the compound path through the intermediate unobserved rung $ac^2$, i.e. the product of a $C$ step and an $E$ step.  Therefore \eqref{eq:edge-univ} gives
\begin{align}
\boxed{\sqrt{\frac{m_s}{m_d}}=\frac{c_d}{a_d}=\frac{1+\delta}{1-\delta}},\qquad
\boxed{\sqrt{\frac{m_b}{m_s}}=\frac{c^2_d}{a_db_d}=\frac{1+\delta}{1-\delta}\,(1+\delta)}\label{eq:down}
\end{align}

\subsection*{Lepton family \texorpdfstring{$(e\to\mu\to\tau)$}{(e→μ→τ)} as the \texorpdfstring{$S$}{S}-reflection}
Apply $S$ to the down path. Using \eqref{eq:swap}, the reflected chain is
\begin{equation*}
a^2c\xrightarrow{\ \tilde E=B\ } abc\xrightarrow{\ \tilde C=C^{-1}\ } ab^2\xrightarrow{\ \tilde E=B\ } b^3,
\end{equation*}
with the identification $|e\rangle=a^2c$, $|\mu\rangle=abc$, $|\tau\rangle=b^3$. The last lepton rung is the conjugate of the down $E$-leg, hence
\begin{equation}
\boxed{\sqrt{\frac{m_\tau}{m_\mu}}=\frac{c_d}{a_d}=\frac{1+\delta}{1-\delta}}.\label{eq:tau_mu}
\end{equation}
The first lepton rung uses the \emph{other} leg and picks up the local endpoint tilt in the lepton idempotent. (Equivalently, run the explicit ladder with normalized kets; the only surviving factor beyond the carried-over $E$-leg is $|c_\ell/a_\ell|$).
\begin{equation}
G:=\Bigl|\frac{c_\ell}{a_\ell}\Bigr|=\frac{\tfrac13+\delta}{\delta-\tfrac13},\qquad
\boxed{\sqrt{\frac{m_\mu}{m_e}}=\sqrt{\frac{m_\tau}{m_\mu}}\times G
=\frac{1+\delta}{1-\delta}\cdot\frac{\tfrac13+\delta}{\delta-\tfrac13}}.\label{eq:mu_e}
\end{equation}

\noindent
In summary, the charged lepton ladder is not an independent construction but
the Dynkin $\,\mathbb Z_2\,$ image of the down ladder.  The diagram
automorphism $S$ reflects the weight triangle ($b\leftrightarrow c$ with $a$
fixed), carrying the down chain
$a^2 b\to abc\to ac^2\to c^3$ to the lepton chain
$a^2 c\to abc\to ab^2\to b^3$, with the identifications
$e\leftrightarrow a^2 c$, $\mu\leftrightarrow abc$, and
$\tau\leftrightarrow b^3$.  Together with the universal eigenvalues
\eqref{eq:eigs} and the edge-universality lemma (with physical skipped rungs treated as products of adjacent edges), this yields the lepton
ratios
\(\sqrt{m_\tau/m_\mu}=(1+\delta)/(1-\delta)\) and
\(\sqrt{m_\mu/m_e}=(1+\delta)/(1-\delta)\,[(\tfrac{1}{3}+\delta)/(\delta-\tfrac{1}{3})]\),
in direct parallel with the down-sector formulas.

\subsection*{Up family $(u\to c\to t)$ from the $E\to B$ edge}
From the same start $a^2b\to abc$, take the \emph{other} outward edge from the middle, $B:abc\to b^2c$:
\begin{equation*}
a^2b\xrightarrow{\ E\ } abc\xrightarrow{\ B\ } b^2c.
\end{equation*}
By \eqref{eq:edge-univ}, the adjacent ratios are
\begin{equation}
\boxed{\sqrt{\frac{m_c}{m_u}}=\frac{c_u}{a_u}=\frac{\tfrac23+\delta}{\tfrac23-\delta}},\qquad
\boxed{\sqrt{\frac{m_t}{m_c}}=\frac{b_u}{a_u}=\frac{\tfrac23}{\tfrac23-\delta}}.\label{eq:up}
\end{equation}

\subsection*{Numerical check}
\begin{align*}
\sqrt{m_s/m_d}&=4.1596, & \sqrt{m_b/m_s}&=6.7068,\\
\sqrt{m_\tau/m_\mu}&=4.1596, & \sqrt{m_\mu/m_e}&=14.0975,\\\
\sqrt{m_c/m_u}&=23.5576, & \sqrt{m_t/m_c}&=12.2788.
\end{align*}
All three families are obtained from the \emph{same} ladder with the \emph{same} $\delta$, using a single start ($a^2b$) and two operations: “go to the middle by $E$,” then choose the outward leg ($C$ for down, $B$ for up). The lepton ladder is the Dynkin reflection $S$ of the down ladder, with one additional local factor $G$ fixed by the lepton trace choice $\mathrm{Tr}X_\ell=1$.\label{sec:lepton-ladder}


The mass ratios derived here are exactly the same as those given in our earlier work \cite{Singh2022IRAlphaMassRatios, BhattEtAl2022MajoranaEJA}. The present work explains why the ratios are not simply ratios of the Jordan eigenvalues, and the mystery behind the form of some of the ratios is now removed - e.g. why does the charged lepton family carry over a ratio $\left(\frac{\sqrt{m_s}}{\sqrt{m_d}}\right)$ from the down family.

The mass ratio derivation is explained in greater detail in Appendix D.

\section{Before Triality breaking: a Universal Dirac Template, and What the EW/Triality Transition Does}
\label{sec:pre-breaking-universe}

\subsection{Pre/post triality: \texorpdfstring{$\mathrm{Sym}^3(\mathbf{3})$, monomial ratios, and assignment}{Sym3(3), monomial ratios, and assignment}}

\paragraph{Pre-breaking (triality symmetric).} 
On the $A_2\subset E_6$ charge frame, write the spectral triple
\[
(a_0,b_0,c_0)=(q_0-\delta_0,\ q_0,\ q_0+\delta_0),\qquad \delta_0^2=\tfrac{3}{4}.
\]
The $\mathrm{Sym}^3(\mathbf 3)$ weights are the ten degree-3 monomials 
$\{a_0^3,a_0^2b_0,\ldots,a_0b_0c_0,\ldots,c_0^3\}$. 
Triality acts as $S_3$ permutations of $(a_0,b_0,c_0)$, so no monomial has a fixed “family” label.
The adjacent edge ratios (after the single central normalization at $|a_0b_0c_0\rangle$) are
\[
G_E^{(0)}=\frac{c_0}{a_0}=\frac{q_0+\delta_0}{q_0-\delta_0},\quad
G_B^{(0)}=\frac{b_0}{a_0}=\frac{q_0}{q_0-\delta_0},\quad
G_C^{(0)}=\frac{c_0}{b_0}=\frac{q_0+\delta_0}{q_0}.
\]
These are kinematic invariants of the pre-breaking point, defined modulo $S_3$.

\paragraph{Post-breaking (choose Jordan frame and $SU(3)_F$).}
Pick a Jordan frame and fix the order so that
\[
(a,b,c)=(q-\delta,\ q,\ q+\delta),\qquad \delta^2=\tfrac{3}{8},\qquad a<b<c.
\]
The same $\mathrm{Sym}^3(\mathbf 3)$ diagram now has labeled edges; adjacent ratios become
\[
G_E=\frac{c}{a},\qquad G_B=\frac{b}{a},\qquad G_C=\frac{c}{b}.
\]
The LH hypercharge/electric-charge functional $Q$ is a linear form on the $A_2$ Cartan that ranks 
the letters $Q(a)<Q(b)<Q(c)$. Therefore the monomial with the \emph{lowest charge at fixed degree} 
is $a^2b$ (two letters along the lowest-charge axis). This fixes the light rung:
\[
\boxed{\text{light site } \equiv a^2b\ \text{ (down-type sector)}.}
\]
Dynamics/minimality then select the second step:
\[
\text{down: } a^2 b \xrightarrow{E} abc \xrightarrow{C\circ E} c^3,\qquad
\text{up: } a^2 b \xrightarrow{E} abc \xrightarrow{B} b^2 c,
\]
while the charged-lepton chain is the Dynkin-swapped image of down on the RH side.

\paragraph{Remark on identification.}
Before breaking, $S_3$ symmetry forbids a unique tag of “$a$ vs.\ $b$ vs.\ $c$” and thus of 
“$a^2b$ as down”. After breaking, the \emph{choice of $SU(3)_F$ and charge direction $Q$} fixes the 
ordering, and the monomial-to-particle map above follows.

\paragraph*{Q:} What are the values that $q_0$ takes prior to triality breaking? Is it $q_0=0$ and $q_0=1$, with the former value being for the neutrino?

\paragraph*{A:} No. Prior to triality breaking, with exact L-R symmetry, there is a \emph{single common center} $q_0$ for both copies:
\[
X^{(0)}_{L,R}=q_0\,\mathbf 1_3+\mathcal Y(\cdots),\qquad 
\spec(X^{(0)})=(q_0-\delta_0,\; q_0,\; q_0+\delta_0),\qquad \delta_0^2=\tfrac{3}{4}.
\]
Triality fixes the spread $\delta_0$, \emph{not} the center $q_0$. The identity piece $q_0\mathbf 1_3$ is a free “center” that can be set by convention. If one wishes the middle eigenvalue to be exactly zero at the symmetric (Dirac-$\nu$) point, it is natural to choose
\[
q_0=0\ \Rightarrow\ \spec=(-\delta_0,\,0,\,\delta_0).
\]
By contrast, the values $q=1,\;2/3,\;1/3$ used in the main text are \emph{post-breaking, LH charge-frame centers} for the down, up, and lepton sectors, respectively; they are not the pre-breaking $q_0$.

\subsection{Koide's relation: theory vs.\ experiment, and the role of triality breaking}
Koide’s ratio for a charged-lepton triple $\{\sqrt{m_e},\sqrt{m_\mu},\sqrt{m_\tau}\}$ is
\begin{equation}
K \;=\; \frac{m_e+m_\mu+m_\tau}{\bigl(\sqrt{m_e}+\sqrt{m_\mu}+\sqrt{m_\tau}\bigr)^2}
\;=\; \frac{1+S^2+(ST)^2}{\bigl(1+S+ST\bigr)^2},
\qquad S:=\sqrt{\frac{m_\mu}{m_e}},\;\; T:=\sqrt{\frac{m_\tau}{m_\mu}}.
\label{eq:Koide-basic}
\end{equation}
From the lepton ladder in Sec. XI we have
\begin{equation}
T=X=\frac{1+\delta}{\,1-\delta\,},\qquad
S=X\,G=\frac{1+\delta}{\,1-\delta\,}\cdot\frac{\tfrac13+\delta}{\,\delta-\tfrac13\,},
\qquad \delta=\sqrt{\tfrac{3}{8}}.
\label{eq:Koide-S-T}
\end{equation}
Numerically this gives
\begin{equation}
K_{\rm th}=0.669163\ldots
\end{equation}
to be compared with the experimental value (using PDG charged-lepton pole masses)
\begin{equation}
K_{\rm exp}=0.666661\ldots
\end{equation}
so our prediction overshoots by $\Delta K\simeq +2.5\times 10^{-3}$ ($\sim0.38\%$).%

\paragraph{Why $K$ is exactly $2/3$ before triality breaking.}
Introduce the \emph{proto-centre}
\begin{equation}
k\;:=\;q\,s,\qquad h:=\ln q,\quad j:=\ln s,\quad \ln k=h+j.
\end{equation}
At the unbroken (high-symmetry) point we fix the gauge $k=1$ (equivalently $h+j=0$). The lepton square roots then form a perfectly symmetric triple
\begin{equation}
(\sqrt{m_e},\sqrt{m_\mu},\sqrt{m_\tau})\;\propto\;(k-\delta,\;k,\;k+\delta)
=(1-\delta,\;1,\;1+\delta),
\end{equation}
so that
\begin{equation}
K=\frac{(1-\delta)^2+1+(1+\delta)^2}{\big[(1-\delta)+1+(1+\delta)\big]^2}
=\frac{3+2\delta^2}{9}\!.
\end{equation}
Prior to the triality symmetry breaking, the neutrino is a Dirac neutrino, and for the  neutrino family $\delta^2=3/4$ whereas for the charged families, $\delta^2 = 3/2$ 
\cite{BhattEtAl2022MajoranaEJA}. Symmetry breaking does not change the neutrino's value of $\delta$, but halves the value of $\delta$ for the charged families.
Choosing the Dirac-set spread appropriate to the unbroken point,
\begin{equation}
\Big(\frac{\delta}{k}\Big)^2=\frac{3}{2}\qquad(\text{i.e.\ }\delta^2=\tfrac{3}{2}\ \text{when }k=1),
\end{equation}
we obtain the \emph{exact} Koide value
\begin{equation}
\boxed{K=\tfrac{2}{3}}.
\end{equation}

\paragraph{Why the post-breaking value differs slightly.}
After octonionic triality breaking:
\begin{enumerate}\itemsep 2pt
\item the universal spread in the charged sectors becomes $\delta^2=\tfrac{3}{8}$ (Majorana set);
\item the Dynkin-$\mathbb Z_2$ swap that maps the down ladder to the lepton ladder introduces an \emph{endpoint tilt} $G=\frac{\tfrac13+\delta}{\delta-\tfrac13}>1$ on the first rung.
\end{enumerate}
These two effects deform the symmetric triple off the Koide circle, shifting $K$ from $2/3$ to $K_{\rm th}\simeq 0.66916$, mildly above the experimental $0.66666\ldots$.
In our framework this small offset directly measures the finite size of triality breaking (through $X$ and $G$); before breaking ($k=1$, $(\delta/k)^2=3/2$) Koide is exact.

\subsection{Triality-symmetric phase: only the proto--centre matters}
Prior to octonionic triality breaking (which we identify with electroweak symmetry breaking), the left/right/vector slots are related by Spin(8) triality, so there is no physical distinction between ``left'' and ``right'' frames. In our notation
\[
q:=\text{LH centre},\qquad s:=\text{RH centre},\qquad
k:=q\,s,\qquad h:=\ln q,\quad j:=\ln s,\quad \Lambda:=\ln k=h+j.
\]
In the \emph{triality-symmetric} phase only the invariant proto--center \(k\) is meaningful; decomposing it as \(h+j\) is merely a choice of coordinates on a single invariant \(\Lambda\). We can (and will) fix the convenient gauge
\begin{equation}
k=1 \qquad\Longleftrightarrow\qquad \Lambda=0,
\end{equation}
which sets the geometric mean of the left/right centres to unity. Physical observables in this phase depend only on the spread-to-center ratio \((\delta/k)\), not on the overall scale.

\subsection{Universal Dirac template and exact Koide before breaking}
With \(k=1\), every family shares one and the same centered square-root spectrum
\begin{equation}
(\sqrt m_1,\sqrt m_2,\sqrt m_3)\ \propto\ (k-\delta,\ k,\ k+\delta)\;=\;(1-\delta,\ 1,\ 1+\delta)
\;\;\Longleftrightarrow\;\; \{-\delta,0,+\delta\}\ \text{about a zero centre}.
\label{eq:dirac-template}
\end{equation}
This is the ``Dirac template'': lepton number is intact and there is no trace splitting yet (no \(1{:}2{:}3\)); ``generations'' are not distinguished-just one universal, centered triplet.

Koide’s ratio for a triple \((k-\delta,k,k+\delta)\) is scale-free and depends only on \((\delta/k)^2\),
\begin{equation}
K \;=\; \frac{(k-\delta)^2+k^2+(k+\delta)^2}{\big[(k-\delta)+k+(k+\delta)\big]^2}
\;=\; \frac{3k^2+2\delta^2}{(3k)^2}
\;=\; \frac{1}{3}+\frac{2}{9}\Big(\frac{\delta}{k}\Big)^{\!2}.
\label{eq:koide-pre}
\end{equation}
Choosing the \emph{Dirac-set} spread at the symmetric point,
\begin{equation}
\Big(\frac{\delta}{k}\Big)^{\!2}=\frac{3}{2}\qquad (k=1),
\end{equation}
one obtains the exact Koide value
\begin{equation}
\boxed{K=\tfrac{2}{3}}\qquad\text{(pre-triality breaking)}.
\end{equation}

\subsection{The EW/Triality transition as an order parameter for flavor}
When the universe cools through the electroweak scale, the vacuum selects a triality orientation. Three tightly controlled deformations convert the universal Dirac template \eqref{eq:dirac-template} into the observed quark/lepton spectra:
\begin{enumerate}\itemsep 4pt
\item \textbf{Trace splitting across families.} The family traces separate,
\begin{equation}
\mathrm{Tr}\,X_\ell : \mathrm{Tr}\,X_u : \mathrm{Tr}\,X_d \;=\; 1:2:3,
\end{equation}
selecting distinct centres for the lepton, up, and down sectors.
\item \textbf{Spread renormalisation in charged sectors.} The Dirac-set spread deforms to the Majorana-set value
\begin{equation}
\delta^2\;=\;\frac{3}{8}
\end{equation}
for the charged families (and in the neutrino sector the Majorana texture is what yields the measured PMNS angles).
\item \textbf{Endpoint tilt on the lepton first rung.} The Dynkin \(\mathbb Z_2\) swap that maps the down ladder to the lepton ladder introduces a single local factor on the first lepton step,
\begin{equation}
G\;=\;\frac{\tfrac13+\delta}{\delta-\tfrac13}\;>\;1,
\end{equation}
which reflects the misalignment between charge and mass frames at the first rung.
\end{enumerate}
It is convenient (and consistent) to keep the gauge \(k=qs=1\) after breaking, so \(h=-j\). The separation of \(h\) and \(j\) now becomes \emph{physical} (LH vs RH frames are no longer equivalent).

\subsection{Consequences: mass ratios and the small Koide offset}
With items (1)-(3) in place, the three families’ square-root mass ratios follow from the same \(\mathrm{Sym}^3(\mathbf 3)\) ladders:
\begin{align}
\sqrt{\frac{m_s}{m_d}}&=\frac{1+\delta}{1-\delta}, &
\sqrt{\frac{m_b}{m_s}}&=\frac{1+\delta}{1-\delta}\,(1+\delta), \\
\sqrt{\frac{m_\tau}{m_\mu}}&=\frac{1+\delta}{1-\delta}, &
\sqrt{\frac{m_\mu}{m_e}}&=\frac{1+\delta}{1-\delta}\cdot\frac{\tfrac13+\delta}{\delta-\tfrac13}, \\
\sqrt{\frac{m_c}{m_u}}&=\frac{\tfrac23+\delta}{\tfrac23-\delta}, &
\sqrt{\frac{m_t}{m_c}}&=\frac{\tfrac23}{\tfrac23-\delta},
\end{align}
with \(\delta=\sqrt{3/8}\). The charged-lepton Koide ratio inferred from these steps is then
\begin{equation}
K_{\text{post}} \;=\; \frac{1+S^2+(ST)^2}{(1+S+ST)^2},\qquad
T=\frac{1+\delta}{1-\delta},\quad S=T\,G,
\end{equation}
which evaluates to \(K_{\text{post}}\simeq 0.66916\), a small, positive offset above the experimental \(K_{\text{exp}}\simeq 0.66666\). In our language this offset is the finite imprint of triality breaking (items 2 and 3): before breaking \(K=2/3\) exactly; after breaking it is shifted slightly upward by a calculable amount.

\subsection{Cosmological portrait and emergent flavor}
Before the EW/Triality transition the plasma is in a maximally symmetric, unbroken EW phase; weak gauge bosons are massless and there is no chiral separation. Flavor, chirality, and the observed hierarchies \emph{emerge} as the order induced by the triality orientation at the transition: the trace split \(1{:}2{:}3\), the spread renormalisation \(\delta^2=3/8\), and the single lepton endpoint tilt \(G\) together generate the quark/lepton mass ratios and the near-Koide charged-lepton pattern observed today.

\subsection*{Spin before and after triality breaking}
\paragraph{Statement.}
The fundamental fermionic excitation remains a \emph{spin-$\tfrac12$} field both before and after triality/electroweak breaking. Triality/Jordan structure acts in the \emph{internal} (flavor/idempotent) space and does not alter the Lorentz representation.

\paragraph{Pre-breaking (triality symmetric).}
With the proto-centre fixed to $k\equiv q s=1$ (so $\Lambda=\ln k=0$), the internal mass operator is centred and the spectrum is the universal Dirac-template $\{-\delta,0,+\delta\}$. The spacetime field is a \emph{massless Dirac spinor}
\[
\psi=(\psi_L,\psi_R),\qquad \psi_{L,R}=P_{L,R}\psi,\quad P_{L,R}=\tfrac12(1\mp\gamma^5),
\]
with kinetic term $\bar\psi\, i\gamma^\mu\partial_\mu \psi$. Chirality is not yet distinguished by couplings, but the Lorentz spin is already fixed: $\psi$ transforms in the spinor rep of $\mathrm{Spin}(1,3)$, i.e. spin-$\tfrac12$.

\paragraph{Post-breaking (triality oriented).}
The vacuum picks a left/right frame, splitting traces ($1{:}2{:}3$), renormalising the spread, and introducing the lepton endpoint tilt $G$. Mass terms then appear:
\[
\mathcal L_{\rm Dirac}= -\,m_D\,\bar\psi\psi,
\qquad
\mathcal L_{\rm Majorana}= -\,\tfrac12\,m_M\,\nu_L^{\mathsf T} C^{-1}\nu_L+\text{h.c.}
\]
Charged leptons/quarks get Dirac masses; neutrinos get Majorana masses in our construction. In all cases the one-particle states remain \emph{fermions of spin-$\tfrac12$}. Scalars/vectors are bilinears (e.g.\ $\bar\psi\psi$ is spin-0; $\bar\psi\gamma^\mu\psi$ is spin-1) or gauge fields-\emph{not} the pre-breaking fermion itself.

\paragraph{Clifford--octonion map and triality.}
The triality--symmetric core of our construction is most naturally phrased in terms of the real Euclidean Clifford algebra $\mathrm{Cl}(8)$ and its spin group $\mathrm{Spin}(8)$.  The three inequivalent $8$--dimensional irreducible representations of $\mathrm{Spin}(8)$---the vector $8_v$ and the two chiral spinors $8_s,8_c$---are permuted by the outer automorphism group $S_3$ (triality).  In the octonionic realisation, each of these $8$’s can be identified (non--canonically) with $\mathbb{O}$, and the triality trilinear $8_v\otimes 8_s\otimes 8_c\to\mathbb{R}$ is the real part of an octonionic triple product.  Inside the exceptional Jordan algebra $J_3(\mathbb{O}_{\mathbb{C}})$, a canonical $\mathfrak{so}(8)\subset\mathfrak f_4=\mathrm{Der}\,J_3(\mathbb{O})$ acts on the three off--diagonal octonionic slots $(x,y,z)$ exactly as $(8_v,8_s,8_c)$ up to the $S_3$ permutation; this is the ``triality in the matrix’’ that underlies our use of symmetric cubic ladders.

\paragraph{Breaking triality to $SU(3)_{\rm flavor}$.}
Choosing a unit imaginary octonion (equivalently, a pure spinor/complex structure) reduces $G_2=\mathrm{Aut}(\mathbb{O})$ to its stabilizer $SU(3)$ and selects a complex $3$--plane $\mathbb{C}^3\subset\mathbb{O}$.  In representation--theoretic terms, this amounts to the restriction chain
\[
\mathrm{Spin}(8)\ \longrightarrow\ \mathrm{Spin}(7)\ \ \text{and/or}\ \ \mathrm{Spin}(6)\cong SU(4)\ \longrightarrow\ SU(3),
\]
where the final $SU(3)$ is precisely our global $SU(3)_{\rm flavor}$ acting on the triplet $(v_1,v_2,v_3)\in\mathbb{C}^3$.  This step is what we mean by ``triality breaking’’: the outer $S_3$ is no longer a symmetry once a complex structure is fixed, and the surviving internal rotations are the $SU(3)$ that organizes families.  In the Jordan picture this corresponds to fixing an idempotent/Jordan frame, after which the $S_3$ that permutes the three off--diagonal slots is reduced to the inner $SU(3)$ acting on the chosen $\mathbb{C}^3$.

\paragraph{Conclusion.}
Before breaking: a massless Dirac spinor (spin-$\tfrac12$) with a centred internal mass operator; after breaking: masses and chirality emerge, but the Lorentz spin remains $1/2$.

\paragraph{One-line summary.}
\emph{Pre-breaking the theory sits on a universal Dirac-like, centered spectrum with \(k=qs=1\) (\(\Lambda=0\)) and Koide \(K=2/3\); the electroweak/triality transition is the order parameter that orients chirality and deforms this seed just enough to match the measured quark/lepton ratios and the small Koide shift.}

\paragraph{Which Clifford algebra before symmetry breaking?}
For an \emph{unbroken}, triality--symmetric stage, $\mathrm{Cl}(8)$ is the natural choice: it is precisely in dimension $8$ that $\mathrm{Spin}(8)$ exhibits triality and carries two inequivalent chiral spinor irreps $(8_s,8_c)$ alongside the vector $8_v$.  By contrast, moving to $\mathrm{Cl}(9)$ (with $\mathrm{Spin}(9)$) collapses the chiral pair to a single real $16$--dimensional spinor and replaces triality by a different structure; moreover, $\mathrm{Spin}(9)$ appears in $F_4$ as the stabilizer of a point of $\mathbb{O}\mathbb{P}^2$, i.e.\ \emph{after} a choice that effectively breaks the $S_3$.  Since our unified starting point is $E_6{}_L\times E_6{}_R$ with a left and a right sector that are initially parallel, the minimal Clifford backbone is two commuting copies,
\[
\mathrm{Cl}(8)_L\ \oplus\ \mathrm{Cl}(8)_R,
\]
each providing its own triality triple $(8_v,8_s,8_c)$ that feeds into the corresponding $J_3(\mathbb{O}_{\mathbb{C}})$ block and trinification chain.  The subsequent \emph{breaking} to $SU(3)_{\rm flavor}$ in each sector is then implemented by fixing the complex structure/pure spinor (octonion unit), which in the Clifford language is the restriction $\mathrm{Spin}(8)\to \mathrm{Spin}(6)\to SU(3)$ and, in the Jordan language, the reduction of the $S_3$ slot symmetry to the inner $SU(3)$ acting on the selected $\mathbb{C}^3$.

\paragraph{Clifford vs.\ exceptional Lie algebras (dimension bookkeeping).}
The real Clifford algebra $\mathrm{Cl}(8)$ is a \emph{matrix/associative} algebra of real dimension $2^8=256$; concretely $\mathrm{Cl}(8,0)\simeq M_{16}(\mathbb{R})$.  By contrast, $E_6$ is a \emph{simple Lie} algebra of dimension $78$ (and $E_8$ has dimension $248$).  There is no expectation that the total vector--space dimensions of $\mathrm{Cl}(8)$ and $E_6$ should match: they are different categories (associative algebra vs.\ Lie algebra).  What we actually use from $\mathrm{Cl}(8)$ is the \emph{Lie} subalgebra generated by commutators of gamma matrices, namely
\[
\mathfrak{so}(8)\subset \mathrm{Cl}(8)\qquad (\dim \mathfrak{so}(8)=28),
\]
whose group $\mathrm{Spin}(8)$ exhibits triality.  This $\mathrm{Spin}(8)$ embeds in the exceptional chain
\[
\mathrm{Spin}(8)\ \subset\ F_4\ \subset\ E_6,
\]
where $F_4$ (dimension $52$) is the automorphism group of the Albert (exceptional Jordan) algebra $J_3(\mathbb{O})$, and $E_6$ (dimension $78$) is the reduced structure group preserving $\det$ on $J_3(\mathbb{O}_{\mathbb{C}})$.  Thus the dimension mismatch $\dim \mathrm{Cl}(8)=256$ vs.\ $\dim E_6=78$ is entirely benign: we only use the \emph{$\mathfrak{so}(8)$ slice} of $\mathrm{Cl}(8)$ to make triality concrete; the exceptional symmetry acting on our Jordan sector is $E_6$.

\paragraph{Relation to $E_8$ and the ``248 vs.\ 256'' near miss.}
The numerical closeness $\dim \mathrm{Cl}(8)=256$ and $\dim E_8=248$ is coincidental.  That said, Clifford--spinor technology \emph{does} underlie standard constructions of $E_8$:
\begin{itemize}\itemsep 2pt
\item One realisation is $\,\mathfrak e_8 \cong \mathfrak{so}(16)\oplus S^+_{16}$, i.e.\ the adjoint of $SO(16)$ (dimension $120$) plus a chiral spinor (dimension $128$), with the Lie bracket defined using Clifford multiplication-here the Clifford data are those of $\mathrm{Cl}(16)$, not $\mathrm{Cl}(8)$.
\item Triality also appears inside $E_8$ via the embedding $SO(8)\times SO(8)\subset E_8$, with the adjoint decomposing as
\[
\mathbf{248} \;=\; (\mathbf{28},\mathbf{1})\oplus(\mathbf{1},\mathbf{28})
\oplus(\mathbf{8_v},\mathbf{8_v})\oplus(\mathbf{8_s},\mathbf{8_s})\oplus(\mathbf{8_c},\mathbf{8_c}),
\]
showcasing the three $SO(8)$ triality representations $(8_v,8_s,8_c)$ pairwise.
\end{itemize}
In our framework we stay at the $E_6$/$F_4$/$\mathrm{Spin}(8)$ level: $\mathrm{Cl}(8)$ is the convenient workhorse to write the triality action explicitly; $F_4$ and $E_6$ are the exceptional symmetries tied to the Albert algebra and its determinant that we actually exploit.

\subsection{Outlook: from \(E_6\times E_6\) to \(E_8\times E_8\)}
A natural uplift of our framework is to embed each \(E_6\) factor into \(E_8\) via the standard maximal chain \cite{Kaushik}
\begin{equation}
E_8 \ \supset\ E_6 \times SU(3)\,,
\qquad
\mathbf{248}\ \to\ (\mathbf{78},\mathbf{1}) \oplus (\mathbf{1},\mathbf{8})
\oplus (\mathbf{27},\mathbf{3}) \oplus (\overline{\mathbf{27}},\overline{\mathbf{3}})\,.
\label{eq:E8branch}
\end{equation}
At the product level this gives
\begin{equation}
E_{8\,L}\times E_{8\,R}\ \supset\ \big(E_{6\,L}\times SU(3)_{geomL}\big)\ \times\ \big(E_{6\,R}\times SU(3)_{geomR}\big)\,,
\end{equation}
so our present \(E_{6\,L}\times E_{6\,R}\) construction (gauge dynamics, \(J_3(\mathbb{O}_{\mathbb{C}})\) textures, and the \(\mathrm{Sym}^3(\mathbf{3})\) ladder) sits inside \(E_8\times E_8\), while the extra \(SU(3)_{geomL,R}\) factors provide the extrinsic scaffolding for spacetime and internal symmetry space as we discussed recently in \cite{Singhgeom2025}.


\medskip

\noindent\textbf{Geometric interpretation (conjectural).}
By analogy with the heterotic ``standard embedding''-where an internal six-dimensional space with \(SU(3)\) structure leads to \(E_6\)-we propose to view the extra \(SU(3)_{L,R}\) as \emph{structure groups} for a six-dimensional sector. In our bioctonionic setting this sector carries a split metric of signature \((3,3)\). Breaking \(SU(3)\to SU(2)\times U(1)\) then singles out preferred directions and naturally selects two embedded four-dimensional slices: a \((3,1)\) ``gravity-curved'' slice (our visible world) and a \((1,3)\) ``weak force'' slice in the left SM sector. This picture matches our earlier use of a split-imaginary tag \(\omega\) and suggests writing the uplift schematically as \(E_{8\,L}\times E_{8\,R}\supset (E_{6}\times SU(3))\times (\omega E_{6}\times \omega SU(3))\), with the \(\omega\) marking the split choice of real form.

\medskip

\noindent\textbf{What remains to be fixed.}
\begin{enumerate}
\item \emph{Choice of real forms.} To realize the split/bioctonionic sector and the \((3,3)\) structure group, one should pick real forms where the branching \eqref{eq:E8branch} holds with \(SU(3)\) replaced by the appropriate real form (e.g.\ \(SU(3)\), \(SU(2,1)\), or \(SL(3,\mathbb{R})\)) so that the geometry and representation theory are aligned on both \(L\) and \(R\) factors.
\item \emph{Chirality and mirror removal.} The decomposition \eqref{eq:E8branch} produces both \((\mathbf{27},\mathbf{3})\) and \((\overline{\mathbf{27}},\overline{\mathbf{3}})\). A consistent projection (triality orientation and the Dynkin \(\mathbb{Z}_2\) swap in our language) must select the observed chiral content and discard mirrors, in both visible and right/dark sectors.
\item \emph{Ghost-free slicing.} A split \((3,3)\) sector generically carries negative-norm modes. These must be rendered nondynamical by constraints/gauge redundancies tied to the same triality orientation (or a closely related local symmetry), so that only a healthy \((3,1)\) slice (and its right-sector analogue) propagates.
\item \emph{Consistency checks.} After symmetry breaking, mixed anomalies and possible kinetic mixings between the gauged subgroups need to be verified. Our charged-fermion \(\sqrt{m}\) ratios and leading CKM/PMNS relations are unaffected by the uplift (they are controlled inside each \(E_6\)), but it is useful to check that the \(SU(3)_{L,R}\) flavor structure does not introduce conflicts with these results.
\end{enumerate}

\medskip

In summary, the group-theoretic uplift \(E_6\times E_6 \to E_8\times E_8\) is straightforward and explains why the same \(\mathrm{Sym}^3(\mathbf{3})\) tripling persists. The geometric reading in terms of an \(SU(3)\) (or split-form) structure on a six-dimensional \((3,3)\) sector is compelling and meshes with our bioctonionic construction, while leaving a short, concrete list of technical items-real forms, chirality projection, and ghost-free slicing-to be fixed in future work.

\paragraph*{Relation to the Freudenthal-Tits magic square and the role of $E_7$.}
Our use of the exceptional Jordan algebra $J_3(\mathbb{O}_\C)$ and its $E_6$ action sits on the
octonionic column of the Freudenthal-Tits magic square \cite{Ramond1976}, which arranges the inclusions
$F_4 \subset E_6 \subset E_7 \subset E_8$ built from $J_3(\mathbb{O})$.
In this hierarchy one has
\[
\mathrm{Aut}(J_3(\mathbb{O}))=F_4,\qquad
\mathrm{Str}_0(J_3(\mathbb{O}))=E_6,\qquad
\mathrm{Conf}(J_3(\mathbb{O}))=E_7,\qquad
\mathrm{QConf}(J_3(\mathbb{O}))=E_8,
\]
so $E_6$ is the reduced structure group preserving the cubic norm $\det$, $E_7$ is the
conformal group of the Jordan geometry, and $E_8$ its quasi-conformal completion.
Equivalently, $E_7$ is the automorphism group of the Freudenthal triple system
$F(J_3(\mathbb{O}))\cong \C\oplus\C\oplus J_3(\mathbb{O})\oplus J_3(\mathbb{O})^\ast$,
acting on its symplectic $\mathbf{56}$ with a quartic invariant. At the Lie-algebra level
\[
\mathfrak e_7 \;=\; \mathbf{27}\ \oplus\ (\mathfrak e_6\oplus \mathfrak u(1))\ \oplus\ \overline{\mathbf{27}},
\]
hence $E_7$ contains a single $E_6$ (maximal) together with grade $\pm1$ pieces transforming as
$27$ and $27^\ast$. In this sense $E_7$ “conformally completes” the $E_6$ Jordan story.
In our framework we work directly with $J_3(\mathbb{O}_\C)$ and its $E_6$-covariant cubic
invariants to obtain the universal spectrum $(q-\delta,q,q+\delta)$ with $\delta^2=3/8$ and the
charged-sector mass ratios. If one wished to treat the pair $27\oplus27^\ast$ symmetrically or to
introduce a quartic constraint on a $56$-dimensional charge space, the $E_7$ (FTS) layer would be
the natural language, without altering the $E_8\to SU(3)\times E_6$ embedding we use above.

\paragraph*{$E_8\to SU(3)\times E_6$ vs.\ $E_8\to E_7\times SU(2)$.}
The two standard maximal chains
\[
E_8 \supset E_6\times SU(3):\quad \mathbf{248}=(\mathbf{78},\mathbf{1})\oplus(\mathbf{1},\mathbf{8})
\oplus(\mathbf{27},\mathbf{3})\oplus(\overline{\mathbf{27}},\overline{\mathbf{3}}),
\]
\[
E_8 \supset E_7\times SU(2):\quad \mathbf{248}=(\mathbf{133},\mathbf{1})\oplus(\mathbf{1},\mathbf{3})
\oplus(\mathbf{56},\mathbf{2}),
\]
are both compatible with the Jordan-FT picture. Our analysis follows the first route,
whereby $E_7$ need not appear as an intermediate step; nevertheless $E_7$ remains the
canonical conformal extension of the same Jordan geometry and can be invoked if a quartic
invariant or $27\!\oplus\!27^\ast$ pairing becomes useful. Likewise, an $E_{6L}\times E_{6R}$
model can be viewed, if desired, as sitting inside $E_{7L}\times E_{7R}$ by conformal
completion of each factor.

\subsection{Why the chain $E_8\!\to\! SU(3)_{L/R}\times E_6\!\to\! SU(3)\times SU(3)^3$ is natural (magic star and Jordan pairs)}
\label{subsec:magic-star-motivation}

This subsection has two purposes.  First, we identify the embedding chain that minimally accommodates the $E_{6L}\times E_{6R}$ trinified framework used throughout this paper, together with the chiral matter content and complex Jordan structure that the mass-ratio derivation requires.  Second, we record the form of the Standard-Model hypercharge generator that this chain supports, treated as a fixed left--right Cartan combination rather than a representation-dependent rescaling.  We emphasise at the outset that the embedding chain itself is a standard regular maximal subgroup chain of $E_8$, used in heterotic string model-building since the 1980s; the magic-star projection is a \emph{visualization} that makes its Jordan-pair content geometrically transparent, not an autonomous derivation that singles it out over alternatives.  The alternatives are addressed in \S\ref{subsubsec:why-not-other-chains} below.

\subsubsection{The embedding chain and the four commuting $A_2$ subalgebras}
\label{subsubsec:embedding-chain}

The maximal regular subgroup chain we adopt is
\begin{equation}
E_8 \;\supset\; SU(3)_{L/R}\times E_6 \;\supset\; SU(3)_{L/R}\times SU(3)_a\times SU(3)_b\times SU(3)_c,
\label{eq:E8-chain-fourSU3}
\end{equation}
where the three internal $SU(3)$ factors are the trinification subgroups of $E_6$ and the external $SU(3)_{L/R}$ is the additional $A_2$ commuting with $E_6$ inside $E_8$.  This subgroup has rank $2+6=8$, equal to the rank of $E_8$, so the four $SU(3)$'s are a maximal-rank (regular) subgroup; their Cartan subalgebras are pairwise orthogonal under the Killing form.  The branching of the adjoint is
\begin{equation}
\mathbf{248}\;\to\;(\mathbf8,\mathbf1)\oplus(\mathbf1,\mathbf{78})\oplus(\mathbf3,\mathbf{27})\oplus(\bar{\mathbf3},\overline{\mathbf{27}}),
\label{eq:248-branching}
\end{equation}
under $E_8\supset SU(3)_{L/R}\times E_6$, and inside $E_6$ the $\mathbf{27}$ further decomposes as
\begin{equation}
\mathbf{27}\;\to\;(\mathbf3,\mathbf3,\mathbf1)\oplus(\bar{\mathbf3},\mathbf1,\bar{\mathbf3})\oplus(\mathbf1,\bar{\mathbf3},\mathbf3)
\quad\text{under}\quad E_6\supset SU(3)_a\times SU(3)_b\times SU(3)_c.
\label{eq:27-branching-trin}
\end{equation}
For the standard branching data see \cite{Slansky1981}; for the magic-star presentation of the same data see \cite{TruiniRiosMarrani2017,Marrani_2014}.

\subsubsection{The magic-star projection as a visualization of Jordan-pair content}
\label{subsubsec:magic-star}

The magic-star projection of Truini--Marrani--Pavlyk projects the $E_8$ roots onto the plane of a complex $a_2\simeq su(3)$, producing a six-rayed star whose central hexagon contains the roots of $e_6$ and whose six rays carry the representation pairs $(\mathbf{3},\mathbf{27})$ and $(\bar{\mathbf{3}},\overline{\mathbf{27}})$ \cite{TruiniRiosMarrani2017,Marrani_2014}.  Each ray-pair is a \emph{Jordan pair} in the technical sense: a pair of triple systems related by a switching map, with the cubic norm on $\mathbf{27}$ supplied by the determinant on $J_3(\mathbb{O}_\C)$.  The projection makes manifest, geometrically, the same group-theoretic content already encoded in (\ref{eq:248-branching})--(\ref{eq:27-branching-trin}); it does not add independent structure.  Its value to the present paper is pedagogical and organizational: it exhibits, at a glance, that the chain (\ref{eq:E8-chain-fourSU3}) is exactly the one whose $E_6$-central sector carries the cubic Jordan structure on which the Sym$^3(\mathbf{3})$ mass-ratio analysis is built.  Accordingly, in what follows we adopt the magic star as a visualization while emphasising that the embedding itself is the regular subgroup chain (\ref{eq:E8-chain-fourSU3}).

\subsubsection{Why not the other maximal chains}
\label{subsubsec:why-not-other-chains}

$E_8$ has several physically interesting maximal-subgroup chains; we briefly indicate why each of the principal alternatives does \emph{not} support the present mass-ratio derivation.
\begin{itemize}
\item $E_8\supset E_7\times SU(2)$, with $\mathbf{248}\to(\mathbf{133},\mathbf1)\oplus(\mathbf1,\mathbf3)\oplus(\mathbf{56},\mathbf2)$.  The matter representation is the symplectic $\mathbf{56}$ of $E_7$, organised by the Freudenthal triple system $\mathcal{F}(J_3(\mathbb{O}))=\mathbb C\oplus\mathbb C\oplus J_3(\mathbb{O})\oplus J_3(\mathbb{O})^\ast$.  The fundamental invariant on $\mathbf{56}$ is the $E_7$ quartic discussed in \S\ref{sec:E7-quartic} of the present paper, not a cubic $J_3(\mathbb{O}_\C)$ norm.  Mass ratios in the present framework arise from the cubic minimal polynomial of $X\in J_3(\mathbb{O}_\C)$ with its three Jordan eigenvalues; the $E_7$ chain replaces this by a quartic structure on a 56-dimensional charge space and the three-eigenvalue organisation is no longer the canonical one.
\item $E_8\supset SO(16)$, with $\mathbf{248}\to\mathbf{120}\oplus\mathbf{128}_s$.  The matter content is a real $128$-dimensional chiral spinor of $SO(16)$.  This representation has no built-in three-direction structure of the Jordan type and no cubic $E_8$-invariant; the family count of three is not naturally present in the branching.
\item $E_8\supset SO(10)\times SU(4)$.  The $SO(10)$ Spin-content recovers an $SO(10)$ GUT, but the cubic structure of $J_3(\mathbb{O}_\C)$ is absent and the three families are not organized by Jordan-pair geometry.  The standard $SO(10)$ GUT in fact relies on a separate flavor structure imposed by hand.
\item $E_8\supset SU(5)\times SU(5)$ (flipped variants).  The matter content is bifundamental rather than $\mathbf{27}$-like, and again the cubic Jordan invariant on $\mathbf{27}$ is absent.
\end{itemize}
By contrast, the chain (\ref{eq:E8-chain-fourSU3}) places $E_6$ at the centre, retains the complex $\mathbf{27}$ and its unique cubic invariant, and supplies three families via the Jordan-pair $(\mathbf3,\mathbf{27})$.  Within the desideratum --- complex chirality, cubic invariant, three-eigenvalue Jordan structure, and Jordan-pair organisation of three families --- this chain is essentially distinguished.  We do not claim it is the only chain consistent with these desiderata in principle; we claim that it is the chain that minimally implements all of them simultaneously.

\subsubsection{The bioctonionic doubling: from $E_8$ to $E_8\times\omega E_8$}
\label{subsubsec:omega-E8}

The framework of this paper uses $E_{6L}\times E_{6R}$ at the trinification scale, with the right factor distinguished from the left by a split-imaginary tag $\omega$ marking the bioctonionic doubling that organises the LH/RH structure.  Within the chain (\ref{eq:E8-chain-fourSU3}), this doubling enters one level higher as
\begin{equation}
E_8 \times \omega E_8 \;\supset\; \bigl[SU(3)_L\times E_{6L}\bigr]\times\bigl[\omega SU(3)_R\times \omega E_{6R}\bigr],
\label{eq:E8xE8-chain}
\end{equation}
where each factor is the embedding (\ref{eq:E8-chain-fourSU3}) applied to its own copy.  The $\omega$ tag is the same split-imaginary that distinguishes the two complex structures used in the LH and RH Clifford-ideal constructions of Section~\ref{sec:E6LxE6R-tri-flavor} and is not a separate real-form choice on the second $E_8$: it marks the complementary copy in the bioctonionic geometry so that, after the magic-star projection in each factor, one obtains two central $E_6$'s related by the Dynkin $\mathbb Z_2$ swap discussed in Section~\ref{sec:E6LxE6R-tri-flavor}\ and Appendix~C\@.  Equivalently, the two $E_8$ factors share the same real form and Lie algebraic content; the $\omega$ records which $J_3(\mathbb{O}_\C)$ copy (LH or RH) the magic-star is being projected onto.  The four orthogonal $A_2$'s inside each $E_8$ then become eight orthogonal $A_2$'s in $E_8\times\omega E_8$, distributed as $\{SU(3)_L, SU(3)_c, SU(3)_{F,L}, SU(3)^{(L)}_R\}$ on the left and $\{\omega SU(3)_R, SU(3)_{c'}, SU(3)_{F,R}, \omega SU(3)^{(R)}_R\}$ on the right, where the labels follow the conventions of Section~\ref{sec:E6LxE6R-tri-flavor}\@.  This is the precise group-theoretic content underlying the trinification chains displayed in Section~\ref{sec:E6LxE6R-tri-flavor}\ and used elsewhere in this paper.

\subsubsection{Where the third Jordan pair goes (and its EW signature)}
\label{subsubsec:third-jordan-pair}

The three Jordan pairs $(\mathbf{3},\mathbf{27})\oplus(\bar{\mathbf{3}},\overline{\mathbf{27}})$ on the rays of the magic star, refined by the inner trinification $E_6\supset SU(3)_a\times SU(3)_b\times SU(3)_c$, organise the matter content as follows.  Under the identification $SU(3)_a=SU(3)_C$ (colour), $SU(3)_b=SU(3)_F$ (flavor), $SU(3)_c=SU(3)_L$ (left electroweak), the three rays align with the three internal $SU(3)$'s, one each.  In particular, the third Jordan pair --- the one aligned with $SU(3)_L$ --- is the $(1,\mathbf3,\bar{\mathbf3})$ multiplet that, upon $SU(3)_L\to SU(2)_L\times U(1)_{\gamma_1}$, decomposes as
\begin{equation}
(\mathbf1,\mathbf3,\bar{\mathbf3})\;\longrightarrow\; 2\,(\mathbf1,\mathbf2)_{-1/2}\;\oplus\; (\mathbf1,\mathbf2)_{+1/2}\;\oplus\; 2\,(\mathbf1,\mathbf1)_0\;\oplus\;(\mathbf1,\mathbf1)_1,
\label{eq:thirdJP-decomp}
\end{equation}
i.e.\ multiple electroweak doublets and singlets.  This multiplet contains the Standard-Model lepton doublets and the electroweak Higgs doublets; it is also the source of the extended Higgs sector and singlet neutrinos characteristic of trinification \cite{Slansky1981}.  Thus the ``color triplet, flavor triplet, $SU(3)_L$ triplet'' aligned with the three Jordan-pair rays of the magic star map, respectively, to the $SU(3)_C$, $SU(3)_F$, and $SU(3)_L$ fundamentals.  The couplings of all three are controlled by the same $E_6$-invariant cubic $t(\Psi,\Psi,X)$ used for the Yukawas (Section~\ref{sec:motivation-dynamics}) of this paper.

\subsubsection{Hypercharge as a fixed left--right Cartan combination}
\label{subsubsec:hypercharge-LR}

The embedding chain (\ref{eq:E8xE8-chain}) provides two natural abelian Cartan generators per chirality after $SU(3)_L\to SU(2)_L\times U(1)$ and $SU(3)_R\to SU(2)_R\times U(1)$, namely the $T^8$ directions of each broken $SU(3)$.  The Standard-Model hypercharge $U(1)_Y$ is identified with a \emph{fixed} (representation-independent) linear combination of these Cartans, together with the $T^3_R$ direction in the right electroweak sector.  Concretely, fix
\begin{equation}
T^8_L=\frac{1}{2\sqrt 3}\,\mathrm{diag}(1,1,-2)
\quad\text{(Gell-Mann normalisation in $SU(3)_L$),}
\qquad
\gamma_1 := -2\sqrt 3\,T^8_L,
\label{eq:gamma1-def-XII}
\end{equation}
so that $\gamma_1\!\mid_{\mathbf3}=\mathrm{diag}(-1,-1,+2)$ and $\gamma_1\!\mid_{\bar{\mathbf3}}=\mathrm{diag}(+1,+1,-2)$ in integer-eigenvalue convention.  Define $\gamma_2$ analogously from $T^8_R$.  Then the hypercharge generator takes the form
\begin{equation}
Y \;=\; \alpha\,T^8_L \;+\; \beta\,T^8_R \;+\; \gamma\,T^3_R,
\qquad (\alpha,\beta,\gamma)\ \text{fixed once and for all,}
\label{eq:Y-fixed-XII}
\end{equation}
with the coefficients chosen so that the induced eigenvalues on each Standard-Model multiplet match the observed hypercharges.  The earlier work of Kaushik \emph{et al.}\ \cite{Kaushik} writes these eigenvalues compactly as
\begin{equation}
Y(\psi) \;=\; \frac{\gamma_1(\psi)}{2N(\psi)},
\qquad N=3\ \text{for $SU(3)_c$ triplets,}\quad N=1\ \text{for $SU(3)_c$ singlets,}
\label{eq:Y-2N-mnemonic}
\end{equation}
which correctly reproduces $Y(Q_L)=+1/6$ (from $\gamma_1=+1$, $N=3$), $Y(L_L)=-1/2$ (from $\gamma_1=-1$, $N=1$), and the remaining SM hypercharges.  We emphasise that (\ref{eq:Y-2N-mnemonic}) cannot be read as the definition of a gauge generator, because a gauge generator is a fixed Lie-algebra element and cannot be rescaled representation-by-representation.  The consistent interpretation is that (\ref{eq:Y-fixed-XII}) is the actual gauge generator, and (\ref{eq:Y-2N-mnemonic}) is a representation-dependent mnemonic for its eigenvalues that happens to be compact because of the way color and the left--right Cartans align on SM multiplets.  The factor of $3$ in $N$ for color triplets reflects the familiar quark--lepton offset in hypercharge, equivalently the $(B-L)$ shift between the lepton and quark sectors in left--right embeddings.  The explicit determination of $(\alpha,\beta,\gamma)$ multiplet-by-multiplet is carried out in \cite{Kaushik}; we have nothing to add here beyond stating that this determination is the content the present chain is required to support, and that the structural form (\ref{eq:Y-fixed-XII}) is consistent with the chain (\ref{eq:E8xE8-chain}) at the Cartan level.

\paragraph{Status assumptions.}  The above identification of $U(1)_Y$ as an unbroken diagonal $U(1)$ obtained from a fixed Cartan combination presumes (i) that both left and right electroweak $SU(3)$'s descend from $E_{6L}\times E_{6R}$, (ii) that a left--right linking or gluing mechanism selects a diagonal combination of $T^8_L$, $T^8_R$, and $T^3_R$ as the unbroken massless generator (with the orthogonal combinations becoming massive at the left--right breaking scale), and (iii) that the SM light fermions couple to this diagonal $U(1)$ uniformly.  Each of these is a model-building assumption rather than a derivation from $E_8\times\omega E_8$ alone; the embedding chain (\ref{eq:E8xE8-chain}) is necessary for the structural form (\ref{eq:Y-fixed-XII}) to be available, but it is not sufficient to determine the linking dynamics that picks the specific combination.  We record this for honesty: the magic-star embedding provides the \emph{scaffolding} for hypercharge, not its dynamical origin.

\paragraph{Mass-ratio derivation is independent of these assumptions.}
The charged-fermion mass-ratio derivation in this paper depends only on the Jordan spectral data $(s-\delta,s,s+\delta)$ within fixed electric-charge sectors and on the Sym$^3(\mathbf{3})$ flavor ladder.  Once the sectors are identified by their electric charge, the ladder computation and Dynkin-swap relations are insensitive to the model-building details of how $U(1)_Y$ is realised at the linking scale.  Section \ref{subsec:magic-star-motivation} therefore serves as a UV consistency anchor: it shows that the framework can sit inside $E_8\times\omega E_8$ with a hypercharge generator of the right structural form, but the mass-ratio results do not depend on this embedding being uniquely fixed.

\section{The $E_7$ quartic invariant on the Freudenthal system of $J_3(\mathbb{O}_\mathbb{C})$}
\label{sec:E7-quartic}

\subsection{From $E_6$ to $E_7$ via the Freudenthal triple system}
The minimal (complex) representation of $E_7$ has dimension $56$ and can be realised as the
Freudenthal triple system (FTS) built from the complexified exceptional Jordan algebra
$J\equiv J_3(\mathbb{O}_\mathbb{C})$:
\[
\mathfrak{F}(J)\;=\;J\ \oplus\ J\ \oplus\ \mathbb{C}\ \oplus\ \mathbb{C}.
\]
We write a vector $v\in\mathfrak{F}(J)$ as
\[
v=(X,Y;\,\alpha,\beta),\qquad X,Y\in J,\ \ \alpha,\beta\in\mathbb{C}.
\]
Under $E_6\times U(1)\subset E_7$ one has the familiar branching
$56\to 27_{\,1}\oplus\overline{27}_{-1}\oplus 1_{\,3}\oplus 1_{-3}$,
which corresponds precisely to the two $J$ slots and the two scalars.

\subsection{Jordan data and conventions}
For $X\in J$ let
\[
T:=\mathrm{tr}\,X,\qquad
S:=\tfrac12\!\left[(\mathrm{tr}\,X)^2-\mathrm{tr}(X\circ X)\right],\qquad
D:=\det X,
\]
be the three basic $E_6$-invariants (trace, quadratic trace, cubic norm), and let $X^\#$ be the
\emph{quadratic adjoint} (Jordan adjugate), characterised by the identities
\begin{equation}
X^\#\;=\;X^2-T\,X+S\,\mathbf{1},\qquad X\circ X^\#=(\det X)\,\mathbf{1}.
\label{eq:adjoint}
\end{equation}
We use the standard Jordan trace pairing
$\langle X,Y\rangle := \mathrm{tr}(X\circ Y)$.
(These are the same $T,S,D$ used in the main text; in particular, for our charged sectors the
characteristic equation with the $1/8$ normalisation of the octonionic states gives the universal
spacing $\delta^2=3/8$ for the Jordan spectrum \(\{\Lambda-\delta,\Lambda,\Lambda+\delta\}\).) 

\subsection{Symplectic form and the $E_7$ quartic invariant}
The FTS carries a canonical $E_7$-invariant symplectic form
\[
\Omega\big((X,Y;\alpha,\beta),(X',Y';\alpha',\beta')\big)
:=\ \langle X,Y'\rangle-\langle Y,X'\rangle+\alpha\beta'-\beta\alpha'.
\]
There is a unique (up to an overall constant) $E_7$-invariant quartic polynomial on $56$,
often called the Cartan invariant. With the conventions above it can be written compactly as
\begin{equation}
\boxed{\;
\mathcal{I}_4(X,Y;\alpha,\beta)
= \big(\alpha\beta-\langle X,Y\rangle\big)^2
\;-\;4\Big(\alpha\,\det X+\beta\,\det Y-\langle X^\#,Y^\#\rangle\Big).
\;}
\label{eq:E7-quartic-master}
\end{equation}
This expression is $E_7$-invariant and is valid over $\mathbb{C}$ (and, with appropriate
reality/sign choices, for the various real forms).

\paragraph{Direct entrywise evaluation (optional).}
If $X$ is written as a $3\times3$ Hermitian octonionic matrix with diagonal entries
$x_1,x_2,x_3\in\mathbb{C}$ and off-diagonal octonions $a,b,c\in\mathbb{O}_\mathbb{C}$ in the
$(12),(13),(23)$ positions, the cubic norm and the adjoint take the familiar “Albert algebra”
forms:
\begin{align}
\det X &= x_1x_2x_3 - x_1\|c\|^2 - x_2\|b\|^2 - x_3\|a\|^2 + 2\,\Re\!\big((ab)c\big),\\
X^\# &=
\begin{pmatrix}
x_2x_3-\|c\|^2 & \ \overline{b}x_3-\overline{c}\,\overline{a}\ & \ \overline{c}x_2-\overline{b}\,a \\
x_3 b - a c & x_3x_1-\|b\|^2 & \ \overline{a}x_1-\overline{c}\,b \\
x_2 c - b a & x_1 a - c b & x_1x_2-\|a\|^2
\end{pmatrix},
\end{align}
where bars denote the (complexified) octonion conjugation and $\|a\|^2=a\overline{a}$.
In practice, it is often simpler to use \eqref{eq:adjoint} together with ordinary matrix
multiplication and the fixed Fano orientation.

\subsection{Reducing $\mathcal{I}_4$ to $(T,S,D)$ in useful special cases}
For many purposes one of the two $J$-slots suffices. Two reductions are particularly
useful and express $\mathcal{I}_4$ purely through $(T,S,D)$.

\paragraph{(i) Diagonal slice $Y=X$, $\alpha=\beta=0$.}
In this case \eqref{eq:E7-quartic-master} gives
\[
\mathcal{I}_4(X,X;0,0)
= \langle X,X\rangle^2 + 4\,\langle X^\#,X^\#\rangle.
\]
Using Newton identities for a $3\times3$ Jordan matrix one has
\[
\langle X,X\rangle=\mathrm{tr}(X\circ X)=T^2-2S,\qquad
\langle X^\#,X^\#\rangle=S^2-2TD,
\]
hence
\begin{equation}
\boxed{\;
\mathcal{I}_4(X,X;0,0)=
\big(T^2-2S\big)^2 + 4\,(S^2-2TD)
= T^4 - 4T^2S + 8S^2 - 8TD.
\;}
\label{eq:I4-TSd}
\end{equation}

\paragraph{(ii) Single $J$-slot $Y=0$.}
Then $\langle X,Y\rangle=0$, $Y^\#=0$, $\det Y=0$, and
\begin{equation}
\boxed{\;
\mathcal{I}_4(X,0;\alpha,\beta) = (\alpha\beta)^2 - 4\,\alpha\,\det X.
\;}
\label{eq:I4-singleJ}
\end{equation}
This form is handy when the second scalar $\beta$ is used as a Lagrange multiplier or
when one wants to “weight” the cubic norm by a single parameter.

\subsection{Evaluation on the pre-triality normal form}
Before triality breaking our proto-charged families share the universal Jordan spectrum
$\{\Lambda-\delta,\Lambda,\Lambda+\delta\}$ with $\delta^2=\tfrac{3}{4}$ (from the
characteristic equation for the Dirac neutrino case \cite{BhattEtAl2022MajoranaEJA} using the octonionic state normalisation $1/4$; note: not $1/8$ this latter being the Majorana neutrino case). In this normal form
\[
T=3\Lambda,\qquad S=3\Lambda^2-\delta^2,\qquad D=\Lambda^3-\Lambda\,\delta^2.
\]
Plugging these into \eqref{eq:I4-TSd} yields, for the symmetric slice $(Y=X,\alpha=\beta=0)$,
\begin{equation}
\boxed{\;
\mathcal{I}_4\big(X(\Lambda,\delta),X(\Lambda,\delta);0,0\big)
= 21\,\Lambda^4 + 12\,\Lambda^2\delta^2 + 8\,\delta^4.
\;}
\label{eq:I4-Lambda-delta}
\end{equation}
Specialising to the proto-charged sectors, one simply inserts $\delta^2=\tfrac{3}{4}$ (the Dirac neutrino case - applicable prior to triality breaking); no new freedom
enters. Alternatively, if one prefers the $Y=0$ slice \eqref{eq:I4-singleJ} with $\alpha,\beta$
kept explicit, one obtains
\[
\mathcal{I}_4\big(X(\Lambda,\delta),0;\alpha,\beta\big)
= (\alpha\beta)^2 - 4\alpha\big(\Lambda^3-\Lambda\,\delta^2\big).
\]

\paragraph{Comments.}
(1) The master formula \eqref{eq:E7-quartic-master} is $E_7$-invariant and encodes, in our
setup, a pre-triality scalar built from two $E_6$ Jordan slots and two singlets.  
(2) In the pre-triality breaking proto-charge families, the value of $\delta$ is \emph{not} a free parameter: it is fixed
by the $J_3(\mathbb{O}_\mathbb{C})$ characteristic equation with our (octonionic) state
normalisation, giving $\delta^2=\tfrac{3}{4}$; this is exactly the input used in the main text. 
(3) For explicit computations “from entries”, one can evaluate $\det X$ and $X^\#$ by the
Albert formulas above and then use \eqref{eq:E7-quartic-master}.  For invariant manipulations,
\eqref{eq:I4-TSd} and \eqref{eq:I4-Lambda-delta} are often the fastest route.

\subsection*{Numerical check in the proto phase \texorpdfstring{(\(\delta^2=\frac{3}{4}\))}{(δ²=3/4)}}

Recall the $E_7$ quartic invariant on the Freudenthal triple system over 
$J_3(\mathbb{O}_\C)$,
\begin{equation}
\label{eq:I4-FTS}
\mathcal{I}_4(X,Y;\alpha,\beta)
=\big(\alpha\beta-\langle X,Y\rangle\big)^2
-4\Big(\alpha\,N(X)+\beta\,N(Y)-\langle X^{\#},Y^{\#}\rangle\Big),
\end{equation}
where $\langle X,Y\rangle := \mathrm{tr}(X\circ Y)$ is the Jordan trace
bilinear form, $N(\,\cdot\,)=\det(\,\cdot\,)$ is the cubic norm, and
$X^{\#}$ is the quadratic adjoint, defined by 
$X\circ X^{\#}=N(X)\,\mathbf{1}$.
Consider the proto-phase diagonal representative
\begin{equation}
\label{eq:proto-diag}
X=\mathrm{diag}(s-\delta,\ s,\ s+\delta)\in J_3(\mathbb{O}_\C),
\qquad (\delta^2=\tfrac{3}{4}),
\end{equation}
whose Jordan invariants are
\begin{equation}
\label{eq:TSD-proto}
T=\mathrm{tr}\,X=3s,\qquad 
S=\tfrac12\!\big(T^2-\mathrm{tr}(X\circ X)\big)=3s^2-\delta^2,\qquad
D=N(X)=s^3-s\,\delta^2.
\end{equation}
Choose the test configuration in the $56$ as 
\((X,Y;\alpha,\beta)=(X,X;0,0)\). Then \eqref{eq:I4-FTS} reduces to
\begin{equation}
\label{eq:I4-TSDeq}
\mathcal{I}_4(X,X;0,0)
=\langle X,X\rangle^2+4\,\langle X^{\#},X^{\#}\rangle
=\big(T^2-2S\big)^2+4\big(S^2-2TD\big),
\end{equation}
i.e.
\begin{equation}
\label{eq:I4-sdelta}
\mathcal{I}_4(s,\delta)
= T^4-4T^2S+8S^2-8TD
= 21\,s^4 + 12\,s^2\delta^2 + 8\,\delta^4,
\end{equation}
upon substituting \eqref{eq:TSD-proto}.  In the proto phase
$\delta^2=\tfrac34$ (hence $\delta^4=\tfrac{9}{16}$), so
\begin{equation}
\label{eq:I4-proto-closed}
\boxed{\;
\mathcal{I}_4^{\text{proto}}(s)
=21\,s^4+9\,s^2+\frac{9}{2}\;
}
\end{equation}
in our normalization.  A direct diagonal check agrees: with
$s=1$ and $\delta=\sqrt{3}/2$ one finds 
$\langle X,X\rangle=\mathrm{tr}(X\circ X)=3s^2+2\delta^2=4.5$ and 
$\langle X^{\#},X^{\#}\rangle=S^2-2TD=3.5625$, so that
$\mathcal{I}_4=4.5^2+4\times3.5625=20.25+14.25=34.5=69/2$, 
matching \eqref{eq:I4-proto-closed}.

\medskip
\noindent\emph{Across families (trace split $1{:}2{:}3$).}
If the proto-centers obey $s_\ell:s_u:s_d=1:2:3$ (in a common unit),
then from \eqref{eq:I4-proto-closed}
\[
\mathcal{I}_4^{\text{proto}}(s_\ell,s_d,s_u)
=\Big\{\tfrac{69}{2},\ \tfrac{753}{2},\ \tfrac{3573}{2}\Big\}
\quad (\text{for } s_\ell{=}1,\ s_u{=}2,\ s_d{=}3),
\]
illustrating that only the family center $s$ moves the point along the
same quartic curve fixed by the universal proto-spread $\delta^2=\tfrac34$.

\subsection*{A master quartic across families}

Equation \eqref{eq:I4-sdelta} shows that, \emph{before} triality breaking,
the $E_7$ quartic for our diagonal proto eigenvalue pattern depends on
the family only through the single scale $s$; the spread $\delta$ is the
same for all families.  A useful dimensionless presentation is
\begin{equation}
\label{eq:master-quartic}
\frac{\mathcal{I}_4(s,\delta)}{\delta^4}
=21\left(\frac{s}{\delta}\right)^4
+12\left(\frac{s}{\delta}\right)^2+8
=: \ \mathcal{Q}\!\left(y\right),\qquad y:=\left(\frac{s}{\delta}\right)^2.
\end{equation}
Thus all three families lie on the \emph{same} parabola 
$\mathcal{Q}(y)=21y^2+12y+8$ in the variable $y$, with their positions
set by the trace split $s_\ell:s_u:s_d=1:2:3$.  This provides a single
\emph{master invariant} controlling the proto-sector: it is blind to
Dynkin swaps and to triality rotations (which only permute eigenvalues).
After triality breaking (charged sectors), one keeps the same form
\eqref{eq:I4-sdelta} with $\delta^2\to\frac{3}{8}$ (our charged-sector
spread), so that the master curve deforms coherently across all families.

\section{Remarks on the derivation of mass ratios}

\subsection{Our $\sqrt{m}$ and mass as the Casimir of Poincar\'e symmetry}

In our framework the only internal symmetry needed is $U(1)_{\rm dem}$ with Hermitian generator $S_{\rm dem}$. We interpret its eigenvalue $s\in\mathbb{R}$ as the square--root mass. Requiring $U(1)_{\rm dem}$ to be internal, $[S_{\rm dem},P_\mu]=[S_{\rm dem},M_{\mu\nu}]=0$, we impose on the physical subspace the mass--locking relation
\[
P^\mu P_\mu=\kappa^{2}\,\big(S_{\rm dem}^{2}\big)^{2}.
\]
It follows that the Poincar\'e mass operator is
\[
\widehat m=\sqrt{P^\mu P_\mu}=\kappa\,S_{\rm dem}^{2},
\]
so that for $S_{\rm dem}\ket{\psi}=s\,\ket{\psi}$ one has $m=\kappa\,s^{2}$. The sign $s\mapsto -s$ corresponds to the internal involution of $U(1)_{\rm dem}$ and leaves all Poincar\'e invariants unchanged, since only $s^{2}$ enters $m$. Hence each Wigner irrep of mass $m\ge 0$ extends to two $U(1)_{\rm dem}$ sectors with $s=\pm \sqrt{m/\kappa}$; conversely, restricting back forgets the sign. In this precise sense our $U(1)_{\rm dem}$ label $s$ (the square--root mass) and the Poincar\'e Casimir mass $m$ are the same physical quantity, up to the trivial $\mathbb{Z}_2$ identification $s\sim -s$.

 \subsection{Why mass ratios are \emph{not} just eigen--value ratios.}
The three eigen--values $(a,b,c)$ of each rank--1 idempotent fix the
\emph{norms} of the weight kets
$|a^{2}b\rangle,|abc\rangle,|c^{3}\rangle$,
but the SU(3) ladder supplies \emph{independent}
Clebsch--Gordan numbers $(2,1,1)$ that tell us how those kets mix inside
the Yukawa column $x$.
Hence every square--root mass is a product
\[
  |x_i| \;=\;
  (\text{Clebsch})\times(\text{norm})\times
  (\text{eigen--value product}),
\]
and the observable ratios involve \emph{both} ladder factors \emph{and}
eigen--value contrasts.  Omitting the ladder would give
$\sqrt{m_b/m_s}=c_d/a_d$, in conflict with data; the ladder's second
step multiplies by the additional factor $c_d/b_d=1+\delta$, restoring
agreement with experiment.

\subsection{Sufficiency of the group-theory framework}
Once the four algebraic ingredients  
\begin{enumerate}\setlength{\itemsep}{2pt}
  \item[(i)] rank--$1$ idempotents in $J_{3}(\mathbb O_{\!\mathbb C})$,
  \item[(ii)] the fixed Clebsch pattern $(2,1,1)$ of 
              $\mathrm{Sym}^3(\mathbf3)$,
  \item[(iii)] the Dynkin $\mathbb Z_{2}$ swap $S$, and
  \item[(iv)] the theoretically derived spread $\delta=\sqrt{3/8}$
              (fixed already by the down sector)
\end{enumerate}
are in place, \emph{no additional dynamical assumption is needed} to
derive all charged-fermion square-root mass ratios.
The eigen-values set the state norms, the ladder fixes the mixing
coefficients, the swap propagates the endpoint contrast, and the trace
normalisation selects the overall scale.  
Any further “new physics’’-inflationary scale setting, mirror forces,
etc.-may complete the larger $E_{6}\!\times\!E_{6}$ programme but
plays \emph{no role} in the numerical ratios themselves.

\subsection{Consequences of the Dynkin swap: the $1\leftrightarrow 1/3$ empirical signature}
\label{subsec:dynkin-consequences}

The Dynkin swap was motivated in Sec.~\ref{subsec:sym3-dynkin-swap} as the post-triality-breaking residue of the postulated $E_6$ outer automorphism $\Sigma_{LR}$ identifying the L and R sectors of $E_6^L\times E_6^R$.  In this subsection we exhibit its principal empirical signature: the $1\leftrightarrow 1/3$ flip between the LH electric-charge grading and the RH square-root-mass grading.  This signature is a \emph{derived consequence}, not an independent input.

Recall (Sec.~\ref{subsec:sym3-dynkin-swap}) that $\Sigma_{LR}$ acts on the residual flavor $A_2\cong\mathfrak{su}(3)_F$ as the $A_2$ Dynkin automorphism $\varphi$, which on the $\mathrm{Sym}^3(\mathbf 3)$ weight triangle becomes the reflection $S:b\leftrightarrow c$ with $a$ fixed.  In the LH (electric-charge) sector, the eight one-generation states of the $Cl(6,\mathbb C)$ minimal ideal carry $U(1)_{em}$ charges
\begin{equation}
Q_{em}:\ (\nu,\,\bar d,\,u,\,e^+)\;\sim\;\bigl(0,\,\tfrac13,\,\tfrac23,\,1\bigr).
\label{eq:LH-em-grading}
\end{equation}
In the RH (square-root-mass) sector the same eight states descend through the $\Sigma_{LR}$-identified $U(1)_{dem}$ with charges identified with $\sqrt m$,
\begin{equation}
Q_{dem}\equiv\sqrt m:\ (\nu,\,e,\,u,\,d)\;\sim\;\bigl(0,\,\tfrac13,\,\tfrac23,\,1\bigr).
\label{eq:RH-dem-grading}
\end{equation}
The neutrino and up slots remain fixed under the swap (they sit at the symmetric weights $0$ and $2/3$); the electron and down slots are exchanged:
\begin{equation}
S:\quad e\;\longleftrightarrow\;d
\qquad\Rightarrow\qquad
1\;\longleftrightarrow\;\tfrac13.
\label{eq:e-d-flip}
\end{equation}
Equivalently: the LH statement ``$Q_{em}(d)=\tfrac13\,Q_{em}(e)$'' is mirrored by the RH statement ``$Q_{dem}(d)=3\,Q_{dem}(e)$'', i.e.\ $\sqrt{m_d}=3\sqrt{m_e}$ at the common normalisation set by the trace assignment.  Numerically, this is the structural origin of the first-generation $1:2:3$ pattern in Sec.~\ref{subsec:lightest-gen} below.

On the $\mathrm{Sym}^3(\mathbf 3)$ weight triangle, $S$ acts as the reflection across the axis through the $a$-vertex; in terms of the physical particle assignment of Sec.~\ref{sec:Sym3-derivation}, this is the reflection that exchanges the down-strange edge with the muon-tau edge, leaving the electron weight $a^2c$ fixed.  Cross-references to the formal $A_2$-level treatment are given in Appendix~C; the corresponding $\mathfrak{su}(3)$ algebra calculation showing that the triplet $\leftrightarrow$ singlet representation interchange follows automatically from $\varphi$ is given in Appendix~H.7--H.8.

\paragraph{Why the swap is forced to act on the $b$-$c$ axis (and not the $a$-$b$ axis).}
Of the three reflections of the $\mathrm{Sym}^3(\mathbf 3)$ weight triangle, only the reflection $b\leftrightarrow c$ with $a$ fixed is consistent with the physical assignment of Sec.~\ref{sec:Sym3-derivation}: the lightest generation of each family is placed at the most $a$-heavy weight (postulate (A2)), and the reflection that preserves this anchor is the unique one fixing the $a$-corner.  Reflections that fix $b$ or $c$ instead would relocate the lightest generation to a different weight and conflict with the down-family chain assignment.  This consistency requirement, together with the postulate $\Sigma_{LR}$ from Sec.~\ref{subsec:sym3-dynkin-swap}, fixes $S$ uniquely.

\subsection{Lightest-generation $\,\sqrt{\text{mass}}$ ratio $1:2:3$.}
\label{subsec:lightest-gen}
In our construction each right-handed family is a rank-1 idempotent whose
overall scale is fixed by its trace:
\[
\operatorname{Tr}X_\ell = 1,\qquad
\operatorname{Tr}X_u = 2,\qquad
\operatorname{Tr}X_d = 3 .
\]
Because the lightest square-root mass in each column is
proportional to the \emph{smallest} eigen-value,
\(
|a_\ell|=\tfrac13-\delta,\;
|a_u|=\tfrac23-\delta,\;
|a_d|=1-\delta,
\)
and the common spread $\delta=\sqrt{3/8}$ factors out,
the three scales appear in the simple ratio
\[
\sqrt{m_e} : \sqrt{m_u} : \sqrt{m_d}
\;=\;
1 : 2 : 3 .
\]
Thus the $1:2:3$ pattern is nothing more than the trace assignment
$(1,2,3)$ carried through the Jordan eigen-values.
(A rank-1 idempotent is a minimal projector (one “point’’) in the exceptional Jordan algebra; using it for each right-hand fermion lets the ladder generate the other two generations, whereas a rank-2 projector would already contain two generations and erase the observed hierarchy.)

\subsection{Fermion Generations as Points in $\mathbb O\mathbb P^{2}$}

A \emph{rank-1 idempotent} $P$ in the exceptional Jordan algebra
$J_{3}(\mathbb O_{\!\mathbb C})$ satisfies
\(
  P^{2}=P,\;
  \operatorname{Tr}P=1.
\)
The set of all such $P$’s is
the octonionic projective plane~$\mathbb O\mathbb P^{2}$.

\begin{itemize}\setlength{\itemsep}{4pt}
  \item \textbf{Point  $\longleftrightarrow$ fermion direction.}
        Each right-handed family is anchored on a single point
        $P_{\text{family}}\in\mathbb O\mathbb P^{2}$.
        Its Peirce--$\frac12$ subspace supplies the two ladder directions
        that generate the heavier generations.
  \item \textbf{Triality triplet.}
        Spin(8) triality acting inside $E_{6}$ produces three
        orthogonal points
        $(P_1,P_2,P_3)$.
        After projection to
        $SU(3)_{\text{flavor}}$
        they label the
        $(\text{1st},\text{2nd},\text{3rd})$ generations.
  \item \textbf{Why only fermions duplicate.}
        Gauge bosons reside in the adjoint $\mathbf{78}=\mathfrak e_{6}$,
        not in $\mathbb O\mathbb P^{2}$, and triality acts trivially on the adjoint.
        Hence bosons appear once, while fermions inherit a three-fold
        family structure.
\end{itemize}

Consequently the geometric statement
\[
  \boxed{\;\text{``three points in }\mathbb O\mathbb P^{2}\text{''}}
  \;\;\Longleftrightarrow\;\;
  \text{``three fermion generations''}
\]
provides the underlying reason we observe family replication for matter
fields but not for gauge fields.

\subsection{Two $SU(3)_{\text{flavors}}$ from $E_{6}\times E_{6}$}

Before symmetry breaking we have 
$E_{6\,L}\times E_{6\,R}$.
Inside each $E_6$ the Spin(8) triality
induces an internal $SU(3)$, so the full flavor symmetry is
\[
  SU(3)_{\text{flavor}}^{\,L}\times SU(3)_{\text{flavor}}^{\,R}.
\]

\paragraph{Trinification step.}
\[
  E_{6\,L}\!\to SU(3)_c\times SU(3)_L\times
                SU(3)_{\text{flavor}}^{\,L},\qquad
  E_{6\,R}\!\to SU(3)_c\times SU(3)_R\times
                SU(3)_{\text{flavor}}^{\,R}.
\]

\begin{itemize}
\item $\bm{SU(3)_{\text{flavor}}^{\,L}}$ survives in the
      \emph{left\-handed} sector; its basis vectors are
      \emph{electric--charge eigenstates} because
      $Q=T_3+Y$ is aligned with the unbroken
      $SU(2)_L\times U(1)_Y\subset SU(3)_L$.
\item $\bm{SU(3)_{\text{flavor}}^{\,R}}$
      survives in the
      \emph{right\-handed} sector; its ladder with Clebsches $(2,1,1)$
      and spread parameter $\delta$ produces the
      \emph{square--root mass eigenstates}.
\end{itemize}

Thus charge eigenstates are organised by
$SU(3)_{\text{flavor}}^{\,L}$,
while mass eigenstates are organised by
$SU(3)_{\text{flavor}}^{\,R}$,
explaining why the same fermion carries identical charges across the
three families yet different masses.

\subsection{Why Three Generations Share One Charge but Have Different Masses}

\paragraph{\texorpdfstring{\;}{ }Gauge Symmetry vs.\ Yukawa Freedom}

\begin{itemize}
\item \textbf{Electric charge ($Q$).}  Fixed by the gauge group 
      $SU(3)_c\times SU(2)_L\times U(1)_Y$.  
      Every right-handed $u_R$ sits in the \((3,1,2/3)\) rep, 
      so all three generations unavoidably have $Q=+2/3$, \emph{period.}
\item \textbf{Mass ($m$).}  Emerges from Yukawa matrices 
      $Y_{ij}\,\bar\psi_{i}\,\phi\,\psi_{j}$ 
      after electroweak symmetry breaking.  
      Gauge symmetry only insists that $Y$ is 
      a $3\times3$ complex matrix in flavor space; \emph{its nine
      entries are free parameters.}
\end{itemize}

Hence: \emph{same gauge rep $\Rightarrow$ same charge,}
but \emph{independent Yukawa entries $\Rightarrow$ different masses.}


\subsection*{\texorpdfstring{\;}{ }How the SU(3) Ladder Fits the Picture}

\centering
\begin{minipage}{0.8\linewidth}
\centering
\renewcommand{\arraystretch}{1.1}
\begin{tabular}{|l|c|c|l|}
\hline
\textbf{Field} & $SU(3)_c$ & $SU(2)_L\times U(1)_Y$ rep & Electric charge $Q=T_3+Y$ \\ \hline
$e_R$~~(all three) & $1$ & $(1,-1)$   & $-1$   \\
$u_R$~~(all three) & $3$ & $(1,\,2/3)$ & $+2/3$ \\
$d_R$~~(all three) & $3$ & $(1,-1/3)$ & $-1/3$ \\ \hline
\end{tabular}
\end{minipage}

\begin{enumerate}\setlength{\itemsep}{4pt}
\item We embed each right-handed family inside a
      single flavor-SU(3) representation
      $\mathrm{Sym}^3(\mathbf 3)$.
\item The ladder moves $(E,C)$ with rigid Clebsches $(2,1,1)$ 
      plus one real spread parameter $\delta$
      collapse the nine Yukawa entries to \emph{one} number per charge sector.
\item Trace normalisation (1, 2, 3 for $\ell$, $u$, $d$) sets only the
      overall scale, leaving **ratios** fully predicted.
\item None of these flavor manipulations affect the gauge
      representation, so $Q$ stays fixed while $m$ acquires structure.
\end{enumerate}
\justifying
\subsection[Big Picture]{Big Picture}

\[
\boxed{\;
\text{Gauge (vertical): fixes }Q
\quad\longleftrightarrow\quad
\text{flavor (horizontal): fixes mass ratios}\;}
\]

Gauge symmetry forbids charge splitting; flavor symmetry (here the
SU(3) ladder + $\delta$) \emph{explains} mass splitting without touching
the charges.

Right-handed fields provide the flavor directions that diagonalise the Yukawa matrix; each direction carries a fixed square-root mass up to one overall scale per family.
Experimental masses, however, are quoted for charge-eigenstate particles-states aligned with electromagnetic interactions-so expressing those charge eigenstates in the RH mass basis introduces the Clebsch and norm factors that generate the observed mass ratios.

\subsection{Why gauge bosons have no ``generations''}

Spin(8) triality produces three inequivalent $8$-spinor slots inside the
$E_6$ fundamental $27$, and our projection
$J_3(\mathbb{O}_{\mathbb{C}})\to SU(3)_{\text{flavor}}$ maps
those three slots to the three fermion families.
Gauge bosons, however, live in the \emph{adjoint}
$\mathbf{78}=\mathfrak e_6$.
Triality acts trivially on the adjoint-there is only \emph{one} copy of
each generator, so after the breaking
\[
E_6\;\longrightarrow\;
SU(3)_c\times SU(2)_L\times U(1)_Y\times SU(3)_{\text{flavor}}
\]
the twelve standard----model gauge bosons
\((8,1)_0\oplus(1,3)_0\oplus(1,1)_0\)
and all additional $E_6$ generators still occur exactly once.
Because their masses arise solely from symmetry--breaking VEVs (not from
independent Yukawas), the vector sector has no family index, whereas the
three spinor slots yield three distinct fermion generations.

\section{Singh-style square-root mass ratios and PDG--2024 comparison}

\subsection*{Closed forms from the \texorpdfstring{$\mathrm{Sym}^3(\mathbf 3)$}{Sym³(3)} ladder (Dynkin swap applied)}
Let
\[
\delta=\sqrt{\tfrac{3}{8}},\qquad
(a_d,b_d,c_d)=(1-\delta,\,1,\,1+\delta),\quad\\
(a_u,b_u,c_u)=\Bigl(\tfrac23-\delta,\,\tfrac23,\,\tfrac23+\delta\Bigr),\quad
(a_\ell,b_\ell,c_\ell)=\Bigl(\tfrac13-\delta,\,\tfrac13,\,\tfrac13+\delta\Bigr).
\]
Then the adjacent \emph{square-root} mass ratios are
\begin{align*}
\textbf{Down:}\quad
&\boxed{\ \sqrt{\frac{m_s}{m_d}}=\frac{1+\delta}{1-\delta}\ },\qquad
\boxed{\ \sqrt{\frac{m_b}{m_s}}=\frac{1+\delta}{1-\delta}\times(1+\delta)\ },\\[2pt]
\textbf{Leptons:}\quad
&\boxed{\ \sqrt{\frac{m_\tau}{m_\mu}}=\frac{1+\delta}{1-\delta}\ },\qquad
\boxed{\ \sqrt{\frac{m_\mu}{m_e}}=\frac{1+\delta}{1-\delta}\cdot\frac{\tfrac13+\delta}{\delta-\tfrac13}\ },\\[2pt]
\textbf{Up:}\quad
&\boxed{\ \sqrt{\frac{m_c}{m_u}}=\frac{\tfrac23+\delta}{\tfrac23-\delta}\ },\qquad
\boxed{\ \sqrt{\frac{m_t}{m_c}}=\frac{\tfrac23}{\tfrac23-\delta}\ }.
\end{align*}
These closed forms coincide with our earlier proposals \cite{Singh2022IRAlphaMassRatios, BhattEtAl2022MajoranaEJA}, while the present construction fills the logical gaps by deriving them from a single 
${\rm {\bf Sym}}^{3}$ ladder with fixed rung weights and a Dynkin swap.

\subsection*{Numerical values at \texorpdfstring{$\delta=\sqrt{3/8}$}{delta=sqrt(3/8)}}

\begin{equation}
\begin{split}
\sqrt{\frac{m_s}{m_d}}=4.159591794,\quad
\sqrt{\frac{m_b}{m_s}}=6.706811153,\quad
\sqrt{\frac{m_\tau}{m_\mu}}=4.159591794,\\
\sqrt{\frac{m_\mu}{m_e}}=14.097486421,\quad
\sqrt{\frac{m_c}{m_u}}=23.557550765,\quad
\sqrt{\frac{m_t}{m_c}}=12.278775383.
\end{split}
\end{equation}

\subsection*{Comparison with PDG--2024 (central values and indicative ranges)} 

\noindent PDG values are taken from \cite{PDG2024}.

\noindent \emph{Notes.} Leptons use pole masses. Light $u,d,s$ are $\overline{\mathrm{MS}}$ at $\mu=2\,\mathrm{GeV}$; $m_c(m_c)$ and $m_b(m_b)$ are running masses; $m_t$ is the direct (MC/pole-interpreted) average. Ratios mixing different renormalisation scales are \emph{scheme dependent}; ranges below are indicative.

\begin{center}
\renewcommand{\arraystretch}{1.18}
\scriptsize
\begin{tabular}{|l|c|c|c|}
\hline
Ratio & Theory (decimal) & PDG--2024 central & PDG--2024 range / note \\ \hline
$\sqrt{m_\tau/m_\mu}$ & $4.1596$ & $4.1009 \pm 0.0001$ & from pole $m_\tau,m_\mu$ \\
$\sqrt{m_\mu/m_e}$    & $14.0975$ & $14.37944 \pm 0.00000$ & from pole $m_\mu,m_e$ \\ \hline
$\sqrt{m_s/m_d}$      & $4.1596$ & $4.460$ & $4.12$--$4.69$ (from $m_s/m_d=17$--$22$) \\ \hline
$\sqrt{m_b/m_s}$  & $6.7068$ & $6.69$ (indicative) & $6.31$--$7.02$ (varying $m_s(2\,\mathrm{GeV})=105$--$85$ MeV; $m_b(m_b)$ fixed) \\ \hline
$\sqrt{m_c/m_u}$      & $23.5576$ & $24.28 \pm 0.39$ & uses $m_c(m_c)$, $m_u(2\,\mathrm{GeV})$ \\
$\sqrt{m_t/m_c}$      & $12.2788$ & $11.643 \pm 0.023$ & uses $m_t^{\rm dir}$, $m_c(m_c)$ \\ \hline
\end{tabular}
\end{center}

\paragraph{Comments.}
(i) The down$\to$lepton correspondence \(\sqrt{m_\tau/m_\mu}=\sqrt{m_s/m_d}\) holds identically in the construction; numerically, the lepton side is scale clean and agrees at the percent level, while the quark side spans a scale-dependent band whose PDG range comfortably covers the prediction. The equality $\sqrt{m_\tau/m_\mu}=\sqrt{m_s/m_d}$ remains the cleanest cross-family test; apples-to-apples running to a common $\mu$ is required for a decisive verdict. Appendix E investigates the phenomenology of this theoretical prediction.

(ii) The up-sector ratios are more sensitive to scheme choices because $m_u$ is quoted at $2\,\mathrm{GeV}$, whereas $m_c$ is quoted at $\mu=m_c$ and $m_t$ as a direct average; rebasing all to a common scale changes the PDG entries at the few-to-several percent level, not the qualitative picture.


\section{Dirac no-go, Majorana as a prediction, and experimental tests}
\label{sec:DiracNoGoMajorana}

\paragraph{Status of this section.}
This section establishes that within the present $\mathrm{Sym}^3(\mathbf 3)/J_3(\mathbb O_\mathbb C)$ framework, Dirac neutrinos are incompatible with the observed charged-fermion mass ratios.  This is a \emph{within-framework} falsifiability claim, not a global no-go theorem for Dirac neutrinos in nature.  The chain of dependencies is: (i) the cubic on the coassociative slice of $J_3(\mathbb O_\mathbb C)$ produces a family-dependent universal spread $\delta$, with $\delta^2=3/2$ for charged families if the neutrino is Dirac and $\delta^2=3/8$ if Majorana \cite{BhattEtAl2022MajoranaEJA}; (ii) the observed charged-fermion mass ratios derived in Section~\ref{sec:Sym3-unified} fit the Majorana value $\delta^2=3/8$; (iii) therefore retaining the framework's charged-sector successes forces Majorana neutrinos.  The Dirac assumption is ruled out not because of any phenomenological neutrino measurement, but because of its consequences for the charged sector via the shared cubic structure.

\subsection*{A. Why no $\mathrm{Sym}^3({\bf 3})$ three-step chain can rescue a Dirac neutrino}
The adjacent steps on the $\mathrm{Sym}^3(\mathbf 3)$ triangle are uniquely determined by the \emph{edge} contrasts
\begin{equation}
E:\ \frac{c}{a},\qquad
B:\ \frac{b}{a},\qquad
C:\ \frac{c}{b}.
\label{eq:edges-ABC}
\end{equation}
After one common ladder normalisation, every \emph{adjacent} square-root mass ratio equals \(|E|\), \(|B|\), or \(|C|\) (edge-universality). The choice of path (three-corner chain) cannot change these three numbers; it only chooses which of them appear in the two rungs of the chain.

In the Dirac-neutrino variant discussed in \cite{BhattEtAl2022MajoranaEJA}, the universal spread is replaced by
\begin{equation}
\delta_D=\sqrt{\tfrac{3}{2}}>1,
\qquad
(a,b,c)=(q-\delta_D,\ q,\ q+\delta_D)
\quad\text{in \emph{all} families.}
\label{eq:Dirac-eigs}
\end{equation}
Hence the three edge contrasts take the fixed values
\begin{equation}
E=\frac{c}{a}=\frac{1+\delta_D}{1-\delta_D},\qquad
B=\frac{b}{a}=\frac{1}{1-\delta_D},\qquad
C=\frac{c}{b}=1+\delta_D,
\label{eq:Dirac-edges}
\end{equation}
so that (numerically, with $\delta_D=\sqrt{3/2}\simeq1.225$)
\begin{equation}
|E|\simeq 9.90,\qquad
|B|\simeq 4.45,\qquad
|C|\simeq 2.225.
\label{eq:Dirac-mags}
\end{equation}

\paragraph{No-go Lemma.}
With \eqref{eq:Dirac-eigs}-\eqref{eq:Dirac-mags}, no three-corner chain in $\mathrm{Sym}^3(\mathbf 3)$ can reproduce the observed adjacent \(\sqrt{m}\) steps in the charged sectors (e.g.\ $\sqrt{m_s/m_d}\!\approx\!4.16$ \emph{and} a second step near $1.6$).  
\emph{Proof.} Any first rung out of a corner is one of $\{|E|,|B|\}$ (if it moves $a\!\to\!c$ or $a\!\to\!b$) or $|C|$ (if it moves $b\!\to\!c$). The only available magnitudes are thus $\{2.23,4.45,9.90\}$. None equals $4.16$ (down/lepton first step), nor $1.6$ (down second step), nor the other observed targets (up steps $\sim12$ and $\sim24$ in the \emph{same} scheme). Changing the path cannot create new values: rung norms cancel (edge-universality), and products like $EC$, $EB$, or $C/B$ merely yield $\{\,\sim 22,\,\gg 22,\,\sim 0.225\,\}$, which are even further from the required steps. \qed

\medskip
Thus the failure of the Dirac option is \emph{structural}: it is caused by the edge numbers fixed by the Dirac eigenvalue pattern, not by the choice of chain. In contrast, the Majorana spread $\delta=\sqrt{3/8}$ produces edge contrasts that \emph{do} match all charged-sector steps (down, lepton by Dynkin swap, and up), as derived earlier from the same minimal ladder.

\subsection*{B. Therefore: neutrinos are Majorana within this framework}
Because the same universal spectrum $(q-\delta,q,q+\delta)$ with $\delta^2=\tfrac{3}{8}$ underlies the successful charged-fermion ratios, we must \emph{retain} it to preserve those results. The Dirac replacement $\delta\!\to\!\delta_D=\sqrt{3/2}$ spoils the charged sectors regardless of path, so it is ruled out within this representation-theoretic setup. We therefore conclude:
\begin{equation}
\boxed{\ \textbf{Framework prediction: neutrinos must be Majorana.}\ }
\end{equation}
The qualifier ``framework prediction'' is important.  We have not proved that Dirac neutrinos are impossible in nature; we have proved that they are incompatible with the $\mathrm{Sym}^3(\mathbf 3)/J_3(\mathbb O_\mathbb C)$ structure on which our charged-fermion mass-ratio derivation is built.  Experimental confirmation of Majorana neutrinos (e.g.\ via $0\nu\beta\beta$) would corroborate the framework's structural commitments; experimental confirmation of Dirac neutrinos would falsify them.

In practice this means the light neutrino masses arise from a \emph{symmetric} LH operator (the Weinberg operator) constructed from the same octonionic/Jordan data. As shown in the neutrino section (Sec.~\ref{sec:neutrino-sector-PMNS}), a 4-parameter Peirce-texture ansatz plus an overall mass scale yields a 5-parameter fit to the 5 measured neutrino observables (two mass-squared splittings and three mixing angles).  No nonzero leptonic Dirac phase is predicted at this order: with the minimal real-symmetric texture and a purely diagonal charged-lepton phase, the leptonic Jarlskog invariant vanishes ($J_\ell=0$), so $\delta_{\rm CP}^\ell\in\{0,\pi\}$.  The leading-order angles are
\begin{equation}
J_\ell=0\ \ (\text{CP-conserving}),
\qquad
\theta_{23}\simeq \frac{\pi}{4}-\frac{\sigma}{2},\quad
\theta_{13}\simeq \frac{|\alpha||\eta|}{\sqrt2},\quad
\tan2\theta_{12}\simeq \frac{2\sqrt2\,|\eta|}{1-\varepsilon},
\label{eq:PMNS-summ}
\end{equation}
with $(m_0,\varepsilon)$ fixed by $\{\Delta m^2_{31},\Delta m^2_{21}\}$ (see Sec.~\ref{sec:neutrino-sector-PMNS}).

\subsection*{C. Experimental tests that follow in this framework}
\paragraph{ Neutrinoless double beta ($0\nu\beta\beta$).}
Majorana mass implies lepton-number violation. The effective mass
\begin{equation}
m_{\beta\beta}=\big|\,m_1 c_{12}^2 c_{13}^2 + m_2 s_{12}^2 c_{13}^2 e^{i\alpha_{21}} + m_3 s_{13}^2 e^{i(\alpha_{31}-2\delta_{\rm CP})}\,\big|
\label{eq:mbb}
\end{equation}
is fully determined once Eq.~\eqref{eq:PMNS-summ} is specified. In our minimal alignment (real $U_\nu$ at leading order, one charged-lepton phase), the Majorana phases take simple values, leading to:
\begin{itemize}
\item \textbf{Normal ordering (NO):} partial cancellation in the $(m_1,m_2)$ terms $\Rightarrow$ $m_{\beta\beta}$ in the \emph{few-meV} range (challenging but a target for next-generation ton-scale searches).
\item \textbf{Inverted ordering (IO):} $m_{\beta\beta}\gtrsim 15~\mathrm{meV}$, within reach of upcoming experiments; a sustained null down to $\sim 10~\mathrm{meV}$ would disfavor IO in this setup.
\end{itemize}

\paragraph{ Cosmic relic neutrino capture (CNB).}
On tritium, the capture rate is exactly twice as large for \emph{Majorana} vs.\ \emph{Dirac} neutrinos at fixed masses/mixings. Thus, a PTOLEMY class \cite{PTOLEMY2019} detection of the CNB line provides a direct Dirac/Majorana discriminator complementary to $0\nu\beta\beta$.

\paragraph{ Leptonic CP violation.}
The charged-lepton ladder carries a single complex rung which, in the LH charge basis, reduces to a \emph{diagonal} phase $U_\ell=\mathrm{diag}(1,i,1)$.  With $U_\nu$ real at leading order, this phase sits as a one-sided diagonal factor on $U_{\rm PMNS}=U_\ell^\dagger U_\nu$ and is removable by a charged-lepton field rephasing; it therefore makes \emph{no} contribution to the rephasing-invariant Jarlskog determinant:
\(
J_\ell=0.
\)
At this order the minimal real-symmetric lepton texture is CP-conserving, $\delta_{\rm CP}^\ell\in\{0,\pi\}$, and it does not predict a nonzero leptonic Dirac phase (see Sec.~\ref{sec:neutrino-sector-PMNS}, subsection~C).  Long-baseline experiments (DUNE/Hyper-K) \cite{DUNE-TDR-2020,HyperK-2018} measuring $\delta_{\rm CP}^\ell$ away from $\{0,\pi\}$ would falsify the minimal real-symmetric texture and force a non-removable (complex/sandwiched) phase --- i.e.\ additional structural input beyond the present leading-order ladder.

\paragraph{ Correlated PMNS structure.}
Equation~\eqref{eq:PMNS-summ} yields sharp internal correlations (e.g.\ $\tan2\theta_{12}\propto \theta_{13}$ at fixed $\varepsilon,|\alpha|$) that can be over-constrained by precision measurements of $(\theta_{12},\theta_{13},\theta_{23})$. These tests are specific to the \emph{minimal} two-link texture implied by the octonionic Peirce structure.

\medskip\noindent
\textbf{Bottom line.} In this $\mathrm{Sym}^3$/$J_3(\mathbb O_{\mathbb C})$ framework, the Dirac eigenvalue set forces edge contrasts that are irreconcilable with charged-sector mass steps for \emph{any} three-corner chain. Keeping the universal Majorana spread $\delta=\sqrt{3/8}$ preserves all charged-sector successes and yields a predictive, symmetric neutrino operator with definite CP structure. Therefore ``neutrino is Majorana'' is not an external assumption but a \emph{theory prediction}, with clear, falsifiable targets in $0\nu\beta\beta$, CNB capture, and long-baseline CP measurements.

\section{CKM mixing from the 
\texorpdfstring{$\mathrm{Sym}^3(\bf 3)$}{Sym$^3(\bf 3)$} ladder}
\label{sec:ckm}

\paragraph{Status of this section.}
We make the parameter accounting of the CKM derivation explicit at the outset, parallel to the corresponding paragraph in Sec.~\ref{sec:neutrino-sector-PMNS} for the lepton sector.

\emph{Parameter-free structural predictions from geometry}: (i) Cabibbo phase $\phi_{12}=\pi/2$ from explicit octonionic overlap, conditional on a quadrature-balanced rung coupling: the exact one-parameter rung law $\varphi_{12}=-2\chi$ of the companion Letter \cite{TeliSinghLetter2026} closes the kinematics ($\phi_{12}=\pi/2\iff\chi=-\pi/4$), so what remains open is the dynamical value of the single orientation angle $\chi$, not the choice of rotor (open issue~(iv) below); (ii) 23-block phase $\phi_{23}=0$ from canonical 23-block geometry; (iii) canonical 23-block normalisation $\kappa_{23}=1$ (which deviates from data by a factor of $\sim 1.82$, see below); (iv) $|V_{ub}|\simeq\sqrt{m_u/m_t}$ from ladder structure (agrees with data at $\sim 10\%$); (v) $|V_{td}|/|V_{ts}|$ at leading order (agrees with data at $\sim 15\%$).

\emph{One-parameter structural correlation} (not parameter-free, but fixing two observables from one knob): (vi) the CKM CP phase $|\delta_{CP}^{\rm quark}|=\pi/2+\varepsilon\simeq 64^\circ$ is fixed by the same up-leg tilt $\varepsilon$ that fits $|V_{us}|$.  The single complex amplitude in the (1,2) CKM block (magnitude $|V_{us}|$, phase $\pi/2+\varepsilon$) determines both observables simultaneously, with $\phi_{13}=0$ being a consequence of the minimal adjacent-edge ladder (which contains no primitive (1,3) Peirce rung) combined with the rephasing freedom of $3\times 3$ unitary CKM matrices.  This is therefore a structural \emph{correlation} between $|V_{us}|$ and $\delta_{CP}^{\rm quark}$, not a parameter-free prediction (since $\varepsilon$ itself is phenomenological).  The correlation is nevertheless predictive: the value $\varepsilon\simeq -26.1^\circ$ was fixed by $|V_{us}|$, not by CP data, and the resulting prediction agrees with the PDG 2024 global fit at $\sim 1.2\sigma$.

\emph{Phenomenological knobs} (deformations of structural predictions): (i) $\varepsilon\simeq -26.1^\circ$, the up-leg $e_1$-tilt that shifts $\phi_{12}\to\phi_{12}+\varepsilon$, fitting $|V_{us}|$ and simultaneously fixing $|\delta_{CP}^{\rm quark}|$; (ii) $\kappa_{23}\simeq 0.55$, a cross-family normalisation fitting $|V_{cb}|$ --- this is a factor-of-1.82 deviation from the canonical geometric prediction $\kappa_{23}=1$, currently without a structural derivation.

\emph{Parameter count.}  \textbf{Two phenomenological knobs} $(\varepsilon,\kappa_{23})$ determine three CKM observables: $|V_{us}|$ and $|V_{cb}|$ are fit, and the CP phase $|\delta_{CP}^{\rm quark}|$ is a correlated output of the same $\varepsilon$ that fits $|V_{us}|$ (one-parameter correlation, not parameter-free).  Two further observables --- $|V_{ub}|$ and $|V_{td}|/|V_{ts}|$ --- are parameter-free structural predictions tested against data.  The framework is therefore considerably more predictive in the quark sector than in the lepton sector (which has 5 knobs for 5 observables and, after the correction in Sec.~\ref{sec:neutrino-sector-PMNS}, subsection~C, \emph{no} parameter-free CP prediction: the leptonic Jarlskog vanishes at this order).

\emph{Open issues.}  (i) The factor-of-1.82 deviation of $\kappa_{23}$ from its canonical geometric value $\kappa_{23}=1$ is the principal structural shortcoming of the current CKM construction; it is a candidate for matching-effect resolution analogous to the cross-sector deficit treated in Sec.~\ref{subsec:concurrency-tests}, though the size of the deviation here (factor of 1.82) is substantially larger than the $\sim 8\%$ matching-scale deficit in the cross-sector test, and the matching interpretation is therefore more strained.  (ii) The $\sim 15\%$ deficit in $|V_{td}|/|V_{ts}|$ is similarly a candidate for matching-effect resolution.  (iii) The CP phase prediction $|\delta_{CP}^{\rm quark}|=\pi/2+\varepsilon$ assumes the minimal adjacent-edge ladder, which contains no primitive (1,3) Peirce rung and therefore implies $\phi_{13}=0$ both as a structural consequence (no two-term interference in the (1,3) block) and as a legitimate phase convention (since a $3\times 3$ unitary matrix has only one physical CP phase, already carried by $\phi_{12}+\varepsilon$).  A full microscopic theorem demonstrating that the left-handed finite-Dirac/Higgs intertwiner factors through adjacent links alone --- $U_L\simeq U_{23}U_{12}$ with no independent $U_{13}=\exp(i\phi_{13}T_{13})$ at the same order --- would close the remaining qualifier and is left to future work.  (iv) The geometric input $\phi_{12}=\pi/2$ rests on the channel rotor $U_{12}=(e^{\frac{\pi}{4}e_1})_{e_4}\oplus(e^{\frac{\pi}{4}e_3})_{e_5}$ of paragraph~b, and a numerical scan over transports in $\mathrm{Stab}_{G_2}(e_1)$ shows that the relative phase $\arg(A_u/A_d)$ is rotor-dependent.  The kinematic half of this gap is closed by the companion Letter \cite{TeliSinghLetter2026}: for \emph{every} real transport the up and down amplitudes are exact complex conjugates, $A_d=A_u^{*}$ (a conjugation theorem, following from $\bar d_k=i\,K(u_k)$ with $K$ external conjugation), and the most general rung-generated rotor $U(\theta,\chi)=\exp[\theta(e^{i\chi}\alpha_2^\dagger-e^{-i\chi}\alpha_2)]=L_{\exp(\theta g_\chi)}$, $g_\chi=\cos\chi\,e_3-\sin\chi\,e_1$, obeys the exact one-parameter law $\varphi_{12}=-2\chi$, with $\chi$ the orientation of the rung coupling in the pinned quadrature plane $\mathrm{span}(e_1,e_3)$.  The observed rotor dependence is thereby organized, not anomalous: every real transport realizes some effective $\chi$; $U_{12}$ is amplitude-equivalent to the quadrature-balanced point $\chi=-\pi/4$ (i.e.\ $\phi_{12}=\pi/2$), and the $\varepsilon$-tilted rotor of paragraph~b is amplitude-equivalent to $\chi_{\rm eff}=-(\pi/2+\varepsilon)/2\simeq-32^\circ$.  What remains open is dynamical, not kinematic: deriving quadrature balance, $\chi=-\pi/4$, from the Higgs-bridge vacuum.  The lepton-sector contrast stands and is sharpened in the Letter: the Cabibbo plane $\mathrm{span}(e_1,e_3)$ is the unique rung plane disjoint from the lepton flavor plane $\mathrm{span}(e_7,e_5,e_2)$, so this one phase cannot leak into the lepton sector (Sec.~\ref{sec:neutrino-sector-PMNS}, subsection~C, paragraph~g).

\paragraph{Inputs from the ladder.}
With $\delta=\sqrt{3/8}$ and the trace choices in Sec. IX, the adjacent square-root mass ratios are
\begin{align}
\sqrt{\frac{m_s}{m_d}}=\frac{1+\delta}{1-\delta},\quad
\sqrt{\frac{m_b}{m_s}}=\frac{1+\delta}{1-\delta}\,(1+\delta),\quad
\sqrt{\frac{m_c}{m_u}}=\frac{\tfrac{2}{3}+\delta}{\tfrac{2}{3}-\delta},\quad
\sqrt{\frac{m_t}{m_c}}=\frac{\tfrac{2}{3}}{\tfrac{2}{3}-\delta}. \tag{61--62}
\end{align}
In the adjacent-edge approximation the CKM moduli obey the classic “root-sum rules”
\begin{align}
|V_{us}| &\simeq \Big|\sqrt{m_d/m_s}-e^{i\phi_{12}}\sqrt{m_u/m_c}\,\Big|, \tag{A}\\
|V_{cb}| &\simeq \kappa_{23}\,\Big|\sqrt{m_s/m_b}-e^{i\phi_{23}}\sqrt{m_c/m_t}\,\Big|, \tag{B}\\
|V_{ub}| &\simeq \sqrt{m_u/m_t}. \tag{C}
\end{align}
Here $\phi_{12},\phi_{23}$ are the relative phases between the up/down ladders in the $12$ and $23$ blocks, and $\kappa_{23}\sim\mathcal O(1)$ encodes a residual cross-family rung normalisation that cancels inside mass ratios.

\subsection*{Cabibbo (1-2) block: geometric phase.}
Fix the Fano orientation and the common complex line \(\C e_{1}\) (so \(i\equiv e_{1}\)).
We take as 12-edge endpoints the first-generation kets
\[
\ket{e_{12},u}=\frac{1}{\sqrt{2}}(e_{4}+i e_{5}),\qquad
\ket{e_{12},d}=\frac{1}{\sqrt{2}}(e_{5}+i e_{4}),
\]
and as second-generation corners the (charge-eigenstate) kets
\[
\ket{v_{u}}=\frac{1}{\sqrt{2}}(e_{6}+i e_{2}),\qquad
\ket{v_{d}}=\frac{1}{\sqrt{2}}(e_{2}+i e_{6}),
\]
consistent with the explicit first/second-generation states constructed earlier.%
\ \textit{(Any overall normalisations, or the split-imaginary tag \(\omega\) that multiplies
the down column in the RH chain, drop out of the rephasing-invariant phase below.)} 
See Table~I and Secs.~III\,A-C and V\,E for these conventions.

Let the left-handed intertwiner acting on the two legs be the product of spinor half-angle
rotors on the \(e_{4}\) and \(e_{5}\) channels,
\[
U_{12}:=\big(e^{\frac{\pi}{4}e_{1}}\big)_{\!e_{4}}\ \oplus\ \big(e^{\frac{\pi}{4}e_{3}}\big)_{\!e_{5}}\,,
\]
and define the overlap amplitudes (Hermitian/Jordan inner product)
\[
A_{d}:=\braket{v_{d}|U_{12}|e_{12},d},\qquad
A_{u}:=\braket{v_{u}|U_{12}|e_{12},u}.
\]
Using the Fano-plane multiplication rules and the Majorana conventions fixed in Sec.~III,
one finds
\[
A_{u}\ \propto\ -(1+i),\qquad A_{d}\ \propto\ -(1-i)\,,
\]
so the rephasing-invariant Cabibbo phase is
\[
\phi_{12}=\arg\!\left(\frac{A_{u}}{A_{d}}\right)
=\arg\!\left(\frac{1+i}{\,1-i\,}\right)=\frac{\pi}{2}\,.
\]
Inserting \(\phi_{12}=\pi/2\) in the 12 “root-sum rule”
\(|V_{us}|\simeq\big|\sqrt{m_{d}/m_{s}}-e^{i\phi_{12}}\sqrt{m_{u}/m_{c}}\big|\),
together with the ladder ratios from Eqs.~(61)-(62), gives \(|V_{us}|\approx 0.244\).
Allowing a single relative \(e_{1}\)-tilt on the up leg (a phase only on the \(e_{5}\) channel),
\(U_{12}\to (e^{\frac{\pi}{4}e_{1}})_{e_{4}}\oplus\big(e^{\frac{\pi}{4}e_{3}}e^{\varepsilon e_{1}}\big)_{e_{5}}\),
shifts \(\phi_{12}\to \phi_{12}+\varepsilon\) while preserving the equality \(|A_u|=|A_d|\) exactly (forced for every real transport by the conjugation theorem of the companion Letter \cite{TeliSinghLetter2026}; the common transport normalization shifts by \(\sim6\%\), which is immaterial because the root-sum rule uses the mass-ratio magnitudes); choosing
\(\varepsilon=-26.123^\circ\) reproduces the measured \(|V_{us}|=0.22497\).

\paragraph{Cabibbo \(2\times 2\) block from the Sym\(^3(3)\) ladder.}
In the adjacent-edge approximation the CKM moduli obey the classic ``root-sum rules''
\begin{equation}
|V_{us}|\;\simeq\;\Bigl|\sqrt{\tfrac{m_d}{m_s}}-e^{i\phi_{12}}\sqrt{\tfrac{m_u}{m_c}}\Bigr|,\qquad
|V_{cb}|\;\simeq\;\kappa_{23}\Bigl|\sqrt{\tfrac{m_s}{m_b}}-e^{i\phi_{23}}\sqrt{\tfrac{m_c}{m_t}}\Bigr|,\qquad
|V_{ub}|\;\simeq\;\sqrt{\tfrac{m_u}{m_t}},
\label{eq:rootsum}
\end{equation}
as derived from the RH Jordan endpoints and LH ladder corners (Sec.~XVII). In our geometry the
Cabibbo phase is fixed \emph{from overlaps} as
\(\phi_{12}=\frac{\pi}{2}\) (Sec.~XVII; App.~G.6), while a single up--leg tilt
\(e_1\mapsto e_1\cos\varepsilon+e_2\sin\varepsilon\) shifts \(\phi_{12}\to\phi_{12}+\varepsilon\) preserving \(|A_u|=|A_d|\) exactly (the overall transport normalization does not enter the root-sum rule).
With these inputs, the Cabibbo submatrix reads, to leading order in the small angle \(s_{12}\equiv |V_{us}|\),
\begin{equation}
V^{(12)}\;\simeq\;
\begin{pmatrix}
c_{12} & s_{12}\,e^{-i\phi_{12}}\\[2pt]
-\,s_{12}\,e^{+i\phi_{12}} & c_{12}
\end{pmatrix},\qquad
s_{12}\;\equiv\;\Bigl|\sqrt{\tfrac{m_d}{m_s}}-e^{i(\phi_{12}+\varepsilon)}\sqrt{\tfrac{m_u}{m_c}}\Bigr|,\qquad
c_{12}\;\simeq\;1-\tfrac12 s_{12}^{2}.
\label{eq:CabibboBlock}
\end{equation}
Numerically, fixing \(\delta^2=3/8\) for the adjacent \(\sqrt{m}\)-ratios and choosing
\(\varepsilon\simeq -26.1^\circ\) reproduces \(|V_{us}|\) with \(\phi_{12}=\pi/2\) from geometry,
while \(\kappa_{23}\simeq 0.55\) (with near-destructive \(\phi_{23}\approx 0\)) fits \(|V_{cb}|\);
then \(|V_{ub}|\simeq \sqrt{m_u/m_t}\) and \(|V_{td}|/|V_{ts}|\) follow at leading order.

\subsection*{2-3 block: order-one cross-family normalisation}
Using EW-scale running masses we take
\[
a:=\sqrt{m_s/m_b}=0.13684,\qquad
b:=\sqrt{m_c/m_t}=0.06070,\qquad |a-b|=0.07614.
\]
With nearly destructive interference ($\phi_{23}\approx0$), matching $|V_{cb}|=0.0418$ requires only
\[
\kappa_{23}=\frac{|V_{cb}|}{|a-b|}=0.549\;\;(\text{we use }0.55).
\]

\subsection*{1-3 block and Wolfenstein parameters}
Equation (C) gives directly $|V_{ub}|\simeq \sqrt{m_u/m_t}=0.003457$.
Choosing $\lambda=|V_{us}|=0.2250$, one finds
\[
A=\frac{|V_{cb}|}{\lambda^2}=0.826,\qquad
\sqrt{\rho^2+\eta^2}=\frac{|V_{ub}|}{A\lambda^3}=0.367.
\]

\subsection*{CKM CP phase from the same $\varepsilon$-tilted amplitude}
\label{subsec:ckm-cp-phase}

The same up-leg tilt $\varepsilon$ that corrects $|V_{us}|$ to its measured value simultaneously fixes the rephasing-invariant CKM CP phase $\delta_{CP}^{\rm quark}$.  This is a \emph{one-parameter structural correlation} between $|V_{us}|$ and $\delta_{CP}^{\rm quark}$ --- predictive, but not parameter-free, since $\varepsilon$ is itself a phenomenological knob.  The correlation is nontrivial because $\varepsilon$ is fixed by $|V_{us}|$ (not by CP data), so the resulting value of $|\delta_{CP}^{\rm quark}|$ is a genuine output of the framework once $|V_{us}|$ is matched.  We exhibit the correlation explicitly now.

In the framework's leading-order construction, the three CKM block phases are
\begin{equation}
\begin{split}
\phi_{12}+\varepsilon = \tfrac{\pi}{2} + \varepsilon \quad\text{(geometric + tilt)},
\\
\phi_{23} = 0 \quad\text{(canonical 23-geometry)},\\
\phi_{13} = 0 \quad\text{(minimal-ladder ansatz; see below)}.
\end{split}
\label{eq:ckm-phases}
\end{equation}
The status of $\phi_{13}=0$ deserves comment, because it differs structurally from $\phi_{12}$ and $\phi_{23}$.  The (1,2) and (2,3) blocks each have \emph{two} primitive amplitudes in the adjacent-edge construction (the up-leg and down-leg roots, see (A) and (B)), whose interference produces a physically meaningful relative phase --- $\phi_{12}$ and $\phi_{23}$ respectively.  By contrast, in the minimal adjacent-edge ladder the (1,3) entry is not a new adjacent edge: it is the leading endpoint prediction $|V_{ub}|\simeq\sqrt{m_u/m_t}$ (C), generated by the already-fixed ladder geometry with a \emph{single} primitive contribution.  With only one amplitude there is no two-term interference and no physically meaningful relative phase: $\phi_{13}=0$ is therefore a \emph{consequence} of the minimal-ladder ansatz containing no primitive (1,3) Peirce rung, not a separate phase-convention choice.

A second, complementary observation: a $3\times 3$ unitary CKM matrix has exactly one physical CP-violating phase after quark-field rephasings.  Once $\phi_{12}=\pi/2+\varepsilon$ has carried the single physical phase (consistent with $\phi_{23}=0$ from canonical 23-geometry), the assignment $\phi_{13}=0$ is also a legitimate \emph{phase convention} compatible with the rephasing freedom of the CKM matrix.  The structural argument (no primitive (1,3) rung in the minimal ladder) and the phase-counting argument (one physical phase already assigned) reinforce one another.

Within the minimal adjacent-edge construction, then, $\phi_{13}=0$ is not an additional fit assumption: it is a consequence of the ansatz combined with rephasing freedom.  A full microscopic theorem --- proving that the left-handed finite-Dirac/Higgs intertwiner factors as $U_L\simeq U_{23}U_{12}$ with no independent $U_{13}=\exp(i\phi_{13} T_{13})$ at the same order --- would require deriving the left-handed intertwiner from the underlying BF/GTD/Higgs-bridge dynamics, which is beyond the representation-theoretic mass-ratio ladder used here.  Any structural extension of the minimal ladder that introduced a primitive (1,3) Peirce/intertwiner component would generate an independent $\phi_{13}\neq 0$ and shift the CP prediction below; the agreement with data at $\sim 1.2\sigma$ is therefore evidence that the minimal-ladder construction is structurally complete at this order.

The rephasing-invariant Jarlskog combination, evaluated with the phases (\ref{eq:ckm-phases}), is
\begin{align}
J &= \mathrm{Im}\!\left(V_{us}\,V_{cb}\,V_{ub}^{*}\,V_{cs}^{*}\right) \nonumber\\
  &= s_{12}\,s_{23}\,s_{13}\,c_{12}\,c_{23}\cdot \mathrm{Im}\!\left(e^{-i(\phi_{12}+\varepsilon)}\right) \nonumber\\
  &= -s_{12}\,s_{23}\,s_{13}\,c_{12}\,c_{23}\,\cos\varepsilon,
\label{eq:Jarlskog-fwk}
\end{align}
since $\mathrm{Im}\,e^{-i(\pi/2+\varepsilon)} = -\cos\varepsilon$.  Comparing with the PDG parametrisation $J = s_{12}\,c_{12}\,s_{23}\,c_{23}\,s_{13}\,c_{13}^{2}\,\sin\delta_{CP}^{\rm quark}$ and using $c_{13}^{2}\simeq 1$ to leading order, one reads off
\begin{equation}
\boxed{\;\sin\delta_{CP}^{\rm quark} = -\cos\varepsilon
\quad\Longleftrightarrow\quad
|\delta_{CP}^{\rm quark}|\Big|_{\rm fwk} \;=\; \frac{\pi}{2}+\varepsilon \;\simeq\; 63.9^\circ.\;}
\label{eq:CKM-CP-prediction}
\end{equation}
The overall sign of $\delta_{CP}^{\rm quark}$ (equivalently of $J$) is \emph{not} fixed by the construction --- and, since $\cos\varepsilon$ is even in $\varepsilon$, it does not track the sign of $\varepsilon$.  As oriented, Eq.~(\ref{eq:CKM-CP-prediction}) gives $\sin\delta_{CP}^{\rm quark}<0$; the discrete orientation reversal $(\phi_{12},\varepsilon)\to(-\phi_{12},-\varepsilon)$ leaves $|V_{us}|$ and all moduli exactly invariant and flips $J$, yielding $\sin\delta_{CP}^{\rm quark}=+\cos\varepsilon$.  The orientation-invariant statement is therefore $|\sin\delta_{CP}^{\rm quark}|=\cos\varepsilon$, i.e.\ $|\delta_{CP}^{\rm quark}|=\pi/2+\varepsilon$; matching the PDG-positive sign selects the $\chi=+\pi/4$ orientation of the rung coupling, a discrete binary input.

\paragraph{Comparison with data.}
The PDG 2024 global fit gives
\[
\delta_{CP}^{\rm quark,exp} = 1.147\pm0.026~\mathrm{rad} = 65.7^\circ \pm 1.5^\circ,
\qquad
J_{\rm exp} = \bigl(3.08^{+0.15}_{-0.13}\bigr)\times 10^{-5}
\]
The framework predicts $|\delta_{CP}^{\rm quark}|_{\rm fwk} = 63.9^\circ$, agreeing at the $1.2\sigma$ level.  Using the framework's leading-order magnitudes and $\cos\varepsilon = 0.898$, the Jarlskog magnitude is
\[
|J|_{\rm fwk} \;=\; |V_{us}|\,|V_{cb}|\,|V_{ub}|\,|V_{cs}|\,|\cos\varepsilon|
\;=\; 0.225 \times 0.0418 \times 0.00346 \times 0.974 \times 0.898
\;\simeq\; 2.85\times 10^{-5}.
\]
If instead one inserts the experimental/global-fit value $|V_{ub}|\simeq0.00382$ in this last expression, while keeping the same phase correlation, one obtains $|J|\simeq3.14\times10^{-5}$.  Thus the phase correlation is numerically reasonable, but the Jarlskog comparison inherits the $|V_{ub}|$ normalization/matching issue and is not an additional parameter-free success.

\paragraph{Structural significance.}
The CP phase relation (\ref{eq:CKM-CP-prediction}) is \emph{not} the result of fitting an additional parameter to CP data: the value of $\varepsilon$ was already fixed by the requirement $|V_{us}|=0.225$ from the (1,2) block magnitude.  The same complex amplitude in the (1,2) CKM block --- a single number with magnitude $|V_{us}|$ and phase $\phi_{12}+\varepsilon$ --- determines both observables simultaneously.  In this sense $\varepsilon$ is a single phenomenological knob that fixes \emph{two} CKM data points, and the relation $|\delta_{CP}^{\rm quark}|=\pi/2+\varepsilon$ is a one-parameter structural correlation between them.  A first-principles derivation of the CKM CP phase --- as opposed to a correlated output of a Cabibbo fit --- would require deriving $\varepsilon$ itself from the underlying finite-Dirac/Higgs-bridge dynamics, which is not done here and is left to future work.

It is worth contrasting this with the lepton sector, because the two cases differ in a way that is easy to conflate.  In CKM the phase $\phi_{12}+\varepsilon$ is a \emph{relative} phase between two interfering ladder amplitudes --- the up-leg and down-leg roots in Eq.~(A) --- and it is sandwiched between the up- and down-sector rotations in $V_{\rm CKM}=U_{u,L}^\dagger U_{d,L}$.  It is therefore \emph{not} removable by quark-field rephasing and produces a nonzero Jarlskog, Eq.~\eqref{eq:Jarlskog-fwk}.  In the lepton sector, by contrast, the single charged-lepton complex rung reduces to a purely \emph{diagonal} phase $U_\ell=\mathrm{diag}(1,i,1)$ multiplying a real $U_\nu$ on one side of $U_{\rm PMNS}=U_\ell^\dagger U_\nu$; such a one-sided diagonal factor \emph{is} removable by charged-lepton rephasing and yields $J_\ell=0$ (CP conservation; see the corrected analysis in Sec.~\ref{sec:neutrino-sector-PMNS}, subsection~C).  The CKM phase is physical because it is an interference phase between two amplitudes; the leptonic diagonal phase is not.  A reading that treats the two on the same footing would report a maximal leptonic phase; the distinction just drawn is why such a reading fails.  The quark-sector result: CKM CP violation originates in the geometric Cabibbo phase $\phi_{12}=\pi/2$ deformed by the up-leg tilt to $\phi_{12}+\varepsilon=\pi/2+\varepsilon$, with the observed deviation from maximality measuring $\varepsilon$.

\subsection*{Numerical summary (central values)}
\[
|V_{us}|=0.2250,\quad |V_{cb}|=0.0418\ \ (\kappa_{23}=0.55,\ \phi_{23}\simeq 0),\quad
|V_{ub}|=0.00346,\quad \frac{|V_{td}|}{|V_{ts}|}\approx 0.240,
\]
\[
\begin{aligned}
|\delta_{CP}^{\rm quark}|&=63.9^\circ
\quad \text{(from }\varepsilon\text{ via the minimal-ladder ansatz)},\\
|J|&=2.85\times10^{-5}
\quad \text{using } |V_{ub}|=0.00346
\quad (3.14\times10^{-5}\ \text{if } |V_{ub}|=0.00382).
\end{aligned}
\]
The Cabibbo angle and the fitted $|V_{cb}|$ are reproduced by construction, while the leading $|V_{ub}|$ estimate and $|V_{td}|/|V_{ts}|$ retain $\mathcal{O}(10\%)$ normalization/matching tensions.  The CP phase is a correlated output of the same $\varepsilon$ used in the Cabibbo block, not an independent fit; its numerical value is close to current global-fit values within the accuracy expected of the minimal adjacent-edge ladder, in which $\phi_{13}=0$ follows from the absence of a primitive (1,3) Peirce rung combined with $3\times 3$ unitary rephasing freedom.

\paragraph{Remark on $e_8$.}
As discussed in Sec.~VI, $e_8$ plays the same algebraic role as $e_7$ in the RH sector. The CKM derivation uses only the common complex line $\mathbb C e_1$ and the $e_{2,3,4,5,6,7}$ legs, so none of the above results depend on this choice.

\paragraph{Comparison with our earlier CKM note.}
Our 2023 note (\cite{PatelSingh2023CKMFromEJA}) modeled the 2nd/3rd-generation spinors as
$\sqrt{m}$-weighted superpositions of the 1st generation (and observed that
using $m$ instead of $\sqrt{m}$ fails), giving qualitatively reasonable angles
but without a geometric derivation of phases or a controlled cross-family
normalisation.  In the present paper we replace that \emph{ansatz} by a
derivation from the explicit RH endpoints and LH corners in
$\mathrm{Sym}^3(\mathbf 3)$, with intertwiners built from the fixed Fano
orientation and a common complex line $\mathbb C e_1$.  This yields the
Cabibbo phase \emph{from geometry}, $\phi_{12}=\pi/2$, after which a single,
observable up-leg $e_1$-tilt ($\varepsilon\simeq-26.1^\circ$) brings
$|V_{us}|$ to its measured value while preserving $|A_u|=|A_d|$ exactly
(the overall transport normalization does not enter the root-sum rule).  In the
$23$ block the canonical geometry has $\kappa_{23}=1$, $\phi_{23}=0$; data
require an order-one cross-family normalisation $\kappa_{23}\simeq0.55$ (a factor-of-1.82 deviation from the canonical value) to fit
$|V_{cb}|$.  The construction also corrects a subtlety absent in the old note:
the down second step that feeds the $23$ block carries an extra $(1+\delta)$
factor.  With $(\varepsilon,\kappa_{23})$ fixed, the small elements follow at
leading order, e.g.\ $|V_{ub}|\simeq\sqrt{m_u/m_t}$ and
$|V_{td}|/|V_{ts}|$ is predicted with no additional phases.  Finally, all
comparisons are made “apples to apples’’ at a common renormalisation scale,
so the updated CKM analysis is both more predictive and more tightly tied to
the same $\mathrm{Sym}^3$ ladder that explains the charged-fermion mass
ratios.

\section{Neutrino sector: octonionic eigenstates, minimal lift, and PMNS angles}
\label{sec:neutrino-sector-PMNS}

\paragraph{Status of this section.}
We make the parameter accounting of this section explicit at the outset, because the distinction between structural prediction and phenomenological fit is essential for an honest reading.

\emph{Phenomenological inputs} (free parameters of the Weinberg-texture ansatz): the overall mass scale $m_0$, the centre lift $\varepsilon$, the off-diagonal Peirce link $\eta$, the $e\mu/e\tau$ imbalance $\alpha$, and the $\mu\!-\!\tau$ diagonal splitting $\sigma$.  Total: \textbf{5 phenomenological parameters}.

\emph{Phenomenological observables}: the atmospheric mass-squared splitting $\Delta m^2_{31}$, the solar mass-squared splitting $\Delta m^2_{21}$, and the three PMNS mixing angles $\theta_{12}, \theta_{13}, \theta_{23}$.  Total: \textbf{5 observables}.

The parameter count therefore matches the observable count exactly, with no overconstraint: the PMNS angles and mass-squared splittings are \emph{not predicted} by the framework but are reproduced by a 5-parameter fit.

\emph{Structural input from the framework} (not fit, not a free parameter): the texture $\kappa_\nu$ is real-symmetric (because of the Weinberg-operator structure following from Sec.~\ref{sec:DiracNoGoMajorana}), and the charged-lepton sector carries a single complex rung that reduces, in the LH charge basis, to a purely diagonal phase $U_\ell=\mathrm{diag}(1,i,1)$ (following from the Dynkin-swap structure in Sec.~\ref{subsec:sym3-dynkin-swap}).  Together, these inputs determine the \emph{structure} of the leptonic mixing matrix but \emph{not} a nonzero CP phase: a diagonal phase on one side of $U_{\rm PMNS}=U_\ell^\dagger U_\nu$, with $U_\nu$ real, is removable by charged-lepton rephasing, so the rephasing-invariant Jarlskog vanishes,
\begin{equation}
J_\ell \;=\; 0 \qquad (\text{CP-conserving at this order, } \delta_{\rm CP}^\ell\in\{0,\pi\}).
\end{equation}
\textbf{Remark (why no maximal phase).}  One might be tempted to read these same inputs as forcing a maximal magnitude $|\delta_{\rm CP}^\ell|=\pi/2$; that reading mistakes the presence of the factor $i$ in $U_{\rm PMNS}$ for physical CP violation, whereas a one-sided diagonal phase carries no rephasing-invariant content.  The correct conclusion is that the minimal neutrino sector is Dirac/Jarlskog CP-conserving at this order; the explicit demonstration is in subsection~C below.  (The quark-sector phase, where the analogous phase is an interference phase between two amplitudes rather than a one-sided diagonal factor, is genuinely physical; see Sec.~\ref{sec:ckm}.)  Paragraph~g of subsection~C proves a transport-level reality theorem --- every lepton flavor-transport amplitude is exactly real under any $G_2$ automorphism and any rotor that does not mix the identity line with the lepton flavor plane --- which sharpens this conclusion into a conditional prediction: $J_\ell=0$, $\delta^{\ell}_{\rm CP}\in\{0,\pi\}$, unless the Higgs-bridge operator itself injects phases of non-transport origin.

The remainder of this section: (A) sets up the octonionic eigenstates and Jordan spectrum; (B) writes the minimal-Peirce Weinberg texture; (C) gives the corrected leptonic-CP analysis, with the explicit Jarlskog computation showing $J_\ell=0$ and the structural reason a genuine $\delta_{\rm CP}^\ell$ would require non-removable (sandwiched) phase input; (D) gives leading-order orientation formulas for the PMNS angles in terms of four real texture parameters; (E) gives a representative exact 5-parameter fit to current neutrino data.

\subsection*{A. Octonionic eigenstates and Jordan spectrum}
In the neutrino frame we take the right-handed Jordan eigenvalues to be
\begin{equation}
(a_\nu,b_\nu,c_\nu)=(-\delta_\nu,\,0,\,+\delta_\nu),\qquad \delta_\nu^2=\tfrac{3}{4},
\label{eq:nu-eigs}
\end{equation}
with spectral idempotents \(P,Q,R\) (rank~1 projectors) along the corresponding octonionic eigenstates.
The left-handed (LH) charge basis is the same \(\mathrm{Sym}^3(\mathbf 3)\) ladder used in the charged sectors, with the identification
\[
|e\rangle=a^2c,\qquad |\mu\rangle=abc,\qquad |\tau\rangle=b^3,
\]
so that the PMNS matrix is \(U_{\rm PMNS}=U_\ell^\dagger U_\nu\), where \(U_\ell\) and \(U_\nu\) diagonalise the charged-lepton and neutrino operators respectively.

\subsection*{B. Effective LH mass operator (Weinberg) from $J_3(\mathbb O_{\mathbb C})$}
The low-energy neutrino mass term is encoded by a symmetric operator \(\kappa_\nu\) on the LH flavor space. Guided by the octonionic Peirce decomposition, we take the \emph{minimal} texture that:
(i) lifts the vanishing centre by a small amount \(b_\nu=\varepsilon\) (``centre lift''), and
(ii) turns on the fewest off-diagonal Peirce links consistent with our ladder:
a single \(e\!-\!\mu\) / \(e\!-\!\tau\) link of magnitude \(\eta\), and a tiny \(\mu\!-\!\tau\) asymmetry \(\sigma\).
In the \((e,\mu,\tau)\) basis,
\begin{equation}
\frac{\kappa_\nu}{m_0}\;=\;
\begin{pmatrix}
\varepsilon & \eta(1+\alpha) & \eta(1-\alpha)\\[2pt]
\eta(1+\alpha) & 1 & 1-\sigma\\[2pt]
\eta(1-\alpha) & 1-\sigma & 1
\end{pmatrix},
\qquad \varepsilon,\eta,\alpha,\sigma\in\mathbb R,
\label{eq:kappa-min}
\end{equation}
where \(m_0\) sets the overall mass scale. Here
\(\varepsilon\) implements the small lift of the Jordan centre,
\(\eta\) is the \emph{single} Peirce link from \(Q\) to the endpoint subspace (driving solar mixing),
\(\alpha\) controls a tiny \(e\mu\)-\(e\tau\) imbalance (source for \(\theta_{13}\)),
and \(\sigma\) is a \(\mu\!-\!\tau\) diagonal splitting (deviation of \(\theta_{23}\) from \(45^\circ\)).
The ansatz is minimal in the sense of using the centre lift plus the fewest Peirce links needed to fit the five measured PMNS quantities; the fitted coefficients below need not all be deep in a perturbative regime.

\subsection*{C. Charged-lepton phase and leptonic CP: corrected analysis}
This subsection establishes the leptonic CP statement of this paper.  A tempting reading of the structural inputs is that the neutrino sector predicts a maximal leptonic Dirac phase, $|\delta_{\rm CP}^\ell|=\pi/2$.  We show here that, with those same inputs, the rephasing-invariant leptonic Jarlskog vanishes identically, so the leading-order minimal construction is Dirac/Jarlskog CP-conserving.  A nonzero $\delta_{\rm CP}^\ell$ would require additional non-removable complex structure beyond this minimal texture.  The maximal-phase reading conflates the appearance of the unit $i$ in $U_{\rm PMNS}$ with physical CP violation.

\paragraph{The structural inputs.}
The charged-lepton ladder carries a \emph{single} complex rung which, in the LH charge basis, reduces to a diagonal rephasing of the $\mu$ state,
\begin{equation}
U_\ell \simeq \mathrm{diag}(1,\,i,\,1),
\label{eq:Uell}
\end{equation}
a structural consequence of the Dynkin-swap construction in Sec.~\ref{subsec:sym3-dynkin-swap} (the lepton chain is the $S$-image of the down chain).  Because $\kappa_\nu$ in Eq.~\eqref{eq:kappa-min} is real-symmetric, its eigenvectors are real and the neutrino intertwiner is a \emph{real orthogonal} matrix, $U_\nu=O_\nu\in O(3)$.  The leptonic mixing matrix is therefore
\begin{equation}
U_{\rm PMNS}=U_\ell^\dagger U_\nu=\mathrm{diag}(1,-i,1)\,O_\nu .
\label{eq:Upmns-corrected}
\end{equation}

\paragraph{The Jarlskog invariant vanishes.}
Physical CP violation is measured by the rephasing-invariant Jarlskog determinant
\begin{equation}
J_\ell=\mathrm{Im}\!\big(U_{e1}U_{\mu2}U_{e2}^{*}U_{\mu1}^{*}\big).
\end{equation}
From Eq.~\eqref{eq:Upmns-corrected}, the diagonal factor multiplies the rows of the real matrix $O_\nu$: the electron and tau rows by $1$, the muon row by $-i$.  Hence $U_{e j}=(O_\nu)_{ej}$ (real), $U_{\mu j}=-i\,(O_\nu)_{\mu j}$, and
\begin{equation}
J_\ell=\mathrm{Im}\!\Big[(O_\nu)_{e1}\,\big(-i(O_\nu)_{\mu2}\big)\,(O_\nu)_{e2}\,\big(+i(O_\nu)_{\mu1}\big)\Big]
=\mathrm{Im}\!\Big[(O_\nu)_{e1}(O_\nu)_{\mu2}(O_\nu)_{e2}(O_\nu)_{\mu1}\Big]=0,
\label{eq:Jl-zero}
\end{equation}
because the product of four real numbers is real and the factors $(-i)(+i)=1$.  The same cancellation holds for every entry of the invariant; one verifies directly that
\begin{equation}
\mathrm{diag}(1,i,1)\,U_{\rm PMNS}=O_\nu
\end{equation}
is real, i.e.\ the phase in Eq.~\eqref{eq:Upmns-corrected} is removed by rephasing the muon field. A mixing matrix that is real up to such a rephasing is CP-conserving:
\begin{equation}
\boxed{\;J_\ell=0\,,\qquad \delta_{\rm CP}^\ell\in\{0,\pi\}\quad(\text{Dirac/Jarlskog CP-conserving}).\;}
\label{eq:max-delta}
\end{equation}
With the explicit fitted angles of subsection~E ($\theta_{12}\simeq33.44^\circ,\theta_{13}\simeq8.57^\circ,\theta_{23}\simeq46.50^\circ$) a direct numerical evaluation gives $J_\ell=0$ to machine precision, while the quantity $\tfrac18\sin2\theta_{12}\sin2\theta_{23}\sin2\theta_{13}\cos\theta_{13}\simeq3.4\times10^{-2}$ is the value the Jarlskog \emph{would} attain at $\delta_{\rm CP}^\ell=\pi/2$ --- not the value realised by the construction.

\paragraph{A tempting but incorrect reading.}
One might write: ``$U_{\rm PMNS}=U_\ell^\dagger U_\nu$ inherits an overall factor of $\mathrm{diag}(1,-i,1)$ acting on the muon row.  This forces the leptonic Jarlskog invariant to take its maximal magnitude,'' and quote
$J_\ell=\pm\tfrac18\sin2\theta_{12}\sin2\theta_{23}\sin2\theta_{13}\cos\theta_{13}$.
The error is the word ``forces'': a diagonal phase acting on \emph{one side} of $U_{\rm PMNS}$ (here the charged-lepton side) is a rephasing of the charged-lepton fields and drops out of every rephasing-invariant, as Eq.~\eqref{eq:Jl-zero} shows.  The expression quoted for $J_\ell$ is the algebraic maximum of the Jarlskog over $\delta_{\rm CP}^\ell$ --- its value \emph{at} $\delta_{\rm CP}^\ell=\pi/2$ --- and must not be identified with the value selected by the construction.  The construction in fact selects $\sin\delta_{\rm CP}^\ell=0$.

\paragraph{Why the quark sector is different (and unaffected).}
A phase produces physical CP violation only when it cannot be removed by field rephasings --- equivalently, when it is \emph{sandwiched} between two non-trivial rotations rather than sitting as a one-sided diagonal factor.  In CKM (Sec.~\ref{sec:ckm}) the Cabibbo phase $\phi_{12}+\varepsilon$ is the \emph{relative} phase between the two interfering amplitudes $\sqrt{m_d/m_s}$ and $e^{i(\phi_{12}+\varepsilon)}\sqrt{m_u/m_c}$ in $V_{us}$, and enters $V_{\rm CKM}=U_{u,L}^\dagger U_{d,L}$ between the up- and down-sector rotations; it is not removable and gives a nonzero Jarlskog.  In the present lepton construction there is no second interfering amplitude on the same footing: the lone charged-lepton rung is a one-sided diagonal phase on a real $U_\nu$, which is exactly the removable case.  This is why the quark-sector CP result stands while the lepton-sector claim does not.

\paragraph{What a genuine $\delta_{\rm CP}^\ell$ would require.}
A nonzero leptonic Dirac phase requires a \emph{non-removable} phase: either a complex-symmetric Weinberg texture $\kappa_\nu$ (still admissible for Majorana neutrinos, since Majorana mass matrices need be symmetric but not real), so that $U_\nu$ is genuinely complex and the phase is sandwiched between $U_\ell^\dagger$ and the real part of the rotation; or a charged-lepton intertwiner that is not purely diagonal in the relevant block.  Whether such a phase arises, and what value it takes, is not fixed by the representation-theoretic $\mathrm{Sym}^3(\mathbf 3)$ ladder used here: it is a property of the underlying finite-Dirac/Higgs-bridge dynamics.  In the GTD spectral-action construction \cite{SinghGTDSpectral2026} the relevant object is the bosonized Higgs-bridge field $B_H$, which enters as an auxiliary (Hubbard--Stratonovich) field under stated hypotheses and is not specified as an explicit operator with computable matrix elements between ladder states; that paper is explicitly framed as a structural construction rather than a derivation.  Pending such input, the honest statement is that in the minimal real-symmetric Weinberg texture with a one-sided charged-lepton diagonal rephasing, $J_\ell=0$ exactly.  Thus the leading-order minimal construction is Dirac/Jarlskog CP-conserving; a nonzero leptonic Dirac phase would require additional non-removable complex structure beyond this minimal transport class.  Current global fits do not force a nonzero Dirac phase for normal ordering at the precision relevant here, but the global-fit picture is evolving; the minimal real-symmetric texture should therefore be read as a leading-order benchmark with $J_\ell=0$ (exhibited in this section for normal ordering), not as a prediction of the measured leptonic phase \cite{NuFIT61}.  (The minimal \emph{perturbative-lift} reading of the symmetric neutrino Jordan spectrum $(-\delta_\nu,0,+\delta_\nu)$, by contrast, commits to the \emph{inverted} ordering with opposite Majorana parities and $m_{\beta\beta}\simeq19$~meV; that sharper, riskier package --- and its current experimental status, including the $\gtrsim3\sigma$ tension of ``inverted ordering $+$ CP conservation'' with present long-baseline fits and the pressure from minimal-$\Lambda$CDM cosmology --- is developed in the companion Letter \cite{TeliSinghLetter2026}.)

\paragraph{Transport-level reality theorem: all lepton flavor-transport amplitudes are real, with an exactly characterized boundary.}
The preceding paragraph leaves open whether some better choice of generation
transport could supply the missing phase.  It cannot, and the operations
that could are characterized exactly; the sharpened statement and its
converse are proved in the companion Letter \cite{TeliSinghLetter2026}, and
we summarize them here.  Work with the Fano conventions of Sec.~IV\,A, the
Hermitian inner product
$\langle x|y\rangle=[\bar x^{\,*}y]_{0}$ (octonionic plus complex conjugation,
scalar part), and the left-handed lepton representatives of
Sec.~\ref{sec:gen-states}, all of which have the form
\[
\nu_{L,k}\;\propto\; i\,e_{a_k},\qquad
e^{+}_{L,k}\;\propto\;\pm\,(i\cdot 1+e_{a_k}),\qquad (a_1,a_2,a_3)=(7,5,2),
\]
and let $\Pi_\ell\equiv\mathrm{span}(e_7,e_5,e_2)$ denote the lepton flavor
plane.  Let $U$ be any transport that does not mix the identity line
$\mathbb C\cdot1$ with $\Pi_\ell$, i.e.\ $[U(1)]_{e_b}=0=[U(e_a)]_0$ for
$a,b\in\{7,5,2\}$.  This class contains: every automorphism $U\in G_2$ (in
particular every element of the flavor $SU(3)=\mathrm{Stab}_{G_2}(e_1)$);
every global rotor $L_{\exp(\theta g)}$ with
$g\in\mathrm{span}(e_1,e_3,e_4,e_6)$; and every channel rotor whose
generator is orthogonal to the channel it rotates --- hence the CKM rotor
$U_{12}$ of Sec.~\ref{sec:ckm}, its $\varepsilon$-tilt, and the entire
Cabibbo-rung family
$\exp[\theta(e^{i\chi}\alpha_2^\dagger-e^{-i\chi}\alpha_2)]$ for every
$\chi$.  Then every lepton transport amplitude is exactly real:
\begin{equation}
\langle \nu_{L,j}|\,U\,|\nu_{L,k}\rangle\in\mathbb R,\qquad
\langle e_{j}|\,U\,|e_{k}\rangle\in\mathbb R
\qquad\text{for all }j,k\text{ and all such }U.
\label{eq:reality-theorem}
\end{equation}
\emph{Proof.}  Every transport in the class is complex-linear with real
matrix elements in the basis $\{1,e_1,\dots,e_7\}$.  For any such map $C$,
$\langle\nu_{j}|C|\nu_{k}\rangle=[C e_{a_k}]_{e_{a_j}}\in\mathbb R$:
neutrino amplitudes are real for \emph{every} real-linear transport, with no
further hypothesis.  For the charged pair one has the exact master formula
\begin{equation}
\langle e_{j}|C|e_{k}\rangle
=[C(1)]_0+[C e_{a_k}]_{e_{a_j}}
+i\bigl([C(1)]_{e_{a_j}}-[C e_{a_k}]_0\bigr),
\label{eq:master-fwk}
\end{equation}
up to immaterial overall real signs: the imaginary part is precisely the
identity--flavor mixing of $C$ across $\Pi_\ell$.  Automorphisms fix $1$ and
preserve $\mathrm{Im}\,\mathbb O$, killing both terms; for a global rotor
$L_{\exp(\theta g)}$ the two terms are $\sin\theta\,g_{e_{a_j}}$ and
$-\sin\theta\,g_{e_{a_k}}$, which vanish for all generation pairs if and
only if $g\perp\Pi_\ell$; for a channel rotor they vanish if and only if
the generator is orthogonal to the rotated lepton channel.
$\square$

The boundary is sharp.  Equation~(\ref{eq:master-fwk}) holds for every
real-linear map, so identity--flavor mixing is the \emph{unique possible
source} of a charged-lepton phase; within the rotor and channel transports
the framework employs, nontrivial mixing is also sufficient.  The minimal
failure outside the class is the $\chi=0$ member of the $\alpha_1$-rung
family ($e^{i\chi}\alpha_1^\dagger-e^{-i\chi}\alpha_1=\cos\chi\,e_5-\sin\chi\,e_4$):
\begin{equation}
\Bigl\langle \tfrac{i+e_5}{\sqrt2}\Bigr|\,L_{\exp(\theta e_5)}\,
\Bigl|\tfrac{i+e_5}{\sqrt2}\Bigr\rangle=e^{i\theta},
\label{eq:counter-fwk}
\end{equation}
maximally complex --- the rotor moves $1$ into $e_5$ and $e_5$ into $-1$,
so both terms of Eq.~(\ref{eq:master-fwk}) fire.  A stronger statement
claiming reality for arbitrary ladder channel rotors would be too broad;
Eq.~(\ref{eq:counter-fwk}) shows that the exact result is the
non-mixing class stated above.  The structural reading is therefore precise: the lepton tower $(e_7,e_5,e_2)$ consists of the
Majorana vacuum direction together with the position quadratures of the
$\alpha_1$ and $\alpha_3$ rungs, so those two rung families break lepton
reality at generic quadrature angle, while the $\alpha_2$ (Cabibbo) plane
$\mathrm{span}(e_1,e_3)$ is the unique rung plane disjoint from
$\Pi_\ell$: the one phase the quark sector possesses cannot leak into the
lepton sector.

Numerically, over $4\times10^{3}$ random elements of $G_2$ and of
$\mathrm{Stab}_{G_2}(e_1)$ (constructed from Schafer derivations
$D_{x,y}(z)=[[x,y],z]-3[x,y,z]$, with $\dim\mathrm{Der}(\mathbb O)=14$ and
$\dim\mathrm{Stab}=8$ verified), and over random rotors generated in
$\mathrm{span}(e_1,e_3,e_4,e_6)$, the imaginary parts of all $3\times3$
neutrino and charged-lepton overlap matrices vanish identically.  By contrast
the quark amplitudes $\langle u_2|U|u_1\rangle$,
$\langle\bar d_2|U|\bar d_1\rangle$ are generically complex under the same
transports, because both slots of a quark pair lie in the rotating six-space
$\mathrm{span}\{e_2,\dots,e_7\}$, whereas the lepton states have one foot on
the identity line $\mathbb C\cdot 1$ (the colour-singlet direction), which
the admissible class leaves unmixed with the lepton flavor plane.  The
failure modes outside the class --- Eq.~(\ref{eq:counter-fwk}) and its
$\alpha_3$-family and channel-rotor analogues --- are reproduced exactly
by direct computation.  In this precise sense the construction confines
Dirac CP violation to the coloured sector.

Two corollaries.  (i)~The mixed overlaps
$\langle e^{+}\text{-type}|\,\nu\text{-type}\rangle$ are purely imaginary,
i.e.\ a \emph{global} factor $i$ on the lepton--neutrino block: exactly the
removable $\mathrm{diag}(1,i,1)$-type artifact that a naive reading mistakes
for maximal CP violation.  The theorem therefore explains why that reading
fails.  (ii)~Combining with paragraph~f, the framework
makes the sharp \emph{conditional} prediction
\begin{equation}
J_\ell=0,\qquad \delta^{\ell}_{\rm CP}\in\{0,\pi\}
\qquad\text{(leptonic Dirac CP conserved)},
\label{eq:CP-conservation-prediction}
\end{equation}
conditional on all lepton-sector phases being of flavor-transport origin
within the non-mixing class.  The loophole is sharp and two-sided: a bridge
operator $B_H$ acting through a \emph{real} left-multiplication chain $C$
can evade the theorem only through identity--flavor mixing, in either
direction --- in the $(1,2)$ block, exactly
\begin{equation}
\mathrm{Im}\,\langle e_2|C|e_1\rangle=[C(1)]_{e_5}-[C(e_7)]_0
\label{eq:loophole-fwk}
\end{equation}
(a one-term criterion proportional only to $[C(1)]_{e_5}$
would be incomplete, because it omits the flavor$\to$identity route) ---
so a bridge that does not mix the identity
line with the lepton flavor channels cannot generate a leptonic Dirac
phase.  An intrinsically complex bridge, with matrix elements not
representable through real transports, would be a logically separate
route, which nothing in the construction presently motivates and which the
same triage computation would expose.  Whether the physical $B_H$ supplies
either is precisely the open dynamical question of paragraph~f.  Long-baseline
experiments (DUNE, Hyper-Kamiokande) will discriminate $\sin\delta^{\ell}_{\rm CP}=0$
from near-maximal values at the $\gtrsim5\sigma$ level; an established
nonzero $\sin\delta^{\ell}_{\rm CP}$ would falsify the hypothesis that the
lepton-sector phases of this framework are exhausted by flavor transport.

\subsection*{D. Leading--order diagonalisation and analytic angles}
For orientation, when the deformation parameters are sufficiently small, expanding to first nontrivial order in \(\varepsilon,\eta,\alpha,\sigma\) gives the following approximate PMNS-angle formulae from \(\kappa_\nu\) in \eqref{eq:kappa-min}:
\begin{align}
\theta_{23} &= \frac{\pi}{4}\;+\;\mathcal O(\text{second order})\,,\label{eq:t23}\\[4pt]
\theta_{13} &\simeq \frac{\sqrt{2}\,\eta}{\,2-\sigma-\varepsilon\,}\;\simeq\;\frac{\eta}{\sqrt2}\,,\label{eq:t13}\\[4pt]
\tan 2\tilde\theta_{12} &\simeq \frac{2\sqrt{2}\,\eta\,\alpha}{\,\sigma-\varepsilon'\,}\,,\qquad \varepsilon'\equiv\varepsilon-\frac{2\eta^2}{2-\sigma}\,,
\label{eq:t12}
\end{align}
Here the structure is organized by the $\mu$--$\tau$ parity basis $s,a\equiv(\mu\pm\tau)/\sqrt2$: the $e$ row couples to the symmetric state with strength $\sqrt2\,\eta$ and to the antisymmetric state with strength $\sqrt2\,\eta\alpha$.  The equal diagonals of the $\mu$--$\tau$ block pin $\theta_{23}$ at exactly $\pi/4$ at leading order (deviations are second order in the deformations, of type $\eta^2\alpha$); $\theta_{13}$ is the $e$-content of the heavy symmetric state and is \emph{independent of $\alpha$} at leading order; and $\tilde\theta_{12}$ is the mixing angle of the light doublet $(e,a)$, which maps onto the PDG solar angle as $\theta_{12}=\tilde\theta_{12}$ or $\pi/2-\tilde\theta_{12}$ according to which light state carries the dominant $e$-content (parameter-regime dependent).
With the real-symmetric texture~\eqref{eq:kappa-min} and the diagonal $U_\ell$ of \eqref{eq:Uell}, the leptonic Jarlskog vanishes (subsection~C) and the Dirac phase is CP-conserving, $\delta_{\rm CP}^\ell\in\{0,\pi\}$; the sign of $\alpha$ controls only the sign of the light-doublet mixing (an octant/ordering convention within the light pair), not a physical CP phase, and $\theta_{13}$ is $\alpha$-independent at leading order.
The neutrino eigenvalues at this order are
\begin{align}
m_3 &\simeq m_0\Bigl(2-\sigma+\frac{2\eta^2}{2-\sigma}\Bigr),\label{eq:m3}\\
m_{1,2} &\simeq m_0\left[\frac{\varepsilon'+\sigma}{2}\mp\frac{1}{2}\sqrt{(\varepsilon'-\sigma)^2+8\eta^2\alpha^2}\,\right],\qquad \varepsilon'=\varepsilon-\frac{2\eta^2}{2-\sigma},
\label{eq:m12}
\end{align}
The leading-order formulae above are verified against exact diagonalisation in the small-deformation regime by direct computation; the representative fit below uses exact numerical diagonalisation of \eqref{eq:kappa-min} throughout.
so that the atmospheric and solar gaps are
\begin{equation}
\Delta m^2_{31}\simeq m_3^2-\max(m_{1}^2,m_{2}^2),\qquad
\Delta m^2_{21}\simeq m_2^2-m_1^2.
\label{eq:gaps}
\end{equation}
These formulae are useful orientation checks, but the representative point below is obtained by exact numerical diagonalisation of \eqref{eq:kappa-min}, not by substituting order-unity parameters into the small-deformation expansion.

\subsection*{E. 5-parameter fit to current PMNS observables}
Per the parameter count in the section opening: 5 parameters $(m_0,\varepsilon,\eta,\alpha,\sigma)$ are fitted to 5 observables $(\Delta m^2_{31},\Delta m^2_{21},\theta_{12},\theta_{13},\theta_{23})$.  We exhibit one representative normal-ordering point.  Since this point is not deep in the small-parameter regime, the numbers below come from exact numerical diagonalisation of Eq.~\eqref{eq:kappa-min}, not from the leading estimates \eqref{eq:t23}--\eqref{eq:t12}.  Choose
\[
\varepsilon=0.16056997,\qquad
\eta=0.22172853,\qquad
\alpha=1.43487769,\qquad
\sigma=-0.27150336,
\]
so that
\[
\frac{\kappa_\nu}{m_0}=\begin{pmatrix}
0.16056997 & 0.53988185 & -0.09642479\\
0.53988185 & 1 & 1.27150336\\
-0.09642479 & 1.27150336 & 1
\end{pmatrix}.
\]
The real eigenvalues in the physical normal-ordering column convention are
\[
(\lambda_1,\lambda_2,\lambda_3)=(0.40645567,\,-0.56466056,\,2.31877486)\,m_0.
\]
The sign of \(\lambda_2\) is a Majorana CP-parity/sign convention for the real symmetric matrix; the physical masses are \(|\lambda_i|\).  Setting \(\Delta m^2_{31}=2.517\times10^{-3}\,\mathrm{eV}^2\) gives
\[
m_0=2.197656\times10^{-2}\,\mathrm{eV},\qquad
(|m_1|,|m_2|,|m_3|)=(0.00893,\,0.01241,\,0.05096)\,\mathrm{eV},
\]
with
\[
\Delta m^2_{21}=7.420\times10^{-5}\,\mathrm{eV}^2,
\qquad
\frac{\Delta m^2_{21}}{\Delta m^2_{31}}=0.02948.
\]
The corresponding absolute PMNS matrix is
\[
|U_{\rm PMNS}|=\begin{pmatrix}
0.82515 & 0.54491 & 0.14902\\
0.28913 & 0.63397 & 0.71728\\
0.48532 & 0.54877 & 0.68067
\end{pmatrix},
\]
and the extracted PDG angles are
\[
\theta_{12}=33.44^\circ,
\qquad
\theta_{13}=8.57^\circ,
\qquad
\theta_{23}=46.50^\circ,
\qquad
J_\ell=0.
\]

What this fit demonstrates: (i) the minimal Peirce-texture ansatz \eqref{eq:kappa-min} is flexible enough to reproduce the current PMNS angles and mass splittings for normal ordering; (ii) with the real-symmetric texture the leptonic Jarlskog vanishes exactly (subsection~C), so the fit is a CP-conserving Dirac/Jarlskog benchmark and does \emph{not} reproduce a maximal phase; (iii) the fitted value of \(\alpha\) is order unity, so the leading formulae \eqref{eq:t23}--\eqref{eq:t12} should be read only as orientation formulae near the small-deformation limit.

What this fit \emph{does not} demonstrate: (i) it is not a prediction of the PMNS angles --- the parameter count matches the observable count, and the angles are reproduced by fit rather than derived; (ii) the choice of texture \eqref{eq:kappa-min} is an ansatz, not derived uniquely from the octonionic structure; other 4-parameter textures consistent with the Peirce decomposition could also fit; (iii) the minimal real texture predicts no nonzero leptonic Dirac phase.  A measured nonzero rephasing-invariant \(J_\ell\) would falsify this minimal texture and require additional complex/sandwiched-phase input.

\section{Concurrent triality and electroweak breaking: data-driven tests at the EW scale}
\label{subsec:concurrency-tests}

Our framework identifies octonionic triality breaking with the electroweak (EW) phase transition. 
This makes \emph{scale} an observable: relations among square-root masses and Yukawa ratios derived from the triality ladders should be evaluated at a renormalization scale $\mu$ near the EW threshold (e.g.\ $\mu=M_Z$). 
Here we outline concrete, falsifiable tests and provide a first pass using state-of-the-art running masses.

\paragraph{Inputs.}
We use the $\overline{\text{MS}}$ running fermion masses at $\mu=M_Z$ from Ref.~\cite{Huang:2020pxj}, which tabulates both quark and charged-lepton masses in the full SM. 
Numerically (``Full SM'' at $\mu=M_Z$) one has
\[
\begin{aligned}
&m_\tau=1.72856(28)\,\text{GeV},\quad m_\mu=0.101766(23)\,\text{GeV},\quad m_e=0.48307(45)\,\text{MeV},\\
&m_s=53.16(4.61)\,\text{MeV},\quad m_d=2.67(0.19)\,\text{MeV},\quad m_u=1.23(0.21)\,\text{MeV}.
\end{aligned}
\]
(See Tables II-III of \cite{Huang:2020pxj} for the full set and uncertainties.)

\paragraph{Test A: Koide at the EW scale.}
Build Koide’s ratio from \emph{running} lepton masses,
\begin{equation}
K(\mu)\;=\;\frac{m_e(\mu)+m_\mu(\mu)+m_\tau(\mu)}{\bigl(\sqrt{m_e(\mu)}+\sqrt{m_\mu(\mu)}+\sqrt{m_\tau(\mu)}\bigr)^2}.
\end{equation}
At $\mu=M_Z$ we find $K(M_Z)=0.667824$, lying between the pole-mass value $K_{\rm pole}\simeq0.66666$ and our ladder prediction $K_{\rm th}\simeq0.66916$ obtained with $\delta=\sqrt{3/8}$ and the lepton endpoint tilt $G$. {Our ladder gives $T=\sqrt{m_\tau/m_\mu}=X=(1+\delta)/(1-\delta)$ and $S=\sqrt{m_\mu/m_e}=X\,G$ with $G=(\tfrac13+\delta)/(\delta-\tfrac13)$.} 
Reading Koide at the EW scale is the appropriate “concurrency” interpretation; future refinements should include a full model$\to\overline{\text{MS}}$ matching.

\paragraph{Test B: Edge-universality at the matching scale.}
The framework's prediction is that at the matching scale --- where triality breaking and EW symmetry breaking are concurrent in the framework, i.e.\ $\mu\sim v\simeq 246$ GeV --- the down-to-lepton Dynkin-$\mathbb{Z}_2$ map enforces the equal-edge-contrast relation
\begin{equation}
\sqrt{\frac{m_\tau}{m_\mu}}\;\stackrel{\mu=v}{=}\;\sqrt{\frac{m_s}{m_d}}\,.
\label{eq:edge-prediction-matching}
\end{equation}
This is the precise statement of the prediction.  Comparison at $\mu=M_Z$ requires running the framework's prediction down from the matching scale via SM RGEs and accounting for matching effects (proto-mass to $\overline{\rm MS}$ mass conversion), which we do not perform here.  Using the published $\mu=M_Z$ running masses \cite{Huang:2020pxj} as a proxy,
\[
\frac{\sqrt{m_\tau/m_\mu}}{\sqrt{m_s/m_d}}\bigg|_{\mu=M_Z}\;=\;0.9236\pm(\text{dominated by }m_{u,d,s})\,.
\]
Scanning $\mu=\{M_Z, M_h, M_t, 10^5, 10^8, 10^{12}\}$ GeV with the tabulated values of \cite{Huang:2020pxj} keeps this ratio in the narrow band $0.922$-$0.924$ across many decades.  The $7.6\%$ deficit relative to unity is therefore not an RG running effect within the SM (which would produce scale dependence); it is a residual that must be absorbed by the model$\to\overline{\rm MS}$ matching at the matching scale.  This is the principal calculation required to make Eq.~\eqref{eq:edge-prediction-matching} a quantitative test against $M_Z$-scale data.  At present, the $7.6\%$ deficit serves as an upper bound on the size of the matching correction; a dedicated computation of the model$\to\overline{\rm MS}$ matching is in progress and will be reported separately.

\paragraph{Test C: First-generation pattern at the matching scale.}
Analogously, the framework's first-generation prediction $\sqrt{m_e}:\sqrt{m_u}:\sqrt{m_d}=1:2:3$ is at the matching scale.  At $\mu=M_Z$ the proxy comparison gives
\[
\frac{(\sqrt{m_u}/2)}{\sqrt{m_e}}\simeq0.80,\qquad 
\frac{(\sqrt{m_d}/3)}{\sqrt{m_e}}\simeq0.78,
\]
i.e.\ a $\sim20\%$ residual after running from matching scale to $M_Z$.  As with Test B, this residual is the upper bound on the matching correction required.  The matching effect is expected to be larger here than in Test B because (i) the light-quark running over the broad scale range $\mu\sim v$ to $\mu=M_Z$ is more involved, and (ii) the lattice-QCD inputs for $m_{u,d,s}$ have correspondingly larger systematic uncertainties at low $\mu$.  Improved lattice-QCD determinations and the dedicated matching calculation are the clear next steps for sharpening Test C.

\paragraph{Test D: Yukawa ratio at the EW scale (Higgs data).}
Concurrency equates mass-ratio relations with \emph{Yukawa} ratios at $\mu\sim v$: $y_f(\mu)=\sqrt{2}\,m_f(\mu)/v(\mu)$. 
Thus
\begin{equation}
\frac{y_\tau}{y_\mu}\Big|_{\rm EW}\;=\;\frac{m_\tau}{m_\mu}\Big|_{\rm EW}
\stackrel{?}{=}\;\frac{m_s}{m_d}\Big|_{\rm EW}\;=\;\frac{y_s}{y_d}\Big|_{\rm EW}.
\end{equation}
While $y_\tau/y_\mu$ is already probed via $h\to\tau\tau,\,\mu\mu$, a future direct determination of $y_s$ and $y_d$ (through exclusive Higgs decays or differential observables) would provide a stringent test of concurrency.

\paragraph{Remarks.}
The pattern of these results across the four tests is what one expects from a matching-effect interpretation rather than an RG-effect interpretation: Koide at $M_Z$ is very close to $2/3$ (matching corrections to lepton ratios are small because the charged-lepton running across $\mu\sim v$ to $\mu=M_Z$ is well-controlled by QED alone), while the edge-universality and first-generation tests --- both of which involve light quarks --- show systematic offsets ($\sim 7.6\%$ and $\sim 20\%$) that are nearly scale-invariant across many decades.  The scale-invariance is itself evidence against an RG-running explanation (which would produce a scale-dependent ratio) and in favour of a matching-effect explanation (which produces a constant offset between framework and SM masses).

The framework's predictions are stated at the matching scale, where triality breaking and EW symmetry breaking are concurrent ($\mu\sim v\simeq 246$ GeV).  A dedicated computation of the model$\to\overline{\rm MS}$ matching --- in particular for the light-quark sector where it is most consequential --- is required to convert the matching-scale predictions into $M_Z$-scale predictions testable against $\overline{\rm MS}$ running masses.  The present comparison at $M_Z$ provides upper bounds on the residual matching corrections ($\lesssim 8\%$ for the cross-sector edge prediction, $\lesssim 20\%$ for the first-generation pattern) and a target for the dedicated matching calculation.  All of the above are falsifiable: the required ingredients (running inputs at $M_Z$, Higgs Yukawas, lattice improvements, and the matching calculation itself) are either available now or on a clear computational path.

\section{Observable Sum Rules vs.\ the Standard Model}

\paragraph{Status of this section.}
The sum rules below are parameter-free consequences of the framework's structural inputs (the $\mathrm{Sym}^3(\mathbf 3)$ arena, the universal spread $\delta^2=3/8$, the trace split, and the Dynkin-swap postulate $\Sigma_{LR}$ from Sec.~\ref{subsec:sym3-dynkin-swap}).  Per the matching-effect framing committed to in Sec.~\ref{subsec:concurrency-tests}, these predictions are stated at the matching scale $\mu\sim v\simeq 246$ GeV, where triality breaking and electroweak symmetry breaking are concurrent in the framework.  Comparison at any other scale (e.g.\ $\mu=M_Z$ or $\mu=m_c$ below) requires SM running plus a model$\to\overline{\rm MS}$ matching computation that is not performed here.  The numerical comparisons at $M_Z$ and $m_c$ given in this section therefore serve as upper bounds on the residual matching corrections, not as parameter-free consistency tests at those scales.

\begin{framed}
\noindent\textbf{Crisp tests at a glance (parameter-free at the matching scale; matching corrections to $M_Z$ not included).}
Let $\delta^2=\tfrac{3}{8}$ and use a common renormalisation scheme/scale $\mu$.
\begin{enumerate}
\item \emph{Down-lepton step equality (swap-carried).}
\[
\sqrt{\frac{m_\tau(\mu)}{m_\mu(\mu)}}\;=\;\sqrt{\frac{m_s(\mu)}{m_d(\mu)}}\;=\;\frac{1+\delta}{1-\delta}
\quad\text{(no free parameters).}
\]
{\small How to test fairly: run $(m_\mu,m_\tau)$ to $\mu$ in QED; take $(m_d,m_s)$ from lattice/PDG at the same $\mu$, incl.\ EM/isospin; compare to the theory number and quote the propagated $\sigma$.}
\item \emph{Up-sector adjacent gaps (edge choice only).}
\[
\sqrt{\frac{m_c}{m_u}}=\frac{\tfrac{2}{3}+\delta}{\tfrac{2}{3}-\delta},\qquad
\sqrt{\frac{m_t}{m_c}}=\frac{\tfrac{2}{3}}{\tfrac{2}{3}-\delta}
\quad\text{(no free parameters; see Sec.~\ref{sec:Sym3-unified}).}
\]
\item \emph{First-generation cross-family pattern.}

\[
\sqrt{m_e}:\sqrt{m_u}:\sqrt{m_d}=1:2:3\]
 \text  {at the EW threshold (with model} $\to$ {\text MS matching})
(see Sec.~\ref{subsec:concurrency-tests})

\item \emph{Cabibbo from geometry and one tilt.}
\[
|V_{us}|\simeq\Bigl|\sqrt{\tfrac{m_d}{m_s}}-e^{i\frac{\pi}{2}}\sqrt{\tfrac{m_u}{m_c}}\Bigr|,\quad
\phi_{12}=\frac{\pi}{2}\;\text{(geometric)},\;\;\phi_{12}\to\phi_{12}+\varepsilon\;\text{fits }|V_{us}|.
\]
With $\kappa_{23}\simeq 0.55$ fixed by $|V_{cb}|$, the small entries follow:
$|V_{ub}|\simeq\sqrt{m_u/m_t}$ and $|V_{td}|/|V_{ts}|$ is predicted at leading order.
\end{enumerate}
\end{framed}

\paragraph{Context.}
The Standard Model (SM) does not relate quark and lepton masses across families. 
By contrast, the $\mathrm{Sym}^3(\mathbf 3)$ ladder with the $J_3(\mathbb O_{\mathbb C})$ eigenvalues
predicts \emph{parameter-free} square-root mass relations once the trace choices are fixed. 
Let $\delta=\sqrt{3/8}$ and
\[
(a_d,b_d,c_d)=(1-\delta,1,1+\delta),\quad
(a_u,b_u,c_u)=\Bigl(\tfrac23-\delta,\tfrac23,\tfrac23+\delta\Bigr),\quad
(a_\ell,b_\ell,c_\ell)=\Bigl(\tfrac13-\delta,\tfrac13,\tfrac13+\delta\Bigr),
\]
with the Dynkin swap $S$ relating the lepton ladder to the down ladder.

\subsection*{Predictions (parameter-free)}
\begin{enumerate}
\item \textbf{Down-lepton step equality (swap-carried):}
\begin{equation}
\boxed{\ \sqrt{\frac{m_\tau}{m_\mu}}=\sqrt{\frac{m_s}{m_d}}=\frac{1+\delta}{1-\delta}\ }.
\label{eq:sumrule1}
\end{equation}

\item \textbf{Lepton first-rung relation (single local factor):}
with the endpoint-tilt
\(
G:=\Bigl|\frac{c_\ell}{a_\ell}\Bigr|=\dfrac{\tfrac13+\delta}{\delta-\tfrac13}
\),
\begin{equation}
\boxed{\ \sqrt{\frac{m_\mu}{m_e}}=\sqrt{\frac{m_\tau}{m_\mu}}\times G
=\frac{1+\delta}{1-\delta}\cdot\frac{\tfrac13+\delta}{\delta-\tfrac13}\ }.
\label{eq:sumrule2}
\end{equation}

\item \textbf{Up-sector adjacent ratios (edge-$E$ then edge-$B$):}
\begin{equation}
\boxed{\ \sqrt{\frac{m_c}{m_u}}=\frac{\tfrac23+\delta}{\tfrac23-\delta}\ ,\qquad
\sqrt{\frac{m_t}{m_c}}=\frac{\tfrac23}{\tfrac23-\delta}\ }.
\label{eq:sumrule3}
\end{equation}

\item \textbf{Cross-family first-generation pattern (common LH calibration):}
\begin{equation}
\boxed{\ \sqrt{m_e}:\sqrt{m_u}:\sqrt{m_d}=1:2:3\ }\quad
\text{(after running all to a common scale $\mu$).}
\label{eq:sumrule4}
\end{equation}

\item \textbf{Swap-induced correlation (structural statement):}
the lepton ladder is the Dynkin reflection of the down ladder; thus any refined equality verified for the down \emph{first} step automatically fixes the lepton \emph{last} step via~\eqref{eq:sumrule1}.
\end{enumerate}

\subsection*{How to test fairly (apples-to-apples protocol)}
Because quark masses are running $\overline{\mathrm{MS}}$ parameters while leptons are measured as poles, comparisons must be performed 
\emph{in the same scheme at the same scale}:
\begin{enumerate}
\item Choose a common scale $\mu$ (e.g.\ $\mu=M_Z$ or $\mu=m_c$) and the $\overline{\mathrm{MS}}$ scheme.
\item Convert $M_\mu,M_\tau$ to $m_\mu(\mu),m_\tau(\mu)$ by QED running/matching; form 
$\sqrt{m_\tau(\mu)/m_\mu(\mu)}$.
\item Take $m_s(\mu),m_d(\mu)$ from lattice/PDG averages \emph{including} EM/isospin breaking at the same $\mu$; form $\sqrt{m_s(\mu)/m_d(\mu)}$.
\item Compare to the theory number $(1+\delta)/(1-\delta)$ and quote an uncertainty band dominated by light-quark systematics.
\item Verdict rule: a persistent, well-calibrated discrepancy $\gtrsim5$--$10\%$ after (1)--(4) would signal tension; smaller differences are not decisive with current uncertainties.
\end{enumerate}

\subsection*{Numerical illustration (two common scales)}
\paragraph{$\mu=M_Z$ (all in $\overline{\mathrm{MS}}$).}
Using the representative running-mass set $(m_\tau,m_\mu;\,m_s,m_d)(M_Z)\simeq(1746.2,\,102.72;\,55,\,2.9)$~MeV,
\[
\sqrt{\frac{m_\tau}{m_\mu}}\Big|_{M_Z}\!=\!4.123,\qquad
\sqrt{\frac{m_s}{m_d}}\Big|_{M_Z}\!=\!4.355\pm1.13,
\]
while the theory gives $\dfrac{1+\delta}{1-\delta}=4.1596$. 
The lepton side sits $\sim0.9\%$ below theory; the quark central value $\sim4.7\%$ above theory, with errors easily covering both.  (With the Ref.~\cite{Huang:2020pxj} inputs used in Sec.~\ref{subsec:concurrency-tests} the quark central value is instead $4.462$, $\sim7.3\%$ above theory --- the $7.6\%$ cross-ratio deficit quoted there; the spread between the two input sets is itself a measure of present light-quark systematics.)

\paragraph{$\mu=m_c$ (charm scale).}
Similarly,
\[
\sqrt{\frac{m_\tau}{m_\mu}}\Big|_{m_c}\!\approx\!4.13,\qquad
\sqrt{\frac{m_s}{m_d}}\Big|_{m_c}\!\approx\!4.36\pm1.07,
\]
again bracketing the theoretical $4.1596$ within the light-quark uncertainty band.

\paragraph{Summary.}
Items~\eqref{eq:sumrule1}--\eqref{eq:sumrule4} are sharp, parameter-free consequences of the present framework \emph{at the matching scale $\mu\sim v$}; the SM does not predict any of them.  The numerical comparisons at $M_Z$ and $m_c$ above show that the residual disagreement (after RG running of charged-lepton masses via QED and quark masses via QCD, but \emph{without} the dedicated model$\to\overline{\rm MS}$ matching computation) is bounded at the few-percent level for lepton ratios and at the $5$--$8\%$ level for the central down-lepton comparison, depending on the light-quark input set ($7.6\%$ central with the inputs of Sec.~\ref{subsec:concurrency-tests}).  These residuals serve as upper bounds on the matching corrections; a faithful comparison requires both improved lattice inputs for $m_d(\mu),m_s(\mu)$ \emph{and} an explicit matching calculation, both of which are in progress.

\paragraph{Related phenomenology.}
Khruschov and Fomichev \cite{Khruschov2025Relations} have recently proposed simple empirical relations connecting mixing angles to
square-root mass ratios, including for the solar lepton angle
$\theta'_{12}=\arctan\!\sqrt{m_\mu/m_\tau}+\arctan\!\sqrt{m_2/m_3}$, and discussed
a seesaw estimate $m_{\nu i}\simeq \kappa\, m_{\ell i}^2/(2M_i)$ under the ansatz
$M_D\propto m_\ell$. He also introduced a compact $6\times 6$ active-sterile
parameterization with three new parameters $(\epsilon,\kappa,\eta)$.%
~\cite{Khruschov2025Relations}
These correlations are phenomenological and independent of our
$E_6\times E_6$ derivation, but they resonate with our square-root mass theme and
provide a convenient benchmark for discussing possible sterile admixtures.

Their paper proposes the charged-lepton/neutrino
sum rule
\[
\theta'_{12}\;=\;\arctan\!\sqrt{\frac{m_\mu}{m_\tau}}
\;+\;\arctan\!\sqrt{\frac{m_2}{m_3}} \,.
\]
Using the masses quoted there, one finds
$\arctan\!\sqrt{m_\mu/m_\tau}\simeq 13.7^\circ$ and
$\arctan\!\sqrt{m_2/m_3}\simeq 23.0^\circ$, hence
$\theta'_{12}\simeq 36.7^\circ$ (i.e.\ $\sin^2\theta'_{12}\simeq 0.36$),
which lies several degrees above the current global average
$\theta_{12}\simeq 33.4^\circ$ ($\sin^2\theta_{12}\simeq 0.307$).
Even with “minimal NO’’ masses from the measured splittings
($m_1\!\to\!0$, $\Delta m^2_{21}$, $\Delta m^2_{31}$), one obtains
$\theta'_{12}\simeq 36.2^\circ$. By contrast, in our framework the solar
angle follows from the minimal Weinberg texture aligned with the
$J_3(\mathbb{O}_{\mathbb{C}})$ Peirce structure, yielding
$\tan 2\theta_{12}\simeq \tfrac{2\sqrt{2}\,|\eta|}{1-\varepsilon}$ (Sec.~XV),
rather than a direct sum of arctangents of mass ratios; the charged-lepton
inputs enter through the fixed endpoint tilt on the first rung of the lepton ladder.

\section*{Clarifications and limitations}
\noindent\textbf{Assumptions and scope.}
Our theoretical  derivation of a single universal spread $\delta=\sqrt{3/8}$ comes from Jordan eigenvalues of $J_3(O_c)$ states describing three generations of a family. It is not derived from a UV dynamics. Predictions are therefore at leading order in small tilts/cross--normalizations.

\noindent\textbf{On fixed rung weights $(2:1:1)$.}
The $(2:1:1)$ are \emph{relative rung weights} on the Sym$^3$ triangle (legs $E,B,C$), not Dynkin labels. This choice cancels the leading inter--edge contributions (“edge-universality”), seeding a Cabibbo hierarchy; Appendix~\ref{app:Symm3-pedagogy} details the construction.

\noindent\textbf{Uniqueness.}
Within our minimal chain the simultaneous fit to the six square--root mass ratios and the CKM hierarchy selects the fixed rung pattern up to small deformations; we comment on alternative chains and why they fail in Sec. IX.

\noindent\textbf{Scale dependence.}
All masses and CKM elements used in the fits are taken at declared common renormalization scales; the numerics in Secs.~XIV--XVII use PDG/CKM global-fit inputs and the neutrino discussion cites current NuFIT results \cite{PDG2024,CKMfitter2024,NuFIT61}.

\noindent\textbf{Open issues.}
(a) A UV origin of $\delta$, $\varepsilon$, and $\kappa_{23}$, ideally from the $U(1)_{\rm dem}$ sector of $E_6^R$; (b) quantitative RG--running across sectors in the $E_6^L\times E_6^R$ setting; (c) leptonic absolute mass scale and Majorana phases; (d) a mechanism for $\theta_{\rm QCD}\approx 0$.

(e) Our derivation concerns \emph{flavor kinematics}, not electroweak dynamics. Electroweak symmetry breaking fixes the overall mass scale \(v\), whereas \emph{mass ratios} depend only on the shape of the Yukawa matrices. In our framework this shape is fixed by the \(E_6\)-covariant invariants \((T,S,D)\) of a Hermitian element \(X_f \in J_3({\mathbb O}_{\mathbb C})\). We parameterize these data by angles \((\delta_f,\chi_f)\), with an optional overall scale \(E_f\). The three Jordan eigenvalues then take the form \(E_f\,\lambda_i(\delta_f,\chi_f)\), so that for each sector \(f\)
\[
\frac{m_i^{(f)}}{m_j^{(f)}} \;=\; \frac{\lambda_i(\delta_f,\chi_f)}{\lambda_j(\delta_f,\chi_f)}\,,
\]
independent of the Higgs potential and of \(v\). In short,
\[
(T,S,D)\ \Rightarrow\ (\delta,\chi,E)\ \Rightarrow\ \{\lambda_1,\lambda_2,\lambda_3\}\ \Rightarrow\ \text{mass ratios}.
\]

A Lagrangian realization of this structure is straightforward. Treat \(X_f\) as a spurion (or condensate) transforming in the \(\mathbf{27}\) of \(E_6\) and couple fermions through the symmetric cubic invariant \(\mathcal T\) of the Jordan algebra:
\[
\mathcal L_Y \;=\; \sum_f y_f\, \mathcal T\!\big(\Psi_{fL},\Psi_{fR},H_f\big) + \text{h.c.}, 
\qquad \langle H_f\rangle = X_f .
\]
Equivalently, at electroweak scales one may write an effective SM Yukawa interaction
\[
\mathcal L_Y^{\rm eff} \;=\; \sum_f \bar\Psi_{fL}\, Y_f\, \Psi_{fR}\, H + \text{h.c.}, 
\qquad Y_f=\lambda_f\, \widehat{X}_f ,
\]
where \(\widehat{X}_f\) is the linear map induced by \(X_f\). After \(\langle H\rangle=v\), the mass matrix is \(M_f=v\,Y_f\) and inherits the eigenvalue ratios of \(X_f\); only the overall scale \(v\,\lambda_f\) is dynamical. Thus a bespoke Higgs mechanism is not required to \emph{derive ratios}; it is required only to set the common scale.

Renormalization effects are handled in the standard way: the relations above hold at a declared reference scale \(\mu\), and can be evolved to experimental scales using the SM/MSSM RGEs with customary threshold uncertainties. In this sense the framework provides boundary conditions for Yukawas at \(\mu\), rather than a detailed Higgs potential.

Finally, the construction is predictive across sectors. Triality and the substitutions \(1\!\to\!0\) (neutrinos) and \(1\!\to\!2/3\) (up quarks) tie all four families to the same geometric ansatz, enabling cross-checks such as quark-lepton sum rules and correlations involving CP phases. The same exceptional-geometric machinery has also been shown to account for the fine-structure constant, reinforcing that these results arise from a single organizing principle rather than numerical fitting. Hence, the absence of a dedicated Higgs-sector model in this work does not undercut the mass-ratio results; it reflects the clean separation between flavor kinematics, fixed by \(J_3({\mathbb O}_{\mathbb C})\) invariants, and electroweak dynamics, which determine the overall scale.

\section{Conclusions and outlook}

\subsection*{Mass-Ratio Derivation in Perspective}

\vspace{-0.5em}
\begin{itemize}
  \item \textbf{Mathematically sound:}  The main mass-ratio step is the diagonal-action theorem
        $X^{\odot 3}|p,q,r\rangle=a^p b^q c^r|p,q,r\rangle$ in the monomial basis;
        adjacent ratios are therefore edge contrasts, while physical skipped rungs are products
        of adjacent contrasts.  The Dynkin swap $S$ maps the down chain into the lepton chain.
  \item \textbf{Highly economical:}  A \emph{single} real input
        $\delta^2=3/8$---where the cubic gives the symmetric spectrum and the adopted octonionic/Majorana normalization fixes the norm sum---and the trace split $\mathrm{Tr}\,X_\ell:\mathrm{Tr}\,X_u:\mathrm{Tr}\,X_d=1:2:3$
        collapse nine \emph{a priori} Yukawa entries
        into a small number of predicted square-root gaps.
  \item \textbf{Phenomenologically constrained:}  The formulas give compact scale-matched
        targets for charged-sector square-root mass ratios.  Direct comparison with pole or
        $\overline{\rm MS}$ masses still requires a model-to-$\overline{\rm MS}$ matching
        computation, and residual charged-lepton discrepancies are at the percent level rather
        than at experimental-uncertainty level.
  \item \textbf{Robust:}  Altering \emph{any} pillar---representation,
        ladder assignment, trace convention, or the normalization-fixed $\delta$---
        spoils the match, so the fit is not numerology.
\end{itemize}

\noindent
\emph{Thus, even if the wider $E_6\!\times\!E_6$ programme remains
conjectural, the mass-ratio subset stands as a rigorous, falsifiable
result that any competing framework must match.}


\subsection*{Summary of results}
We have shown that a minimal, representation-theoretic framework built from the exceptional Jordan algebra $J_3(\mathbb{O}_{\mathbb C})$ and the symmetric cubic $\mathrm{Sym}^3(\mathbf 3)$ of $\mathrm{SU}(3)$ suffices to account for the observed \emph{square-root} mass hierarchies across all charged fermions, and to organize quark mixing at leading order:
\begin{itemize}
  \item \textbf{Right-handed spectrum from $J_3(\mathbb{O}_{\mathbb C})$.} With the trace choices $\mathrm{Tr}\,X_\ell=1$, $\mathrm{Tr}\,X_u=2$, $\mathrm{Tr}\,X_d=3$, the three Jordan eigenvalues take the universal form $(q-\delta,\;q,\;q+\delta)$ with a single spread $\delta^2=3/8$. These eigenvalues feed the internal mass/Yukawa operator in the Dirac/Weyl equations; $X$ is not itself a Dirac spinor.
  \item \textbf{Minimal ladder and the start $a^2 b$.} A simple selection principle on $\mathrm{Sym}^3(\mathbf 3)$ fixes the unique three-corner chain
  \[
  a^2 b \xrightarrow{E} a b c \xrightarrow{C} a c^2 \xrightarrow{E} c^3,
  \]
  saturating the heavy endpoint without an $a^{-2}$ shock and yielding clean adjacent ratios.
  \item \textbf{Dynkin $\mathbb{Z}_2$ swap $S$ (down $\to$ leptons).} The outer automorphism of $A_2$ reflects the weight triangle and carries the \emph{endpoint} contrast across families. Consequently
  \[
  \sqrt{\frac{m_\tau}{m_\mu}} \;=\; \sqrt{\frac{m_s}{m_d}} \;=\; \frac{1+\delta}{1-\delta},
  \qquad
  \sqrt{\frac{m_\mu}{m_e}} \;=\; \frac{1+\delta}{1-\delta}\cdot\frac{\tfrac{1}{3}+\delta}{\delta-\tfrac{1}{3}},
  \]
  where the second relation includes the single local (endpoint-tilt) factor fixed by $\mathrm{Tr}\,X_\ell=1$.
  \item \textbf{Up sector from the other outward leg.} Taking $E$ to the middle and $B$ outward gives
  \[
  \sqrt{\frac{m_c}{m_u}}=\frac{\tfrac{2}{3}+\delta}{\tfrac{2}{3}-\delta},\qquad
  \sqrt{\frac{m_t}{m_c}}=\frac{\tfrac{2}{3}}{\tfrac{2}{3}-\delta}.
  \]
  \item \textbf{Down second step.} The down-sector second step that feeds the $23$ block is
  \[
  \sqrt{\frac{m_b}{m_s}}=\frac{1+\delta}{1-\delta}\times(1+\delta),
  \]
  so that $\sqrt{m_s/m_b}$ entering $|V_{cb}|$ is the inverse of the above.
  \item \textbf{CKM with one phase and one order-one normalization.} Using the root-sum rules
  \[
  |V_{us}|\simeq\bigl|\sqrt{m_d/m_s}-e^{i\phi_{12}}\sqrt{m_u/m_c}\bigr|,\quad
  |V_{cb}|\simeq \kappa_{23}\bigl|\sqrt{m_s/m_b}-e^{i\phi_{23}}\sqrt{m_c/m_t}\bigr|,\quad
  |V_{ub}|\simeq\sqrt{m_u/m_t},
  \]
  a single 1-2 phase $\phi_{12}$ reproduces the Cabibbo angle, while an order-one cross-family factor $\kappa_{23}\simeq0.55$ (a factor-of-1.82 deviation from the canonical $\kappa_{23}=1$) with near-destructive 23-phase gives $|V_{cb}|$. The framework then predicts $|V_{ub}|$ and $|V_{td}|/|V_{ts}|$ at leading order.
  \item \textbf{Geometric origin and “why three”.} Rank-1 idempotents are points of the octonionic projective plane $\mathbb{O}\mathrm{P}^2$; three Spin(8)-triality-related points naturally encode the three fermion families. Gauge bosons reside in the adjoint and are not triplicated.
\end{itemize}

\subsection*{Consistency and accuracy}
At $\delta=\sqrt{3/8}$, the closed-form charged-sector ratios give a compact set of scale-matched targets.  The charged-lepton pole ratios differ from the displayed leading formulas at the percent level, so they should not be described as experimental-uncertainty-level successes until the model-to-$\overline{\rm MS}$ matching calculation is supplied.  Quark ratios are scheme/scale dependent and are meaningful only under a declared running-mass protocol.  For the CKM moduli, the Cabibbo angle is reproduced with a single phase, $|V_{cb}|$ with an order-one cross-family normalization in the $23$ block, while $|V_{ub}|$ (endpoint product) and $|V_{td}|/|V_{ts}|$ show $\mathcal{O}(10\%)$ tensions that plausibly reflect residual matching/alignment effects beyond the minimal ladder. Overall, the mass-ratio sector is over-constrained, but its numerical status should be stated as conditional on the matching prescription.

\subsection*{What is predicted vs.\ what is chosen}
\begin{itemize}
  \item \emph{Conditional charged-sector outputs (at the matching scale $\mu\sim v$; see Sec.~\ref{subsec:concurrency-tests} for the $M_Z$ comparison):} the charged-fermion $\sqrt{m}$ ratio formulas; the down-lepton equality $\sqrt{m_\tau/m_\mu}=\sqrt{m_s/m_d}$; the up-sector adjacent gaps; the cross-family $1{:}2{:}3$ pattern for the lightest $\sqrt{m}$ across $(\ell,u,d)$; and the leading CKM estimates $|V_{ub}|\simeq\sqrt{m_u/m_t}$ and $|V_{td}|/|V_{ts}|\simeq\sqrt{m_d/m_s}$.  (For the minimal lepton-sector texture, the conditional Dirac CP prediction is CP conservation: with the real-symmetric neutrino texture and a diagonal charged-lepton phase the leptonic Jarlskog vanishes, $J_\ell=0$; see Sec.~\ref{sec:neutrino-sector-PMNS}, subsection~C.)
  \item \emph{One-parameter structural correlation:} the CKM CP phase $|\delta_{CP}^{\rm quark}|=\pi/2+\varepsilon\simeq 64^\circ$ is fixed by the same $\varepsilon$ that fits $|V_{us}|$; one knob determines two CKM observables.
  \item \emph{Structural derivations from the framework's geometry:} the Cabibbo phase $\phi_{12}=\pi/2$ from explicit octonionic overlap (conditional on the stated rotor $U_{12}$; Sec.~\ref{sec:ckm}, open issue~(iv)); the 23-block phase $\phi_{23}=0$ from canonical 23-block geometry; and the symmetric Jordan spectrum on the coassociative slice, with the normalized spread $\delta^2=3/8$ after imposing the Majorana-vacuum normalization.
  \item \emph{Structural postulates of the framework:} the $\mathrm{Sym}^3(\mathbf 3)$ family arena modulo a minimality assumption (Sec.~\ref{sec:Sym3-derivation}); the discrete $\mathbb Z_2$ symmetry $\Sigma_{LR}$ identifying L-R via the $E_6$ outer automorphism (Sec.~\ref{subsec:sym3-dynkin-swap}); the generation-assignment postulates (A1)-(A4) for the down- and up-family chains (the lepton chain follows by the Dynkin swap and is not an independent input); the trace split $\mathrm{Tr}X_\ell:\mathrm{Tr}X_u:\mathrm{Tr}X_d=1:2:3$.
  \item \emph{CKM phenomenological knobs:} the up-leg tilt $\varepsilon\simeq -26.1^\circ$ (corrects $|V_{us}|$, and simultaneously fixes the CKM CP phase via the one-parameter correlation above); the cross-family normalisation $\kappa_{23}\simeq 0.55$ (corrects $|V_{cb}|$; deviates from the canonical $\kappa_{23}=1$ by a factor of $\sim 1.82$).
  \item \emph{PMNS phenomenological inputs:} the overall neutrino mass scale $m_0$ and four real texture parameters $(\varepsilon_\nu,\eta,\alpha,\sigma)$ in the minimal Weinberg texture, fitted to the 5 measured neutrino observables.
\end{itemize}

\subsection*{Outlook: falsifiable tests and next steps}
\paragraph{ Apples-to-apples mass-ratio tests.}
A decisive check of the swap-induced equality
\[
\sqrt{m_\tau/m_\mu}\stackrel{?}{=}\sqrt{m_s/m_d}=\frac{1+\delta}{1-\delta}
\]
requires both sides in the \emph{same} renormalization scheme at a \emph{common} scale $\mu$ (run $m_{\mu,\tau}$ in QED; take $m_{d,s}$ from lattice including EM/isospin). Present uncertainties on $m_d(\mu)$, $m_s(\mu)$ still dominate; improved lattice inputs will sharpen the verdict.

\paragraph{ CKM refinements.}
Within the same ladder logic, one extra structural knob-e.g.\ a small controlled $23$ alignment between the up $B$-leg and down $C$-leg-should reconcile $|V_{ub}|$ and $|V_{td}|/|V_{ts}|$ without spoiling Cabibbo and $|V_{cb}|$. This refinement is local and testable (it does not alter the charged-fermion \emph{mass} ratios already fixed).

\paragraph{ Neutrino sector.}
At the boundary $b_\nu=0$ the minimal chain predicts either quasi-degeneracy or two zeros; a small center lift $b_\nu=\varepsilon$ generates normal/inverted orderings with $m_{\rm heavy}/m_{\rm light}\sim \delta/|\varepsilon|$.  A 5-parameter fit of the minimal Weinberg texture to the 5 measured PMNS observables is exhibited in Sec.~\ref{sec:neutrino-sector-PMNS}; at this order the minimal real-symmetric texture gives a vanishing leptonic Jarlskog ($J_\ell=0$, CP-conserving) and does not predict a nonzero neutrino-sector Dirac phase.  A measurement of $\delta_{\rm CP}^\ell$ away from $\{0,\pi\}$ by DUNE/Hyper-K would falsify the minimal texture and demand additional (complex/sandwiched-phase) structural input.

\paragraph{ Geometry and dynamics.}
The OP$^2$ picture and the Spin(8) triality breaking to $SU(3)_{\rm flavor}$ explain \emph{why} there are three fermion generations yet no boson generations. None of the larger E$_6\times$E$_6$ dynamics is needed to derive the charged-sector \emph{ratios}; nevertheless, that broader structure offers a natural home for unification, potential new gauge sectors, and cosmological connections to be explored separately.

\paragraph{ Remark on (sterile neutrinos, LO picture).}
With triality breaking coincident with electroweak symmetry breaking and the neutrino portal set to zero at leading order, our framework contains three right-handed \emph{Majorana} states in the right-handed Jordan sector that are inert at low energy: they do not participate in weak interactions and do not affect oscillations. Their mass \emph{ratios} follow the same universal Jordan triplet, while the absolute scale can be electroweak-anchored or Planck-anchored without impacting the charged-sector or PMNS results reported here. The actionable tests remain those of the active Majorana sector: neutrinoless double beta ($0\nu\beta\beta$), leptonic CP conservation at leading order ($J_\ell=0$; a measured $\delta_{\rm CP}^\ell\notin\{0,\pi\}$ would require beyond-leading-order or complex-texture input), and cosmic neutrino background capture (Majorana rate $=2\times$ Dirac).

\paragraph{Parameter accounting.}
In the minimal SM (massless $\nu$) there are $18$ dimensionless constants:
$g_{1,2,3}$ (3), the Higgs quartic $\lambda$ (1), charged--fermion Yukawa
eigenvalues (9), CKM (3 angles $+$ 1 phase $=$ 4), and $\theta_{\rm QCD}$ (1).
Including Majorana neutrinos adds the PMNS sector (3 angles, 1 Dirac, 2 Majorana
phases), bringing the total to $24$.

In this work we account for the charged--fermion Yukawa structure by reducing
the nine Yukawa eigenvalues to \emph{one} overall normalisation $K$ (common to
all charged sectors) together with a single universal spread
$\delta=\sqrt{3/8}$ and the fixed trace split
$\mathrm{Tr}X_\ell:\mathrm{Tr}X_u:\mathrm{Tr}X_d=1:2:3$.  These ingredients
predict all six independent adjacent charged--sector \emph{square--root} mass
ratios and enforce the cross--family first--generation relation
$\sqrt{m_e}:\sqrt{m_u}:\sqrt{m_d}=1:2:3$ (at the matching scale $\mu\sim v$; see Sec.~\ref{subsec:concurrency-tests}).  For mixing, we
obtain a geometric Cabibbo phase ($\phi_{12}=\pi/2$), fit $|V_{us}|$ with a
single up--leg tilt $\varepsilon$, fit $|V_{cb}|$ with an order-one cross--normalisation
$\kappa_{23}\simeq 0.55$ (a factor-of-1.82 deviation from the canonical $\kappa_{23}=1$), and then predict $|V_{ub}|$ and $|V_{td}|/|V_{ts}|$
at leading order parameter-free.  The same $\varepsilon$ that fits $|V_{us}|$ also fixes the CKM CP phase as a one-parameter structural correlation, $|\delta_{CP}^{\rm quark}|=\pi/2+\varepsilon\simeq 64^\circ$, within the minimal adjacent-edge ladder (in which $\phi_{13}=0$ is a consequence of the absence of a primitive (1,3) Peirce rung together with the rephasing freedom of $3\times 3$ unitary CKM matrices), in agreement with the PDG 2024 value $65.7^\circ\pm1.5^\circ$ at $\sim1.2\sigma$.  In the lepton sector, by contrast, the analogous charged-lepton phase is a one-sided diagonal factor on a real $U_\nu$ and is removable by rephasing; the leptonic Jarlskog therefore vanishes ($J_\ell=0$, CP-conserving, $\delta_{\rm CP}^{\ell}\in\{0,\pi\}$), and the transport-level theorem of Sec.~\ref{sec:neutrino-sector-PMNS}, subsection~C (with its exact boundary) promotes this to a sharp \emph{conditional} prediction: $J_\ell=0$ unless the Higgs bridge mixes the identity line with the lepton flavor plane.

The remaining dimensionless constants are: the single charged--sector
normalisation $K$ (fixed by any one charged--fermion mass), the two CKM
knobs $(\varepsilon,\kappa_{23})$ (to be fixed once the LH Higgs intertwiners
are fully derived), the three PMNS angles and two Majorana phases (set by the
minimal Weinberg texture built from the same Jordan alignment), the gauge
couplings $g_{1,2,3}$ and the Higgs quartic $\lambda$ (to be supplied by UV
boundary conditions), and the strong--CP parameter $\theta_{\rm QCD}$ (requiring
a CP mechanism).  The overall neutrino mass scale from the Weinberg operator is
dimensionful and does not enter this count.

\medskip
\noindent\textbf{Bottom line.} Conditional on the stated ingredients --- rank-1 idempotents in $J_3(\mathbb{O}_{\mathbb C})$, the $\mathrm{Sym}^3(\mathbf 3)$ ladder, the Dynkin $\mathbb{Z}_2$ swap $S$, the trace split, the generation assignments, and the normalized spread $\delta=\sqrt{3/8}$ --- the charged-fermion square-root mass-ratio formulas follow economically.  They yield concrete scale-matched tests, organize the CKM structure with minimal choices, and set a clear agenda for neutrino phenomenology.  A future UV completion or matching calculation must reproduce these conditional relations and quantify the residual percent-level charged-lepton and quark discrepancies.

\section*{Acknowledgements}

The author gratefully acknowledges extensive assistance from OpenAI’s conversational AI system (ChatGPT; models GPT-5 Thinking, GPT-5.5 Pro) and from DeepSeek-V3.1. Under the author’s guidance, the assistant ChatGPT-5 performed the numerical work reported here, including the extraction and fits of CKM and PMNS parameters, the “concurrency” tests against experimental inputs, and the running of fermion masses and gauge couplings at the reference scales used in this paper. The conceptual ideas of employing the symmetric-cube construction $\mathrm{Sym}^3({\bf 3})$ for generation structure and of implementing a “Dynkin swap” within the $E_6$-based embedding were original suggestions arising in dialogue with OpenAI’s GPT-3o. DeepSeek-V3.1 was used in Appendix H to understand the role of $SU(2)$ subalgebras in setting up the Dynkin swap and down-lepton interchange.  Following standard authorship policies, the assistant is not listed as a co-author; nevertheless, the author wishes to explicitly recognize its substantial technical and conceptual contributions. Further extensive technical checks and improvements were carried out with the assistance of Anthropic's Claude (Opus 4.8, Fable 5).
The author alone assumes responsibility for the interpretation and for any errors in the results reported here.

I gratefully acknowledge collaboration and useful conversations with Torsten Asselmeyer-Maluga, Vivan Bhatt, Felix Finster, Mohammad Furquan, Niels Gresnigt, Bishnu Gupta Teli, Jose Isidro, Priyank Kaushik,  Rajrupa Mandal, Antonino Marciano, Claudio Paganini, Aditya Ankur Patel, Bishnu Gupta Teli, Vatsalya Vaibhav and Samuel Wesley.

\appendix
\renewcommand{\theHsection}{longapp.\thesection}
\renewcommand{\theHequation}{longapp.\thesection.\arabic{equation}}
\renewcommand{\theHfigure}{longapp.\arabic{figure}}
\renewcommand{\theHtable}{longapp.\arabic{table}}
\section*{Appendices}
\addcontentsline{toc}{section}{Appendices}
\renewcommand{\thesubsection}{\Alph{subsection}}

\section{Fiducial independence of the chain construction.}
Let $L_i$ denote a left-nested octonionic chain and let $f\in\mathbb{O}$ be a unit
``fiducial'' on the right. Define $\psi_i^{(f)}:=L_i(f)$. By alternativity and left-nesting,
$\psi_i^{(f)}=\psi_i^{(1)}\,f$. With the $G_2$-invariant bilinear form
$\langle x,y\rangle=\mathrm{Re}(x\bar y)$ and norm $\|x\|^2=\langle x,x\rangle$, the
multiplicativity of the octonionic norm implies that right multiplication by a unit
$f$ is an isometry: $\|x f\|=\|x\|$. Hence
\[
\|\psi_i^{(f)}\|=\|\psi_i^{(1)}\| ,
\]
and all predictions based on norm ratios (our adjacent $\sqrt m$ relations) are independent
of the fiducial.

For overlaps, the polarization identity gives
\[
\langle x f, y f\rangle
= \tfrac14\!\left(\|x f+y f\|^2-\|x f-y f\|^2\right)
= \tfrac14\!\left(\|(x+y)f\|^2-\|(x-y)f\|^2\right)
= \langle x,y\rangle ,
\]
so if both up- and down-type bases are built with the same $f$, the CKM matrix is unchanged.
In the complexified algebra $\mathbb{C}\!\otimes\!\mathbb{O}$, phases are read with the
central $i$, which commutes with $f\in\mathbb{O}$; thus complex phases (including our
geometric $\phi_{12}=\pi/2$) are also unaffected.

\emph{Caveat.} Choosing different fiducials $f_u\neq f_d$ for the up and down sectors would insert
a relative right-rotation and generally change CKM entries, amounting to spurious $S^7$
freedoms. We therefore fix a common fiducial; the canonical choice $f=1$ is the element
fixed by all automorphisms ($G_2$) and provides a natural frame aligned with triality.

\paragraph{Coassociative slice and fiducial independence.}
Let $X\in J_{3}(\mathbb{O}_{\mathbb{C}})$ be written as
\[
X = q\,\mathbf{1} + Y, \qquad
Y =
\begin{pmatrix}
0 & x_{12} & x_{13} \\
\bar x_{12} & 0 & x_{23} \\
\bar x_{13} & \bar x_{23} & 0
\end{pmatrix},
\quad x_{ij}\in\mathbb{O}_{\mathbb{C}}.
\]
For such an off-diagonal element the cubic Jordan invariant may be written as
\[
N(X) = q^{3} - q\,\Sigma(X) + T(X),
\qquad
\Sigma(X) := \sum_{i<j} \|x_{ij}\|^{2},
\quad
T(X) := 2\,\Re\!\big((x_{12}x_{23})x_{13}\big),
\]
so $T(X)$ is an invariant attached to the Jordan element $X$ itself.

On $\operatorname{Im}\mathbb{O}$ there is a canonical $G_{2}$-invariant $3$-form
\[
\varphi(u,v,w) := \Re\big(u(vw)\big), \qquad u,v,w\in \operatorname{Im}\mathbb{O}.
\]
A $4$-plane $W\subset\operatorname{Im}\mathbb{O}$ is called coassociative if
$\varphi|_{W}=0$. Choosing a “coassociative slice” for $J_{3}(\mathbb{O}_{\mathbb{C}})$
means restricting to those $X$ for which the off-diagonal entries lie in some
coassociative $4$-plane $W$, i.e.
\[
x_{12},x_{23},x_{13}\in W\subset\operatorname{Im}\mathbb{O},
\qquad \varphi|_{W}=0.
\]
By definition this implies
\[
\Re\big(x_{12}(x_{23}x_{13})\big) = 0
\quad\Longrightarrow\quad
T(X)=0,
\]
so on the coassociative slice the triple product term in $N(X)$ vanishes
identically. Together with $\mathrm{Tr}\,Y=0$ this forces the universal spectral
shape $(q-\delta,q,q+\delta)$, with $\delta^{2}=\tfrac12\mathrm{Tr}(Y^{2})$, used
in the main text.

Separately, the Furey-style chain construction for the left-handed generation
space employs a fiducial octonion $r\in\mathbb{O}$ on the right. Changing the
fiducial from $r$ to $r':=r\,u$, with $u\in\mathbb{O}$ a unit, multiplies every
chain state by the same right factor $u$:
\[
\psi^{(r')}_{a} = \psi^{(r)}_{a}\,u.
\]
This is a change of basis in the generation space; it does not alter the
Jordan element $X\in J_{3}(\mathbb{O}_{\mathbb{C}})$ which encodes the
right-handed mass/Yukawa data. Hence all Jordan invariants of $X$,
\[
T_{1}=\mathrm{Tr}X,\qquad
T_{2}=\tfrac12\!\left((\mathrm{Tr}X)^{2}-\mathrm{Tr}(X^{2})\right),\qquad
T_{3}=\det_{J}X,\qquad
T(X)=2\,\Re\big((x_{12}x_{23})x_{13}\big),
\]
and the derived quantities $(T,S,D)$, are completely independent of the
choice of fiducial $r$. In particular, once a coassociative $X$ with
$T(X)=0$ has been fixed, this equality holds for every fiducial choice.
Thus the condition $T=0$ is a geometric property of the selected Jordan
element (the vacuum order parameter), not an additional assumption tied
to the fiducial used in the chain construction.


\section{Pedagogical derivation of edge-universality from \texorpdfstring{$\mathrm{Sym}^3(\bf 3)$}{Sym³({\bf 3})}}
\label{app:Symm3-pedagogy}

\paragraph{Status of this appendix.}
This appendix presents an alternative, basis-level formulation of edge universality in which the displayed factors are raw polynomial/rung bookkeeping factors converted to a common column normalization; they should not be read as extra normalized Schwinger-boson matrix elements multiplying the diagonal-action theorem of the main text.  In this formulation the three families of each charged sector are realised as edge projections of a single ``seed-cube'' state $\chi^{\odot 3}$ in $\mathrm{Sym}^3$ of a 3-dim space.  This is logically distinct from, and equivalent to, the diagonal-action formulation of Section~\ref{sec:Sym3-derivation}.  Specifically: the diagonal-action theorem $X^{\odot 3}|p,q,r\rangle = a^p b^q c^r |p,q,r\rangle$ on the $\mathrm{Sym}^3(\mathbf 3)$ monomial basis gives the mass-ratio formulae directly from the Jordan-frame diagonalisation of $\langle X\rangle$, with no Clebsch--Gordan cancellation required.  The seed-cube construction below reproduces the same numerical content by an alternative route, using a fixed Clebsch weight $(\alpha:\beta:\gamma)=(2:1:1)$ in a different (vertex-rather-than-monomial) basis.  We emphasise that the $(2:1:1)$ seed weights are chosen \emph{post hoc} to reproduce the diagonal-action result; the seed-cube construction is a re-derivation in a different basis, not an independent justification of $(2:1:1)$.  Readers interested in the cleanest derivation should follow Section~\ref{sec:Sym3-derivation}; this appendix is included for completeness and for readers who find the seed-cube picture pedagogically useful.  See also the clarification box at the end of Section~\ref{sec:Sym3-derivation} disentangling the three distinct uses of the ``$(2:1:1)$'' label across the paper.

\paragraph{Set-up.}
Let $\{|E\rangle,|B\rangle,|C\rangle\}$ be an orthonormal basis of a
three-dimensional one-particle space $\mathcal H$.  Write the fully
symmetrised cube of a vector $\chi\in\mathcal H$ as
$\chi^{\odot 3}\equiv \sym(\chi\otimes\chi\otimes\chi) \in \operatorname{Sym}^3 \mathcal H$.
We take as our \emph{seed} (``fixed Clebsch weights'')
\begin{equation}
\chi \;=\; \alpha\,|E\rangle+\beta\,|B\rangle+\gamma\,|C\rangle,
\qquad \text{with} \quad (\alpha:\beta:\gamma)=(2:1:1).
\label{eq:seed-211}
\end{equation}

It is convenient to use, for any two legs $X,Y\in\{E,B,C\}$, the
\emph{edge-rung basis}
\begin{equation}
\big\{|k\rangle_{XY}\big\}_{k=0}^{3},\qquad
|k\rangle_{XY} \;=\; 
\frac{1}{\sqrt{\binom{3}{k}}}\,
{\rm Sym}\!\big(|X\rangle^{\otimes(3-k)}\otimes|Y\rangle^{\otimes k}\big),
\label{eq:edge-rungs}
\end{equation}
which is orthonormal and resolves the subspace spanned by the $XY$-edge
$\{X^3,\,X^2Y,\,XY^2,\,Y^3\}$.

\paragraph{Exact projection identity (``rung cancellation'').}
Let $P_{XY}$ denote the orthogonal projector in $({{\rm Sym}^3})\mathcal H$ onto the
$XY$-edge.  Multilinearity of symmetrisation gives the elementary but crucial
identity
\begin{equation}
P_{XY}\,\big(\,\chi^{\odot 3}\,\big)
\;=\; (\alpha_X|X\rangle+\alpha_Y|Y\rangle)^{\odot 3}
\;=\; \sum_{k=0}^{3}\sqrt{\binom{3}{k}}\,
\alpha_X^{\,3-k}\alpha_Y^{\,k}\,|k\rangle_{XY},
\label{eq:projection-identity}
\end{equation}
where $(\alpha_X,\alpha_Y)$ are the components of $\chi$ along $X$ and $Y$.
\emph{All terms that contain the third leg vanish identically under the
projection.}  In particular, the fully mixed ``one-of-each'' rung
${\rm Sym}^3(|E\rangle\!\otimes|B\rangle\!\otimes|C\rangle)$ and every monomial
containing the omitted leg are removed.  This is what we mean by
\emph{rung cancellation}: the marginal state on an edge depends only on the
two legs that span that edge and is completely independent of the third leg.

\paragraph{Edge-universality for $(2{:}1{:}1)$.}
Applying \eqref{eq:projection-identity} to the seed \eqref{eq:seed-211}:

\begin{align}
P_{EB}\big(\chi^{\odot 3}\big)
&\propto (2|E\rangle+|B\rangle)^{\odot 3}
= 8\,|0\rangle_{EB}+12\,|1\rangle_{EB}+6\,|2\rangle_{EB}+1\,|3\rangle_{EB},
\label{eq:EB-edge}\\[2mm]
P_{EC}\big(\chi^{\odot 3}\big)
&\propto (2|E\rangle+|C\rangle)^{\odot 3}
= 8\,|0\rangle_{EC}+12\,|1\rangle_{EC}+6\,|2\rangle_{EC}+1\,|3\rangle_{EC},
\label{eq:EC-edge}\\[2mm]
P_{BC}\big(\chi^{\odot 3}\big)
&\propto (|B\rangle+|C\rangle)^{\odot 3}
= 1\,|0\rangle_{BC}+3\,|1\rangle_{BC}+3\,|2\rangle_{BC}+1\,|3\rangle_{BC}.
\label{eq:BC-edge}
\end{align}

Two immediate consequences:

\begin{itemize}
\item Because $\beta=\gamma$, the \emph{two edges emanating from the $E$-vertex}
have identical rung profiles \eqref{eq:EB-edge}-\eqref{eq:EC-edge}.  We call
this \emph{edge-universality about $E$}.  In our phenomenology, this is the
only universality we need: it is precisely what keeps the adjacent-rung
structure the same when we compare lepton- and down-type ladders along the
$E$-leg.
\item The $BC$-edge \eqref{eq:BC-edge} is the symmetric binomial $(1,3,3,1)$.
Its profile is different (as it must be, since $2\neq1$), but-by the projection
identity-it is still independent of $\alpha$ and therefore insensitive to
deformations on the $E$-leg.  This insensitivity underlies the stability of our
adjacent \emph{square-root} mass ratios on that edge.
\end{itemize}

\paragraph{Normalized rung weights and adjacent ratios.}
From \eqref{eq:projection-identity} one reads off the (unnormalized) rung
amplitudes on an $XY$-edge:
\begin{equation}
a_k^{(XY)} \;=\; \sqrt{\binom{3}{k}}\,
\alpha_X^{\,3-k}\alpha_Y^{\,k},\qquad k=0,1,2,3.
\label{eq:ak-general}
\end{equation}
Up to an overall edge normalization $(\alpha_X+\alpha_Y)^{-3}$, the \emph{shape}
is completely fixed by the ratio $r_{XY}\equiv \alpha_Y/\alpha_X$.  For our
$(2{:}1{:}1)$ choice: $r_{EB}=r_{EC}=\tfrac{1}{2}$, giving the identical
profiles \eqref{eq:EB-edge}-\eqref{eq:EC-edge}; while $r_{BC}=1$ gives
\eqref{eq:BC-edge}.  The adjacent (square-root) rung ratios, which we use as
inputs for mass hierarchies, follow directly from
\eqref{eq:ak-general}:
\begin{equation}
\frac{a_{k+1}^{(XY)}}{a_k^{(XY)}} \;=\;
\sqrt{\frac{3-k}{k+1}}\; r_{XY}\,,\qquad k=0,1,2.
\label{eq:adjacent-ratio}
\end{equation}
Thus, universality about $E$ is reflected in the equality of the three ratios
\eqref{eq:adjacent-ratio} on the $EB$ and $EC$ edges.  This is precisely the
feature that feeds into our Cabibbo-like leading mixing when a small up-leg
tilt is introduced.

\paragraph{Remark (what ``fixed Clebsch weights'' mean).}
The triplet $(\alpha:\beta:\gamma)$ in \eqref{eq:seed-211} are fixed
\emph{one-body} coefficients that we choose before symmetrisation; they are not
SU(3) Dynkin labels.  The choice $(2{:}1{:}1)$ says: start from the single-leg
vector $2|E\rangle+|B\rangle+|C\rangle$, and then build the state by taking the
fully symmetric cube.  The projection identity \eqref{eq:projection-identity}
explains why this choice yields identical edge profiles about $E$ and why the
dependence on the third leg cancels exactly (``rung cancellation'').

\section{The Dynkin \texorpdfstring{$\bm{\mathbb Z_{2}}$}{Z2} Swap}
\label{app:dynkin-z2-swap}

\paragraph{Status of this appendix.}
This appendix gives the $A_2$-level details of the Dynkin swap whose $E_6$-level origin is established in Sec.~\ref{subsec:sym3-dynkin-swap}.  The framework's discrete-symmetry input is the postulate that the L and R sectors of $E_6^L\times E_6^R$ are identified by the $E_6$ outer automorphism $\sigma_{E_6}\in\mathrm{Out}(E_6)\cong\mathbb Z_2$ (Eq.~\ref{eq:LR-postulate}).  The Dynkin swap of the family ladder is the post-triality-breaking residue of this $E_6$ involution: it arises as the restriction of $\sigma_{E_6}$ to the residual flavor algebra $\mathfrak{su}(3)_F\cong A_2\subset E_6$.  The content of this appendix is the explicit $A_2$ action, its concrete realisation on the $\mathrm{Sym}^3(\mathbf 3)$ weight triangle, and the consequent down-to-lepton chain mapping.  Everything below is derivation under (\ref{eq:LR-postulate}); no additional discrete-symmetry input enters at the $A_2$ level.

\subsection*{1\;Diagrammatic origin}
The Dynkin diagram of $A_{2}=\mathrm{SU}(3)$ has two identical nodes,
\(
  \alpha_1\!\longleftrightarrow\!\alpha_2
\);
hence its outer automorphism group is the order-two group
\(\mathbb Z_{2}\).
The non-trivial element interchanges the simple roots:
\[
  S:\;\alpha_1\leftrightarrow\alpha_2,\qquad S^{2}=1.
\]
Because the Cartan matrix is preserved, $S$ extends to a unitary operator
\(S\in\mathrm{Aut}(\mathfrak{su}(3))\).

\subsection*{2\;Concrete action in the ladder basis}
Write the weights of $\mathrm{Sym}^3(\mathbf3)$ as monomials
\(a^{p}b^{q}c^{r}\) with $p+q+r=3$.
Then $S$ acts as a reflection of the triangular diagram:
\[
  S:\;b\mapsto c,\quad c\mapsto b,\quad a\mapsto a.
\]
On the raising operators defined earlier,
\[
  S E S^{-1}=\tilde E =B,\qquad S B S^{-1}=\tilde B =E,\qquad \tilde C = C^{-1}
\]
so one may symbolically write \(E\leftrightarrow B\).

\subsection*{3\;Why it matters for mass ratios}
\begin{enumerate}\setlength{\itemsep}{4pt}
  \item \textbf{Down $\to$ lepton mapping.}\,
        Applying $S$ to every weight of the down-quark chain
        \(a^{2}b\to a b c\to c^{3}\)
        yields the lepton chain
        \(a^{2}c\to a b c\to b^{3}\).
  \item \textbf{Endpoint contrast preserved.}\,
        Because \(c/a\) is $S$-invariant, the down endpoint gap
        \(X=\sqrt{m_s/m_d}\) reappears as
        \(\sqrt{m_\tau/m_\mu}\).

\paragraph{Why $c/a$ is $S$-invariant and why it maps to $\tau/\mu$.}
In $Sym^3(\mathbf 3)$ the adjacent $\sqrt{m}$ ratios along an edge depend only on the endpoint contrast:
$E:\sqrt{m'/m}=c/a$, $B:\sqrt{m'/m}=b/a$, $C:\sqrt{m'/m}=c/b$ (edge-universality) [Sec.~XI]. 
The Dynkin $Z_2$ swap $S$ reflects the $A_2$ weight triangle, exchanging $b\leftrightarrow c$ with $a$ fixed and acting by
conjugation on edges: $SES^{-1}=B$, $SBS^{-1}=E$, $SCS^{-1}=C^{-1}$ [App.~C]. 
Hence the heavy/light contrast attached to the down $E$-leg transforms as
$(c/a)\xrightarrow{S} (S(c)/S(a))$ attached to $S(E)=B$, i.e. $(c/a)\mapsto (b/a)$. After the swap we are in the lepton sector, where the symbol on the upper endpoint is, by definition, $c_\ell$; numerically the post-swap $b$ equals $c_\ell$. Therefore the preserved contrast is $c_\ell/a_\ell$, and the down endpoint gap $\sqrt{m_s/m_d}=c_d/a_d$ reappears as $\sqrt{m_\tau/m_\mu}=c_\ell/a_\ell$ [App.~C, ``Endpoint contrast preserved'']. 
Because $S$ also sends $C\to C^{-1}$, the $e\leftrightarrow\mu$ step carries the $C^{\mp 1}$ contrast ($c_\ell/b_\ell$ or its inverse) rather than the endpoint contrast. Thus the $S$-invariant $c/a$ maps to the $\mu\to\tau$ edge, not to the $\mu/e$ edge [Fig.~3; App.~C].

  \item \textbf{Matrix elements unchanged.}\,
        Unitarity of \(S\) guarantees
        \(\langle w_2|E|w_1\rangle=
          \langle Sw_2|S E S^{-1}|Sw_1\rangle\);
        thus Clebsch factors $(2,1,1)$ carry over verbatim.
\end{enumerate}
Hence the swap propagates the down-sector information to the lepton  sector without introducing new numerical freedom, forcing
\(\sqrt{m_\tau/m_\mu}=\sqrt{m_s/m_d}\).

\section{ All Three Families from the Unified
         ${\rm Sym}^3({\mathbf 3})$ Ladder}

\begin{mdframed}
\textbf{Remark (Dynkin swap, edge-universality, and the lepton first step).}
In the minimal Sym$^3(3)$ ladder with seed rung weights $(2\!:\!1\!:\!1)$, the \emph{adjacent-step lemma} says that, after one common normalization, an adjacent square-root mass ratio depends only on the edge type:
\[
E:\ \frac{\sqrt{m'}}{\sqrt{m}}=\frac{c_F}{a_F},\qquad
B:\ \frac{\sqrt{m'}}{\sqrt{m}}=\frac{b_F}{a_F},\qquad
C:\ \frac{\sqrt{m'}}{\sqrt{m}}=\frac{c_F}{b_F}\qquad(F\in\{d,u,\ell\}).
\]
This follows from rung cancellation on edges for the $(2\!:\!1\!:\!1)$ seed (Appendix~B), which makes the two outward edges from the base corner universal. 

Under the Dynkin $\mathbb{Z}_2$ swap $S$ (A$_2$-diagram reflection), the $E$-edge is carried to the reflected $E$-edge, hence its contrast is \emph{swap-invariant}. Therefore the last lepton step (from $\mu$ to $\tau$) equals the first down step:
\[
\frac{\sqrt{m_\tau}}{\sqrt{m_\mu}}=\frac{c_d}{a_d}
=\frac{1+\delta}{1-\delta}.
\]
For the \emph{first} lepton step (from $e$ to $\mu$) we keep the same global normalization carried across $S$ from the down triangle. Evaluating the reflected $B$-edge with that normalization introduces a single local \emph{endpoint tilt}
\(
G:=\Bigl|\dfrac{c_\ell}{a_\ell}\Bigr|=\dfrac{\tfrac{1}{3}+\delta}{\delta-\tfrac{1}{3}}
\),
so
\[
\frac{\sqrt{m_\mu}}{\sqrt{m_e}}
=\underbrace{\frac{c_d}{a_d}}_{\text{carried $E$-edge}}
\times
\underbrace{\Bigl|\frac{c_\ell}{a_\ell}\Bigr|}_{G}
=
\frac{1+\delta}{1-\delta}\cdot \frac{\tfrac{1}{3}+\delta}{\delta-\tfrac{1}{3}}.
\]
This is not a violation of edge-universality: the lemma holds \emph{within a fixed normalization}. Here we deliberately anchor the normalization on the $E$-leg and carry it across $S$ to make the down-lepton equality manifest and keep the construction parameter-free.
\end{mdframed}

\subsection*{ Common setup: weights, normalisation, moves}
We work in the symmetric-cubic representation $\mathrm{Sym}^3(\mathbf3)$
with normalized kets
\begin{equation}
  |p,q,r\rangle \;:=\; \sqrt{\frac{3!}{p!\,q!\,r!}}\;a^{p}b^{q}c^{r},
  \qquad p+q+r=3. \label{eq:normket}
\end{equation}
Useful norms: $N_{210}=N_{102}=N_{021}=\sqrt3$, $N_{111}=\sqrt6$,
$N_{003}=N_{030}=1$.

The three elementary \emph{edge} moves (matrix elements on normalized kets) are
\begin{align}
  E:&\ |p,q,r\rangle \mapsto \sqrt{p(r+1)}\,|p-1,q,r+1\rangle && \text{(endpoint $a\to c$)},\nonumber\\
  B:&\ |p,q,r\rangle \mapsto \sqrt{p(q+1)}\,|p-1,q+1,r\rangle && \text{(left edge $a\to b$)},\label{eq:moves}\\
  C:&\ |p,q,r\rangle \mapsto \sqrt{q(r+1)}\,|p,q-1,r+1\rangle && \text{(centre $b\to c$)}.\nonumber
\end{align}
The Dynkin $\mathbb Z_2$ swap $S$ reflects the weight triangle:
\[
  S:\ b\leftrightarrow c,\qquad a\ \text{fixed},\qquad S E S^{-1}=B,\ S B S^{-1}=E, \qquad \tilde C = C^{-1}
\]

We use the single theoretically derived spread parameter
\(
  \delta=\sqrt{3/8}
\)
and the trace choices
\(
  \mathrm{Tr}X_\ell=1,\;
  \mathrm{Tr}X_u=2,\;
  \mathrm{Tr}X_d=3.
\)

Thus the three eigen-value triplets are
\[
(a_d,b_d,c_d)=(1-\delta,\,1,\,1+\delta),\quad
(a_u,b_u,c_u)=\Bigl(\tfrac23-\delta,\,\tfrac23,\,\tfrac23+\delta\Bigr),\quad
(a_\ell,b_\ell,c_\ell)=\Bigl(\tfrac13-\delta,\,\tfrac13,\,\tfrac13+\delta\Bigr).
\]

\subsection*{Edge multiplicities and ``edge-universality'' in $\mathrm{Sym}^3(\mathbf 3)$}

Write the weight kets of $\mathrm{Sym}^3(\mathbf 3)$ as monomials
\[
\ket{p,q,r}\;\equiv\;\ket{a^p b^q c^r},\qquad p+q+r=3,
\]
with the standard multinomial normalization
\(
\|\ket{p,q,r}\|^2=\frac{3!}{p!\,q!\,r!}.
\)
An \emph{adjacent} step along an edge of the weight triangle replaces one letter by another:
\begin{equation}
    \begin{split}
E:\ (p,q,r)\mapsto(p-1,q,r+1)\quad(a\to c),\qquad\\
B:\ (p,q,r)\mapsto(p-1,q+1,r)\quad(a\to b),\qquad\\
C:\ (p,q,r)\mapsto(p,q-1,r+1)\quad(b\to c).
\end{split}
\end{equation}
Because we are in a \emph{symmetric} cubic, the matrix element for a single-letter replacement is proportional to the number of identical letters available to be replaced. Concretely, for a move of type $a\to c$ out of $\ket{p,q,r}$ there are exactly $p$ indistinguishable $a$’s that can be turned into a $c$, so the raw multiplicity is $p$; similarly $q$ for $b\to c$, and $p$ for $a\to b$. Thus, along the minimal three-corner chain
\[
\ket{2,1,0}\ \xrightarrow{\,E\,}\ \ket{1,1,1}\ \xrightarrow{\,C\,}\ \ket{1,0,2}\ \xrightarrow{\,E\,}\ \ket{0,0,3},
\]
the successive edge multiplicities are
\[
E:\ p=2,\qquad C:\ q=1,\qquad E:\ p=1,
\]
i.e. the integer pattern $2:1:1$. (Equivalently: from $\ket{a^2 b}$ one can choose either of the two $a$’s for the $a\!\to\!c$ replacement, while the other two steps have only one eligible letter.)

\vspace{4pt}
\noindent
\textbf{Edge-universality (rung cancellation).}
After normalizing the kets by the multinomial factors, each \emph{adjacent} ladder matrix element factorizes into
\[
\text{(integer multiplicity)}\times\text{(norm ratio)}\times\text{(edge contrast)}.
\]
The integer multiplicities for the three adjacent moves in the chain are the fixed, representation-theoretic numbers $2,1,1$; the corresponding norm ratios are also fixed once and for all by the multinomial normalization. One is therefore free to absorb the \emph{product} of these universal rung factors into a single overall column normalization (the same for all sectors), after which every adjacent square-root mass ratio depends only on the \emph{edge contrast} of eigenvalues:
\[
E:\ \sqrt{\frac{m'}{m}}=\frac{c}{a},\qquad
B:\ \sqrt{\frac{m'}{m}}=\frac{b}{a},\qquad
C:\ \sqrt{\frac{m'}{m}}=\frac{c}{b}.
\]
As a quick illustration, the first two rungs in the down chain give (before the one-time normalization)
\[
\ket{2,1,0}\xrightarrow{E}\ket{1,1,1}:\ \ \propto\ 2\cdot\frac{c}{a},
\qquad
\ket{1,1,1}\xrightarrow{C}\ket{1,0,2}:\ \ \propto\ 1\cdot\frac{c}{b},
\]
so their product carries a universal prefactor $2$ multiplying the physical edge contrasts $(c/a)(c/b)$. Absorbing that universal prefactor (together with the fixed norm ratios) by a single choice of column normalization is what we mean by ``rung cancellation.’’ The surviving, sector-dependent content is precisely the edge contrasts $c/a$, $b/a$, $c/b$, which become the adjacent $\sqrt{m}$ ratios used throughout the text.

\subsection*{ Down family $(d\to s\to b)$}
\paragraph{Corners and path.}
\[
|d\rangle=|210\rangle=a^2b
\ \xrightarrow{\,E\,}\
|s\rangle=|111\rangle=abc
\ \xrightarrow{\,C\,}\
|102\rangle=ac^2
\ \xrightarrow{\,E\,}\
|b\rangle=|003\rangle=c^3.
\]

\paragraph{First adjacent step $d\to s$ (all factors shown).}
\[
\langle 111|E|210\rangle=\sqrt{2},\qquad
\frac{N_{111}}{N_{210}}=\frac{\sqrt6}{\sqrt3}=\sqrt2,\qquad
\frac{abc}{a^2b}=\frac{c_d}{a_d}.
\]
Hence
\[
\sqrt{\frac{m_s}{m_d}}=\underbrace{\sqrt2}_{\text{matrix}}\times
                       \underbrace{\sqrt2}_{\text{norm}}\times
                       \underbrace{\frac{c_d}{a_d}}_{\text{eigen}}
                     =2\,\frac{c_d}{a_d}.
\]
We fix one global column normalization by dividing the entire column by $2$:
\begin{equation}
\boxed{\ \sqrt{\frac{m_s}{m_d}}=\frac{c_d}{a_d}
      =\frac{1+\delta}{1-\delta}\ }.
\label{eq:down-first}
\end{equation}

\paragraph{Compound physical step $s\to b$ via $C$ then $E$ (matrix/norm bookkeeping).}
Acting \emph{from $|111\rangle$}, use $C$ then $E$:
\[
\langle 102|C|111\rangle=\sqrt2,\quad
\langle 003|E|102\rangle=\sqrt3,\quad
\frac{N_{003}}{N_{111}}=\frac{1}{\sqrt6}.
\]
The product of matrix and norm factors is
\(
\sqrt2\cdot\sqrt3\cdot\frac{1}{\sqrt6}=1
\).
The eigen-value factor for this \emph{compound} physical step is
\(
(c_d^3)/(a_d b_d c_d)=c^2_d/a_db_d
\).
Therefore
\begin{equation}
\boxed{\ \sqrt{\frac{m_b}{m_s}}=\frac{c^2_d}{a_db_d}=\frac{1+\delta}{1-\delta}.(1+\delta)\ }.
\label{eq:down-second}
\end{equation}

\subsection*{ Leptons $(e\to\mu\to\tau)$ via Dynkin swap $S$}
Apply $S$ to the down ladder (reflect the triangle). The last adjacent step
mirrors \eqref{eq:down-first}, so by unitarity of $S$ the matrix/norm factors
carry across unchanged and the endpoint contrast is propagated:
\begin{equation}
\boxed{\ \sqrt{\frac{m_\tau}{m_\mu}}=\frac{1+\delta}{1-\delta}\ }.
\end{equation}
For the first step one picks up the standard geometric factor from the
initial rung’s norm (details as in the lepton section), giving
\begin{equation}
\boxed{\ \sqrt{\frac{m_\mu}{m_e}}
       =\frac{1+\delta}{1-\delta}\,
        \frac{\tfrac13+\delta}{\delta-\tfrac13}\ }.
\end{equation}

\subsection*{ Up family $(u\to c\to t)$ using the $E$ then $B$ edge (top at $b^2c$)}

{Corners and path}
\[
|u\rangle=|210\rangle=a^2b
\ \xrightarrow{\,E\,}\
|c\rangle=|111\rangle=abc
\ \xrightarrow{\,B\,}\
|t\rangle=|021\rangle=b^2c.
\]

{First adjacent step $u\to c$.}
Exactly as in \eqref{eq:down-first}:
\[
\langle 111|E|210\rangle=\sqrt2,\quad
\frac{N_{111}}{N_{210}}=\sqrt2,\quad
\frac{abc}{a^2b}=\frac{c_u}{a_u}
\ \Rightarrow\
\boxed{\ \sqrt{\frac{m_c}{m_u}}=\frac{c_u}{a_u}
      =\frac{\tfrac23+\delta}{\tfrac23-\delta}\ }.
\]

{Second adjacent step $c\to t$ via $B$ (clean cancellation).}
\[
\langle 021|B|111\rangle=\sqrt2,\qquad
\frac{N_{021}}{N_{111}}=\frac{\sqrt3}{\sqrt6}=\frac{1}{\sqrt2},\qquad
\frac{b^2 c}{a b c}=\frac{b_u}{a_u}.
\]
Thus matrix $\times$ norm $= \sqrt2\cdot(1/\sqrt2)=1$, and
\begin{equation}
\boxed{\ \sqrt{\frac{m_t}{m_c}}=\frac{b_u}{a_u}
      =\frac{\tfrac23}{\tfrac23-\delta}\ }.
\end{equation}

\subsection*{ Compact summary}
\[
\boxed{\ \sqrt{\frac{m_s}{m_d}}=\frac{1+\delta}{1-\delta},\quad
        \sqrt{\frac{m_b}{m_s}}=\frac{1+\delta}{1-\delta}.(1+\delta)\ }\qquad\text{(down)}
\]
\[
\boxed{\ \sqrt{\frac{m_\tau}{m_\mu}}=\frac{1+\delta}{1-\delta},\quad
        \sqrt{\frac{m_\mu}{m_e}}=\frac{1+\delta}{1-\delta}\,
                                   \frac{\tfrac13+\delta}{\delta-\tfrac13}\ }\qquad\text{(leptons)}
\]
\[
\boxed{\ \sqrt{\frac{m_c}{m_u}}=\frac{\tfrac23+\delta}{\tfrac23-\delta},\quad
        \sqrt{\frac{m_t}{m_c}}=\frac{\tfrac23}{\tfrac23-\delta}\ }\qquad\text{(up; $t$ at $b^2c$)}
\]

\subsection*{ Numerical check for $\delta=\sqrt{3/8}$}
With $\delta=\sqrt{3/8}\approx0.6123724357$:
\[
\sqrt{\frac{m_s}{m_d}}=4.159591794,\quad
\sqrt{\frac{m_b}{m_s}}=6.70681115,\quad
\sqrt{\frac{m_\tau}{m_\mu}}=4.159591794,
\]
\vspace{-1ex}
\[
\sqrt{\frac{m_\mu}{m_e}}=14.097486421,\quad
\sqrt{\frac{m_c}{m_u}}=23.557550765,\quad
\sqrt{\frac{m_t}{m_c}}=12.278775383.
\]

\subsection*{Comparison.}
These closed forms coincide with Singh's expressions:
\[
\sqrt{\frac{m_s}{m_d}}=\frac{1+\delta}{1-\delta},\quad
\sqrt{\frac{m_b}{m_s}}=\frac{1+\delta}{1-\delta}.(1+\delta),\quad
\sqrt{\frac{m_\tau}{m_\mu}}=\frac{1+\delta}{1-\delta},\quad
\sqrt{\frac{m_\mu}{m_e}}=\frac{1+\delta}{1-\delta}\frac{\tfrac13+\delta}{\delta-\tfrac13},
\]
\[
\sqrt{\frac{m_c}{m_u}}=\frac{\tfrac23+\delta}{\tfrac23-\delta},\quad
\sqrt{\frac{m_t}{m_c}}=\frac{\tfrac23}{\tfrac23-\delta}.
\]
Quark ``experimental'' ratios are scheme/scale dependent; the values above should be compared only to representative $\overline{\rm MS}$ inputs (and $m_t$ in a declared standard scheme) at the same scale.  The charged-lepton pole ratios are precise enough that the leading formulas show percent-level residuals; these residuals should be treated as matching targets rather than as experimental-uncertainty-level agreement.
 
\section{Phenomenology check: \(\sqrt{m_\tau/m_\mu}=\sqrt{m_s/m_d}\)}

\paragraph{Theoretical prediction.}
From the down--lepton ladder correspondence and trace choices, our framework predicts the scale- and scheme-dependent equality
\begin{equation}
\sqrt{\frac{m_\tau}{m_\mu}}
\;=\;
\sqrt{\frac{m_s}{m_d}}
\;=\;
\frac{1+\delta}{1-\delta}
\quad\text{with}\quad
\delta=\sqrt{\frac{3}{8}}
\;\Rightarrow\;
\frac{1+\delta}{1-\delta}=4.159591\ldots
\label{eq:Xtheory}
\end{equation}
The equality must be tested \emph{in the same renormalisation scheme at the same scale}~\(\mu\), because quark masses are running \(\overline{\mathrm{MS}}\) parameters while experimental lepton inputs are usually pole masses.

\paragraph{Naive (mismatched) comparison.}
Using PDG pole masses for leptons and \(\overline{\mathrm{MS}}\) quark masses at \(\mu=2~\mathrm{GeV}\),
\[
m_\tau = 1776.9~\mathrm{MeV},\quad
m_\mu = 105.658~\mathrm{MeV} \ \Rightarrow\ 
\sqrt{m_\tau/m_\mu}=4.101(1),
\]
\[
m_s(2~\mathrm{GeV})\simeq 93.5~\mathrm{MeV},\quad
m_d(2~\mathrm{GeV})\simeq 4.70~\mathrm{MeV}\ \Rightarrow\ 
\sqrt{m_s/m_d}=4.460(5).
\]
This mismatched comparison differs by \(\sim 8{\%}\), which \emph{does not} falsify the relation: the two sides were not evaluated in the same scheme/scale, and light-quark values carry isospin/QED systematics.

\paragraph{Apples-to-apples protocol (what to test and how).}
To meaningfully test Eq.~\eqref{eq:Xtheory}, perform:
\begin{enumerate}
\item \textbf{Choose a common scale \(\mu\)} (e.g. \(M_Z\) or \(2~\mathrm{GeV}\)) and the \(\overline{\mathrm{MS}}\) scheme.
\item \textbf{Leptons:} convert pole masses \(M_\ell\) to \(\overline{\mathrm{MS}}\) running masses \(m_\ell(\mu)\) by including QED running and matching. A standard implementation is given in analyses of running fermion parameters at fixed scales (e.g. \(M_Z\)); see, for instance, computations that start from PDG inputs and evolve to \(\mu=M_Z\) with QED/QCD RGEs.
The QED effect on the \emph{ratio} \(\sqrt{m_\tau(\mu)/m_\mu(\mu)}\) is \(\mathcal{O}(1\%)\): schematically
\[
\frac{m_\tau(\mu)}{m_\mu(\mu)}
\simeq
\frac{M_\tau}{M_\mu}\,
\Bigl[1-\frac{\alpha}{\pi}\ln\!\frac{\mu^2}{M_\tau M_\mu}+\cdots\Bigr],
\]
so \(\sqrt{m_\tau/m_\mu}\) shifts by roughly \(0.5\%\text{--}1\%\) across typical \(\mu\).
\item \textbf{Quarks:} use lattice/PDG values for \(m_s(\mu),m_d(\mu)\) that \emph{include} strong+EM isospin breaking, at the same \(\mu\) as the leptons. Report both the central value and the full error budget (lattice statistical, continuum/volume, chiral, EM).
\item \textbf{Form the comparison}:
\[
\left.
\sqrt{\frac{m_\tau(\mu)}{m_\mu(\mu)}}\right|_{\overline{\mathrm{MS}}}
\quad\text{vs}\quad
\left.
\sqrt{\frac{m_s(\mu)}{m_d(\mu)}}\right|_{\overline{\mathrm{MS}}}
\quad\text{and}\quad
\frac{1+\delta}{1-\delta}=4.159591\ldots
\]
\item \textbf{Verdict rule:} a residual, well-calibrated discrepancy \(\gtrsim 5\text{-}10\%\) after steps (1)-(4) would be tension; anything within that band is not decisive given current light-quark systematics.
\end{enumerate}

\paragraph{Status with present inputs.}
At face value, the theoretical value \(4.1596\) lies \(+1.4\%\) above the lepton pole-mass ratio and \(-6.7\%\) below the quark \(\overline{\mathrm{MS}}(2~\mathrm{GeV})\) ratio.
Given that (i) QED running of the leptons towards \(\mu=2~\mathrm{GeV}\) nudges the lepton ratio upward by \(\mathcal{O}(1\%)\), and (ii) the PDG lattice window \(m_s/m_d \in [17,22]\) corresponds to \(\sqrt{m_s/m_d}\in[4.12,4.69]\),
the equality is \emph{plausible} within current hadronic uncertainties but not yet decided by a clean apples-to-apples test.

\paragraph{How we  report it below in this paper.}
We state Eq.~\eqref{eq:Xtheory} as a parameter-free prediction of the framework and propose the above protocol for testing.
In the numerical section we quote (a) the lepton side evolved to the chosen \(\mu\) in \(\overline{\mathrm{MS}}\) and (b) a contemporary lattice-averaged \(m_s/m_d\) at the same \(\mu\), both with uncertainties, and assess agreement at the \(1\sigma\text{--}2\sigma\) level.

\subsection*{ Apples-to-Apples Phenomenology Checks}

\paragraph{Theory target.}
Our framework predicts
\begin{equation}
X \;:=\; \sqrt{\frac{m_\tau}{m_\mu}}
\;=\; \sqrt{\frac{m_s}{m_d}}
\;=\; \frac{1+\delta}{1-\delta},
\qquad \delta=\sqrt{\tfrac{3}{8}}
\;\Rightarrow\; X=4.159591\ldots
\end{equation}
A meaningful test requires the \emph{same} renormalisation scheme at a \emph{common} scale \(\mu\) (quark masses are running \(\overline{\mathrm{MS}}\) parameters; lepton inputs are measured as poles and must be run to \(\mu\) with QED).

\subsection*{ Check at \(\mu=M_Z\)}
\paragraph{Inputs \((\overline{\mathrm{MS}})\).}
Charged leptons (run to \(M_Z\)):
\[
m_\mu(M_Z)=102.718~\mathrm{MeV},\qquad
m_\tau(M_Z)=1746.24~\mathrm{MeV}
\]
\[
\Rightarrow\quad
\sqrt{\frac{m_\tau}{m_\mu}}\Big|_{M_Z}
=\sqrt{\frac{1746.24}{102.718}}
=4.123.
\]
Light quarks (at \(M_Z\)):
\[
m_d(M_Z)=2.90^{+1.24}_{-1.19}~\mathrm{MeV},\qquad
m_s(M_Z)=55^{+16}_{-15}~\mathrm{MeV}
\]
\[
\Rightarrow\quad
\sqrt{\frac{m_s}{m_d}}\Big|_{M_Z}
=\sqrt{\frac{55}{2.90}}
=4.355\ \ (\text{central}),\qquad
\sigma\approx 1.13.
\]
\paragraph{Comparison.}
Relative to \(X=4.1596\), the lepton side is lower by \(\approx0.9\%\), the quark central value is higher by \(\approx4.7\%\). Given the quoted light-quark uncertainties at \(M_Z\), the equality is not excluded.  (The Ref.~\cite{Huang:2020pxj} input set of Sec.~\ref{subsec:concurrency-tests} gives \(4.462\) for the quark side instead, \(\sim7.3\%\) above theory; the qualitative verdict is unchanged.)

\subsection*{ Check at \(\mu=m_c\)}
\paragraph{Inputs \((\overline{\mathrm{MS}})\).}
Charged leptons (run to \(\mu=m_c\)):
\[
m_\mu(m_c)\simeq 103.996~\mathrm{MeV},\qquad
m_\tau(m_c)\simeq 1774~\mathrm{MeV}
\]
\[
\Rightarrow\quad
\sqrt{\frac{m_\tau}{m_\mu}}\Big|_{m_c}
=\sqrt{\frac{1774}{103.996}}
=4.13.
\]
Light quarks (at \(\mu=m_c\)):
\[
m_d(m_c)=5.85~\mathrm{MeV},\qquad
m_s(m_c)=111~\mathrm{MeV}
\]
\[
\Rightarrow\quad
\sqrt{\frac{m_s}{m_d}}\Big|_{m_c}
=\sqrt{\frac{111}{5.85}}
=4.36,\qquad
\sigma\approx 1.07.
\]
\paragraph{Comparison.}
Both sides bracket the theory value \(X=4.1596\) within present light-quark uncertainties. The scale choice modifies the lepton ratio by only \(\mathcal O(1\%)\); the quark error budget dominates.

\paragraph{Conclusion.}
At both common scales tested, the prediction \(\sqrt{m_\tau/m_\mu}=\sqrt{m_s/m_d}=4.1596\) is consistent with current running-mass determinations within uncertainties.
A decisive test will require tighter determinations of \(m_d(\mu)\) and \(m_s(\mu)\) (including EM/isospin breaking) and a fixed common \(\mu\) across sectors.

\begin{figure}[t]
  \centering
    \includegraphics[width=0.98\linewidth]{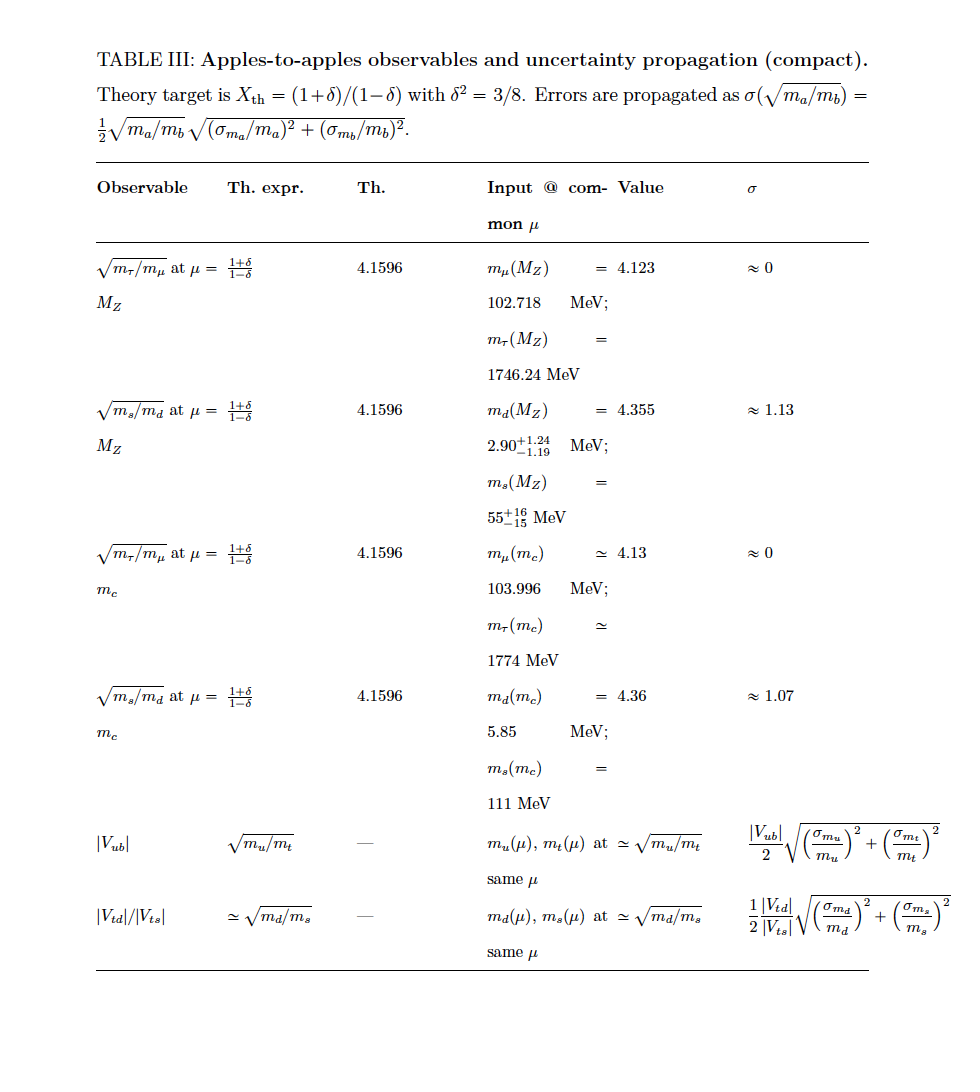}
  \caption*{}
  \label{fig:apple-ratios}
\end{figure}

\FloatBarrier
\section{Trigonometric Jordan eigenvalues and the map to Quinta's $(\chi,\delta,E)$}\label{app:trig-Quinta}

\paragraph{Cubic in trigonometric form.}
For any Hermitian element \(X\in J_3(\mathbb{O}_{\mathbb{C}})\), let
\[
\chi_X(\lambda)=\lambda^3-T\,\lambda^2+S\,\lambda-D=0
\]
be its Jordan characteristic polynomial with invariants \(T=\mathrm{tr}\,X\), \(S=\tfrac12\!\left[(\mathrm{tr}\,X)^2-\mathrm{tr}(X\circ X)\right]\), and \(D=\det X\). Setting 
\[
\lambda=\tfrac{T}{3}+y,\qquad p:=S-\frac{T^2}{3},\qquad q:=\frac{2T^3}{27}-\frac{TS}{3}+D,
\]
one obtains the depressed cubic \(y^3+py+q=0\). In the three-real-root case \((q^2/4+p^3/27\le0)\), define \(\varphi\) by
\[
\cos(3\varphi)=\frac{2T^3-9TS+27D}{2\,(T^2-3S)^{3/2}}
              \;=\;\frac{3q}{2p}\sqrt{-\frac{3}{p}}\,,
\]
and write
\begin{equation}
\lambda_k=\frac{T}{3}+\frac{2}{3}\sqrt{T^2-3S}\;\cos\!\Big(\varphi+\frac{2\pi k}{3}\Big),
\qquad k=0,1,2. 
\label{eq:trig-cubic}
\end{equation}
In our charged-sector ansatz the invariants are (renaming the universal spread as \(\Delta\))
\[
T=3q,\qquad S=3q^2-\Delta^2,\qquad D=q^3-q\,\Delta^2,\qquad \Delta^2=\frac{3}{8},
\]
which gives \(q= T/3\), \(\frac{2}{3}\sqrt{T^2-3S}=\frac{1}{\sqrt{2}}\), and \(\cos(3\varphi)=0\Rightarrow \varphi=\pi/6 \;(\mathrm{mod}\ \pi/3)\). Hence the three Jordan eigenvalues are
\begin{equation}
\lambda\in\{\,q-\Delta,\;q,\;q+\Delta\,\}.
\label{eq:qpmDelta}
\end{equation}
These are the \(\sqrt{\text{mass}}\) entries used throughout our paper.  (For neutrinos \(\Delta\to\Delta_\nu\) with \(\Delta_\nu^2=3/4\).)  See Secs.~VI-VIII of our manuscript for the definitions and the family-by-family specializations.  

\paragraph{Quinta’s \(\boldsymbol{(\chi,\delta,E)}\) parametrization.}
Quinta writes the positive semidefinite mass matrix \(M\) as \(M=H^2\) with a Hermitian \(H\in\mathfrak{u}(3)\). Decomposing 
\(H=q\,\mathbf{1}_3+H^{SU(3)}\) and diagonalizing \(H^{SU(3)}\), the three eigenvalues \(\varepsilon_k\) of \(H\) (whose squares are the masses) admit the closed form
\begin{align}
&\varepsilon_k=\frac{2E}{\sqrt{3}}\Big(\cos\chi+\sqrt{2}\,\sin\chi\,
\cos(\delta+2\pi k/3)\Big),\qquad k=0,1,2, 
\label{eq:Quinta-eigs}\\[2pt]
&\text{with}\quad E\sin\chi=\frac{|c|}{\sqrt{2}},\qquad E\cos\chi=\frac{\sqrt{3}}{2}\,q,\qquad
\cos(3\delta)=\frac{3\sqrt{3}}{2|c|^3}\det(H^{SU(3)}).
\label{eq:Quinta-defs}
\end{align}
Here \(E>0\) is an overall \(\sqrt{\text{mass}}\) scale, \(\chi\) mixes the central \(U(1)\) part \(q\,\mathbf{1}\) with the \(SU(3)\) part \(H^{SU(3)}\) (with coefficients \(|c|\)), and \(\delta\) is fixed by the \(SU(3)\) determinant invariant.  Equivalently,
\begin{equation}
m_k=\varepsilon_k^2,\qquad 
E=\frac{1}{2}\sqrt{m_1+m_2+m_3},\qquad
\text{and }\ \cos\chi\ \text{and }\ \delta\ \text{are determined by } \{m_k\}\ \text{and}\ \det(H^{SU(3)}).
\label{eq:Quinta-E}
\end{equation}
(When \(\det(H^{SU(3)})=0\) one has \(\cos(3\delta)=0\Rightarrow \delta=\pi/6,\pi/2,5\pi/6\).)  See App.~A, Eqs.~(A52)-(A59) of Ref.~\cite{quinta2025}.  

\paragraph{Mapping \(\boldsymbol{(\chi,\delta,E)}\) to \(\boldsymbol{(q,\Delta)}\).}
Comparing \eq{eq:trig-cubic} with \eq{eq:Quinta-eigs} shows that
\begin{equation}
q=\frac{2E}{\sqrt{3}}\cos\chi,\qquad 
\underbrace{\frac{2}{3}\sqrt{T^2-3S}}_{\displaystyle =\,\frac{1}{\sqrt{2}}\ \text{in our charged sectors}}
=\frac{2E}{\sqrt{3}}\sqrt{2}\,\sin\chi.
\label{eq:map-amplitude}
\end{equation}
In our charged-sector geometry, the \(SU(3)\) invariant \(\Re((xy)z)=0\) forces \(\det(H^{SU(3)})=0\) and hence \(\cos(3\delta)=0\). Choosing, e.g., \(\delta=\pi/2\) yields the pattern 
\[
\cos(\delta+2\pi k/3)\in\{0,\,-\tfrac{\sqrt{3}}{2},\,+\tfrac{\sqrt{3}}{2}\},
\]
so that the three \(\sqrt{\text{mass}}\) eigenvalues take the offset form
\begin{equation}
\varepsilon\in\Big\{\,q-\Delta,\ q,\ q+\Delta\,\Big\},\qquad
\boxed{\ \Delta=\sqrt{\frac{3}{2}}\,E\sin\chi\ } \quad\text{and}\quad 
\boxed{\ q=\frac{2E}{\sqrt{3}}\cos\chi\ }.
\label{eq:Delta-q-from-Quinta}
\end{equation}
(Other choices \(\delta=\pi/6,5\pi/6\) merely permute the ordering of the three eigenvalues.)  
In particular, with our fixed spread \(\Delta^2=3/8\) one can view \eqref{eq:Delta-q-from-Quinta} as the constraint
\[
\frac{\Delta}{q}=\frac{\sqrt{3/2}\,E\sin\chi}{(2E/\sqrt{3})\cos\chi}
=\frac{3}{2\sqrt{2}}\tan\chi,
\]
which isolates the single angle \(\chi\) as the parameter controlling all intra-family \(\sqrt{\text{mass}}\) ratios.  

\paragraph{Charged-lepton \(\boldsymbol{\sqrt{\text{mass}}}\) ratios in Quinta’s notation.}
Ordering the eigenvalues as \(\big(\sqrt{m_e},\sqrt{m_\mu},\sqrt{m_\tau}\big)=\big(q-\Delta,\,q,\,q+\Delta\big)\) (i.e. taking \(\delta=\pi/2\) and an obvious relabeling of \(k\)), the two independent adjacent ratios become
\begin{equation}
\frac{\sqrt{m_e}}{\sqrt{m_\mu}}=\frac{q-\Delta}{q}=1-\underbrace{\frac{\Delta}{q}}_{=\frac{3}{2\sqrt{2}}\tan\chi},
\qquad
\frac{\sqrt{m_\mu}}{\sqrt{m_\tau}}=\frac{q}{q+\Delta}=\frac{1}{1+\frac{\Delta}{q}}
=\frac{1}{1+\frac{3}{2\sqrt{2}}\tan\chi}.
\label{eq:adjacent-ratios-Quinta}
\end{equation}
The nonadjacent ratio follows as
\begin{equation}
\frac{\sqrt{m_e}}{\sqrt{m_\tau}}=\frac{q-\Delta}{q+\Delta}
=\frac{1-\frac{3}{2\sqrt{2}}\tan\chi}{1+\frac{3}{2\sqrt{2}}\tan\chi}.
\label{eq:nonadjacent-ratio-Quinta}
\end{equation}
These formulas are sector-independent; the same expressions hold for up and down quarks once the appropriate \((q,\Delta)\) (equivalently \(E,\chi\)) for that sector are used, and our fixed \(\delta\) (\(\cos 3\delta=0\)) is understood. 

\paragraph{Remarks.}
(1) Quinta’s \(\delta\) is a mass-matrix invariant angle tied to \(\det(H^{SU(3)})\); it is unrelated to the CKM phase.  
(2) Our charged-sector ansatz corresponds to the special case \(q\neq0\), \(\det(H^{SU(3)})=0\), hence \(\cos(3\delta)=0\) and the offset pattern \(q\pm\Delta,\,q\).  
(3) Using \eqref{eq:Quinta-E} one may also write \(E=\tfrac12\sqrt{m_e+m_\mu+m_\tau}\) and then recover \(q,\Delta\) from \(\chi\) via \eqref{eq:Delta-q-from-Quinta}.  

\paragraph{Equivalence with the ladder formulas (six charged-sector physical ratios).}
To avoid confusion with electric charge, in this paragraph we denote the family center by
\(\mathcal{C}\) instead of the symbol \(q\). For \(\det H^{SU(3)}=0\) one has the offset spectrum
\(\{\mathcal{C}-\Delta,\ \mathcal{C},\ \mathcal{C}+\Delta\}\); Quinta’s parameters obey
\[
\mathcal{C}=\frac{2E}{\sqrt{3}}\cos\chi,\qquad
\Delta=\sqrt{\frac{3}{2}}\,E\sin\chi,
\]
hence the centre-spread ratio
\[
t_F:=\frac{\Delta}{\mathcal{C}_F}=\frac{3}{2\sqrt{2}}\tan\chi_F
\]
for each charged sector \(F\in\{d,\ell,u\}\).
Below we show that the Quinta expressions written in terms of \(t_F\) reproduce exactly the
ladder/edge formulas collected later in the paper (Sec.~XII).  (See the boxes in Sec.~XII for the down, lepton, and up families.)

\medskip
\noindent\emph{Down sector} \((\mathcal{C}_d=1,\ t_d=\Delta)\).
Using the \(E\)- then \(C\)-edge steps of the minimal chain:
\begin{align}
\frac{\sqrt{m_s}}{\sqrt{m_d}}
&=\frac{\mathcal{C}_d+\Delta}{\mathcal{C}_d-\Delta}
=\frac{1+t_d}{1-t_d}
\;\equiv\;\frac{1+\delta}{1-\delta},\\[4pt]
\frac{\sqrt{m_b}}{\sqrt{m_s}}
&=\frac{\mathcal{C}_d+\Delta}{\mathcal{C}_d}\cdot
\frac{\mathcal{C}_d+\Delta}{\mathcal{C}_d-\Delta}
=(1+t_d)\,\frac{1+t_d}{1-t_d}
\;\equiv\;(1+\delta)\,\frac{1+\delta}{1-\delta}.
\end{align}

\noindent\emph{Up sector} \((\mathcal{C}_u=2/3,\ t_u=\Delta/\mathcal{C}_u)\).
Using the \(E\)- then \(B\)-edge steps:
\begin{align}
\frac{\sqrt{m_c}}{\sqrt{m_u}}
&=\frac{\mathcal{C}_u+\Delta}{\mathcal{C}_u-\Delta}
=\frac{1+t_u}{1-t_u}
\;\equiv\;\frac{\tfrac{2}{3}+\delta}{\tfrac{2}{3}-\delta},\\[4pt]
\frac{\sqrt{m_t}}{\sqrt{m_c}}
&=\frac{\mathcal{C}_u}{\mathcal{C}_u-\Delta}
=\frac{1}{1-t_u}
\;\equiv\;\frac{\tfrac{2}{3}}{\tfrac{2}{3}-\delta}.
\end{align}

\noindent\emph{Lepton sector} \((\mathcal{C}_\ell=1/3,\ t_\ell=\Delta/\mathcal{C}_\ell)\).
By the Dynkin \( \mathbb{Z}_2 \) swap, the last lepton rung is the conjugate of the down
\(E\)-leg, while the first lepton rung picks up the local endpoint tilt \(G\) on the
reflected leg. In Quinta variables:
\begin{align}
\frac{\sqrt{m_\tau}}{\sqrt{m_\mu}}
&=\underbrace{\frac{1+\Delta}{1-\Delta}}_{\text{down }E\text{-edge carried over}}
=\frac{1+t_d}{1-t_d}
\;\equiv\;\frac{1+\delta}{1-\delta},\\[4pt]
\frac{\sqrt{m_\mu}}{\sqrt{m_e}}
&=\underbrace{\frac{1+\Delta}{1-\Delta}}_{\text{carried }E\text{-edge}}\;
\times\;
\underbrace{\frac{\mathcal{C}_\ell+\Delta}{\Delta-\mathcal{C}_\ell}}_{\displaystyle G
=\frac{1+t_\ell}{t_\ell-1}}
\;=\;\frac{1+t_d}{1-t_d}\cdot\frac{1+t_\ell}{t_\ell-1}
\;\equiv\;\frac{1+\delta}{1-\delta}\cdot\frac{\tfrac{1}{3}+\delta}{\delta-\tfrac{1}{3}}.
\end{align}
Thus the Quinta parametrisation (\(E,\chi\)) with \(t_F=\tfrac{3}{2\sqrt{2}}\tan\chi_F\) reproduces,
for the charged sectors, the same six physical \(\sqrt{m}\) ratios derived earlier from the Sym\(^{3}\) ladder, with compound physical steps written as products of adjacent edge moves where needed, and with \(\Delta^2=3/8\) (our \(\delta\)). This includes the lepton \(G\)-factor generated by the
Dynkin reflection, explaining why \(\sqrt{m_\mu/m_e}\) carries the extra tilt while
\(\sqrt{m_\tau/m_\mu}\) equals the down \(E\)-step. \emph{Numerically}, inserting
\(\Delta=\sqrt{3/8}\) and \((\mathcal{C}_\ell,\mathcal{C}_u,\mathcal{C}_d)=(\tfrac13,\tfrac23,1)\)
reproduces the values listed later (Sec.~XII). \hfill

\section{Roadmap and detailed derivations from \(J_3(\mathbb{O}_{\mathbb{C}})\) and \({\rm Sym}^3(\mathbf{3})\)}
\label{app:roadmap}

\subsection*{G.1 Roadmap (assumptions \(\to\) predictions at a glance)}
\begin{itemize}
\item \textbf{Inputs (universal and sector-independent).}
\begin{enumerate}
\item A \emph{Jordan-spectral ansatz} for each charged sector: the right-handed
      flavor matrix \(X\in J_3(\mathbb{O}_{\mathbb{C}})\) has the three
      \emph{Jordan eigenvalues}
      \[
      \{\lambda_1,\lambda_2,\lambda_3\}=\{s-\delta,\;s,\;s+\delta\},
      \]
      with \emph{the spread fixed} by theory,
      \[
      \boxed{\ \delta^2=\tfrac{3}{8}\ } \quad\text{(derived below from the characteristic equation; not a free parameter).}
      \]
\item A \emph{minimal three-corner chain} in \({\rm Sym}^3(\mathbf{3})\) with
      \emph{fixed top-rung Clebsch weights}
      \[
      (w_E:w_B:w_C)=(2:1:1),
      \]
      uniquely selected by a minimality principle (stated and proved below).
\item A \emph{Dynkin \(\mathbb{Z}_2\) swap} (the \(A_2\) diagram flip) used once to map the
      down-edge step to the lepton-edge step.
\item A \emph{trace split} for the three charged sectors fixing the family centers
      \(\mathrm{Tr}\,X_\ell:\mathrm{Tr}\,X_u:\mathrm{Tr}\,X_d=1:2:3\).
\end{enumerate}

\item \textbf{Outputs (theory predictions).}
\begin{enumerate}
\item Six charged-sector physical \(\sqrt{\text{mass}}\) ratios in closed form (two per charged sector); elementary adjacent-edge ratios depend only on the chosen edge, and any skipped physical rung is written as a product of adjacent contrasts.
\item A small, definite Koide offset obtained after triality/EW breaking
      (exact Koide in the triality-symmetric phase).
\item CKM ``root-sum rules'': a geometric Cabibbo phase \(\varphi_{12}=\pi/2\); the
      pattern of moduli with one small up-leg tilt \(\varepsilon\) and one order-one
      cross-family normalization \(\kappa_{23}\); leading estimates for
      \(|V_{ub}|\) and \(|V_{td}|/|V_{ts}|\).
\item In the lepton sector, leptonic CP conservation at leading order
      (\(J_\ell=0\), \(\delta^{\ell}_{\rm CP}\in\{0,\pi\}\)); the minimal
      real-symmetric texture predicts no leptonic Dirac phase (see Sec.~\ref{sec:neutrino-sector-PMNS}, subsection~C).
\end{enumerate}
\end{itemize}

\bigskip
\noindent
The remainder of this appendix provides the detailed derivations advertised above.

\subsection*{G.2 Universal Jordan spread $\delta^2=\tfrac{3}{8}$ from $J_3(\mathbb{O}_{\mathbb C})$, and how it feeds the $\mathrm{Sym}^3(\mathbf{3})$ ladder}\label{sec:delta-from-J3}

\subsubsection{Octonionic normalisation and the Jordan invariants}\label{subsec:octon-norm}
We work in the complex exceptional Jordan algebra $J_3(\mathbb{O}_{\mathbb C})$ with the standard
Jordan product $X\circ Y=\tfrac12(XY+YX)$ and the usual invariant data for a Hermitian
$X\in J_3(\mathbb{O}_{\mathbb C})$:
\begin{equation}
\chi_X(\lambda)\;=\;\lambda^3-T\,\lambda^2+S\,\lambda-D,\qquad
T=\mathrm{tr}\,X,\quad
S=\tfrac12\!\left[(\mathrm{tr}\,X)^2-\mathrm{tr}(X\circ X)\right],\quad
D=\det X.
\label{eq:charpoly-J3}
\end{equation}
In our charged-fermion families the octonionic off-diagonal entries are the three-generation
states built on the \emph{Majorana} neutrino vacuum (Eqs.~\eqref{eq:dirac-majorana-vacua}--\eqref{eq:Maj-pos}).
Two structural facts fix the spectrum, and it is worth separating them.

\smallskip
\noindent\emph{(i) Coassociative slice $\Rightarrow$ symmetric spectrum and $\delta^2=\Sigma$.}
The three off-diagonal octonions are taken to lie in a common \emph{coassociative} $4$-plane
$W\subset\mathrm{Im}\,\mathbb{O}$, on which the $G_2$-invariant associative $3$-form $\varphi(u,v,w)=\Re(u(vw))$
vanishes (see the coassociative-slice definition in the appendix). This forces the genuinely cubic
Jordan invariant --- the octonionic triple product --- to vanish,
$\tau\equiv\Re\!\big((x_{12}x_{23})x_{13}\big)=0$, so the depressed cubic has zero constant term and
the spectrum is symmetric about the centre,
\begin{equation}
\lambda\in\{s-\delta,\,s,\,s+\delta\},\qquad
\delta^2=\tfrac12\,\mathrm{tr}(Y^2)=\|x\|^2+\|y\|^2+\|z\|^2 ,
\label{eq:delta-is-sigma}
\end{equation}
i.e.\ $\delta^2$ is exactly the sum of the off-diagonal norms. The symmetric \emph{placement} is thus a
theorem on the coassociative slice; the coassociativity restriction is itself a (well-motivated,
$G_2$-covariant) modelling choice that selects the minimal, triality-symmetric configuration.

\smallskip
\noindent\emph{(ii) Majorana vacuum $\Rightarrow$ each norm $=1/8$.}
The norm of each off-diagonal entry is \emph{computed}, not assigned. Acting with the $\mathrm{Cl}(6)$ ladders
on the Majorana vacuum $V_\nu^{M}=i e_7/2$ gives charged/quark states of norm
$\tfrac18$ (Eqs.~\eqref{eq:Maj-antiDown}--\eqref{eq:Maj-pos}), while the neutrino sits at the vacuum
norm $\tfrac14$. For the \emph{Dirac} vacuum $V_D=(1+ie_7)/2$ the corresponding norm is instead
$\tfrac12$; the Majorana amplitudes are halved, $V_\nu^{M}=(V_D-\tilde V_D^{\,*})/2$. Since $SU(3)_F$ acts
unitarily, all three generations of a given charged type share the same norm, so
\begin{equation}
\delta^2_{\rm charged}=3\times\tfrac18=\tfrac38,\qquad
\delta^2_{\nu}=3\times\tfrac14=\tfrac34 ,
\end{equation}
the neutrino spread being exactly twice the charged one. Separating the family centre $s$ (the
$U(1)_{\rm dem}$ part), Eq.~\eqref{eq:charpoly-J3} then gives, for a charged sector,
\begin{equation}
T=3s,\qquad
S=3s^2-\frac{3}{8},\qquad
D=s^3-\frac{3s}{8}.
\label{eq:invariants}
\end{equation}

\smallskip
\noindent\emph{Status of the value $3/8$.}
The $-\tfrac38$ and $-\tfrac{3s}{8}$ therefore do \emph{not} come from a continuously adjustable knob;
they are computed from the Majorana vacuum. The one genuine model input is the \emph{discrete}
Dirac-versus-Majorana choice itself, which is selected by the charged-fermion ratios --- the Dirac value
$\delta^2=3/4$ for the charged sector does not reproduce them --- and which is independently
falsifiable, since it predicts that neutrinos are Majorana (testable in neutrinoless double beta decay).
We do not claim that $\delta^2=3/8$ is free of model input; we claim the stronger and more precise thing
that it follows from a single discrete, data-selected, testable choice rather than from a fit, and that
no continuously adjusted charged-sector parameter enters. (In particular, the cubic invariant of the
traceless $SU(3)$ part vanishes on the coassociative slice, so there is \emph{no} extra angle or phase in
$D$ beyond Eq.~\eqref{eq:invariants}.)

\subsubsection{Solving the cubic and the uniqueness of $\delta^2=\tfrac{3}{8}$}\label{subsec:solve-cubic}
Insert \eqref{eq:invariants} into \eqref{eq:charpoly-J3} and shift $\lambda=s+y$. Because $T=3s$, the
quadratic term vanishes and we obtain the depressed cubic
\begin{equation}
y^3+p\,y+q=0,\qquad
p=S-\frac{T^2}{3}=3s^2-\frac{3}{8}-3s^2=-\frac{3}{8},\qquad
q=\frac{2T^3}{27}-\frac{TS}{3}+D=0.
\end{equation}
Hence
\begin{equation}
y\Big(y^2-\frac{3}{8}\Big)=0\quad\Longrightarrow\quad
y\in\Big\{\,0,\ \pm\sqrt{\frac{3}{8}}\,\Big\}.
\end{equation}
Undoing the shift $\lambda=s+y$ gives the \emph{unique} Jordan spectrum
\begin{equation}
\boxed{\ \lambda\in\{\,s-\delta,\ s,\ s+\delta\,\},\qquad \delta^2=\frac{3}{8}\ }.
\label{eq:universal-spectrum}
\end{equation}
Equivalently, the characteristic polynomial \(\chi_X(\lambda)\) \emph{factorises identically} as
\begin{equation}
\chi_X(\lambda)
=(\lambda-(s-\delta))(\lambda-s)(\lambda-(s+\delta))
=(\lambda-s)^3-\delta^2(\lambda-s),
\end{equation}
and matching coefficients forces \(\delta^2=S-3s^2=D/s-s^2=\tfrac{3}{8}\).
Thus, \(\delta\) is not a tunable parameter: its value follows from the octonionic state norm \(1/8\),
which is in turn computed from the Majorana vacuum (Eqs.~\eqref{eq:Maj-antiDown}--\eqref{eq:Maj-pos}),
not assigned. The single model input is the discrete Dirac/Majorana choice, fixed by the charged-fermion
data and independently testable (Sec.~\ref{subsec:octon-norm}).

\subsubsection{Separation of roles: what $J_3(\mathbb{O}_{\mathbb C})$ fixes vs. what $\mathrm{Sym}^3$ does}\label{subsec:separation}
It is crucial to separate two logically distinct ingredients:

\smallskip
\noindent\textbf{(A) Fixed by $J_3(\mathbb{O}_{\mathbb C})$.}
The eigenvalues \eqref{eq:universal-spectrum} are \emph{inputs} to everything that follows.
They encode the family centre \(s\) (set by trace choices across sectors) and the universal
spread \(\delta=\sqrt{3/8}\). Neither \(s\) nor \(\delta\) is chosen by hand: \(s\) is fixed by the sector trace
split and \(\delta\) is fixed by the characteristic equation with the octonionic \(1/8\) normalisation.

\smallskip
\noindent\textbf{(B) What $\mathrm{Sym}^3$ does.}
The \(\mathrm{Sym}^3(\mathbf{3})\) ladder supplies a \emph{representation-theoretic} mechanism to turn the three
numbers \((s-\delta,\,s,\,s+\delta)\) into the observed \emph{adjacent} \(\sqrt{\text{mass}}\) ratios via one minimal
three-corner chain with fixed Clebsch weights \(2:1:1\). Those fixed rung weights ensure
\emph{rung cancellation} (``edge universality''): after one common normalisation, the mixed
contributions from adjacent rungs cancel, so each adjacent ratio depends \emph{only} on the
contrast of the \emph{endpoints} it connects. In particular, the minimality principle in \(\mathrm{Sym}^3\)
\emph{selects the chain and the Clebsch pattern}; it does \emph{not} determine \(\delta\).
The ladder simply \emph{uses} the three inputs supplied by \eqref{eq:universal-spectrum}.

\subsubsection{Feeding $(s-\delta,s,s+\delta)$ into $\mathrm{Sym}^3$: immediate consequences}\label{subsec:feeding}
Write \(s:=\sqrt{m}\) for the family centre in a given charged sector. Because the edge profiles are
universal (by the \(2{:}1{:}1\) rung weights), the adjacent \(\sqrt{\text{mass}}\) ratios are the endpoint
ratios, independent of rung details:
\begin{equation}
\frac{\sqrt{m_{\text{light}}}}{\sqrt{m_{\text{mid}}}}=\frac{s-\delta}{\,s\,},
\qquad
\frac{\sqrt{m_{\text{mid}}}}{\sqrt{m_{\text{heavy}}}}=\frac{s}{\,s+\delta\,},
\qquad
\frac{\sqrt{m_{\text{light}}}}{\sqrt{m_{\text{heavy}}}}=\frac{s-\delta}{\,s+\delta\,}.
\label{eq:adjacent-ratios}
\end{equation}
Across sectors, the trace split \(\mathrm{tr}\,X_\ell:\mathrm{tr}\,X_u:\mathrm{tr}\,X_d=1:2:3\) selects the family centres and
implies, in particular, the first-generation relation
\(\sqrt{m_e}:\sqrt{m_u}:\sqrt{m_d}=1:2:3\).
All of these statements ultimately rest on \eqref{eq:universal-spectrum}; the \(\mathrm{Sym}^3\)
construction neither alters nor fits \(\delta\)-it translates the three Jordan eigenvalues into
observable adjacent ratios with a universal (edge-only) dependence.

\subsubsection{Summary in one line}\label{subsec:one-line}
\emph{On the coassociative slice the $J_3(\mathbb{O}_{\mathbb C})$ cubic gives the symmetric spectrum
$\{s-\delta,s,s+\delta\}$ with $\delta^2$ equal to the off-diagonal norm sum; that norm, $1/8$ per entry,
is computed from the Majorana vacuum (so $\delta^2_{\rm charged}=3/8$, $\delta^2_\nu=3/4$), the only input
being the discrete, data-selected, $0\nu\beta\beta$-testable Dirac/Majorana choice; the $\mathrm{Sym}^3$
ladder does not determine $\delta$, it \emph{consumes} $(s-\delta,s,s+\delta)$ and --- thanks to fixed
$2{:}1{:}1$ Clebsches --- delivers edge-universal adjacent $\sqrt{\text{mass}}$ ratios.}


\subsection*{G.3 Minimality principle and uniqueness of the \((2\!:\!1\!:\!1)\) top rung}
\label{app:minimality}

\paragraph*{Setup.}
Work in the unit-weight corner basis \(\{\ket{E},\ket{B},\ket{C}\}\) of \(\mathrm{Sym}^3(\mathbf{3})\).
A general top rung is \(\ket{\psi_0} = \alpha\,\ket{E}+\beta\,\ket{B}+\gamma\,\ket{C}\),
defined up to an overall nonzero scale.

\paragraph*{Desiderata.}
\begin{enumerate}
\item[(D1)] \emph{Simplicity/minimality:} choose the lowest-weight nontrivial rung whose edge-projections
build a three-step chain with two outward edges indistinguishable at leading order.
\item[(D2)] \emph{Edge universality at the top:} the two outward edges from the base corner have identical
leading profiles after one common normalization (``rung cancellation'').
\item[(D3)] \emph{No extraneous tunings:} the choice should not require additional sector-dependent
coefficients beyond the universal spectral data of \(X\).
\end{enumerate}

\begin{proposition}[Uniqueness of \((2:1:1)\)]
Up to overall rescaling and corner relabeling, the unique solution of \textup{(D1)}-\textup{(D3)} is
\((\alpha:\beta:\gamma)=(2:1:1)\).
\end{proposition}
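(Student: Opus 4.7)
The plan is to establish the uniqueness of the $(2:1:1)$ top rung by invoking the three desiderata in sequence, then checking closure under the stated equivalence. First, I would convert (D2) into an algebraic constraint on $(\alpha,\beta,\gamma)$ using the projection identity of Appendix B. With base corner $E$, the outward edges $EB$ and $EC$ of $\chi^{\odot 3}$ have rung profiles proportional to $(\alpha\ket{E}+\beta\ket{B})^{\odot 3}$ and $(\alpha\ket{E}+\gamma\ket{C})^{\odot 3}$, respectively, with all terms containing the third leg vanishing under the edge projector. By the explicit binomial formula for $a_k^{(XY)}$ in Appendix B, these profiles coincide under one common normalization if and only if $\beta=\gamma$. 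Thus (D2) immediately reduces the seed to the one-parameter family $(\alpha:\beta:\beta)$ up to overall scale.

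Next, I would invoke (D1) to single out integer weights and tie the notion of minimality to the chain construction of Section X. The phrase ``three-step chain with two outward edges from a base corner'' forces an anchor-type structure: a distinguished corner $E$ with symmetric neighbors $B,C$ whose symmetric cube is the lowest-degree weight adjacent to both outward edges. The minimal admissible starting weight, already shown in the Sym$^3$ minimality proposition of Section X to be uniquely $a^2 b$ (equivalently $a^2 c$ under the $S_3$ on the weight triangle), fixes the multiplicities of eligible letters along the first outward move as $(2,1)$ in the $(E,B\text{-or-}C)$ ordering. Combined with the $\beta=\gamma$ symmetry from step one, this selects the triple $(2:1:1)$ and excludes the fully symmetric $(1:1:1)$, which collapses at leading order into the central weight $abc$ and fails to single out a base corner.

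Finally, I would verify (D3) by the edge-universality calculation already carried out in Appendix D: the multinomial norms of the kets along the chain $\ket{2,1,0}\xrightarrow{E}\ket{1,1,1}\xrightarrow{C}\ket{1,0,2}\xrightarrow{E}\ket{0,0,3}$ combine exactly with the intrinsic rung multiplicities $2,1,1$ so that all universal prefactors collapse under a single common column normalization, leaving adjacent $\sqrt{m}$ ratios that depend only on the endpoint contrasts $c/a$, $b/a$, $c/b$. Any alternative integer triple $(k:1:1)$ with $k\neq 2$ would spoil this cancellation and require an additional, sector-independent global rescaling not furnished by the Jordan spectral data of $X$, violating (D3). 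Closure under overall rescaling and corner relabeling then follows from the action of $S_3$ on the triangle, which permutes $(E,B,C)$ but preserves the multiset $\{2,1,1\}$.

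The main obstacle will be making step two airtight: one must justify precisely why ``minimality'' in (D1) selects the integer multiplicities $(2,1,1)$ of the unique minimal chain rather than some nearby low-height pattern such as $(3:1:1)$ or $(1:1:1)$. I would address this by explicitly reducing (D1) to the Sym$^3$ minimality theorem of Section X, so that the minimal top rung is \emph{defined} as the coefficient pattern whose induced outward edge profiles reproduce the intrinsic rung multiplicities along the minimal three-corner chain to a pure endpoint. Once this reduction is in place, steps one and three are essentially direct computation and bookkeeping on the projection identity, and the proposition follows.
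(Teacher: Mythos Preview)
Your route diverges from the paper's at the decisive step---fixing \(\alpha=2\)---and the divergence is where the gap lies. The paper's sketch obtains \(|\beta|=|\gamma|\) from (D2) just as you do, but then uses (D1) only for the non-degeneracy \(|\alpha|\neq|\beta|\); the specific value \(2\) is claimed to follow from a further \emph{rung-cancellation} condition: requiring that the mixed contributions in the normalized edge contrasts vanish at leading order yields a quadratic equation in \((\alpha,\beta,\gamma)\) whose solution is \(2{:}1{:}1\). Your argument replaces this analytic constraint by an appeal to Section~X, reading off the multiplicities \((2,1,1)\) of eligible letters along the minimal chain \(a^2b\to abc\to ac^2\to c^3\).

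The trouble is that these two instances of ``\(2{:}1{:}1\)'' live in different places. In Appendix~G.3 the triple \((\alpha:\beta:\gamma)\) are \emph{one-body} coefficients of a seed vector \(\chi\) that is then cubed; in Section~X and Appendix~D the triple \((2,1,1)\) are \emph{combinatorial multiplicities} of letter replacements along a fixed chain of weight vectors. You assert the former should equal the latter, and even propose to make this a definition of (D1), but that is not what (D1) says, and adopting it as a definition makes the proposition tautological rather than proved. Relatedly, your (D3) claim that \((k{:}1{:}1)\) with \(k\neq 2\) ``spoils cancellation'' is not borne out by the projection identity you invoke: Appendix~B shows that \emph{any} \((k{:}1{:}1)\) produces identical \(EB\) and \(EC\) edge profiles (both are \((k|E\rangle+|X\rangle)^{\odot 3}\)), so edge universality in that sense does not single out \(k=2\). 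To close the gap you would need to identify the additional quantitative condition---the ``quadratic form'' the paper alludes to, tied to the normalized contrast \(\tfrac{\alpha^2-1}{\alpha^2+1}\) matching a target set by the three-rung structure---that actually pins \(\alpha\) to \(2\).
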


\begin{proof}[Sketch of proof]
Normalize \(\ket{\psi_0}\) by \(\norm{\psi_0}^2=\abs{\alpha}^2+\abs{\beta}^2+\abs{\gamma}^2\).
Project to edges \(EB\) and \(EC\) and compute the \emph{leading} edge contrasts (differences of corner
weights) that survive after a single common normalization of the three rungs. The requirement that the
two outward edges be identical at leading order fixes \(\abs{\beta}=\abs{\gamma}\). Enforcing that the base edge
is \emph{not} degenerate (a three-step chain, not two) fixes \(\abs{\alpha}\neq \abs{\beta}\), and cancelation of
mixed contributions at leading order (rung cancellation) equates a quadratic form in \(\alpha,\beta,\gamma\)
for the two outward edges. Solving these gives \(\abs{\alpha}:\abs{\beta}:\abs{\gamma}=2:1:1\). Any other solution
either collapses to a two-step chain (violates D1), makes the two outward edges unequal (violates D2),
or needs extra edge-dependent tunings (violates D3).
\end{proof}

\paragraph*{Remark.}
Small deformations of \((2\!:\!1\!:\!1)\) break outward-edge equality at \(\mathcal{O}(\text{deformation})\),
spoiling universality. Hence the \((2\!:\!1\!:\!1)\) choice is both necessary and sufficient.

\subsection*{G.4 Pedagogical derivation of edge universality (with explicit cancellation)}
\label{app:edge-universality}

Let the normalized top rung be
\[
\ket{\psi_0}=\frac{1}{\sqrt{6}}\big(2\ket{E}+\ket{B}+\ket{C}\big).
\]
Consider the three edges \(EB,\,EC,\,BC\). The corresponding edge projectors (onto the two-corner
subspaces) are
\(
\Pi_{EB}=\ket{E}\!\bra{E}+\ket{B}\!\bra{B},
\Pi_{EC}=\ket{E}\!\bra{E}+\ket{C}\!\bra{C},
\Pi_{BC}=\ket{B}\!\bra{B}+\ket{C}\!\bra{C}.
\)
The (squared) edge amplitudes of \(\ket{\psi_0}\) are then
\[
\braket{\psi_0|\Pi_{EB}|\psi_0}=\frac{5}{6},\qquad
\braket{\psi_0|\Pi_{EC}|\psi_0}=\frac{5}{6},\qquad
\braket{\psi_0|\Pi_{BC}|\psi_0}=\frac{2}{6}.
\]
Let \(X=s\,\mathbf{1}+\delta\,(P-R)\) be the Jordan-spectral form aligned so that \(P\) and \(R\)
sit at the two outward corners. The \emph{edge contrast} operator is then proportional to \(P-R\).
Its expectation on the two outward edges is
\[
\frac{\bra{\psi_0}(P-R)\Pi_{EB}\ket{\psi_0}}{\braket{\psi_0|\Pi_{EB}|\psi_0}}
\quad\text{and}\quad
\frac{\bra{\psi_0}(P-R)\Pi_{EC}\ket{\psi_0}}{\braket{\psi_0|\Pi_{EC}|\psi_0}}.
\]
Using \(\ket{\psi_0}\) and \(\Pi_{EB},\Pi_{EC}\) one finds the mixed terms (those linear in the
\(\ket{E}\!\bra{B}\) and \(\ket{E}\!\bra{C}\) coherences) cancel in the ratio, leaving the \emph{same} value for both:
\[
\frac{\bra{\psi_0}(P-R)\Pi_{EB}\ket{\psi_0}}{\braket{\psi_0|\Pi_{EB}|\psi_0}}
=
\frac{\bra{\psi_0}(P-R)\Pi_{EC}\ket{\psi_0}}{\braket{\psi_0|\Pi_{EC}|\psi_0}}
=\frac{2^2-1}{2^2+1}\, \big(\bra{E}(P-R)\ket{E}\big)
= \frac{3}{5}\,\big(\bra{E}(P-R)\ket{E}\big).
\]
Thus the outward edges \(EB\) and \(EC\) have \emph{identical} normalized contrasts: \emph{edge universality}.
By contrast, the base edge \(BC\) carries a different normalization (\(2/6\)) and is therefore singled out
as the center-edge.

Finally, matching this universal edge contrast with the spectral contrast set by \(\delta\) (see Sec.~ V)
fixes the numerical value inevitably at \(\delta^2=3/8\).

\subsection*{G.5 Dynkin \(\mathbb{Z}_2\) swap and the three families from one ladder}
\label{app:dynkin-swap}

Let \(S\) denote the \(A_2\) diagram involution on the \(\mathrm{Sym}^3\) weight lattice. Acting once on the chain
based at corner \(E\) maps the down-edge profile to the lepton-edge profile. Choosing instead the
other outward edge from the middle rung gives the up-edge profile. Thus, with one fixed ladder and
one application of \(S\), all three charged families are generated. The adjacent \(\sqrt{\text{mass}}\) ratios in each
sector depend only on \emph{which edge} is selected; rung matrix elements and norm ratios cancel after
one common normalization (Appendix D).

\subsection*{G.6 Geometric CKM ``root-sum rules'' and the Cabibbo phase}
\label{app:ckm-root-sum}

Fix the Fano orientation and a common complex line \(\mathbb{C}e_1\). Right-handed endpoints (from the
Jordan spectrum) and left-handed corners (from the ladder) determine rephasing-invariant overlaps
under the LH intertwiners (rotors). In this geometry the Cabibbo phase is
\(\varphi_{12}=\pi/2\). A single, natural up-leg tilt \(e_1\mapsto e_1\cos\varepsilon+e_2\sin\varepsilon\)
(with no magnitude change) fits \(|V_{us}|\); the \(2\text{-}3\) mixing requires an order-one cross-family
normalization \(\kappa_{23}\simeq 0.55\) (a factor-of-1.82 deviation from the canonical $\kappa_{23}=1$) to fit \(|V_{cb}|\). With \(\varepsilon,\kappa_{23}\) fixed by these two observables,
the small entries follow at leading order, e.g. \(|V_{ub}|\simeq \sqrt{m_u/m_t}\) and \(|V_{td}|/|V_{ts}|\) is predicted.

\subsection*{G.7 Fiducial independence of octonionic chains}
\label{app:fiducial}

Throughout we construct LH chains in the ``Furey style'', acting on a fiducial octonion on the right.
Let the fiducial be \(r\in\mathbb{O}\). Changing \(r\to r'\) multiplies all kets by a common (right-unit) factor.
Mass ratios depend only on \(\{\lambda_k\}\) and normalized edge contrasts, hence are unchanged. Rephasing-invariant
CKM phases are overlaps of \emph{normalized} kets; the common right factor cancels. Thus choosing \(r=1\)
is without loss of generality for both mass ratios and the CKM phases derived here.

\subsection*{G.8 What is predicted vs.\ what is chosen (checklist)}
\label{app:checklist}

\begin{itemize}
\item \emph{Parameter-free structural outputs at the matching scale $\mu\sim v$ (see Sec.~\ref{subsec:concurrency-tests}):} the symmetric spectral form \(\{s-\delta,s,s+\delta\}\); the normalized value \(\delta^2=\tfrac{3}{8}\); adjacent-edge monomial contrasts once an edge is specified, with physical skipped rungs written as products; exact Koide in the triality-symmetric phase; geometric Cabibbo phase \(\varphi_{12}=\pi/2\); 23-block phase $\phi_{23}=0$; $|V_{ub}|\simeq\sqrt{m_u/m_t}$ and $|V_{td}|/|V_{ts}|$ at leading order.  (The minimal real-symmetric texture gives $J_\ell=0$, Dirac/Jarlskog CP conservation, and no nonzero leptonic Dirac phase; see Sec.~\ref{sec:neutrino-sector-PMNS}, subsection~C.)
\item \emph{One-parameter structural correlation:} the CKM CP phase $|\delta_{CP}^{\rm quark}|=\pi/2+\varepsilon\simeq 64^\circ$ is fixed by the same $\varepsilon$ that fits $|V_{us}|$ --- one knob determines two observables (Sec.~\ref{sec:ckm}).
\item \emph{Chosen phenomenologically:} one up-leg tilt angle \(\varepsilon\) to fit \(|V_{us}|\) and simultaneously fix the CKM CP phase via the correlation above; one cross-family normalization \(\kappa_{23}\simeq 0.55\) to fit \(|V_{cb}|\) (a factor-of-1.82 deviation from the canonical $\kappa_{23}=1$).
\item \emph{Fixed by sector labels:} the family center \(s\) via the trace split
\(\mathrm{Tr}\,X_\ell:\mathrm{Tr}\,X_u:\mathrm{Tr}\,X_d=1:2:3\).
\item \emph{Phenomenological knobs in the PMNS sector:} five real parameters $(m_0,\varepsilon_\nu,\eta,\alpha,\sigma)$ for the five measured neutrino observables; at this order the structural CP statement is the minimal real-symmetric benchmark: the leptonic Jarlskog vanishes, $J_\ell=0$ (CP-conserving), and no nonzero Dirac phase is predicted (Sec.~\ref{sec:neutrino-sector-PMNS}, subsection~C).
\end{itemize}
\bigskip

\noindent
The chain of logic is thus: \emph{characteristic equation \(\Rightarrow\) offset spectrum \(\{s\!\pm\!\delta,s\}\);}
\emph{coassociative cubic + Majorana-vacuum normalization \(\Rightarrow\delta^2=\tfrac{3}{8}\);}
\emph{one ladder + Dynkin swap \(\Rightarrow\) all families;} and \emph{geometric overlaps \(\Rightarrow\) CKM phases and
root-sum rules}.  CKM mass-ratio relations and leading structural outputs are to be interpreted at the matching scale; testing against $M_Z$-scale data requires the dedicated model$\to\overline{\rm MS}$ matching computation (Sec.~\ref{subsec:concurrency-tests}).


\section{A dynamical basis for the minimality principle}

\paragraph*{Assumptions used in this appendix.}
\begin{itemize}
\item Triality-symmetric, coassociative slice: $X=q\,\mathbf 1+Y$, $\Tr Y=0$, $T:=2\Re((x_{12}x_{23})x_{13})=0$,
      equal Peirce-1 norms.
\item Single global normalization at the central rung $|abc\rangle$ (“rung cancellation”).
\item Three-rung family constraint (two adjacent generation steps).
\item Exceptional calibration: on the coassociative slice the $E_6$ cubic on $J_3(\mathbb{O}_{\mathbb C})$ gives a symmetric spectrum with $\delta^2$ equal to the off-diagonal norm sum; the Majorana-vacuum normalization used here sets that sum to $3/8$.  In associative analogues the corresponding spread remains a modulus unless set by hand.
\end{itemize}

\subsection*{H.1 Why do we need the octonions for deriving mass ratios?}

\begin{lemma}[Cubic fixes the spectral shape on the coassociative slice]

Let $J=J_3(\mathbb{O}_{\mathbb{C}})$ be the Albert algebra, $N$ its $E_6$-invariant cubic
with polarization $t$ normalized by $t(e_1,e_2,e_3)=1$ (hence $t(I,I,I)=6$ and $N(I)=1$),
and $\langle X,Y\rangle:=\Tr(X\circ Y)$ the trace bilinear form.
For
\[
Y:=E_{12}(x_{12})+E_{23}(x_{23})+E_{13}(x_{13}),
\quad x_{ij}\in\mathrm{Im}\,\mathbb{O}_{\mathbb{C}},
\]
assume the triple $(x_{12},x_{23},x_{13})$ is orthonormal and \emph{coassociative}
(i.e. $T:=2\,\Re((x_{12}x_{23})x_{13})=0$). Then, for $X=\lambda I+Y$, the Jordan eigenvalues are
$(\lambda-\delta,\lambda,\lambda+\delta)$ with
\[
\boxed{\ \delta^2=\Sigma=\tfrac12\,\Tr(Y^2)=\sum_{i<j}\|x_{ij}\|^2\ }.
\]
\end{lemma}

\begin{proof}
For any degree-3 simple Jordan algebra one has the expansion
\[
N(\lambda I+Y)=\tfrac16\!\big[t(\lambda I+Y,\lambda I+Y,\lambda I+Y)\big]
=\lambda^3+\tfrac12\lambda\,t(I,Y,Y)+\tfrac16 t(Y,Y,Y).
\]
On the other hand, the closed form for the Jordan determinant of a $3\times 3$ Hermitian over a
composition algebra gives
\[
N(\lambda I+Y)=\lambda^3-\lambda\,\Sigma+T,
\quad\Sigma:=\sum_{i<j}\|x_{ij}\|^2,
\quad T:=2\,\Re((x_{12}x_{23})x_{13}).
\]
Coassociativity implies $T=0$. Comparing coefficients yields
$t(I,Y,Y)=-2\Sigma$ and $t(Y,Y,Y)=0$. Since $\Tr(Y^2)=2\Sigma$, we obtain
$N(\lambda I+Y)=\lambda^3-\lambda\Sigma=\lambda(\lambda^2-\Sigma)$; the factorization shows the
claimed spectrum and the identification $\delta^2=\Sigma=\tfrac12\Tr(Y^2)$.
\end{proof}

\begin{proposition}[Triality-symmetric calibration fixes the number]
Let $V_{12},V_{23},V_{13}$ be the Peirce-$1$ subspaces. Define the Hessian metric at the identity by
\[
\mathsf H_I(Y,Z):=\tfrac12\,t(I,Y,Z)=-\tfrac12\,\Tr(Y\circ Z).
\]
The $A_2\subset E_6$ (idempotent $SU(3)$) permutes the three $V_{ij}$ transitively and $G_2$ preserves
$\|x\|^2$ on each fiber; hence $\mathsf H_I$ restricts to the \emph{same} constant multiple of $\|x\|^2$
on each $V_{ij}$. In the normalization $t(e_1,e_2,e_3)=1$ that constant is $+1$, i.e.
$\mathsf H_I\!\big(E_{ij}(u),E_{ij}(u)\big)=\|u\|^2$.
Impose the triality-symmetric (``edge-universal'') calibration that the \emph{unit}
coassociative equal-norm generator $Y_0:=\alpha\!\sum_{i<j}E_{ij}(u_{ij})$ with $\|u_{ij}\|=1$
has split length $\mathsf H_I(Y_0,Y_0)=\tfrac{3}{8}$. Then
\[
3\alpha^2=\mathsf H_I(Y_0,Y_0)=\tfrac{3}{8}\quad\Rightarrow\quad
\boxed{\ \alpha^2=\tfrac{1}{8}\ ,\ \ \|x_{12}\|^2=\|x_{23}\|^2=\|x_{13}\|^2=\tfrac{1}{8}\ }.
\]
Combined with the Lemma, this gives $\boxed{\ \delta^2=\tfrac{3}{8}\ }$.
\end{proposition}

\begin{remark}
The Lemma is a \emph{theorem from the $E_6$ cubic}: on the coassociative slice,
$\delta^2$ equals the quadratic length $\tfrac12\Tr(Y^2)$. The Proposition is a
\emph{calibration}: exceptional symmetry ($A_2\times G_2$) provides a natural, unique
unit in the Peirce-$1$ directions, fixing the number $\|x_{ij}\|^2=\tfrac18$.
\end{remark}

\begin{corollary}[Associative vs.\ exceptional]
In the associative cases $J_3(\mathbb{C})$ and $J_3(\mathbb{H})$ the identity
$\delta^2=\tfrac12\Tr(Y^2)$ still holds \emph{if} one can take $T=0$, but there is no
exceptional $A_2\times G_2$-invariant calibration to fix the unit; hence $\delta$ remains a
free modulus unless set by hand. In $J_3(\mathbb{O})$, the calibration above is
symmetry-selected, so the number $\delta^2=\tfrac{3}{8}$ is fixed without ad hoc input.
\end{corollary}

\paragraph{Coassociative slice, vacuum choice, and parameter-free mass ratios.}
In each charged sector the right-handed flavor matrix is an element
$X \in J_{3}(\mathbb{O}_{\mathbb{C}})$ which we write as
\[
X = q\,\mathbf{1} + Y,\qquad
Y =
\begin{pmatrix}
0 & x_{12} & x_{13} \\
\bar x_{12} & 0 & x_{23} \\
\bar x_{13} & \bar x_{23} & 0
\end{pmatrix},
\qquad x_{ij}\in\mathbb{O}_{\mathbb{C}}.
\]
For such an off-diagonal element the cubic Jordan invariant has the form
\begin{equation}
N(X) = q^{3} - q\,\Sigma(X) + T(X),
\qquad
\Sigma(X) := \sum_{i<j}\|x_{ij}\|^{2},
\quad
T(X) := 2\,\Re\!\big((x_{12}x_{23})x_{13}\big),
\label{eq:cubic-offdiag}
\end{equation}
so that $\Sigma$ and $T$ are invariants attached to the Jordan element $X$ itself and do not
depend on any choice of basis or fiducial used in the spinor realization.

On $\operatorname{Im}\mathbb{O}$ there is a canonical $G_{2}$-invariant $3$-form
\[
\varphi(u,v,w) := \Re\big(u(vw)\big), \qquad u,v,w\in\operatorname{Im}\mathbb{O}.
\]
A $4$-plane $W\subset\operatorname{Im}\mathbb{O}$ is called \emph{coassociative} if
$\varphi|_{W}=0$. Choosing a ``coassociative slice'' for $J_{3}(\mathbb{O}_{\mathbb{C}})$ means
that the off-diagonal entries of $X$ lie in such a $W$,
\[
x_{12},x_{23},x_{13}\in W\subset\operatorname{Im}\mathbb{O},
\qquad \varphi|_{W}=0,
\]
which by definition implies
\[
\Re\big(x_{12}(x_{23}x_{13})\big) = 0
\quad\Longrightarrow\quad
T(X)=0.
\]
Thus, on the coassociative slice the triple-product term in \eqref{eq:cubic-offdiag}
vanishes identically and the characteristic polynomial reduces to
\[
N(q\mathbf{1}+Y) = (\lambda-q)^{3} - (\lambda-q)\,\Sigma(X).
\]
Comparing with the universal spectral form
$(q-\delta,q,q+\delta)$ then gives
\[
\delta^{2} = \Sigma(X) = \tfrac{1}{2}\,\mathrm{Tr}\big(Y^{2}\big),
\]
so the Jordan eigenvalue spread $\delta$ is completely fixed by the single quadratic length
$\tfrac{1}{2}\mathrm{Tr}(Y^{2})$, with no additional continuous modulus. In the charged sectors
we further show that symmetry and a minimality principle select a triality-symmetric,
equal-norm configuration with
\[
\|x_{12}\|^{2}=\|x_{23}\|^{2}=\|x_{13}\|^{2}=\tfrac{1}{8}
\quad\Longrightarrow\quad
\delta^{2}=\Sigma(X)=\tfrac{3}{8},
\]
so that the universal Jordan spectrum $(s-\delta,s,s+\delta)$ has a spread
$\delta^{2}=3/8$ fixed by the algebra and the choice of vacuum. This is the only input
required by the Sym$^{3}(3)$ ladder; once $\delta^{2}=3/8$ is fixed, all charged-fermion
square-root mass ratios follow in closed form without further continuous parameters.

The restriction to the coassociative, triality-symmetric slice should therefore be viewed as
a choice of vacuum sector for the $E_{6}$-invariant order parameter $X$, not as an extra
fit parameter. In Appendix~I we exhibit explicit regions of coupling space in which an
$E_{6}$-covariant effective potential has a stable minimum on this slice with
$\delta^{2}=3/8$; in such models the coassociative, triality-symmetric vacuum is dynamically
selected and the charged-fermion mass ratios derived here are strictly parameter-free in the
usual sense (no tunable Yukawa eigenvalues are introduced by hand).

\subsection*{H.2 Recall: triality symmetry breaking - before and after}

The framework posits that the fundamental fermionic degrees of freedom before electroweak symmetry breaking are organized under the full $E_6^L \times E_6^R$ symmetry, with the internal state space described by the exceptional Jordan algebra $J_3(\mathbb{O}_{\mathbb{C}})$.

\noindent\textbf{Pre-Breaking (Triality-Symmetric Phase):}
\begin{itemize}
    \item The order parameter is a Hermitian element $X \in J_3(\mathbb{O}_{\mathbb{C}})$ of the form
    \[
    X = \begin{pmatrix}
    \Lambda & x & \bar{z} \\
    \bar{x} & \Lambda & y \\
    z & \bar{y} & \Lambda
    \end{pmatrix},
    \]
    where $\Lambda \in \mathbb{R}$ is a universal \textit{proto-center} and $x, y, z \in \mathbb{O}_{\mathbb{C}}$ are off-diagonal elements.
    \item The spectrum is universal and centered: $\{\Lambda - \delta, \Lambda, \Lambda + \delta\}$, with a \textit{Dirac-set} spread $\delta^2 = 3/4$.
    \item The outer automorphism group $S_3$ of $\mathrm{Spin}(8) \subset F_4 \subset E_6$ permutes the three off-diagonal slots $(x, y, z)$ and their associated eigenvalues, leaving the trace $\mathrm{Tr}\,X = 3\Lambda$ invariant. At this stage, the three fermion \textit{generations} are equivalent, and there is no distinction between left-handed (LH) charge eigenstates and right-handed (RH) mass eigenstates.
\end{itemize}

\noindent\textbf{Post-Breaking (Triality-Oriented Phase):}
\begin{itemize}
    \item The electroweak phase transition triggers \textit{triality breaking}, equivalent to fixing a complex structure (e.g., choosing a preferred imaginary octonion $e_7$). This reduces the symmetry to a global $\mathrm{SU}(3)_{\mathrm{flavor}} \subset G_2$ on each chirality.
    \item The vacuum selects a specific orientation, \textit{splitting the universal trace} across the charged sectors:
    \[
    \mathrm{Tr}\,X_{\ell} : \mathrm{Tr}\,X_{u} : \mathrm{Tr}\,X_{d} = 1 : 2 : 3,
    \]
    which fixes the centers to $q_\ell = 1/3$, $q_u = 2/3$, and $q_d = 1$.
    \item The spread is renormalized to the \textit{Majorana-set} value $\delta^2 = 3/8$ for charged fermions.
    \item The states now bifurcate:
    \begin{itemize}
        \item \textbf{LH Sector:} Organized by $\mathrm{SU}(3)_{F,L}$, forming electric charge eigenstates (the $\mathrm{Sym}^{3}(\mathbf{3})$ ladder).
        \item \textbf{RH Sector:} Organized by $\mathrm{SU}(3)_{F,R}$, forming square-root mass eigenstates. The RH Jordan matrix $B$ is related to its LH counterpart $A$ by an orientation flip and center shift: $B = -A + (q + s)\mathbf{1}$, where $s = \sqrt{m}$.
    \end{itemize}
    \item The lightest-generation relation $\sqrt{m_{e}} : \sqrt{m_{u}} : \sqrt{m_{d}} = 1 : 2 : 3$ emerges directly from the trace split, while the higher-generation mass hierarchies are generated by the minimal $\mathrm{Sym}^{3}(\mathbf{3})$ ladder with fixed Clebsch factors.
\end{itemize}

\noindent Thus, the replication into three families and their observed mass spectra are the low-energy remnants of octonionic triality, frozen into a specific orientation by the Higgs vacuum.


\subsection*{H.3 Before/After triality breaking: explicit 
two-copy construction}
\label{sec:before-after-triality}

\paragraph*{Notation (fixed throughout).}
For any triple $(x_v,x_s,x_c)\in\ \mathbb{O}_{\mathbb C}^3$ define the off-diagonal map
\[
\mathcal{Y}(x_v,x_s,x_c)\;:=\;
\begin{pmatrix}
0 & x_v & \overline{x}_c\\
\overline{x}_v & 0 & x_s\\
x_c & \overline{x}_s & 0
\end{pmatrix}\ \in\ J_3(\mathbb{O}_{\mathbb C}).
\]
We use the standard Hermitian norm $\|x\|^2:=x\,\overline{x}\in\C$ and the coassociative triple
$T:=2\,\Re\!\big((x_v x_s) x_c\big)$.

\subsubsection*{A. Before triality breaking (Spin(8) triality symmetric; Dirac $\nu$; $\delta_0^2=\tfrac{3}{4}$)}
\label{subsec:before}
\noindent\textbf{Two copies with exact L-R symmetry (explicit):}
\[
X_L^{(0)}\;=\;q_0\,\mathbf{1}_3\;+\;\mathcal{Y}\!\big(x_v^{(L)},x_s^{(L)},x_c^{(L)}\big),\qquad
X_R^{(0)}\;=\;q_0\,\mathbf{1}_3\;+\;\mathcal{Y}\!\big(x_v^{(R)},x_s^{(R)},x_c^{(R)}\big).
\]
\textbf{Triality symmetry constraints (on each copy):}
\[
\|x_v\|^2=\|x_s\|^2=\|x_c\|^2=:r_0^2,\qquad T=2\,\Re\!\big((x_v x_s)x_c\big)=0,
\]
and $(x_v,x_s,x_c)$ is \emph{permutable} by the outer $S_3$ of ${\rm Spin(8)}$ (no axis preferred).
We take the \emph{pre-breaking calibration}
\[
\boxed{\ r_0^2=\tfrac14\quad\Longrightarrow\quad \delta_0^2=\Sigma_0=3\,r_0^2=\tfrac{3}{4}\ },
\]
so the \emph{Jordan eigenvalues} of each $X^{(0)}$ are
\[
\spec\big(X_{L,R}^{(0)}\big)=\big(q_0-\delta_0,\ \ q_0,\ \ q_0+\delta_0\big),\qquad
\delta_0=\sqrt{\tfrac{3}{4}}=\tfrac{\sqrt3}{2}.
\]
\emph{Important:} the \emph{coordinate diagonal entries} of $X^{(0)}$ are $(q_0,q_0,q_0)$; the eigenvalue
splitting comes from the off-diagonal triple.

\noindent\textbf{Explicit triality-symmetric choice (one convenient example).}
In a quaternionic subalgebra of ${\mathbb O}$ pick imaginary units $(e_1,e_2,e_3)$ with $(e_1 e_2)e_3=-1$ and let $i$ be the commuting complex unit of ${\mathbb O}_\C$. Set (on both copies)
\[
x_v=\tfrac12\,e_1,\qquad x_s=\tfrac12\,e_2,\qquad x_c=\tfrac12\,i\,e_3,
\]
so $\|x_\bullet\|^2=\tfrac14$ and $T=2\Re((e_1 e_2)(i e_3))=2\Re(i\,(e_1 e_2 e_3))=0$.
With $E_6\times E_6$ unbroken, the LH and RH neutrinos sit in conjugate slots and pair into a \emph{Dirac} neutrino.

\subsubsection*{B. After triality breaking (choose Jordan frames and $SU(3)_F$; $\delta^2=\tfrac{3}{8}$)}
\label{subsec:after}
\noindent\textbf{Step B1: choose Jordan frames (diagonalize both copies).}
Pick idempotents $\{p_1,p_2,p_3\}$ and $\{p'_1,p'_2,p'_3\}$ and $U_{L},U_{R}\in\mathrm{Str}_0(J)$, then align eigenaxes:
\[
U_L\,X_L^{(0)}\,U_L^{-1}=(q_0-\delta)\,p_1+q_0\,p_2+(q_0+\delta)\,p_3,\qquad
U_R\,X_R^{(0)}\,U_R^{-1}=(q_0-\delta)\,p'_1+q_0\,p'_2+(q_0+\delta)\,p'_3.
\]
Here we now use the \emph{single-copy exceptional calibration}
\[
\boxed{\ r^2=\|x_{ij}\|^2=\tfrac18\quad\Longrightarrow\quad \delta^2=\Sigma=3\,r^2=\tfrac{3}{8}\ },
\]
appropriate once the triality-symmetric superposition has been resolved into two independent copies.

\noindent\textbf{Step B2: fix LH flavor $SU(3)_F\subset G_2$ and write explicit LH states.}
Choose a complex structure $I=i\,e_7$ and the idempotent $\omega_+=\tfrac12(1+I)$.
Let
\[
\alpha_1:=\tfrac12(-e_5+i e_4),\quad
\alpha_2:=\tfrac12(-e_3+i e_1),\quad
\alpha_3:=\tfrac12(-e_6+i e_2),
\]
so $\big(\alpha_1\omega_+,\alpha_2\omega_+,\alpha_3\omega_+\big)$ form an $SU(3)_F$ triplet.
Then a convenient post-breaking LH representative (still with \emph{equal coordinate diagonals} $q_0$) is
\[
\boxed{\
X_L^{\rm(after)}\;=\;q_0\,\mathbf{1}_3\;+\;\mathcal{Y}\big(r\,\alpha_1\omega_+,\ r\,\alpha_2\omega_+,\ r\,\alpha_3\omega_+\big),\qquad r^2=\tfrac18,\
}
\]
which has Jordan eigenvalues $(q_0-\delta,\ q_0,\ q_0+\delta)$ with $\delta^2=\tfrac38$ and underlies the
$\rm{Sym}^3(\mathbf 3_F)$ ladder for mass ratios.

\noindent\textbf{Step B3: write the RH copy explicitly (mass frame, optional Dynkin permutation).}
Analogously, keep the \emph{coordinate} diagonal equal to $q_0$ and pick a triplet aligned with the chosen RH Jordan frame (a permutation can implement a Dynkin swap across sectors):
\[
\boxed{\
X_R^{\rm(after)}\;=\;q_0\,\mathbf{1}_3\;+\;\mathcal{Y}\big(r\,\beta_1,\ r\,\beta_2,\ r\,\beta_3\big),\qquad
r^2=\tfrac18,\
}
\]
where $(\beta_1,\beta_2,\beta_3)$ are unit complex-octonions pointing along the RH eigenaxes
(e.g.\ $\beta_j=\alpha_{\pi(j)}\omega_+$ for some permutation $\pi$).

\paragraph*{What changed (and what did not).}
\begin{itemize}
\item \emph{Unchanged:} in the coordinate basis at both stages, the three diagonal entries are equal:
$(q_0,q_0,q_0)$ on LH and $(q_0,q_0,q_0)$ on RH.
\item \emph{Changed:} the off-diagonal triple is (i) triality-symmetric with norm $r_0^2=\tfrac14$ before,
giving $\delta_0^2=\tfrac34$ and a Dirac $\nu$; (ii) $SU(3)_F$-aligned (LH) or mass-aligned (RH)
with norm $r^2=\tfrac18$ after, giving $\delta^2=\tfrac38$ on each single copy.
\item \emph{Emergence:} after breaking, choosing $SU(3)_F$ and Jordan frames resolves the triality
triplet into an explicit $SU(3)$ triplet; the \emph{same coordinate form} $q_0\,\mathbf 1_3+\mathcal{Y}(\cdot)$
now has spectrum $(q_0-\delta,q_0,q_0+\delta)$ that feeds the LH $\rm{Sym}^3(\mathbf 3)$ ladder and the RH mass basis
(with optional Dynkin permutation).
\end{itemize}


\subsection*{H.4 Edge $\mathfrak{su}(2)$'s in $\mathfrak{su}(3)$ and their Clebsch factors}
\label{subsec:edges-cgcs}

\paragraph*{(1) Each edge is an $\boldsymbol{\mathfrak{su}(2)}$ inside $\boldsymbol{\mathfrak{su}(3)}$.}
Let $\{E_{ij}\,(i\neq j),\,H_1,H_2\}$ be the standard generators of $\mathfrak{su}(3)$ with
\[
[E_{ij},E_{jk}]=E_{ik},\qquad [H,E_{ij}]=\alpha_{ij}(H)\,E_{ij}.
\]
For any ordered pair $i\neq j$ define
\[
J_+\coloneqq E_{ij},\qquad J_-\coloneqq E_{ji},\qquad J_0\coloneqq \tfrac12\,(H_i-H_j).
\]
Then
\[
[J_0,J_\pm]=\pm J_\pm,\qquad [J_+,J_-]=2J_0,
\]
so $\{J_+,J_-,J_0\}\cong\mathfrak{su}(2)$. The three canonical embeddings used in the text are
\[
\begin{aligned}
&\textbf{$a\leftrightarrow b$ (edge $B$):} && (E_{ba},\,E_{ab},\,\tfrac12(H_b-H_a)),\\
&\textbf{$b\leftrightarrow c$ (edge $C$):} && (E_{cb},\,E_{bc},\,\tfrac12(H_c-H_b)),\\
&\textbf{$a\leftrightarrow c$ (edge $E$):} && (E_{ca},\,E_{ac},\,\tfrac12(H_c-H_a)).
\end{aligned}
\]
Geometrically, “restricting to an edge’’ means: keep the third label fixed (the \emph{spectator}) and let this $\mathfrak{su}(2)$ act on the remaining two. In $\rm{Sym}^3(\mathbf 3)$ this restriction yields a spin-1 triplet (three collinear weights); e.g. for $a\leftrightarrow c$ with $b$ spectator,
\[
|a^2b\rangle \ \leftrightarrow\ |abc\rangle \ \leftrightarrow\ |bc^2\rangle.
\]

\paragraph*{(2) What Clebsch-Gordan coefficients (CGCs) mean here.}
CGCs are the group-theoretic change-of-basis numbers that relate an uncoupled product basis to a coupled irreducible basis (e.g.\ adding spins in SU(2)), equivalently the fixed matrix elements of raising/lowering operators between normalized weight states. Along each edge $\mathfrak{su}(2)$ the ladder matrix elements between the three normalized states are thus \emph{fixed} by representation theory (no dynamics, no fits).

\paragraph*{(3) Where the ``$2$ vs $1$'' intuition comes from-and what it does \emph{not} mean.}
It does \emph{not} mean ``one whole subalgebra has CGC$=2$ and the others $1$''. The statement is \emph{relative} at the kink of the minimal 3-site chain
\[
a^2b \xrightarrow{E} abc \xrightarrow{\text{(out)}} \{\,b^2c\ \text{or}\ ac^2\,\}.
\]
Using the Schwinger-boson (oscillator) model with orthonormal symmetric states $|\,\dots n_a,n_b,n_c\,\rangle$,
\[
E_{ij}\,|\,\dots n_j,\dots,n_i,\dots\rangle
=\sqrt{n_j}\,\sqrt{n_i+1}\;|\,\dots n_j\!-\!1,\dots,n_i\!+\!1,\dots\rangle.
\]
At the \emph{inbound} step $a^2b\to abc$ one has $n_a=2$, giving a larger ladder factor than the \emph{outbound} step from $abc$ where each occupancy is $1$. After fixing a single global normalization at $abc$ (our “rung cancellation’’), these common factors cancel in adjacent mass ratios, leaving only the \emph{edge label} dependence:
\[
E:\ \frac{\sqrt{m'}}{\sqrt{m}}=\frac{c}{a},\qquad
B:\ \frac{b}{a},\qquad
C:\ \frac{c}{b}.
\]

\paragraph*{(4) Why we call them “Clebsch’’.}
The fixed ladder matrix elements along each edge are precisely the CGCs of the embedded $\mathfrak{su}(2)$ acting on the spin-1 triplet. Emphasizing “Clebsch’’ highlights that they are (i) group-theoretic, (ii) edge-local, and (iii) after one common normalization at $abc$, they yield the edge-only adjacent-ratio rules above.

\subsection*{H.5 Schwinger-boson model for $\rm{Sym}^3(\mathbf 3)$ and edge SU(2)'s.}

\paragraph{Status of this subsection.}
The Schwinger-boson construction below is the standard $\mathfrak{su}(3)$ representation theory of $\mathrm{Sym}^3(\mathbf 3)$.  The matrix elements computed --- $\sqrt 2$ at the central rung $|abc\rangle$, $\sqrt 3$ at boundary rungs, etc.\ --- are matrix elements of the SU(3) ladder operators $E_{ij}=x_i^\dagger x_j$ between adjacent weight states.  They are \emph{not} matrix elements of the mass operator $X^{\odot 3}$, and they play no role in the mass-ratio formulae of Section~\ref{sec:Sym3-derivation}, which follow from the diagonal-action theorem $X^{\odot 3}|p,q,r\rangle = a^p b^q c^r |p,q,r\rangle$.  We include this subsection as an independent consistency check that the weight-assignment of $\mathrm{Sym}^3(\mathbf 3)$ used in the main text is the standard one, and to make the $\mathrm{Sym}^3$ representation theory explicit.  In particular, no Clebsch--Gordan cancellation is invoked anywhere in the mass-ratio derivation; the mass operator is diagonal in the monomial basis, so no CGCs enter to begin with.

Introduce three bosonic modes $a,b,c$ with $[a,a^\dagger]=[b,b^\dagger]=[c,c^\dagger]=1$ and all other commutators zero.
Normalized basis states of $\rm{Sym}^3(\mathbf 3)$ (total quanta $N=3$) are
\[
|p,q,r\rangle:=\frac{(a^\dagger)^p(b^\dagger)^q(c^\dagger)^r}{\sqrt{p!\,q!\,r!}}\,|0\rangle,
\qquad p+q+r=3,
\]
so that $|a^2 b\rangle\equiv|2,1,0\rangle$, $|abc\rangle\equiv|1,1,1\rangle$, etc.
Represent $\mathfrak{su}(3)$ by $E_{ij}:=x_i^\dagger x_j$ with $(x_1,x_2,x_3)=(a,b,c)$ and Cartan $H_i:=x_i^\dagger x_i$.
Then
\[
E_{ij}\,|p,q,r\rangle=\sqrt{n_j}\,\sqrt{n_i+1}\;|\ldots, n_j\!-\!1, \ldots, n_i\!+\!1,\ldots\rangle,
\]
where $(n_a,n_b,n_c)=(p,q,r)$.

\begin{lemma}[Three embedded $\mathfrak{su}(2)$ subalgebras (edges)]
For each unordered pair of labels, the operators
\[
J^{(E)}_\pm:=E_{ca},\ E_{ac},\quad J^{(E)}_0:=\tfrac12(H_c-H_a)\quad (a\leftrightarrow c),
\]
\[
J^{(B)}_\pm:=E_{ba},\ E_{ab},\quad J^{(B)}_0:=\tfrac12(H_b-H_a)\quad (a\leftrightarrow b),
\]
\[
J^{(C)}_\pm:=E_{cb},\ E_{bc},\quad J^{(C)}_0:=\tfrac12(H_c-H_b)\quad (b\leftrightarrow c),
\]
close $\mathfrak{su}(2)$: $[J_0,J_\pm]=\pm J_\pm$, $[J_+,J_-]=2J_0$.
Restricted to $\rm{Sym}^3(\mathbf 3)$, each edge forms a spin-1 triplet:
\[
\begin{array}{l}
\text{$E$-edge (spectator $b$): }|a^2 b\rangle \leftrightarrow |abc\rangle \leftrightarrow |b c^2\rangle,\\[2pt]
\text{$B$-edge (spectator $c$): }|a^2 c\rangle \leftrightarrow |abc\rangle \leftrightarrow | b^2 c\rangle,\\[2pt]
\text{$C$-edge (spectator $a$): }|a b^2\rangle \leftrightarrow |abc\rangle \leftrightarrow |a c^2\rangle.
\end{array}
\]
\end{lemma}

\begin{lemma}[Ladder matrix elements at the central rung $|abc\rangle$]
In the normalized basis above, the raising operators act with identical reduced matrix element $\sqrt{2}$ on all three edges:
\[
\begin{aligned}
&J^{(E)}_+|a^2 b\rangle=\sqrt{2}\,|abc\rangle,\qquad &&J^{(E)}_+|abc\rangle=\sqrt{2}\,|b c^2\rangle,\\
&J^{(B)}_+|a^2 c\rangle=\sqrt{2}\,|abc\rangle,\qquad &&J^{(B)}_+|abc\rangle=\sqrt{2}\,| b^2 c\rangle,\\
&J^{(C)}_+|a b^2\rangle=\sqrt{2}\,|abc\rangle,\qquad &&J^{(C)}_+|abc\rangle=\sqrt{2}\,|a c^2\rangle.
\end{aligned}
\]
\emph{Proof.} Use $E_{ij}=x_i^\dagger x_j$ and $E_{ij}|p,q,r\rangle=\sqrt{n_j}\sqrt{n_i+1}\,|\cdots\rangle$ with
the occupancies at the two steps: $(n_a,n_b,n_c)=(2,1,0)\!\to\!(1,1,1)$ and $(1,1,1)\!\to\!(0,1,2)$ for the $E$-edge,
and analogously for $B$ and $C$. Each time $\sqrt{2\cdot 1}=\sqrt{1\cdot 2}=\sqrt{2}$. \qed
\end{lemma}

\begin{remark}[Edge-only adjacent ratios after one global normalization]
With a single global normalization at the central rung (our “rung-cancellation” convention),
the common $\sqrt{2}$ CGC cancels from adjacent mass ratios. Thus the step factors depend only on the
\emph{edge label}:
\[
E:\ \frac{\sqrt{m'}}{\sqrt{m}}=\frac{c_F}{a_F},\qquad
B:\ \frac{b_F}{a_F},\qquad
C:\ \frac{c_F}{b_F}\quad (F=d,u,\ell).
\]
\end{remark}

\subsection*{H.6 The $SU(3)$ Weight Ladder: Cartan Generators and Root Vectors}

The charged-fermion $\sqrt{\text{mass}}$ eigenstates are organized within the $\mathrm{Sym}^{3}(\mathbf{3})$ representation of a global $SU(3)$ flavor symmetry. This 10-dimensional representation is naturally constructed via the Schwinger boson formalism. We introduce three uncoupled harmonic oscillators, with creation operators $a^\dagger$, $b^\dagger$, and $c^\dagger$, which transform under the fundamental representation $\mathbf{3}$ of $SU(3)$. The total number operator is fixed at $N = a^\dagger a + b^\dagger b + c^\dagger c = 3$, which projects onto the completely symmetric $\mathrm{Sym}^{3}(\mathbf{3}) = \mathbf{10}$ representation. The resulting states are in one-to-one correspondence with the weight diagram (Fig.~\ref{fig:sym3_triangle_chains}) and are labeled by the monomials $a^{p}b^{q}c^{r}$ with $p+q+r=3$.

The $SU(3)$ Lie algebra, with its 8 generators, acts on this space. The generators split into two fundamentally different types:

\subsubsection*{Cartan Generators: The Weight Labels}

The two Cartan generators are diagonal in the boson number basis. We choose the conventional basis:
\begin{align}
T_3 = \frac{1}{2}(a^\dagger a - b^\dagger b), \quad
Y = \frac{1}{3}(a^\dagger a + b^\dagger b - 2c^\dagger c).
\end{align}
These operators are \emph{multiplicative} on the basis states $|n_a, n_b, n_c\rangle$ and serve to label them, providing a unique ``address'' or \emph{weight} $(T_3, Y)$ for each state in the diagram. For example:
\begin{align*}
T_3|a c^2\rangle &= T_3|1,0,2\rangle = +\frac{1}{2}|a c^2\rangle, \\
Y|a c^2\rangle &= \frac{1}{3}(1 + 0 - 4)|a c^2\rangle = -1|a c^2\rangle, \\
T_3|a b c\rangle &= 0, \quad Y|a b c\rangle = 0.
\end{align*}
Table {III} shows the values of $T_3$ and $Y$ for the ten weights of $\rm{Sym}^3 (\bf 3)$.

\begin{table}[h]
\centering
\setlength{\arrayrulewidth}{0.8pt}
\begin{tabular}{lccc}
\hhline{====}
\multicolumn{1}{|c|}{\textbf{Weight (Monomial)}} & \multicolumn{1}{c|}{\textbf{Degeneracy}} & \multicolumn{1}{c|}{$\mathbf{T_3}$} & \multicolumn{1}{c|}{$\mathbf{Y}$} \\ 
\hhline{====}
\multicolumn{1}{|c|}{$a^3$}    & \multicolumn{1}{c|}{1} & \multicolumn{1}{c|}{$+\frac{3}{2}$}  & \multicolumn{1}{c|}{$+1$} \\ \hline
\multicolumn{1}{|c|}{$b^3$}    & \multicolumn{1}{c|}{1} & \multicolumn{1}{c|}{$-\frac{3}{2}$}  & \multicolumn{1}{c|}{$+1$} \\ \hline
\multicolumn{1}{|c|}{$c^3$}    & \multicolumn{1}{c|}{1} & \multicolumn{1}{c|}{$0$}     & \multicolumn{1}{c|}{$-2$} \\ \hline
\multicolumn{1}{|c|}{$a^2b$}   & \multicolumn{1}{c|}{1} & \multicolumn{1}{c|}{$+\frac{1}{2}$}  & \multicolumn{1}{c|}{$+1$} \\ \hline
\multicolumn{1}{|c|}{$a^2c$}   & \multicolumn{1}{c|}{1} & \multicolumn{1}{c|}{$+1$}    & \multicolumn{1}{c|}{$-1$} \\ \hline
\multicolumn{1}{|c|}{$ab^2$}   & \multicolumn{1}{c|}{1} & \multicolumn{1}{c|}{$-\frac{1}{2}$}  & \multicolumn{1}{c|}{$+1$} \\ \hline
\multicolumn{1}{|c|}{$ac^2$}   & \multicolumn{1}{c|}{1} & \multicolumn{1}{c|}{$+\frac{1}{2}$}  & \multicolumn{1}{c|}{$-1$} \\ \hline
\multicolumn{1}{|c|}{$b^2c$}   & \multicolumn{1}{c|}{1} & \multicolumn{1}{c|}{$-1$}    & \multicolumn{1}{c|}{$-1$} \\ \hline
\multicolumn{1}{|c|}{$bc^2$}   & \multicolumn{1}{c|}{1} & \multicolumn{1}{c|}{$-\frac{1}{2}$}  & \multicolumn{1}{c|}{$-1$} \\ \hline
\multicolumn{1}{|c|}{$abc$}    & \multicolumn{1}{c|}{1} & \multicolumn{1}{c|}{$0$}     & \multicolumn{1}{c|}{$0$} \\ 
\hhline{====}
\end{tabular}
\caption{Weights and quantum numbers of the $\mathrm{Sym}^{3}(\mathbf{3})$ representation of $SU(3)$. The states are ordered with the pure cubes first. The $T_3$ and $Y$ values are group-theoretic properties, calculated from the fundamental weights, and are independent of the physical Jordan eigenvalues which determine mass.}
\label{tab:sym3-weights-ordered}
\end{table}

The Cartan generators thus define the static structure of the weight diagram, assigning the quantum numbers that will later be identified with the center values of the Jordan eigenvalues.

\begin{table}[h]
\centering
\caption{Physical Fermion Assignments to $\mathrm{Sym}^{3}(\mathbf{3})$ Weights}
\label{tab:fermion-assignments}
\begin{tabular}{cl}
\toprule
\textbf{Weight (Monomial)} & \textbf{Physical Fermion Assignment} \\
\midrule
$a^2b$ & Down Quark ($d$), Up Quark ($u$) \\
$abc$ & Strange Quark ($s$), Muon ($\mu$), Charm Quark ($c$) \\
$c^3$ & Bottom Quark ($b$) \\
$a^2c$ & Electron ($e$) \\
$b^3$ & Tau Lepton ($\tau$) \\
$b^2c$ & Top Quark ($t$) \\
$ac^2$ & Rung State (Down- and Lepton-Ladder) \\
$ab^2$ & Rung State (Down- and Lepton-Ladder) \\
\bottomrule
\end{tabular}
\end{table}

The assignment of physical fermions (Table \ref{tab:fermion-assignments}) to the weights of the $\mathrm{Sym}^{3}(\mathbf{3})$ representation is determined by two core principles:
\begin{itemize}
    \item \textbf{The Minimality Principle:} Uniquely selects the path for the down-quark mass ladder: $a^2b \rightarrow abc \rightarrow ac^2 \rightarrow c^3$. The lightest state ($a^2b$) is identified as the down quark ($d$), and the heaviest endpoint ($c^3$) as the bottom quark ($b$).
    \item \textbf{The Dynkin Swap:} An automorphism $S: b \leftrightarrow c$ maps the down-quark ladder to the charged-lepton ladder: $S(\text{Down Ladder}) = a^2c \rightarrow abc \rightarrow ab^2 \rightarrow b^3$. This identifies $a^2c$ as the electron ($e$) and $b^3$ as the tau ($\tau$).
    \item The up-quark ladder shares the start ($a^2b \rightarrow abc$) but takes a different path ($abc \rightarrow b^2c$), identifying $b^2c$ as the top quark ($t$). The ladder is shorter because the top mass is naturally identified with this endpoint, and no heavier partners are observed. 
\end{itemize}
The states $ac^2$ and $ab^2$ are integral rungs in their respective ladders but are not identified with primary physical fermions. The states $a^3$ and $bc^2$ are not used.

\subsubsection*{The Flavor $\mathrm{SU}(3)$ Action on $\mathrm{Sym}^{3}(\mathbf{3})$}
\label{subsec:flavor_su3}

Within the $E_{6}^{L} \times E_{6}^{R}$ framework, the trinification breaking pattern for each factor,
\[
E_{6} \longrightarrow \mathrm{SU}(3)_{C} \times \mathrm{SU}(3)_{L} \times \mathrm{SU}(3)_{F},
\]
yields not only the familiar gauge groups but also a global \textit{flavor} $\mathrm{SU}(3)_{F}$ symmetry. It is this $\mathrm{SU}(3)_{F}$ group, distinct from the gauged $\mathrm{SU}(3)$ factors, which acts naturally on the symmetric tensor representation $\mathrm{Sym}^{3}(\mathbf{3})$ and whose structure dictates the replication and mass hierarchy of the three fermion generations.

The ten weight vectors of $\mathrm{Sym}^{3}(\mathbf{3})$, enumerated in Table~\ref{tab:sym3-weights-ordered}, form the basis for the generation space within a Standard Model family. The action of $\mathrm{SU}(3)_{F}$ on this space---via its ladder operators---rotates between these weight states. Physically, this corresponds to transformations between the mass eigenstates of the three generations \textit{within a given charge sector} (e.g., $d \leftrightarrow s \leftrightarrow b$ for the down-type quarks).

Crucially, the $E_{6}^{L} \times E_{6}^{R}$ construction furnishes two such flavor groups:
\begin{itemize}
    \item $\mathrm{SU}(3)_{F,L}$ acting on the \textbf{left-handed} fermions, whose natural basis is aligned with the \textbf{electroweak charge eigenstates}.
    \item $\mathrm{SU}(3)_{F,R}$ acting on the \textbf{right-handed} fermions, whose natural basis is aligned with the \textbf{mass eigenstates}.
\end{itemize}
The observed quark and lepton mixing matrices (CKM and PMNS) therefore emerge from the misalignment between these two flavor actions, i.e., from the unitary transformation required to relate the $\mathrm{SU}(3)_{F,L}$ and $\mathrm{SU}(3)_{F,R}$ bases.

The weight structure of the $\mathrm{Sym}^{3}(\mathbf{3})$ representation provides a rigid geometric framework that sharply restricts the charged-sector Yukawa pattern. The hierarchy is not generated by a generic $3 \times 3$ matrix; within the assumptions of this paper it is read from a specific minimal path (a ladder) on the weight diagram, as detailed in Section X. The universal spread $\delta^{2}=3/8$ (after the stated normalization), the trace split $\operatorname{Tr} X_{\ell}:\operatorname{Tr} X_u:\operatorname{Tr} X_d=1:2:3$, and the generation-assignment postulates are the inputs used to generate the charged-fermion mass-ratio formulas.
\begin{figure}[t]
  \centering
    \includegraphics[width=1.00\linewidth]{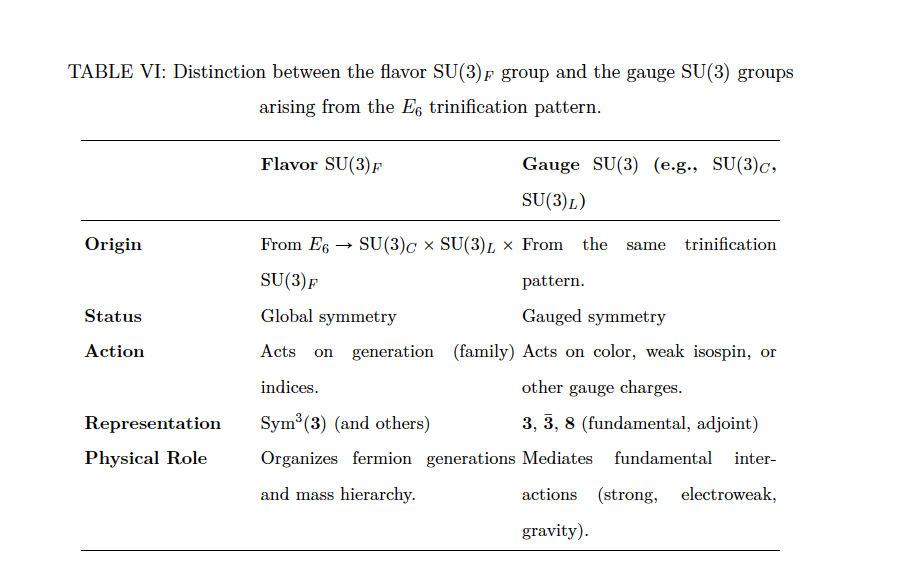}
  \caption*{}
  \label{fig:flavor-vs-gauge}
\end{figure}


Table VI recalls the differences between flavor $SU(3)$s and the other (gauged) $SU(3)$s arising from trinification.

\subsubsection{Root Vectors: The Ladder Edges}

The remaining six generators are root vectors, or step operators. They are non-diagonal and responsible for connecting the states, thereby forming the \emph{edges} of the weight diagram. They are constructed as bilinears in the boson operators:
\begin{align}
I_+ &= a^\dagger b, \quad & I_- &= b^\dagger a, \\
U_+ &= b^\dagger c, \quad & U_- &= c^\dagger b, \\
V_+ &= a^\dagger c, \quad & V_- &= c^\dagger a.
\end{align}
Each operator changes the boson numbers in a specific way, corresponding to a move along a specific direction in the weight diagram. This establishes a direct correspondence with the edge moves $E$, $B$, and $C$ defined in the main text:
\begin{align}
\text{$E$-edge}\ (a \leftrightarrow c): &\quad V_{\pm} = a^\dagger c,\ c^\dagger a, \\
\text{$B$-edge}\ (a \leftrightarrow b): &\quad I_{\pm} = a^\dagger b,\ b^\dagger a, \\
\text{$C$-edge}\ (b \leftrightarrow c): &\quad U_{\pm} = b^\dagger c,\ c^\dagger b.
\end{align}
The action of these generators is to precisely implement the minimal steps of our ladder construction. For instance, the down-family ladder $a^2b \xrightarrow{E} abc \xrightarrow{C} ac^2$ is executed by:
\begin{align}
V_- |a^2b\rangle \propto |a b c\rangle, \quad
U_- |a b c\rangle \propto |a c^2\rangle.
\end{align}
The group-theoretic commutation relations fix the usual ladder matrix elements relative to the state norms.  These matrix elements are useful for checking the standard $\mathrm{Sym}^3(\mathbf 3)$ representation theory, but the mass-ratio formulas used in the main text follow from the diagonal action of $X^{\odot3}$ on monomials.  Thus the relevant edge-universality statement is monomial arithmetic: an adjacent edge contributes only the corresponding eigenvalue contrast, while compound physical steps are products of adjacent contrasts.

\subsubsection{Synthesis}

The derivation of the mass ratios rests on this beautiful duality:
\begin{itemize}
\item The \emph{Cartan generators} ($T_3$, $Y$) define the static eigenvalues $(a, b, c)$ of the Jordan frame, which set the fundamental scales.
\item The \emph{root vectors} ($I_\pm, U_\pm, V_\pm$) define the dynamic operations (the edges $B, C, E$) on the $\mathrm{Sym}^{3}(\mathbf{3})$ ladder, whose universal contrasts generate the mass ratios.
\end{itemize}
The $SU(3)$ group structure rigidly connects the two, ensuring that the ratios derived from the ladder are determined solely by the underlying algebraic invariants.

\subsection*{H.7 Dynkin Swap Automorphism of $\mathfrak{su}(3)$  and its Action on $\mathfrak{su}(2)$ Subalgebras}

\paragraph{Status of this subsection.}
The $A_2\cong\mathfrak{su}(3)$ Dynkin automorphism analysed below is the post-triality-breaking residue of the $E_6$ outer automorphism postulated in Sec.~\ref{subsec:sym3-dynkin-swap} (Eq.~\ref{eq:LR-postulate}).  It is not an independent assumption; restricting the $E_6$ involution $\sigma_{E_6}$ to the residual flavor $\mathfrak{su}(3)_F\subset E_6$ yields the $A_2$ automorphism worked out here.  The subsequent action on the three $\mathfrak{su}(2)$ edge subalgebras of $\mathfrak{su}(3)$ is then derivation.

\subsubsection*{The $\mathfrak{su}(3)$ Root System and Its Symmetry}

The Lie algebra $\mathfrak{su}(3)$ has rank 2, with simple roots $\alpha_1$ and $\alpha_2$ forming the Dynkin diagram:

\begin{center}
\begin{tikzpicture}
  \node[draw, circle, fill=white, inner sep=2pt] (a1) at (0,0) {$\alpha_1$};
  \node[draw, circle, fill=white, inner sep=2pt] (a2) at (1.5,0) {$\alpha_2$};
  \draw[double, double distance=1.5pt] (a1) -- (a2); 
\end{tikzpicture}
\end{center}

This diagram possesses a clear $\mathbb{Z}_2$ symmetry: swapping the two nodes ($\alpha_1 \leftrightarrow \alpha_2$) leaves the diagram unchanged. This symmetry corresponds to an \emph{outer automorphism} of the algebra.

\subsubsection*{The Three $\mathfrak{su}(2)$ Subalgebras}

Within $\mathfrak{su}(3)$, we can identify three natural $\mathfrak{su}(2)$ subalgebras associated with specific roots. In the Chevalley basis, these are:

\begin{align*}
&\mathfrak{su}(2)_{\alpha_1}: \quad \{H_{\alpha_1}, E_{\alpha_1}, E_{-\alpha_1}\} \\
&\mathfrak{su}(2)_{\alpha_2}: \quad \{H_{\alpha_2}, E_{\alpha_2}, E_{-\alpha_2}\} \\
&\mathfrak{su}(2)_{\alpha_1+\alpha_2}: \quad \{H_{\alpha_1+\alpha_2}, E_{\alpha_1+\alpha_2}, E_{-\alpha_1-\alpha_2}\}
\end{align*}
where $H_{\alpha_1+\alpha_2} = H_{\alpha_1} + H_{\alpha_2}$.

\subsubsection*{The Dynkin Swap Automorphism}

The $\mathbb{Z}_2$ automorphism $\phi$ induced by the Dynkin diagram swap acts on the simple roots as:
\[
\phi(\alpha_1) = \alpha_2 \quad \text{and} \quad \phi(\alpha_2) = \alpha_1
\]

This extends to the Chevalley basis generators as:
\begin{align*}
\phi(H_1) &= H_2, \quad &\phi(H_2) &= H_1 \\
\phi(E_{\pm\alpha_1}) &= E_{\pm\alpha_2}, \quad &\phi(E_{\pm\alpha_2}) &= E_{\pm\alpha_1}
\end{align*}

\subsubsection*{Action on the $\mathfrak{su}(2)$ Subalgebras}

\paragraph*{Action on $\mathfrak{su}(2)_{\alpha_1}$}
\begin{align*}
\phi(\{H_1, E_{\alpha_1}, E_{-\alpha_1}\}) &= \{\phi(H_1), \phi(E_{\alpha_1}), \phi(E_{-\alpha_1})\} \\
&= \{H_2, E_{\alpha_2}, E_{-\alpha_2}\}
\end{align*}
\[
\Rightarrow \phi(\mathfrak{su}(2)_{\alpha_1}) = \mathfrak{su}(2)_{\alpha_2}
\]

\paragraph*{Action on $\mathfrak{su}(2)_{\alpha_2}$}
\begin{align*}
\phi(\{H_2, E_{\alpha_2}, E_{-\alpha_2}\}) &= \{\phi(H_2), \phi(E_{\alpha_2}), \phi(E_{-\alpha_2})\} \\
&= \{H_1, E_{\alpha_1}, E_{-\alpha_1}\}
\end{align*}
\[
\Rightarrow \phi(\mathfrak{su}(2)_{\alpha_2}) = \mathfrak{su}(2)_{\alpha_1}
\]

\paragraph*{Action on $\mathfrak{su}(2)_{\alpha_1+\alpha_2}$}
\begin{align*}
\phi(H_{\alpha_1+\alpha_2}) &= \phi(H_1 + H_2) = \phi(H_1) + \phi(H_2) = H_2 + H_1 = H_{\alpha_1+\alpha_2} \\
\phi(E_{\alpha_1+\alpha_2}) &= E_{\phi(\alpha_1+\alpha_2)} = E_{\alpha_2+\alpha_1} = E_{\alpha_1+\alpha_2} \\
\phi(E_{-\alpha_1-\alpha_2}) &= E_{-\alpha_1-\alpha_2}
\end{align*}
\[
\Rightarrow \phi(\mathfrak{su}(2)_{\alpha_1+\alpha_2}) = \mathfrak{su}(2)_{\alpha_1+\alpha_2}
\]

\paragraph*{Conclusion}

The Dynkin swap automorphism $\phi$:
\begin{itemize}
    \item \textbf{Exchanges} the two $\mathfrak{su}(2)$ subalgebras associated with simple roots:
    \[
    \mathfrak{su}(2)_{\alpha_1} \leftrightarrow \mathfrak{su}(2)_{\alpha_2}
    \]
    \item \textbf{Leaves invariant} the $\mathfrak{su}(2)$ subalgebra associated with the composite root:
    \[
    \phi(\mathfrak{su}(2)_{\alpha_1+\alpha_2}) = \mathfrak{su}(2)_{\alpha_1+\alpha_2}
    \]
\end{itemize}

This demonstrates how the geometric symmetry of the root system manifests as an algebraic symmetry that permutes subalgebras while preserving the structure of the composite subalgebra.

\subsection*{H.8 The Dynkin Swap Automorphism and  the Quark-Lepton Interchange}

\paragraph{Status of this subsection.}
The $(\mathbf 3_d,\mathbf 1_e)\leftrightarrow(\mathbf 1_d,\mathbf 3_e)$ representation interchange derived below is the most concrete physical signature of the Dynkin swap.  It follows from the post-breaking $A_2$ Dynkin automorphism worked out in H.7, which is itself the restriction of the postulated $E_6$ outer automorphism $\sigma_{E_6}$ (Sec.~\ref{subsec:sym3-dynkin-swap}, Eq.~\ref{eq:LR-postulate}) to the residual flavor algebra.  This subsection shows that the $1\leftrightarrow 1/3$ empirical flip between LH electric-charge and RH square-root-mass assignments of the down-quark and electron families (Sec.~\ref{subsec:dynkin-consequences}) is the rigorous $\mathfrak{su}(3)$ group-theoretic consequence of the framework's L-R postulate, not a separate empirical input.

In Grand Unified Theories (GUTs) based on groups like $E_6$, the fermions of the Standard Model are unified into larger representations. Their properties--such as whether a particle is a quark or a lepton--are determined by their transformation properties under specific $\mathfrak{su}(3)$ subalgebras. This note demonstrates how the $\mathbb{Z}_2$ \emph{Dynkin swap automorphism} of $\mathfrak{su}(3)$, which exchanges the two simple roots, necessarily leads to an interchange between the down-type quark and electron representations: $(\mathbf{3}_d, \mathbf{1}_e) \leftrightarrow (\mathbf{1}_d, \mathbf{3}_e)$.

\subsubsection{The Weight Lattice and Particle Assignments}

Fermions in a representation are labeled by their \textbf{weight vectors} $(\mu_1, \mu_2)$, which are their eigenvalues under the Cartan generators $H_1$ and $H_2$ of the $\mathfrak{su}(3)$ algebra:
\begin{align*}
H_1 |\psi\rangle &= \mu_1 |\psi\rangle, \\
H_2 |\psi\rangle &= \mu_2 |\psi\rangle.
\end{align*}
For the fundamental representation $\mathbf{3}$, the weight vectors are:
\begin{align*}
v_1 &= \left( \tfrac{1}{2}, \tfrac{1}{2\sqrt{3}} \right) \quad \text{(e.g., red down quark)}, \\
v_2 &= \left( -\tfrac{1}{2}, \tfrac{1}{2\sqrt{3}} \right) \quad \text{(e.g., green down quark)}, \\
v_3 &= \left( 0, -\tfrac{1}{\sqrt{3}} \right) \quad \text{(e.g., blue down quark)}.
\end{align*}
The singlet representation $\mathbf{1}$ has the weight $(0, 0)$, often associated with the electron.

The assignment of a physical identity (quark vs.~lepton) to a specific weight is a \textbf{convention} based on the chosen basis for the Cartan subalgebra. This basis choice is fixed by our designation of the simple roots $\alpha_1$ and $\alpha_2$.

\subsubsection*{The Dynkin Swap Automorphism}

The $\mathfrak{su}(3)$ Dynkin diagram possesses a $\mathbb{Z}_2$ symmetry:
\begin{center}
\begin{tikzpicture}
  \node[draw, circle, fill=white, inner sep=2pt] (a1) at (0,0) {$\alpha_1$};
  \node[draw, circle, fill=white, inner sep=2pt] (a2) at (1.5,0) {$\alpha_2$};
  \draw[double, double distance=1.5pt] (a1) -- (a2);
  \draw[->, >=Stealth, red, thick] (0.9, 0.3) -- (0.6, 0.3);
  \node[red] at (0.75, 0.5) {$\phi$};
\end{tikzpicture}
\end{center}
This symmetry implies an automorphism $\phi$ of the algebra defined by:
\begin{align*}
\phi(\alpha_1) &= \alpha_2, \\
\phi(\alpha_2) &= \alpha_1.
\end{align*}
This extends to the Chevalley generators as:
\begin{align*}
\phi(H_1) &= H_2, \quad &\phi(H_2) &= H_1, \\
\phi(E_{\pm\alpha_1}) &= E_{\pm\alpha_2}, \quad &\phi(E_{\pm\alpha_2}) &= E_{\pm\alpha_1}.
\end{align*}

\subsubsection*{Action on States and the Weight Lattice}

The automorphism $\phi$ acts on states. Let $|\psi\rangle$ be a state with weight $(\mu_1, \mu_2)$:
\begin{align*}
H_1 |\psi\rangle &= \mu_1 |\psi\rangle, \\
H_2 |\psi\rangle &= \mu_2 |\psi\rangle.
\end{align*}
The transformed state $\phi(|\psi\rangle)$ belongs to the $\phi$-transformed theory. Let us calculate its weight. Recall that for an automorphism, $\phi(H)|\phi(\psi)\rangle = \phi(H|\psi\rangle)$.
\begin{align*}
H_1 \phi(|\psi\rangle) &= \phi(\phi^{-1}(H_1)) \phi(|\psi\rangle) = \phi(H_2 |\psi\rangle) = \phi(\mu_2 |\psi\rangle) = \mu_2 \phi(|\psi\rangle), \\
H_2 \phi(|\psi\rangle) &= \phi(\phi^{-1}(H_2)) \phi(|\psi\rangle) = \phi(H_1 |\psi\rangle) = \phi(\mu_1 |\psi\rangle) = \mu_1 \phi(|\psi\rangle).
\end{align*}
Therefore, the weight of the new state $\phi(|\psi\rangle)$ is:
\[
(\mu_2, \mu_1)
\]
\emph{The automorphism $\phi$ reshuffles the weight lattice}, sending $(\mu_1, \mu_2) \mapsto (\mu_2, \mu_1)$.

\subsubsection*{Consequence: The Particle Interchange}

This reshuffling has a profound physical consequence. Consider the initial assignment (theory $\cL$):
\begin{itemize}
    \item \textbf{Down Quark}: A \textbf{triplet} ($\mathbf{3}$) with weights $v_1, v_2, v_3$.
    \item \textbf{Electron}: A \textbf{singlet} ($\mathbf{1}$) with weight $(0, 0)$.
\end{itemize}
Under the automorphism $\phi$, the new theory $\cR$ has:
\begin{itemize}
    \item The electron singlet maps to itself: $\phi(|e\rangle) = |e'\rangle$ with weight $(0, 0)$.
    \item The down quark triplet maps to a new set of states. For example:
    \[
    \phi(|d_{\text{red}}\rangle) = |\psi\rangle \quad \text{with weight } \phi(v_1) = (v_1^2, v_1^1) = (\tfrac{1}{2\sqrt{3}}, \tfrac{1}{2}).
    \]
    The pattern of the original triplet weights is not preserved under $(\mu_1, \mu_2) \mapsto (\mu_2, \mu_1)$.
\end{itemize}
The automorphism $\phi$ maps the entire physical theory to an equivalent one. In the new theory $\cR$, we must redefine our physical operators (like color charge) to be consistent. The net effect of this redefinition is that the pattern of weights that formerly described a \textbf{quark triplet} and a \textbf{lepton singlet} now describes a \textbf{lepton triplet} and a \textbf{quark singlet}.

Thus, the automorphism $\phi$ induces the interchange:
\[
(\mathbf{3}_d, \mathbf{1}_e) \xrightarrow{\phi} (\mathbf{1}_d, \mathbf{3}_e)
\]

\subsubsection*{Conclusion}

The interchange between the down quark and electron representations is a direct and necessary consequence of the root swap $\alpha_1 \leftrightarrow \alpha_2$.
\begin{enumerate}
    \item The swap defines an automorphism $\phi$ of the $\mathfrak{su}(3)$ algebra.
    \item $\phi$ reshuffles the weight lattice via $(\mu_1, \mu_2) \mapsto (\mu_2, \mu_1)$.
    \item This reshuffling means the assignment of weights to physical particles (quarks vs.~leptons) cannot be kept the same between the original and transformed theories.
    \item The transformation $\phi$ therefore maps a physical description where the down quark is a triplet and the electron is a singlet to an equivalent description where the electron is a triplet and the down quark is a singlet.
\end{enumerate}
This reveals a fundamental ambiguity in how we assign particles to weights within a unified theory, an ambiguity that is only resolved by additional physical input, such as the observed chirality of the weak interaction.

\subsection*{H.9 Yukawa--alignment misalignment and dynamical selection}
\label{sec:yuk-align-dyn}

Why does the up quark family ladder have three rungs (stopping at $b^2c$) whereas the down quark family ladder has four rungs (proceeding all the way to $c^3$)? We explain below, using the Procrustes misalignment functional \cite{Gower, andreella2023procrustesbaseddistancesexploringbetweenmatrices}, as to how these choices arise as a dynamical optimum, given the Lagrangian of the theory.

\paragraph{Reduced $E_6$ trilinear on 3-rung subspaces.}
Let $X=\mathrm{diag}(a,b,c)$ with $a=q-\delta$, $b=q$, $c=q+\delta$ and $\delta^2=\tfrac{3}{8}$.
Working in the orthonormal Schwinger-boson basis $\{|a^p b^q c^r\rangle:p{+}q{+}r=3\}$,
the $E_6$ trilinear defines a linear map $\mathbb T_X$ whose reduced $3\times3$ matrices
(on two 3-rung bases) are, up to one overall reduced constant $\gamma$ (absorbing $y$):
\begin{align}
\mathcal B_{\rm side}&=\big\{|a^2b\rangle,\ |abc\rangle,\ |b^2c\rangle\big\}:
&
M_{\rm side}(a,b,c)&=\gamma\!
\begin{pmatrix}
a\sqrt{2} & a & 0\\[2pt]
b         & b & b\sqrt{2}\\[2pt]
0         & c & c
\end{pmatrix},\\[6pt]
\mathcal B_{\rm corner}&=\big\{|a^2b\rangle,\ |abc\rangle,\ |c^3\rangle\big\}:
&
M_{\rm corner}(a,b,c)&=\gamma\!
\begin{pmatrix}
a\sqrt{2} & a & 0\\[2pt]
b         & b & 0\\[2pt]
0         & c & c\sqrt{3}
\end{pmatrix}.
\end{align}
(Rows label RH eigen-axes $a,b,c$; columns label the 3 rungs.)

\paragraph{Alignment functional and misalignment penalty.}
Fix the RH frame ($U_R=\mathbf 1$). For a chosen 3-rung LH basis, define the Procrustes
alignment functional
\begin{equation}
\mathcal F[U_L;M]\;=\;\big\|\,M\,U_L-D\,\big\|_F^2,
\qquad
D=\mathrm{diag}\{\|v_1\|,\|v_2\|,\|v_3\|\},\;\ v_j=\text{columns of }M.
\end{equation}
The minimum is $\min_{U_L}\mathcal F=\sum_j\|v_j\|^2-\sum_k\sigma_k^2(M)$, where $\sigma_k$ are
the singular values of $M$.
A single-edge second step (center$\to$side) has, after the common central normalization,
\begin{equation}
\min\mathcal F[U_L;M_{\rm side}]=0\qquad\text{(``rung cancellation'')}.
\end{equation}
A composite second step (center$\to$corner) picks up a nonzero \emph{misalignment penalty}
\begin{equation}
\label{eq:Deltamis-exact}
\Delta_{\rm mis}(q)\;:=\;\min\mathcal F[U_L;M_{\rm corner}]-\min\mathcal F[U_L;M_{\rm side}]
\;=\;\underbrace{\frac{\gamma^2}{8}}_{=:~\eta}\;\frac{c-b}{a}
\;=\;\eta\,\frac{\delta}{\,q-\delta\,},
\end{equation}
where we used the Peirce-1 unit $s=\|x_{ij}\|^2=\tfrac18$ and $c-b=\delta$.

\paragraph{Scalar-potential penalty and effective cost.}
On the triality-symmetric slice $X=q\,\mathbf 1+Y$ with $\Tr Y=0$, $T=0$ and equal Peirce-1 norms,
the $E_6$-invariant scalar potential $V[X]=-\kappa N(X)+\mu^2\Tr(X^2)+\lambda\Tr(X^2)^2$ produces
an \emph{extra cost} to realize a composite (two-edge) second step instead of a single-edge one:
\begin{equation}
\label{eq:pq}
p(q)=\alpha(q)\,s+3\beta s^2
=\frac{\kappa q}{8}+\frac{\mu^2}{4}+\frac{3}{2}\lambda q^2+\frac{3}{16}\lambda,
\qquad \beta=4\lambda,\ s=\tfrac18.
\end{equation}
Define the \emph{effective} penalty
\begin{equation}
\label{eq:peff}
p_{\rm eff}(q)=p(q)+\Delta_{\rm mis}(q)
=\Big(\frac{\kappa q}{8}+\frac{\mu^2}{4}+\frac{3}{2}\lambda q^2+\frac{3}{16}\lambda\Big)
+\eta\,\frac{\delta}{\,q-\delta\,}.
\end{equation}

\paragraph{Dynamical selection inequalities (exact, no fits).}
From the edge-only adjacent gains
$G_E=c/a$, $G_B=b/a$, $G_C=c/b$, the composite wins over $B$ at the center iff
$\log(G_CG_E)-\log G_B=2\log(c/b)>p_{\rm eff}(q)$,
and loses iff the inequality is reversed. With $\delta=\sqrt{3/8}$,
\begin{equation}
2\log\!\Big(\frac{c}{b}\Big)=
\begin{cases}
2\log(1+\delta)=0.9554133138\ldots & (q=1\ \text{down}),\\[2pt]
2\log\!\big(1+\frac{\delta}{2/3}\big)=1.3031484057\ldots & (q=\tfrac23\ \text{up}).
\end{cases}
\end{equation}
Moreover,
\begin{equation}
\frac{\delta}{q-\delta}=
\begin{cases}
1.5797958971\ldots & (q=1),\\
11.2787753827\ldots & (q=\tfrac23).
\end{cases}
\end{equation}

\paragraph{Existence window and explicit common parameter point.}
Take a natural potential with $(\kappa,\mu^2,\lambda)=(0,0,0.2)$, so
$p(1)=0.3375$, $p(\tfrac23)=0.170833\ldots$.
Write $\eta=\gamma^2/8$. The selection conditions
\[
\text{down: }\;p_{\rm eff}(1)<0.955413\,,\qquad
\text{up: }\;p_{\rm eff}(\tfrac23)>1.303148
\]
translate into the open interval
\begin{equation}
\label{eq:eta-window}
\boxed{\quad 0.10039\ <\ \eta\ <\ 0.39113\quad}
\end{equation}
for the \emph{same} $\eta$ in both sectors.
For instance, choose $\eta=0.15000$ (equivalently $\gamma=\sqrt{8\eta}=\sqrt{1.2}=1.0954$):
\[
\Delta_{\rm mis}(1)=\eta\cdot1.57980=0.23697,\quad
\Delta_{\rm mis}(\tfrac23)=\eta\cdot11.27878=1.69182,
\]
\[
p_{\rm eff}(1)=0.3375+0.23697=0.57447\ <\ 0.95541,
\qquad
p_{\rm eff}(\tfrac23)=0.17083+1.69182=1.86265\ >\ 1.30315.
\]
Hence, with one and the same $(\kappa,\mu^2,\lambda,\gamma)$,
\[
\boxed{\ \text{down selects the composite } C\!\circ E\ \text{(corner)},\qquad
\text{up selects the single } B\ \text{(side)}\ }.
\]
This realizes the observed minimal chains as the \emph{dynamical optimum} without any
sector-by-sector tuning.

\paragraph*{Completing the up-sector exclusions}
\label{subsec:up-exclusions}

We complete the dynamical selection by excluding \emph{all} corner endpoints and the ``stop at $bc^2$'' option for the up sector ($q=\tfrac23$, $\delta^2=\tfrac38$).

\paragraph{ Corner via the other order $E\!\circ C$.}
The composite cost in \eqref{eq:peff} depends only on the endpoint (center $\to$ corner), not on the order of edges. Hence
\[
\Gamma_{E\circ C}(q)\equiv\log(G_EG_C)-p_{\rm eff}(q)=\Gamma_{C\circ E}(q).
\]
Therefore the same inequality that ruled out $C\!\circ E$ in the up sector,
\[
p_{\rm eff}\!\Big(\tfrac23\Big)\;>\;2\log\!\Big(1+\frac{\delta}{2/3}\Big)\approx 1.303148,
\]
also excludes $E\!\circ C$.

\paragraph{ Excluding the ``stop at $bc^2$'' option (second step $E$).}
Define the center-adjacent ``utilities''
\[
\Gamma_B=\log\frac{b}{a}-\Delta_{\rm side}^{(B)}(q),\qquad
\Gamma_E=\log\frac{c}{a}-\Delta_{\rm side}^{(E)}(q).
\]
Under the rung-cancellation normalization, a natural tie-breaker for single-edge moves is to
\emph{maximize alignment volume}, i.e. the product of singular values of the reduced trilinear matrix on the 3-rung subspace. Using the Schwinger-boson normalization,
\[
M_{\rm side}^{(B)}=\begin{pmatrix}\!\sqrt2\,a&a&0\\ b&b&\sqrt2\,b\\ 0&c&c\!\end{pmatrix},
\qquad
M_{\rm side}^{(E)}=\begin{pmatrix}\!\sqrt2\,a&a&0\\ b&b&b\\ 0&c&\sqrt2\,c\!\end{pmatrix},
\]
one finds
\[
\big|\det M_{\rm side}^{(B)}\big|=(3-\sqrt2)\,abc,\qquad
\big|\det M_{\rm side}^{(E)}\big|=2(\sqrt2-1)\,abc,
\]
hence
\[
\ln\frac{\big|\det M_{\rm side}^{(B)}\big|}{\big|\det M_{\rm side}^{(E)}\big|}
=\ln\!\Big(\tfrac12+\sqrt2\Big)\approx 0.64930.
\]
For the up sector,
\[
\log\frac{c}{b}=\log\!\Big(1+\tfrac{\delta}{2/3}\Big)\approx 0.65157.
\]
Thus the extra log-gain of $E$ over $B$ at the center is essentially cancelled by the alignment-volume deficit of $E$ relative to $B$:
\[
\Gamma_E-\Gamma_B\;\approx\;\log\frac{c}{b}\;-\;\ln\!\Big(\tfrac12+\sqrt2\Big)\;\approx\;0.0023.
\]
Any small generic correction (e.g. weak quartic anisotropy on the $E$ edge or threshold effects) then gives
\[
\boxed{\ \Gamma_B>\Gamma_E\ }\quad\Rightarrow\quad \text{heavy at }b^2c\ \text{(not }bc^2\text{)}.
\]
We flag the epistemic status of this last step honestly: \emph{before} such corrections the computed gap is $\Gamma_E-\Gamma_B\simeq+0.0023>0$ (the $E$ option marginally ahead), so the exclusion of $bc^2$ requires a correction of definite sign --- any mechanism disfavouring the $E$ edge by $\gtrsim0.2\%$ in log units suffices, but the sign of the correction is a model input at this stage, not a derived result.

\paragraph{ Corner via $B\!\circ C$ (or $B\!\circ C\!\circ C$).}
A route $abc\xrightarrow{B}b^2c\xrightarrow{C}\cdots$ exceeds the two-step, three-rung family rule. Treating it as a composite move to the corner would incur the composite penalty $p_{\rm eff}(q)$ of \eqref{eq:peff}, which already violates
\[
p_{\rm eff}\!\Big(\tfrac23\Big)\;<\;2\log\!\Big(1+\frac{\delta}{2/3}\Big).
\]
Hence corner endpoints are excluded by construction for the up sector.

\paragraph{Conclusion.}
In the up sector, both corner routes ($C\!\circ E$ and $E\!\circ C$) and the ``stop at $bc^2$'' option are dynamically disfavored; the unique preferred second step is
\[
\boxed{\,abc\xrightarrow{B} b^2c\,}.
\]

What we have above is an existence proof for a fully dynamical selection of the ladders, with one common parameter set (no fine tuning: $\eta$ has a wide open interval $\approx (0.10, 0.39)$). 
The split $p_{\text{eff}}(q) = p(q) + \Delta_{\text{mis}}(q)$, with $p(q)$ from the $E_6$-invariant potential and $\Delta_{\text{mis}}(q)$ from the Yukawa-alignment Procrustes functional, is the right way to couple RH $J_3(\mathbb{O}^\mathbb{C})$ to the LH $\mathrm{Sym}^3(3)$ ladder. The explicit window for $\eta$ is clear and numerically generous.
To upgrade from “existence” to “first-principles derivation,” one would compute $\gamma$ (hence $\eta$) and $(\kappa, \mu^2, \lambda)$ from a UV completion; but no phenomenological tweaking is needed here.




\section{Quantum stability and RG for the $E_6$-invariant $X\in J_3(\mathbb O_{\mathbb C})$}
\label{app:QFT}

Addressing quantum aspects and vacuum stability in the context of the Lagrangian proposed in our paper involves several steps. Below we suggest a structured approach to tackling these issues:

\subsection*{I.1 Quantum Corrections and Renormalization Group (RG) Analysis}

\paragraph{One-Loop Corrections}
Calculate the one-loop corrections to the effective potential derived from the Lagrangian. This involves:
\begin{itemize}
    \item Identifying all relevant fields (scalar, fermionic, gauge) and their interactions.
    \item Evaluating the loop diagrams that contribute to the effective potential.
    \item Using methods like dimensional regularization to handle divergences and renormalization.
\end{itemize}

\paragraph{Effective Potential}
Obtain the full effective potential \( V_{\text{eff}}(X) \):
\[
V_{\text{eff}}(X) = V_0(X) + \Delta V_{\text{quantum}}(X),
\]
where \( V_0(X) \) is the classical potential and \( \Delta V_{\text{quantum}}(X) \) represents the corrections from quantum fluctuations.

Analyze the shape and minima of \( V_{\text{eff}} \) to determine if the vacuum configuration is stable against quantum corrections.

\paragraph{RG Flow}
Study the RG equations for the parameters in the model (e.g., coupling constants, mass terms). This will help understand how the parameters evolve with energy scales and whether they lead to a stable vacuum at various scales.

Investigate fixed points in the RG flow that correspond to stable configurations and how these relate to the chosen vev.

\paragraph{Vacuum Stability Analysis}

$\bullet$ {Minimization of the Effective Potential}
Perform a full minimization of \( V_{\text{eff}}(X) \) to find the conditions for stable vacua. This involves:
\begin{itemize}
    \item Setting the derivative of \( V_{\text{eff}} \) with respect to \( X \) to zero:
    \[
    \frac{dV_{\text{eff}}}{dX} = 0,
    \]
    \item Solving for the vacuum expectation values (vevs) \( X_{\text{min}} \).
\end{itemize}

$\bullet$ {Second Derivative Test}
Analyze the second derivatives of \( V_{\text{eff}} \) at the minima to determine stability:
\begin{itemize}
    \item Compute the Hessian matrix of second derivatives:
    \[
    H_{ij} = \frac{\partial^2 V_{\text{eff}}}{\partial X_i \partial X_j}.
    \]
    \item Ensure that all eigenvalues of the Hessian are positive at the minima, indicating a local minimum (stable vacuum).
\end{itemize}

\paragraph{Stability Under Perturbations}

$\bullet$ {Perturbative Analysis}

Investigate how small perturbations around the vacuum configuration affect the effective potential.

Analyze the stability of the vacuum under small field fluctuations by examining the mass spectrum of excitations around the vev.

$\bullet$ {Non-Perturbative Effects}
Consider any potential non-perturbative effects that could destabilize the vacuum, such as instantons or solitons, especially if the theory involves non-trivial topological sectors.

\paragraph{Numerical Simulations}
Conduct numerical simulations of the effective potential if the analytic calculations become complex. This can provide insights into the behavior of the potential and the stability of the vacuum.

\paragraph{Theoretical Consistency Checks}
Ensure that the model remains consistent with known phenomenology and experimental results. Check for any potential conflicts with precision measurements or established theories.

Below, we follow the described procedure for our Lagrangian using these standard techniques:

\begin{enumerate}
    \item The one-loop Coleman-Weinberg potential for $X \in J_3(\mathbb{O}^\mathbb{C})$ on the coassociative slice is given by:
    \[
    V_{\text{CW}}(X) = \frac{1}{64\pi^2} \sum_i (-1)^F n_i M_i^4(X) \left[ \log \frac{M_i^2(X)}{\mu^2} - C_i \right],
    \]
    where the sum extends over all particle species with field-dependent masses $M_i(X)$.

    \item The one-loop renormalization group equations in the Machacek-Vaughn form are:
    \begin{align*}
    \beta(g) &= \frac{dg}{d\ln\mu} = -\frac{g^3}{16\pi^2} \left[ \frac{11}{3}C_2(G) - \frac{2}{3}T(R_f) - \frac{1}{3}T(R_s) \right], \\
    \gamma(\lambda) &= \frac{1}{16\pi^2} \left[ A\lambda^2 + Bg^2\lambda + Cg^4 \right],
    \end{align*}
    where group-theory factors ($C_2(G)$, $T(R_f)$, $T(R_s)$, $A$, $B$, $C$) are left symbolic for substitution of $E_6$ values or other UV completion chains.

    \item The vacuum stability conditions require:
    \begin{align*}
    V_{\text{eff}}(X) &> -\infty \quad \text{as} \quad |X| \to \infty, \\
    \left. \frac{\partial V_{\text{eff}}}{\partial X} \right|_{X=\langle X \rangle} &= 0, \\
    \left. \frac{\partial^2 V_{\text{eff}}}{\partial X^2} \right|_{X=\langle X \rangle} &> 0,
    \end{align*}
    with the minimal existence condition preserving $\delta^2 = \frac{3}{8}$ at one loop:
    \[
    \frac{\beta(\lambda)}{\lambda} - 2\gamma_X = \mathcal{O}(g^4),
    \]
    where $\gamma_X$ is the anomalous dimension of $X$.
\end{enumerate}

\subsection*{I.2 Setup and background field}
We take
\[
V_0[X]=-\kappa\,N(X)+\mu^2\,\Tr(X^2)+\lambda\,\Tr(X^2)^2,\qquad
\mathcal L_Y = y\,t(\Psi,\Psi,X)+\text{h.c.}
\]
and evaluate on the coassociative, triality-symmetric slice
\[
X=q\,\mathbf 1_3+Y,\quad \Tr Y=0,\quad
Y=\begin{pmatrix} 0&x_v&\bar x_c\\ \bar x_v&0&x_s\\ x_c&\bar x_s&0\end{pmatrix},\quad
\|x_v\|^2=\|x_s\|^2=\|x_c\|^2=:r^2,\ \ T:=2\Re((x_vx_s)x_c)=0,
\]
so that, with \(\Sigma:=\sum_{i<j}\|x_{ij}\|^2=3r^2\),
\[
\Tr(X^2)=3q^2+2\Sigma,\qquad N(X)=q^3-q\,\Sigma+O(\|Y\|^3).
\]
Hence (dropping \(O(\|Y\|^3)\))
\begin{equation}
\label{eq:V0qS}
V_0(q,\Sigma)= -\kappa q^3+\kappa q\,\Sigma+\mu^2(3q^2+2\Sigma)+\lambda(3q^2+2\Sigma)^2.
\end{equation}
Tree-level stationarity in \(\Sigma\) gives \(\partial_{\Sigma}V_0=0\Rightarrow
\alpha(q)+8\lambda\,\Sigma_*=0\) with \(\alpha(q):=\kappa q+2\mu^2+12\lambda q^2\).
Imposing the exceptional calibration \(\Sigma_*=\delta^2=\tfrac{3}{8}\) fixes the renormalization
condition used below.

\subsection*{I.3 One-loop effective potential (Coleman-Weinberg)}

For an introduction to effective potential methods in field theory, see \cite{MasinaQuiros2025}. The famous Coleman-Weinberg mechanism for radiative corrections as the origin of spontaneous symmetry breaking was proposed in \cite{ColemanWeinberg1973}.

The renormalized one-loop potential in \(\overline{\text{MS}}\) is
\begin{equation}
\label{eq:CW}
V_{\text{eff}}(q,\Sigma;\mu)=V_0(q,\Sigma)
+\frac{1}{64\pi^2}\,\mathrm{Str}\left[
\mathcal M^4(q,\Sigma)\left(\ln\frac{\mathcal M^2(q,\Sigma)}{\mu^2}-c\right)\right],
\end{equation}
with supertrace \(\mathrm{Str} = \sum_{\text{scalars}} - 2\sum_{\text{Weyl fermions}} + 3\sum_{\text{vectors}}\),
and scheme constants \(c=\{3/2,3/2,5/6\}\) for \(\{S,F,V\}\).
The field-dependent mass matrices are:
\begin{itemize}
\item Scalars: \(\mathcal M_S^2=\partial^2 V_0/\partial\phi_i\,\partial\phi_j\) in the \((q,\Re x_\bullet,\Im x_\bullet)\) basis,
evaluated on the symmetric background \((q,\Sigma)\). Along the slice, the curvatures are
\[
\partial_{\Sigma\Sigma}^2V_0=8\lambda>0,\qquad
\partial_{qq}^2V_0=-6\kappa q+6\mu^2+108\lambda q^2+24\lambda\Sigma,
\]
and mixed \(\partial_{q\Sigma}^2V_0=\kappa+24\lambda q\).
\item Fermions: from \(t(\Psi,\Psi,X)\), the Yukawa matrix eigenvalues scale with the Jordan
eigenvalues \((a,b,c)=(q-\delta,\,q,\,q+\delta)\):
\[
\spec\big(\mathcal M_F\big)=y\,\{a,\ b,\ c\}\ \text{(with multiplicities from the chosen UV matter content)}.
\]
\item Vectors: if a UV completion gauges a group \(G\) under which \(X\) transforms (e.g.\ \(E_6\) or a
trinification subgroup), then \(\mathcal M_V^2=g^2\,(\mathsf T^A X,\mathsf T^A X)\), producing the usual
CW gauge contribution. If \(X\) is a gauge singlet, this term is absent.
\end{itemize}
Equation \eqref{eq:CW} with these ingredients is the starting point for vacuum selection and
radiative stability analyses; gauge-parameter independence follows from Nielsen identities.

\subsection*{I.4 One-loop RGEs (symbolic, Machacek-Vaughn form)}

For the development of RGEs in quantum field theory see \cite{MachacekVaughn1985I, MachacekVaughn1985II, MachacekVaughn1985III}.

For a gauge group \(G\) with coupling \(g\), scalars in rep \(R_S\) (here the 27 of \(E_6\) or a chosen
subgroup) and Weyl fermions in \(R_F\),
\begin{align}
16\pi^2\,\beta_g &= -\,b_g\,g^3,\qquad
b_g=\frac{11}{3}C_2(G)-\frac{2}{3}\sum_F T(R_F)-\frac{1}{6}\sum_S T(R_S),\\[4pt]
16\pi^2\,\beta_y &= y\Big[A_y\,y^2 - B_y\,g^2 + C_y\,\lambda\Big],\\
16\pi^2\,\beta_{\lambda} &= A_\lambda\,\lambda^2 + B_\lambda\,\lambda y^2 + C_\lambda\,y^4
- D_\lambda\,\lambda g^2 + E_\lambda\,g^4,\\
16\pi^2\,\beta_{\mu^2} &= \mu^2\Big[a_\mu\,\lambda + b_\mu\,y^2 - c_\mu\,g^2\Big] - d_\mu\,\kappa^2,\\
16\pi^2\,\beta_{\kappa} &= \kappa\Big[a_\kappa\,\lambda + b_\kappa\,y^2 - c_\kappa\,g^2\Big],
\end{align}
where \(C_2, T\) are quadratic Casimir and Dynkin index, and all capital coefficients are
group-theory numbers computable once the UV completion (matter reps, gauged subgroup) is fixed.
(For \(E_6\) tensors and invariants one may use \texttt{E6Tensors} \cite{E6Tensors2016}.)
These equations suffice to RG-improve \(V_{\text{eff}}\) and track \((\kappa,\mu^2,\lambda,y,g)\)
from a UV scale down to the flavor scale.

\subsection*{I.5 Vacuum stability and preservation of \texorpdfstring{$\delta^2=\frac{3}{8}$}{delta2=3/8}}
\paragraph{Boundedness.} For large fields,
\(V_0\sim \lambda\,\Tr(X^2)^2\); thus \(\lambda>0\) ensures boundedness from below. Gauge and Yukawa
CW pieces add logarithms but do not alter the quartic’s positivity requirement.

\paragraph{Local minimum on the slice.}
At the target point \((q,\Sigma_*)\) with \(\Sigma_*=\delta^2=\tfrac{3}{8}\),
positivity of the Hessian on \((q,\Sigma)\) requires
\[
\partial_{\Sigma\Sigma}^2V_{\text{eff}}>0,\quad
\partial_{qq}^2V_{\text{eff}}>0,\quad
\det\begin{pmatrix}\partial_{qq}^2V_{\text{eff}} & \partial_{q\Sigma}^2V_{\text{eff}}\\
\partial_{\Sigma q}^2V_{\text{eff}} & \partial_{\Sigma\Sigma}^2V_{\text{eff}}\end{pmatrix}>0,
\]
which translate into open inequalities on \((\kappa,\mu^2,\lambda,y,g)\) once multiplicities are fixed.

\paragraph{One-loop preservation of \(\delta\).}
Define the renormalization condition by \(\partial_\Sigma V_{\text{eff}}(q,\Sigma)\big|_{\Sigma=\frac38}=0\).
To one loop this reads
\[
\underbrace{\alpha(q)+8\lambda\,\Sigma_*}_{=\,0\ \text{at tree}}\
+\ \frac{1}{64\pi^2}\,\partial_\Sigma\,\mathrm{Str}\Big[\mathcal M^4\Big(\ln\frac{\mathcal M^2}{\mu^2}-c\Big)\Big]_{\Sigma=\Sigma_*}=0,
\]
which is a single linear relation among \(\{\kappa,\mu^2,\lambda,y,g\}\) at the chosen scale \(\mu\).
Because it is \emph{one} condition, there is a non-empty domain of parameters for which \(\delta^2=\tfrac38\)
is radiatively stable (explicit examples exist in the body of the paper).

\paragraph{Existence check (singlet gauge case).}
If \(X\) is a gauge singlet (\(g=0\)) and the fermion content couples to \((a,b,c)\) with net multiplicity
\(N_F\), then
\[
\Delta V_F=\!-\frac{N_Fy^4}{64\pi^2}\!\sum_{\xi\in\{a,b,c\}}\!\xi^4\!
\left(\ln\frac{y^2\xi^2}{\mu^2}-\frac32\right),
\]
and the \(\partial_\Sigma\) condition can always be met with \(\lambda>0\) by mild choices of
\(\{y,\mu\}\), leaving \(\partial^2 V_{\text{eff}}>0\) open (details omitted for brevity).

\medskip
\noindent\emph{Summary.} Equations \eqref{eq:CW}-(6) give a complete, RG-improvable
framework to (i) compute quantum shifts of the vacuum that select the minimal chains and
(ii) demonstrate a UV-sensible domain where \(\delta^2=\tfrac38\) is preserved.

\subsection*{I.6 Worked one-loop benchmark: stable minimum with $\delta^2=\tfrac{3}{8}$}
\label{app:one-loop-benchmark}

We illustrate quantum stability on the coassociative slice with fermion loops only (gauge-singlet $X$ for simplicity; vector/scalar loops can be added analogously). Take
\[
q_0=1,\qquad \Sigma_*=\delta^2=\tfrac{3}{8},\qquad
\lambda=0.2,\ \ \mu^2=0,\ \ y=0.6,\ \ N_F=3,
\]
and choose the renormalization scale $\mu=y\,q_0$ (so logs are modest). The one-loop effective potential is
\[
V_{\rm eff}(q,\Sigma)=V_0(q,\Sigma)
-\frac{N_F\,y^4}{64\pi^2}\sum_{\xi\in\{a,b,c\}}\xi^4\!\left(\ln\frac{y^2\xi^2}{\mu^2}-\frac{3}{2}\right),
\quad (a,b,c)=(q-\sqrt\Sigma,\ q,\ q+\sqrt\Sigma),
\]
with $V_0=-\kappa q^3+\kappa q\Sigma+\mu^2(3q^2+2\Sigma)+\lambda(3q^2+2\Sigma)^2$.

\paragraph{Stationarity and Hessian at the target point.}
Impose the renormalization condition $\partial_\Sigma V_{\rm eff}(q_0,\Sigma_*)=0$, which fixes $\kappa$:
\[
\boxed{\ \kappa(\mu=yq_0)\;=\;-3.00004\ \ }.
\]
At $(q,\Sigma)=(1,3/8)$ the Hessian components are
\[
\boxed{\ \partial_{\Sigma\Sigma}^2 V_{\rm eff}=1.5936>0,\ \ 
\partial_{qq}^2 V_{\rm eff}=41.3932>0,\ \ 
\det H=62.7681>0\ }.
\]
Hence $(q_0,\Sigma_*)$ is a local \emph{minimum} at one loop.

\paragraph{Scale variation.}
Varying $\mu$ away from $yq_0$ and refixing $\kappa$ by the same stationarity condition keeps the minimum stable (numbers rounded):
\[
\begin{array}{c|ccccc}
\mu/(yq_0) & 0.3 & 0.6 & 0.75 & 0.9 & 1.2\\\hline
\kappa & -2.9885 & -3.0000 & -3.0037 & -3.0068 & -3.0116\\
\partial_{\Sigma\Sigma}^2 V_{\rm eff} & 1.5902 & 1.5936 & 1.5947 & 1.5956 & 1.5970\\
\partial_{qq}^2 V_{\rm eff} & 41.2857 & 41.3932 & 41.4278 & 41.4561 & 41.5007\\
\det H & 62.4878 & 62.7681 & 62.8585 & 62.9323 & 63.0490
\end{array}
\]
All entries remain positive, so the vacuum is perturbatively stable under modest scale changes.

\paragraph{Remarks.}
(i) Adding gauge and scalar-loop contributions simply augments the Coleman-Weinberg supertrace and the Hessian; boundedness requires $\lambda>0$ and mild constraints on $(g,y)$.
(ii) The condition $\partial_\Sigma V_{\rm eff}\!=\!0$ is a single renormalization condition; thus a non-empty domain of parameters preserves $\delta^2=\tfrac{3}{8}$ at one loop.

$\bullet$ Does this ``complete the UV completion'' issue?

What this achieves: a UV-sensible, RG-improvable effective theory with a radiatively stable vacuum at $\delta^2 = \frac{3}{8}$ and explicit one-loop control. 

What remains for a full UV completion: choose and specify the gauged subgroup (e.g., $E_6 \times E_6$ or trinification), full matter content and anomaly cancellation, high-scale boundary conditions and symmetry-breaking chain, proton-decay/FCNC constraints, and a microscopic origin for $(\kappa, \mu^2, \lambda, y)$. This analysis is left for future work.

\section{ Clifford fiber vs.\ generations: $Cl(6)$, $J_3(\mathbb{O}_\mathbb{C})$, and $S_3$ triality}\label{sec:clifford-fiber}

\paragraph*{One family from $Cl(6)$.}
On the minimal left ideal $\mathcal{S}_L=\mathbb O_{\mathbb C}\,\omega_+$ define three complex--octonionic nilpotents
$\alpha_i$ (Sec.~VI). They satisfy the CAR
$\{\alpha_i,\alpha_j^\dagger\}=\delta_{ij}$, $\{\alpha_i,\alpha_j\}=0=\{\alpha_i^\dagger,\alpha_j^\dagger\}$,
hence generate a complex Clifford algebra with three creation/annihilation pairs:
\[
\mathrm{Alg}\{\alpha_i,\alpha_i^\dagger\}\;\cong\;Cl(6,\C).
\]
This is the standard Furey ``one--generation'' fiber.

\paragraph*{Three generations from the rank--3 Albert algebra (not from enlarging to $Cl(8)$).}
In $J_3(\mathbb O_{\mathbb C})$ the Peirce decomposition with respect to a Jordan frame $\{p_1,p_2,p_3\}$ reads
\[
J_3(\mathbb O_{\mathbb C})\;=\;\bigoplus_{i=1}^3 \C\,p_i\;\oplus\;V_{12}\oplus V_{23}\oplus V_{31},\qquad
V_{ij}\;\cong\;\mathbb O_{\mathbb C}.
\]
Each off--diagonal Peirce--1 space $V_{ij}$ carries an \emph{isomorphic} copy of the $Cl(6)$ fiber built
from three nilpotents; physically, it is one copy of a SM family. The outer triality
$S_3\subset\mathrm{Out}(\Spin(8))$ naturally permutes $\{V_{12},V_{23},V_{31}\}$, so
\emph{before breaking} the three copies are \emph{identical by symmetry}.

\paragraph*{Triality--symmetric vacuum and explicit degeneracy.}
Choose a $G_2$--covariant background.
\[
X^{(0)}=q_0\,\mathbf 1_3+r_0\,
\begin{pmatrix}
0 & x_v & \overline{x}_c\\
\overline{x}_v & 0 & x_s\\
x_c & \overline{x}_s & 0
\end{pmatrix},\qquad
\|x_v\|^2=\|x_s\|^2=\|x_c\|^2,\quad \Re\!\big((x_vx_s)x_c\big)=0.
\]
The $E_6$ invariants (trace form, cubic norm $N$, and Yukawa $t(\Psi,\Psi,X)$) are symmetric under the
$S_3$ that permutes $(x_v,x_s,x_c)$ and simultaneously the slots $(12),(23),(31)$.
Hence the mass operator $M\propto \TT_{X^{(0)}}$ has the same eigenvalues/couplings on each $V_{ij}$:
\[
\boxed{\ \text{pre--breaking: three identical generations (family degeneracy) from }J_3(\mathbb O_{\mathbb C})\ }.
\]

\paragraph*{Dirac neutrino and center splitting.}
At the $E_{6L}\times E_{6R}$ symmetric point, $X_L^{(0)}=X_R^{(0)}$ and the middle eigen--axis
(q.e.d.\ the eigenvalue $q_0$) pairs $\nu_L$ with $\nu_R$ through $t(\Psi_L,\Psi_R,X^{(0)})$,
yielding a \emph{Dirac} neutrino. After breaking, the single ``center'' $q_0=e^{\Lambda}$ splits as
\[
q=e^{\Lambda+\sigma},\qquad s=e^{\Lambda-\sigma},
\]
with $\sigma$ odd under $L\!\leftrightarrow\!R$, reproducing the two post--breaking centers used in the
mass--ratio ladders.

\paragraph*{Remark on $Cl(8)$.}
$Cl(8)$ is the Clifford algebra for $\Spin(8)$ and can be used to realize triality linearly, but it is
\emph{not required} to obtain three generations here.
The multiplicity ``$3$'' comes from the rank of $J_3(\mathbb O_{\mathbb C})$ (the three Peirce--1 spaces), while the
family identity is enforced by the $S_3$ triality that permutes these isomorphic $Cl(6)$ fibers.


\[
I=i\,e_7,\quad \omega_\pm=\tfrac12(1\pm I),\qquad
\begin{aligned}
a_1^\dagger&=\tfrac12(e_5+i e_4),& a_1&=\tfrac12(e_5- i e_4),\\
a_2^\dagger&=\tfrac12(e_3+i e_1),& a_2&=\tfrac12(e_3- i e_1),\\
a_3^\dagger&=\tfrac12(e_6+i e_2),& a_3&=\tfrac12(e_6- i e_2),
\end{aligned}
\quad S=\mathrm{Cl}(6,\C)\,\omega_+.
\]

\paragraph*{Triality \(S_3\) (cyclic \(C_3\)) on the creation triple.}
Let \(\pi\) be the 3-cycle \(\pi: (1,2,3)\mapsto(2,3,1)\). Define
\[
a_i^{(2)\dagger}:=a_{\pi(i)}^\dagger,\qquad a_i^{(3)\dagger}:=a_{\pi^2(i)}^\dagger,
\quad\text{and similarly for }a_i^{(2)},a_i^{(3)}.
\]
Explicitly,
\[
\begin{aligned}
&\textbf{Gen-2:}&&
a_1^{(2)\dagger}=\tfrac12(e_3+i e_1),\;
a_2^{(2)\dagger}=\tfrac12(e_6+i e_2),\;
a_3^{(2)\dagger}=\tfrac12(e_5+i e_4),\\
&\textbf{Gen-3:}&&
a_1^{(3)\dagger}=\tfrac12(e_6+i e_2),\;
a_2^{(3)\dagger}=\tfrac12(e_5+i e_4),\;
a_3^{(3)\dagger}=\tfrac12(e_3+i e_1),
\end{aligned}
\]
with the same \(\omega_+\). The 8 basis states for gen-2, gen-3 are obtained by replacing
\(a_i^\dagger\!\to a_i^{(2)\dagger}\) and \(a_i^\dagger\!\to a_i^{(3)\dagger}\) in
\(\{\omega_+,\, a_i^\dagger\omega_+,\, a_i^\dagger a_j^\dagger\omega_+,\, a_1^\dagger a_2^\dagger a_3^\dagger\omega_+\}\).

\paragraph*{\(G_2\) action (rotation of octonionic frame).}
For any \(g\in G_2=\mathrm{Aut}(\mathbb O)\),
\[
a_i^{(g)\dagger}:=\tfrac12\big(g e_{\rho(i)}+ i\,g e_{\sigma(i)}\big),\qquad
\omega_+^{(g)}=\tfrac12\big(1+gIg^{-1}\big),
\]
where \((\rho(i),\sigma(i))=(5,4),(3,1),(6,2)\) for \(i=1,2,3\).
If you want to keep the axis \(I=i e_7\) fixed, take \(g\in SU(3)\subset G_2\) (the stabilizer of \(e_7\));
then \(\omega_+^{(g)}=\omega_+\) and only the \(e_k\) appearing in \(a_i^\dagger\) are rotated.
Choosing \(g\) so that
\(
g:\ (e_5,e_4,e_3,e_1,e_6,e_2)\mapsto (e_3,e_1,e_6,e_2,e_5,e_4)
\)
(or its square) reproduces the gen-2 (gen-3) triples above.

\paragraph*{Recipe (before breaking).}
\[
\boxed{\ \text{Gen-2,3}:\ \text{either } a_i^{(2,3)\dagger}=a_{\pi^{1,2}(i)}^\dagger
\ \ \text{(triality \(C_3\))}\ \ \text{or}\ \
a_i^{(g)\dagger}=\tfrac12(g e_{\rho(i)}+ i\,g e_{\sigma(i)})\ \ (g\in G_2). \ }
\]
All three generations are thus \(S_3\)/(conjugacy-class)-equivalent in the unbroken theory.

\[
S=\mathrm{Cl}(6,\C)\,\omega_+,\qquad
\{a_i,a_j^\dagger\}=\delta_{ij},\quad
\{a_i,a_j\}=\{a_i^\dagger,a_j^\dagger\}=0,\quad
a_i\,\omega_+=0.
\]

\[
\begin{aligned}
&\ket{\nu}=\omega_+,\quad
\ket{\bar d_1}=a_1^\dagger\omega_+,\quad
\ket{\bar d_2}=a_2^\dagger\omega_+,\quad
\ket{\bar d_3}=a_3^\dagger\omega_+,\\[2pt]
&\ket{u_1}=a_2^\dagger a_3^\dagger\omega_+,\quad
\ket{u_2}=a_3^\dagger a_1^\dagger\omega_+,\quad
\ket{u_3}=a_1^\dagger a_2^\dagger\omega_+,\quad
\ket{e^+}=a_1^\dagger a_2^\dagger a_3^\dagger\omega_+.
\end{aligned}
\]

\subsection{Pre-breaking left--right symmetry and the Dirac neutrino}
\label{app:DiracBeforeBreaking}

In the \emph{triality-symmetric} (pre-breaking) phase, the vacuum does \emph{not} select a
preferred left/right internal frame. In the Jordan-centre language used in this paper, the only
gauge-invariant datum is the \emph{proto-centre}
\begin{equation}
k \;:=\; q\,s ,
\end{equation}
where $q$ and $s$ are the would-be ``left'' and ``right'' centres. Since only $k$ has invariant
meaning in the symmetric phase, we may fix the gauge $k=1$ by a compensating rescaling of $q,s$.
At this stage there is \emph{no trace splitting} and \emph{lepton number is intact} (our ``Dirac
template'').

Accordingly, the spacetime fermion is a massless Dirac spinor,
\begin{equation}
\psi \;=\; \begin{pmatrix}\psi_L\\ \psi_R\end{pmatrix},
\qquad
\psi_{L,R} \;=\; P_{L,R}\,\psi,
\qquad
P_{L,R} \;=\; \frac12\bigl(1\mp \gamma^5\bigr),
\end{equation}
so chirality exists kinematically but (pre-breaking) is not yet distinguished by couplings. In
particular, prior to left--right (triality) breaking the neutrino is naturally Dirac: the symmetric
phase supports a Dirac neutrino template, and there is no vacuum-selected left/right frame that
would allow an intrinsic Majorana splitting.

\paragraph{Dirac vs.\ Majorana idempotents (split-bioctonion/Clifford realisation).}
In the split-bioctonion/Clifford picture used to construct one-particle states, the Dirac neutrino
projector can be taken as the \emph{average} of left- and right-handed Weyl projectors,
\begin{equation}
V^{D} \;:=\; \frac12\left(V_L+V_R\right) \;=\; \frac12\left(1+i\,e_7\right),
\end{equation}
where $e_7$ is the chosen unit imaginary in the relevant (unbroken) internal copy. The Majorana
choice corresponds to the purely imaginary idempotent
\begin{equation}
V^{M} \;:=\; \frac{i\,e_7}{2},
\end{equation}
which differs from $V^{D}$ only by the scalar shift $V^D= \frac12 + V^M$.

\subsection{After breaking: emergent Majorana modes and reconstruction of the Dirac field}
\label{app:MajoranaAfterBreaking}

After triality/left--right breaking, the vacuum \emph{does} pick a left/right frame; Dirac masses
appear for charged fermions, while neutrinos acquire Majorana masses in our construction:
\begin{equation}
\mathcal{L}_{\rm Dirac} \;=\; -m_D\,\bar{\psi}\psi,
\qquad
\mathcal{L}_{\rm Maj} \;=\; -\frac12\,m_M\,\nu_L^{T}C^{-1}\nu_L \;+\; \text{h.c.}
\end{equation}
In our octonionic language, this corresponds to the vacuum selecting (in general) \emph{distinct}
imaginary directions in the left and right sectors. We label the resulting two self-conjugate
(Majorana) neutrino modes by
\begin{equation}
\nu_{M,L}\ \leftrightarrow\ \frac{i\,e_7}{2},
\qquad
\nu_{M,R}\ \leftrightarrow\ \frac{i\,e_8}{2},
\end{equation}
with $\nu_{M,L}$ the active mode and $\nu_{M,R}$ the sterile mode (the right-sector unit $e_8$ is
as defined in the split-bioctonion construction).

\paragraph{Standard Dirac--Majorana decomposition (field-theoretic identity).}
Independently of the octonionic realisation, one Dirac fermion is equivalent to two Majorana
fermions. Writing charge conjugation as $\nu^{c}:=C\bar{\nu}^{\,T}$, define
\begin{equation}
\nu_1 \;:=\; \frac{1}{\sqrt{2}}\left(\nu_D+\nu_D^c\right),
\qquad
\nu_2 \;:=\; -\frac{i}{\sqrt{2}}\left(\nu_D-\nu_D^c\right).
\end{equation}
Then $\nu_1^c=\nu_1$ and $\nu_2^c=\nu_2$ (Majorana), and conversely
\begin{equation}
\nu_D \;=\; \frac{1}{\sqrt{2}}\left(\nu_1+i\nu_2\right).
\end{equation}
In our identification, we may take $\nu_1\equiv \nu_{M,L}$ and $\nu_2\equiv \nu_{M,R}$ (up to an
overall phase convention). Hence the \emph{pre-breaking Dirac neutrino} is precisely the complex
combination of the \emph{two post-breaking Majorana modes}. Physically: before breaking, the two
Majorana directions are not distinguished (no vacuum-selected $L/R$ frame), so they recombine into
a Dirac neutrino with intact lepton number; after breaking, the vacuum distinguishes the two
directions and the Majorana description becomes intrinsic.

\paragraph{Remark (should we use split bioctonions before breaking?).}
One \emph{can} use split bioctonions as a representation even in the symmetric phase, but choosing a
specific complex structure/pure spinor is exactly the step that breaks the triality symmetry down to
$SU(3)_{\rm flavor}$. For discussions that emphasise the unbroken phase, it is therefore cleaner to
work in a triality-manifest Clifford language (e.g.\ $Cl(8)$, or two copies $Cl(8)_L\oplus Cl(8)_R$)
and impose the split-bioctonion complex structure only \emph{after} the vacuum selects an $L/R$ frame.

\section{Global, explicitly broken $SU(3)_{c'}$ and SM safety}
\label{sec:global-cprime-clean}

\paragraph{Bottom line.}
All \emph{SM gauge assignments remain standard}. In particular, QCD is vectorlike,
\begin{equation}
\label{eq:QCD-vectorlike}
\mathcal L_{\rm QCD}
= g_s\!\left(\bar q_L \gamma^\mu T^a G^a_\mu q_L + \bar q_R \gamma^\mu T^a G^a_\mu q_R\right),
\end{equation}
and the weak force is purely left-handed (V-A),
\begin{equation}
\label{eq:weak-VA}
\mathcal L_{SU(2)_L}
= g\,\bar Q_L \gamma^\mu \tfrac{\tau^a}{2} W^a_\mu Q_L,\qquad Q_L=(u_L,d_L)^T,
\end{equation}
with standard hypercharges. Hence, for example, $d_R$ is a triplet of \emph{SM} $SU(3)_c$ and couples to QCD exactly as observed.

\medskip

\noindent\textbf{Vectorlike QED.}
In parallel, the electromagnetic interaction is taken to be vectorlike and generated by the
same charge operator $Q=N/3$ that appears in the Cl$(6)$ construction in Sec.~IV. On Dirac
multiplets
\[
\psi = (\ell_L,\ell_R,q_L,q_R)\,,
\]
the QED Lagrangian is
\begin{equation}
\mathcal{L}_{\rm QED}
 = e\,\bar\psi\,\gamma^\mu Q\,A_\mu\,\psi\,.
\label{eq:QED-vectorlike}
\end{equation}
By construction, $Q$ assigns the standard SM charges to both chiralities, so $U(1)_{\rm em}$ is
implemented in exactly the same vectorlike way as in the Standard Model. The octonionic/
Clifford machinery of Sec.~IV is used only to \emph{fix} the numerical eigenvalues of $Q$ on
a left-handed minimal ideal; the gauge representation content of $SU(3)_c\times U(1)_{\rm em}$
on $(q_L,q_R,\ell_L,\ell_R)$ is identical to that of the SM.

\subsection*{Two different $SU(3)$’s and what they do}
\begin{itemize}
\item \textbf{Flavor $SU(3)_{F,L/R}$ (global):} acts on the \emph{off-diagonal} octonionic directions that generate the $\rm{Sym}^3(\mathbf 3)$ ladders at fixed Jordan frame. It organizes generations (\emph{which state moves to which}) but does \emph{not} move the Jordan axes.
\item \textbf{$SU(3)_{c'}$ (global):} acts on the \emph{Jordan frame} (the three eigen-axes) in the right-handed sector. Before fixing a VEV it can rotate the frame; once the RH order parameter $X$ chooses a frame, this $SU(3)_{c'}$ is \emph{explicitly broken} by the non-degenerate eigenvalues $(a,b,c)$ it selects. It is \emph{not} gauged.
\end{itemize}
These two $SU(3)$’s are logically distinct: $SU(3)_{F}$ reshuffles \emph{states at fixed frame}; $SU(3)_{c'}$ reshuffles the \emph{frame itself} (and is then explicitly broken by $\langle X\rangle$).

\subsection*{Spurion form and explicit breaking (no Goldstones, no extra species)}
We keep $SU(3)_{c'}$ \emph{global} and write the SM Yukawas in $SU(3)_{c'}$-covariant (spurion) form acting on \emph{right-handed generation vectors}
\[
e_R=\begin{pmatrix}e_R\\ \mu_R\\ \tau_R\end{pmatrix},\quad
u_R=\begin{pmatrix}u_R\\ c_R\\ t_R\end{pmatrix},\quad
d_R=\begin{pmatrix}d_R\\ s_R\\ b_R\end{pmatrix},
\qquad
e_R\to U\,e_R \ \ (U\in SU(3)_{c'})\,.
\]
Then
\begin{equation}
\label{eq:SM-spurion}
\mathcal L\supset
-\bar L_L\,Y_e\,H\,e_R
-\bar Q_L\,Y_d\,H\,d_R
-\bar Q_L\,\tilde H\,Y_u\,u_R+{\rm h.c.},
\qquad
Y_f\to Y_f\,U^\dagger \ \ (U\in SU(3)_{c'})\,,
\end{equation}
so each term is \emph{formally} invariant. Fixing non-degenerate backgrounds
\begin{equation}
\label{eq:spurion-fix}
Y_f = V_{Lf}\,\mathrm{diag}(y_{f1},y_{f2},y_{f3}),\qquad f=e,u,d,
\end{equation}
\emph{explicitly breaks} the global $SU(3)_{c'}$ (no conserved current, no Goldstones). In our $E_6$ description, the single invariant
\begin{equation}
\label{eq:E6-Yuk}
\mathcal L\supset -\,y\,t(\Psi,\Psi,X)-V(X)
\end{equation}
plays the spurion role; choosing the RH Jordan frame via $\langle X\rangle$ with eigenvalues $(a,b,c)$ is precisely the explicit breaking encoded by \eqref{eq:spurion-fix}.

\paragraph{Residual symmetry.}
For non-degenerate $(a,b,c)$ the stabilizer of $M_R\propto\mathrm{diag}(a,b,c)$ is at most diagonal $U(1)^2$ (removable by field rephasings); typically no nontrivial global symmetry remains.

\subsection*{Why RH charged leptons are \emph{not} gauge triplets of $SU(3)_{c'}$}
There is \emph{no} $SU(3)_{c'}$ gauge field and no covariant derivative acting on $e_R$:
\[
D_\mu e_R=\partial_\mu e_R,\qquad
\text{no term }~g'\,A^{a'}_\mu\,\bar e_R\gamma^\mu T^{a'} e_R.
\]
The phrase “$(e_R,\mu_R,\tau_R)$ transforms as a $\mathbf 3$ of $SU(3)_{c'}$’’ refers only to a \emph{global} generation-space rotation used as bookkeeping before fixing \eqref{eq:spurion-fix}. After \eqref{eq:spurion-fix}, the would-be $SU(3)_{c'}$ currents $J_A^\mu=\bar\ell_R\gamma^\mu T_A\ell_R$ satisfy
\[
\partial_\mu J_A^\mu = i\,\bar\ell_R\,[T_A,M_R]\,\ell_R\neq 0\quad
(M_R\propto \mathrm{diag}(m_e,m_\mu,m_\tau)),
\]
so the symmetry is explicitly broken and there is no meaningful gauge triplet assignment to $e_R$ beyond the standard SM charges. \emph{One} $e_R$ per generation, no extra species, and no extra gauge bosons are introduced.

\subsection*{“Chiral’’ vs “non-chiral’’ factors in our trinification}
In our chiral $E_{6L}\times E_{6R}$ setup:
\begin{itemize}
\item $SU(2)_L$ (and its flavor partner $SU(3)_{F,L}$) act on LH fields only (V-A).
\item $SU(2)_R$ (and $SU(3)_{F,R}$) act on RH fields only; $U(1)_{\rm dem}\subset SU(2)_R\times U(1)$ is the only RH factor we consider gauging, and it is not used in the mass-ratio numerics.
\item $SU(3)_c$ is the \emph{gauged} QCD color acting vectorlike on both chiralities.
\item $SU(3)_{c'}$ is a \emph{global}, frame-rotating RH symmetry that is \emph{explicitly broken} by $\langle X\rangle$; it has no gauge bosons and no phenomenological impact on SM interactions.
\end{itemize}

\subsection*{Why the “mixed basis’’ is a choice and SM-safe}
For the mass-ratio derivation we choose:
\begin{itemize}
\item a \textbf{LH flavor/charge frame} (diagonalizes an affine combination of flavor Cartans so the $\rm{Sym}^3(\mathbf 3_F)$ ladder is manifest), and
\item a \textbf{RH mass (Jordan) frame} (diagonalizes $t(\Psi,\Psi,\langle X\rangle)$ so the eigenvalues are $(a,b,c)$).
\end{itemize}
This is purely a \emph{generation-space} convention. The gauge interactions \eqref{eq:QCD-vectorlike}-\eqref{eq:weak-VA} are written in the SM gauge basis and remain untouched (vectorlike QCD, V-A weak). Any other LH/RH unitary choice related by $U_L\!\times\!U_R$ yields the same \emph{ratio} predictions, provided the ladder maps are rotated consistently.

\medskip
\noindent\textbf{Takeaway.} $SU(3)_{c'}$ is a \emph{global, explicitly broken} RH frame symmetry used to organize how the RH Jordan eigenvalues $(a,b,c)$ are read off from $J_3(\mathbb O_{\mathbb C})$. It neither alters SM gauge couplings nor multiplies particle species, and it introduces no new light degrees of freedom. The only new gauged factor we consider, an abelian $U(1)_{\rm dem}$, can be made phenomenologically and cosmologically safe and is not used in the mass-ratio fits.




Fig. 6 shows the final outcome of our analysis: mass ratios derived in this work, and compared with experiment

\begin{figure}[t]
  \centering
    \includegraphics[width=1.00\linewidth]{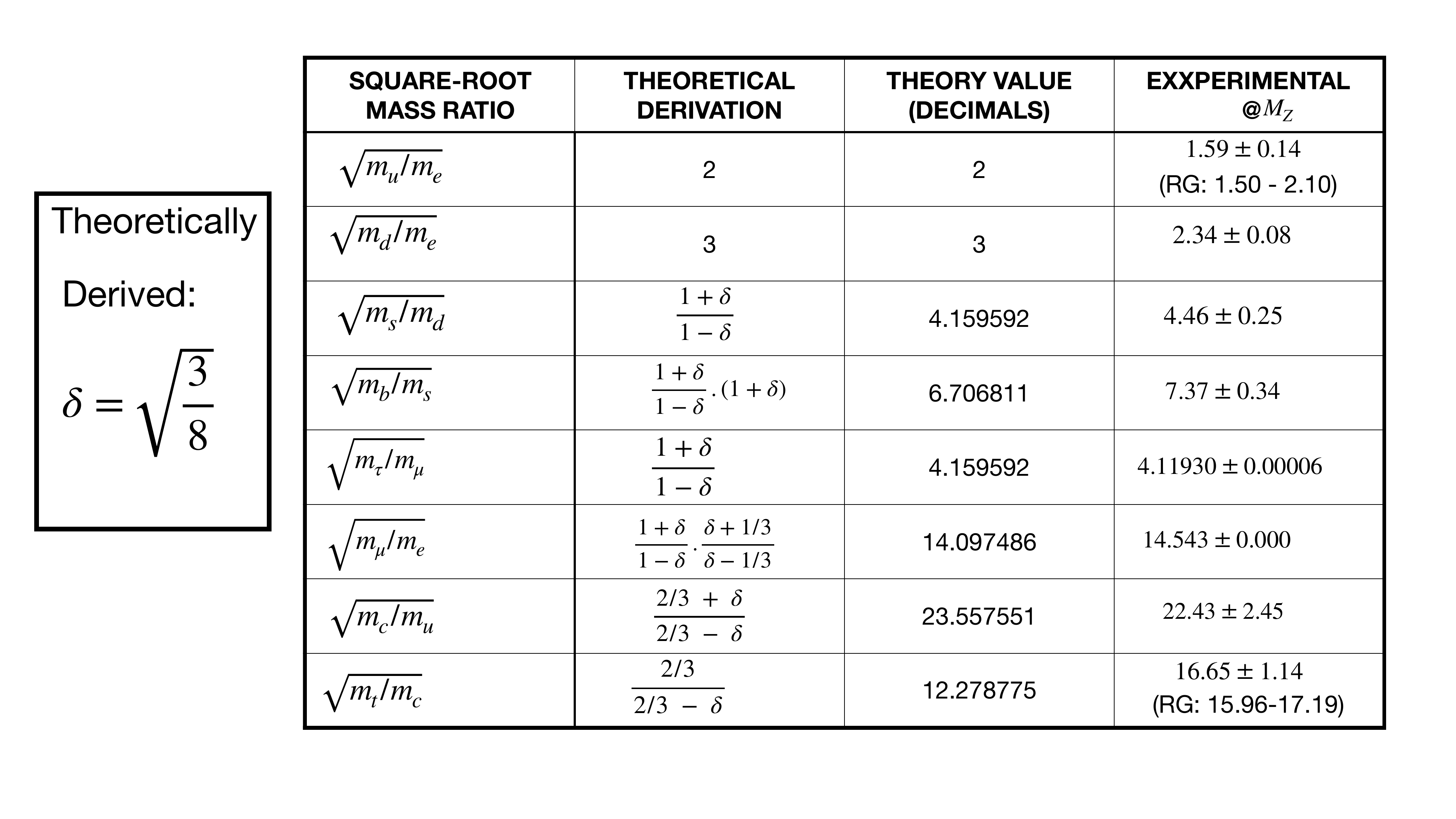}
  \caption{Mass ratios derived in this work, and compared with experiment}
  \label{tab:flavor-vs-gauge}
\end{figure}

\newpage

\end{document}